\titleformat*{\section}{\bf\Large\center}
\newcommand{\GG}[1]{}
\theoremstyle{definition}
\newtheorem*{theorem*}{Theorem}
\newtheorem{theorem}{Theorem}
\newtheorem*{rmk*}{Remark}
\newtheorem{rmk}{Remark}
\newtheorem{proposition}{Proposition}
\newtheorem{lemma}{Lemma}
\newtheorem{condition}{Condition}
\newtheorem{corollary}{Corollary}
\newtheorem*{corollary*}{Corollary}
\def\TV{\text{TV}}
\apptocmd{\sloppy}{\hbadness 10000\relax}{}{} %
\DeclareMathOperator*{\argmin}{arg\,min}
\def\Pr{\mathbb{P}}
\def\Var{\text{Var}}
\def\Cov{\text{Cov}}
\def\converge{\stackrel{}{\longrightarrow}}
\def\convergeas{\stackrel{\text{a.s.}}{\longrightarrow}}
\def\converged{\stackrel{d}{\longrightarrow}}
\def\convergep{\stackrel{\Pr}{\longrightarrow}}
\def\I{\mathbbm{1}}
\def\E{\mathbb{E}}
\def\a{\text{a}}
\def\sresem{\widetilde{\text{ReSEM}}}
\def\res{\text{res}}
\def\bs{\boldsymbol}
\def\Unif{\text{Unif}}
\def\limsup{\overline{\lim}}
\def\liminf{\underline{\lim}}
\tikzstyle{io} = [trapezium, trapezium left angle=70, trapezium right angle=110, minimum width=1cm, minimum height=1cm, text centered, draw=black,  trapezium stretches=true, thick]
\tikzstyle{process} = [rectangle, minimum width=3cm, minimum height=1cm, text centered, draw=black, thick]
\tikzstyle{decision} = [diamond, minimum width=1cm, minimum height=1cm, text centered, draw=black, thick]
\tikzstyle{arrow} = [thick,->,>=stealth]
\numberwithin{equation}{section}
\begin{document}

\singlespacing

\title{\bf 
Rejective Sampling, Rerandomization and Regression Adjustment in Survey Experiments
}
\author{
	Zihao Yang, Tianyi Qu and Xinran Li
\footnote{
Zihao Yang and Tianyi Qu are Doctoral Candidates, Department of Statistics, University of Illinois at Urbana-Champaign, Champaign, IL 61820 (E-mail: \href{mailto:zihaoy3@illinois.edu}{zihaoy3@illinois.edu} and \href{mailto:tianyiq3@illinois.edu}{tianyiq3@illinois.edu}). 
Xinran Li is Assistant Professor, Department of Statistics, University of Illinois at Urbana-Champaign, Champaign, IL 61820 (E-mail: \href{mailto:xinranli@illinois.edu}{xinranli@illinois.edu}). 
}
}
\date{}
\maketitle

\begin{abstract}
	Classical randomized experiments, equipped with randomization-based inference, 
    provide assumption-free inference for treatment effects. 
	They have been the gold standard for drawing causal inference and provide excellent internal validity. 
	However, they have also been criticized for questionable external validity, in the sense that the conclusion may not generalize well to a larger population. 
	The randomized survey experiment is a design tool that can help mitigate this concern, by randomly selecting the experimental units from the target population of interest. 
    However, as pointed out by \citet{Morgan2012},
    chance imbalances often exist in covariate distributions between different treatment groups even under completely randomized experiments.
	Not surprisingly, such covariate imbalances also occur in randomized survey experiments. 
    Furthermore, 
    the covariate imbalances happen not only between different treatment groups, but also between the sampled experimental units and the overall population of interest. 
	In this paper, we propose a two-stage rerandomization design that can actively avoid undesirable covariate imbalances at both  the sampling 
	and treatment assignment stages.
	We further develop asymptotic theory for rerandomized survey experiments, demonstrating that  rerandomization provides better covariate balance, more precise treatment effect estimators, and shorter large-sample confidence intervals.
	We also propose covariate adjustment to deal with remaining covariate imbalances after rerandomization, showing that it can further improve both the sampling and estimated precision. 
	Our work allows general relationship among covariates at the sampling, treatment assignment and analysis stages, 
	and 
	generalizes both rerandomization in classical randomized experiments \citep{Morgan2012} and rejective sampling in survey sampling \citep{Fuller2009}.
\end{abstract}

{\bf Keywords}: 
causal inference; potential outcome; randomization-based inference; covariate imbalance; Mahalanobis distance

\section{Introduction}\label{sec:intro}

Understanding the causal effect of some intervention 
or policy has received a lot of attention in 
social, medical and industrial studies \citep{ATHEY201773, rosenberger2015randomization, box2005statistics}. 
Causal conclusions drawn from observational data generally require some untestable assumptions and can be vulnerable to biases from unmeasured confounding. 
Since \citet{Fisher:1935}'s seminal work, 
randomized experiments have become the gold standard for drawing causal inference. 
They balance all confounding factors (no matter observed or unobserved) on average and justifies simple comparisons among different treatment groups. 
However, randomized experiments have been suffering from critiques 
for their generalizability 
to larger populations. 
In other words, randomized experiments 
provide excellent internal validity, 
in the sense that the inference is valid for units within the experiment, but can be questioned on their external validity \citep{campbell2015experimental}, 
in the sense that the inference based on the experimental units may not be valid for 
a larger population of interest.
For example, we can often obtain unbiased estimators for the average treatment effect across units in the experiment; however, these estimators may be biased for the average effect of the target population that the treatment is intended to be applied to.
As pointed out by \citet{ROTHWELL2005},  
lack of consideration of external validity is the most frequent criticism by clinicians of randomized controlled trials, 
and is one explanation for the widespread underuse in routine practice of treatments that were beneficial in trials and that are recommended in guidelines. 
Nevertheless, 
some researchers regard internal validity as a prerequisite to external validity \citep{Lucas2003, jimenez2010trade}, 
and think the questionable external validity is worth the trade-off for unrivaled gains in internal validity \citep{mize2019survey}; see also \citet{ATHEY201773} for related discussion.

There has been growing literature in studying how to generalize the causal conclusions from randomized experiments to larger populations of interest, see, e.g., \citet{Stuart2011} and \citet{Tipton2013}. 
However, as pointed by \citet{Tipton2013} and \citet{Tipton2014}, the problem of generalization is rarely addressed in the experimental design literature, and very little research has focused on how units should be selected into an experiment to facilitate generalization; for exception, see \citet{cochran1992experimental}, \citet{kish2004statistical} and recent work by \citet{Tipton2014} and \citet{Stuart2018}. 
The survey experiment\footnote{
The term``survey experiment'' 
is referred to 
a survey containing some randomized conditions in the questionnaires, e.g., different wordings 
or 
different background information 
for a question. 
In this paper, we use the term ``survey experiment'' more generally, referring to a general experiment involving two steps: random sampling of a subset of units into the experiment and random assignment of the sampled units into different treatment arms.}, 
which involves random sampling of experimental units from the target population of interest,  
has been viewed as the gold standard for estimating treatment effects for the target population \citep{Imai2008, Stuart2018}.
It helps not only ensure internal validity as usual randomized experiments but also guarantee external validity moving beyond usual randomized experiments; see, e.g., \citet{mutz2011population} for a comprehensive review on population-based survey experiments. 
The use of survey experiment was 
often overlooked in practice, 
and frequently the experimental units are convenient samples that may not be representative of the target population.
This can be due to unknown or hard-to-access target population, costly random sampling of units, and nonparticipation of selected units \citep{Stuart2015}. 
However, as pointed out by \citet{Stuart2015, Stuart2018}, 
there is a handful of studies that actually enrolled experimental units randomly, such as the U.S. federal government programs including 
Upward Bound \citep{seftor2009impacts}, Job Corps \citep{burghardt1999national}, and Head Start \citep{puma2010head}, 
and the possibilities for such design are likely to increase with more and more large-scale (administrative) data becoming available, e.g., 
the publicly available state academic excellence indicator system for educational research \citep{Tipton2013}, 
the electronic health record data for clinical and medical research, and 
the dramatically increasing amount of Internet data in many technology companies (where A/B testing is frequently conducted).  
Besides, even for study with smaller population size (e.g., students 
in a university), it may not be possible to assign the interventions of interest to all units due to some resource constraint, under which  survey experiments may be preferred.

In a standard randomized survey experiment, the first step is a simple random sampling (SRS) and the second step is a completely randomized experiment (CRE); see, e.g., \citet[][Chapter 6]{imbens2015causal} and \citet{Branson:2019aa}. For descriptive convenience, we call it a completely randomized survey experiment (CRSE).  
The CRSE balances all observed and unobserved factors \textit{on average}, making the experimental units comparable to the target population of interest and the treated units comparable to the control units.
Intuitively, the randomized experiment in the second step helps draw causal inference while the survey sampling in the first step helps generalize causal conclusions. 
Recently, 
\citet{Morgan2012} observed that the covariate distributions between treated and control groups are likely to be unbalanced for a realized treatment assignment, and proposed rerandomization to 
improve covariate balance in randomized experiments, 
which further results in more precise treatment effect estimation \citep{li2019rerandomization} and thus improves the internal validity of the randomized experiment. 
However, the use of rerandomization for improving external validity of randomized experiments has not been explored in the literature. Indeed, the covariate distributions between sampled units and the whole population of interest are also likely to be unbalanced for a realized sampling. Similar in spirit to rerandomization, we can avoid such unlucky 
sampling by discarding realizations with bad covariate balance and keeping resampling until we get an acceptable one satisfying some covariate balance criterion. 
This is closely related to rejective sampling proposed by \citet{Fuller2009} in the context of survey sampling.

Our study on the design and analysis of survey experiments 
contributes to the literature in the following way. 
First, we propose a two-stage rerandomization design for survey experiments, which combines rejective sampling from survey sampling and rerandomization from usual treatment-control experiments. 
Intuitively, 
rerandomization 
improves the internal validity, while rejective sampling 
improves the external validity. 
In particular, 
the latter is achieved by balancing covariates between experimental units and the target population, 
which is often overlooked in practice and is crucial for 
the generalizability of inferred causal conclusions. 
Second, we demonstrate that the proposed design can improve the precision of the 
usual
difference-in-means estimator, and we further quantify the improvement from rerandomization at both stages. 
Third, we consider covariate adjustment for rerandomized survey experiments and study optimal adjustment in terms of both sampling and estimated precision. 
Compared to usual treatment-control experiments, 
our asymptotic analysis of the survey experiments involves more delicate analysis due to the two-stage feature. 
We relegate all the technical details to the Supplementary Material.

\section{Framework, Notation and Assumption}\label{sec:framework}

\subsection{Potential outcomes, sampling and treatment assignment}

We consider an experiment for a finite population of $N$ units. 
Due to, say, some resource constraint, 
only $n$ of them will enter the experiment, among which $n_1$ will receive an active treatment and the remaining $n_0$ will receive control, where $n_1 + n_0 = n \le N$. 
Let $f = n/N$ be the proportion of sampled units, 
and $r_1 = n_1/n$ and $r_0 = n_0/n$ be the proportions of treated and control units, respectively, where $r_0+r_1=1$. 
We introduce the potential outcome framework \citep{Neyman:1923, Rubin:1974} to define treatment effects. 
For each unit $1\le i \le N$, 
let $Y_i(1)$ and $Y_i(0)$ be the potential outcomes under treatment and control, respectively, 
and $\tau_i = Y_i(1) - Y_i(0)$ be the individual treatment effect. 
The population average potential outcomes across all $N$ units under treatment and control are, respectively, 
$\bar{Y}(1) = N^{-1} \sum_{i=1}^{N} Y_i(1)$ and $\bar{Y}(0) = N^{-1} \sum_{i=1}^{N} Y_i(0)$, 
and the population average treatment effect is $\tau = N^{-1} \sum_{i=1}^N \tau_i =  \bar{Y}(1) - \bar{Y}(0)$.

For each unit $i$, we introduce $Z_i$ to denote the sampling indicator, and $T_i$ to denote the treatment assignment  indicator. 
Specifically, 
$Z_i$ equals 1 if unit $i$ is sampled to enroll the experiment, and 0 otherwise.  
For sampled unit $i$ with $Z_i=1$, 
$T_i$ equals $1$ if the unit is assigned to treatment, and 0 otherwise. 
Let $\mathcal{S} = \{i: Z_i = 1,  1\le i \le N\}$ denote the set of sampled units. 
For each sampled unit $i$ in $\mathcal{S}$, its observed outcome is one of its two potential outcomes depending on the treatment assignment, i.e., 
$Y_i = T_i Y_i(1) + (1-T_i) Y_i(0)$.

Define $\bs{Z} = (Z_1,Z_2, \ldots, Z_N)$ and $\bs{T} = (T_1, T_2, \ldots, T_N)$ as the sampling and treatment assignment vectors for all units. 
Obviously, $T_i$ is well-defined if and only if $Z_i=1$. 
Let $\bs{T}_{\mathcal{S}}$ denote the subvector of  $\bs{T}$ with indices in $\mathcal{S}$. 
Under a SRS,  
the probability that
$\bs{Z}$ takes a particular value $\bs{z} = (z_1, z_2, \ldots, z_N)$ is $\binom{N}{n}^{-1}$, if $z_i \in \{0,1\}$ for all $i$ and $\sum_{i=1}^{N} z_i = n$. 
Given the sampled units $\mathcal{S}$, 
under a CRE, 
the probability that $\bs{T}_{\mathcal{S}}$ takes a particular value $\bs{t}_{\mathcal{S}} = (t_i: i \in \mathcal{S})$ is $\binom{n}{n_1}^{-1}$, if $t_i\in \{0,1\}$ for all $i \in \mathcal{S}$ and $\sum_{i \in \mathcal{S}} t_i = n_1$. 
Under the CRSE, 
the joint distribution of $\bs{Z}$ and $\bs{T}$ then has the following equivalent forms: 
\begin{align}\label{eq:srs_cre}
    \Pr\left(
	\bs{Z} = \bs{z}, \ 
	\bs{Z} \circ \bs{T} = \bs{z} \circ \bs{t} 
	\right)
	= 
	\Pr\left(
	\bs{Z} = \bs{z}, \ 
    \bs{T}_{\mathcal{S}} = \bs{t}_{\mathcal{S}}
	\right)
	= 
	\binom{N}{n}^{-1} \binom{n}{n_1}^{-1} = 
	\frac{(N-n)! n_1! n_0!}{N!}, 
\end{align}
if $z_i\in \{0,1\}$ and $z_i t_i \in \{0,1\}$ for all $i$, 
$\sum_{i=1}^{N} z_i = n$ 
and $\sum_{i=1}^{N} z_it_i = n_1$; 
and  zero otherwise. 
Therefore, the CRSE
is mathematically equivalent to randomly partitioning all $N$ units into three groups of sizes $n_1$, $n_0$ and $N-n$, respectively. 

\subsection{Covariate imbalance and rerandomization in sampling and assignment}\label{sec:rerand_general}

Under the SRS, the distributions of any observed or unobserved covariate for the sampled $n$ units and the total $N$ units are the same \textit{on average}; 
under the CRE, the distributions of any observed or unobserved covariate for the treatment and control groups are the same \textit{on average}.
The former 
enables us 
to infer properties of the finite population of all $N$ units using only the sampled $n$ units, 
and the latter 
enables us 
to infer treatment effects for the sampled $n$ units by comparing the treatment and control groups. 
However, as pointed by \citet{Fuller2009} and \cite{Morgan2012},  
a realized simple random sample 
may appear undesirable with respect to the available auxiliary (i.e., covariate) information, 
and a realized complete randomization can often be unbalanced in terms of the covariate distributions in treatment and control groups. 
To avoid the covariate imbalance, 
\citet{Fuller2009} proposed rejective sampling to avoid unlucky samples 
in survey sampling, 
and 
\citet{Morgan2012} proposed rerandomization to avoid unlucky treatment assignments in randomized experiments, both of which share similar spirit. 
In the remaining discussion, we will also view rejective sampling as 
rerandomization in the sense that it rerandomizes the selected samples.  

In survey experiments, 
we want to actively avoid covariate imbalance in both sampling and treatment assignment. 
Inspired by \citet{Fuller2009} and \citet{Morgan2012}, 
we propose a general two-stage rerandomization design  
for a survey experiment as follows: 
\begin{enumerate}[label={(S\arabic*)}, topsep=1ex,itemsep=-0.3ex,partopsep=1ex,parsep=1ex
	]
	\item Collect covariate data of the $N$ units, and specify a covariate balance criterion for sampling. 
	
	\item 
	Randomly sample $n$ units from the population of $N$ units. 
	
	\item 
	Check the covariate balance based on the criterion for sampling specified in (S1). If the balance criterion is satisfied, continue to (S4); otherwise, return to (S2).
	
	\item 
	Select the $n$ units using the accepted sample from (S3), collect more covariate data for the $n$ units if possible, 
	and specify a covariate balance criterion for treatment assignment. 
	 
	\item 
	Randomly assign the $n$ units into treatment and control groups. 
	
	\item 
	Check the covariate balance based on the criterion for treatment assignment specified in (S4). 
	If the balance criterion is satisfied, continue to (S7); otherwise, return to (S5). 
	
	\item Conduct the experiment using the accepted assignment from (S6). 
\end{enumerate}
Importantly, after the experiment, we need to analyze the observed data taking into account the rerandomization used in (S3) and (S6).

For each unit $1\le i\le N$, we use $\bs{W}_i \in \mathbb{R}^J$ to denote the available $J$ dimensional covariate vector at the sampling stage, 
and $\bs{X}_i \in \mathbb{R}^K$ to denote the available $K$ dimensional covariate vector at the treatment assignment stage.  
We emphasize that the covariate $\bs{W}$ is  observed for all $N$ units in the population of interest, 
while  the covariate $\bs{X}$  may only be observed for sampled units in $\mathcal{S}$.  
Oftentimes, $\bs{X}$ 
contains a richer set of covariates than $\bs{W}$, 
in the sense that $\bs{X} \supset \bs{W}$, since we may be able to collect more covariates after sampling the $n$ units.
However, if the experimenter who conducts random assignment does not have access to the covariate information used 
for random sampling, then it is possible that $\bs{X}$ and $\bs{W}$ overlap with each other and both of them contain additional covariate information. 
Throughout the paper, 
we will consider a general scenario without any constraint on the relationship between $\bs{X}$ and $\bs{W}$, unless otherwise stated.

\subsection{Finite population inference and asymptotics}

We conduct finite population inference, sometimes also called randomization-based or design-based inference, that relies solely on the randomness in the sampling and treatment assignment. 
Specifically, 
all the potential outcomes $Y_i(1)$'s and $Y_i(0)$'s and covariates $\bs{W}_i$'s and $\bs{X}_i$'s for the $N$ units of interest are viewed as fixed constants, or equivalently being conditioned on as conducting conditional inference. 
Consequently, 
we do not impose any 
distributional assumption on the outcomes or covariates, as well as the dependence of the outcomes on the covariates. 
The randomness in the observed data comes solely from the random sampling indicators $Z_i$'s and the random treatment assignments $T_i$'s. 
Thus, the distribution of $(\bm{Z}, \bm{T})$, such as that in \eqref{eq:srs_cre}, plays an important role in governing data generating process as well as statistical inference.

We then introduce some fixed finite population quantities. 
Let $\bar{\bs{W}} = N^{-1} \sum_{i=1}^N \bs{W}_i$ and 
$\bar{\bs{X}} = N^{-1} \sum_{i=1}^N \bs{X}_i$ be the finite population averages of covariates. 
For $t=0,1$, 
let 
$
S^2_t = (N-1)^{-1}\sum_{i=1}^N \{ Y_i(t)- \bar{Y}(t) \}^2$ 
and $S^2_{\tau} = (N-1)^{-1}\sum_{i=1}^N (\tau_i-\tau)^2$
be the finite population variances of potential outcomes and individual effects. 
We define analogously 
$
\bs{S}_{\bs{W}}^2
$
and 
$
\bs{S}_{\bs{X}}^2
$
as the finite population covariance matrices of covariates, 
and 
$\bs{S}_{t, \bs{W}}$, 
$\bs{S}_{t, \bs{X}}$,   
$\bs{S}_{\tau, \bs{W}}$ 
and 
$\bs{S}_{\tau, \bs{X}}$
as the finite population covariances between potential outcomes, individual effects and covariates. 

Finally, we introduce the finite population asymptotics that will be utilized throughout the paper. 
Specifically, 
we embed the finite population into a sequence of finite populations with increasing sizes, 
and study the limiting distributions of certain estimators as the size of population goes to infinity; 
see, e.g., \citet{hajek1960limiting} and \citet{fpclt2017}. 
We impose the following regularity condition along the sequence of finite populations. 
\begin{condition} \label{cond:fp}
	As $N \rightarrow \infty$,
	the sequence of finite populations satisfies 
	\begin{enumerate}[label=(\roman*), topsep=1ex,itemsep=-0.3ex,partopsep=1ex,parsep=1ex]
		\item the proportion $f$ of sampled units has a limit in $[0, 1)$;  

		\item the proportions $r_1$ and $r_0$ of units assigned to treatment and control have positive limits;  

		\item the finite population variances $S^2_1, S^2_0, S^2_\tau$ 
		and covariances 
		$
		\bs{S}_{\bs{W}}^2, \bs{S}_{1, \bs{W}}, \bs{S}_{0, \bs{W}}
		$
		$
		\bs{S}_{\bs{X}}^2, \bs{S}_{1, \bs{X}}, \bs{S}_{0, \bs{X}}
		$
		have limiting values, 
		and the limits of $\bs{S}_{\bs{W}}^2$ 
		and $\bs{S}_{\bs{X}}^2$ are
		nonsingular;
		
		\item 
		for $t\in \{0, 1\}$, 
		\begin{equation*}
		    n^{-1} \max_{1 \le i\le N} \{ Y_i(t) -\bar{Y}(t) \}^2 \rightarrow 0, \ \ 
		    n^{-1} \max_{1 \le i\le N}\| \bs{W}_i - \bar{\bs{W}} \|_2^2
		    \rightarrow 0, \ \ 
		    n^{-1} \max_{1 \le i\le N}\| \bs{X}_i - \bar{\bs{X}} \|_2^2
		    \rightarrow 0. 
		\end{equation*}
	\end{enumerate}
\end{condition}

Below we intuitively explain the regularity conditions in Condition \ref{cond:fp}. 
Condition \ref{cond:fp}(i) and (ii) are natural requirements. More importantly, we allow the proportion $f$ to have zero limit. This can be a more reasonable asymptotic approximation when the population size $N$ is much larger than the sample size $n$, which is common in classical survey sampling and population-based survey experiments. 
For (iii) and (iv),
we consider a special case where the potential outcomes and covariates are independent and identically distributed (i.i.d.) from a distribution, whose covariances for covariates are nonsingular.  
If $f$ has a positive limit and the distribution has more than two moments, 
then (iii) and (iv) will hold with probability one. 
If $f$ has a zero limit, 
the distribution is sub-Gaussian, and the sample size $n \gg \log N$, 
then (iii) and (iv) will still hold with probability one. 
Besides, if the potential outcomes and covariates are bounded, then (iii) and (iv) hold with probability one as long as $n\rightarrow \infty$. 
Therefore, Condition \ref{cond:fp} imposes reasonable regularity conditions. 
For descriptive convenience, we further introduce the notation $\dot\sim$ to denote two sequences of random vectors or distributions converging weakly to the same distribution.

\section{Rerandomized Survey Experiments using the Mahalanobis Distances}\label{sec:resem}

\subsection{Covariate balance criteria using the Mahalanobis distances}\label{sec:balance}
As discussed before, 
to conduct
rerandomization in survey experiments, 
we first need to specify the covariate balance criteria for both the sampling and treatment assignment stages at Steps (S3) and 
(S6). 
Recall that 
$\bar{\bs{W}}$ is the average covariate vector for the whole population. 
Let 
$\bar{\bs{W}}_{\mathcal{S}} = n^{-1} \sum_{i=1}^N Z_i \bs{W}_i$ 
be the average covariate vector for the sampled units. 
Following \citet{Fuller2009}, 
in the sampling stage, 
we consider the difference in covariate means between sampled and all units, i.e., 
$\hat{\bs{\delta}}_{\bs{W}} = \bar{\bs{W}}_{\mathcal{S}} - \bar{\bs{W}}$, 
and measure the covariate imbalance using the corresponding Mahalanobis distance, which has the advantage of being affinely invariant \citep[see, e.g.,][]{Morgan2012}: 
\begin{align}\label{eq:M_S}
	M_S \equiv
	\hat{\bs{\delta}}_{\bs{W}}^\top 
	\left\{
	\Cov\left(
	\hat{\bs{\delta}}_{\bs{W}}
	\right)
	\right\}^{-1}
	\hat{\bs{\delta}}_{\bs{W}}
	= 
	\left(\bar{\bs{W}}_{\mathcal{S}} -\bar{\bs{W}} \right)^\top 
	\left\{
	\left( \frac{1}{n} - \frac{1}{N} \right)
	\bs{S}^2_{\bs{W}}
	\right\}^{-1}
	\left(\bar{\bs{W}}_{\mathcal{S}} -\bar{\bs{W}}\right), 
\end{align}
where 
$\Cov(
\hat{\bs{\delta}}_{\bs{W}}
)$
refers to the covariance matrix of the difference-in-means of covariates  $\hat{\delta}_{\bs{W}}$ under the CRSE (or equivalently the SRS).
Let 
$\bar{\bs{X}}_1 = n_1^{-1} \sum_{i=1}^N Z_i T_i \bs{X}_i$ 
and 
$\bar{\bs{X}}_0 = n_0^{-1} \sum_{i=1}^N Z_i (1-T_i) \bs{X}_i$ 
be the average covariates in the treatment and control groups, 
and 
$\bar{\bs{X}}_{\mathcal{S}} = n^{-1}\sum_{i=1}^N Z_i\bs{X}_i$ 
and 
$\bs{s}^2_{\bs{X}} = (n-1)^{-1} \sum_{i: Z_i=1} (\bs{X}_i - \bar{\bs{X}}_{\mathcal{S}})(\bs{X}_i - \bar{\bs{X}}_{\mathcal{S}})^\top$ 
be the sample average and covariance matrix of covariates for sampled units. 
Following \citet{Morgan2012}, 
we consider the difference in covariate means between treated and control units, i.e., 
$
\hat{\bs{\tau}}_{\bs{X}} = \bar{\bs{X}}_1 - \bar{\bs{X }}_0, 
$
and 
measure the covariate imbalance using the corresponding Mahalanobis distance but conditional on the sampled units $\mathcal{S}$: 
\begin{align}\label{eq:M_T}
	M_T & \equiv 
	\hat{\bs{\tau}}_{\bs{X}}^\top 
	\left\{
	\Cov\left(
	\hat{\bs{\tau}}_{\bs{X}} \mid  \mathcal{S}
	\right)
	\right\}^{-1}
	\hat{\bs{\tau}}_{\bs{X}}
	= 
	\left(\bar{\bs{X}}_1 - \bar{\bs{X}}_0\right)^\top 
	\left(
	\frac{n}{n_1 n_0} 
	\bs{s}^2_{\bs{X}}
	\right)^{-1}
	\left(\bar{\bs{X}}_1 - \bar{\bs{X}}_0\right),
\end{align}
where 
$\Cov(
\hat{\bs{\tau}}_{\bs{X}} \mid  \mathcal{S}
)$
refers to the conditional covariance matrix of the difference-in-means of covariates $\hat{\bs{\tau}}_{\bs{X}}$ given sampled units $\mathcal{S}$ under the CRSE (or equivalently the CRE given $\mathcal{S}$). 
We emphasize that here we use the conditional Mahalanobis distance instead of the marginal one, which will replace the sample covariance  $\bs{s}^2_{\bs{X}}$ in \eqref{eq:M_T} by  the finite population covariance 
$\bs{S}^2_{\bs{X}}$. 
This is because $\bs{S}^2_{\bs{X}}$ depends on the covariates $\bs{X}_i$'s for all $N$ units and may thus be unknown. 
Nevertheless, 
$\bs{s}^2_{\bs{X}}$ is actually a consistent estimator for $\bs{S}^2_{\bs{X}}$, and using either of them will lead to rerandomization with the same asymptotic property; see the Supplementary Material for details. 

Based on the Mahalanobis distances in \eqref{eq:M_S} and \eqref{eq:M_T}, 
we propose the following rerandomization scheme for survey experiments. 
Let $a_S$ and $a_T$ be two predetermined positive thresholds. 
Under rerandomized survey experiments using Mahalanobis distances (ReSEM), 
at the sampling stage, 
a simple random sample is acceptable if and only if $M_S \le a_S$, 
and 
at the treatment assignment stage, a complete randomization is acceptable if and only if $M_T \le a_T$. 
The detailed procedure for ReSEM is illustrated in Figure \ref{figure:flowchart}, in parallel with the general procedure discussed in Section \ref{sec:rerand_general}. 

\usetikzlibrary{positioning}
\begin{figure}[htbp] 
\small
\begin{center}
\begin{tikzpicture}[node distance=8mm]
\node (S1) [io, text width=2cm] {Collect $\bs{W_i}$ for all $i$};
\node (S2) [process, right= of S1, text width=4.5cm] { Randomly sample $n$ units, and compute $M_S$};
\node (S3) [decision, aspect=2, right= of S2] {$M_S \le a_S$?};
\node (S4) [io, right= of S3, text width=2cm] { Collect $\bs{X_i}$ for $i \in \mathcal{S}$ };
\node (S5) [process, below= of S4.south west, anchor=north east, text width=6cm] { Randomly assign the selected $n$ units into treatment and control, and compute $M_T$};
\node (S6) [decision, aspect=2, left= of S5] {$M_T \le a_T$?};
\node (S7) [process, left= of S6, text width=2.5cm] {Conduct the  experiment };
\draw [arrow] (S1) -- (S2);
\draw [arrow] (S2) -- (S3);
\draw [arrow] (S3) -- node [above] {yes}  (S4);
\draw [arrow] (S3) -- node [right] {no} ++(0cm, 1.2cm) -|   (S2);
\draw [arrow] (S4) |- (S5);
\draw [arrow] (S5) -- (S6);
\draw [arrow] (S6) -- node [above] {yes}  (S7);
\draw [arrow] (S6) -- node [left] {no} ++(0cm, -1.2cm) -|   (S5);
\end{tikzpicture}
\end{center}
\caption{Procedure for conducting ReSEM.}\label{figure:flowchart} 
\end{figure}
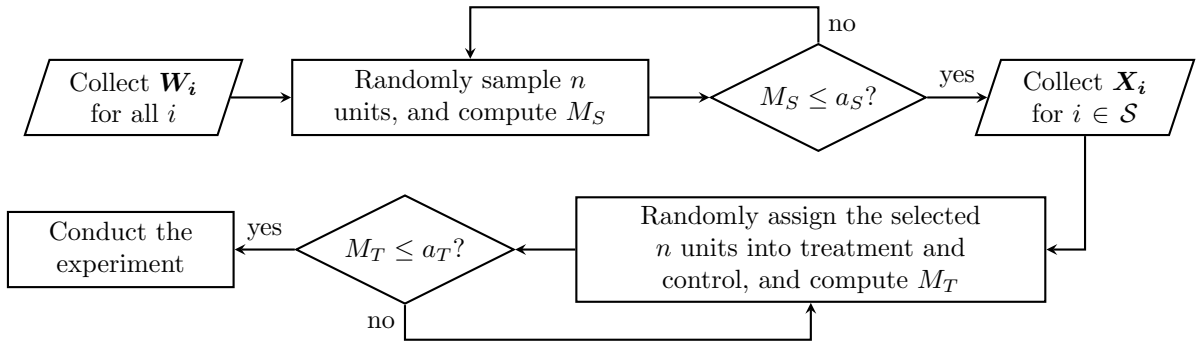

Note that $M_S$ measures the covariate balance between sampled units (which consist of both treated and control units) and the whole population, and $M_T$ measures the covariate balance between treated and control units. 
Thus, ReSEM can improve the covariate balance among the treated, control and all units. 
Intuitively, we can also consider a single-stage rerandomization to improve the covariate balance among them. 
However, we prefer to separate it into two stages as depicted in Figure \ref{figure:flowchart}, due mainly to the following reasons. 
First, the two-stage separation can reduce the computation cost, 
because unlucky realizations of sampling will be excluded without entering the treatment assignment stage. 
This is particularly useful when the covariate balance criteria are stringent, e.g., the thresholds $a_T$ and $a_S$ are small.  
For example, with the same criteria of $M_S\le a_S$ and $M_T \le a_T$, the numbers of randomizations needed to get an acceptable assignment for the single- and two-stage rerandomization are approximately $(p_S p_T)^{-1}$ and $p_S^{-1} + p_T^{-1}$, 
where $p_S = \Pr(\chi^2_J \le a_S)$ and $p_T = \Pr(\chi^2_K \le a_T)$ are the asymptotic acceptance probabilities for the sampling and treatment assignment stages; see the Supplementary Material for details. 
Second, 
the implementation of a two-stage experiment can be more flexible, where
the sampling and treatment assignment can be conducted by different designers at different times. 
For example, after the sampling stage, it is possible to collect richer covariates for the sampled units (which generally have much smaller size than the whole population) that can be used for rerandomization at the treatment assignment stage. On the contrary, 
a single-stage rerandomization needs to collect 
covariates  
for every realization of sampled units until we get an acceptable assignment.

We now discuss how to choose the thresholds $a_S$ and $a_T$ 
for ReSEM. 
Some researchers \citep{Kasy2016, Kallus2018} have suggested to use thresholds as small as possible, leading to certain ``optimal'' designs. 
However, this may result in only one (or two when $n_1=n_0$) acceptable assignment, making the randomization-based asymptotic approximation inaccurate and randomization test powerless. 
Indeed, \citet{Harmonizing2020} showed that such ``optimal'' designs can endanger treatment effect estimation since some unobserved covariates can be highly imbalanced, 
and \citet{Banerjee2020} studied experimental design from an ambiguity-averse decision-making perspective and suggested that targeting a fixed quantile of balance is safer than targeting an absolute balance; 
see also \citet{johansson2019optimal} for comparison between rerandomization and the ``optimal'' design. 
In this paper, we follow the recommendation from \citet{Morgan2012} and fix the thresholds at some small values, 
which not only reduces the burden for optimizing certain covariate balance as in ``optimal'' designs but also provides asymptotically valid randomization-based inference. 
Moreover, 
as commented in \citet{rerand2018} and \citet{schultzberg2019asymptotic}, 
asymptotically, 
the additional gain from further decreasing the threshold becomes smaller as the threshold decreases, and, with a small enough threshold, the precision of the difference-in-means estimator is close to the optimal one that we expect even if we set the threshold close to zero. 
In practice with finite samples, 
\citet{kapelner2019optimal}
considered optimizing the threshold to minimize a certain quantile of treatment effect estimation error under some model assumptions on the potential outcomes. 
Recently, \citet{harshaw2019balancing} proposed a Gram–Schmidt Walk design involving an explicit parameter, which plays a similar role as the rerandomization threshold,  for controlling the trade-off between covariate balance and robustness.

\subsection{Squared multiple correlations between potential outcomes and covariates}\label{sec:R2}

We will introduce two $R^2$-type measures for the association between potential outcomes and covariates, 
which play an important role in studying distributions of treatment effect estimators under ReSEM. 
In particular, we will use the squared multiple correlations under the CRSE between the difference-in-means of covariates, 
$\hat{\bs{\delta}}_{\bs{W}}$ and $\hat{\bs{\tau}}_{\bs{X}}$ in \eqref{eq:M_S} and \eqref{eq:M_T}, 
and
the difference-in-means estimator, 
which 
is an intuitive treatment effect estimator of the following form: 
\begin{align}\label{eq:diff_outcome}
	\hat{\tau} & = \frac{1}{n_1} \sum_{i=1}^N Z_i T_i Y_i  - \frac{1}{n_0} \sum_{i=1}^N Z_i(1-T_i) Y_i. 
\end{align}
As demonstrated in the following proposition, we can conveniently represent the squared multiple correlations between difference-in-means of outcome and covariates using finite population linear projections.
Specifically, 
for 
$t=0,1$, 
the linear projection of $Y(t)$ on $\bs{X}$ 
for unit $i$ is 
$\bar{Y}(t) + \bs{S}_{t, \bm{X}}(\bs{S}_{\bs{X}}^2)^{-1} (\bs{X}_i - \bar{\bs{X}})$, 
whose finite population variance simplifies to 
$S^2_{t \mid  \bs{X}} \equiv \bs{S}_{t,\bs{X}}(\bs{S}_{\bs{X}}^2)^{-1} \bs{S}_{\bs{X},t}$. 
Similarly, 
the finite population variances for the linear projections of individual effect on covariates $\bs{X}$ and $\bs{W}$ are, respectively, 
$
S^2_{\tau \mid  \bs{X}} = 
\bs{S}_{\tau, \bs{X}} (\bs{S}^2_{\bs{X}})^{-1} \bs{S}_{\bs{X},\tau}$
and 
$
S^2_{\tau \mid  \bs{W}} = 
\bs{S}_{\tau, \bs{W}} (\bs{S}^2_{\bs{W}})^{-1} \bs{S}_{\bs{W},\tau}$.

\begin{proposition} \label{prop:R2}
	Under the CRSE, 
	the squared multiple correlations between the difference-in-means of outcome $\hat{\tau}$ and that of the covariates at the sampling stage $\hat{\bs{\delta}}_{\bs{W}}$ and treatment assignment stage $\hat{\bs{\tau}}_{\bs{X}}$ have the following equivalent forms, respectively, 
	\begin{align}\label{eq:R2}
	R_S^2 & = 
	\frac{(1-f)S^2_{\tau \mid  \bs{W}}}{r_1^{-1}S^2_1 +r_0^{-1}S^2_0 -fS^2_\tau},
	\quad 
	\text{and}
	\quad
	R_T^2 = 
	\frac{r_1^{-1}S^2_{1 \mid  \bs{X}}+r_0^{-1}S^2_{0 \mid  \bs{X}} -S^2_{\tau \mid  \bs{X}}}{r_1^{-1}S^2_1 +r_0^{-1}S^2_0 -fS^2_\tau}. 
	\end{align}
\end{proposition}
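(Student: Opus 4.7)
The plan is to start from the standard definition of the squared multiple correlation, $R^2 = \Cov(\hat{\tau}, \hat{\bs{v}})^\top \{\Cov(\hat{\bs{v}})\}^{-1} \Cov(\hat{\bs{v}}, \hat{\tau})/\Var(\hat{\tau})$, applied to $\hat{\bs{v}} = \hat{\bs{\delta}}_{\bs{W}}$ and $\hat{\bs{v}} = \hat{\bs{\tau}}_{\bs{X}}$. So I need to compute, under the CRSE, three ingredients: the variance of $\hat{\tau}$, the covariance matrix of each covariate contrast, and the cross-covariance of $\hat{\tau}$ with each contrast. The main tool throughout will be the law of total (co)variance with respect to the sampled set $\mathcal{S}$, which decouples the SRS step from the CRE step and lets me invoke Neyman's classical finite-population formulas at each stage.

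First, I would establish the common denominator $\Var(\hat{\tau})$. Conditional on $\mathcal{S}$, $\hat{\tau}$ is a CRE difference-in-means on $n$ units, so by Neyman $\E[\hat{\tau}\mid \mathcal{S}] = \bar{Y}_{\mathcal{S}}(1) - \bar{Y}_{\mathcal{S}}(0)$ and $\Var(\hat{\tau}\mid \mathcal{S}) = s_1^2/n_1 + s_0^2/n_0 - s_\tau^2/n$ in terms of the sample variances on $\mathcal{S}$. Taking expectations under SRS (so $\E[s_t^2] = S_t^2$ and $\E[s_\tau^2] = S_\tau^2$) and adding the outer-variance contribution $(1/n - 1/N)S_\tau^2$ of the conditional mean, the $S_\tau^2/n$ terms telescope to give $\Var(\hat{\tau}) = S_1^2/n_1 + S_0^2/n_0 - S_\tau^2/N$; multiplying by $n$ yields the denominator $r_1^{-1}S_1^2 + r_0^{-1}S_0^2 - fS_\tau^2$ common to both formulas in \eqref{eq:R2}.

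For $R_S^2$ I would exploit that $\hat{\bs{\delta}}_{\bs{W}}$ is $\mathcal{S}$-measurable, so the inner conditional covariance vanishes and the law of total covariance reduces to $\Cov(\hat{\tau}, \hat{\bs{\delta}}_{\bs{W}}) = \Cov(\bar{Y}_{\mathcal{S}}(1) - \bar{Y}_{\mathcal{S}}(0),\, \bar{\bs{W}}_{\mathcal{S}}) = (1/n - 1/N)\bs{S}_{\tau,\bs{W}}$ by the SRS covariance formula. Likewise $\Cov(\hat{\bs{\delta}}_{\bs{W}}) = (1/n - 1/N)\bs{S}_{\bs{W}}^2$. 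Plugging these into the $R^2$ definition, one factor of $(1/n - 1/N)$ cancels and the quadratic form collapses to $(1/n - 1/N)S^2_{\tau\mid \bs{W}}$; multiplying numerator and denominator by $n$ produces exactly $(1-f)S^2_{\tau\mid \bs{W}}/(r_1^{-1}S_1^2 + r_0^{-1}S_0^2 - fS_\tau^2)$.

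For $R_T^2$ the structure is parallel but the algebra is where the real work is. Since $\E[\hat{\bs{\tau}}_{\bs{X}}\mid \mathcal{S}] = 0$, the outer covariance term vanishes. Conditional on $\mathcal{S}$, Neyman's CRE formulas applied with $\bs{X}$ as a fixed pre-treatment attribute give $\Var(\hat{\bs{\tau}}_{\bs{X}}\mid \mathcal{S}) = (n/(n_1 n_0))\bs{s}^2_{\bs{X}}$ and $\Cov(\hat{\tau}, \hat{\bs{\tau}}_{\bs{X}}\mid \mathcal{S}) = \bs{s}_{1,\bs{X}}/n_1 + \bs{s}_{0,\bs{X}}/n_0$; averaging over $\mathcal{S}$ under SRS yields $\Cov(\hat{\bs{\tau}}_{\bs{X}}) = (n/(n_1 n_0))\bs{S}^2_{\bs{X}}$ and $\Cov(\hat{\tau}, \hat{\bs{\tau}}_{\bs{X}}) = \bs{S}_{1,\bs{X}}/n_1 + \bs{S}_{0,\bs{X}}/n_0$. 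The main obstacle is the final simplification: I would expand the quadratic form $(\bs{S}_{1,\bs{X}}/n_1 + \bs{S}_{0,\bs{X}}/n_0)^\top (\bs{S}_{\bs{X}}^2)^{-1}(\bs{S}_{\bs{X},1}/n_1 + \bs{S}_{\bs{X},0}/n_0)$ into the three scalars $S^2_{1\mid \bs{X}}$, $S^2_{0\mid \bs{X}}$, and the cross term $C = \bs{S}_{1,\bs{X}}(\bs{S}_{\bs{X}}^2)^{-1}\bs{S}_{\bs{X},0}$, then apply the polarization identity $S^2_{\tau\mid \bs{X}} = S^2_{1\mid \bs{X}} + S^2_{0\mid \bs{X}} - 2C$ to eliminate $C$. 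After collecting the $1/n_1^2$, $1/n_0^2$, and $1/(n_1 n_0)$ coefficients and multiplying by the factor $n_1 n_0/n$ coming from $\Cov(\hat{\bs{\tau}}_{\bs{X}})^{-1}$, the numerator simplifies to $(1/n)[r_1^{-1}S^2_{1\mid \bs{X}} + r_0^{-1}S^2_{0\mid \bs{X}} - S^2_{\tau\mid \bs{X}}]$, and dividing by $\Var(\hat{\tau})$ gives the stated expression for $R_T^2$.
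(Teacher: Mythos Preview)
Your proposal is correct, and the final algebraic simplification for $R_T^2$ via the polarization identity $S^2_{\tau\mid\bs{X}} = S^2_{1\mid\bs{X}} + S^2_{0\mid\bs{X}} - 2\bs{S}_{1,\bs{X}}(\bs{S}_{\bs{X}}^2)^{-1}\bs{S}_{\bs{X},0}$ is essentially the same as what the paper does. Where you differ from the paper is in how you obtain the ingredients $\Var(\hat{\tau})$, $\Cov(\hat{\tau},\hat{\bs{\delta}}_{\bs{W}})$, $\Cov(\hat{\bs{\delta}}_{\bs{W}})$, $\Cov(\hat{\tau},\hat{\bs{\tau}}_{\bs{X}})$, and $\Cov(\hat{\bs{\tau}}_{\bs{X}})$. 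The paper computes all of these in one stroke (its Lemma~\ref{lemma:cov_crse}) by viewing the CRSE as a single three-group complete randomization---treated, control, and unsampled---and constructing pseudo potential outcome vectors $\bs{U}_i(1),\bs{U}_i(0),\bs{U}_i(-1)$ so that $\bar{\bs{U}}_1-\bar{\bs{U}}_0=(\hat{\tau},\hat{\bs{\tau}}_{\bs{X}}^\top,\hat{\bs{\delta}}_{\bs{W}}^\top)^\top$; the full covariance matrix $\bs{V}$ then drops out of the standard Neyman formula $n_1^{-1}\bs{S}^2_{\bs{U}(1)}+n_0^{-1}\bs{S}^2_{\bs{U}(0)}-N^{-1}\bs{S}^2_{\bs{U}(1)-\bs{U}(0)}$ for multi-arm experiments. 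Your route via the law of total covariance, conditioning on $\mathcal{S}$, is more elementary and self-contained---it requires only the two-arm Neyman formula and the SRS covariance formula, rather than invoking an external multi-arm result---and it makes transparent exactly which contributions come from the sampling step versus the assignment step (e.g., $\hat{\bs{\delta}}_{\bs{W}}$ being $\mathcal{S}$-measurable kills the inner term, while $\E[\hat{\bs{\tau}}_{\bs{X}}\mid\mathcal{S}]=0$ kills the outer term). The paper's approach, by contrast, is more systematic: it delivers the entire joint covariance structure $\bs{V}$ (including, in particular, the zero block $\bs{V}_{xw}=\bs{0}$) in a single computation, which the paper reuses downstream for the finite-population CLT and the analysis under ReSEM.
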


The form of $R_T^2$ is similar to the $R^2$-type measure introduced in \citet{rerand2018} for studying rerandomization in treatment-control experiments. 
However, $R_S^2$ has a rather different form and depends explicitly on how covariates explain the individual treatment effect. 
Below we consider two special cases that can help further simplify the 
expressions 
in \eqref{eq:R2} and give us more intuition on these two $R^2$-type measures for the association between potential outcomes and covariates. 
First, 
if the treatment effects are additive, 
in the sense that 
$\tau_i$ is constant across all units, 
then the finite population variances of the individual effects and their projections
$S^2_\tau, S^2_{\tau \mid  \bs{W}}$ 
and 
$S^2_{\tau \mid  \bs{X}}$
all reduce to zero. 
From Proposition \ref{prop:R2}, 
the squared multiple correlation for sampling stage reduces to $R^2_S = 0$, 
while that for treatment assignment stage reduces to  $R_T^2 = S^2_{0 \mid  \bs{X}}/S^2_0$, 
the squared multiple correlation between $Y (0)$ and $\bs{X}$.  
Second, if the control potential outcomes are constant across all units, 
then $S_0^2 = S^2_{0 \mid  \bs{X}} =  0$, $S_1^2 = S_{\tau}^2$,  $S^2_{1\mid \bs{W}} = S^2_{\tau\mid \bs{W}}$ and $S^2_{1\mid \bs{X}} = S^2_{\tau\mid \bs{X}}$. 
From Proposition \ref{prop:R2}, the squared multiple correlations reduce to 
$R^2_S = (1-f)/(r_1^{-1}-f) \cdot S^2_{\tau \mid \bs{W}}/S^2_{\tau}$ and 
$R^2_T = (r_1^{-1} - 1)/(r_1^{-1} - f) \cdot S^2_{\tau \mid \bs{X}}/S^2_{\tau}$. 
When $f\approx 0$, i.e., the sample size $n \ll N$, they further reduce to 
$R^2_S \approx r_1 S^2_{\tau \mid \bs{W}}/S^2_{\tau}$ and 
$R^2_T \approx r_0 S^2_{\tau \mid \bs{X}}/S^2_{\tau}$, 
both of which are certain proportions of the squared multiple correlations between the individual treatment effect and the covariates.

\subsection{Asymptotic distribution under rerandomization}\label{sec:asym_diff}
We study the asymptotic distribution of the difference-in-means estimator $\hat{\tau}$ in \eqref{eq:diff_outcome} under ReSEM. 
Let  $\varepsilon \sim \mathcal{N}(0,1)$ be a standard Gaussian random variable, 
and 
$L_{J, a_S} \sim D_1 \mid  \bs{D}^\top \bs{D}\le a_S$ 
be a constrained Gaussian random variable with 
$\bs{D} = (D_1, \ldots, D_J)^\top \sim \mathcal{N}(\bs{0}, \bs{I}_J)$.
Similarly, define $L_{K, a_T}\sim \tilde{D}_1 \mid  \tilde{\bs{D}}^\top \tilde{\bs{D}}\le a_T$ with 
$\tilde{\bs{D}} 
= (\tilde{D}_1, \ldots, \tilde{D}_K)^\top 
\sim \mathcal{N}(\bs{0}, \bs{I}_K)$. 
We assume that $\varepsilon, L_{J,a_S}$ and $L_{K,a_T}$ are mutually independent throughout the paper.
Recall that 
$R^2_S$ and $R^2_T$
measure 
the outcome-covariate associations 
at the sampling and assignment stages, respectively, 
and define
\begin{align}\label{eq:V_tautau}
	V_{\tau \tau} = r_1^{-1}S_1^2 + r_0^{-1}S_0^2 -f S_{\tau}^2,
\end{align}
which is actually the variance of $\sqrt{n} (\hat\tau - \tau)$ under the CRSE \citep{imbens2015causal, Branson:2019aa}. 
Throughout the paper, we will also use the explicit conditioning on ReSEM to emphasize that we are studying the repeated sampling properties under ReSEM. 

\begin{theorem} \label{thm:dist}
Under Condition \ref{cond:fp} and ReSEM, 
\begin{align}\label{eq:dist}
\sqrt{n} (\hat\tau - \tau) \mid \text{ReSEM}  \ & \dot\sim\  V_{\tau\tau}^{1/2} \left(  \sqrt{1-R_S^2 - R_T^2} \cdot \varepsilon
+ \sqrt{R_S^2} \cdot L_{J,a_S}
+
\sqrt{R_T^2} \cdot L_{K,a_T} 
\right). 
\end{align}
\end{theorem}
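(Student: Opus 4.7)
The plan is to first establish a joint finite-population central limit theorem (CLT) for $(\sqrt{n}(\hat\tau-\tau), \sqrt{n}\hat{\bs{\delta}}_{\bs{W}}, \sqrt{n}\hat{\bs{\tau}}_{\bs{X}})$ under the unrestricted CRSE, then transfer the ReSEM conditioning to the Gaussian limit, mirroring the orthogonal-decomposition argument of \citet{rerand2018} but extended to two stages. Under Condition \ref{cond:fp}, the target is joint asymptotic normality of this $(1+J+K)$-dimensional vector. I would obtain it by iterated conditioning: conditional on the realized sample $\mathcal{S}$, the CRE within the sample produces the assignment-stage fluctuations in both $\hat\tau$ and $\hat{\bs{\tau}}_{\bs{X}}$, to which the finite-population CLT of \citet{fpclt2017} applies; the remaining sampling-stage fluctuations (including $\hat{\bs{\delta}}_{\bs{W}}$ and the discrepancy between in-sample and population averages of $Y(1)$ and $Y(0)$) are themselves asymptotically Gaussian by a finite-population CLT for $\bs{Z}$, allowing possibly $f\to 0$. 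Combining the two stages via characteristic functions yields the joint Gaussian limit.

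Next, I would compute the asymptotic covariance structure explicitly. Three facts are central: (a) $\Cov(\hat{\bs{\delta}}_{\bs{W}}, \hat{\bs{\tau}}_{\bs{X}}) = \bs{0}$ because $\E[\hat{\bs{\tau}}_{\bs{X}}\mid\mathcal{S}] = \bs{0}$ by symmetry of the CRE; (b) $n\Var(\hat\tau-\tau)\to V_{\tau\tau}$ as in \eqref{eq:V_tautau}; and (c) after tedious but routine algebra, the squared multiple correlations between $\sqrt{n}(\hat\tau-\tau)$ and each of $\sqrt{n}\hat{\bs{\delta}}_{\bs{W}}$ and $\sqrt{n}\hat{\bs{\tau}}_{\bs{X}}$ reduce exactly to $R_S^2$ and $R_T^2$ from Proposition \ref{prop:R2}. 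Fact (a) is what makes the two covariate vectors play asymptotically independent roles in the limit.

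The third step is to exploit Gaussianity to write the limit of $\sqrt{n}(\hat\tau-\tau)$ as
\begin{align*}
V_{\tau\tau}^{1/2}\bigl(A_S + A_T + \sqrt{1 - R_S^2 - R_T^2}\,\varepsilon_0\bigr),
\end{align*}
where $A_S$ and $A_T$ are the $L^2$ projections onto the Gaussian limits of $\sqrt{n}\hat{\bs{\delta}}_{\bs{W}}$ and $\sqrt{n}\hat{\bs{\tau}}_{\bs{X}}$ (rescaled by $V_{\tau\tau}^{1/2}$) and $\varepsilon_0\sim\mathcal{N}(0,1)$ is the orthogonal residual; by (a) and joint Gaussianity, $A_S$, $A_T$, and $\varepsilon_0$ are mutually independent. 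The event $M_S\le a_S$ is a continuous function of $\hat{\bs{\delta}}_{\bs{W}}$, and $M_T\le a_T$ is, after a Slutsky-type replacement of $\bs{s}_{\bs{X}}^2$ by its probability limit $\bs{S}_{\bs{X}}^2$, a continuous function of $\hat{\bs{\tau}}_{\bs{X}}$. Conditioning on ReSEM therefore only reshapes $A_S$ and $A_T$ and leaves $\varepsilon_0$ standard normal in the limit. A change of basis into the corresponding Mahalanobis inner product identifies $A_S\mid M_S\le a_S$ with $\sqrt{R_S^2}\,L_{J,a_S}$ and $A_T\mid M_T\le a_T$ with $\sqrt{R_T^2}\,L_{K,a_T}$ in distribution, producing \eqref{eq:dist}.

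The main obstacle is the joint CLT in the first step: the two-stage design makes it markedly more delicate than the single-stage case, because the assignment-stage error depends on the realized sample and $f$ may tend to zero. I would handle this by verifying uniform Lindeberg-type control of the assignment-stage contribution across typical samples $\mathcal{S}$, leveraging Condition \ref{cond:fp}(iv), and then stitching the two stages together at the level of characteristic functions. A secondary technicality is justifying the replacement of $\bs{s}_{\bs{X}}^2$ by $\bs{S}_{\bs{X}}^2$ inside the conditioning event for $M_T$; this follows from the consistency of $\bs{s}_{\bs{X}}^2$ together with the nonsingularity in Condition \ref{cond:fp}(iii), via a continuous-mapping argument on the joint limit.
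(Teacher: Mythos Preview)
Your overall architecture---joint CLT under the CRSE, explicit covariance computation yielding $R_S^2$ and $R_T^2$, orthogonal Gaussian decomposition, then passage of the conditioning to the limit---matches the paper's exactly, as does your treatment of the $\bs{s}_{\bs{X}}^2$-versus-$\bs{S}_{\bs{X}}^2$ issue and the identification of the constrained Gaussians $L_{J,a_S}$, $L_{K,a_T}$.

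There is, however, one genuine gap and one avoidable detour.

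\textbf{The gap.} You treat ``conditioning on ReSEM'' as synonymous with conditioning the CRSE distribution on $\{M_S\le a_S,\ M_T\le a_T\}$. That identification is exact only for a \emph{single-stage} rerandomization (reject $(\bs Z,\bs T_{\mathcal S})$ jointly until both criteria hold). ReSEM as defined here is two-stage: accept $\bs Z$ first, then accept $\bs T_{\mathcal S}$ given that fixed $\bs Z$. The two designs differ whenever $\Pr(M_T\le a_T\mid\bs Z)$ varies with $\bs Z$, and the discrepancy is not $o(1)$ automatically. The paper devotes a separate argument to showing that the total-variation distance between the two designs vanishes, essentially because $\Pr(M_T\le a_T\mid\bs Z)\to\Pr(\chi^2_K\le a_T)$ in probability over typical samples; only after this reduction does the conditional-limit machinery apply. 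Your proposal never touches this point.

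\textbf{The detour.} For the joint CLT under the CRSE you propose iterated conditioning---a CRE-within-sample CLT, then a sampling-stage CLT, then characteristic-function stitching---and correctly flag the difficulty that the inner CLT has a sample-dependent limiting covariance. The paper sidesteps this entirely: by \eqref{eq:srs_cre}, the CRSE is mathematically a \emph{single} completely randomized experiment with three arms (treated, control, unsampled) of sizes $n_1,n_0,N-n$. Constructing pseudo potential-outcome vectors so that $(\hat\tau,\hat{\bs\tau}_{\bs X},\hat{\bs\delta}_{\bs W})$ becomes a two-arm contrast, one application of the multi-arm CLT of \citet{fpclt2017} delivers the joint Gaussian limit with the full covariance matrix in one shot, and Condition~\ref{cond:fp}(iv) supplies the Lindeberg control without any uniform-over-$\mathcal{S}$ argument. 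This also handles $f\to 0$ without special treatment. Your route can be made to work, but it is substantially heavier than necessary.
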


The derivation of Theorem \ref{thm:dist} relies crucially on two facts. 
First, 
the two-stage ReSEM is asymptotically equivalent to the single-stage rerandomization with the same covariate balance criteria. 
Second, the distribution of $\hat{\tau}$ under the single-stage rerandomization is the same as its conditional distribution under the CRSE given that sampling and treatment assignment satisfy the balance criteria. 
From Theorem \ref{thm:dist}, 
$\hat{\tau}$ under ReSEM follows the same asymptotic distribution as
the summation of three independent Gaussian or constrained Gaussian random variables, 
with coefficients depending on the two squared multiple correlations $R_S^2$ and $R_T^2$. 
Intuitively, 
$\varepsilon$ is the part of $\hat{\tau}$ that is unexplained by the covariates at either the sampling or treatment assignment stages, 
while 
$L_{J,a_S}$ and $L_{K,a_T}$ are the parts that are explained by the covariates at the sampling and treatment assignment stages
and are thus affected by the corresponding covariate balance criteria.

Note that $L_{J, a_S}$ or $L_{K, a_T}$ reduces to a standard Gaussian random variable if $a_S=\infty$ or $a_T = \infty$, 
in the sense that all samplings or assignments are acceptable. 
Thus, 
Theorem \ref{thm:dist} immediately implies 
the asymptotic distribution of $\hat{\tau}$ when there is no rerandomization at the sampling or assignment stages. 
As discussed shortly, 
the design without rerandomization at either
stage 
generally provides less efficient difference-in-means estimator than that with rerandomization at both stages. 

\begin{rmk}\label{rmk:special_dim}
Note that the CRSE, rerandomized treatment-control experiment using the Mahalanobis distance \citep[ReM;][]{Morgan2012, rerand2018}
and rejective sampling \citep{Fuller2009} can all be viewed as special cases of ReSEM. 
Thus, 
Theorem \ref{thm:dist} can also imply the asymptotic distributions of the difference-in-means estimator under the CRSE and ReM as well as the sample average estimator under the rejective sampling; 
see the Supplementary Material for details. 
For example, under Condition \ref{cond:fp} and the CRSE, 
\begin{align}\label{eq:crse}
\sqrt{n} (\hat\tau - \tau)  \  \dot\sim\  V_{\tau\tau}^{1/2} \cdot \varepsilon.
\end{align}
\end{rmk}

\subsection{Improvement on sampling precision from rerandomization}

Theorem \ref{thm:dist} and 
Remark \ref{rmk:special_dim} 
characterize the asymptotic properties of the difference-in-means estimator under ReSEM and the CRSE. 
Below we compare the asymptotic distributions \eqref{eq:dist} and \eqref{eq:crse} of $\sqrt{n}(\hat{\tau} - \tau)$ under these two designs, showing the advantage of rerandomization at both the sampling and treatment assignment stages.

First, 
because the Gaussian random variable $\varepsilon$ and the constrained Gaussian random variables $L_{J,a_S}$ and $L_{K,a_T}$ are all symmetric around zero, 
the asymptotic distribution of $\sqrt{n}(\hat{\tau} - \tau)$ is also symmetric around zero. 
These imply that $\hat{\tau}$ is asymptotically unbiased and consistent for the average treatment effect $\tau$ under ReSEM. 
Furthermore, any covariates, no matter observed or unobserved, are asymptotically balanced between two treatment groups, as well as between the sampled units and the overall population; 
see the Supplementary Material for details.

Second, we compare the asymptotic variance of $\hat{\tau}$ under the CRSE and ReSEM. 
Define $v_{k,a} = P(\chi^2_{k+2}\le a)/P(\chi^2_k\le a)$ for any $a > 0$ and positive integer $k$. 
Then the variances of the constrained Gaussian random variables $L_{J, a_S}$ and $L_{K, a_T}$
are, respectively, 
$
v_{J, a_S}
$
and 
$
v_{K, a_T}
$
\citep{Morgan2012}.  
From \eqref{eq:crse} and Theorem \ref{thm:dist}, the asymptotic variances of $\sqrt{n}( \hat{\tau} - \tau)$ under the CRSE and ReSEM are, respectively,  
$V_{\tau\tau}$ and
$
V_{\tau \tau}\{ 1 -(1-v_{J,a_S})R_S^2-(1- v_{K,a_T})R_T^2 \}. 
$
The following corollary summarizes the gain from ReSEM on asymptotic variance of the difference-in-means estimator.

\begin{corollary} \label{corollary:PRIASV}
Under Condition \ref{cond:fp} and ReSEM, 
compared to the CRSE, 
the percentage reduction in asymptotic variance (PRIAV) of $\sqrt{n}(\hat{\tau} - \tau)$ is 
$(1-v_{J, a_S}) R_S^2 + (1-v_{K, a_T})R_T^2 \ge 0$. 
\end{corollary}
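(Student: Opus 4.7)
The plan is to read off both asymptotic variances directly from the two distributional results already in hand and then check that the resulting ratio is non-negative. Concretely, by Remark \ref{rmk:special_dim} the CRSE benchmark variance of $\sqrt{n}(\hat\tau-\tau)$ is simply $V_{\tau\tau}$, so the whole task reduces to computing the asymptotic variance under ReSEM from Theorem \ref{thm:dist} and comparing.

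First I would invoke the mutual independence of $\varepsilon$, $L_{J,a_S}$ and $L_{K,a_T}$ (stated right before Theorem \ref{thm:dist}) to decompose the variance of the limiting distribution in \eqref{eq:dist} as a sum of three pieces. The variance of the standard Gaussian $\varepsilon$ is $1$, and the variances of the two truncated-Gaussian coordinates $L_{J,a_S}$ and $L_{K,a_T}$ are $v_{J,a_S}$ and $v_{K,a_T}$ respectively; this last identity is the standard calculation for the first coordinate of a mean-zero isotropic Gaussian conditioned on a ball, which the paper attributes to \citet{Morgan2012}. Combining these gives
\begin{equation*}
V_{\tau\tau}\bigl[(1-R_S^2-R_T^2)\cdot 1 + R_S^2\cdot v_{J,a_S} + R_T^2\cdot v_{K,a_T}\bigr] = V_{\tau\tau}\bigl\{1-(1-v_{J,a_S})R_S^2-(1-v_{K,a_T})R_T^2\bigr\}
\end{equation*}
as the asymptotic variance under ReSEM. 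Subtracting from $V_{\tau\tau}$ and dividing by $V_{\tau\tau}$ yields the claimed PRIAV expression immediately.

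For the non-negativity claim, I need two observations. The coefficients $R_S^2$ and $R_T^2$ are squared multiple correlations and hence automatically lie in $[0,1]$; alternatively this is visible from the explicit projection-variance formulas in Proposition \ref{prop:R2} together with the Cauchy--Schwarz type inequalities $S^2_{t\mid \bs X}\le S^2_t$. Second, I must verify $v_{k,a}\le 1$ for every positive integer $k$ and $a>0$. This follows from the stochastic ordering $\chi^2_{k+2}\succcurlyeq \chi^2_k$ (writing $\chi^2_{k+2}=\chi^2_k+\chi^2_2$ with independent summands), which gives $P(\chi^2_{k+2}\le a)\le P(\chi^2_k\le a)$ and hence $v_{k,a}\le 1$. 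Combining these two bounds makes both summands in $(1-v_{J,a_S})R_S^2 + (1-v_{K,a_T})R_T^2$ non-negative.

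The one step that requires some care, rather than computation, is the passage from weak convergence in \eqref{eq:dist} to convergence of variances: on its face Theorem \ref{thm:dist} is only a distributional statement, so calling the right-hand variance the ``asymptotic variance'' of $\sqrt{n}(\hat\tau-\tau)$ under ReSEM is a uniform-integrability claim. I would handle this either by appealing to the same argument the authors defer to the Supplementary Material (the finite-sample variance of $\hat\tau$ under rerandomization is known in closed form and converges to the limiting variance under Condition \ref{cond:fp}), or, if only the weak-limit interpretation is intended, by reading the PRIAV directly as a comparison of limiting-distribution variances, in which case no additional work is needed beyond the independent-sum variance computation above.
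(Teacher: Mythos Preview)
Your proposal is correct and follows essentially the same route as the paper: compute the variance of the limiting distribution in \eqref{eq:dist} by independence and the known variances $1$, $v_{J,a_S}$, $v_{K,a_T}$, compare to the CRSE benchmark $V_{\tau\tau}$, and take the ratio. Your uniform-integrability worry is moot here because the paper defines ``asymptotic variance'' (their $\Var_{\text{a}}$) to mean the variance of the weak limit, so only the second interpretation you offer is needed; your explicit stochastic-ordering argument for $v_{k,a}\le 1$ is a nice addition that the paper leaves implicit.
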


As discussed in Section \ref{sec:R2}, $R_S^2$ and $R_T^2$ measure the associations between the potential outcomes and covariates at the sampling and treatment assignment stages, respectively. 
From Corollary \ref{corollary:PRIASV}, 
rerandomization at both sampling and treatment assignment stages can help reduce the variability of the treatment effect estimator, and the amount of percentage reduction is nondecreasing and additive in the outcome-covariate associations $R^2_S$ and $R^2_T$. 
By the fact that $v_{J, \infty} = v_{K,\infty} =1$, 
the PRIAV is $(1-v_{K, a_T})R_T^2$ or $(1-v_{J, a_S}) R_S^2$ if there is no rerandomization at the sampling or treatment assignment stage. 
Therefore, the gain from the two stages of ReSEM is additive. 

\begin{rmk}\label{rmk:loss_comp_optimal}
From Corollary \ref{corollary:PRIASV}, the PRIAV is nonincreasing in the thresholds $a_S$ and $a_T$.
However, this does not mean that we should use as small thresholds as possible. 
This is because
the asymptotics in Corollary \ref{corollary:PRIASV} requires fixed positive thresholds that do not vary with the sample size. 
As discussed in Section \ref{sec:balance}, too small thresholds may result in few acceptable sampling and treatment assignment, making the asymptotic approximation inaccurate in finite samples. 
Note that 
the difference between the PRIAV from a particular choice of $(a_S, a_T)$ and the ideal optimal PRIAV that we expect with almost zero thresholds is $v_{J, a_S} R_S^2 + v_{K, a_T}R_T^2 \le \max\{ v_{J, a_S}, v_{K, a_T}\}$. 
Thus, if we choose $a_S$ and $a_T$ to be small enough such that $\max\{ v_{J, a_S}, v_{K, a_T}\}$ is less than, say, $5\%$, then the additional gain we can achieve with more balanced covariates will be small, at the cost of additional computation burden and less precise asymptotic approximation. 
Therefore, in practice, we suggest small, but not overly small, thresholds for the covariate balance criteria. 
\end{rmk}

Third, because the asymptotic distribution \eqref{eq:dist} for $\hat{\tau}$ under ReSEM is non-Gaussian in general, 
the sampling variance does not characterize the full sampling distribution. 
We further study quantile ranges of $\hat{\tau}$ under ReSEM, 
due to their close connection to the confidence intervals. 
Note that the asymptotic distribution  \eqref{eq:dist} is symmetric and unimodal around zero \citep{rerand2018}. 
We focus only on the symmetric quantile ranges, because they always have the shortest lengths given any coverage level \citep[][Theorem 9.3.2]{casella2002statistical}. 
For any $\xi \in (0,1)$, 
let $\nu_{\xi}(R_S^2,R_T^2)$ be the $\xi$th quantile of 
$
(1-R_S^2-R_T^2)^{1/2}\cdot \varepsilon  + R_S \cdot L_{J,a_S} + R_T \cdot
L_{K,a_T}, 
$
and $z_\xi = \nu_{\xi}(0,0)$ be the $\xi$th quantile of a standard Gaussian distribution. 
The following corollary demonstrates the improvement from ReSEM on reducing the lengths of quantiles ranges of the treatment effect estimator. 

\begin{corollary} \label{corollary:QR}
Under Condition \ref{cond:fp}, the $1-\alpha$ symmetric quantile range of the asymptotic distribution of $\sqrt{n}(\hat\tau - \tau)$ under ReSEM is narrower than or equal to that under the CRSE, 
and the percentage reduction in the length of asymptotic $1-\alpha$ symmetric quantile range is $1 - \nu_{1-\alpha/2}(R_S^2,R_T^2)/z_{1-\alpha/2}$, which is 
nondecreasing in both $R_S^2$ and $R_T^2$.
\end{corollary}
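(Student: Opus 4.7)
The plan is to start from the weak-convergence results already available: Theorem \ref{thm:dist} gives, under ReSEM,
\begin{align*}
\sqrt{n}(\hat\tau - \tau)\mid\text{ReSEM} \ \dot\sim\ V_{\tau\tau}^{1/2}\,W, \qquad W \;:=\; \sqrt{1-R_S^2-R_T^2}\,\varepsilon + \sqrt{R_S^2}\,L_{J,a_S} + \sqrt{R_T^2}\,L_{K,a_T},
\end{align*}
while Remark \ref{rmk:special_dim} gives $\sqrt{n}(\hat\tau - \tau)\dot\sim V_{\tau\tau}^{1/2}\varepsilon$ under the CRSE. Because $\varepsilon$, $L_{J,a_S}$, $L_{K,a_T}$ are mutually independent and each symmetric about zero, $W$ is symmetric about zero. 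Hence the $1-\alpha$ symmetric quantile range of the ReSEM limit has length $2V_{\tau\tau}^{1/2}\nu_{1-\alpha/2}(R_S^2, R_T^2)$ and that of the CRSE limit has length $2V_{\tau\tau}^{1/2}z_{1-\alpha/2}$, so the ratio immediately produces the percentage-reduction formula $1 - \nu_{1-\alpha/2}(R_S^2, R_T^2)/z_{1-\alpha/2}$.

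The remaining and substantive task is monotonicity: that $\nu_{1-\alpha/2}(R_S^2, R_T^2)$ is nonincreasing in each argument, which in particular bounds it above by $\nu_{1-\alpha/2}(0,0) = z_{1-\alpha/2}$ and so makes the reduction nonnegative. By the symmetric role of the two arguments I would focus on $R_S^2$ with $R_T^2 = t$ fixed, and aim for the stronger Birnbaum peakedness statement that $W(s,t) := \sqrt{1-s-t}\,\varepsilon + \sqrt{s}\,L_{J,a_S} + \sqrt{t}\,L_{K,a_T}$ satisfies
\begin{align*}
\Pr(|W(s_2, t)| > x) \ \le\ \Pr(|W(s_1, t)| > x) \qquad \text{for all } x \ge 0 \text{ and } 0 \le s_1 \le s_2 \le 1-t.
\end{align*}
Taking $x$ to be the $(1-\alpha/2)$-quantile of $|W(s_2, t)|$ and using symmetry of $W$ then yields monotonicity of $\nu_{1-\alpha/2}$.

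To establish this peakedness inequality I would decompose $W(s,t) = V(s) + \sqrt{t}\,L_{K,a_T}$ and treat the two summands separately. After rescaling by $\sqrt{1-t}$, the piece $V(s) = \sqrt{1-s-t}\,\varepsilon + \sqrt{s}\,L_{J,a_S}$ becomes $\sqrt{1-\tilde s}\,\varepsilon + \sqrt{\tilde s}\,L_{J,a_S}$ with $\tilde s = s/(1-t) \in [0,1]$, which is exactly the single-stage rerandomization mixture; its monotonicity in the peakedness order with respect to $\tilde s$ can be extracted from the quantile-range analysis in \citet{rerand2018}, and scaling by $\sqrt{1-t}$ preserves peakedness. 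I would then invoke Birnbaum's classical convolution theorem: if $X_1$ is more peaked than $X_2$ about zero and $Y$ is an independent symmetric unimodal random variable about zero, then $X_1 + Y$ is more peaked than $X_2 + Y$. Applying this with $X_i = V(s_i)$ and $Y = \sqrt{t}\,L_{K,a_T}$ transfers the single-stage peakedness to the full $W(s,t)$.

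The main obstacle is verifying the hypotheses of Birnbaum's theorem---most delicately the symmetric unimodality of $\sqrt{t}\,L_{K,a_T}$. Symmetry is automatic from the definition of $L_{K,a_T}$ as the first coordinate of a mean-zero Gaussian vector conditioned on a centered ball, while unimodality can be read off the explicit density $f_{L_{K,a_T}}(\ell) = \phi(\ell)\,\Pr(\chi^2_{K-1} \le a_T - \ell^2)\,\I(\ell^2 \le a_T)/\Pr(\chi^2_K \le a_T)$, in which both $\phi(\ell)$ and the truncation factor are even and nonincreasing in $|\ell|$. Once this unimodality is in place, the two ingredients combine to give monotonicity of $\nu_{1-\alpha/2}(R_S^2, R_T^2)$ in $R_S^2$; the argument for monotonicity in $R_T^2$ is symmetric, with the two stages exchanging roles.
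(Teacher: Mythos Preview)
Your proposal is correct and follows essentially the same route as the paper. The paper's own proof is a one-line appeal to \citet[][Theorem 4]{rerand2018}; the underlying mechanism there (and in the paper's supplementary Lemmas~\ref{lemma:LKa}, \ref{lemma:sum_unimodal}, \ref{lemma:sotch_R2}) is precisely the peakedness argument you reconstruct---symmetric unimodality of $L_{K,a_T}$, single-stage monotonicity from \citet{rerand2018}, and a Birnbaum-type convolution step to lift it to the two-stage mixture $W(s,t)$.
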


From Corollary \ref{corollary:QR},
the stronger the association between potential outcomes and covariates,  
the more reduction in quantile ranges we will have when using ReSEM rather than the CRSE.
Moreover, ignoring rerandomization at either the sampling or assignment stages 
will lead to efficiency loss.

\section{Covariate Adjustment for ReSEM}\label{sec:adj_resem}

\subsection{Covariate imbalance in analysis and regression adjustment}

After conducting the actual experiment, there may still remain some covariate imbalance,  especially when we are able to observe more covariates than that before the experiment. 
Carefully adjusting the covariate imbalance can further improve the efficiency for treatment effect estimation. 
Let $\bs{E}$ denote the available covariate vector for all $N$ units and $\bs{C}$ denote the available covariate vector for the sampled units in $\mathcal{S}$, at the analysis stage after conducting the experiment.
Obviously, $\bs{E} \subset \bs{C}$, but they may not be equal. 
Similar to 
Section \ref{sec:balance}, 
the covariate imbalance for $\bs{E}$ can be characterized by the difference-in-means between sampled units and the whole population,
and that for $\bs{C}$ can be characterized by the difference-in-means between treatment and control groups. 
Let 
$\bar{\bs{E}}$
and $\bar{\bs{C}}$
be the average covariates for the whole population, 
$\bar{ \bs{E}}_{\mathcal{S}}$ 
and 
$\bar{ \bs{C}}_{\mathcal{S}}$
be the average covariates for the sampled units,
and 
$\bar{\bs{C}}_1$
and 
$\bar{\bs{C}}_0$
be the average covariates for treatment and control groups. 
The covariate imbalance with respect to $\bs{E}$ and $\bs{C}$ can then be characterized by the following two difference-in-means:
$\hat{\bs{\delta}}_{\bs{E}} = \bar{ \bs{E}}_{\mathcal{S}} - \bar{\bs{E}}$ and
$\hat{\bs{\tau}}_{\bs{C}} = \bar{\bs{C}}_1 - \bar{\bs{C}}_0$, 
and a general linearly regression-adjusted estimator has the following form: 
\begin{align}\label{eq:reg}
    \hat{\tau}(\bs{\beta}, \bs{\gamma})
    & = 
    \hat{\tau} - \bs{\beta}^\top \hat{\bs{\tau}}_{\bs{C}} 
    -
    \bs{\gamma}^\top \hat{\bs{\delta}}_{\bs{E}}, 
\end{align}
where 
$\bs{\beta}$ and $\bs{\gamma}$ are the adjustment coefficients for the covariate imbalance in analysis. 
From \eqref{eq:reg}, the regression-adjusted estimator is essentially the difference-in-means estimator $\hat{\tau}$ in \eqref{eq:diff_outcome} linearly adjusted by the difference-in-means of covariates $\hat{\bs{\tau}}_{\bs{C}}$ and $\hat{\bs{\delta}}_{\bs{E}}$. 
Moreover, 
\eqref{eq:reg} generalizes the usual covariate adjustment in classical randomized experiments \citep[see, e.g.,][]{lin2013, li2019rerandomization} that focuses mainly on adjusting covariate imbalance between two treatment groups, and considers explicitly the covariate imbalance between experimental units and the whole population of interest.
When both $\bs{\beta}$ and $\bs{\gamma}$ are zero vectors, the regression-adjusted estimator $\hat{\tau}(\bs{0}, \bs{0})$ reduces to the difference-in-means $\hat{\tau}$. 
Then a natural question to ask is:  what is the optimal choice of the regression adjustment coefficients $\bs{\beta}$ and $\bs{\gamma}$? 
We will answer this question in the remaining of this section. 

\begin{rmk}
To actually implement the regression adjustment and construct 
variance estimators and confidence intervals as discussed in Section \ref{sec:large_sample_CI}, it suffices to know the population mean $\bar{\bs{E}}$ and the covariate values for sampled units.  
This can be particularly useful when we have, say, some census information on population averages, while lack exact covariate information for each individual.
\end{rmk}

\subsection{Asymptotic distribution of the regression-adjusted estimator}\label{sec:asym_reg_adj}

We study the sampling distribution of a general regression-adjusted estimator in \eqref{eq:reg} under ReSEM, 
allowing general relationship among the covariates in design and analysis (i.e., $\bs{W}, \bs{X}, \bs{E}$ and $\bs{C}$).
We first extend Condition \ref{cond:fp} to include the covariates $\bs{E}$ and $\bs{C}$ in analysis. 

\begin{condition}\label{cond:fp_analysis}
	Condition \ref{cond:fp} holds, 
	and it still holds with covariates $(\bs{W}, \bs{X})$ replaced by $(\bs{E},\bs{C})$. 
\end{condition}

For any adjustment coefficients $\bs{\beta}$ and $\bs{\gamma}$, 
we introduce adjusted potential outcomes $Y_i(1; \bs{\beta}, \bs{\gamma})$ and $Y_i(0; \bs{\beta}, \bs{\gamma})$ for all $N$ units as follows: 
\begin{align}\label{eq:adj_potential_outcome}
    Y_i(t; \bs{\beta}, \bs{\gamma}) 
    & = 
    Y_i(t) - \bs{\beta}^\top \bs{C}_i - (-1)^{t-1} r_t \bs{\gamma}^\top  (\bs{E}_i - \bar{\bs{E}}), 
    \qquad ( t=0, 1;\  1\le i \le N)
\end{align}
recalling that $r_1$ and $r_0$ are proportions of sampled units assigned to treatment and control. 
The corresponding observed adjusted outcome for sampled unit $i \in \mathcal{S}$ then simplifies to 
\begin{align}\label{eq:adj_obs_outcome}
    Y_i(\bs{\beta}, \bs{\gamma}) = T_i Y_i(1; \bs{\beta}, \bs{\gamma}) + (1-T_i) Y_i(0; \bs{\beta}, \bs{\gamma}) = 
    \begin{cases}
    Y_i - \bs{\beta}^\top \bs{C}_i - r_1 \bs{\gamma}^\top  (\bs{E}_i - \bar{\bs{E}}), & \text{if } T_i = 1, \\
    Y_i - \bs{\beta}^\top \bs{C}_i + r_0 \bs{\gamma}^\top (\bs{E}_i - \bar{\bs{E}}), & \text{if } T_i = 0.
    \end{cases}
\end{align}
We can verify that
(i) the average treatment effect for adjusted potential outcomes is the same as that for original potential outcomes,  
and 
(ii)
the difference-in-means estimator based on adjusted observed outcomes
is equivalently the regression-adjusted estimator in \eqref{eq:reg}. 
Thus, 
from Theorem \ref{thm:dist}, we can immediately derive the asymptotic distribution of the regression-adjusted estimator. 
By the same logic as \eqref{eq:R2} and \eqref{eq:V_tautau}, 
the variance of $\sqrt{n} \{ \hat{\tau}(\bs{\beta}, \bs{\gamma}) - \tau \} $ under the CRSE is
\begin{align}\label{eq:V_tau_reg}
    V_{\tau \tau}(\bs{\beta}, \bs{\gamma}) = r_1^{-1}S_1^2(\bs{\beta}, \bs{\gamma}) + r_0^{-1}S_0^2(\bs{\beta}, \bs{\gamma}) -f S_{\tau}^2(\bs{\beta}, \bs{\gamma}),
\end{align}
where $S_1^2(\bs{\beta}, \bs{\gamma}), S_0^2(\bs{\beta}, \bs{\gamma})$ and $S_{\tau}^2(\bs{\beta}, \bs{\gamma})$ denote the finite population variances of adjusted potential outcomes and individual treatment effects, 
and the squared multiple correlations between the regression-adjusted estimator $\hat{\tau}(\bs{\beta}, \bs{\gamma})$ and the difference-in-means of covariates $\hat{\bs{\delta}}_{\bs{E}}$ and $\hat{\bs{\tau}}_{\bs{C}}$ 
are 
\begin{equation}\label{eq:R2_reg}
	R_S^2(\bs{\beta}, \bs{\gamma}) = 
	\frac{(1-f)S^2_{\tau \mid  \bs{W}}(\bs{\beta}, \bs{\gamma})}{V_{\tau \tau}(\bs{\beta}, \bs{\gamma})}, 
	\ \ 
	R_T^2(\bs{\beta}, \bs{\gamma}) = 
	\frac{r_1^{-1}S^2_{1 \mid  \bs{X}}(\bs{\beta}, \bs{\gamma})+r_0^{-1}S^2_{0 \mid  \bs{X}}(\bs{\beta}, \bs{\gamma}) -S^2_{\tau \mid  \bs{X}}(\bs{\beta}, \bs{\gamma})}{V_{\tau \tau}(\bs{\beta}, \bs{\gamma})}, 
\end{equation}
where $S^2_{1 \mid  \bs{X}}(\bs{\beta}, \bs{\gamma})$, $S^2_{0 \mid  \bs{X}}(\bs{\beta}, \bs{\gamma})$,  $S^2_{\tau \mid  \bs{X}}(\bs{\beta}, \bs{\gamma})$ and $S^2_{\tau \mid  \bs{W}}(\bs{\beta}, \bs{\gamma})$ are the finite population variances of the linear projections of adjusted potential outcomes and individual effect on covariates. 
Recall that $\varepsilon\sim \mathcal{N}(0,1)$, and $L_{J,a_S}$ and $L_{K,a_T}$ are two constrained Gaussian random variables 
as in 
Theorem \ref{thm:dist}.

\begin{theorem}\label{thm:dist_reg_general}
Under Condition \ref{cond:fp_analysis} and ReSEM, 
\begin{align}\label{eq:dist_reg_general}
& \quad \ \sqrt{n} \left\{ \hat{\tau}(\bs{\beta}, \bs{\gamma}) - \tau \right\} \mid  \text{ReSEM}
\nonumber
\\
& \ \dot\sim\  
V_{\tau\tau}^{1/2}(\bs{\beta}, \bs{\gamma}) \Big(  \sqrt{1-R_S^2(\bs{\beta}, \bs{\gamma}) - R_T^2(\bs{\beta}, \bs{\gamma})} \cdot \varepsilon
+ \sqrt{R_S^2(\bs{\beta}, \bs{\gamma})} \cdot L_{J,a_S}
+
\sqrt{R_T^2(\bs{\beta}, \bs{\gamma})} \cdot L_{K,a_T} 
\Big). 
\end{align}
\end{theorem}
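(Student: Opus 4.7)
The plan is to reduce Theorem \ref{thm:dist_reg_general} to Theorem \ref{thm:dist} by viewing $\hat{\tau}(\bs{\beta},\bs{\gamma})$ as the \emph{ordinary difference-in-means estimator built from the adjusted potential outcomes} $Y_i(t;\bs{\beta},\bs{\gamma})$ in \eqref{eq:adj_potential_outcome}. Since ReSEM uses only $\bs{W}$ and $\bs{X}$ in its two rerandomization steps, the design is unchanged by the adjustment; only the outcomes are relabeled. So the goal is (i) to verify the two facts announced just above the theorem, (ii) to check that the adjusted outcomes still satisfy the regularity needed for Theorem \ref{thm:dist}, and (iii) to read off \eqref{eq:dist_reg_general} from that theorem applied to the adjusted outcomes.

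For step (i), I would compute $\tau_i(\bs{\beta},\bs{\gamma}) = Y_i(1;\bs{\beta},\bs{\gamma}) - Y_i(0;\bs{\beta},\bs{\gamma}) = \tau_i - (r_1+r_0)\bs{\gamma}^\top(\bs{E}_i - \bar{\bs{E}}) = \tau_i - \bs{\gamma}^\top(\bs{E}_i - \bar{\bs{E}})$, whose finite population mean is $\tau$ because $\bs{E}_i - \bar{\bs{E}}$ averages to zero. For the estimator equivalence, substituting \eqref{eq:adj_obs_outcome} into the difference of sample means over treated and control sampled units gives
\begin{align*}
\frac{1}{n_1}\sum_{i \in \mathcal{S}, T_i=1} Y_i(\bs{\beta},\bs{\gamma}) - \frac{1}{n_0}\sum_{i \in \mathcal{S}, T_i=0} Y_i(\bs{\beta},\bs{\gamma})
= \hat{\tau} - \bs{\beta}^\top \hat{\bs{\tau}}_{\bs{C}} - \bs{\gamma}^\top\bigl(r_1(\bar{\bs{E}}_1 - \bar{\bs{E}}) + r_0(\bar{\bs{E}}_0 - \bar{\bs{E}})\bigr),
\end{align*}
and using $\bar{\bs{E}}_{\mathcal{S}} = r_1 \bar{\bs{E}}_1 + r_0 \bar{\bs{E}}_0$ collapses the last term to $\bs{\gamma}^\top \hat{\bs{\delta}}_{\bs{E}}$, matching \eqref{eq:reg}.

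For step (ii), the adjusted outcomes are fixed affine combinations of $Y_i(t)$, $\bs{C}_i$, and $\bs{E}_i$, so their finite population variances and covariances—and in particular $S^2_t(\bs{\beta},\bs{\gamma})$, $S^2_\tau(\bs{\beta},\bs{\gamma})$, and the projection variances $S^2_{t\mid \bs{X}}(\bs{\beta},\bs{\gamma})$, $S^2_{\tau\mid \bs{W}}(\bs{\beta},\bs{\gamma})$, $S^2_{\tau\mid \bs{X}}(\bs{\beta},\bs{\gamma})$—have limits under Condition \ref{cond:fp_analysis}, while the maximum-squared-deviation condition transfers by the elementary bound $(a+b+c)^2 \le 3(a^2+b^2+c^2)$. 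Then step (iii) is a direct invocation of Theorem \ref{thm:dist} with outcomes $Y_i(t;\bs{\beta},\bs{\gamma})$ and unchanged sampling covariate $\bs{W}$ and assignment covariate $\bs{X}$; the resulting expression has exactly the form \eqref{eq:dist_reg_general}, because the recipe in \eqref{eq:V_tautau}–\eqref{eq:R2} applied to the adjusted outcomes is precisely the definition of $V_{\tau\tau}(\bs{\beta},\bs{\gamma})$, $R_S^2(\bs{\beta},\bs{\gamma})$, and $R_T^2(\bs{\beta},\bs{\gamma})$ in \eqref{eq:V_tau_reg}--\eqref{eq:R2_reg}.

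There is no real obstacle: the whole argument is a bookkeeping reduction, and the authors' choice of definitions \eqref{eq:adj_potential_outcome}--\eqref{eq:R2_reg} is engineered exactly so this reduction goes through mechanically. The only point deserving care is the asymmetry between treated and control in \eqref{eq:adj_potential_outcome} (the factors $r_1$ on treated side and $-r_0$ on control side), which is what makes $\bar{\bs{E}}_{\mathcal{S}}$ show up with coefficient $1$ in the algebra of step (i); this is the single nontrivial cancellation in the entire proof.
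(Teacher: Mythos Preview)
Your proposal is correct and follows essentially the same route as the paper: show that the difference-in-means of the adjusted outcomes in \eqref{eq:adj_obs_outcome} equals $\hat{\tau}(\bs{\beta},\bs{\gamma})$, verify that the adjusted potential outcomes together with $\bs{W},\bs{X}$ still satisfy Condition~\ref{cond:fp}, and then invoke Theorem~\ref{thm:dist}. The algebraic verification and the transfer of regularity conditions you sketch match the paper's argument step for step.
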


Theorem \ref{thm:dist_reg_general}, although provides the asymptotic distribution of the regression-adjusted estimator under ReSEM, does not characterize the role of adjustment coefficients in an obvious way. 
The following corollary gives an equivalent form of the asymptotic distribution in \eqref{eq:dist_reg_general}, 
which indicates that the asymptotic distribution of $\hat{\tau}(\bs{\beta}, \bs{\gamma})$ depends on the adjustment coefficients $\bs{\beta}$ and $\bs{\gamma}$ in an additive way. 
In other words, the dependence on $\bs{\beta}$ and $\bs{\gamma}$ is separable.

\begin{corollary}\label{cor:dist_reg_general_equ}
Under Condition \ref{cond:fp_analysis} and ReSEM, 
\begin{align}\label{eq:dist_reg_general_equ}
\sqrt{n} \left\{ \hat{\tau}(\bs{\beta}, \bs{\gamma}) - \tau \right\} \mid  \text{ReSEM}   
& 
\ \dot\sim \  
\sqrt{
V_{\tau\tau}(\bs{0}, \bs{\gamma})\{1 - R_S^2(\bs{0}, \bs{\gamma})\}
+ 
V_{\tau\tau}(\bs{\beta}, \bs{0})\{1 - R_T^2(\bs{\beta}, \bs{0}) \}
-V_{\tau\tau}
} \cdot \varepsilon
\nonumber
\\
& \quad \ + \sqrt{ V_{\tau\tau}(\bs{0}, \bs{\gamma}) R_S^2(\bs{0}, \bs{\gamma})} \cdot L_{J,a_S}
+
\sqrt{V_{\tau\tau}(\bs{\beta}, \bs{0})R_T^2(\bs{\beta}, \bs{0})} \cdot L_{K,a_T}. 
\end{align}
\end{corollary}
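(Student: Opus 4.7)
\noindent\textbf{Proof proposal for Corollary \ref{cor:dist_reg_general_equ}.}
The plan is to start from the conclusion of Theorem \ref{thm:dist_reg_general} and show that the three coefficients attached to $\varepsilon$, $L_{J,a_S}$ and $L_{K,a_T}$ in \eqref{eq:dist_reg_general} coincide with those in \eqref{eq:dist_reg_general_equ}. Concretely, since the right-hand side of \eqref{eq:dist_reg_general} can be rewritten as the linear combination $\sqrt{A} \cdot \varepsilon + \sqrt{B_S} \cdot L_{J,a_S} + \sqrt{B_T} \cdot L_{K,a_T}$ with $B_S = V_{\tau\tau}(\bs{\beta},\bs{\gamma}) R_S^2(\bs{\beta},\bs{\gamma})$, $B_T = V_{\tau\tau}(\bs{\beta},\bs{\gamma}) R_T^2(\bs{\beta},\bs{\gamma})$, and $A = V_{\tau\tau}(\bs{\beta},\bs{\gamma}) - B_S - B_T$, it suffices to establish the three identities $B_S = V_{\tau\tau}(\bs{0},\bs{\gamma}) R_S^2(\bs{0},\bs{\gamma})$, $B_T = V_{\tau\tau}(\bs{\beta},\bs{0}) R_T^2(\bs{\beta},\bs{0})$, and $V_{\tau\tau}(\bs{\beta},\bs{\gamma}) = V_{\tau\tau}(\bs{\beta},\bs{0}) + V_{\tau\tau}(\bs{0},\bs{\gamma}) - V_{\tau\tau}$. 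Given these three, the $\varepsilon$-coefficient in \eqref{eq:dist_reg_general_equ} follows by subtraction.

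The crucial structural observation, which I would state first, is that from \eqref{eq:adj_potential_outcome} the adjusted individual treatment effect is
$\tau_i(\bs{\beta},\bs{\gamma}) = Y_i(1;\bs{\beta},\bs{\gamma}) - Y_i(0;\bs{\beta},\bs{\gamma}) = \tau_i - \bs{\gamma}^\top(\bs{E}_i - \bar{\bs{E}}),$
so that $\bs{\beta}$ drops out entirely: $S^2_\tau(\bs{\beta},\bs{\gamma})$, $S^2_{\tau\mid\bs{W}}(\bs{\beta},\bs{\gamma})$, and $S^2_{\tau\mid\bs{X}}(\bs{\beta},\bs{\gamma})$ depend only on $\bs{\gamma}$. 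The first identity is then immediate from \eqref{eq:R2_reg}, since $B_S = (1-f) S^2_{\tau\mid\bs{W}}(\bs{\beta},\bs{\gamma}) = (1-f) S^2_{\tau\mid\bs{W}}(\bs{0},\bs{\gamma})$.

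For the second identity, I would write $Y_i(t;\bs{\beta},\bs{\gamma}) = Y_i(t;\bs{\beta},\bs{0}) - (-1)^{t-1} r_t \bs{\gamma}^\top(\bs{E}_i - \bar{\bs{E}})$ and expand each $S^2_{t\mid\bs{X}}(\bs{\beta},\bs{\gamma}) = \bs{S}_{t,\bs{X}}(\bs{\beta},\bs{\gamma})(\bs{S}^2_{\bs{X}})^{-1}\bs{S}_{\bs{X},t}(\bs{\beta},\bs{\gamma})$ using the bilinearity of the finite-population covariance. Writing $\bs{c} = \bs{\gamma}^\top \bs{S}_{\bs{E},\bs{X}}$ for brevity, the quadratic expansion of $r_1^{-1} S^2_{1\mid\bs{X}}(\bs{\beta},\bs{\gamma}) + r_0^{-1} S^2_{0\mid\bs{X}}(\bs{\beta},\bs{\gamma})$ produces a pure quadratic term $(r_1+r_0)\bs{c}(\bs{S}^2_{\bs{X}})^{-1}\bs{c}^\top = \bs{c}(\bs{S}^2_{\bs{X}})^{-1}\bs{c}^\top$, a linear correction term $-2 \bs{S}_{\tau,\bs{X}}(\bs{\beta},\bs{0})(\bs{S}^2_{\bs{X}})^{-1}\bs{c}^\top$, and the $\bs{\gamma}=\bs{0}$ value; the same quadratic and linear terms appear in $S^2_{\tau\mid\bs{X}}(\bs{\beta},\bs{\gamma})$ since $\bs{S}_{\tau,\bs{X}}(\bs{\beta},\bs{\gamma}) = \bs{S}_{\tau,\bs{X}}(\bs{\beta},\bs{0}) - \bs{c}$. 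Subtracting therefore eliminates all $\bs{c}$-dependence and yields $B_T = r_1^{-1} S^2_{1\mid\bs{X}}(\bs{\beta},\bs{0}) + r_0^{-1} S^2_{0\mid\bs{X}}(\bs{\beta},\bs{0}) - S^2_{\tau\mid\bs{X}}(\bs{\beta},\bs{0}) = V_{\tau\tau}(\bs{\beta},\bs{0}) R_T^2(\bs{\beta},\bs{0})$.

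For the third identity, a parallel but simpler computation applied to $V_{\tau\tau}(\bs{\beta},\bs{\gamma})$ in \eqref{eq:V_tau_reg} using $S^2_t(\bs{\beta},\bs{\gamma}) = S^2_t(\bs{\beta},\bs{0}) + r_t^2 \bs{\gamma}^\top \bs{S}^2_{\bs{E}}\bs{\gamma} - 2(-1)^{t-1} r_t \bs{\gamma}^\top \bs{S}_{\bs{E},t}(\bs{\beta},\bs{0})$ and the $\bs{\beta}$-independence of $S^2_\tau(\bs{\beta},\bs{\gamma})$ yields $V_{\tau\tau}(\bs{\beta},\bs{\gamma}) - V_{\tau\tau}(\bs{\beta},\bs{0}) = (1-f)\{\bs{\gamma}^\top \bs{S}^2_{\bs{E}}\bs{\gamma} - 2\bs{\gamma}^\top \bs{S}_{\bs{E},\tau}\}$, and the same difference equals $V_{\tau\tau}(\bs{0},\bs{\gamma}) - V_{\tau\tau}$. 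Combining the three identities gives \eqref{eq:dist_reg_general_equ}. The main obstacle I anticipate is bookkeeping in the step for $B_T$: the projections onto $\bs{X}$ mix contributions from $\bs{E}$ (for which no containment is assumed), and it is only after the full expansion that one sees the $(-1)^{t-1} r_t$ factors in the adjusted outcomes produce exactly the cancellation that makes the cross terms between $\bs{\beta}$ and $\bs{\gamma}$ vanish.
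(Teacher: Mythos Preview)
Your proposal is correct and establishes the same three identities as the paper's proof, but the route is different. The paper works at the level of the CRSE sampling distribution: it interprets $V_{\tau\tau}(\bs{\beta},\bs{\gamma})R_S^2(\bs{\beta},\bs{\gamma})$ and $V_{\tau\tau}(\bs{\beta},\bs{\gamma})R_T^2(\bs{\beta},\bs{\gamma})$ as $n$ times the explained-variance of $\hat{\tau}(\bs{\beta},\bs{\gamma})$ by $\hat{\bs{\delta}}_{\bs{W}}$ and $\hat{\bs{\tau}}_{\bs{X}}$ respectively, and then invokes the block-orthogonality $\Cov(\hat{\bs{\tau}}_{\bs{C}},\hat{\bs{\delta}}_{\bs{E}})=\bs{0}$ (and likewise for $\bs{X},\bs{W}$) from Lemma~\ref{lemma:cov_crse} to drop the irrelevant adjustment in one line. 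The variance decomposition is obtained similarly by writing $\hat{\tau}(\bs{\beta},\bs{\gamma}) = \hat{\tau}(\bs{\beta},\bs{0}) + \hat{\tau}(\bs{0},\bs{\gamma}) - \hat{\tau}$ and using that same orthogonality. By contrast, you work entirely at the finite-population level with the adjusted potential outcomes, using the structural fact that $\tau_i(\bs{\beta},\bs{\gamma})$ is $\bs{\beta}$-free together with an explicit bilinear expansion to exhibit the cancellations. Your argument for $B_S$ is actually shorter than the paper's; for $B_T$ and the variance identity the paper's covariance argument is cleaner and makes transparent \emph{why} the separation occurs (the two difference-in-means of covariates are uncorrelated under CRSE), whereas your expansion verifies the same identities by direct computation without needing the sampling interpretation. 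Both are valid; the paper's is more conceptual, yours more self-contained.
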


Corollary \ref{cor:dist_reg_general_equ} is intuitive given that both $\bs{\gamma}$ and 
$L_{J, a_S}$ relate to 
covariate balance between sampled and all units, 
while both $\bs{\beta}$ and 
$L_{K,a_T}$ relate to 
covariate balance between treated and control units. 
The equivalent expression in \eqref{eq:dist_reg_general_equ} also shows that the study of optimal choice of $\bs{\beta}$ and $\bs{\gamma}$ can be separated. 
However, in general scenario where there lacks communication between designer and analyzer, 
studying the optimal regression adjustment can be quite challenging \citep{li2019rerandomization}.
In particular, it 
may not be achievable based on the analyzer's observed data.  
Therefore, 
in the next few subsections, 
we will focus on the special case where the analyzer can observe the covariate information at the sampling stage (i.e., $\bs{W} \subset \bs{E}$) or treatment assignment stage (i.e., $\bs{X} \subset \bs{C}$).

\subsection{Squared multiple correlations and finite population least squares coefficients}\label{sec:R2_proj_coef}

Before going to details for optimal regression adjustment, we first introduce some finite population quantities.  
Recall that $R_S^2$ and $R_T^2$
in \eqref{eq:R2} characterize the associations between potential outcomes and covariates at the sampling and assignment stages, respectively. 
We analogously define $R_E^2$ and $R_C^2$ to denote the associations between potential outcomes and covariates at the analysis stage: 
\begin{align}\label{eq:R_A2}
    R_E^2 
    = \frac{(1-f)S^2_{\tau \mid  \bs{E}}}{r_1^{-1}S^2_1 +r_0^{-1}S^2_0 -fS^2_\tau},
    \quad 
	\text{and}
	\quad
	R_C^2  = 
	\frac{r_1^{-1}S^2_{1 \mid  \bs{C}}+r_0^{-1}S^2_{0 \mid  \bs{C}} -S^2_{\tau \mid  \bs{C}}}{r_1^{-1}S^2_1 +r_0^{-1}S^2_0 -fS^2_\tau},
\end{align}
where $S^2_{1 \mid  \bs{C}}, S^2_{0 \mid  \bs{C}}$, $S^2_{\tau \mid  \bs{C}}$ and $S^2_{\tau \mid  \bs{E}}$ are the finite population variances of linear projections of potential outcomes and individual treatment effect on covariates $\bs{C}$ and $\bs{E}$. 
For $t=0,1$, define 
$\tilde{\bs{\beta}}_t$ and $\tilde{\bs{\gamma}}_t$ as the finite population 
linear projection 
coefficients of potential outcome $Y(t)$ on covariates $\bs{C}$ and $\bs{E}$, respectively, i.e., 
$
\tilde{\bs{\beta}}_t = (  \bs{S}^2_{\bs{C}} )^{-1} \bs{S}_{\bs{C}, t}
$
and 
$\tilde{\bs{\gamma}}_t = (  \bs{S}^2_{\bs{E}} )^{-1} \bs{S}_{\bs{E}, t}$, 
where 
$\bs{S}_{\bs{C}}^2$ and $\bs{S}_{\bs{E}}^2$ are the finite population covariance matrices of the covariates $\bs{C}$ and $\bs{E}$, 
and 
$\bs{S}_{\bs{C},t}$ and $\bs{S}_{\bs{E},t}$ are the finite population covariances between the covariates $\bs{C}$ and $\bs{E}$ and the potential outcome $Y(t)$. 
We further define 
$
\tilde{\bs{\beta}} \equiv r_0 \tilde{\bs{\beta}}_1 + r_1 \tilde{\bs{\beta}}_0
$
and 
$
\tilde{\bs{\gamma}} \equiv \tilde{\bs{\gamma}}_1 - \tilde{\bs{\gamma}}_0, 
$
both of which are linear combinations of the finite population 
linear projection 
coefficients. 
As demonstrated in the Supplementary Material, 
$\tilde{\bs{\beta}}$ 
and 
$\tilde{\bs{\gamma}}$ are indeed the 
linear projection 
coefficients of $\hat{\tau}$ on $\hat{\bs{\tau}}_{\bs{C}}$ and $\hat{\bs{\delta}}_{\bs{E}}$, respectively, under the CRSE.

\subsection{Optimal adjustment when there is more covariate information in analysis}\label{sec:optimal_ana_more}

In this subsection, we focus on the case in which the covariate information in analysis contains those in sampling or treatment assignment, in the sense that 
$\bs{W} \subset \bs{E}$ or $\bs{X} \subset \bs{C}$.
In this case, 
the asymptotic distribution \eqref{eq:dist_reg_general_equ} for a general regression-adjusted estimator $\hat{\tau}(\bs{\beta}, \bs{\gamma})$ can be further simplified. 
Define $\bs{S}^2_{\bs{E}\setminus\bs{W}} \equiv \bs{S}^2_{\bs{E}} - \bs{S}^2_{\bs{E}\mid\bs{W}}$ as the finite population variance of the residual from the linear projection of $\bs{E}$ on $\bs{W}$, and analogously $\bs{S}^2_{\bs{C}\setminus\bs{X}} \equiv \bs{S}^2_{\bs{C}} - \bs{S}^2_{\bs{C}\mid\bs{X}}$. 
When $\bs{W} \subset \bs{E}$, the two terms involving $\bs{\gamma}$ in the coefficients of $\varepsilon$ and $L_{J, a_S}$ in  \eqref{eq:dist_reg_general_equ} simplify to, respectively, 
\begin{align}\label{eq:simp_exp_reg_sampling}
    V_{\tau\tau}(\bs{0}, \bs{\gamma})\{1 - R_S^2(\bs{0}, \bs{\gamma})\}
    & = 
    V_{\tau\tau} (1-R_E^2) + 
    (1-f) (\bs{\gamma} - \tilde{\bs{\gamma}})^\top \bs{S}^2_{\bs{E} \setminus \bs{W}} (\bs{\gamma} - \tilde{\bs{\gamma}}), 
    \nonumber
    \\
    V_{\tau\tau}(\bs{0}, \bs{\gamma})R_S^2(\bs{0}, \bs{\gamma})
    & = (1-f) (\bs{\gamma} - \tilde{\bs{\gamma}})^\top \bs{S}^2_{\bs{E} \mid \bs{W}} (\bs{\gamma} - \tilde{\bs{\gamma}}), 
\end{align}
both of which are minimized at $\bs{\gamma} = \tilde{\bs{\gamma}}$. 
When $\bs{X} \subset \bs{C}$, the two terms involving $\bs{\beta}$ in the coefficients of $\varepsilon$ and $L_{K, a_T}$ in \eqref{eq:dist_reg_general_equ} simplify to, respectively, 
\begin{align}\label{eq:simp_exp_reg_assignment}
    V_{\tau\tau}(\bs{\beta}, \bs{0})\{1 - R_T^2(\bs{\beta}, \bs{0}) \}
    & = 
    V_{\tau\tau} (1-R_C^2) +  (r_1 r_0)^{-1} (\bs{\beta} - \tilde{\bs{\beta}})^\top \bs{S}^2_{\bs{C} \setminus \bs{X}} (\bs{\beta} - \tilde{\bs{\beta}}), 
    \nonumber
    \\
    V_{\tau\tau}(\bs{\beta}, \bs{0}) R_T^2(\bs{\beta}, \bs{0})
    & = 
    (r_1 r_0)^{-1} (\bs{\beta} - \tilde{\bs{\beta}})^\top \bs{S}^2_{\bs{C} \mid \bs{X}} (\bs{\beta} - \tilde{\bs{\beta}}),
\end{align}
both of which are minimized at $\bs{\beta} = \tilde{\bs{\beta}}$. 
These 
immediately 
imply the optimal choice of $\bs{\gamma}$ or $\bs{\beta}$ when $\bs{W} \subset \bs{E}$ or  $\bs{X} \subset \bs{C}$, 
as summarized in the theorem below. 
Specifically, we say 
an estimator is 
$\mathcal{S}$-optimal 
among a class of estimators 
if it has the shortest asymptotic $1-\alpha$ symmetric quantile range for all $\alpha\in (0, 1)$, where we use $\mathcal{S}$ to refer to sampling precision. 

\begin{theorem}\label{thm:optimal}
    Under ReSEM and Condition \ref{cond:fp_analysis}, 
    among all regression-adjusted estimators of form \eqref{eq:reg}, 
    (i) if $\bs{W} \subset \bs{E}$, then 
    $\hat{\tau}(\bs{\beta}, \tilde{\bs{\gamma}})$ is $\mathcal{S}$-optimal 
    for any given $\bs{\beta}$; 
    (ii) if $\bs{X} \subset \bs{C}$, then 
    $\hat{\tau}(\tilde{\bs{\beta}}, \bs{\gamma})$ is $\mathcal{S}$-optimal 
    for any given $\bs{\gamma}$; 
    (iii) if $\bs{W} \subset \bs{E}$ and $\bs{X} \subset \bs{C}$, then $\hat{\tau}(\tilde{\bs{\beta}}, \tilde{\bs{\gamma}})$ is $\mathcal{S}$-optimal 
    with the following asymptotic distribution: 
	\begin{align}\label{eq:reg_opt}
		\sqrt{n} \big\{ \hat{\tau}( \tilde{\bs{\beta}}, \tilde{\bs{\gamma}}) - \tau \big\} \mid \text{ReSEM} 
		\ \dot\sim \ 
		V_{\tau\tau}^{1/2} 
		\sqrt{ 1 - R_E^2 - R_C^2 }
		\cdot \varepsilon.
	\end{align}
\end{theorem}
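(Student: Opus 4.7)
The plan is to leverage Corollary \ref{cor:dist_reg_general_equ}, together with the quadratic-form identities in \eqref{eq:simp_exp_reg_sampling}--\eqref{eq:simp_exp_reg_assignment}, to show that the three coefficients appearing in the limiting distribution of $\sqrt{n}\{\hat{\tau}(\bs{\beta},\bs{\gamma})-\tau\}$ are simultaneously minimized at $(\tilde{\bs{\beta}},\tilde{\bs{\gamma}})$, and then to invoke a peakedness argument to translate this into the minimization of every symmetric quantile range.

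First I would rewrite Corollary \ref{cor:dist_reg_general_equ} compactly as $\sqrt{n}\{\hat{\tau}(\bs{\beta},\bs{\gamma})-\tau\}\mid\text{ReSEM}\dot\sim c_\varepsilon(\bs{\beta},\bs{\gamma})\,\varepsilon + c_J(\bs{\gamma})\,L_{J,a_S} + c_K(\bs{\beta})\,L_{K,a_T}$, where $c_J(\bs{\gamma})^2=V_{\tau\tau}(\bs{0},\bs{\gamma})R_S^2(\bs{0},\bs{\gamma})$, $c_K(\bs{\beta})^2=V_{\tau\tau}(\bs{\beta},\bs{0})R_T^2(\bs{\beta},\bs{0})$, and $c_\varepsilon(\bs{\beta},\bs{\gamma})^2=V_{\tau\tau}(\bs{0},\bs{\gamma})\{1-R_S^2(\bs{0},\bs{\gamma})\}+V_{\tau\tau}(\bs{\beta},\bs{0})\{1-R_T^2(\bs{\beta},\bs{0})\}-V_{\tau\tau}$. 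The crucial structural observation is separability: $c_J$ depends only on $\bs{\gamma}$, $c_K$ depends only on $\bs{\beta}$, and $c_\varepsilon^2$ splits additively into a $\bs{\gamma}$-piece and a $\bs{\beta}$-piece plus a constant.

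Second, under the assumption $\bs{W}\subset\bs{E}$, the identities in \eqref{eq:simp_exp_reg_sampling} express each of the two $\bs{\gamma}$-dependent quantities as a constant plus a positive semidefinite quadratic form in $\bs{\gamma}-\tilde{\bs{\gamma}}$, so both are minimized at $\bs{\gamma}=\tilde{\bs{\gamma}}$. Analogously, when $\bs{X}\subset\bs{C}$, \eqref{eq:simp_exp_reg_assignment} shows the two $\bs{\beta}$-dependent quantities are simultaneously minimized at $\bs{\beta}=\tilde{\bs{\beta}}$. Consequently, in case (i), fixing any $\bs{\beta}$ and setting $\bs{\gamma}=\tilde{\bs{\gamma}}$ simultaneously minimizes $c_\varepsilon$ and $c_J$ over $\bs{\gamma}$ while leaving $c_K$ unchanged; case (ii) is symmetric; and in case (iii) all three coefficients jointly attain their minima at $(\tilde{\bs{\beta}},\tilde{\bs{\gamma}})$, where $c_J=c_K=0$ and, by direct substitution using \eqref{eq:simp_exp_reg_sampling}--\eqref{eq:simp_exp_reg_assignment}, $c_\varepsilon^2=V_{\tau\tau}(1-R_E^2)+V_{\tau\tau}(1-R_C^2)-V_{\tau\tau}=V_{\tau\tau}(1-R_E^2-R_C^2)$, yielding the representation \eqref{eq:reg_opt}.

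The remaining, and main, technical step is to convert ``all three coefficients are minimized'' into ``every symmetric $1-\alpha$ quantile range is minimized,'' which is the definition of $\mathcal{S}$-optimality. Since $\varepsilon$, $L_{J,a_S}$ and $L_{K,a_T}$ are independent and each has a density that is symmetric about zero and unimodal (the constrained Gaussians $L_{J,a_S}$ and $L_{K,a_T}$ are the first coordinates of standard Gaussians conditioned on lying in a centered Euclidean ball, whose marginal densities are symmetric and log-concave-on-their-support), rescaling any one of them by a smaller nonnegative scalar yields a convolution that is more peaked around zero in the sense of Anderson. The hard part is making this peakedness step rigorous: I would appeal to the symmetric unimodality of the convolution established in \citet{rerand2018} (which already handles the $\varepsilon+L_{K,a_T}$ case underlying Corollary \ref{corollary:QR}) and extend it by one more convolution with $L_{J,a_S}$, so that decreasing any of $c_\varepsilon,c_J,c_K$ weakly shrinks every symmetric quantile range. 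Together with the coefficient-minimization argument above, this delivers the $\mathcal{S}$-optimality conclusions (i)--(iii).
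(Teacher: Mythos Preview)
Your proposal is correct and follows essentially the same route as the paper: decompose the limiting law via Corollary~\ref{cor:dist_reg_general_equ}, use the quadratic-form identities \eqref{eq:simp_exp_reg_sampling}--\eqref{eq:simp_exp_reg_assignment} to minimize each coefficient separately at $(\tilde{\bs{\beta}},\tilde{\bs{\gamma}})$, and then apply a peakedness/monotonicity lemma for symmetric unimodal convolutions. The paper packages that last step into a standalone lemma (showing that if $0\le b_k\le \bar b_k$ for $k=0,1,2$ then $|b_0\varepsilon+b_1 L_{J,a_S}+b_2 L_{K,a_T}|$ is stochastically smaller than the barred version, proved exactly by iterating the one-component peakedness argument you sketch), so your plan to ``extend by one more convolution with $L_{J,a_S}$'' is precisely what is done there.
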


For conciseness, we relegate the expressions for the asymptotic distributions of 
$\hat{\tau}(\bs{\beta}, \tilde{\bs{\gamma}})$ and $\hat{\tau}(\tilde{\bs{\beta}}, \bs{\gamma})$ to the Supplementary Material. 
From Theorem \ref{thm:optimal}, 
when we have more covariate information for measuring imbalance between treatment and control groups or between sampled units and the whole population, 
the corresponding optimal adjustment is achieved at the linear projection coefficient $\tilde{\bs{\beta}}$ or $\tilde{\bs{\gamma}}$. 
When the analyzer has all the covariate information in design, 
$\hat{\tau}( \tilde{\bs{\beta}}, \tilde{\bs{\gamma}})$ achieves the optimal precision.  
Note that both $\tilde{\bs{\beta}}$ and $\tilde{\bs{\gamma}}$ depend on all the potential outcomes and are thus generally unknown in practice. 
As discussed shortly in Section \ref{sec:est_beta_gamma_tilde}, 
we can consistently estimate both coefficients and still obtain the optimal efficiency in \eqref{eq:reg_opt}. 

\begin{rmk}\label{rmk:special_covadj}
By the same logic as Remark \ref{rmk:special_dim}, 
Theorem \ref{thm:optimal} can imply the optimal regression adjustment for the CRSE, ReM and rejective sampling, with details relegated to the Supplementary Material. 
For example, under the CRSE, 
the $\mathcal{S}$-optimal regression-adjusted estimator is attainable at $(\bs{\beta}, \bs{\gamma}) = (\tilde{\bs{\beta}}, \tilde{\bs{\gamma}})$, with the 
same 
asymptotic distribution
as \eqref{eq:reg_opt}. 
\end{rmk}

\subsection{Improvements from design and  analysis}\label{sec:improve_reg}

In general scenarios without any restriction on the relation between covariates in design and analysis, 
the analyzer may not be able to conduct optimal adjustment, 
and 
the regression-adjusted estimator $\hat{\tau}( \tilde{\bs{\beta}}, \tilde{\bs{\gamma}})$ may be less precise than the unadjusted $\hat{\tau}$ \citep[][]{li2019rerandomization}. 
However, ReSEM never hurts and can generally improve the precision of the regression-adjusted estimator, regardless of the choice of adjustment coefficients.
Thus, ReSEM should always be preferred in practice.
In the following discussion, 
we will focus on the special case where the analyzer has more covariate information, i.e, $\bs{X} \subset \bs{C}$ and $\bs{W} \subset \bs{E}$, 
and investigate the additional gain in sampling precision from the design and analysis, separately. 
Specifically, 
we first measure the additional gain from rerandomization given that we use optimal regression adjustment in analysis,
and then measure the additional gain from optimal regression adjustment given that we use rerandomization in design. 

First, to measure the additional gain from design, 
we compare the asymptotic distribution of the optimal regression-adjusted estimator under ReSEM to that under the CRSE. 
From Theorem \ref{thm:optimal} and Remark \ref{rmk:special_covadj}, 
the asymptotic distributions 
for the $\mathcal{S}$-optimal regression-adjusted estimators under ReSEM and the CRSE are the same. 
Thus, there is no additional gain from rerandomization when the analyzer uses optimal regression adjustment. 
This is not surprising given that more covariates have been adjusted at the analysis stage. 
Moreover, this also implies that rerandomization will not hurt the precision of the $\mathcal{S}$-optimal adjusted estimator, and the covariates in analysis can be used in the same way as that under the complete randomization.

Second, to measure the additional gain from analysis, we compare the asymptotic distribution of the $\mathcal{S}$-optimal regression-adjusted estimator $\hat{\tau}(\tilde{\bs{\beta}}, \tilde{\bs{\gamma}})$ under ReSEM to that of the unadjusted estimator $\hat{\tau}$. 
From Theorems \ref{thm:dist} and \ref{thm:optimal}, both asymptotic distributions \eqref{eq:dist} for $\hat{\tau}$ and \eqref{eq:reg_opt} for $\hat{\tau}(\tilde{\bs{\beta}}, \tilde{\bs{\gamma}})$ can be viewed as linear combinations of Gaussian and constrained Gaussian random variables. 
Moreover, all the coefficients in \eqref{eq:reg_opt} for the  $\mathcal{S}$-optimal adjusted  estimator are less than or  equal to 
that 
in \eqref{eq:dist} for the unadjusted estimator, and their differences are nondecreasing in $R_E^2$ and $R_C^2$, which measure the associations between potential outcomes and covariates in analysis. 
Therefore, the optimal adjustment improves the precision of the treatment effect estimation, with the improvement being nondecreasing in $R_E^2$ and $R_C^2$, as summarized in the following corollary. 

\begin{corollary}\label{cor:gain_analysis}
	Under Condition \ref{cond:fp_analysis} and ReSEM with $\bs{W} \subset \bs{E}$ and $\bs{X} \subset \bs{C}$, 
	compared to the unadjusted $\hat{\tau}$, 
	for 
	any 
	$\alpha\in (0,1)$, 
	the percentage reductions in asymptotic variance and length of asymptotic $1-\alpha$ symmetric quantile range of the $\mathcal{S}$-optimal adjusted estimator $\hat{\tau}(\tilde{\bs{\beta}}, \tilde{\bs{\gamma}})$ are, respectively, 
	\begin{align*}
		\frac{(R_E^2 - R_S^2) + (R_C^2 - R_T^2) + v_{J,a_S}R_S^2 + v_{K, a_T}R_T^2 }{1 - (1-v_{J,a_S}) R_S^2 - (1-v_{K, a_T}) R_T^2} 
		\quad
		\text{and}
		\quad
		1 - 
		\sqrt{1-R_E^2-R_C^2} \cdot \frac{z_{1-\alpha/2}}
		{
			\nu_{1-\alpha/2}(R_S^2, R_T^2)}. 
	\end{align*}
\end{corollary}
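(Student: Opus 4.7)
The plan is to compute both asymptotic summaries (variance and symmetric quantile range) for the two estimators using the distributional results already established, and then take ratios. Since both Theorem \ref{thm:dist} and Theorem \ref{thm:optimal} express the limiting distribution of $\sqrt{n}(\hat\tau-\tau)$ and $\sqrt{n}\{\hat\tau(\tilde{\bs\beta},\tilde{\bs\gamma})-\tau\}$ under ReSEM in explicit closed form, the argument reduces to algebra; there is no real obstacle, only bookkeeping.

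First, I would read off the asymptotic variance of $\sqrt{n}(\hat\tau-\tau)$ from Theorem \ref{thm:dist}. Because $\varepsilon$, $L_{J,a_S}$, and $L_{K,a_T}$ are mutually independent with $\Var(\varepsilon)=1$, $\Var(L_{J,a_S})=v_{J,a_S}$, and $\Var(L_{K,a_T})=v_{K,a_T}$, the limiting variance equals
\[
V_{\tau\tau}\bigl\{(1-R_S^2-R_T^2)+v_{J,a_S}R_S^2+v_{K,a_T}R_T^2\bigr\}
=V_{\tau\tau}\bigl\{1-(1-v_{J,a_S})R_S^2-(1-v_{K,a_T})R_T^2\bigr\}.
\]
Next, from Theorem \ref{thm:optimal} (which applies because $\bs W\subset\bs E$ and $\bs X\subset\bs C$) the limiting distribution of $\sqrt{n}\{\hat\tau(\tilde{\bs\beta},\tilde{\bs\gamma})-\tau\}$ is centered Gaussian with variance $V_{\tau\tau}(1-R_E^2-R_C^2)$. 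Taking one minus the ratio of the second variance to the first, and rearranging using $R_E^2=R_S^2+(R_E^2-R_S^2)$ and $R_C^2=R_T^2+(R_C^2-R_T^2)$, gives the first displayed expression in the corollary.

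For the symmetric quantile ranges, I would use the notation already introduced: by definition, the $1-\alpha/2$ quantile of $\sqrt{1-R_S^2-R_T^2}\,\varepsilon+R_S L_{J,a_S}+R_T L_{K,a_T}$ is $\nu_{1-\alpha/2}(R_S^2,R_T^2)$, and that of a standard Gaussian is $z_{1-\alpha/2}$. Scaling by the appropriate factor from Theorems \ref{thm:dist} and \ref{thm:optimal}, the $1-\alpha$ symmetric quantile ranges of the two asymptotic distributions have lengths
\[
2V_{\tau\tau}^{1/2}\,\nu_{1-\alpha/2}(R_S^2,R_T^2)
\quad\text{and}\quad
2V_{\tau\tau}^{1/2}\sqrt{1-R_E^2-R_C^2}\;z_{1-\alpha/2},
\]
and one minus their ratio yields the second displayed expression. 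The nonnegativity of both reductions is not asserted in the statement but is inherited from the fact that $R_E^2\ge R_S^2$ and $R_C^2\ge R_T^2$ (because $\bs W\subset\bs E$, $\bs X\subset\bs C$, so the linear projections on the smaller covariate sets are dominated in explained variance), together with $v_{J,a_S},v_{K,a_T}\in(0,1]$; this gives a short sanity check but is not strictly required for the corollary.

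The only step that requires a touch of care is the quantile computation: one must note that $\nu_{1-\alpha/2}(R_S^2,R_T^2)$ is defined with the linear combination $R_S\cdot L_{J,a_S}+R_T\cdot L_{K,a_T}+\sqrt{1-R_S^2-R_T^2}\,\varepsilon$ (without the $V_{\tau\tau}^{1/2}$ factor), and that the scaling factor $V_{\tau\tau}^{1/2}$ cancels in the final ratio of the two quantile ranges. Apart from this, the proof is a direct substitution and no new ideas are needed beyond Theorems \ref{thm:dist} and \ref{thm:optimal} and the known variance identities $\Var(L_{J,a_S})=v_{J,a_S}$ and $\Var(L_{K,a_T})=v_{K,a_T}$.
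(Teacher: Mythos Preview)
Your proposal is correct and follows essentially the same approach as the paper: compute the asymptotic variances and quantile-range lengths directly from Theorems \ref{thm:dist} and \ref{thm:optimal} using the variance identities $\Var(L_{J,a_S})=v_{J,a_S}$, $\Var(L_{K,a_T})=v_{K,a_T}$, then take one minus the ratios and simplify. The paper additionally verifies nonnegativity and monotonicity in $R_E^2,R_C^2$ (which you correctly note as a sanity check rather than part of the stated corollary), but otherwise the arguments are identical.
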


In Corollary \ref{cor:gain_analysis}, both percentage reductions are nonnegative and nondecreasing in $R_E^2$ and $R_C^2$. 
In particular, when $R_E^2 + R_C^2$ increases to $1$, 
they both 
become close to 1. 
Thus, with more covariate information in analysis, regression adjustment can provide a substantial gain in estimation precision. 

Below we consider a special case where there is 
no additional covariates in analysis (i.e., $\bs{E} = \bs{W}$ and $\bs{C} = \bs{X}$), 
and the thresholds $a_S$ and $a_T$ for rerandomization are small. 
In this case, 
both percentage reductions in Corollary \ref{cor:gain_analysis} become close to zero, which implies that 
the $\mathcal{S}$-optimal adjusted estimator and the unadjusted one have almost the same precision. 
Therefore, rerandomization and regression adjustment are essentially dual of each other, both of which try to adjust the covariate imbalance, while one is at the design stage and the other is at the analysis stage. 
Our asymptotic comparison between them sheds some light on the ongoing debate for whether achieving covariate balance in design is preferable to ex post adjustment \citep[see, e.g.,][]{Morgan2012, Miratrix2013}. 
It will be interesting to further investigate their properties in finite samples. In general, 
rerandomization 
with the unadjusted estimator 
provides more transparent analysis and helps avoid data snooping  \citep{cox2007, Freedman2008chance, Rosenbaum:2010, lin2013}.

\section{Large-sample Confidence Intervals for ReSEM}\label{sec:large_sample_CI}

\subsection{Estimation of the finite population least squares coefficients}\label{sec:est_beta_gamma_tilde}

As discussed in Section \ref{sec:optimal_ana_more}, 
when there is more covariate information in analysis, 
the optimal regression adjustment coefficients are $(\tilde{\bs{\beta}}, \tilde{\bs{\gamma}})$, which depend on the  potential outcomes and covariates of all units and are generally unknown in practice. 
Thus, we propose to estimate them using their sample analogues. 
Specifically, 
for $t=0,1$, 
let $\hat{\bs{\beta}}_t$ and $\hat{\bs{\gamma}}_t$ be the least squares coefficients from the linear projections of the observed outcome on covariates $\bs{C}$ and $\bs{E}$ in the treatment group $t$.  
Define 
$\hat{\bs{\beta}} \equiv r_0 \hat{\bs{\beta}}_1 + r_1 \hat{\bs{\beta}}_0$
and 
$\hat{\bs{\gamma}} \equiv \hat{\bs{\gamma}}_1 - \hat{\bs{\gamma}}_0$. 
The following theorem shows that $\hat{\bs{\beta}}_t$ and $\hat{\bs{\gamma}}_t$ are consistent for their population analogues $\tilde{\bs{\beta}}_t$ and $\tilde{\bs{\gamma}}_t$, 
and the regression-adjusted estimator using the estimated coefficients $\hat{\tau}(\hat{\bs{\beta}}, \hat{\bs{\gamma}})$ has the same asymptotic distribution as $\hat{\tau}(\tilde{\bs{\beta}}, \tilde{\bs{\gamma}})$. 

\begin{theorem}\label{thm:plug_in}
	Under Condition \ref{cond:fp_analysis} and ReSEM, 
	$\hat{\bs{\beta}}_t - \tilde{\bs{\beta}}_t = o_{\Pr}(1)$ 
	and 
	$\hat{\bs{\gamma}}_t - \tilde{\bs{\gamma}}_t = o_{\Pr}(1)$ 
	for $t=0,1$, 
	and 
	$
	\hat{\tau}(\hat{\bs{\beta}}, \hat{\bs{\gamma}})
	-
	\hat{\tau}(\tilde{\bs{\beta}}, \tilde{\bs{\gamma}})
	= o_{\Pr}(n^{-1/2}).
	$
	Consequently, under ReSEM, 
	$
	\sqrt{n}\{
	\hat{\tau}(\hat{\bs{\beta}}, \hat{\bs{\gamma}}) - \tau
	\}
	$
    follows the same asymptotic distribution as 
    $
	\sqrt{n}\{
	\hat{\tau}(\tilde{\bs{\beta}}, \tilde{\bs{\gamma}}) - \tau
	\}. 
	$
\end{theorem}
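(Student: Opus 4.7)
The plan is to prove the theorem in three moves: (a) establish consistency of the sample OLS coefficients for their finite-population analogues under ReSEM, (b) use a simple algebraic identity together with the $O_{\Pr}(n^{-1/2})$ order of the covariate difference-in-means to upgrade consistency into an $o_{\Pr}(n^{-1/2})$ bound on the difference of the two adjusted estimators, and (c) invoke Slutsky's theorem together with Theorem \ref{thm:dist_reg_general} to transfer the asymptotic distribution.

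For step (a), I would first work under the unconditional CRSE. For $t=0,1$, the units in arm $t$ form a nested simple random sample of size $n_t = r_t n$ from the finite population, so by finite-population weak laws of large numbers (which are justified by Condition \ref{cond:fp_analysis}, in particular the Lindeberg-type maximum condition), the within-arm sample covariance matrices of $\bs{C}$ and $\bs{E}$ converge in probability to $\bs{S}^2_{\bs{C}}$ and $\bs{S}^2_{\bs{E}}$, and the within-arm sample cross-moments with the observed outcome $Y$ (which equals $Y(t)$ on arm $t$) converge to $\bs{S}_{\bs{C},t}$ and $\bs{S}_{\bs{E},t}$. The continuous mapping theorem then gives $\hat{\bs{\beta}}_t\convergep \tilde{\bs{\beta}}_t$ and $\hat{\bs{\gamma}}_t\convergep \tilde{\bs{\gamma}}_t$ under the CRSE. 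To transfer this to ReSEM, I would use the elementary bound
\[
\Pr\bigl(|U_n - c|>\varepsilon \,\bigm|\, \text{ReSEM}\bigr) \;\le\; \frac{\Pr(|U_n-c|>\varepsilon)}{\Pr(\text{ReSEM})},
\]
together with the fact, invoked throughout the paper, that $\Pr(\text{ReSEM})$ converges to the strictly positive product $\Pr(\chi^2_J\le a_S)\Pr(\chi^2_K\le a_T)$. This yields $\hat{\bs{\beta}}\convergep\tilde{\bs{\beta}}$ and $\hat{\bs{\gamma}}\convergep\tilde{\bs{\gamma}}$ under ReSEM.

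For step (b), the definition \eqref{eq:reg} gives the identity
\[
\hat{\tau}(\hat{\bs{\beta}},\hat{\bs{\gamma}}) - \hat{\tau}(\tilde{\bs{\beta}},\tilde{\bs{\gamma}})
= -(\hat{\bs{\beta}}-\tilde{\bs{\beta}})^\top \hat{\bs{\tau}}_{\bs{C}}
  -(\hat{\bs{\gamma}}-\tilde{\bs{\gamma}})^\top \hat{\bs{\delta}}_{\bs{E}}.
\]
Applying Theorem \ref{thm:dist} with each component of $\bs{C}$ or $\bs{E}$ in place of the outcome (or, equivalently, applying Theorem \ref{thm:dist_reg_general} to the covariates themselves and then using the same conditioning argument), we obtain $\hat{\bs{\tau}}_{\bs{C}}=O_{\Pr}(n^{-1/2})$ and $\hat{\bs{\delta}}_{\bs{E}}=O_{\Pr}(n^{-1/2})$ under ReSEM. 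Multiplying by the $o_{\Pr}(1)$ coefficient errors from step (a) yields the desired $o_{\Pr}(n^{-1/2})$ rate, and Slutsky's theorem then gives the distributional conclusion.

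The main obstacle I anticipate is in step (a), namely verifying the within-arm moment convergence under ReSEM with full rigor. Two subtleties need attention: (i) the arm-$t$ units are a nested sample, so one needs to iterate expectations over the assignment stage given $\mathcal{S}$ and then over the sampling stage; and (ii) while the rerandomization constraints $M_S\le a_S$ and $M_T\le a_T$ do affect the joint distribution of $(\bs{Z},\bs{T})$, the within-arm sample covariance of $\bs{C}$ involves quadratic statistics that are not the balance criteria themselves, so one must argue that these quadratic statistics remain consistent in the presence of a linear balance constraint. The conditioning-on-positive-probability-event bound above handles both subtleties simultaneously once unconditional CRSE consistency is in hand, which is why I have routed the argument through that inequality rather than attempting to characterize the ReSEM distribution of the sample moments directly.
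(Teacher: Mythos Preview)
Your proposal is correct and follows essentially the same route as the paper: consistency of within-arm sample moments, the algebraic identity for the difference of the two adjusted estimators, the $O_{\Pr}(n^{-1/2})$ order of $\hat{\bs{\tau}}_{\bs{C}}$ and $\hat{\bs{\delta}}_{\bs{E}}$, and Slutsky. One technical point deserves care. Your transfer inequality
\[
\Pr\bigl(|U_n-c|>\varepsilon \,\bigm|\, \text{ReSEM}\bigr) \le \frac{\Pr(|U_n-c|>\varepsilon)}{\Pr(M_S\le a_S,\, M_T\le a_T)}
\]
is the identity $\Pr(A\mid B)\le \Pr(A)/\Pr(B)$, which holds when the ReSEM distribution is literally the CRSE distribution conditioned on the event $\{M_S\le a_S,\, M_T\le a_T\}$. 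That is the \emph{single-stage} design (what the paper calls $\sresem$), not the two-stage ReSEM in Figure~\ref{figure:flowchart}: under the two-stage procedure the Radon--Nikodym ratio to CRSE is $\{\Pr(M_S\le a_S)\,\Pr(M_T\le a_T\mid \bs{Z})\}^{-1}$, which depends on the realized sample $\bs{Z}$ and need not be uniformly bounded. The paper closes this gap by first showing that two-stage ReSEM and single-stage $\sresem$ are asymptotically equivalent in total variation (their Theorem~\ref{thm:equiv_resem_sresem} / Corollary~\ref{cor:equiv_resem_sresem}), and then applying exactly your conditioning bound under $\sresem$ (their Lemma~\ref{lemma:estimate} for step~(a) and Lemma~\ref{lemma:diff_average} for the $O_{\Pr}(n^{-1/2})$ order in step~(b)). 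With that one extra reduction inserted at the outset, your argument matches the paper's.
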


From Theorems \ref{thm:optimal} and \ref{thm:plug_in}, 
when the analyzer has access to all covariate information in design, 
we are able to achieve the optimal efficiency among estimators in \eqref{eq:reg} using the observed data. 
As a side note, similar to \citet{lin2013}, the regression-adjusted estimator $\hat{\tau}(\hat{\bs{\beta}}, \hat{\bs{\gamma}})$ is closely related to the least squares estimator from a linear regression model of observed outcome on treatment indicator, covariates and their interaction. 
We relegate the detailed discussion to the Supplementary Material. 

\subsection{Variance estimation and confidence intervals} \label{sec:estimate_and_CI}

Below we study the large-sample inference for the average treatment effect $\tau$ based on a general regression-adjusted estimator $\hat{\tau}(\bs{\beta}, \bs{\gamma})$, which includes the difference-in-means $\hat{\tau} = \hat{\tau}(\bs{0}, \bs{0})$ as a special case. 
From Theorem \ref{thm:dist_reg_general}, to construct variance estimate for $\hat{\tau}(\bs{\beta}, \bs{\gamma})$ and confidence intervals for $\tau$, 
it suffices to estimate 
$V_{\tau \tau}(\bs{\beta}, \bs{\gamma})$, 
$R_S^2(\bs{\beta}, \bs{\gamma})$ and $R_T^2(\bs{\beta}, \bs{\gamma})$.
From \eqref{eq:R2} and \eqref{eq:V_tautau}, these quantities depend on the finite population variances of the adjusted potential outcomes and their projections on covariates, 
which can be estimated by 
their sample analogues.

For units in treatment group $t\in \{0,1\}$,  
let $s^2_t(\bs{\beta}, \bs{\gamma})$ and  $\bs{s}^2_{\bs{X}}(t)$ be the sample variance and covariance matrix of the observed adjusted outcomes and covariates, 
$\bs{s}_{t,\bs{X}}(\bs{\beta}, \bs{\gamma})$ 
be the sample covariance between them, 
and $s^2_{t\mid \bs{X}}(\bs{\beta}, \bs{\gamma})$ be the sample variance of the linear projection of the observed adjusted outcome on covariates. 
Let $\bs{s}_{\bs{X}}(t)$ be the 
positive-definite square root of the sample covariance matrix $\bs{s}^2_{\bs{X}}(t)$
and $\bs{s}^{-1}_{\bs{X}}(t)$ be its inverse.
Then an intuitive estimator for the variance of the linear projection of the adjusted individual effect on covariate $S_{\tau\mid \bs{X}}^2(\bs{\beta}, \bs{\gamma})$ is 
\begin{align*}
    s_{\tau\mid \bs{X}}^2(\bs{\beta}, \bs{\gamma}) = 
    \big\| \bs{s}_{1,\bs{X}}(\bs{\beta}, \bs{\gamma}) \cdot \bs{s}^{-1}_{\bs{X}}(1) - \bs{s}_{0,\bs{X}}(\bs{\beta}, \bs{\gamma}) \cdot \bs{s}^{-1}_{\bs{X}}(0) \big\|_2^2, 
\end{align*}
where $\|\cdot\|_2$ denotes the Euclidean norm.
We further define analogously $s_{\tau\mid \bs{C}}^2(\bs{\beta}, \bs{\gamma})$ and $s_{\tau\mid \bs{W}}^2(\bs{\beta}, \bs{\gamma})$. 
As demonstrated in the Supplementary Material, these sample quantities are consistent for their finite population analogues. 
Consequently, 
if all design information is known in analysis, 
we can estimate 
$V_{\tau \tau}(\bs{\beta}, \bs{\gamma})$,  ${R}^2_S(\bs{\beta}, \bs{\gamma})$ and ${R}^2_T(\bs{\beta}, \bs{\gamma})$, respectively, by 
\begin{align}\label{eq:VR2_estimate}
\hat V_{\tau \tau}(\bs{\beta}, \bs{\gamma}) &=
r_1^{-1}s^2_{1}(\bs{\beta}, \bs{\gamma}) + r_0^{-1}s^2_{0}(\bs{\beta}, \bs{\gamma}) - f s_{\tau\mid \bs{C}}^2(\bs{\beta}, \bs{\gamma}), 
\quad \ \ 
\hat{R}^2_S(\bs{\beta}, \bs{\gamma}) =
(1-f) \hat V_{\tau \tau}^{-1} (\bs{\beta}, \bs{\gamma})
s_{\tau\mid \bs{W}}^2(\bs{\beta}, \bs{\gamma}), 
\nonumber
\\
\hat{R}^2_T(\bs{\beta}, \bs{\gamma}) &=
\hat V_{\tau \tau}^{-1}(\bs{\beta}, \bs{\gamma}) \big\{ r_1^{-1}s^2_{1\mid  \bs{X}}(\bs{\beta}, \bs{\gamma}) +r_0^{-1}s^2_{0\mid  \bs{X}}(\bs{\beta}, \bs{\gamma}) -
s_{\tau\mid \bs{X}}^2(\bs{\beta}, \bs{\gamma})
\big\}. 
\end{align}
However, if we do not have access to the design information 
at the sampling or assignment stages, 
we can conservatively underestimate $R_S^2(\bs{\beta}, \bs{\gamma})$ or $R_T^2(\bs{\beta}, \bs{\gamma})$ by 
zero, 
pretending that $\bs{W} = \emptyset$ and $a_S = \infty$ or $\bs{X} = \emptyset$ and $a_T = \infty$. 
This worst-case consideration guarantees that the resulting estimated distribution for the regression-adjusted estimator $\hat{\tau}(\bs{\beta}, \bs{\gamma})$ will be asymptotically conservative.

Finally, we can estimate the sampling variance of $\hat\tau(\bs{\beta}, \bs{\gamma})$ under ReSEM  by 
$n^{-1} \hat V_{\tau \tau} (\bs{\beta}, \bs{\gamma})\{ 1 -(1-v_{J,a_S})\hat R_S^2(\bs{\beta}, \bs{\gamma}) -(1- v_{K,a_T})\hat R_T^2(\bs{\beta}, \bs{\gamma})  \}$, 
and 
use $\hat \tau(\bs{\beta}, \bs{\gamma}) \pm  n^{-1/2} 
\hat{V}_{\tau\tau}^{1/2}(\bs{\beta}, \bs{\gamma})\nu_{1-\alpha/2}(\hat{R}_S^2(\bs{\beta}, \bs{\gamma}), \hat{R}_T^2(\bs{\beta}, \bs{\gamma}))
$
as a $1-\alpha$ confidence interval for $\tau$. 
Besides, we can 
construct variance estimator and confidence intervals for the regression-adjusted estimator $\hat{\tau}(\hat{\bs{\beta}}, \hat{\bs{\gamma}})$ with estimated coefficients
in the same way as in \eqref{eq:VR2_estimate}. 
Both the variance estimator and confidence intervals are asymptotically  conservative, in the sense that the probability limit of the variance estimator is larger than or equal to the true variance, 
and the limit of the coverage probability of the confidence interval is larger than or equal to its nominal level. 
If all the design information is available in analysis, then both the variance estimator and confidence intervals become asymptotically exact when $f\rightarrow 0$ or the individual treatment effects adjusted by covariates are asymptotically additive 
in the sense that $S^2_{\tau \setminus \bs{C}} \equiv  S^2_{\tau} - S^2_{\tau \mid \bs{C}} \rightarrow 0$. 
Interestingly, when $f$ is small, which can often happen when conducting survey experiments on large populations, our inference can be asymptotically exact and does not suffer from conservativeness as in usual finite population causal inference. 
These results are all intuitive, and we relegate the technical details to the Supplementary Material.

\subsection{Estimated precision and the corresponding optimal regression adjustment}\label{sec:est_precision}

Section \ref{sec:adj_resem} studied the sampling distribution of a general regression-adjusted estimator 
and demonstrated the optimality of 
$\hat{\tau}(\tilde{\bs{\beta}}, \tilde{\bs{\gamma}})$ in terms of the sampling precision when $\bs{W} \subset \bs{E}$ and $\bs{X} \subset \bs{C}$. 
Below we focus on the estimated precision of a general regression-adjusted estimator. 
Specifically, we say an estimator is $\mathcal{C}$-optimal if its $1-\alpha$ confidence interval multiplied by $\sqrt{n}$ is asymptotically the shortest among a class of estimators for all $\alpha\in (0,1)$, where we use $\mathcal{C}$ to refer to estimated precision. The following theorem shows the optimal 
adjustment in terms of estimated precision. 

\begin{theorem}\label{thm:est_optimal}
    Under ReSEM and Condition \ref{cond:fp_analysis}, 
    $\hat{\tau}(\tilde{\bs{\beta}}, \tilde{\bs{\gamma}})$ has the same estimated precision as $\hat{\tau}(\hat{\bs{\beta}}, \hat{\bs{\gamma}})$ asymptotically, 
    and it is $\mathcal{C}$-optimal among all regression-adjusted estimators of form \eqref{eq:reg}, regardless of whether the analyzer knows the design information at the sampling or assignment stages.
\end{theorem}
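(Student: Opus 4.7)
The plan is to split the theorem into its two assertions — asymptotic equivalence of the estimated precisions of $\hat{\tau}(\tilde{\bs{\beta}},\tilde{\bs{\gamma}})$ and $\hat{\tau}(\hat{\bs{\beta}},\hat{\bs{\gamma}})$, and $\mathcal{C}$-optimality of $\hat{\tau}(\tilde{\bs{\beta}},\tilde{\bs{\gamma}})$ — and to handle each by reusing the machinery already developed for Theorems \ref{thm:plug_in} and \ref{thm:optimal}. For the first assertion, I would invoke Theorem \ref{thm:plug_in}: because $\hat{\bs{\beta}}_t=\tilde{\bs{\beta}}_t+o_{\Pr}(1)$ and $\hat{\bs{\gamma}}_t=\tilde{\bs{\gamma}}_t+o_{\Pr}(1)$, and because the variance and $R^2$ estimators in \eqref{eq:VR2_estimate} are continuous functions of $(\bs{\beta},\bs{\gamma})$ and of sample moments that consistently estimate their finite-population analogues (as established in Section \ref{sec:estimate_and_CI} and relegated to the Supplementary Material), plug-in consistency delivers $\hat V_{\tau\tau}(\hat{\bs{\beta}},\hat{\bs{\gamma}}) - \hat V_{\tau\tau}(\tilde{\bs{\beta}},\tilde{\bs{\gamma}}) = o_{\Pr}(1)$, and similarly for $\hat R_S^2$ and $\hat R_T^2$. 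Slutsky's theorem then yields the identical asymptotic CI length for the two estimators.

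For the $\mathcal{C}$-optimality, I would first characterize the limit of the scaled CI length. Using the consistency noted above, $(\sqrt{n}/2)\cdot\text{length}$ for a generic $\hat{\tau}(\bs{\beta},\bs{\gamma})$ converges in probability to a deterministic function $H(\bs{\beta},\bs{\gamma})$, which by the definition of $\nu_{1-\alpha/2}$ equals the upper $\alpha/2$ quantile of
\[
\sqrt{A_1(\bs{\beta},\bs{\gamma})}\,\varepsilon+\sqrt{A_2(\bs{\gamma})}\,L_{J,a_S}+\sqrt{A_3(\bs{\beta})}\,L_{K,a_T}.
\]
Here, by Corollary \ref{cor:dist_reg_general_equ}, $A_2(\bs{\gamma})$ corresponds to $V_{\tau\tau}(\bs{0},\bs{\gamma})R_S^{\star 2}(\bs{0},\bs{\gamma})$ and depends only on $\bs{\gamma}$, $A_3(\bs{\beta})$ corresponds to $V_{\tau\tau}(\bs{\beta},\bs{0})R_T^{\star 2}(\bs{\beta},\bs{0})$ and depends only on $\bs{\beta}$, and $A_1$ is the residual of $V_{\tau\tau}(\bs{\beta},\bs{\gamma})$ (or its consistent estimand) after removing $A_2$ and $A_3$; the starred $R^2$'s equal their true values if the corresponding design information is available to the analyzer and are replaced by zero otherwise. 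Under the implicit assumption $\bs{W}\subset\bs{E}$, $\bs{X}\subset\bs{C}$ inherited from Section \ref{sec:optimal_ana_more}, the quadratic expansions \eqref{eq:simp_exp_reg_sampling} and \eqref{eq:simp_exp_reg_assignment}, together with the identity for $V_{\tau\tau}(\bs{\beta},\bs{\gamma})$ obtained by summing $A_1+A_2+A_3$, express each $A_i$ as its minimum plus a nonnegative quadratic form in $\bs{\beta}-\tilde{\bs{\beta}}$ and/or $\bs{\gamma}-\tilde{\bs{\gamma}}$. Consequently, in all four scenarios for the analyzer's knowledge of the design information, $A_1,A_2,A_3$ are simultaneously minimized at $(\tilde{\bs{\beta}},\tilde{\bs{\gamma}})$.

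The main obstacle is then passing from the pointwise minimization of $A_1,A_2,A_3$ to the minimization of the quantile $H(\bs{\beta},\bs{\gamma})$. This requires that for the independent summands $\varepsilon$, $L_{J,a_S}$, $L_{K,a_T}$, the upper $\alpha/2$ quantile of $\sqrt{A_1}\varepsilon+\sqrt{A_2}L_{J,a_S}+\sqrt{A_3}L_{K,a_T}$ be nondecreasing in each $A_i\ge 0$. Because each constrained Gaussian $L_{J,a_S}$ and $L_{K,a_T}$ is symmetric unimodal about zero (a fact already used in proving Theorems \ref{thm:dist} and \ref{thm:optimal}), Ibragimov's convolution theorem implies the scaled sum is itself symmetric unimodal, and Anderson's inequality then shows that convolving with an additional independent symmetric summand of larger scale can only widen the upper quantiles — precisely the monotonicity argument underlying the proof of $\mathcal{S}$-optimality in Theorem \ref{thm:optimal}, which I would reuse verbatim. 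Combining the two parts yields $H(\tilde{\bs{\beta}},\tilde{\bs{\gamma}})\le H(\bs{\beta},\bs{\gamma})$ for every admissible $(\bs{\beta},\bs{\gamma})$ and every $\alpha\in(0,1)$, completing the proof.
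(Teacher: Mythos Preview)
Your proposal is essentially correct and tracks the paper's argument closely, though the paper packages the second step more conceptually. The paper observes that the probability limit of the estimated distribution of $\hat{\tau}(\bs{\beta},\bs{\gamma})$ is exactly the convolution of a fixed Gaussian $\mathcal{N}(0, fS^2_{\tau\setminus\bs{C}})$ with the \emph{true sampling distribution} of $\hat{\tau}(\bs{\beta},\bs{\gamma})$ under a \emph{modified} ReSEM --- one in which any stage whose design information is unknown to the analyzer is replaced by covariate $\emptyset$ and threshold $\infty$. In this modified design the design covariates are automatically subsets of $\bs{E}$ and $\bs{C}$, so the paper simply invokes the already-proved $\mathcal{S}$-optimality (Theorem~\ref{thm:optimal}(iii)) for that design, and then Lemma~\ref{lemma:sum_unimodal} (convolving with a fixed symmetric-unimodal summand preserves the quantile ordering) transfers $\mathcal{S}$-optimality to $\mathcal{C}$-optimality. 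Your $A_1,A_2,A_3$ decomposition and case-by-case minimization is exactly the computation underlying Theorem~\ref{thm:optimal}(iii), just inlined rather than invoked modularly; your Ibragimov/Anderson step is the paper's Lemma~\ref{lemma:linear_com_eps_L}.

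Two small imprecisions to tighten. First, the probability limit of $\hat{V}_{\tau\tau}(\bs{\beta},\bs{\gamma})$ is $V_{\tau\tau}(\bs{\beta},\bs{\gamma}) + fS^2_{\tau\setminus\bs{C}}$, not $V_{\tau\tau}(\bs{\beta},\bs{\gamma})$ itself; your parenthetical ``or its consistent estimand'' should be made explicit, though the extra term is constant in $(\bs{\beta},\bs{\gamma})$ and so harmless for the optimization. Second, the claim ``under the implicit assumption $\bs{W}\subset\bs{E}$, $\bs{X}\subset\bs{C}$'' is not a standing hypothesis of Theorem~\ref{thm:est_optimal}. It holds (and is needed for \eqref{eq:simp_exp_reg_sampling}--\eqref{eq:simp_exp_reg_assignment}) only in the scenarios where the corresponding design information is known; when a stage is unknown, the matching $A_2$ or $A_3$ is identically zero, and the $\bs{\gamma}$- or $\bs{\beta}$-dependence sits entirely in $A_1$ via the first-line identities of Lemmas~\ref{lemma:simp_W_in_E}--\ref{lemma:simp_X_in_C}, which do \emph{not} require $\bs{W}\subset\bs{E}$ or $\bs{X}\subset\bs{C}$. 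The paper's modified-design trick handles all four cases uniformly and sidesteps this bookkeeping.
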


Below we give some intuition of Theorem \ref{thm:est_optimal} for two scenarios under which the analyzer knows all or no design information. 
First, when the analyzer has all the design information, 
then $\bs{W} \subset \bs{E}$ and $\bs{X} \subset \bs{C}$, and 
the estimated distribution of a general adjusted estimator $\hat{\tau}(\bs{\beta}, \bs{\gamma})$ has the same weak limit as the convolution of its asymptotic distribution in \eqref{eq:dist_reg_general} and a Gaussian distribution with mean zero and variance $f S^2_{\tau \setminus \bs{C}}$. 
In this case, 
the $\mathcal{C}$-optimality of $\hat{\tau}(\tilde{\bs{\beta}}, \tilde{\bs{\gamma}})$ follows from its $\mathcal{S}$-optimality under ReSEM as in Theorem \ref{thm:optimal}. 
Second, when the analyzer does not have the design information for sampling or assignment, the estimated distribution of $\hat{\tau}(\bs{\beta}, \bs{\gamma})$ has the same weak limit as the convolution of its asymptotic distribution 
under the CRSE and a Gaussian distribution with mean zero and variance $f S^2_{\tau \setminus \bs{C}}$, which is intuitive since without any design information we consider the worst case and pretend that the experiment was conducted as a CRSE. 
In this case, 
the $\mathcal{C}$-optimality of $\hat{\tau}(\tilde{\bs{\beta}}, \tilde{\bs{\gamma}})$ follows from its $\mathcal{S}$-optimality under the CRSE as discussed in Remark \ref{rmk:special_covadj}.

\section{Illustration}\label{sec:illustrate}

\subsection{A simulation study}
We conduct a simulation to illustrate the property of ReSEM. 
We fix the population size at $N = 10^4$
and vary the sample size $n$ from $100$ to $1000$, i.e., the proportion of sampled units $f$ varies from $0.01$ to $0.1$. 
The proportions of treated and control units are fixed at $r_1 = r_0 = 0.5$. 
We generate the potential outcomes and covariates for all $N$ units as i.i.d.\ samples from the following model:
\begin{align*}
    Y(0) & = - \frac{1}{2} \sum_{k=1}^6 C_k  + \delta, 
    \quad Y(1) = Y(0) + \frac{3}{5} \sum_{k=1}^6 C_k,\\
    \bs{W} & = (C_1, C_2), \quad \bs{X} = (C_1, C_2, C_3, C_4), 
    \quad 
    \bs{E} = \bs{W}, 
    \quad \bs{C} = (C_1, C_2, \ldots, C_6),
    \\
    & \quad \ C_1, C_3, C_5 \sim \text{Bernoulli}(0.5), \quad
    C_2, C_4, C_6 \sim \mathcal{N}(0,1), \quad
    \delta \sim \mathcal{N}(0, 0.1^2),
\end{align*}
where $C_1, C_2, \ldots, C_6$ and $\delta$ are mutually independent. 
Once generated, all the potential outcomes and covariates are kept fixed, mimicking the finite population inference. 
Consistent with our notation before, 
$\bs{W}$ denotes the available covariate information at the sampling stage, 
$\bs{X}$ denotes the available covariate information at the treatment assignment stage, 
and 
$\bs{E}$ and 
$\bs{C}$ denote the available covariate information at the analysis stage, 
with $\bs{W} = \bs{E} \subset \bs{X} \subset \bs{C}$. 
We choose $a_T$ and $a_S$ 
based on the corresponding asymptotic acceptance probabilities $p_S = P(\chi^2_{2} \le a_S)$ and $p_T = P(\chi^2_{4} \le a_T)$.
From Remark \ref{rmk:loss_comp_optimal}, 
when $p_S = p_T = 0.01$, 
the improvement on causal effect estimation is at most $4.89\%$ different from the ideal optimal one.

We consider three different designs of the survey experiments: 
(i) ReSEM with $p_S = 0.01$ and $p_T = 1$, 
(ii) ReSEM with $p_S = 1$ and $p_T = 0.01$, 
and 
(iii) 
ReSEM with $p_S = p_T = 0.01$, 
which correspond to rerandomization at only the sampling stage, only the assignment stage and both stages, respectively. 
We consider the difference-in-means estimator $\hat{\tau}$ under these designs, 
and consider additionally the regression-adjusted estimator $\hat{\tau}(\hat{\bs{\beta}}, \hat{\bs{\gamma}})$ under ReSEM with $p_S = p_T = 0.01$.
For each design and each sample size $n$, we generate $10^4$ rerandomizations. 
Figure \ref{fig:hist}(a) shows the histograms of $\sqrt{n}\left(\hat{\tau}-\tau\right)$ under these designs, 
as well as their asymptotic approximations based on \eqref{eq:dist},  when 
$n=800$. 
From Figure \ref{fig:hist}(a), ReSEM with rerandomization at both stages provides the most efficient difference-in-means estimator, and ignoring rerandomization at either stage will lead to efficiency loss. 
Figure \ref{fig:hist}(b) shows the histograms and their asymptotic approximations of the difference-in-means and regression-adjusted estimators under ReSEM with $p_S = p_T = 0.01$. From Figure \ref{fig:hist}(b), regression adjustment can further improve the estimation efficiency by carefully adjusting imbalance of additional covariates. 
Figure \ref{fig:CI} shows the coverage probabilities and average lengths of $95\%$ confidence intervals for $\tau$ under the three rerandomization designs, 
with $n$ varying from $100$ to $1000$, 
using either the difference-in-means or regression-adjusted estimator. 
From Figure \ref{fig:CI}(a), the coverage probabilities 
are all close to the nominal level $95\%$, 
and from Figure \ref{fig:CI}(b), the regression-adjusted estimator has the shortest confidence intervals on average, followed by the difference-in-means estimator under ReSEM with rerandomization at both stages. 
These simulation results further confirm our theory in Sections \ref{sec:resem}--\ref{sec:large_sample_CI}.  
\begin{figure}[htbp]
	\centering
    \begin{subfigure}[htbp]{0.5\textwidth}
        \centering
        \includegraphics[width=.6\textwidth]{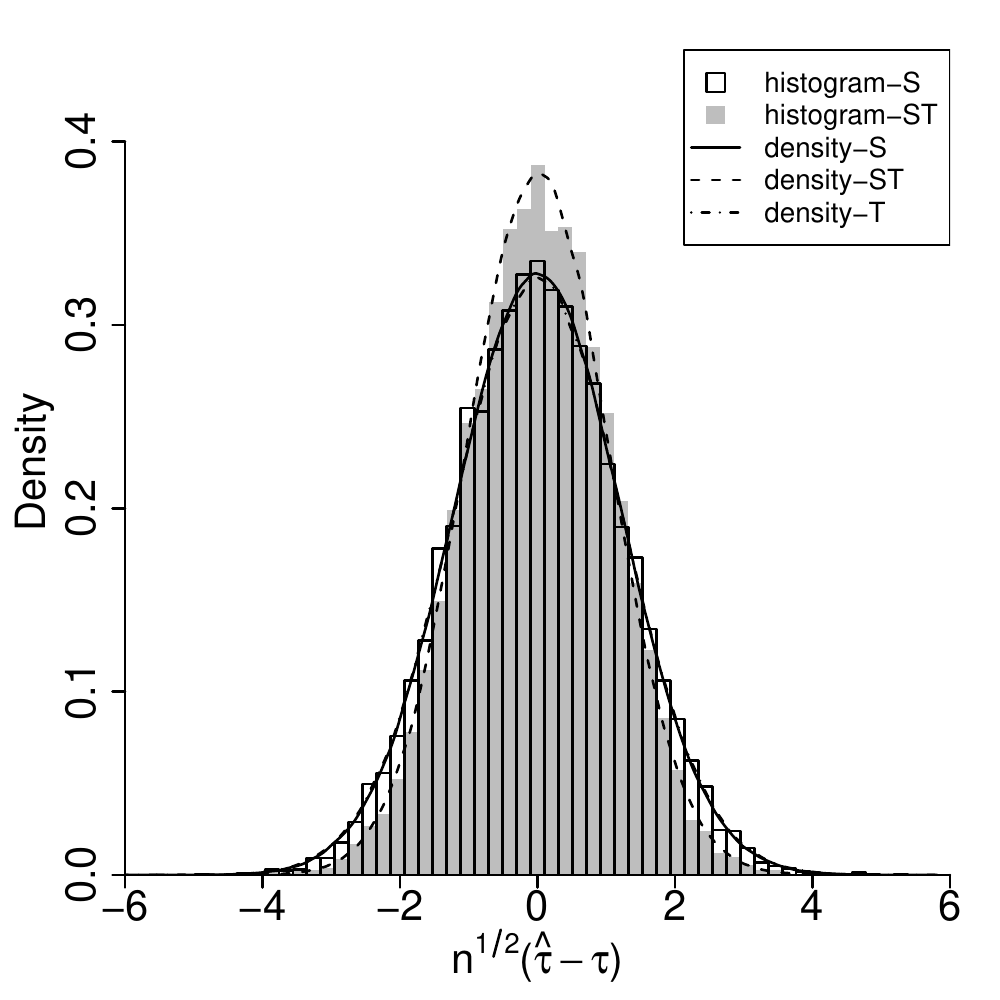}
        \caption{\centering Histograms of $\sqrt{n}( \hat{\tau} - \tau)$ under three rerandomization designs}
    \end{subfigure}%
    \begin{subfigure}[htbp]{0.5\textwidth}
        \centering
        \includegraphics[width=.6\textwidth]{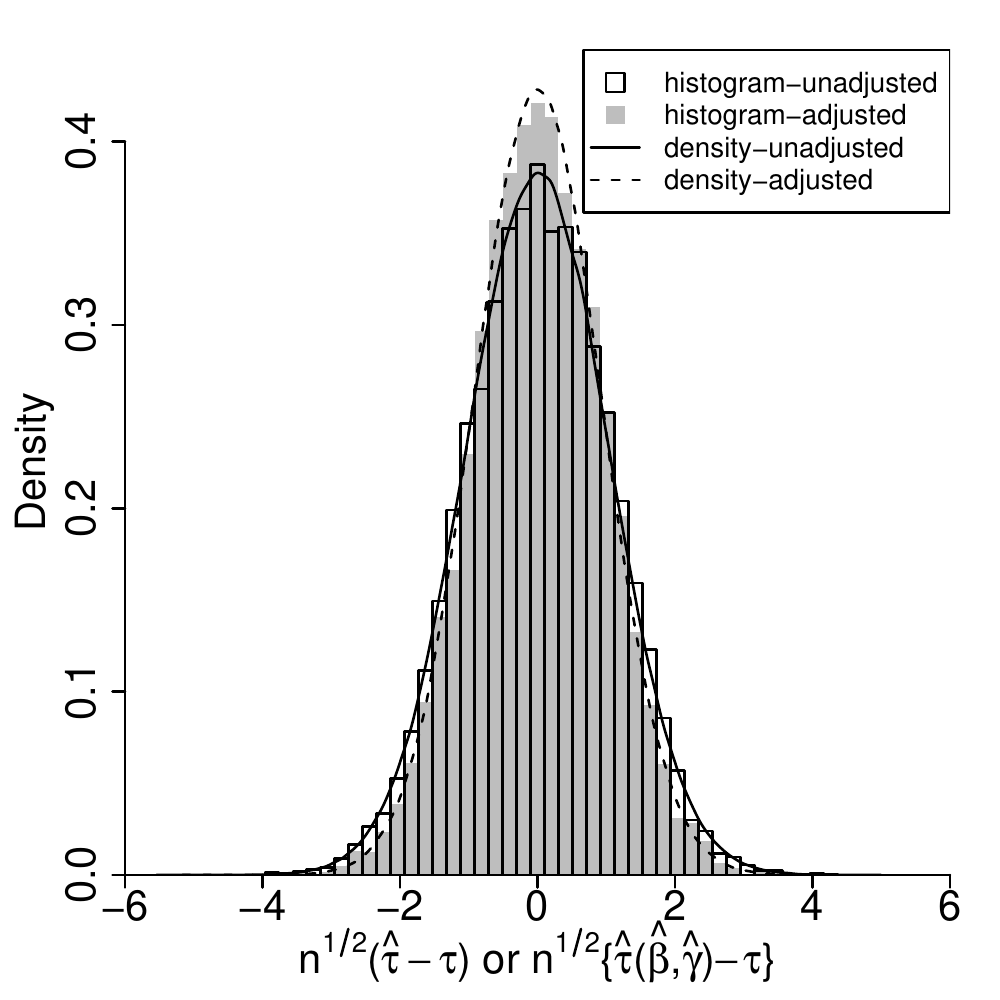}
        \caption{\centering Histograms of $\sqrt{n}( \hat{\tau} - \tau)$ and $\sqrt{n}(\hat{\tau}(\hat{\bs{\beta}}, \hat{\bs{\gamma}})$ $-\tau)$ under ReSEM with $p_S = p_T = 0.01$}
    \end{subfigure}
    \caption{
    Histograms and asymptotic densities of difference-in-means and regression-adjusted estimators under different rerandomization designs. 
    (a) compares difference-in-means estimator under ReSEM with $p_S = 0.01$ and $p_T = 1$ (S), 
    ReSEM with $p_S = 1$ and $p_T = 0.01$ (T)
    and ReSEM with $p_S = p_T = 0.01$ (ST). 
    The histogram for ReSEM with $p_S = 1$ and $p_T = 0.01$ is dropped for clarity. 
    (b) compares unadjusted and regression-adjusted estimators under ReSEM with $p_S = p_T = 0.01$. 
    } 
    \label{fig:hist}
\end{figure}
\begin{figure}[htbp]
    \begin{subfigure}[htbp]{0.5\textwidth}
        \centering
        \includegraphics[width=.6\textwidth]{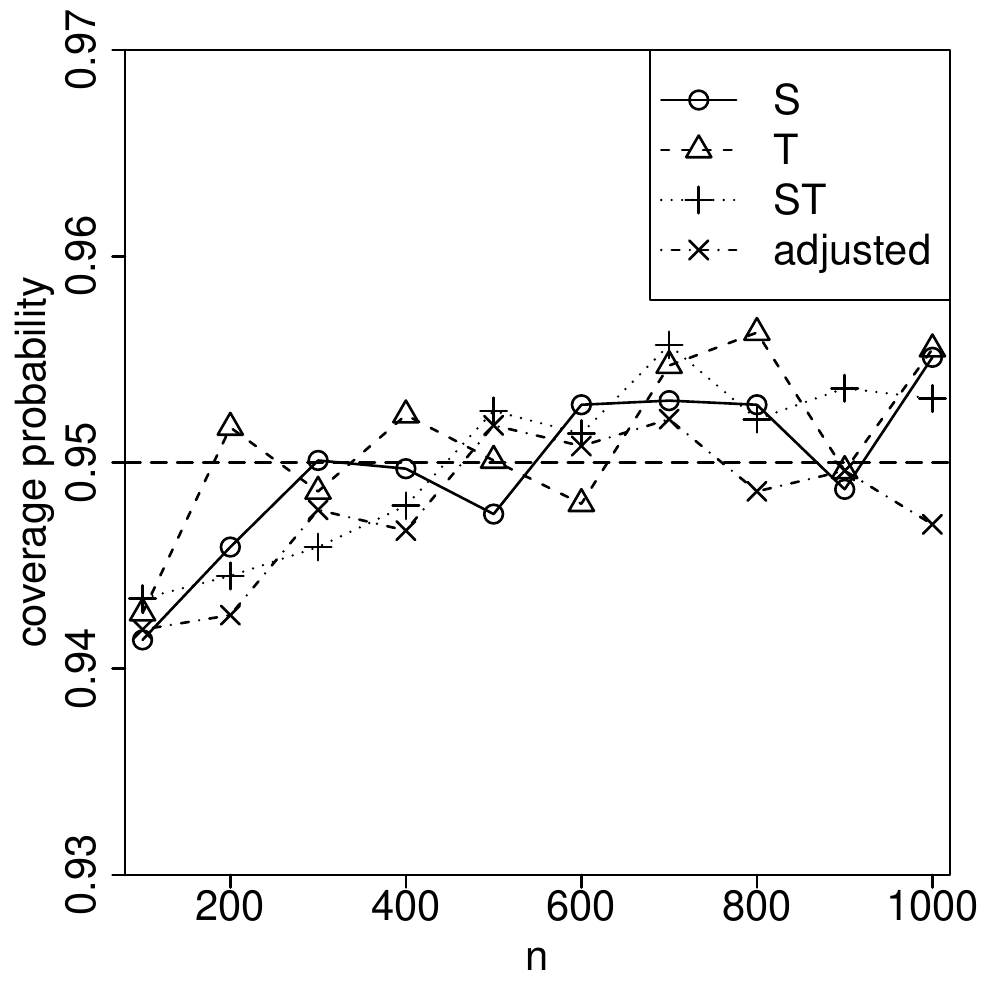}
        \caption{Coverage probability}
    \end{subfigure}%
    \begin{subfigure}[htbp]{0.5\textwidth}
        \centering  
        \includegraphics[width=.6\textwidth]{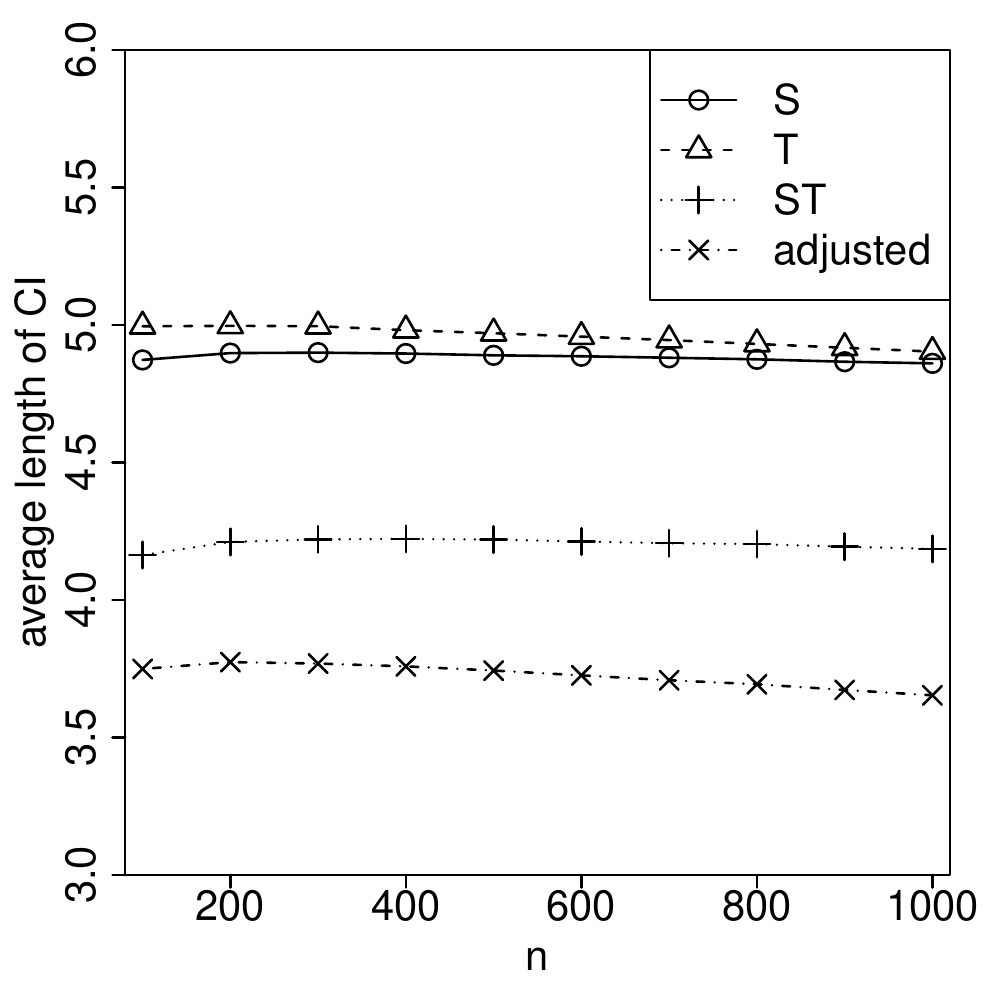}
        \caption{Average length}
    \end{subfigure}
    \caption{
    Coverage probability and average length of the confidence intervals based on the difference-in-means and regression-adjusted estimators under different rerandomization design. 
    (a) shows the empirical coverage probabilities of 95\% confidence intervals using difference-in-means estimator under three rerandomization designs (S, T, ST) and regression-adjusted estimator under ReSEM with $p_S=p_T=0.01$ (adjusted), and (b) shows the corresponding average lengths (multiplied by $\sqrt{n}$) of confidence intervals.
    }
    \label{fig:CI}
\end{figure}

\subsection{An example on election study}\label{sec:election}

To illustrate rerandomization for survey experiments, we use the data from the Cooperative Congressional Election Study \citep[CCES;][]{CCES2014,CCES2020}, which is a national sample survey administrated by YouGov. 
Specifically, the dataset combines surveys conducted by multiple teams, with half of the questionnaire consisting of common content and the other half consisting of team content designed by each individual team\footnote{For a detailed description and guide of the dataset, see \url{https://cces.gov.harvard.edu}.}. 
We consider the team survey of Boston University \citep{CCES2014BOS} from CCES 2014, 
where each participant were asked about their opinion on the federal spending on scientific research. 
Among the total 1000 participants, 
about half 
were provided additional information about current federal research budget: ``each year, just over 1\% of the federal budget is spent on scientific research'', while the remaining units were blind of this information. 
The outcome is the response to the question that whether federal spending on scientific research should be increased, kept the same or decreased, coded as 1, 2 and 3. 
The interest here is how the opinion toward federal spending on scientific research is affected by the information provided on current spending. 

We will use the above survey experiment from \citet{CCES2014BOS} 
to illustrate our rerandomization design. 
The ideal population of interest will be all Americans. Since we do not have the census data, we instead focus on the population consisting of all the participants in the CCES 2014 dataset, for which we have a rich set of covariate information. 
We sample $n=1000$ individuals from the population, and assign $n_1=500$ and $n_0=500$ individuals to treatment and control groups, respectively, where units in the treated group will be provided the information on current federal spending on scientific research. 
We include nine pretreatment covariates, and assume that four covariates are observed at the sampling stage, 
additional two covariates are observed at the treatment assignment stage, 
and 
all nine covariates are observed at the analysis stage; see Table \ref{table:covariates}.
Besides, we assume $\bs{E} = \bs{W}$, i.e., no additional covariates for the whole population are observed in analysis. 
For simplicity, 
we remove individuals with missing outcomes or covariates, resulting in a population of size $N=49452$, 
and we view the outcome as a numerical variable representing how each individual is against the federal spending on scientific research. 

\begin{table}[hbtp]
\centering
\caption{Covariates at sampling, treatment assignment and analysis stages}\label{table:covariates}
\resizebox{0.9\textwidth}{!}{
\begin{tabular}{ccc} 
\toprule
Stage & Notation & (Additional) covariate information
\\
\midrule
\multirow{2}{*}{sampling} & \multirow{2}{*}{$\bs{W}$} & age, gender, race,\\ & & whether the highest level of education is 
college or higher \\ 
\midrule
\multirow{2}{*}{assignment} & \multirow{2}{*}{$\bs{X} \setminus \bs{W}$} &   whether family annual income is less than \$60000,\\ 
 & &  whether the individual thinks economy has gotten worse last year\\ 
\midrule
\multirow{2}{*}{analysis} & \multirow{2}{*}{$\bs{C} \setminus \bs{X}$} & ideology, party identification, \\ & & whether the individual follows news and public affairs most of the time \\ 
\bottomrule
\end{tabular}%
}
\end{table}

To make the simulation more relevant to the real data, 
we fit a linear regression of the observed outcome on the treatment indicator, all covariates in Table \ref{table:covariates} and their interactions, based on the data of 1000 individuals in the team survey, and use the fitted model to generate potential outcomes for all the $N$ individuals. 
To better illustrate the improvement from rerandomization, 
we fix the individual treatment effects but 
shrink the control potential outcomes towards their population mean by a factor of 0.1, 0.2, $\ldots$, 1 to generate 10 different datasets. 
Table \ref{table:R2_simulated} lists the values of the $R^2$ measures in \eqref{eq:R2} and  \eqref{eq:R_A2} for the simulated 10 datasets.
We conduct simulation under three different designs for each generated dataset: 
(i) CRSE, i.e., ReSEM with $p_S = p_T = 1$, 
(ii) ReSEM with $p_S = 1$ and $p_T = 0.001$ 
and 
(iii) ReSEM with $p_S = p_T = 0.001$, 
which corresponds to rerandomization in neither stage, only treatment assignment stage, and both stages. 
From Remark \ref{rmk:loss_comp_optimal}, 
when $p_S = p_T = 0.001$, 
the improvement on causal effect estimation is at most $8.4\%$ different from the ideal optimal one. 
We consider difference-in-means estimator under these three designs, 
and also the regression-adjusted estimator under ReSEM with $p_S = p_T = 0.001$. 
Figure \ref{fig:real_data}(a) and (b) show the empirical variances and average lengths of confidence intervals using the two estimators under the two ReSEMs, 
standardized by the corresponding values for the difference-in-means estimator under the CRSE.  
From Figure \ref{fig:real_data}(a) and (b), 
as the shrinkage factor decreases, under which the control potential outcomes become less heterogeneous while the individual effects are kept fixed,  
the additional gain from rerandomization at the sampling stage (i.e., the gap between the solid and dashed lines, which relates to $R_S^2$) increases, 
while the improvement from regression adjustment (i.e., the dotted line, which relates to $R_E^2 + R_C^2$) decreases. This is consistent with the $R^2$ measures in Table \ref{table:R2_simulated}. 
Figure \ref{fig:real_data}(c) shows the coverage probabilities of the confidence intervals under all cases. 
From Figure \ref{fig:real_data},
compared to the CRSE, 
the precision of the difference-in-means estimator is improved by employing rerandomization at the treatment assignment stage,  
and 
it is further improved by also employing rerandomization at the sampling stage as well as conducting covariate adjustment at the analysis stage.  
Moreover, both rerandomization and regression adjustment reduce the average lengths of the confidence intervals, while still maintaining the coverage probabilities at the nominal level approximately.

\begin{table}[htb]
\centering
\caption{$R^2$ measures  for the simulated 10 datasets}
\label{table:R2_simulated}
\resizebox{0.9\textwidth}{!}{
\begin{tabular}{ccccccccccc}
\toprule
Factor & 0.1 &0.2  & 0.3 & 0.4  & 0.5 & 0.6 &  0.7& 0.8&0.9 & 1\\ \midrule
 $R^2_S=R^2_E$ & 0.2176 & 0.2240 & 0.2076& 0.1769 &0.1434& 0.1140 & 0.0904 & 0.0723 & 0.0586 & 0.0481  \\ \midrule
$R^2_T$  & 0.1657 &0.1429 &0.1614 &0.2085 &0.2643 &0.3157 &0.3580 &0.3913 &0.4172& 0.4373 \\ \midrule
$R^2_C$ & 0.4391& 0.4228 &0.4650 &0.5441 &0.6303 &0.7062 &0.7669& 0.8136 & 0.8490 & 0.8761 \\ \bottomrule
\end{tabular}}%
\end{table}
As a side note, we also consider the greedy pair-switching design recently proposed by \citet{krieger2019nearly}. Specifically, we use the greedy pair-switching algorithm in the assignment stage, and consider both SRS and rejective sampling in the sampling stage. 
Figure \ref{fig:real_data}(a) shows the corresponding empirical variances of the difference-in-means estimator. 
From Figure \ref{fig:real_data}(a), 
the treatment effect estimate is more precise under 
the greedy pair-switching design than under rerandomization, but the improvement is relatively small. 
Compared to the greedy design, rerandomization is  computationally much simpler, and more importantly, it allows large-sample inference for the average treatment effect as discussed throughout the paper, avoiding any constant treatment effect assumption that is typically involved for randomization tests. 
Additionally, employing rerandomization at the sampling stage (i.e., rejective sampling) is also beneficial for the greedy pair-switching design.

\begin{figure}[htb]
    \begin{subfigure}[htbp]{0.33\textwidth}
        \includegraphics[width=\textwidth]{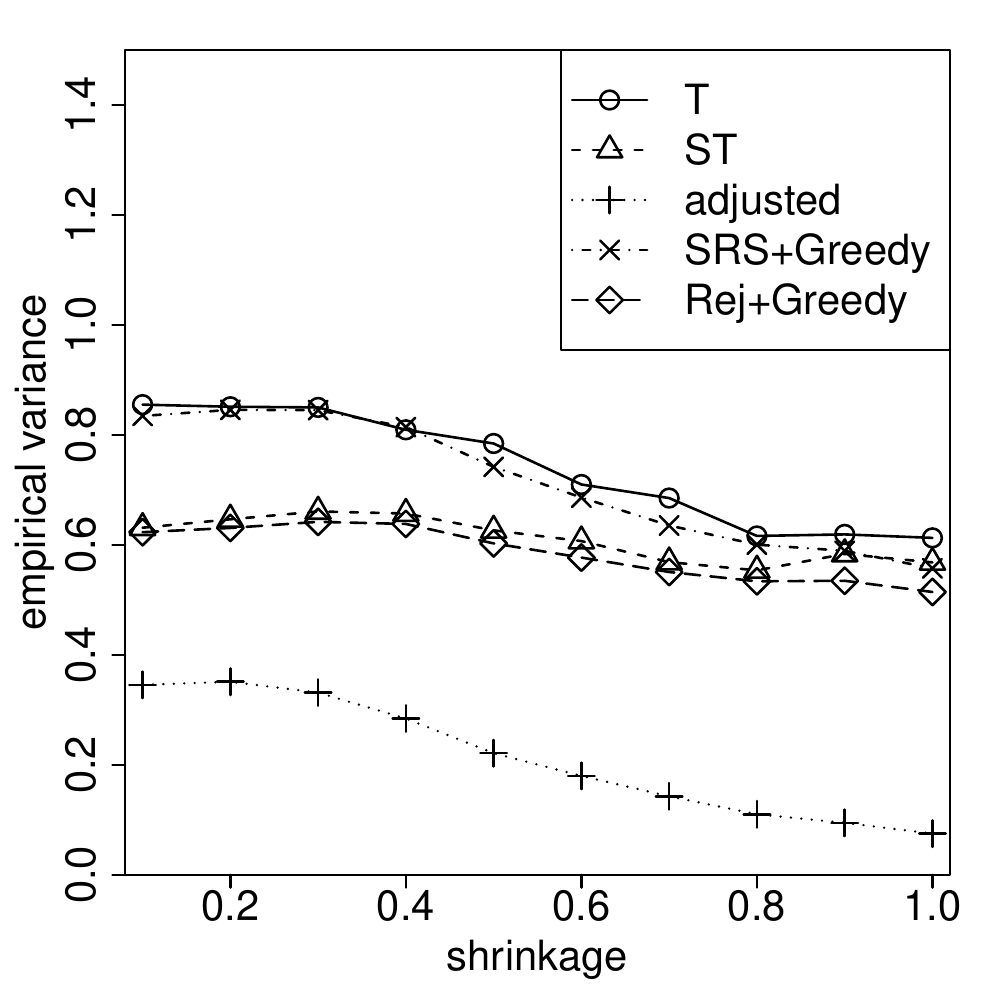}
        \caption{Empirical variance}
    \end{subfigure}%
    \begin{subfigure}[htbp]{0.33\textwidth}
        \includegraphics[width=\textwidth]{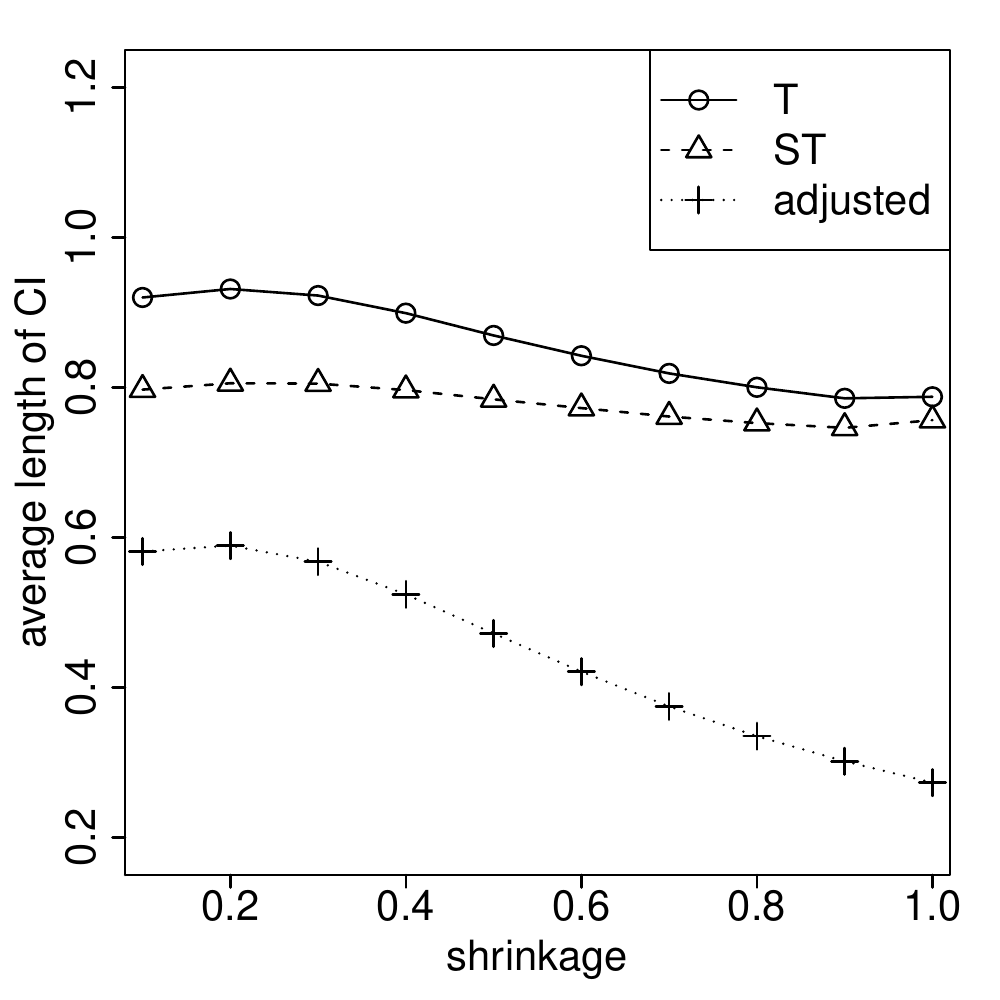}
        \caption{Average length}
    \end{subfigure}%
    \begin{subfigure}[htbp]{0.33\textwidth}
        \includegraphics[width=\textwidth]{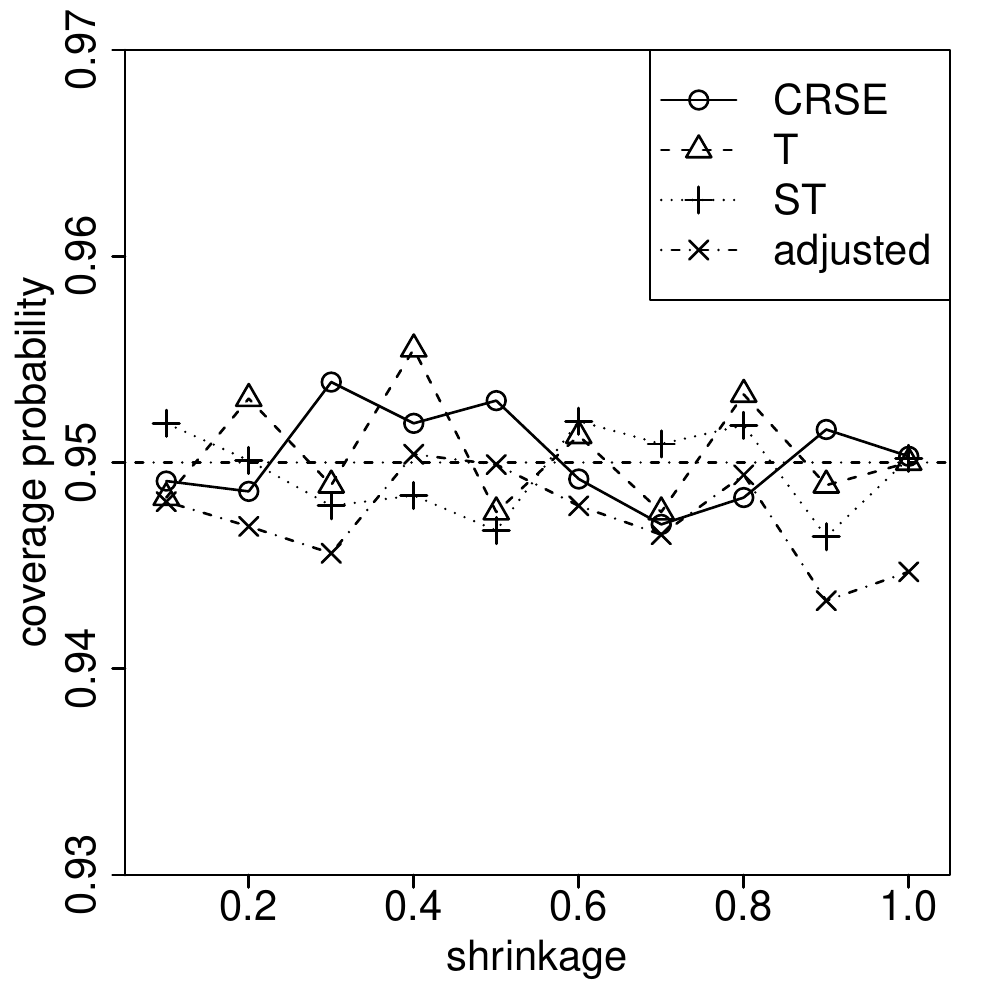}
        \caption{Coverage probability}
    \end{subfigure}
    \caption{
    Comparison of four approaches for design and analysis of the population average treatment effect for the election example: 
    difference-in-means estimator under the CRSE (CRSE) and ReSEM with rerandomization at treatment assignment stage only (T) and at both stages (ST), and regression-adjusted estimator under ReSEM with rerandomization at both stages (adjusted). 
    (a) compares empirical variances of the treatment effect estimators under the four approaches, where each empirical variance is standardized by the corresponding value for the CRSE. 
    (b) compares average lengths of $95\%$ confidence intervals under the four approaches, where each value is standardized by the corresponding value under the CRSE. 
    (c) compares empirical coverage probabilities of the $95\%$ confidence intervals under the four approaches.
    Besides, (a) also includes two additional approaches: 
    one uses SRS at the sampling stage, the other uses rejective sampling (or rerandomization) at the sampling stage, 
    and both of them use 
    greedy pair-switching design at the treatment assignment stage. 
    These two additional approaches are denoted by  SRS+Greedy and Rej+Greedy. 
    }\label{fig:real_data}
\end{figure}

\section{Conclusion}\label{sec:conclude}

We proposed a general two-stage rerandomization for survey experiments, and focused particularly on the covariate balance criteria based on the Mahalanobis distances. 
We studied asymptotic properties of rerandomized survey experiments and constructed large-sample confidence intervals for the population average treatment effect. 
Our results show that rerandomization at both the sampling and treatment assignment stages can improve the treatment effect estimation, and the improvement depends crucially on the association between covariates and potential outcomes as well as individual treatment effects. 
Moreover, adjusting remaining covariate imbalance after the survey experiment can further improve the estimation precision, especially when there is additional covariate information after conducting the experiment. 
We further studied optimal covariate adjustment in terms of both sampling and estimated precision. 

To avoid the paper being too lengthy, 
we relegate several extensions to the Supplementary Material and briefly summarize them below. 
First, 
motivated by \citet{wuding2020}, \citet{zhao2020covariate} and \citet{cohen2020gaussian}, we propose general covariate-adjusted conditional randomization tests for ReSEM that enjoy both finite-sample validity for testing sharp null hypotheses and large-sample validity for testing weak null hypotheses. Second, we study ReSEM with discrete covariates, which corresponds to stratified sampling and blocking. 
Third, we study clustered survey experiments with both sampling and treatment assignment at a cluster level.

\section*{Acknowledgments}
We thank the Associate Editor and two reviewers for insightful and constructive comments.

\bibliographystyle{plainnat}
\bibliography{survey.bib}

\newpage

\setcounter{equation}{0}
\setcounter{section}{0}
\setcounter{figure}{0}
\setcounter{example}{0}
\setcounter{proposition}{0}
\setcounter{corollary}{0}
\setcounter{theorem}{0}
\setcounter{lemma}{0}
\setcounter{table}{0}
\setcounter{condition}{0}
\setcounter{page}{1}
\begin{center}
	\bf \LARGE 
	Supplementary Material 
\end{center}

\renewcommand {\theproposition} {A\arabic{proposition}}
\renewcommand {\theexample} {A\arabic{example}}
\renewcommand {\thefigure} {A\arabic{figure}}
\renewcommand {\thetable} {A\arabic{table}}
\renewcommand {\theequation} {A\arabic{section}.\arabic{equation}}
\renewcommand {\thelemma} {A\arabic{lemma}}
\renewcommand {\thesection} {A\arabic{section}}
\renewcommand {\thetheorem} {A\arabic{theorem}}
\renewcommand {\thecorollary} {A\arabic{corollary}}
\renewcommand {\thecondition} {A\arabic{condition}}

\renewcommand {\thepage} {A\arabic{page}}

Appendix \ref{sec:extension_special} studies extensions to the CRSE, ReM and rejective sampling,  providing the details for 
Remarks \ref{rmk:special_dim} and \ref{rmk:special_covadj}.

Appendix \ref{sec:OLS_interaction} studies the connection between the regression-adjusted estimator $\hat{\tau}(\hat{\bs{\beta}}, \hat{\bs{\gamma}})$ and the least squares estimator from the linear regression of observed outcome on treatment indicator, covariates, and their interaction. 

Appendix \ref{sec:rand_test_resem} studies randomization tests for ReSEM that are not only exactly valid for testing sharp null hypotheses but also asymptotically valid for testing weak null hypotheses. 

Appendix \ref{sec:sts_block} studies rerandomization in survey experiments with discrete covariates, i.e., stratified sampling and blocking. 

Appendix \ref{sec:cluster} studies rerandomization in clustered survey experiments, where the sampling and treatment assignment are conducted at a cluster level instead of the individual level. 

Appendix \ref{app:crse} studies the sampling properties for the CRSE, gives technical details for the comments on Condition \ref{cond:fp}, and proves Proposition \ref{prop:R2}. 

Appendix \ref{sec:equiv_resem_single_two} shows the asymptotic equivalence between the single-stage and two-stage rerandomized survey experiments. 

Appendix \ref{app:resem} studies the sampling properties for ReSEM, proves Theorem \ref{thm:dist} and Corollaries \ref{corollary:PRIASV}--\ref{corollary:QR}, and gives technical details for  Remark \ref{rmk:loss_comp_optimal} and the comments on acceptance probabilities and covariate balance. 

Appendix \ref{app:reg} studies regression adjustment under ReSEM, proves Theorems \ref{thm:dist_reg_general} and \ref{thm:optimal} and Corollaries \ref{cor:dist_reg_general_equ} and \ref{cor:gain_analysis}, and gives technical details for the comments on the optimal adjustment coefficients.

Appendix \ref{sec:proof_special} proves the large-sample properties for the three special cases, the CRSE, ReM and rejective sampling. 

Appendix \ref{app:ci} proves the large-sample validity of variance estimators and confidence intervals under ReSEM, and proves Theorem \ref{thm:plug_in}. 

Appendix \ref{sec:C_optmal_connection_proof} 
proves Theorem \ref{thm:est_optimal}, and connects the regression-adjusted estimator to usual linear regression models with least squares estimates. 

Appendix \ref{sec:proof_rand_test} studies randomization tests for both Fisher's sharp null hypotheses and Neyman's weak null hypotheses, proving Theorems \ref{thm:cond_rand_test} and \ref{thm:frt_weak_null}. 

Appendix \ref{sec:proof_strata} studies the large sample properties for stratified sampling and blocking, proving Theorem \ref{thm:resem_cat}. 

Appendix \ref{sec:proof_cluster} studies the large sample properties for clustered survey experiments.

\section{Extension to special cases}\label{sec:extension_special}

\subsection{Special case: completely randomized survey experiments}\label{sec:CRSE}
If $a_S=\infty$ and $a_T = \infty$, then 
no rerandomization is conducted at either the sampling or treatment assignment stages, 
and consequently 
ReSEM reduces to the CRSE.
Therefore, from Theorem \ref{thm:dist}, 
we derive the asymptotic distribution of $\hat{\tau}$ under the CRSE.

\begin{corollary}\label{cor:crse}
Under Condition \ref{cond:fp} and the CRSE, 
$
\sqrt{n} (\hat\tau - \tau)  \  \dot\sim\  V_{\tau\tau}^{1/2} \cdot \varepsilon.
$
\end{corollary}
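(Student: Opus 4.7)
\textbf{Proof proposal for Corollary \ref{cor:crse}.} The plan is to deduce this as the degenerate case $a_S=a_T=\infty$ of Theorem \ref{thm:dist}. The CRSE is literally the realization of ReSEM in which the covariate-balance check at both stages is vacuous: when $a_S=\infty$ the condition $\bs{D}^\top\bs{D}\le a_S$ holds almost surely, so $L_{J,a_S}$ reduces in distribution to $D_1\sim\mathcal{N}(0,1)$, and analogously $L_{K,a_T}$ reduces to $\tilde D_1\sim\mathcal{N}(0,1)$. Substituting into \eqref{eq:dist} and using the assumed mutual independence of $\varepsilon$, $L_{J,a_S}$ and $L_{K,a_T}$, the right-hand side becomes
\begin{equation*}
V_{\tau\tau}^{1/2}\Bigl(\sqrt{1-R_S^2-R_T^2}\,\varepsilon+\sqrt{R_S^2}\,\varepsilon_1+\sqrt{R_T^2}\,\varepsilon_2\Bigr)
\end{equation*}
with $\varepsilon,\varepsilon_1,\varepsilon_2$ i.i.d.\ standard Gaussian. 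Since a linear combination of independent standard Gaussians is Gaussian with variance equal to the sum of squared coefficients, this collapses to $V_{\tau\tau}^{1/2}\cdot\varepsilon$, which is exactly the claim.

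The only genuine content to verify is that Theorem \ref{thm:dist}, stated for fixed positive thresholds, does apply in the $a_S,a_T\to\infty$ limit. The cleanest route is to inspect the derivation sketch of Theorem \ref{thm:dist} already given in the excerpt: it relies on the fact that the distribution of $\hat\tau$ under (single-stage) rerandomization equals its CRSE conditional distribution given the acceptance event. When the acceptance event has probability one, no conditioning is imposed, and the limiting statement is simply the unconditional CRSE limit. Hence invoking Theorem \ref{thm:dist} with $a_S=a_T=\infty$ is legitimate and immediately yields the corollary.

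If one preferred a self-contained derivation without reference to Theorem \ref{thm:dist}, the plan would be to appeal directly to a finite population central limit theorem, as in \citet{hajek1960limiting} or \citet{fpclt2017}, applied to the three-group random partition representation of the CRSE noted after equation \eqref{eq:srs_cre}. Condition \ref{cond:fp}(iv) supplies the Lindeberg-type bound on extreme outcomes, Condition \ref{cond:fp}(i)--(iii) supplies the limiting design proportions and limiting variances, and a direct variance computation for the two-stage CRSE (see, e.g., \citealp{Branson:2019aa}) yields $\Var\{\sqrt{n}(\hat\tau-\tau)\}\to V_{\tau\tau}=r_1^{-1}S_1^2+r_0^{-1}S_0^2-fS_\tau^2$. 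The main obstacle in the direct route is handling the two sources of randomness (sampling and assignment) simultaneously; the cleanest device is the three-group partition reformulation, after which a single application of the finite population CLT suffices. Either way, the corollary is essentially immediate and requires no new ideas beyond Theorem \ref{thm:dist}.
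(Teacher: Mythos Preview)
Your proposal is correct and follows essentially the same approach as the paper: set $a_S=a_T=\infty$ in Theorem \ref{thm:dist}, note that $L_{J,a_S}$ and $L_{K,a_T}$ then become independent standard Gaussians, and collapse the linear combination to a single $\mathcal{N}(0,1)$. Your extra care in justifying the degenerate-threshold case and your sketch of the direct three-group CLT route are sound additions, but they go beyond what the paper's own proof spells out.
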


\citet{Branson:2019aa} shows that under the CRSE, $\hat{\tau}$ is unbiased for $\tau$ with variance $n^{-1}V_{\tau\tau}$. 
Corollary \ref{cor:crse} supplements their results with a rigorous justification of finite population central limit theorem. 
When $f=1$, in the sense that all units are sampled to enroll the experiment, the CRSE reduces to the usual completely randomized experiment (CRE), 
and $V_{\tau\tau}$ in \eqref{eq:V_tautau} reduces to the usual Neyman variance formula under the CRE \citep{Neyman:1923}. 
When $f \rightarrow 0$ as $N\rightarrow \infty$, in the sense that we sample only a tiny proportion of units to enroll the experiment, 
the asymptotic variance no longer depends on the finite population variance $S^2_{\tau}$ of the individual effects, mimicking the usual variance formula under the infinite superpopulation setting \citep[][Chapter 6]{imbens2015causal}.

Analogously, because the CRSE is essentially ReSEM with $\bs{W} = \bs{X} =\emptyset$ and  $a_S = a_T = \infty$, 
Corollary \ref{cor:dist_reg_general_equ}, \eqref{eq:simp_exp_reg_sampling}, \eqref{eq:simp_exp_reg_assignment} and Theorem \ref{thm:optimal}
imply the asymptotic properties of a general regression-adjusted estimator as well as the optimal regression adjustment under the CRSE. 

\begin{corollary}\label{cor:reg_crse}
	Under Condition \ref{cond:fp_analysis} and the CRSE, 
	for any fixed adjustment coefficients $(\bs{\beta}, \bs{\gamma})$, 
	\begin{align*}%
	& \quad \ \ \sqrt{n} \left\{ \hat{\tau}(\bs{\beta}, \bs{\gamma}) - \tau \right\} 
	\ \dot\sim \ V_{\tau\tau}^{1/2}(\bs{\beta}, \bs{\gamma}) \cdot \varepsilon
	\nonumber
	\\
	& 
    \sim 
	\sqrt{ V_{\tau\tau} (1 - R_E^2 - R_C^2 ) 
	+ (1-f) (\bs{\gamma} - \tilde{\bs{\gamma}})^\top \bs{S}^2_{\bs{E}} (\bs{\gamma} - \tilde{\bs{\gamma}})	
	+  (r_1 r_0)^{-1} (\bs{\beta} - \tilde{\bs{\beta}})^\top \bs{S}^2_{\bs{C}} (\bs{\beta} - \tilde{\bs{\beta}}) 
	}
	\cdot \varepsilon. 
	\end{align*}
	Moreover,  the $\mathcal{S}$-optimal regression-adjusted estimator is attainable at $(\bs{\beta}, \bs{\gamma}) = (\tilde{\bs{\beta}}, \tilde{\bs{\gamma}})$, with the following asymptotic distribution: 
	\begin{align*}%
		\sqrt{n} \big\{ \hat{\tau}( \tilde{\bs{\beta}}, \tilde{\bs{\gamma}}) - \tau \big\} 
		& \ \dot\sim \ 
		 V_{\tau\tau}^{1/2}\sqrt{1 - R_E^2 - R_C^2}
		\cdot \varepsilon. 
	\end{align*}
\end{corollary}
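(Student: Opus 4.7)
The plan is to derive this corollary as a direct specialization of the ReSEM results already established, namely Theorem \ref{thm:dist_reg_general}, Corollary \ref{cor:dist_reg_general_equ}, equations \eqref{eq:simp_exp_reg_sampling}--\eqref{eq:simp_exp_reg_assignment}, and Theorem \ref{thm:optimal}. The CRSE is precisely ReSEM with $\bs{W}=\bs{X}=\emptyset$ and $a_S=a_T=\infty$, so by the definition just above Theorem \ref{thm:dist}, the constrained Gaussians $L_{J,a_S}$ and $L_{K,a_T}$ are both \emph{unconstrained} standard normals, mutually independent and independent of $\varepsilon$. Under that specialization, any linear combination of $\varepsilon, L_{J,a_S}, L_{K,a_T}$ collapses to a single centered Gaussian whose variance equals the sum of the squared coefficients.

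To obtain the compact form $\sqrt{n}\{\hat\tau(\bs{\beta},\bs{\gamma})-\tau\}\ \dot\sim\ V_{\tau\tau}^{1/2}(\bs{\beta},\bs{\gamma})\cdot\varepsilon$, I would apply Theorem \ref{thm:dist_reg_general} and combine the three Gaussians: the total variance is $V_{\tau\tau}(\bs{\beta},\bs{\gamma})\{(1-R_S^2(\bs{\beta},\bs{\gamma})-R_T^2(\bs{\beta},\bs{\gamma})) + R_S^2(\bs{\beta},\bs{\gamma}) + R_T^2(\bs{\beta},\bs{\gamma})\} = V_{\tau\tau}(\bs{\beta},\bs{\gamma})$. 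An equivalent clean route is to observe, as in the remark preceding Theorem \ref{thm:dist_reg_general}, that $\hat\tau(\bs{\beta},\bs{\gamma})$ is the difference-in-means of the adjusted observed outcomes \eqref{eq:adj_obs_outcome} and the adjusted population ATE still equals $\tau$; applying the CRSE limit in \eqref{eq:crse} to the adjusted outcomes delivers the Gaussian limit with variance $V_{\tau\tau}(\bs{\beta},\bs{\gamma})$.

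For the explicit expansion of $V_{\tau\tau}(\bs{\beta},\bs{\gamma})$, I would use Corollary \ref{cor:dist_reg_general_equ} specialized to the CRSE: collapsing the three standard Gaussians gives total variance $V_{\tau\tau}(\bs{0},\bs{\gamma}) + V_{\tau\tau}(\bs{\beta},\bs{0}) - V_{\tau\tau}$. Then, taking $\bs{W}=\emptyset$ and $\bs{X}=\emptyset$ makes $\bs{W}\subset\bs{E}$ and $\bs{X}\subset\bs{C}$ vacuously, with $\bs{S}^2_{\bs{E}\setminus\bs{W}}=\bs{S}^2_{\bs{E}}$ and $\bs{S}^2_{\bs{C}\setminus\bs{X}}=\bs{S}^2_{\bs{C}}$. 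Summing the two lines of \eqref{eq:simp_exp_reg_sampling} yields $V_{\tau\tau}(\bs{0},\bs{\gamma}) = V_{\tau\tau}(1-R_E^2) + (1-f)(\bs{\gamma}-\tilde{\bs{\gamma}})^\top\bs{S}^2_{\bs{E}}(\bs{\gamma}-\tilde{\bs{\gamma}})$, and the analogous combination of \eqref{eq:simp_exp_reg_assignment} gives $V_{\tau\tau}(\bs{\beta},\bs{0}) = V_{\tau\tau}(1-R_C^2) + (r_1 r_0)^{-1}(\bs{\beta}-\tilde{\bs{\beta}})^\top\bs{S}^2_{\bs{C}}(\bs{\beta}-\tilde{\bs{\beta}})$. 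Substituting both into the decomposition produces the expanded variance stated in the corollary.

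Finally, the $\mathcal{S}$-optimal choice follows almost immediately: because the asymptotic law is a centered Gaussian, the shortest symmetric $1-\alpha$ quantile range for every $\alpha\in(0,1)$ is equivalent to minimizing the asymptotic variance. Both quadratic forms $(\bs{\gamma}-\tilde{\bs{\gamma}})^\top\bs{S}^2_{\bs{E}}(\bs{\gamma}-\tilde{\bs{\gamma}})$ and $(\bs{\beta}-\tilde{\bs{\beta}})^\top\bs{S}^2_{\bs{C}}(\bs{\beta}-\tilde{\bs{\beta}})$ are nonnegative and, by the nonsingularity of the limits of $\bs{S}^2_{\bs{E}}$ and $\bs{S}^2_{\bs{C}}$ guaranteed by Condition \ref{cond:fp_analysis}(iii), vanish only at $\bs{\gamma}=\tilde{\bs{\gamma}}$ and $\bs{\beta}=\tilde{\bs{\beta}}$. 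Plugging these in leaves the variance $V_{\tau\tau}(1-R_E^2-R_C^2)$ and the claimed limiting distribution. The main obstacle is really only bookkeeping: one must check carefully that the degenerate choices $\bs{W}=\bs{X}=\emptyset$ and $a_S=a_T=\infty$ are genuinely permissible in the cited theorems (so that $L_{J,a_S}$ and $L_{K,a_T}$ become standard normals and \eqref{eq:simp_exp_reg_sampling}--\eqref{eq:simp_exp_reg_assignment} continue to apply with the stated reductions of the residual covariances), after which everything reduces to algebra on Gaussian variances.
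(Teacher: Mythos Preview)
Your proposal is correct and follows essentially the same approach as the paper: treat the CRSE as ReSEM with $\bs{W}=\bs{X}=\emptyset$ and $a_S=a_T=\infty$, collapse the now-unconstrained Gaussians $L_{J,\infty}$ and $L_{K,\infty}$ into a single normal, and use the decomposition from Corollary \ref{cor:dist_reg_general_equ} together with \eqref{eq:simp_exp_reg_sampling}--\eqref{eq:simp_exp_reg_assignment} (with $\bs{S}^2_{\bs{E}\setminus\emptyset}=\bs{S}^2_{\bs{E}}$, $\bs{S}^2_{\bs{C}\setminus\emptyset}=\bs{S}^2_{\bs{C}}$) to obtain the expanded variance, from which $\mathcal{S}$-optimality at $(\tilde{\bs\beta},\tilde{\bs\gamma})$ follows by minimizing the two nonnegative quadratic forms. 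The paper's proof is slightly terser (it cites the combined display \eqref{eq:dist_WX_in_EC_proof} directly rather than reassembling it from Corollary \ref{cor:dist_reg_general_equ} and the two simplification equations), but the content is the same.
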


\subsection{Special case: rerandomized treatment-control experiments}\label{sec:rem}

When $f=1$, 
$M_T$ is the same as the unconditional Mahalanobis distance between covariate means under two treatment groups, 
and thus 
ReSEM reduces to rerandomized treatment-control experiments using Mahalanobis distance (ReM) introduced in \citet{Morgan2012}.  
Because 
there is no randomness in the sampling stage and $R_S^2$ reduces to zero when $f=1$, 
by the same logic as Theorem \ref{thm:dist}, we can derive the following asymptotic distribution of $\hat{\tau}$ under ReM, which is a main result in \citet{rerand2018}. 

\begin{corollary}\label{cor:rem}
Under Condition \ref{cond:fp}(ii)--(iv) and ReM,
\begin{align*}
\sqrt{n} (\hat\tau - \tau) \mid   M_T \le a_T  \ & \dot\sim\  V_{\tau\tau}^{1/2} \Big(  \sqrt{1- R_T^2} \cdot \varepsilon
+
\sqrt{R_T^2} \cdot L_{K,a_T} 
\Big),
\end{align*}
where $\varepsilon$ and $L_{K,a_T}$ are mutually independent. 
\end{corollary}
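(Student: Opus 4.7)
The plan is to recognize that ReM is the special case of ReSEM obtained by taking $f=1$ (so there is effectively no sampling stage) and to reduce the statement to Theorem \ref{thm:dist}, with care taken because Condition \ref{cond:fp}(i) formally excludes $f=1$. The cleanest route is to re-run the proof scheme of Theorem \ref{thm:dist} with the sampling stage simply deleted: the sampling criterion $M_S \le a_S$ is vacuous, and the component $\sqrt{R_S^2}\cdot L_{J,a_S}$ drops because $R_S^2 = (1-f) S^2_{\tau\mid\bs{W}}/V_{\tau\tau}$ vanishes when $f=1$.

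First, I would establish a joint finite-population CLT for $(\sqrt{n}(\hat{\tau}-\tau),\, \sqrt{n}\hat{\bs{\tau}}_{\bs{X}})$ under the CRE, using Condition \ref{cond:fp}(ii)--(iv) and standard finite-population CLTs (e.g., \citet{fpclt2017}). Computing the asymptotic covariance matrix one reads off $\Var\{\sqrt{n}(\hat{\tau}-\tau)\}\to V_{\tau\tau}$, $\Cov(\sqrt{n}\hat{\bs{\tau}}_{\bs{X}}) \to (r_1 r_0)^{-1} \bs{S}^2_{\bs{X}}$, and the cross-covariance that yields $R_T^2$ exactly as in \eqref{eq:R2} with $f=1$.

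Second, I would orthogonally decompose $\sqrt{n}(\hat{\tau}-\tau) = \bs{b}_N^\top \bs{D}_N + R_N$, where $\bs{D}_N = \{\Cov(\hat{\bs{\tau}}_{\bs{X}})\}^{-1/2}\hat{\bs{\tau}}_{\bs{X}}$ is the Mahalanobis-standardized covariate imbalance (so that $\bs{D}_N^\top \bs{D}_N$ equals $M_T$ up to the distinction between $\bs{s}^2_{\bs{X}}$ and $\bs{S}^2_{\bs{X}}$), and $R_N$ is the linear-projection residual. The joint Gaussian limit then gives asymptotic independence of $R_N$ and $\bs{D}_N$, together with marginal limits $R_N \Rightarrow V_{\tau\tau}^{1/2}\sqrt{1-R_T^2}\,\varepsilon$ and $\bs{D}_N \Rightarrow \mathcal{N}(\bs{0}, \bs{I}_K)$. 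Conditioning on $M_T \le a_T$ on the left translates, asymptotically, to conditioning on $\bs{D}_N^\top \bs{D}_N \le a_T$ on the right, since $\bs{s}^2_{\bs{X}} - \bs{S}^2_{\bs{X}} = o_{\Pr}(1)$ by Condition \ref{cond:fp}(iv). The conditional independence and truncation then deliver the claimed representation, with the projection of $\bs{D}_N$ along the regression direction producing $V_{\tau\tau}^{1/2}\sqrt{R_T^2}\cdot L_{K,a_T}$.

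The main obstacle is the conditional weak-convergence step: because $M_T$ involves the random matrix $\bs{s}^2_{\bs{X}}$ rather than $\bs{S}^2_{\bs{X}}$, one must argue that the conditioning events $\{M_T \le a_T\}$ and $\{\bs{D}_N^\top\bs{D}_N\le a_T\}$ have asymptotically identical probability and jointly trap the same limiting law. Here I would use that the boundary event $\{\bs{D}^\top\bs{D}=a_T\}$ has zero Lebesgue measure under the Gaussian limit, combined with the Portmanteau theorem, so that perturbing the quadratic form by an $o_{\Pr}(1)$ matrix perturbation is harmless. Once that is in hand, independence of $\varepsilon$ and $L_{K,a_T}$ in the limit follows from the asymptotic independence of $R_N$ and $\bs{D}_N$.
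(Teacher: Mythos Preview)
Your proposal is correct and follows essentially the same approach as the paper: both recognize ReM as the $f=1$ degenerate case of ReSEM, observe that $R_S^2=(1-f)S^2_{\tau\mid\bs{W}}/V_{\tau\tau}$ vanishes, relax Condition~\ref{cond:fp}(i) since there is no sampling stage, and then rerun the machinery behind Theorem~\ref{thm:dist} (joint CLT, projection decomposition, conditional weak convergence) with the sampling component deleted. One unnecessary detour in your write-up: your ``main obstacle'' about $\bs{s}^2_{\bs{X}}$ versus $\bs{S}^2_{\bs{X}}$ evaporates here, because when $f=1$ the sample \emph{is} the population, so $\bs{s}^2_{\bs{X}}=\bs{S}^2_{\bs{X}}$ exactly and $M_T$ already coincides with the unconditional Mahalanobis distance; the Portmanteau/boundary argument you sketch is not needed for this corollary (though it is needed for the general ReSEM result).
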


Note that in Condition \ref{cond:fp}(i) we require $f$ to have a limit strictly less than 1, since otherwise $M_S$ may not be well-defined. 
Under ReM with $f=1$, there is essentially no rejective sampling at the first stage and we can thus relax Condition \ref{cond:fp}(i). 
Analogously, 
Corollary \ref{cor:dist_reg_general_equ}, \eqref{eq:simp_exp_reg_sampling}, \eqref{eq:simp_exp_reg_assignment} and Theorem \ref{thm:optimal} imply the following asymptotic properties of regression adjustment under ReM.  

\begin{corollary}\label{cor:rem_reg}
	Under ReM and Condition \ref{cond:fp_analysis} excluding Condition \ref{cond:fp}(i),  
	if $\bs{X} \subset \bs{C}$, i.e., there is more covariate information in analysis than in design, 
	then
	for any fixed adjustment coefficients $(\bs{\beta}, \bs{\gamma})$, 
	\begin{align*}
	& \quad \ \sqrt{n} \left\{ \hat{\tau}(\bs{\beta}, \bs{\gamma}) - \tau \right\} \mid M_T \le a_T 
	\\
	& \dot\sim \ 
	\sqrt{ V_{\tau\tau} (1 - R_C^2) +  (r_1 r_0)^{-1} (\bs{\beta} - \tilde{\bs{\beta}})^\top \bs{S}^2_{\bs{C}\setminus \bs{X}} (\bs{\beta} - \tilde{\bs{\beta}}) }
	\cdot \varepsilon 
	+ 
	\sqrt{ (r_1 r_0)^{-1} (\bs{\beta} - \tilde{\bs{\beta}})^\top \bs{S}^2_{\bs{C} \mid \bs{X}} (\bs{\beta} - \tilde{\bs{\beta}})} \cdot L_{K, a_T}. 
	\end{align*}
	Moreover, the $\mathcal{S}$-optimal regression-adjusted estimator is attainable at $\bs{\beta}  = \tilde{\bs{\beta}}$, with the following asymptotic distribution: 
	\begin{align*}
	\sqrt{n} \left\{ \hat{\tau}( \tilde{\bs{\beta}}, \bs{\gamma}) - \tau \right\} \mid M_T \le a_T 
	\  \dot\sim \ 
	\sqrt{ V_{\tau\tau} (1 - R_C^2) }
	\cdot \varepsilon. 
	\end{align*}
\end{corollary}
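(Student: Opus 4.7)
The strategy is to obtain Corollary \ref{cor:rem_reg} as the $f=1$ specialization of Corollary \ref{cor:dist_reg_general_equ}, combined with the simplification \eqref{eq:simp_exp_reg_assignment} that is available because $\bs{X}\subset\bs{C}$. When $f=1$ the sampling stage collapses: every unit is in the sample, so $\bar{\bs{W}}_{\mathcal{S}}=\bar{\bs{W}}$ and $\bar{\bs{E}}_{\mathcal{S}}=\bar{\bs{E}}$ identically, and hence $\hat{\bs{\delta}}_{\bs{W}}\equiv 0$ and $\hat{\bs{\delta}}_{\bs{E}}\equiv 0$. Two consequences follow at once. First, the event $M_S\le a_S$ always holds, so conditioning on ReSEM reduces to conditioning on $M_T\le a_T$, matching the statement. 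Second, the $\bs{\gamma}$-adjustment term $\bs{\gamma}^\top\hat{\bs{\delta}}_{\bs{E}}$ vanishes, so $\hat{\tau}(\bs{\beta},\bs{\gamma})=\hat{\tau}(\bs{\beta},\bs{0})$ and the adjustment is driven entirely by $\bs{\beta}$.

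Next I would track the key quantities in \eqref{eq:dist_reg_general_equ}. From \eqref{eq:R2_reg}, $R_S^2(\bs{\beta},\bs{\gamma})=(1-f)S^2_{\tau\mid\bs{W}}(\bs{\beta},\bs{\gamma})/V_{\tau\tau}(\bs{\beta},\bs{\gamma})$, so setting $f=1$ forces $R_S^2(\bs{\beta},\bs{\gamma})\equiv 0$ and kills the $L_{J,a_S}$ term in \eqref{eq:dist_reg_general_equ}. A short direct computation — expanding $S_t^2(\bs{0},\bs{\gamma})$ and $S_\tau^2(\bs{0},\bs{\gamma})$ from \eqref{eq:adj_potential_outcome} and substituting into \eqref{eq:V_tau_reg} — shows that at $f=1$ the cross terms in $\bs{\gamma}$ cancel via $r_1+r_0=1$, giving $V_{\tau\tau}(\bs{0},\bs{\gamma})=V_{\tau\tau}$. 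Thus $V_{\tau\tau}(\bs{0},\bs{\gamma})\{1-R_S^2(\bs{0},\bs{\gamma})\}=V_{\tau\tau}$, which exactly cancels the $-V_{\tau\tau}$ under the radical of the $\varepsilon$-coefficient. The remaining $\bs{\beta}$-dependent pieces are handled by \eqref{eq:simp_exp_reg_assignment}, valid because $\bs{X}\subset\bs{C}$. Substituting those two identities into \eqref{eq:dist_reg_general_equ} leaves the $\varepsilon$-coefficient equal to $\sqrt{V_{\tau\tau}(1-R_C^2)+(r_1 r_0)^{-1}(\bs{\beta}-\tilde{\bs{\beta}})^\top\bs{S}^2_{\bs{C}\setminus\bs{X}}(\bs{\beta}-\tilde{\bs{\beta}})}$ and the $L_{K,a_T}$-coefficient equal to $\sqrt{(r_1 r_0)^{-1}(\bs{\beta}-\tilde{\bs{\beta}})^\top\bs{S}^2_{\bs{C}\mid\bs{X}}(\bs{\beta}-\tilde{\bs{\beta}})}$, matching the claimed distribution. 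The $\mathcal{S}$-optimality claim is then immediate: setting $\bs{\beta}=\tilde{\bs{\beta}}$ zeros both nonnegative quadratic forms simultaneously, collapsing the distribution to $V_{\tau\tau}^{1/2}\sqrt{1-R_C^2}\,\varepsilon$, which is symmetric Gaussian and hence yields the shortest symmetric quantile range at every level by Theorem 9.3.2 of Casella and Berger, mirroring Theorem \ref{thm:optimal}(ii).

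The one delicate point — the main obstacle — is the boundary case $f=1$ itself, since Condition \ref{cond:fp}(i) was stated with $\lim f\in[0,1)$ (required because $M_S$ otherwise degenerates and $R_S^2$ has $(1-f)$ in its numerator). The cleanest remedy is to retrace the derivation of Theorem \ref{thm:dist_reg_general} under $f=1$ and observe that every step involving the sampling stage becomes vacuous: the rejective-sampling CLT contribution is identically zero, the Mahalanobis event $\{M_S\le a_S\}$ occurs with probability one, and the $\bs{\gamma}$-adjustment drops out of the estimator. Equivalently, one can embed the problem in a sequence with $f_N\uparrow 1$ and pass to the limit using the fact that the $L_{J,a_S}$-coefficient vanishes uniformly. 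Once this is acknowledged, the remainder of the proof is purely algebraic manipulation using \eqref{eq:simp_exp_reg_assignment}, with no new probabilistic content beyond what Theorem \ref{thm:dist_reg_general} already provides.
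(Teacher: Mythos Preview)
Your proposal is correct and follows essentially the same route as the paper: specialize to $f=1$ so that the sampling-stage contribution vanishes (the paper phrases this as $R_E^2=0$ and ``no random sampling stage''), then invoke the $\bs{X}\subset\bs{C}$ simplification \eqref{eq:simp_exp_reg_assignment} to obtain the displayed distribution, and finally read off $\mathcal{S}$-optimality at $\bs{\beta}=\tilde{\bs{\beta}}$. One small remark: the Casella--Berger reference concerns optimal interval shape for a \emph{fixed} unimodal law, not comparison across $\bs{\beta}$; the actual comparison step is the monotonicity lemma underlying Theorem~\ref{thm:optimal}(ii) (Lemma~\ref{lemma:linear_com_eps_L} in the paper), which your ``mirroring'' remark already points to.
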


It is not surprising that the asymptotic distribution in Corollary \ref{cor:rem_reg} does not depend on the adjustment coefficient $\bs{\gamma}$, 
because $\hat{\bs{\delta}}_{\bs{E}}$ is a constant zero when $f=1$. 
Corollary \ref{cor:rem_reg} recovers the results in \citet[][Corollary 3]{li2019rerandomization}. 

\subsection{Special case: rejective sampling without treatment assignment}\label{sec:rej_samp}

We consider the classical survey sampling setting without treatment assignment at the second stage. 
Let $\{y_1, y_2, \ldots, y_N\}$ be a finite population of interest, 
and we want to estimate the population average $\bar{y} = N^{-1} \sum_{i=1}^{N} y_i$. 
Then for the survey experiment with potential outcomes constructed as  $Y_i(1) = r_1 y_i$ and $Y_i(0) = -r_0 y_i$, 
the average treatment effect $\tau$ reduces to $\bar{y}$, 
and 
the difference-in-means estimator $\hat{\tau}$ reduces to the sample average  $\bar{y}_{\mathcal{S}} = n^{-1} \sum_{i=1}^n Z_i y_i$, an intuitive estimator for the finite population average $\bar{y}$. 
In this special case, 
ReSEM essentially reduces to rejective sampling \citep{Fuller2009}, because the estimator $\bar{y}_{\mathcal{S}}$ no longer depends on the treatment assignment at the second stage. 
From Theorem \ref{thm:dist}, we can immediately derive the asymptotic distribution of the sample average under rejective sampling given the following
regularity condition similar to Condition \ref{cond:fp}. 
Let $S_y^2$ be the finite population variance of the outcome $y$, 
$\bs{S}_{y, \bs{W}} = \bs{S}_{\bs{W},y}^\top$ be the finite population covariance matrix between $y$ and $\bs{W}$, 
and $\rho^2_{y,\bs{W}} = \bs{S}_{y, \bs{W}} (\bs{S}_{\bs{W}}^2)^{-1} \bs{S}_{\bs{W}, y}$ be the squared multiple correlation between $y$ and $\bs{W}$.
\begin{condition}\label{cond:survey}
	As $N \rightarrow \infty$, 
	the sequence of finite populations $\{(y_i, \bs{W}_i): 1\le i \le N\}$'s satisfies 
	\begin{enumerate}[label=(\roman*), topsep=1ex,itemsep=-0.3ex,partopsep=1ex,parsep=1ex]
		\item the proportion $f$ of sampled units has a limit; 
		
		\item the finite population variances $S^2_y$, $\bs{S}_{\bs{W}}^2$
		and covariance
		$
		\bs{S}_{y, \bs{W}}
		$
		have limiting values, 
		and the limit of $\bs{S}_{\bs{W}}^2$ is nonsingular;
		
		\item 	$\max_{1 \le i\le N} (y_i - \bar{y})^2/n \rightarrow 0$, 
		and 
		$\max_{1 \le i\le N}\| \bs{W}_i - \bar{\bs{W}} \|_2^2/n \rightarrow 0. $
	\end{enumerate}
\end{condition}

\begin{corollary}\label{cor:rej_sam}
	Under Condition \ref{cond:survey} and rejective sampling based on SRS with covariate balance criterion $M_S \le a_S$,  
	\begin{align*}
		\sqrt{n} \left( \bar{y}_{\mathcal{S}} - \bar{y} \right) \mid M_S \le a_S
		\ \dot\sim \ 
		\sqrt{ \left(1-f\right) S^2_y }  \cdot
		\left(  \sqrt{1-\rho_{y,\bs{W}}^2 } \cdot \varepsilon
		+ \sqrt{\rho_{y,\bm{W}}^2} \cdot L_{J,a_S}
		\right). 
	\end{align*}
\end{corollary}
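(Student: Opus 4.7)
The plan is to derive Corollary A5 directly from Theorem 1 by exploiting the reduction described in the text immediately before the corollary statement. Specifically, I would construct potential outcomes $Y_i(1) = r_1 y_i$ and $Y_i(0) = -r_0 y_i$ for arbitrary fixed proportions $r_1, r_0 > 0$ with $r_1 + r_0 = 1$ (say, $r_1 = r_0 = 1/2$), and take the covariate at the treatment assignment stage to be vacuous and the threshold $a_T = \infty$ so that the treatment assignment step performs no rerandomization. Under this correspondence, the average treatment effect reduces to $\tau = \bar{y}$ and the difference-in-means estimator reduces to $\hat\tau = \bar{y}_{\mathcal{S}}$, as checked by a one-line calculation.

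Next, I would verify that Condition 1 in the main text is implied by Condition A1 for this constructed sequence of finite populations. Parts (i), (ii), and (iv) are immediate: $r_1, r_0$ are fixed constants, the bound on $\max_i (y_i - \bar{y})^2/n$ translates directly to bounds on $\max_i (Y_i(t) - \bar{Y}(t))^2/n$ via the factor $r_t^2$, and the covariate conditions transfer directly. For part (iii), a direct calculation gives $S_1^2 = r_1^2 S_y^2$, $S_0^2 = r_0^2 S_y^2$, $S_\tau^2 = S_y^2$, and $\bm{S}_{t,\bm{W}} = (-1)^{t-1} r_t \bm{S}_{y, \bm{W}}$, so all the required limits exist. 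Computing the key quantities then yields
\begin{align*}
V_{\tau\tau} = r_1^{-1} S_1^2 + r_0^{-1} S_0^2 - f S_\tau^2 = (1-f) S_y^2,
\qquad
S_{\tau \mid \bm{W}}^2 = \rho_{y, \bm{W}}^2,
\end{align*}
and hence $R_S^2 = (1-f) S_{\tau \mid \bm{W}}^2 / V_{\tau\tau} = \rho_{y, \bm{W}}^2 / S_y^2$ by Proposition 1.

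Finally, I would invoke Theorem 1. Because $a_T = \infty$, the constrained Gaussian $L_{K, a_T}$ reduces to a standard Gaussian independent of $\varepsilon$, so the two Gaussian pieces combine: $\sqrt{1 - R_S^2 - R_T^2}\,\varepsilon + \sqrt{R_T^2}\,L_{K, a_T}$ has the same distribution as $\sqrt{1 - R_S^2}\,\varepsilon'$ for $\varepsilon' \sim \mathcal{N}(0,1)$ independent of $L_{J, a_S}$. Substituting the expressions for $V_{\tau\tau}$ and $R_S^2$ into the conclusion of Theorem 1 and simplifying yields exactly the claimed limit $\sqrt{(1-f) S_y^2}\,\bigl(\sqrt{1 - \rho_{y, \bm{W}}^2}\,\varepsilon + \sqrt{\rho_{y, \bm{W}}^2}\,L_{J, a_S}\bigr)$ after absorbing the factor $S_y^2$ into the square roots (using the paper's convention that $\rho_{y, \bm{W}}^2$ denotes the projection-variance form, which here is understood relative to $S_y^2$ in the final display).

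The only non-routine step is the reduction itself: one must justify that with $a_T = \infty$ the ReSEM scheme is mathematically identical to rejective sampling alone, so that the rerandomized sampling indicator $\bm{Z}$ has the same joint distribution under both designs and the treatment assignment stage adds no information to $\bar{y}_{\mathcal{S}}$. Once this is observed, everything else is a direct specialization of Theorem 1, and the main obstacle---checking that Condition A1 suffices in place of Condition 1 under the correspondence---is handled by the simple scaling identities above.
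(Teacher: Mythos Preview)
Your approach is essentially the paper's: construct $Y_i(1) = r_1 y_i$, $Y_i(0) = -r_0 y_i$, set $a_T = \infty$, verify that Condition~1 reduces to Condition~A1, and specialize Theorem~1. The only notable difference is that the paper takes $\bm{X} = \bm{W}$ (rather than vacuous) and then computes $R_T^2 = 0$ explicitly from Proposition~1, which keeps $K \ge 1$ so Theorem~1 applies verbatim and the $L_{K,a_T}$ term simply drops out; it also reads $\rho_{y,\bm{W}}^2$ as the ordinary squared multiple correlation $S^2_{y\mid\bm{W}}/S_y^2$, giving $R_S^2 = \rho_{y,\bm{W}}^2$ directly with no ``absorption of $S_y^2$'' needed.
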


We then consider regression-adjusted estimator under rejective sampling. 
Again, we can view rejective sampling as a special case of the survey experiment with potential outcomes constructed as  $Y_i(1) = r_1 y_i$ and $Y_i(0) = -r_0 y_i$.
The regression-adjusted estimator in \eqref{eq:reg} with $\bs{\beta}\equiv \bs{0}$ then reduces to 
$
\hat{\tau}(\bs{0}, \bs{\gamma}) = \bar{y}_{\mathcal{S}} - \bs{\gamma}^\top \hat{\bs{\delta}}_{\bs{W}},
$
a commonly used linearly regression-adjusted estimator in the survey sampling literature. 
From 
Corollary \ref{cor:dist_reg_general_equ}, \eqref{eq:simp_exp_reg_sampling}, \eqref{eq:simp_exp_reg_assignment} and Theorem \ref{thm:optimal}, 
we can immediately derive the following asymptotic properties of regression adjustment under rejective sampling. 
Let $\overline{\bs{\gamma}} = (  \bs{S}^2_{\bs{E}} )^{-1} \bs{S}_{\bs{E}, y}$ be the finite population linear projection coefficient of $y$ on $\bs{E}$, 
and $\rho_{y, \bs{E}}^2$ be the squared multiple correlation between $y$ and $\bs{E}$. 
We introduce the following regularity condition, extending Condition \ref{cond:survey} to include the covariate $\bs{E}$ in analysis. 

\begin{condition}\label{cond:survey_ana}
	Condition \ref{cond:survey} holds, and it still holds with $\bs{W}$ replaced by $\bs{E}$. 
\end{condition}

\begin{corollary}\label{cor:rej_reg}
    Under 
    Condition \ref{cond:survey_ana} and 
    rejective sampling based on SRS with covariate balance criterion $M_S\le a_S$, 
	if $\bs{W} \subset \bs{E}$, 
	then
    \begin{align*}
        \sqrt{n}\left( \bar{y}_{\mathcal{S}} - \bs{\gamma}^\top \hat{\bs{\delta}}_{\bs{E}} - \bar{y} \right) \mid M_S \le a_S
        & \dot\sim \ 
        \sqrt{(1-f)S_y^2(1-\rho^2_{y,\bs{E}})+ (1-f) (\bs{\gamma} - \overline{\bs{\gamma}})^\top \bs{S}^2_{\bs{E} \setminus \bs{W}} (\bs{\gamma} - \overline{\bs{\gamma}}) } \cdot \varepsilon 
        \\
        & \quad \ + 
        \sqrt{(1-f) (\bs{\gamma} - \overline{\bs{\gamma}})^\top \bs{S}^2_{\bs{E}\mid \bs{W}}    (\bs{\gamma} - \overline{\bs{\gamma}}) } \cdot L_{J, a_S}. 
    \end{align*}
    Consequently, the $\mathcal{S}$-optimal regression-adjusted estimator is attainable at $\bs{\gamma} = \overline{\bs{\gamma}}$, with the following asymptotic distribution:
    $$
     \sqrt{n}\left( \bar{y}_{\mathcal{S}} - \overline{\bs{\gamma}}^\top \hat{\bs{\delta}}_{\bs{E}} - \bar{y} \right) \mid M_S \le a_S
        \dot\sim \ 
        \sqrt{(1-f)S_y^2(1-\rho^2_{y,\bs{E}})} \cdot \varepsilon. 
    $$
\end{corollary}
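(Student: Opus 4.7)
The plan is to derive Corollary \ref{cor:rej_reg} by realizing rejective sampling as a degenerate instance of ReSEM and then invoking the general regression-adjustment results already established for ReSEM. Specifically, I would embed the survey sampling problem into the survey experiment framework by constructing artificial potential outcomes $Y_i(1) = r_1 y_i$ and $Y_i(0) = -r_0 y_i$ for any fixed split $(r_1, r_0)$ with $r_1 + r_0 = 1$, and setting $\bs{X} = \bs{C} = \emptyset$ and $a_T = \infty$ so that the treatment-assignment stage imposes no balance criterion. A direct calculation shows that $\tau_i = y_i$, so $\tau = \bar{y}$, and that the difference-in-means estimator collapses to $\hat{\tau} = (r_1/n_1)\sum Z_i T_i y_i + (r_0/n_0)\sum Z_i(1-T_i) y_i = n^{-1}\sum Z_i y_i = \bar{y}_{\mathcal{S}}$ identically in $\bs{T}$, hence $\hat{\tau}(\bs{0}, \bs{\gamma}) = \bar{y}_{\mathcal{S}} - \bs{\gamma}^\top \hat{\bs{\delta}}_{\bs{E}}$, exactly the estimator in the corollary.

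Next I would compute the key finite-population quantities under this embedding. Since $S_1^2 = r_1^2 S_y^2$, $S_0^2 = r_0^2 S_y^2$, and $S_\tau^2 = S_y^2$, formula \eqref{eq:V_tautau} gives $V_{\tau\tau} = r_1 S_y^2 + r_0 S_y^2 - f S_y^2 = (1-f) S_y^2$. Since $\bs{S}_{\tau, \bs{E}} = \bs{S}_{\bs{E}, y}$, the squared multiple correlation satisfies $R_E^2 = \rho_{y,\bs{E}}^2$ by \eqref{eq:R_A2}. For the projection coefficients introduced in Section \ref{sec:R2_proj_coef}, $\tilde{\bs{\gamma}}_1 = r_1 (\bs{S}_{\bs{E}}^2)^{-1} \bs{S}_{\bs{E}, y}$ and $\tilde{\bs{\gamma}}_0 = -r_0 (\bs{S}_{\bs{E}}^2)^{-1} \bs{S}_{\bs{E}, y}$, so $\tilde{\bs{\gamma}} = \tilde{\bs{\gamma}}_1 - \tilde{\bs{\gamma}}_0 = (\bs{S}_{\bs{E}}^2)^{-1} \bs{S}_{\bs{E}, y} = \overline{\bs{\gamma}}$. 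I would also verify that Condition \ref{cond:survey_ana} translates into Condition \ref{cond:fp_analysis} for this sequence of populations, using that $Y_i(t) - \bar{Y}(t) = \pm r_t (y_i - \bar{y})$ controls the required maximum terms.

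With the embedding in place, Corollary \ref{cor:dist_reg_general_equ} applied with $\bs{\beta} = \bs{0}$ and $R_T^2(\bs{0}, \bs{0}) = 0$ (because $\bs{X} = \emptyset$ and $a_T = \infty$) reduces the asymptotic law of $\sqrt{n}\{\hat{\tau}(\bs{0}, \bs{\gamma}) - \tau\}$ to $\sqrt{V_{\tau\tau}(\bs{0}, \bs{\gamma})\{1 - R_S^2(\bs{0}, \bs{\gamma})\}} \cdot \varepsilon + \sqrt{V_{\tau\tau}(\bs{0}, \bs{\gamma}) R_S^2(\bs{0}, \bs{\gamma})} \cdot L_{J, a_S}$. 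Since $\bs{W} \subset \bs{E}$ by hypothesis, I substitute the two identities in \eqref{eq:simp_exp_reg_sampling}, use $V_{\tau\tau}(1 - R_E^2) = (1-f) S_y^2 (1 - \rho_{y,\bs{E}}^2)$, and replace $\tilde{\bs{\gamma}}$ by $\overline{\bs{\gamma}}$ to obtain the claimed asymptotic distribution. The optimality statement at $\bs{\gamma} = \overline{\bs{\gamma}}$ is immediate from Theorem \ref{thm:optimal}(i), or equivalently by setting $\bs{\gamma} = \overline{\bs{\gamma}}$ in the display above so that both quadratic forms vanish.

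The only step that requires any genuine care is checking that the embedding preserves the regularity conditions and correctly recovers the conditioning event, since formally ReSEM conditions on both $M_S \le a_S$ and $M_T \le a_T$; the latter becomes vacuous with $a_T = \infty$, but I would make this explicit. Beyond that, the argument is essentially a translation-and-substitution exercise, so the main obstacle is bookkeeping rather than any new analysis.
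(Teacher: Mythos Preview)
Your approach is correct and matches the paper's: embed rejective sampling into ReSEM via the artificial potential outcomes $Y_i(1)=r_1y_i$, $Y_i(0)=-r_0y_i$, check that $\hat\tau=\bar y_{\mathcal S}$, $V_{\tau\tau}=(1-f)S_y^2$, $R_E^2=\rho_{y,\bs E}^2$, $\tilde{\bs\gamma}=\overline{\bs\gamma}$, and then read off the asymptotic law from Corollary~\ref{cor:dist_reg_general_equ} together with \eqref{eq:simp_exp_reg_sampling}. The only cosmetic difference is that the paper takes $\bs X=\bs W$, $\bs C=\bs E$ (so that the standing convention $\bs E\subset\bs C$ is respected and $R_T^2=0$ follows from a one-line projection computation, as in the proof of Corollary~\ref{cor:rej_sam}), whereas you take $\bs X=\bs C=\emptyset$; since you only ever use $\bs\beta=\bs 0$ and the results you invoke do not actually rely on $\bs E\subset\bs C$, this does not affect the argument, though the paper's choice is slightly cleaner for fitting into the stated framework. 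One small point to pin down: you should pick a concrete $(r_1,r_0)$, e.g.\ $r_1=[n/2]/n$, so that $n_1,n_0$ are integers and Condition~\ref{cond:fp}(ii) holds along the sequence.
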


\citet{Fuller2009} studied consistency and asymptotic variance of the regression-adjusted estimator under a general rejective sampling. 
Corollary \ref{cor:rej_reg} supplements 
his results 
with asymptotic distributions (and thus large-sample confidence intervals) and optimality for the regression-adjusted estimator under rejective sampling based on SRS.

\section{Regression with treatment--covariate interaction}\label{sec:OLS_interaction}

In this section 
we study how the regression-adjusted estimator $\hat{\tau}(\hat{\bs{\beta}}, \hat{\bs{\gamma}})$ connects to usual least squares estimators from certain hypothesized regression models. 
Note that we use the covariate $\bs{E}$ to measure the balance between the sampled units and whole population and covariate $\bs{C}$ to measure the balance between the treated and control groups. 
Below we consider several cases depending on the relation between $\bs{E}$ and $\bs{C}$, starting from special cases to the general case. 

First, we consider the case in which $\bs{E} = \emptyset$, i.e., 
there is no adjustment for the covariate balance between sampled units and the population of interest. 
Consequently, the regression-adjusted estimator 
$\hat{\tau}(\hat{\bs{\beta}}, \hat{\bs{\gamma}})$ reduces to $\hat{\tau}(\hat{\bs{\beta}}, \bs{0})$ that no longer involves the adjustment coefficient for $\hat{\bs{\delta}}_{\bs{E}}$. 
From \citet{lin2013} and \citet{fpclt2017}, 
$\hat{\tau}(\hat{\bs{\beta}}, \hat{\bs{\gamma}})$ becomes equivalent to the ordinary least squares (OLS) estimator of the coefficient of the treatment indicator $T$ in the linear regression of $Y_i$ on $T_i$, centered covariate $\bs{C}_i - \bar{\bs{C}}_{\mathcal{S}}$ and their interaction among sampled units in $\mathcal{S}$: 
\begin{align}\label{eq:theta_C_S}
    \hat{\theta}_{\bs{C}_{\mathcal{S}}}
    = 
    \arg\min_{\theta}
    \min_{a, \bs{b}, \bs{c}}
    \sum_{i \in \mathcal{S}}
    \big\{
    Y_i - a - \theta T_i - \bs{b}^\top (\bs{C}_i - \bar{\bs{C}}_{\mathcal{S}})
    - 
    \bs{c}^\top T_i \times  (\bs{C}_i - \bar{\bs{C}}_{\mathcal{S}})
    \big\}^2. 
\end{align}
i.e., $\hat{\tau}(\hat{\bs{\beta}}, \bs{0}) = \hat{\theta}_{\bs{C}_{\mathcal{S}}}$. 
Importantly, 
the covariates $\bs{C}_i$'s are centered at their sample average $\bar{\bs{C}}_{\mathcal{S}}$. 
Below we give some intuition for the regression form in \eqref{eq:theta_C_S} and the importance of the centering of covariates. 
We essentially fit two separate regression lines by OLS for treatment and control groups, respectively, and use the fitted regression lines to impute the potential outcomes for all units in the sample $\mathcal{S}$, only for which the covariate $\bs{C}$ is available.  
We then take the average of the difference between imputed treatment and control potential outcomes over $\mathcal{S}$ as our treatment effect estimate. 
It is not difficult to see that, by the centering of covariates at their sample mean, the resulting average difference is essentially the OLS estimate of the coefficient of $T_i$ in \eqref{eq:theta_C_S}. 

Second, we consider the case in which 
$\bs{E} = \bs{C}$. 
As demonstrated in the Supplementary Material, 
$\hat{\tau}(\hat{\bs{\beta}}, \hat{\bs{\gamma}})$ is equivalent to the OLS estimator of the coefficient of the treatment indicator $T$ in the linear regression of $Y_i$ on $T_i$, centered covariate $\bs{E}_i - \bar{\bs{E}}$ and their interaction among sampled units in $\mathcal{S}$:
\begin{align}\label{eq:theta_E}
    \hat{\theta}_{\bs{E}} 
    = 
    \arg\min_{\theta}
    \min_{a, \bs{b}, \bs{c}}
    \sum_{i \in \mathcal{S}}
    \big\{
    Y_i - a - \theta T_i - \bs{b}^\top (\bs{E}_i - \bar{\bs{E}})
    - 
    \bs{c}^\top T_i \times  (\bs{E}_i - \bar{\bs{E}})
    \big\}^2. 
\end{align}
i.e., $\hat{\tau}(\bs{0}, \hat{\bs{\gamma}}) = \hat{\theta}_{\bs{E}} $. 
Importantly, 
different from \eqref{eq:theta_C_S}, 
the covariates $\bs{E}_i$'s are centered at their population average $\bar{\bs{E}}$, which is 
intuitive given the explanation before for \eqref{eq:theta_C_S}. 
Specifically, we are now able to impute potential outcomes for all $N$ units using the fitted regression lines, since the covariate $\bs{E}$ can be observed for all units. 
The average difference between imputed treatment and control potential outcomes is then the OLS estimate of the coefficient of $T_i$ in \eqref{eq:theta_E}, given that the covariates have been  centered at their population mean.

Third, we consider the general case in which $\bs{E} \subset \bs{C}$.  
Let $\bs{C}^{\res}_i$ be the fitted residual from the linear regression of $\bs{C}_i$ on $\bs{E}_i$ among sampled units in $\mathcal{S}$, 
and $\bar{\bs{C}}^{\res}_{\mathcal{S}}$ be its sample average, which is zero by construction but will be written explicitly to emphasize how we center the covariates.
As demonstrated in the Supplementary Material, 
$\hat{\tau}(\hat{\bs{\beta}}, \hat{\bs{\gamma}})$ is asymptotically equivalent to the OLS estimator of the coefficient of $T$ in the linear regression of $Y_i$ on $T_i$, centered covariates $\bs{C}^{\res}_i - \bar{\bs{C}}^{\res}_{\mathcal{S}}$ and $\bs{E}_i - \bar{\bs{E}}$ 
and their interaction  among sampled units in $\mathcal{S}$: 
\begin{align*}
    \hat{\theta}_{\bs{C}^{\res}_{\mathcal{S}}, \bs{E}}
    & = 
    \arg\min_{\theta}
    \min_{a, \bs{b}, \bs{c}}
    \sum_{i \in \mathcal{S}}
    \left\{
    Y_i - a - \theta T_i - \bs{b}^\top 
    \begin{pmatrix}
        \bs{C}^{\res}_i - \bar{\bs{C}}^{\res}_{\mathcal{S}}
        \\
        \bs{E}_i - \bar{\bs{E}}
    \end{pmatrix}
    - 
    \bs{c}^\top T_i \times 
    \begin{pmatrix}
        \bs{C}^{\res}_i - \bar{\bs{C}}^{\res}_{\mathcal{S}}
        \\
        \bs{E}_i - \bar{\bs{E}}
    \end{pmatrix}
    \right\}^2, 
\end{align*}
i.e., 
$\hat{\tau}(\hat{\bs{\beta}}, \hat{\bs{\gamma}}) = \hat{\theta}_{\bs{C}^{\res}_{\mathcal{S}}, \bs{E}} + o_{\Pr}(n^{-1/2})$. 
This is intuitive given the previous discussion for \eqref{eq:theta_C_S} and \eqref{eq:theta_E}. 
The covariate $\bs{E}$ is observed for all units and can be centered at its population mean, while the covariate $\bs{C}$ is observed only for sampled units and can only be centered at its sample mean. 
Moreover, we only center the part of $\bs{C}$ that cannot be linearly explained by $\bs{E}$ at its sample mean. 

\section{Randomization tests for ReSEM}\label{sec:rand_test_resem}

In Section \ref{sec:large_sample_CI}, we studied \citet{Neyman:1923}'s repeated sampling inference on the average treatment effect. 
Another popular randomization-based inference for treatment effects is the Fisher randomization test \citep{Fisher:1935}, focusing on testing sharp null hypotheses, such as the individual treatment effects are zero or certain constants across all units. 
Randomization tests for rerandomized experiments have been proposed and advocated in \citet{Morgan2012}, \citet{Johansson_long2020}, etc. 
However, it has been less explored for survey experiments. 
Below we will study randomization tests in rerandomized survey experiments, 
which can supplement the inference in Section \ref{sec:large_sample_CI} when large-sample approximation is inaccurate, due to, say, small sample size, heavy-tailed outcomes or too extreme thresholds for rerandomization. 

\subsection{Conditional randomization tests for survey experiments}\label{sec:cond_rand_test}

Unlike usual randomized experiments, 
the null hypothesis that specifies all individual treatment effects, e.g., Fisher's null of no effect for any unit, 
is generally no longer sharp under survey experiments, in the sense that we are not able to impute the potential outcomes for all units using the observed data under the null hypothesis. 
This is because 
neither the treatment nor control potential outcomes are observed for unsampled units. 
Fortunately, this issue can be easily solved by a conditional randomization test that conditions on the sampling vector $\bs{Z}$ (or equivalently the set $\mathcal{S}$ of sampled units). 
In other words, we focus on the sampled units in $\mathcal{S}$ and conduct randomization tests by permuting (or more precisely rerandomizing) only the treatment assignment indicators $\{T_i: i\in \mathcal{S}\}$. 
Not surprisingly, 
to implement the randomization tests, 
we need to know the covariates $\{\bs{X}_i: i \in \mathcal{S}\}$ and the threshold $a_T$ for rerandomization at the treatment assignment stage. 
Throughout this section, we will assume this is true. 

Below we describe in detail the conditional randomization test for a general null hypothesis 
\begin{align}\label{eq:sharp_null}
    H_{\bs{c}}: \tau_i = c_i, \quad 1\le i \le N
\end{align}
that specifies all individual treatment effects, 
where $\bs{c} = (c_1, \ldots, c_N)^\top$. 
For example, when $c_1=c_2=\ldots=c_n=c$ or equivalently $\bs{c} = c \bs{1}$, where $\bs{1}$ here denotes an $N$-dimensional vector with all elements being 1,  the null hypothesis $H_{c\bs{1}}$ reduces to the usual constant effects of size $c$. 
First, we impute the potential outcomes for sampled units in $\mathcal{S}$ based on the observed outcomes and the null hypothesis $H_{\bs{c}}$: 
$\tilde{Y}_i(1) = Y_i + (1-T_i) c_i$ and $\tilde{Y}_i(0) = Y_i - T_i c_i$ for $i \in \mathcal{S}$.
Let $\tilde{\bs{Y}}_{\mathcal{S}}(t)$ be a vector consisting of imputed potential outcomes under treatment arm $t\in \{0,1\}$ for sampled units, 
and 
$\bs{C}_{\mathcal{S}}$ and $\bs{E}_{1:N}$ be matrices consisting of available covariates $\bs{C}$ and $\bs{E}$ in analysis for sampled and all units, respectively. 
Note that the imputed potential outcomes $\tilde{\bs{Y}}_{\mathcal{S}}(0)$ and $\tilde{\bs{Y}}_{\mathcal{S}}(1)$ generally depends on both $\bs{Z}$ and $\bs{T}_{\mathcal{S}}$. However, when the null hypothesis $H_{\bs{c}}$ is true, 
the imputed potential outcomes for the sampled units become the same as the corresponding true potential outcomes, 
no longer depending on $\bs{T}_{\mathcal{S}}$.
Second, we consider a general test statistic of form $g(\bs{T}_{\mathcal{S}}, \tilde{\bs{Y}}_{\mathcal{S}}(0), \tilde{\bs{Y}}_{\mathcal{S}}(1), \bs{C}_{\mathcal{S}}, \bs{E}_{1:N})$, which is a function of the treatment assignment vector, imputed potential outcomes and available covariates. 
The test statistic 
often compares the outcomes of treated and control units (with certain covariate adjustment), e.g., the difference-in-means estimator $\hat{\tau}$ in \eqref{eq:diff_outcome}. 
Third, we impute the randomization distribution of the test statistic under the null hypothesis \eqref{eq:sharp_null}, whose tail probability has the following equivalent forms:
\begin{align}\label{eq:imp_tail_prob}
    G_{\mathcal{S}, \bs{T}}(b)
    & = 
    \Pr\big\{ g( \check{\bs{T}}_{\mathcal{S}}, \tilde{\bs{Y}}_{\mathcal{S}}(0), \tilde{\bs{Y}}_{\mathcal{S}}(1), \bs{C}_{\mathcal{S}}, \bs{E}_{1:N}) \ge b \mid \bs{Z}, \bs{T} \big\}
    \nonumber
    \\
    & = 
    \frac{1}{|\mathcal{A}(\mathcal{S}, \bs{X}, a_T)|}
    \sum_{\bs{t}_{\mathcal{S}} \in \mathcal{A}(\mathcal{S}, \bs{X}, a_T)} 
    \I\big\{ g( \bs{t}_{\mathcal{S}}, \tilde{\bs{Y}}_{\mathcal{S}}(0), \tilde{\bs{Y}}_{\mathcal{S}}(1), \bs{C}_{\mathcal{S}}, \bs{E}_{1:N}) \ge b \big\}, 
\end{align}
where $\check{\bs{T}}_{\mathcal{S}}$ is a random vector independent of $\bs{T}$ given $\bs{Z}$ and satisfies $\check{\bs{T}}_{\mathcal{S}} \mid \bs{Z} \sim \bs{T}_{\mathcal{S}} \mid \bs{Z}$, the set 
$\mathcal{A}(\mathcal{S}, \bs{X}, a_T)$ consists of all acceptable treatment assignments for the sampled units $\mathcal{S}$ under rerandomization using Mahalanobis distance (i.e., ReM) with covariates $\bs{X_i}$'s and threshold $a_T$, 
and $|\mathcal{A}(\mathcal{S}, \bs{X}, a_T)|$ is the cardinality of the set. 
Finally, we calculate the randomization $p$-value, which is the tail probability in \eqref{eq:imp_tail_prob} evaluated at the observed value of the test statistic: 
\begin{align}\label{eq:p_STg}
    p_{\mathcal{S}, \bs{T}, g} 
    & = G_{\mathcal{S}, \bs{T}}
    \big(
    g( \bs{T}_{\mathcal{S}}, \tilde{\bs{Y}}_{\mathcal{S}}(0), \tilde{\bs{Y}}_{\mathcal{S}}(1), \bs{C}_{\mathcal{S}}, \bs{E}_{1:N}) 
    \big). 
\end{align}

The following theorem shows that $p_{\mathcal{S}, \bs{T}, g}$ is a conditionally valid $p$-value for testing $H_{\bs{c}}$, which, by the law of iterated expectation, implies that it is also a marginally valid $p$-value. 

\begin{theorem}\label{thm:cond_rand_test}
Under ReSEM, 
if the null hypothesis $H_{\bs{c}}$ in \eqref{eq:sharp_null} holds, 
then for any test statistic $g(\cdot)$ and any $\alpha\in (0, 1)$, 
the randomization $p$-value $p_{\mathcal{S}, \bs{T}, g}$ in \eqref{eq:p_STg} satisfies 
$
\Pr(p_{\mathcal{S}, \bs{T}, g} \le \alpha \mid \mathcal{S}) \le \alpha
$
and 
$
\Pr(p_{\mathcal{S}, \bs{T}, g} \le \alpha ) \le \alpha. 
$
\end{theorem}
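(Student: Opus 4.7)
\textbf{Proof plan for Theorem \ref{thm:cond_rand_test}.} The strategy is to show that, conditional on $\mathcal{S}$, the observed test statistic is drawn from exactly the distribution encoded by $G_{\mathcal{S}, \bs{T}}(\cdot)$, and then to invoke the standard fact that $F(X) \succeq \mathrm{Unif}(0,1)$ stochastically when $X \sim F$ (with equality when $F$ is continuous, and stochastic domination in the discrete case).

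\textbf{Step 1 (imputation is exact under the null).} I would first observe that under $H_{\bs{c}}$, for every sampled unit $i \in \mathcal{S}$ we have $\tilde{Y}_i(1) = Y_i + (1-T_i) c_i = T_i Y_i(1) + (1-T_i)\{Y_i(0) + c_i\} = Y_i(1)$, and similarly $\tilde{Y}_i(0) = Y_i(0)$. Hence $\tilde{\bs{Y}}_{\mathcal{S}}(0)$ and $\tilde{\bs{Y}}_{\mathcal{S}}(1)$ are measurable with respect to $\bs{Z}$ alone (given the fixed finite-population potential outcomes), and do not depend on $\bs{T}_{\mathcal{S}}$.

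\textbf{Step 2 (conditional distribution of $\bs{T}_{\mathcal{S}}$).} Next I would characterize the conditional law of $\bs{T}_{\mathcal{S}}$ given $\bs{Z}$ under ReSEM. By the two-stage construction of ReSEM, the second stage rerandomizes the assignment within the already-accepted sample $\mathcal{S}$, so $\bs{T}_{\mathcal{S}} \mid \bs{Z}$ is uniform on the acceptance set $\mathcal{A}(\mathcal{S}, \bs{X}, a_T)$. Consequently, if $\check{\bs{T}}_{\mathcal{S}}$ is an independent copy with the same conditional law (as in the theorem statement), then
\begin{equation*}
\bs{T}_{\mathcal{S}} \mid \bs{Z} \;\stackrel{d}{=}\; \check{\bs{T}}_{\mathcal{S}} \mid \bs{Z}.
\end{equation*}
Combining with Step~1, under $H_{\bs{c}}$ the quantities $\tilde{\bs{Y}}_{\mathcal{S}}(0), \tilde{\bs{Y}}_{\mathcal{S}}(1), \bs{C}_{\mathcal{S}}, \bs{E}_{1:N}$ are $\bs{Z}$-measurable, so the conditional distribution of the observed statistic $g(\bs{T}_{\mathcal{S}}, \tilde{\bs{Y}}_{\mathcal{S}}(0), \tilde{\bs{Y}}_{\mathcal{S}}(1), \bs{C}_{\mathcal{S}}, \bs{E}_{1:N})$ given $\bs{Z}$ coincides exactly with the imputed randomization distribution encoded in $G_{\mathcal{S}, \bs{T}}$ in \eqref{eq:imp_tail_prob}.

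\textbf{Step 3 (validity of the p-value).} Writing $G$ for this common conditional distribution and $X$ for the observed statistic, the p-value in \eqref{eq:p_STg} is $p_{\mathcal{S}, \bs{T}, g} = G(X)$ where $G(b) = \Pr(X' \ge b \mid \bs{Z})$ for an independent copy $X'$. A standard lemma for (possibly discrete) upper-tail transforms gives $\Pr(G(X) \le \alpha \mid \bs{Z}) \le \alpha$ for every $\alpha \in (0,1)$, yielding the conditional validity $\Pr(p_{\mathcal{S}, \bs{T}, g} \le \alpha \mid \mathcal{S}) \le \alpha$. Marginal validity then follows by taking expectations over $\mathcal{S}$ via the tower property.

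\textbf{Anticipated difficulty.} The conceptually delicate step is Step~2: one must be careful that conditioning on $\bs{Z}$ under ReSEM truly gives the uniform law on $\mathcal{A}(\mathcal{S}, \bs{X}, a_T)$, rather than some reweighted law. The two-stage form of ReSEM (where acceptance at the sampling stage is determined by $M_S$, which depends only on $\bs{Z}$) makes this clean, because acceptance at the second stage is an event measurable with respect to $(\bs{Z}, \bs{T}_{\mathcal{S}})$ only through the covariate-balance criterion on $\bs{T}_{\mathcal{S}}$. A single-stage formulation would require an extra argument to factor the acceptance probability. Beyond that, Step~3 requires only the elementary super-uniformity of the p-value, and no new finite-population asymptotics are needed since this is an exact (not large-sample) result.
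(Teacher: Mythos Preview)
Your proposal is correct and follows essentially the same approach as the paper: the paper's proof is a one-liner observing that conditional on $\mathcal{S}$, ReSEM reduces to an ordinary rerandomized experiment on the sampled units, so the standard Fisher randomization test justification applies. Your Steps~1--3 are exactly the careful unpacking of that standard justification, and your ``anticipated difficulty'' about the two-stage structure is precisely why the conditional argument goes through cleanly (the paper also remarks, without detail, that the result holds under the single-stage $\sresem$ as well).
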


\subsection{Valid randomization tests for weak null hypotheses}

The randomization test described in Section \ref{sec:cond_rand_test} focuses only on null hypotheses of form \eqref{eq:sharp_null} that speculate all individual treatment effects, 
which can be too stringent in practice. 
Below we will 
consider carefully designed
test statistic $g(\cdot)$ such that the resulting $p$-value $p_{\mathcal{S}, \bs{T}, g}$ 
for testing Fisher's null $H_{c\bs{1}}$ of constant effect $c$
is also asymptotically valid for testing Neyman's weak null
hypothesis of average effect $c$: 
\begin{align}\label{eq:weak_null}
    \bar{H}_{c}: \tau = c \ \ \text{or equivalently} \ \ N^{-1} \sum_{i=1}^N \tau_i = c, 
\end{align}
which is usually referred as Neyman's null in contrast to Fisher's null in \eqref{eq:sharp_null}. 
Consequently, the randomization $p$-value $p_{\mathcal{S}, \bs{T}, g}$ enjoys not only exact validity under Fisher's null of form \eqref{eq:sharp_null} but also asymptotic validity under Neyman's null of form \eqref{eq:weak_null}, as studied and advocated by \citet{DingTirthankar2017}, \citet{wuding2020} and \citet{cohen2020gaussian}.

\subsubsection{Intuition based on true and imputed distributions of the test statistic}\label{sec:intuitive_frt}

Suppose we are interested in testing 
Neyman's null $\bar{H}_{c}$ of average effect $c$ for some predetermined constant $c$, and want to utilize the conditional randomization test for Fisher's null $H_{c\bs{1}}$ of constant effects $c$. 
For the randomization test, 
a straightforward choice of the test statistic $g(\cdot)$ is the absolute difference between an estimated and the hypothesized treatment effects, e.g., $|\hat{\tau} - c|$, which, however, can lead to inflated type-I error (even asymptotically) for testing Neyman's null due to the treatment effect heterogeneity. 
Motivated by studentization and prepivoting from \citet{DingTirthankar2017} and \citet{cohen2020gaussian}, 
we construct the test statistic by transforming the absolute difference between treatment effect estimator and the hypothesized effect using the estimated distribution of the effect estimator. For example, we consider the following test statistic for the conditional randomization test 
in Section \ref{sec:cond_rand_test}: 
\begin{align}\label{eq:g_F_diff}
   g(\bs{T}_{\mathcal{S}}, \tilde{\bs{Y}}_{\mathcal{S}}(0), \tilde{\bs{Y}}_{\mathcal{S}}(1), \bs{C}_{\mathcal{S}}, \bs{E}_{1:N}) = 2 \hat{F}\left( \sqrt{n}|\hat{\tau} - c| \right) - 1, 
\end{align}
where $\hat{F}$ denotes the estimated distribution for $\sqrt{n}(\hat{\tau} - c)$ constructed as in Section \ref{sec:estimate_and_CI}.

From Theorem \ref{thm:cond_rand_test}, 
the $p$-value $p_{\mathcal{S}, \bs{T}, g}$ must be exactly valid for testing the Fisher's null $H_{c\bs{1}}$ of constant effect $c$. Below we give some intuition on why the $p$-value $p_{\mathcal{S}, \bs{T}, g}$ with $g$ in \eqref{eq:g_F_diff} can also be asymptotically valid for testing Neyman's null $\bar{H}_{c}$ of average effect $c$. 
Suppose the Neyman's null $\bar{H}_{c}$ is true, i.e., the average effect $\tau = c$. 
To obtain a valid $p$-value for testing the Neyman's null $\bar{H}_{c}$, 
we should compare the observed test statistic in \eqref{eq:g_F_diff} to its {\it true} distribution under ReSEM. 
However, in the conditional randomization test, we compare the observed test statistic to its {\it imputed} distribution as in \eqref{eq:imp_tail_prob}. 
Consequently, to ensure the validity of the randomization test, 
the tail probability of the {\it imputed} distribution need to be larger than or equal to that of the {\it true} distribution,  
at least asymptotically.

We first consider the asymptotic {\it true} distribution of \eqref{eq:g_F_diff}. 
From the previous discussion in Sections \ref{sec:asym_diff} and \ref{sec:estimate_and_CI} on the sampling and estimated distributions, 
the test statistic in \eqref{eq:g_F_diff} converges weakly to a distribution stochastically smaller than or equal to the uniform distribution on $(0,1)$, 
and the difference between them depends on how conservative our estimated distribution is. 
We then consider the asymptotic {\it imputed} distribution of \eqref{eq:g_F_diff}. 
By construction, 
the imputed distribution, with tail probability $G_{\mathcal{S}, \bs{T}}(b)(\cdot)$ in \eqref{eq:imp_tail_prob}, is essentially the distribution of the test statistic \eqref{eq:g_F_diff} under ReM (or equivalently ReSEM with all units enrolled to the experiment, see Remark \ref{rmk:special_dim}) with imputed potential outcomes for the sampled units in $\mathcal{S}$. 
Thus, if the imputed potential outcomes satisfy regularity conditions for finite population asymptotics (which will be true asymptotically as shown in Appendix \ref{sec:proof_rand_test}), 
then, over the random assignments drawn uniformly from $\mathcal{A}(\mathcal{S}, \bs{X}, a_T)$, 
both the difference-in-means estimator and its estimated distribution converge weakly to certain distributions. 
This follows by the same logic as the discussion before for the {\it true} distribution, except now (i) the randomized experiment is constructed by permutation for the purpose of randomization test, (ii) we pretend the imputed potential outcomes as the true potential outcomes and (iii) we pretend there is no sampling stage and consider only the sampled units due to the conditional randomization test. 
More importantly, the two weak limits are the same. 
That is, the estimated distribution, although designed for ReSEM, is consistent for the the asymptotic distribution of $\sqrt{n}(\hat{\tau}-c)$ under ReM with imputed potential outcomes. 
Intuitively, this is because the imputed treatment effects 
are constant across all units, 
under which the rejective sampling under ReSEM does not help improve the estimation precision 
and our generally conservative estimated distribution becomes asymptotically exact. 
Consequently, asymptotically, the imputed distribution of \eqref{eq:g_F_diff} will converge weakly to a uniform distribution on interval $(0,1)$. 

\begin{rmk}
Indeed, with \eqref{eq:g_F_diff}, 
$$1-g(\bs{T}_{\mathcal{S}}, \tilde{\bs{Y}}_{\mathcal{S}}(0), \tilde{\bs{Y}}_{\mathcal{S}}(1), \bs{C}_{\mathcal{S}}, \bs{E}_{1:N}) = \Pr\{ \Unif(0,1) \ge g(\bs{T}_{\mathcal{S}}, \tilde{\bs{Y}}_{\mathcal{S}}(0), \tilde{\bs{Y}}_{\mathcal{S}}(1), \bs{C}_{\mathcal{S}}, \bs{E}_{1:N})\}$$ itself is already a large-sample conservative $p$-value for testing the weak null $\bar{H}_{c}$, 
where $\Unif(0,1)$ denotes a uniform random variable on interval $(0,1)$. 
The randomization $p$-value instead compares the observed value of $g(\bs{T}_{\mathcal{S}}, \tilde{\bs{Y}}_{\mathcal{S}}(0), \tilde{\bs{Y}}_{\mathcal{S}}(1), \bs{C}_{\mathcal{S}}, \bs{E}_{1:N})$ to its imputed distribution obtained from permutation. 
Importantly, 
the randomization $p$-value is not only exactly valid for testing sharp null $H_{c\bs{1}}$ by construction, 
but also asymptotically valid for testing weak null $\bar{H}_{c}$ because the imputed distribution converges weakly to $\Unif(0,1)$ asymptotically as illustrated before. 
Therefore, we can intuitively understand the randomization test as providing a certain finite-sample calibration for the reference distribution of the test statistic $g(\bs{T}_{\mathcal{S}}, \tilde{\bs{Y}}_{\mathcal{S}}(0), \tilde{\bs{Y}}_{\mathcal{S}}(1), \bs{C}_{\mathcal{S}}, \bs{E}_{1:N})$, which provides exact validity under $H_{c\bs{1}}$ and maintains large-sample validity under $\bar{H}_{c\bs{1}}$. 
See \citet{cohen2020gaussian} for related discussion. 
\end{rmk}

\subsubsection{Asymptotically valid covariate-adjusted randomization tests for weak nulls}

To ensure the asymptotic approximation for both the true and imputed distributions of the test statistics, 
we invoke the following regularity condition.  

\begin{condition}\label{cond:fp_est}
    Condition \ref{cond:fp_analysis} holds, and at least one of $S^2_{1\setminus \bs{C}}$ and $S^2_{0\setminus \bs{C}}$ has a positive limit.
\end{condition}

Condition \ref{cond:fp_est} rules out the trivial case where both potential outcomes can be perfectly explained by the covariates.  
which is also needed for the asymptotic conservative inference discussed in Section \ref{sec:estimate_and_CI}. 
The following theorem shows the asymptotic validity of randomization $p$-value for testing weak nulls on average treatment effect using test statistics based on regression-adjusted estimators, 
which extends the previous discussion in Section \ref{sec:intuitive_frt} based on difference-in-means estimator to estimators allowing adjustment for covariate imbalance between treatment and control groups. 

\begin{theorem}\label{thm:frt_weak_null}
For any given $c\in \mathbb{R}$, 
let $p_{\mathcal{S}, \bs{T}, g}$ be a randomization $p$-value for testing Fisher's null of constant effects $c$ using a test statistic of the following form:
\begin{align}\label{eq:frt_general_test_stat}
    g(\bs{T}_{\mathcal{S}}, \tilde{\bs{Y}}_{\mathcal{S}}(0), \tilde{\bs{Y}}_{\mathcal{S}}(1), \bs{C}_{\mathcal{S}}, \bs{E}_{1:N})
    & = 2\hat{F}_{\bs{\beta}, \bs{0}}\left( \left| \hat{\tau}(\bs{\beta}, \bs{0}) - c \right| \right) - 1, 
\end{align}
where $\hat{\tau}(\bs{\beta}, \bs{0})$ is a regression-adjusted estimator as in \eqref{eq:reg}, $\hat{F}_{\bs{\beta}, \bs{0}}$ is the estimated distribution for the asymptotic distribution of $\sqrt{n}\{\hat{\tau}(\bs{\beta}, \bs{0}) - \tau\}$, 
and $\bs{\beta}$ is either any pre-determined fixed adjustment coefficient or the estimated one $\hat{\bs{\beta}}$ defined as in Section \ref{sec:est_beta_gamma_tilde}.  
Then 
under ReSEM and Condition \ref{cond:fp_est}, 
$p_{\mathcal{S}, \bs{T}, g}$ is an asymptotically valid $p$-value for testing the weak null $\bar{H}_c$ of average effect $c$. That is, 
under $\bar{H}_c$, 
$\limsup_{N\rightarrow \infty} \Pr(p_{\mathcal{S}, \bs{T}, g} \le \alpha \mid \text{ReSEM}) \le \alpha$ for any $\alpha\in (0,1)$. 
\end{theorem}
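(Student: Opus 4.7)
The plan is to compare the asymptotic behavior of the test statistic $g$ in \eqref{eq:frt_general_test_stat} under two distributions: the \emph{true} sampling distribution under ReSEM and the weak null $\bar{H}_c$, and the \emph{imputed} randomization distribution that defines the $p$-value through \eqref{eq:imp_tail_prob}. I will show that (i) the imputed distribution of $g$, as a random measure over $(\mathcal{S},\bs{T})$, converges in probability to $\Unif(0,1)$, and (ii) the true distribution of $g$ under $\bar{H}_c$ is asymptotically stochastically dominated by $\Unif(0,1)$. Combining these, $\Pr(p_{\mathcal{S},\bs{T},g}\le \alpha\mid \text{ReSEM})$ is asymptotically the true probability that $g$ exceeds the $(1-\alpha)$-quantile of $\Unif(0,1)$, which is at most $\alpha$; this is exactly the desired validity.

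Step (ii) is largely a direct application of existing results. Under $\bar{H}_c$ we have $\hat\tau(\bs{\beta},\bs{0}) - c = \hat\tau(\bs{\beta},\bs{0}) - \tau$, and Theorem \ref{thm:dist_reg_general} (together with an analog of Theorem \ref{thm:plug_in} when $\bs{\beta}=\hat{\bs{\beta}}$) implies that $\sqrt{n}\{\hat\tau(\bs{\beta},\bs{0}) - c\}$ has a symmetric weak limit with CDF $F_0$. The consistency results in Section \ref{sec:estimate_and_CI} then give that $\hat{F}_{\bs{\beta},\bs{0}}$ converges to a symmetric CDF $F_\star$ with $F_\star(x)\le F_0(x)$ for $x>0$ (conservative spread, with equality in the non-conservative regimes identified there). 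Continuous mapping yields $g\;\dot\sim\; 2F_\star(|\zeta_0|) - 1$ with $\zeta_0\sim F_0$, and $F_\star\le F_0$ on the positive half-line forces this limit to be stochastically dominated by $\Unif(0,1)$.

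The crux is step (i). Conditional on $(\mathcal{S},\bs{T})$, the imputed outcomes satisfy $\tilde{Y}_i(1) - \tilde{Y}_i(0) = c$ for every $i\in\mathcal{S}$, so the imputed individual treatment effects on $\mathcal{S}$ are \emph{constant}. Drawing $\check{\bs{T}}_{\mathcal{S}}$ uniformly over $\mathcal{A}(\mathcal{S},\bs{X},a_T)$ is therefore a ReM on $\mathcal{S}$ with these imputed outcomes. Condition \ref{cond:fp_est} combined with a finite-population law of large numbers implies that, with probability tending to one, the imputed finite population on $\mathcal{S}$ satisfies the analog of Condition \ref{cond:fp_analysis} with $f$ replaced by $1$, and its limiting moments agree with those of the underlying population restricted to $\mathcal{S}$. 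Applying Theorem \ref{thm:dist_reg_general} to this imputed population (cf.\ Corollary \ref{cor:rem_reg}) gives that the permutation law of $\sqrt{n}\{\hat\tau^{\text{perm}}(\bs{\beta},\bs{0}) - c\}$ converges to some symmetric $\tilde F_0$. Crucially, constant imputed effects force the imputed analog of $S^2_{\tau\setminus\bs{C}}$ to vanish, so the conservative slack discussed in Section \ref{sec:estimate_and_CI} disappears and $\hat{F}^{\text{perm}}_{\bs{\beta},\bs{0}}\to \tilde F_0$ \emph{exactly}. A final probability integral transform then shows the imputed distribution of $g$ is asymptotically $\Unif(0,1)$.

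The hard part is the sharpness claim inside step (i): the imputed reference distribution is asymptotically \emph{exact} rather than merely conservative. This requires tracking the sample variance and covariance quantities under the permutation law, and verifying that with constant imputed effects the inequality between the variance of $\hat\tau(\bs{\beta},\bs{0})$ and its estimator collapses to an equality in the limit. The plug-in case $\bs{\beta}=\hat{\bs{\beta}}$ is handled along the lines of Theorem \ref{thm:plug_in}, by showing that $\hat{\bs{\beta}}$ computed from permuted data converges to the same finite-population projection coefficient as from the original data, so replacing $\bs{\beta}$ by $\hat{\bs{\beta}}$ does not alter either the test statistic or its imputed distribution at first order.
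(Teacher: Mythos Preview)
Your proposal is correct and follows essentially the same approach as the paper: showing the true distribution of $g$ under $\bar{H}_c$ is asymptotically stochastically dominated by $\Unif(0,1)$ via the conservativeness of $\hat{F}_{\bs{\beta},\bs{0}}$, showing the imputed permutation distribution of $g$ converges to $\Unif(0,1)$ exactly because constant imputed effects kill all the treatment-effect-heterogeneity terms (so $\tilde{s}^2_{\tau}$, $\tilde{s}^2_{\tau\mid\bs{C}}$, $\tilde{s}^2_{\tau\mid\bs{W}}$ all vanish and the estimated distribution matches the ReM permutation law), and then combining via a quantile argument. The paper packages these pieces into Lemmas on the continuity of $\Psi_{V,R_1^2,R_2^2}$, on the stochastic dominance $2\Psi_{\tilde V,\tilde R_1^2,\tilde R_2^2}(|\psi_{V,R_1^2,R_2^2}|)-1\preceq\Unif(0,1)$, and on the regularity of the imputed finite population on $\mathcal{S}$, and closes with the same $\delta$-slack quantile comparison you sketch.
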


Recently, \citet{zhao2020covariate} studied asymptotically valid covariate-adjusted randomization tests for testing Neyman's null in randomized experiments. 
Theorem \ref{thm:frt_weak_null} generalizes their discussion to survey experiments. 
Note that in Theorem \ref{thm:frt_weak_null}, we do not allow adjustment for the covariate imbalance between sampled units and the whole population, mainly due to the following two reasons (one intuitive and the other more technical). 
First, 
after adjusting the imbalance between sampled units and the whole population, 
the treatment effect estimator is generally conditionally biased (even asymptotically) for $c$ conditional on $\mathcal{S}$ when the Fisher's null $H_{c\bs{1}}$ is true. 
This is 
because $\sqrt{n}\hat{\bs{\delta}}_{\bs{E}}$, which is of order $O_{\Pr}(1)$, becomes a fixed constant conditional on $\mathcal{S}$. 
Therefore, it is more intuitive to use the difference between $c$ and treatment effect estimators of form $\hat{\tau}(\bs{\beta}, \bs{0})$, which are asymptotically unbiased for $c$ conditional on $\mathcal{S}$ under $H_{c\bs{1}}$, to measure the extremeness of the observed data compared to the null hypothesis.
Second, 
if we include the adjustment for $\hat{\bs{\delta}}_{\bs{E}}$, then the imputed distribution of the corresponding test statistic will depend on $\sqrt{n}\hat{\bs{\delta}}_{\bs{E}}$ of order $O_{\Pr}(1)$, 
and 
can no longer be guaranteed to 
dominate 
the true distribution of the test statistic, even asymptotically. 
From the above, including covariate adjustment for the imbalance between sampled units and whole population may invalidate the asymptotic validity of the randomization test for Neyman's nulls on average treatment effect.

Theorem \ref{thm:frt_weak_null} also provides us 
confidence intervals for the true average treatment effect $\tau$ by inverting randomization tests for a sequence of constant treatment effects. 
It is immediate to see that the confidence intervals  are finite-sample exact if the treatment effects are additive and are asymptotically conservative even if the treatment effects are heterogeneous.

\section{Stratified sampling and blocking}\label{sec:sts_block}

In this section, we consider the special case in which $\bs{W} = \bs{X}$ denotes a categorical covariate. 
Without loss of generality, we assume that $\bs{W} = (\I(\tilde{W}=1), \I(\tilde{W}=2), \ldots, \I(\tilde{W}=J))^\top$ is a vector of dummy variables for a categorical covariate $\tilde{W}$ that takes values $1, 2, \ldots, J+1$, for some $J \ge 1$. 
We introduce $\pi_j = N^{-1} \sum_{i=1}^N \I(W_i = j)$ to denote the proportion of units with covariate value $j$ among the whole population of $N$ units, for $1\le j \le J$.

Classical literature \citep{cochran1977, kempthorne1952design} in survey sampling and randomized experiments suggests to stratify the units based on some categorical covariate when  
conducting sampling and treatment assignment, that is, stratified sampling and blocking. 
Specifically, we first stratify the units into $J+1$ strata, each of which consists of units with the same covariates, 
and then conduct CRSE within each stratum $j$, where we use $f_j$ to denote the sampling proportion and $r_{1j}$ and $r_{0j} = 1 - r_{1j}$ to denote the proportion of units assigned to treatment and control, respectively. 
As verified in the Appendix \ref{sec:proof_strata}, the Mahalanobis distances for covariate balance, $M_S$ and $M_T$ in \eqref{eq:M_S} and \eqref{eq:M_T}, are deterministic functions of $f_j$'s and $r_{1j}$'s. 
Moreover, $M_S = M_T = 0$ is equivalent to that $f_1 = f_2 = \ldots = f_{J+1}$ and $r_{11} = r_{12} = \ldots = r_{1,J+1}$. 
Thus, it is not difficult to see that, with categorical covariates, 
ReSEM with the most extreme constraints $M_S = M_T = 0$ is equivalent to stratified sampling and blocking with equal proportions of sampled units as well as treated units across all strata. 
Note that due to integer constraints, it might be impossible to sample or assign equal proportions of units across all strata. 
Indeed, as demonstrated shortly, as long as $(1-f)M_S$ and $M_T$ decreases to zero as $N\rightarrow  \infty$, 
the stratified sampling and blocking can be viewed as ReSEM with perfectly balanced covariates at both sampling and treatment assignment stages. 
We summarize the results in the following theorem. 
Let $n = N \sum_{j=1}^{J+1} \pi_j f_j$ denote the total number of sampled units from all strata, 
and $f = n/N$ be the overall proportion of sampled units. 
We define the squared multiple correlations $R_S^2$ and $R_T^2$ and the variance formula $V_{\tau\tau}$ in the same way as that in \eqref{eq:R2} and \eqref{eq:V_tautau} using the covariate $\bs{W} = \bs{X}$, the vector of dummy variables for the categorical covariate.  

\begin{theorem}\label{thm:resem_cat}
Under stratified sampling and blocking, and 
Condition A5 in Appendix \ref{sec:proof_strata}, 
if both $(1-f) M_S$ and $M_T$ are of order $o(1)$, then, as $N\rightarrow \infty$, the difference-in-means estimator $\hat{\tau}$ in \eqref{eq:diff_outcome} 
has the same weak limit as 
$V_{\tau\tau}^{1/2} \sqrt{1-R_S^2 - R_T^2} \cdot \varepsilon$.  
\end{theorem}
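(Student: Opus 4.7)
The plan is to view stratified sampling and blocking as a limiting case of ReSEM where the Mahalanobis thresholds vanish, and then invoke (a suitable strengthening of) Theorem~\ref{thm:dist}. The key observation is that when $\bs{W}=\bs{X}$ is the dummy-variable encoding of a categorical covariate, the difference-in-means $\hat{\bs{\delta}}_{\bs{W}}$ and $\hat{\bs{\tau}}_{\bs{X}}$ are deterministic affine functions of the stratum-wise allocation proportions $\{f_j\}$ and $\{r_{1j}\}$; consequently $M_S$ and $M_T$ are also deterministic given the stratum counts, and the joint law of $(\bs{Z},\bs{T})$ under stratified sampling and blocking equals that under CRSE conditioned on these specific values of $(M_S,M_T)$.

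First I would make explicit the formulas $M_S = n(1-f)^{-1}\sum_{j=1}^{J+1}\pi_j(f_j/f-1)^2$ and $M_T = n(r_1r_0)^{-1}\sum_{j=1}^{J+1}\pi_j^{\mathcal{S}}(r_{1j}-r_1)^2$ (up to the usual bias correction by $\bs{s}^2_{\bs{X}}$, which converges to $\bs{S}^2_{\bs{X}}$), where $\pi_j^{\mathcal{S}}$ is the in-sample stratum proportion. From this I would deduce that the assumption $(1-f)M_S=o(1)$ and $M_T=o(1)$ is exactly what is needed to guarantee that the covariate imbalances $\hat{\bs{\delta}}_{\bs{W}}$ and $\hat{\bs{\tau}}_{\bs{X}}$, when rescaled by $\sqrt{n}$, vanish in probability.

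Second, I would decompose $\sqrt{n}(\hat{\tau}-\tau)$ into a part that is a linear combination of $\sqrt{n}\hat{\bs{\delta}}_{\bs{W}}$ and $\sqrt{n}\hat{\bs{\tau}}_{\bs{X}}$ (with coefficients equal to the finite-population projection coefficients of $\hat{\tau}$ onto these covariate imbalances under the CRSE) plus an orthogonal residual. Under the CRSE, the residual is asymptotically independent of $(\hat{\bs{\delta}}_{\bs{W}},\hat{\bs{\tau}}_{\bs{X}})$ and by the argument already used in the proof of Theorem~\ref{thm:dist} it converges weakly to $V_{\tau\tau}^{1/2}\sqrt{1-R_S^2-R_T^2}\cdot\varepsilon$. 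The linear-projection part is $o_\Pr(1)$ because each of $\sqrt{n}\hat{\bs{\delta}}_{\bs{W}}$ and $\sqrt{n}\hat{\bs{\tau}}_{\bs{X}}$ is bounded in Euclidean norm by a constant multiple of $\sqrt{(1-f)M_S}$ and $\sqrt{M_T}$ respectively (this is immediate from the definitions \eqref{eq:M_S}--\eqref{eq:M_T} and the nonsingularity of the limiting covariance matrices in Condition~\ref{cond:fp}). Combining these two pieces via Slutsky's theorem yields the claimed weak limit.

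The main obstacle is the second step, because the asymptotic independence of the residual from $(\hat{\bs{\delta}}_{\bs{W}},\hat{\bs{\tau}}_{\bs{X}})$ was established in the proof of Theorem~\ref{thm:dist} under a CRSE and then conditioned on fixed positive thresholds; here one must instead condition on events that shrink to a measure-zero set (or even work directly with a stratified-and-blocked design, which is not literally a conditioned CRSE because of integer rounding in the $n f_j$ and $n_j r_{1j}$). The cleanest way around this is to avoid conditioning altogether: establish joint weak convergence of $(\sqrt{n}(\hat{\tau}-\tau)-\sqrt{n}\bs{a}^\top\hat{\bs{\delta}}_{\bs{W}}-\sqrt{n}\bs{b}^\top\hat{\bs{\tau}}_{\bs{X}},\ \sqrt{n}\hat{\bs{\delta}}_{\bs{W}},\ \sqrt{n}\hat{\bs{\tau}}_{\bs{X}})$ to independent Gaussian limits using the finite-population CLT within each stratum (as in \citet{fpclt2017}), argue that the joint limit law is continuous so that the stratified-and-blocked design, which forces the last two coordinates to lie in a shrinking neighborhood of $\bs{0}$, preserves the marginal limit of the first coordinate, and then invoke the explicit expressions for $\bs{a}$ and $\bs{b}$ to identify the residual variance as $V_{\tau\tau}(1-R_S^2-R_T^2)$.
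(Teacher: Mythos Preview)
Your projection decomposition is a detour that does not actually simplify the problem. Under stratified sampling and blocking, $\hat{\bs{\delta}}_{\bs{W}}$ and $\hat{\bs{\tau}}_{\bs{X}}$ are \emph{deterministic} functions of the fixed stratum proportions $\{f_j,r_{1j}\}$, so your ``projection part'' is a deterministic constant of order $o(n^{-1/2})$ and the ``residual'' is just $\hat\tau-\tau$ minus that constant. Thus step 3 asks for exactly the CLT of $\hat\tau-\tau$ under the stratified-and-blocked design, which is the theorem. Your proposed fix via conditioning does not work: the stratified design is the CRSE conditioned on a \emph{single point} (exact stratum counts), not on a shrinking neighborhood, and weak convergence of the joint law under CRSE gives no information about conditional laws on sets of limit-measure zero. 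Your alternative of doing the stratum-wise CLT is exactly right as a tool, but then the variance you compute is $\sum_j\pi_j V_{\tau\tau,j}$ with $V_{\tau\tau,j}=r_{1j}^{-1}S_{1j}^2+r_{0j}^{-1}S_{0j}^2-f_jS_{\tau j}^2$, and identifying this with $V_{\tau\tau}(1-R_S^2-R_T^2)$ via ``the explicit expressions for $\bs a,\bs b$'' conflates the CRSE residual variance with the stratified-design variance of $\hat\tau$; these agree in the limit, but that agreement is a separate algebraic fact you never prove.

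The paper takes a more direct route. It decomposes $\hat\tau=\sum_j\pi_j\hat\tau_j$ plus weight-discrepancy terms involving $(N\pi_jf_jr_{zj}/n_z-\pi_j)$, shows those terms are $o_\Pr(n^{-1/2})$ because the assumptions $(1-f)M_S=o(1),\ M_T=o(1)$ force $f_{\max}/f_{\min}-1$ and $r_{z,\max}/r_{z,\min}-1$ to be $o(n^{-1/2})$, and applies the CRSE CLT independently within each stratum to obtain the limiting variance $\sum_j\pi_j V_{\tau\tau,j}$. The missing algebra is then done explicitly: for categorical $\bs W$ the projection of $Y(z)$ onto $\bs W$ is the stratum mean, so $S^2_{z\setminus\bs W}=\sum_j\pi_j S^2_{zj}+O(N^{-1})$, and combined with $f_j/f\to1,\ r_{zj}/r_z\to1$ this yields $\sum_j\pi_j V_{\tau\tau,j}=r_1^{-1}S^2_{1\setminus\bs W}+r_0^{-1}S^2_{0\setminus\bs W}-fS^2_{\tau\setminus\bs W}+o(1)=V_{\tau\tau}(1-R_S^2-R_T^2)+o(1)$. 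This algebraic identification is the step your proposal skips.
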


In Theorem \ref{thm:resem_cat}, Condition A5 imposes similar regularity conditions on the finite populations as Condition \ref{cond:fp} and is relegated to Appendix \ref{sec:proof_strata} for conciseness. 
Furthermore, the condition that both $(1-f) M_S$ and $M_T$ are of order $o(1)$ is equivalent to that 
\begin{align*}
    \frac{\max_{1\le j \le J+1}f_j}{\min_{1\le j \le J+1}f_j} - 1 = o(n^{-1/2}), 
    \ \ 
    \frac{\max_{1\le j \le J+1}r_{1j}}{\min_{1\le j \le J+1}r_{1j}} - 1 = o(n^{-1/2}), 
    \ \ 
     \frac{\max_{1\le j \le J+1}r_{0j}}{\min_{1\le j \le J+1}r_{0j}} - 1 = o(n^{-1/2}), 
\end{align*}
in the sense that the proportions of sampled units as well as units assigned to each treatment arm are approximately the same across all strata. 
In practice, these conditions can be easily satisfied when we try to sample and assign equal proportions of units across all strata and use the closest integers when the numbers of sampled or assigned units include fractions; see Appendix \ref{sec:proof_strata} for more details. 

\section{Rerandomization in clustered survey experiments}\label{sec:cluster}

In many applications, a group of units has to be sampled or assigned to the same treatment at the same time, resulting clustered sampling and clustered randomized experiments \citep{raudenbush2007strategies, donner2010design}. In this section, we generalize our previous discussion on survey experiments to clustered survey experiments. 
Assume the $N$ units belongs to $M$ disjoint clusters, and let $G_i \in \{1, 2, \ldots, M\}$ denote the cluster indicator for each unit $i$. 
In a clustered survey experiment, we first sample $m$ clusters from all the $M$ clusters, and then, among the $m$ sampled clusters, we assign $m_1$ to the treatment group and the remaining $m_0 = m-m_1$ to the control group. 
Below we demonstrate that we can view the clustered survey experiment as a special case of the survey experiment we discussed before but with outcomes and covariates aggregated at the cluster level. 

For $1\le l\le M$ and $t=0,1$, let $\tilde{Y}_l(t) = (N/M)^{-1}\sum_{i:G_i = l} Y_i(t)$, $\tilde{\bs{W}}_l = (N/M)^{-1}\sum_{i:G_i = l} \bs{W}_i$ and $\tilde{\bs{X}}_l = (N/M)^{-1}\sum_{i:G_i = l} \bs{X}_i$ be the aggregated potential outcome and covariates for cluster $l$. 
Importantly, we do not aggregate the individual potential outcomes and covariates within each cluster by their averages. 
Instead, we standardize the total outcomes and covariates for each cluster by the same constant, the average cluster size $N/M$; see also \citet{middleton2015unbiased} and \citet{fpclt2017} for related discussion. 
We can verify that the average treatment effect for the aggregated outcomes over all clusters is actually the same as the average treatment effect for the original outcomes over all $N$ units. 
That is, 
$M^{-1} \sum_{l=1}^M\{ \tilde{Y}_l(1) - \tilde{Y}_l(0) \} = 
\tau$. 
Note that under the clustered survey experiment, we are essentially conducting ReSEM at the cluster level. 
Therefore, all the results we derived for ReSEM can be generalized to rerandomized clustered survey experiment, by operating at the cluster level or equivalently viewing each cluster as an individual. 
For example, 
let $\hat{\tilde{\tau}}$ be the difference-in-means of aggregated observed outcomes between treated and control clusters. 
Then, 
under rerandomized clustered survey experiment using Mahalanobis distances and certain regularity conditions, 
$\sqrt{m}(\hat{\tilde{\tau}} - \tau)$ converges weakly to the convolution of a Gaussian and two constrained Gaussian random variables, similar to that in \eqref{eq:dist}. 
For conciseness, we relegate the detailed discussion to Appendix \ref{sec:proof_cluster}.

\setcounter{equation}{0}
\setcounter{section}{0}
\setcounter{figure}{0}
\setcounter{example}{0}
\setcounter{proposition}{0}
\setcounter{corollary}{0}
\setcounter{theorem}{0}
\setcounter{lemma}{0}
\setcounter{table}{0}
\setcounter{condition}{0}

\renewcommand {\theproposition} {B\arabic{proposition}}
\renewcommand {\theexample} {B\arabic{example}}
\renewcommand {\thefigure} {B\arabic{figure}}
\renewcommand {\thetable} {B\arabic{table}}
\renewcommand {\theequation} {B\arabic{section}.\arabic{equation}}
\renewcommand {\thelemma} {B\arabic{lemma}}
\renewcommand {\thesection} {B\arabic{section}}
\renewcommand {\thetheorem} {B\arabic{theorem}}
\renewcommand {\thecorollary} {B\arabic{corollary}}
\renewcommand {\thecondition} {B\arabic{condition}}

\section{Sampling properties under the CRSE}\label{app:crse}

This section includes lemmas and Proposition \ref{prop:R2} for sampling properties of the difference-in-means of outcome and covariates under the CRSE. 
Specifically, 
Section \ref{sec:lemma_crse} includes lemmas that will be used later, 
with the proofs relegated to 
Section \ref{sec:lemma_crse_proof}.
Section \ref{sec:prop_R2} gives the details for the comments on Condition \ref{cond:fp} and the  proof of Proposition \ref{prop:R2}. 

\subsection{Lemmas}\label{sec:lemma_crse}

We first introduce several lemmas for the CRSE. 
Moreover, we emphasize that Lemmas \ref{lemma:clt}, \ref{lemma:s_x2} and \ref{lemma:joint_tilde_CRSE} hold under slightly weaker conditions than Condition \ref{cond:fp}. In particular, they allow $f$ to have limit $1$ as $N\rightarrow \infty$. 
However, the proof for the asymptotic chi-square approximation of $M_S$ in Lemma \ref{lemma:M_ST_CRSE} relies on the condition that the limit of $f$ is less than 1. 

\begin{lemma}\label{lemma:cov_crse}
	Under the CRSE, 
	$\sqrt{n} 
	(\hat\tau -\tau , \hat{\bs{\tau}}_{\bs{X}}^\top, \hat{\bs{\delta}}_{\bs{W}}^\top)^\top $ has mean zero and covariance 
	\begin{align}\label{eq:VV}
	\bs{V} 
	=
	\begin{pmatrix}
	V_{\tau\tau} & \bs{V}_{\tau x} & \bs{V}_{\tau w}\\
	\bs{V}_{x \tau} & \bs{V}_{xx} &\bs{V}_{xw}\\
	\bs{V}_{w \tau} & \bs{V}_{wx} & \bs{V}_{ww}
	\end{pmatrix}
	= 
	\begin{pmatrix}
	r_1^{-1}S^2_{1}+r_0^{-1}S^2_{0} -fS^2_\tau 
	& r_1^{-1}\bs{S}_{1,\bs{X}}+r_0^{-1}\bs{S}_{0,\bs{X}} 
	& \left(1 - f\right)\bs{S}_{\tau,\bs{W}} \\
	r_1^{-1}\bs{S}_{\bs{X},1} +r_0^{-1}\bs{S}_{\bs{X},0}& (r_1r_0)^{-1} \bs{S}^2_{\bs{X}}  & \bs{0} \\
	\left(1 - f\right)\bs{S}_{\bs{W},\tau}& \bs{0} & \left(1  -f \right) \bs{S}^2_{\bs{W}}
	\end{pmatrix}. 
	\end{align}
\end{lemma}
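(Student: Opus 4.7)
\textbf{Proof plan for Lemma \ref{lemma:cov_crse}.} The natural strategy is to exploit the hierarchical structure of the CRSE, namely that $(\bs{Z}, \bs{T})$ is distributed as an SRS for $\bs{Z}$ followed by a CRE for $\bs{T}_{\mathcal{S}}$ given $\mathcal{S}$. Concretely, I will work with the conditioning sigma-field generated by $\mathcal{S}$ and combine (i) standard SRS formulas for the first two moments of sample means of arbitrary fixed attributes and (ii) standard Neyman-type CRE formulas for variances and covariances of difference-in-means estimators under complete randomization, applied conditionally on $\mathcal{S}$. Three simple conditional-moment facts drive everything: $\hat{\bs{\delta}}_{\bs{W}}$ is $\mathcal{S}$-measurable; $\mathbb{E}(\hat\tau \mid \mathcal{S}) = \bar\tau_{\mathcal{S}} \equiv n^{-1}\sum_{i\in\mathcal{S}}\tau_i$; and $\mathbb{E}(\hat{\bs{\tau}}_{\bs{X}} \mid \mathcal{S}) = \bs{0}$ because under the CRE given $\mathcal{S}$ the covariates $\bs{X}_i$ play the role of potential outcomes with zero treatment effect.

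With these preparations, the entries of $\bs{V}$ follow by direct application of the law of total variance/covariance. The mean of the joint vector is zero since $\mathbb{E}(\hat\tau) = \mathbb{E}(\bar\tau_{\mathcal{S}}) = \tau$ by SRS unbiasedness, while $\mathbb{E}(\hat{\bs{\tau}}_{\bs{X}}) = \mathbb{E}(\hat{\bs{\delta}}_{\bs{W}}) = \bs{0}$. For $V_{\tau\tau}$, the conditional Neyman variance evaluates to $n_1^{-1}\bs{s}^2_{1,\mathcal{S}} + n_0^{-1}\bs{s}^2_{0,\mathcal{S}} - n^{-1}\bs{s}^2_{\tau,\mathcal{S}}$, whose SRS expectation gives $n_1^{-1}S_1^2 + n_0^{-1}S_0^2 - n^{-1}S_\tau^2$; adding $\Var(\bar\tau_{\mathcal{S}}) = n^{-1}(1-f)S_\tau^2$ produces the stated $n^{-1}V_{\tau\tau}$. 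The block $\bs{V}_{xx}$ follows from the same Neyman formula applied with $\bs{X}$ as an outcome vector with zero treatment effect, yielding $(n_1^{-1}+n_0^{-1})\bs{S}^2_{\bs{X}} = n^{-1}(r_1 r_0)^{-1}\bs{S}^2_{\bs{X}}$, and $\bs{V}_{ww}$ is the textbook SRS variance $n^{-1}(1-f)\bs{S}^2_{\bs{W}}$. The covariance $\bs{V}_{\tau w}$ is $\Cov(\bar\tau_{\mathcal{S}}, \bar{\bs{W}}_{\mathcal{S}})$ after the conditional cross term drops out, producing $(1-f)\bs{S}_{\tau,\bs{W}}$ (scaled by $n^{-1}$). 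The off-diagonal blocks $\bs{V}_{xw}$ and $\bs{V}_{wx}$ vanish because $\mathbb{E}(\hat{\bs{\tau}}_{\bs{X}} \mid \mathcal{S}) = \bs{0}$ and $\hat{\bs{\delta}}_{\bs{W}}$ is $\mathcal{S}$-measurable, killing both terms in the law of total covariance.

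The only nonroutine entry is $\bs{V}_{\tau x}$. Here the marginal-mean term in the law of total covariance vanishes (since $\mathbb{E}(\hat{\bs{\tau}}_{\bs{X}} \mid \mathcal{S}) = \bs{0}$), so the calculation reduces to $\mathbb{E}\{\Cov(\hat\tau, \hat{\bs{\tau}}_{\bs{X}} \mid \mathcal{S})\}$. I will invoke the Neyman-style cross-covariance formula for a CRE with two attributes whose individual effects are $\tau_i$ and $\bs{0}$, respectively, which gives $\Cov(\hat\tau, \hat{\bs{\tau}}_{\bs{X}} \mid \mathcal{S}) = n_1^{-1}\bs{s}_{1,\bs{X},\mathcal{S}} + n_0^{-1}\bs{s}_{0,\bs{X},\mathcal{S}}$ (the $\tau$--covariance term collapses to zero); taking SRS expectations and using unbiasedness of sample covariances then yields $n^{-1}(r_1^{-1}\bs{S}_{1,\bs{X}} + r_0^{-1}\bs{S}_{0,\bs{X}})$, matching $\bs{V}_{\tau x}$. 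I expect this cross-attribute CRE identity to be the main bookkeeping hurdle; it can be verified either by a short direct expansion using $\Var(T_i\mid \mathcal{S})$ and $\Cov(T_i,T_j\mid \mathcal{S})$, or by appealing to the polarization identity $4\,\Cov = \Var(\hat\tau + \hat{\bs{\tau}}_{\bs{X}}) - \Var(\hat\tau - \hat{\bs{\tau}}_{\bs{X}})$ together with the already-established Neyman variance for sums and differences of attributes.
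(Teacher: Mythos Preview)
Your proposal is correct, but it follows a genuinely different route from the paper. You exploit the hierarchical structure of the CRSE directly: condition on $\mathcal{S}$, apply standard Neyman CRE moment formulas to the inner randomization, apply standard SRS moment formulas to the outer randomization, and combine the two via the law of total (co)variance, treating each block of $\bs{V}$ in turn. The paper instead reduces the CRSE to a single completely randomized experiment with three arms (treated, control, unsampled) by constructing pseudo potential outcome vectors $\bs{U}_i(1),\bs{U}_i(0),\bs{U}_i(-1)$ so that $\bar{\bs{U}}_1-\bar{\bs{U}}_0$ equals $(\hat\tau,\hat{\bs{\tau}}_{\bs{X}}^\top,\hat{\bs{\delta}}_{\bs{W}}^\top)^\top$, and then invokes a single off-the-shelf result \citep[][Theorem~3]{fpclt2017} to read off the mean and the full covariance matrix in one step.

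Your approach is more elementary and self-contained---it needs only the textbook SRS variance, the Neyman variance, and conditioning---and it makes the two sources of randomness (sampling and assignment) visibly separate, which is conceptually helpful and foreshadows why $\hat{\bs{\tau}}_{\bs{X}}$ and $\hat{\bs{\delta}}_{\bs{W}}$ end up uncorrelated. The paper's three-arm encoding is slicker and gets all nine blocks in a single matrix computation; it also sets up the machinery used immediately afterward for the CLT (Lemma~\ref{lemma:clt}), where the same pseudo-outcome construction lets one apply the finite-population CLT of \citet{fpclt2017} directly. Either argument is perfectly adequate for the lemma itself.
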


\begin{lemma}\label{lemma:clt}
	Under Condition \ref{cond:fp} and the CRSE, 
	$
	\sqrt{n} 
	(\hat\tau -\tau , \hat{\bs{\tau}}_{\bs{X}}^\top, \hat{\bs{\delta}}_{\bs{W}}^\top)^\top 
	$
	is asymptotically Gaussian 
	with mean zero and covariance matrix $\bs{V}$ in \eqref{eq:VV}, 
	i.e., 
	$
	\sqrt{n} 
	(\hat\tau -\tau , \hat{\bs{\tau}}_{\bs{X}}^\top, \hat{\bs{\delta}}_{\bs{W}}^\top)^\top 
	\dot\sim
	\mathcal{N}(\bs{0}, \bs{V}). 
	$
\end{lemma}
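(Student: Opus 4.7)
\textbf{Proof plan for Lemma \ref{lemma:clt}.}

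The mean and covariance are already pinned down by Lemma \ref{lemma:cov_crse}, so the only remaining task is to show joint asymptotic normality with the covariance $\bs{V}$ in \eqref{eq:VV}. My plan is to apply the Cram\'er--Wold device to reduce the multivariate statement to a univariate finite population central limit theorem, and then exploit the characterization of the CRSE in \eqref{eq:srs_cre} as a uniformly random partition of the $N$ units into three groups of sizes $n_1$, $n_0$, and $N-n$.

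First, I would fix an arbitrary real vector $\bs{c} = (c_{\tau}, \bs{c}_{\bs{X}}^\top, \bs{c}_{\bs{W}}^\top)^\top$ and rewrite
\[
\sqrt{n}\, \bs{c}^\top (\hat\tau - \tau, \hat{\bs{\tau}}_{\bs{X}}^\top, \hat{\bs{\delta}}_{\bs{W}}^\top)^\top
\]
as a single linear statistic in the assignment and sampling indicators. Explicitly, using $\hat\tau, \hat{\bs{\tau}}_{\bs{X}}, \hat{\bs{\delta}}_{\bs{W}}$ and the identity $r_1+r_0=1$, each coordinate is a weighted difference of within-group means of known unit-level quantities ($Y_i(t) - \bar Y(t)$, $\bs{X}_i - \bar{\bs{X}}$, $\bs{W}_i - \bar{\bs{W}}$). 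Combining across coordinates yields, for each unit $i$ and group label $g\in\{1,0,\text{unsampled}\}$, a deterministic coefficient $a_{i,g}$ of order $O(1)$, so that the linear combination equals $\sqrt{n}\sum_{i=1}^N a_{i,G_i}$ minus a centering constant, where $G_i$ denotes the group of unit $i$ under the random partition.

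Second, I would invoke the finite population CLT for uniformly random partitions into groups of prescribed sizes (e.g., the stratified/multi-sample FPCLT in \citet{fpclt2017}; alternatively, one can condition on $\mathcal{S}$ to apply a two-sample FPCLT for the treatment/control split and then apply a one-sample FPCLT for the sampling step, assembling the two with Slutsky-type arguments). The Lindeberg-type conditions required by such a CLT translate into the three max conditions in Condition \ref{cond:fp}(iv) for $Y_i(t)-\bar Y(t)$, $\bs{W}_i-\bar{\bs{W}}$ and $\bs{X}_i-\bar{\bs{X}}$; since the $a_{i,g}$'s are fixed affine combinations of these centered quantities and the coefficients in $\bs{c}$ are fixed, $n^{-1}\max_{i,g} a_{i,g}^2 \to 0$. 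The nonsingularity of the limits of $\bs{S}_{\bs{W}}^2, \bs{S}_{\bs{X}}^2$ in Condition \ref{cond:fp}(iii) together with positive limits of $r_1, r_0$ and $f\in[0,1)$ ensures a well-defined limiting variance. By Lemma \ref{lemma:cov_crse} this limit is exactly $\bs{c}^\top \bs{V}\bs{c}$, so the univariate CLT gives weak convergence to $\mathcal{N}(0, \bs{c}^\top \bs{V}\bs{c})$. Since $\bs{c}$ is arbitrary, the Cram\'er--Wold theorem yields the desired joint Gaussian limit.

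The main obstacle I anticipate is the three-group structure: standard textbook FPCLTs are stated for one- or two-sample simple random sampling, so some care is needed to either cite a multi-sample version in the appropriate form or to build it by iterated conditioning. The cleaner route is the iterated-conditioning argument: conditional on $\mathcal{S}$, the treatment assignment $\bs{T}_{\mathcal{S}}$ is a CRE on $n$ units, so a two-sample FPCLT gives conditional normality for the terms depending on $\bs{T}$; marginalizing using the one-sample FPCLT for the SRS at the sampling stage and verifying that the (random) conditional covariance converges in probability to the corresponding block of $\bs{V}$ (by standard Chebyshev arguments under Condition \ref{cond:fp}) delivers the joint statement via characteristic functions. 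The remaining bookkeeping is routine.
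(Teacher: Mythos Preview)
Your plan is correct in its essentials: the key observation, that the CRSE is a uniformly random three-way partition and that the statistic is a contrast of within-group means of unit-level quantities, is exactly the right one, and the Lindeberg-type conditions are indeed supplied by Condition \ref{cond:fp}(iv). The paper, however, takes a more direct route than Cram\'er--Wold or iterated conditioning: it encodes the vector $(\hat\tau-\tau,\hat{\bs{\tau}}_{\bs{X}}^\top,\hat{\bs{\delta}}_{\bs{W}}^\top)^\top$ as a single difference $\bar{\bs{U}}_1-\bar{\bs{U}}_0$ of pseudo-potential-outcome vectors (the same $\bs{U}_i(1),\bs{U}_i(0),\bs{U}_i(-1)$ construction used for Lemma \ref{lemma:cov_crse}) and applies the \emph{multivariate} finite population CLT of \citet[Theorem 4]{fpclt2017} directly to this vector contrast in the three-group experiment. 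This avoids both the Cram\'er--Wold reduction and the two-stage conditioning, and the coordinate-wise max-over-variance ratios required by that theorem reduce immediately to Condition \ref{cond:fp}(iv).

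One point the paper handles explicitly that your sketch glosses over: the limiting covariance $\bs{V}_\infty$ may have zero diagonal entries (for instance $V_{\tau\tau}$ can vanish, since Condition \ref{cond:fp} does not force $S_1^2$ or $S_0^2$ to be bounded away from zero). The multivariate CLT in \citet{fpclt2017} is stated for the correlation-normalized vector, so the paper splits into the case where all diagonals are positive (apply the theorem, then undo the diagonal scaling by Slutsky) and the case where some are zero (the corresponding coordinates go to zero in probability by Chebyshev, and the Cauchy--Schwarz inequality forces the associated off-diagonal limits to vanish as well). Your Cram\'er--Wold argument would face the same issue when $\bs{c}^\top\bs{V}_\infty\bs{c}=0$; it is easily dispatched by Chebyshev, but should be noted.
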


\begin{lemma} \label{lemma:estimate_lemma}
	Let $(Z_1, Z_2, \ldots, Z_N)$ be an indicator vector for a simple random sample of size $n$, i.e., 
	it has probability $\binom{N}{n}^{-1}$ taking value $(z_1, z_2, \ldots, z_N)\in \{0,1\}^N$ if $\sum_{i=1}^N z_i = n$, 
	and zero otherwise. 
	For any finite population $\{(A_i, B_i): i=1,2,\ldots,N\}$, 
	let 
	$\bar{A} = N^{-1} \sum_{i=1}^{N}A_i$ and $\bar{B} = N^{-1}\sum_{i=1}^{N}B_i$ be the finite population averages, 
	$\bar A_1 = n^{-1} \sum_{i=1}^{N} \I (Z_i =1) A_i$
	and 
	$\bar B_1 = n^{-1} \sum_{i=1}^{N} \I (Z_i =1)B_i $ 
	be the sample averages, 
	and 
	$s_{AB} = (n-1)^{-1} \sum_{i=1}^{N} \I (Z_i =1) (A_i -\bar A_1)(B_i -\bar B_1)$ be the sample covariance. 
	Then we have 
	\begin{align*}
	\Var (s_{AB}) \le \frac{4n}{(n-1)^2}\cdot\max_{1 \le j\le N}(A_j-\bar A)^2 \cdot  \frac{1}{N-1} \sum_{i=1}^N (B_i - \bar B)^2. 
	\end{align*}
\end{lemma}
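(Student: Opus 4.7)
The strategy will be to reduce to centered coordinates, express $(n-1)s_{AB}$ as a sample total minus a product of sample means, and bound the variances of the two pieces separately using standard SRS variance formulas combined with the uniform bound on $|A_i - \bar A|$.

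First I would exploit shift invariance of the sample covariance: since $s_{AB}$ is unchanged if we replace each $A_i$ by $A_i - \bar A$ and each $B_i$ by $B_i - \bar B$, and the right hand side of the claimed inequality is also unchanged by this reparametrization, we may assume throughout that $\bar A = 0 = \bar B$. Write $a = \max_j |A_j|$ and $S_B^2 = (N-1)^{-1}\sum_i B_i^2$, so that the target becomes $\Var(s_{AB}) \le 4n(n-1)^{-2} a^2 S_B^2$. Expanding the sample covariance under the centering gives the identity
\[
(n-1) s_{AB} \;=\; \sum_{i=1}^N Z_i A_i B_i \;-\; n \bar A_1 \bar B_1,
\]
so by the elementary inequality $\Var(X-Y) \le 2\Var(X) + 2\Var(Y)$ it suffices to control each of $\Var(\sum_i Z_i A_i B_i)$ and $\Var(n\bar A_1 \bar B_1)$ by $n a^2 S_B^2$.

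The first term is the variance of a sample total under SRS, handled by the classical formula $\Var(\sum_i Z_i c_i) = n(1-f) (N-1)^{-1}\sum_i (c_i - \bar c)^2$ applied with $c_i = A_i B_i$; dropping the centering and using $A_i^2 \le a^2$ bounds it by $n(1-f)a^2 S_B^2 \le n a^2 S_B^2$. For the cross term, the key observation is that $|\bar A_1| \le a$ deterministically (it is an average of values bounded by $a$), and $\E[\bar B_1] = \bar B = 0$, so
\[
\Var(n\bar A_1 \bar B_1) \;\le\; \E[(n\bar A_1 \bar B_1)^2] \;\le\; n^2 a^2 \,\E[\bar B_1^2] \;=\; n^2 a^2 \Var(\bar B_1) \;=\; n(1-f)a^2 S_B^2,
\]
using the SRS formula $\Var(\bar B_1) = (1-f)S_B^2/n$. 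Combining the two estimates yields $(n-1)^2 \Var(s_{AB}) \le 4na^2 S_B^2$, which is the claim.

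There is no serious obstacle; the only substantive idea is deciding to bound the cross term by absorbing the deterministic uniform bound $|\bar A_1| \le a$ \emph{before} taking the expectation, which reduces it to a plain variance of $\bar B_1$. This sidesteps a direct computation of $\Var(T_A T_B)$, which would otherwise require fourth-order joint moments of the $Z_i$'s and would not obviously yield the factor $(n-1)^{-2}$ of the correct order.
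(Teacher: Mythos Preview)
Your proposal is correct and follows essentially the same approach as the paper's proof: both decompose $(n-1)s_{AB}$ (equivalently, the paper's $\tfrac{n}{n-1}$ times the bracketed expression) into a sample-total part and a product-of-means part, apply $\Var(X-Y)\le 2\Var(X)+2\Var(Y)$, bound the first piece by the SRS total/mean variance formula together with $A_i^2\le a^2$, and bound the second by pulling out the deterministic inequality $|\bar A_1-\bar A|\le a$ before taking expectations. Your preliminary reduction to $\bar A=\bar B=0$ via shift invariance is a clean presentational choice but does not change the argument.
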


\begin{lemma}\label{lemma:s_x2}
	Under Condition \ref{cond:fp} and the CRSE, 
	$\bs{s}_{\bs{X}}^2 - \bs{S}_{\bs{X}}^2 = o_{\Pr}(1)$. 
\end{lemma}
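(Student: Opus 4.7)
The plan is to reduce the matrix convergence statement to entry-wise convergence and then apply the variance bound from Lemma \ref{lemma:estimate_lemma} to each entry. Since $\bs{X}_i \in \mathbb{R}^K$ with $K$ fixed, it suffices to show that each scalar sample covariance $s_{X_k X_l}$ between the $k$-th and $l$-th coordinates of $\bs{X}$ converges in probability to its finite-population analogue $S_{X_k X_l}$, and then take a finite union over the $K^2$ indices.

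For a fixed pair $(k, l)$, I would first recall the standard survey sampling identity that under SRS the sample covariance is design-unbiased: $\E[s_{X_k X_l}] = S_{X_k X_l}$. Then I would apply Lemma \ref{lemma:estimate_lemma} with $A_i = X_{i,k}$ and $B_i = X_{i,l}$ to obtain
\begin{align*}
\Var(s_{X_k X_l}) \;\le\; \frac{4n}{(n-1)^2} \cdot \max_{1 \le j \le N} (X_{j,k} - \bar{X}_k)^2 \cdot \frac{1}{N-1}\sum_{i=1}^N (X_{i,l} - \bar{X}_l)^2.
\end{align*}
By Condition \ref{cond:fp}(iii), the diagonal entries of $\bs{S}^2_{\bs{X}}$ have finite limits, so the last factor is $O(1)$. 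By Condition \ref{cond:fp}(iv), $n^{-1}\max_{i}(X_{i,k} - \bar{X}_k)^2 \to 0$, so the middle factor is $o(n)$. Combining, $\Var(s_{X_k X_l}) = o(n) \cdot (n-1)^{-2} \cdot n \cdot O(1) = o(1)$.

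Chebyshev's inequality then gives $s_{X_k X_l} - S_{X_k X_l} = o_{\Pr}(1)$ for every $(k, l)$, and a union bound over the finitely many coordinate pairs yields $\bs{s}_{\bs{X}}^2 - \bs{S}_{\bs{X}}^2 = o_{\Pr}(1)$. I do not anticipate any serious obstacle; the only small care needed is to verify the unbiasedness identity $\E[s_{X_k X_l}] = S_{X_k X_l}$ under SRS, which follows from a direct calculation using the hypergeometric inclusion probabilities, and to ensure the bound from Lemma \ref{lemma:estimate_lemma} is applied in the symmetric form (e.g., one could symmetrize $A$ and $B$, but either order of the bound suffices since both maxima are $o(n)$ and both averages are $O(1)$).
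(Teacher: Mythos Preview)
Your proposal is correct and essentially identical to the paper's proof: both reduce to entry-wise convergence, invoke unbiasedness of the sample covariance under SRS, bound the variance via Lemma~\ref{lemma:estimate_lemma}, use Condition~\ref{cond:fp}(iii)--(iv) to show the bound is $o(1)$, and conclude by Chebyshev. The only cosmetic difference is the order of $A$ and $B$ in the lemma application, which is immaterial as you yourself note.
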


\begin{lemma}\label{lemma:joint_tilde_CRSE}
    Under Condition \ref{cond:fp} and the CRSE, 
    $
	\sqrt{n} 
	(\hat\tau -\tau , \tilde{\bs{\tau}}_{\bs{X}}^\top, \hat{\bs{\delta}}_{\bs{W}}^\top)^\top \dot\sim
	\mathcal{N}(\bs{0}, \bs{V}), 
	$
	where $\tilde{\bs{\tau}}_{\bs{X}} \equiv (\bs{S}_{\bs{X}}^2)^{1/2}(\bs{s}_{\bs{X}}^2)^{-1/2} \hat{\bs{\tau}}_{\bs{X}}$ and $\bs{V}$ is defined as in \eqref{eq:VV}. 
\end{lemma}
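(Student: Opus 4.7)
The plan is to reduce Lemma \ref{lemma:joint_tilde_CRSE} to the already-established Lemma \ref{lemma:clt} by a Slutsky-type argument, showing that the modification $\tilde{\bs{\tau}}_{\bs{X}} = (\bs{S}_{\bs{X}}^2)^{1/2}(\bs{s}_{\bs{X}}^2)^{-1/2} \hat{\bs{\tau}}_{\bs{X}}$ differs from $\hat{\bs{\tau}}_{\bs{X}}$ only by a negligible factor of order $o_{\Pr}(n^{-1/2})$.

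First I would invoke Lemma \ref{lemma:clt} to assert that the joint vector $\sqrt{n}(\hat\tau - \tau, \hat{\bs{\tau}}_{\bs{X}}^\top, \hat{\bs{\delta}}_{\bs{W}}^\top)^\top$ converges weakly to $\mathcal{N}(\bs{0}, \bs{V})$. In particular, $\sqrt{n}\,\hat{\bs{\tau}}_{\bs{X}} = O_{\Pr}(1)$. Next, from Lemma \ref{lemma:s_x2}, $\bs{s}_{\bs{X}}^2 - \bs{S}_{\bs{X}}^2 = o_{\Pr}(1)$. Since Condition \ref{cond:fp}(iii) guarantees that the limit of $\bs{S}_{\bs{X}}^2$ is nonsingular, the matrix square root and inverse square root are continuous at the limit, and the continuous mapping theorem yields $(\bs{s}_{\bs{X}}^2)^{-1/2} - (\bs{S}_{\bs{X}}^2)^{-1/2} = o_{\Pr}(1)$. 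Consequently, $(\bs{S}_{\bs{X}}^2)^{1/2}(\bs{s}_{\bs{X}}^2)^{-1/2} - \bs{I} = o_{\Pr}(1)$.

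Combining these two facts,
\[
\sqrt{n}\,(\tilde{\bs{\tau}}_{\bs{X}} - \hat{\bs{\tau}}_{\bs{X}}) = \left\{(\bs{S}_{\bs{X}}^2)^{1/2}(\bs{s}_{\bs{X}}^2)^{-1/2} - \bs{I}\right\} \sqrt{n}\,\hat{\bs{\tau}}_{\bs{X}} = o_{\Pr}(1) \cdot O_{\Pr}(1) = o_{\Pr}(1).
\]
Therefore, by Slutsky's theorem, the joint vector $\sqrt{n}(\hat\tau - \tau, \tilde{\bs{\tau}}_{\bs{X}}^\top, \hat{\bs{\delta}}_{\bs{W}}^\top)^\top$ has the same weak limit as $\sqrt{n}(\hat\tau - \tau, \hat{\bs{\tau}}_{\bs{X}}^\top, \hat{\bs{\delta}}_{\bs{W}}^\top)^\top$, namely $\mathcal{N}(\bs{0}, \bs{V})$.

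There is no substantive obstacle here beyond verifying the continuous mapping step for the matrix inverse square root; this is routine given the nonsingularity of the limiting covariance $\bs{S}_{\bs{X}}^2$ in Condition \ref{cond:fp}(iii), which ensures that with probability tending to one $\bs{s}_{\bs{X}}^2$ is itself nonsingular and lies in a compact neighborhood of a positive-definite matrix, where the map $\bs{A} \mapsto \bs{A}^{-1/2}$ is Lipschitz. The rest of the argument is a direct application of Slutsky's theorem.
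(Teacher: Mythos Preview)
Your proposal is correct and follows essentially the same approach as the paper, which simply states that the lemma follows from Lemmas \ref{lemma:clt} and \ref{lemma:s_x2} together with Slutsky's theorem. Your write-up merely spells out in more detail the continuous-mapping step for the matrix inverse square root and the $o_{\Pr}(1)\cdot O_{\Pr}(1)$ bound that the paper leaves implicit.
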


\begin{lemma}\label{lemma:M_ST_CRSE}
Under Condition \ref{cond:fp} and the CRSE, 
$(M_S, M_T) \converged (\chi^2_J, \chi^2_K)$, 
where $\chi^2_J$ and $\chi^2_K$ are two independent chi-square random variables with degrees of freedom $J$ and $K$. 
\end{lemma}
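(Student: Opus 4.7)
}

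The plan is to reduce both Mahalanobis distances to quadratic forms in the asymptotically Gaussian vectors provided by Lemma \ref{lemma:clt}, and then read off the chi-square limits together with the independence structure directly from the block-diagonal form of the limiting covariance matrix.

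First I would rewrite $M_S$ and $M_T$ in a form convenient for asymptotics. Using $1/n - 1/N = (1-f)/n$ and $n/(n_1 n_0) = 1/(n r_1 r_0)$, the definitions in \eqref{eq:M_S} and \eqref{eq:M_T} give
\begin{align*}
M_S = (1-f)^{-1}\bigl(\sqrt{n}\,\hat{\bs{\delta}}_{\bs{W}}\bigr)^{\!\top}(\bs{S}_{\bs{W}}^2)^{-1}\bigl(\sqrt{n}\,\hat{\bs{\delta}}_{\bs{W}}\bigr),\qquad
M_T = r_1 r_0 \bigl(\sqrt{n}\,\hat{\bs{\tau}}_{\bs{X}}\bigr)^{\!\top}(\bs{s}_{\bs{X}}^2)^{-1}\bigl(\sqrt{n}\,\hat{\bs{\tau}}_{\bs{X}}\bigr).
\end{align*}
By Lemma \ref{lemma:clt}, $\sqrt{n}(\hat{\bs{\tau}}_{\bs{X}}^\top, \hat{\bs{\delta}}_{\bs{W}}^\top)^\top$ is asymptotically Gaussian with mean zero and block-diagonal covariance $\mathrm{diag}\{(r_1 r_0)^{-1}\bs{S}_{\bs{X}}^2,\,(1-f)\bs{S}_{\bs{W}}^2\}$, since the off-diagonal block $\bs{V}_{xw}=\bs{0}$ in \eqref{eq:VV}. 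Consequently, the two limit Gaussian vectors are \emph{independent}, and under Condition \ref{cond:fp}(i)--(iii) the scaling matrices $\bs{S}_{\bs{W}}^2$ and $\bs{S}_{\bs{X}}^2$ have nonsingular limits with $1-f$ bounded away from zero (this is where we need the restriction that the limit of $f$ is strictly less than 1).

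Next I would handle the two quadratic forms. For $M_S$, the continuous mapping theorem applied to the map $\bs{x}\mapsto (1-f)^{-1}\bs{x}^\top (\bs{S}_{\bs{W}}^2)^{-1}\bs{x}$ (with the understanding that, along the sequence, we use the corresponding finite-population matrices, which converge to nonsingular limits) yields $M_S \converged \chi^2_J$ because $(1-f)^{-1/2}(\bs{S}_{\bs{W}}^2)^{-1/2}\sqrt{n}\,\hat{\bs{\delta}}_{\bs{W}}$ converges to a $\mathcal{N}(\bs{0},\bs{I}_J)$ vector. For $M_T$, the additional wrinkle is the sample covariance $\bs{s}_{\bs{X}}^2$; Lemma \ref{lemma:s_x2} gives $\bs{s}_{\bs{X}}^2 - \bs{S}_{\bs{X}}^2 = o_{\Pr}(1)$, and by Slutsky's theorem $\bs{s}_{\bs{X}}^2$ can be replaced by $\bs{S}_{\bs{X}}^2$ in the quadratic form. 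Continuous mapping then gives $M_T \converged \chi^2_K$.

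Finally, joint convergence and independence of the limits follow by applying the continuous mapping and Slutsky theorems to the joint convergence in Lemma \ref{lemma:clt}: the map $(\bs{u},\bs{v},\bs{A})\mapsto ((1-f)^{-1}\bs{v}^\top(\bs{S}_{\bs{W}}^2)^{-1}\bs{v},\,r_1 r_0\, \bs{u}^\top \bs{A}^{-1}\bs{u})$ is continuous on the set where $\bs{A}$ is nonsingular, and the limiting pair $(\sqrt{n}\,\hat{\bs{\tau}}_{\bs{X}},\sqrt{n}\,\hat{\bs{\delta}}_{\bs{W}})$ decouples into independent Gaussians because of the zero off-diagonal in $\bs{V}$. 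I do not see a serious obstacle in this argument; the only mild technicality is the replacement $\bs{s}_{\bs{X}}^2\mapsto\bs{S}_{\bs{X}}^2$, but this is cleanly dispatched by Lemma \ref{lemma:s_x2} combined with Slutsky.
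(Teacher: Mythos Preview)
Your proof is correct and takes essentially the same route as the paper: both combine the joint CLT of Lemma \ref{lemma:clt}, the consistency $\bs{s}_{\bs{X}}^2-\bs{S}_{\bs{X}}^2=o_{\Pr}(1)$ from Lemma \ref{lemma:s_x2}, the block-diagonal structure of $\bs{V}$, and the continuous mapping theorem. The only cosmetic difference is that the paper packages the Slutsky step for $M_T$ through the auxiliary vector $\tilde{\bs{\tau}}_{\bs{X}}=(\bs{S}_{\bs{X}}^2)^{1/2}(\bs{s}_{\bs{X}}^2)^{-1/2}\hat{\bs{\tau}}_{\bs{X}}$ (Lemma \ref{lemma:joint_tilde_CRSE}), so that $M_T=\sqrt{n}\,\tilde{\bs{\tau}}_{\bs{X}}^\top\bs{V}_{xx}^{-1}\sqrt{n}\,\tilde{\bs{\tau}}_{\bs{X}}$ is an \emph{exact} quadratic form in a vector with known limit, whereas you replace $\bs{s}_{\bs{X}}^2$ by $\bs{S}_{\bs{X}}^2$ directly inside the quadratic form.
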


\subsection{Proof of the lemmas}\label{sec:lemma_crse_proof}

\begin{proof}[Proof of Lemma \ref{lemma:cov_crse}]
	From \eqref{eq:srs_cre}, 
	we can view the CRSE as a completely randomized experiment with three groups. 
	The first group consists of sampled units receiving treatment (i.e., $Z_i=T_i=1$), 
	the second group consists of sampled units receiving control (i.e., $Z_i=1$ and $T_i=0$), 
	and 
	the last group consists of unsampled units. 
	We label these three groups as 1, 0 and $-1$. 
	The corresponding group sizes are then $n_1$, $n_0$ and $n_{-1} \equiv N-n$. 
	Moreover, we define a set of pseudo potential  outcome vectors for each unit $i$ as 
	\begin{align}\label{eq:pseudo_po}
	{\bs{U}}_i(1) = 
	\begin{pmatrix}
	Y_i(1) \\ \bs{X}_i   \\ r_1(\bs{W}_i  - \bar{\bs{W}} )
	\end{pmatrix}, 
	\quad
	\bs{U}_i(0) = 
	\begin{pmatrix}
	Y_i(0) \\ \bs{X}_i   \\-  r_0 (\bs{W}_i  - \bar{\bs{W}} )
	\end{pmatrix}, 
	\quad 
	\bs{U}_i(-1) = 
	\begin{pmatrix}
	0 \\ \bs{0}_{K\times 1}  \\ \bs{0}_{J\times 1}
	\end{pmatrix}. 
	\end{align}
	For treatment groups 1 and 0, 
	we use $\bar {\bs{U}}_1 = n_1^{-1}\sum_{i=1}^{N} \I(Z_i = 1, T_i=1) \bs{U}_i(1)$ 
	and 
	$\bar {\bs{U}}_0 = n_0^{-1}\sum_{i=1}^{N} \I(Z_i = 1, T_i=0) \bs{U}_i(0)$ to denote their average outcome vectors, 
	$\bar{Y}_1$ and $\bar{Y}_0$ to denote their average observed outcomes, 
	and $\bar{\bs{W}}_1$ and $\bar{\bs{W}}_0$ to denote their average covariates. 
	Then, by definition, we have 
	\begin{align*}
	\bar{\bs{U}}_1 - \bar{\bs{U}}_0 & = 
	\begin{pmatrix}
	\bar{Y}_1 \\  \bar{\bs{X}}_1    \\ r_1 (\bar{\bs{W}}_1  - \bar{\bs{W}} )
	\end{pmatrix}
	-\begin{pmatrix}
	\bar{Y}_0 \\  \bar{\bs{X}}_0 \\ -r_0 (\bar{\bs{W}}_0  - \bar{\bs{W}} )
	\end{pmatrix} 
	= 
	\begin{pmatrix}
	\bar{Y}_1 - \bar{Y}_0 \\  
	\bar{\bs{X}}_1 - \bar{\bs{X}}_0 \\ 
	\bar{\bs{W}}_{\mathcal{S}}  - \bar{\bs{W}}
	\end{pmatrix}
	= 
	\begin{pmatrix}
	\hat\tau \\ \hat{\bs{\tau}}_{\bs{X}} \\ \hat{\bs{\delta}}_{\bs{W}}
	\end{pmatrix}. 
	\end{align*}
    From \citet[][Theorem 3]{fpclt2017}, 
	under the CRSE, 
	$(\hat\tau, \hat{\bs{\tau}}_{\bs{X}}^\top ,\hat{\bs{\delta}}_{\bs{W}}^\top)^\top = \bar{\bs{U}}_1 - \bar{\bs{U}}_0$ has mean 
	\begin{align*}
	\E\left(
	\begin{pmatrix}
	\hat\tau \\ \hat{\bs{\tau}}_{\bs{X}} \\ \hat{\bs{\delta}}_{\bs{W}}
	\end{pmatrix}
	\right)
	& = 
	\bar{\bs{U}}(1) - \bar{\bs{U}}(0)
	= 
	\begin{pmatrix}
	\bar{Y}(1) \\  \bar{\bs{X}}    \\ r_1 (\bar{\bs{W}}  - \bar{\bs{W}} )
	\end{pmatrix}
	-\begin{pmatrix}
	\bar{Y}(0) \\  \bar{\bs{X}} \\ -r_0 (\bar{\bs{W}}  - \bar{\bs{W}} )
	\end{pmatrix}
	= 
	\begin{pmatrix}
	\tau \\ \bs{0}_{K\times 1}  \\ \bs{0}_{J\times 1}
	\end{pmatrix},
	\end{align*}
	and covariance 
	\begin{align*} 
	\text{Cov}\left(
	\begin{pmatrix}
	\hat\tau \\ \hat{\bs{\tau}}_{\bs{X}} \\ \hat{\bs{\delta}}_{\bs{W}}
	\end{pmatrix}
	\right) 
	& = 
	\frac{1}{n_1} \bs{S}^2_{\bs{U}(1)} + \frac{1}{n_0} \bs{S}^2_{\bs{U}(0)} - \frac{1}{N} \bs{S}^2_{\bs{U}(1) - \bs{U}(0)}\\
	& = 
	\frac{1}{n_1}
	\begin{pmatrix}
	{S}^2_{1} & \bs{S}_{1,\bs{X}} & r_1 \bs{S}_{1, \bs{W}} \\
	\bs{S}_{\bs{X},1}  & \bs{S}^2_{\bs{X}}  & r_1 \bs{S}_{\bs{X}, \bs{W}} \\
	r_1 \bs{S}_{\bs{W},1} & r_1 \bs{S}_{\bs{W}, \bs{X}} & r_1^2 \bs{S}^2_{\bs{W}}
	\end{pmatrix}
	+ 
	\frac{1}{n_0}
	\begin{pmatrix}
	{S}^2_{0} & \bs{S}_{0,\bs{X}} & - r_0 \bs{S}_{0, \bs{W}} \\
	\bs{S}_{\bs{X},0}  & \bs{S}^2_{\bs{X}}  & -r_0 \bs{S}_{\bs{X}, \bs{W}} \\
	-r_0 \bs{S}_{\bs{W},0} & -r_0 \bs{S}_{\bs{W}, \bs{X}} & r_0^2 \bs{S}^2_{\bs{W}}
	\end{pmatrix}\\
	& \quad \ 
	- 
	\frac{1}{N}
	\begin{pmatrix}
	{S}^2_{\tau} & \bs{0} &  \bs{S}_{\tau, \bs{W}} \\
	\bs{0}  & \bs{0}  & \bs{0} \\
	\bs{S}_{\bs{W},\tau} & \bs{0} & \bs{S}^2_{\bs{W}}
	\end{pmatrix}\\
	& =
	\begin{pmatrix}
	\frac{1}{n_1}{S}^2_{1}+\frac{1}{n_0}{S}^2_{0} -\frac{1}{N}{S}^2_\tau & \frac{1}{n_1}\bs{S}_{1,\bs{X}}+\frac{1}{n_0}\bs{S}_{0,\bs{X}} & \left(\frac{1}{n} - \frac{1}{N}\right)\bs{S}_{\tau,\bs{W}} \\
	\frac{1}{n_1}\bs{S}_{\bs{X},1} + \frac{1}{n_0}\bs{S}_{\bs{X},0}& \left(\frac{1}{n_1} +  \frac{1}{n_0}\right) \bs{S}^2_{\bs{X}}  & \bs{0} \\
	\left(\frac{1}{n} - \frac{1}{N}\right)\bs{S}_{\bs{W},\tau}& \bs{0} & \left(\frac{1}{n}  -\frac{1}{N} \right) \bs{S}^2_{\bs{W}}
	\end{pmatrix},
	\end{align*}
	where $\bar{\bs{U}}(1)$ and $\bar{\bs{U}}(0)$ are the finite population averages of the pseudo potential outcomes $\bs{U}(1)$ and $\bs{U}(0)$, 
	and   
	$\bs{S}^2_{\bs{U}(1)}, \bs{S}^2_{\bs{U}(0)}$ and  $\bs{S}^2_{\bs{U}(1) - \bs{U}(0)}$ 
	are the finite population covariance matrices of  
	$\bs{U}(1)$, $\bs{U}(0)$ and $\bs{U}(1)-\bs{U}(0)$. 
	From the above, we can know that Lemma \ref{lemma:cov_crse} holds. 
\end{proof}

\begin{proof}[Proof of Lemma \ref{lemma:clt}]
	Following the proof of Lemma \ref{lemma:cov_crse}, 
	we define pseudo potential outcome vectors the same as in \eqref{eq:pseudo_po}. 
	For any $-1\le q\le 1$ and $1\le l \le 1+K+J$, 
	define 
	\begin{align*}
	m_q(l) & = \max_{1 \le i\le N} \left[ \bs{U}_i(q) - \bar{\bs{U}}(q)\right]_{(l)}^2, 
	\qquad 
	v_q(l) = \frac{1}{N-1}\sum_{i=1}^N[\bs{U}_i(q) -\bar{\bs{U}}(q)]_{(l)}^2, 
	\\ 
	v_\tau(l) & = \frac{1}{N-1}\sum_{i=1}^N[ (\bs{U}_i(1)-\bs{U}_i(0)) -   (\bar{\bs{U}}(1)-\bar{\bs{U}}(0))  ]_{(l)}^2,
	\end{align*}
	where 
	$
	[\bs{u}]_{(l)}
	$
	denotes the $l$th coordinate of a vector $\bs{u}$. 
	By the definition in \eqref{eq:pseudo_po}, 
	\begin{align}\label{eq:max}
	m_q(l) & =  
	\begin{cases}
	\max_{1 \le i\le N} \left[ Y_i(q)- \bar{Y}(q)\right]^2, 
	& \text{if }  l =1 \text{ and } q \in \{0,1\}, \\
	\max_{1 \le i\le N} \left[ \bs{X}_i- \bar{\bs{X}}\right]^2_{(l-1)},
	& \text{if }  2 \le l \le K+1 \text{ and } q \in \{0,1\}, \\
	r_q^2 \max_{1 \le i\le N} \left[ \bs{W}_i- \bar{\bs{W}}\right]^2_{(l-K-1)}, 
	& \text{if }  K+2 \le l \le K+J+1 \text{ and } q \in \{0,1\},\\
	0 & \text{if } q =  -1, 		
	\end{cases}	
	\end{align}
	and for any $1\le l \le K+J+1$, 
	\begin{align*}
	\sum_{q=-1}^{1}n_q^{-1}v_q(l)-N^{-1}v_{\tau}(l) = n^{-1} V_{ll}, 
	\end{align*}	
	where $V_{ll}$ is the $l$th diagonal element of the matrix $\bs{V}$ in \eqref{eq:VV}. 
	For any $-1\le q\le 1$ and $1\le l \le K+J+1$,
	define   
	$$
	G(l,q) = \frac{1}{n_q^2} \frac{m_q(l)}{\sum_{r=-1}^{1}n_r^{-1}v_r(l)-N^{-1}v_{\tau}(l)}
	=  
	\frac{n}{n_q^2} \frac{m_q(l)}{V_{ll}}
	= \frac{m_q(l)}{nr_q^2V_{ll}}. 
	$$ 	
	Then from  \eqref{eq:max}, 
	$G(l,q)$ has the following equivalent form: 
	\begin{align}\label{eq:G}
	G(l,q) = %
	\begin{cases}
	r_q^{-2} V_{ll}^{-1} \cdot n^{-1}\max_{1 \le i\le N} \left[ Y_i(q)- \bar{Y}(q)\right]^2, 
	& \text{if }  l =1 \text{ and } q \in \{0,1\}, \\
	r_q^{-2} V_{ll}^{-1} \cdot n^{-1} \max_{1 \le i\le N} \left[ \bs{X}_i- \bar{\bs{X}}\right]^2_{(l-1)},
	& \text{if }  1 \le l-1 \le K \text{ and } q \in \{0,1\}, \\
	V_{ll}^{-1} \cdot n^{-1}  \max_{1 \le i\le N} \left[ \bs{W}_i- \bar{\bs{W}}\right]^2_{(l-K-1)}, 
	& \text{if }  1 \le l-K-1 \le J \text{ and } q \in \{0,1\},\\
	0 & \text{if } q =  -1. 
	\end{cases}	
	\end{align}
	
	Below we prove the asymptotic Gaussianity of 
	$
	\sqrt{n} 
	(\hat\tau -\tau , \hat{\bs{\tau}}_{\bs{X}}^\top, \hat{\bs{\delta}}_{\bs{W}}^\top)^\top. 
	$
	Under Condition \ref{cond:fp}, $\bs{V}$ in \eqref{eq:VV} must have a limiting value $\bs{V}_{\infty}$ as $N\rightarrow \infty$. 
	We first consider the case where all the diagonal elements of $\bs{V}_\infty$ are positive. 
	In this case, as $N \rightarrow \infty$, 
	$G(l,q)$ in \eqref{eq:G} must converge to zero for all $l$ and $q$, 
	and 
	the correlation matrix of 
	$(\hat\tau , \hat{\bs{\tau}}_{\bs{X}}^\top ,\hat{\bs{\delta}}_{\bs{X}}^\top)^\top$ must converge to 
	$\text{diag} (\bs{V}_{\infty})^{-1/2}\bs{V}_{\infty} \text{diag}(\bs{V}_{\infty})^{-1/2}$, 
	where $\text{diag}(\bs{v})$ denotes a diagonal  matrix whose diagonal elements are the same as that of the square matrix $\bs{v}$. 
	Therefore, 
	from \citet[][Theorem 4]{fpclt2017}, 
	\begin{align*}
	\text{diag}(\bs{V})^{-1/2}  
	\sqrt{n} 
	(\hat\tau -\tau , \hat{\bs{\tau}}_{\bs{X}}^\top, \hat{\bs{\delta}}_{\bs{W}}^\top)^\top
	\converged 
	\mathcal{N}\left(
	\bs{0}, \ 
	\text{diag} (\bs{V}_{\infty})^{-1/2}\bs{V}_{\infty} \text{diag}(\bs{V}_{\infty})^{-1/2}
	\right). 
	\end{align*}
	By Slutsky's theorem, we have
	\begin{align*}
	\sqrt{n} (\hat\tau -\tau, \hat{\bs{\tau}}_{\bs{X}}^\top,\hat{\bs{\delta}}_{\bs{W}}^\top)^\top 
	= 
	\text{diag}(\bs{V})^{1/2} \cdot  
	\text{diag}(\bs{V})^{-1/2}  
	\sqrt{n} 
	(\hat\tau -\tau , \hat{\bs{\tau}}_{\bs{X}}^\top, \hat{\bs{\delta}}_{\bs{W}}^\top)^\top
	\converged \mathcal{N}(\bs{0},\bs{V}_{\infty}).
	\end{align*}
	We then consider the case where some of the diagonal elements of $\bs{V}_\infty$ are zero. 
	Let $\bs{\eta} = \sqrt{n} (\hat\tau -\tau, \hat{\bs{\tau}}_{\bs{X}}^\top ,\hat{\bs{\delta}}_{\bs{W}}^\top)^\top$, 
	and  
	$\mathcal{D}$ be the set of indices corresponding to positive diagonal elements of $\bs{V}_\infty$, 
	i.e., $\bs{V}_{\infty,kk} > 0$ for $k \in \mathcal{D}$ and $\bs{V}_{\infty,kk} = 0$ for $k \notin \mathcal{D}$. 
	Let $\bs{\eta}_{\mathcal{D}}$ be the subvector of $\bs{\eta}$ with indices in $\mathcal{D}$ and 
	$\bs{\eta}_{\mathcal{D}^c}$ be the subvector of $\bs{\eta}$ with indices not in $\mathcal{D}$. 
	Let 
	$\bs{V}_{\infty,  \mathcal{D}\mathcal{D}}$ be the submatrix of $\bs{V}_\infty$ with indices in $\mathcal{D}\times \mathcal{D}$, 
	$\bs{V}_{\infty,  \mathcal{D}^c\mathcal{D}^c}$ be the submatrix of $\bs{V}_\infty$ with indices in $\mathcal{D}^c \times \mathcal{D}^c$, 
	and 
	$\bs{V}_{\infty,  \mathcal{D}^c\mathcal{D}} = \bs{V}_{\infty,  \mathcal{D}\mathcal{D}^c}^\top$ be the submatrix of $\bs{V}_\infty$ with indices in $\mathcal{D}^c \times \mathcal{D}$.  
	By the Cauchy–Schwarz inequality,  for any $k$ and $j$, 
	$|V_{kj}| \le (V_{kk}V_{jj})^{1/2}$. This implies that 
	$V_{\infty, kj}=0$ for $k \notin \mathcal{D}$ and any $j$. Consequently, all elements of matrices 	$\bs{V}_{\infty,  \mathcal{D}^c\mathcal{D}^c}$ and 	$\bs{V}_{\infty,  \mathcal{D}^c\mathcal{D}} = \bs{V}_{\infty,  \mathcal{D}\mathcal{D}^c}^\top$ must be zero. 
	Applying the proof in the first case to the subvector $\bs{\eta}_{\mathcal{D}},$ 
	we can know that  
	$
	\bs{\eta}_{\mathcal{D}} \converged \mathcal{N}(\bs{0}, \bs{V}_{\infty,  \mathcal{D}\mathcal{D}}), 
	$
	where $\bs{V}_{\infty,  \mathcal{D}\mathcal{D}}$ is the submatrix of $\bs{V}_\infty$ with indices in $\mathcal{D}\times \mathcal{D}$. 
	For $k \notin \mathcal{D}$, 
	we have 
	$
	\eta_k = O_{\Pr}\{ \Var^{1/2}(\eta_k)\} = 
	O_{\Pr}( V_{kk}^{1/2}) = o_{\Pr}(1).
	$
	This implies that 
	$
	\bs{\eta}_{\mathcal{D}^c} \converged  \bs{0}.
	$
	Therefore, we must have 
	\begin{align*}
	\begin{pmatrix}
	\bs{\eta}_{\mathcal{D}} \\
	\bs{\eta}_{\mathcal{D}^c}
	\end{pmatrix}
	\converged
	\mathcal{N}
	\left(
	\bs{0}, 
	\begin{pmatrix}
	\bs{V}_{\infty, \mathcal{D}\mathcal{D}} & 
	\bs{0}\\
	\bs{0} & \bs{0}
	\end{pmatrix}
	\right)
	\sim 
	\mathcal{N}
	\left(
	\bs{0}, 
	\begin{pmatrix}
	\bs{V}_{\infty, \mathcal{D} \mathcal{D}} & 
	\bs{V}_{\infty, \mathcal{D} \mathcal{D}^c}\\
	\bs{V}_{\infty, \mathcal{D}^c \mathcal{D}} & \bs{V}_{\infty, \mathcal{D}^c \mathcal{D}^c}
	\end{pmatrix}
	\right).
	\end{align*}
	Equivalently, 
	$
	\bs{\eta} \converged \mathcal{N}(\bs{0}, \bs{V}_{\infty}), 
	$
	i.e., 
	$
	\sqrt{n} (\hat\tau -\tau, \hat{\bs{\tau}}_{\bs{X}}^\top ,\hat{\bs{\delta}}_{\bs{W}}^\top)^\top 
	\converged \mathcal{N}(\bs{0},\bs{V}_{\infty}).
	$
	
	From the above, Lemma \ref{lemma:clt} holds. 
\end{proof}

\begin{proof}[Proof of Lemma \ref{lemma:estimate_lemma}]
	Lemma \ref{lemma:estimate_lemma} follows immediately from the proof of Lemma A15 in the Supplementary Material for \citet{rerand2018}. 
	For completeness, we give a proof here. 
	
	By definition, we can decompose $s_{AB}$ into the following two parts: 
	\begin{align*}
	s_{AB} & = \frac{1}{n-1} \sum_{i:Z_i=1} \left\{(A_i - \bar{A}) - (\bar A_1 - \bar{A})\right\}		
	\left\{
	(B_i - \bar{B})-(\bar B_1 - \bar{B})
	\right\}\\
	& = 
	\frac{n}{n-1} 
	\left\{ \frac{1}{n}
	\sum_{i:Z_i=1} (A_i - \bar{A}) (B_i - \bar{B})
	- (\bar A_1 - \bar{A}) (\bar B_1 - \bar{B})
	\right\}. 
	\end{align*}
	Thus, we can bound the variance of $s_{AB}$ by 
	\begin{align}\label{eq:bound_sab_1}
	\Var (s_{AB}) &= \frac{n^2}{(n-1)^2} \Var\left\{
	\frac{1}{n}
	\sum_{i:Z_i=1} (A_i - \bar{A}) (B_i - \bar{B})
	- (\bar A_1 - \bar{A}) (\bar B_1 - \bar{B})
	\right\}
	\nonumber
	\\ & 
	\le  \frac{2n^2}{(n-1)^2} \left[
	\Var\left\{
	\frac{1}{n}
	\sum_{i:Z_i=1} (A_i - \bar{A}) (B_i - \bar{B})
	\right\} + \Var
	\left\{(\bar  A_1 -\bar A)(\bar B_1 -\bar B)\right\}
	\right].
	\end{align}
	Below we further bound the two terms in \eqref{eq:bound_sab_1}. 
	
	Using the properties of simple random sample, we can bound the first term in \eqref{eq:bound_sab_1} by
	\begin{align*}
	& \quad \ \Var\left\{	\frac{1}{n} \sum_{i:Z_i = 1} (A_i -\bar A)(B_i -\bar B)\right\} 
	\\ & 
	= \left(\frac{1}{n}-\frac{1}{N}\right) \frac{1}{N-1}\sum_{i=1}^N
	\left\{        (A_i -\bar A)(B_i -\bar B) - \frac{1}{N} \sum_{j=1}^N (A_j-\bar A)(B_j -\bar B)              \right\}^2
	\\ &\le   \frac{1}{n}\frac{1}{N-1}\sum_{i=1}^N
	(A_i -\bar A)^2(B_i -\bar B)^2
	\le  \frac{1}{n} \max_{1 \le j\le N}(A_j -\bar A)^2 \cdot\frac{1}{N-1} \sum_{i=1}^N(B_i -\bar B)^2,
	\end{align*}
	and the second term in \eqref{eq:bound_sab_1} by
	\begin{align*}
	& \quad \ \Var \left\{  (\bar{A}_1 -\bar A   ) (\bar{B}_1 -\bar B   )\right\} \\
	& \le
	\E \left\{ (\bar{A}_1 -\bar A   )^2(\bar{B}_1 -\bar B   )^2  \right\}
	\le\max_{1 \le j\le N} (A_j - \bar A)^2 \cdot \E\left\{  ( \bar{B}_1 -\bar B   )^2 \right\}
	= \max_{1 \le j\le N} (A_j- \bar A)^2 \cdot \Var (  \bar{B}_1 )
	\\ & = \max_{1 \le j\le N} (A_j - \bar A)^2 \cdot \left( \frac{1}{n}-\frac{1}{N} \right) \frac{1}{N-1} \ \sum_{i=1}^{N} (B_i - \bar B)^2
	\le  \frac{1}{n} \max_{1 \le i\le N} (A_i - \bar A)^2 \cdot
	\frac{1}{N-1} \sum_{i=1}^{N} (B_i - \bar B)^2.
	\end{align*}
	From the above, we can know that 
	\begin{align*}
	\Var (s_{AB}) 
	& \le 
	\frac{4n}{(n-1)^2}\cdot\max_{1 \le j\le N}(A_j-\bar A)^2 \cdot  \frac{1}{N-1} \sum_{i=1}^N (B_i - \bar B)^2.
	\end{align*}
	Therefore, Lemma \ref{lemma:estimate_lemma} holds. 
\end{proof}

\begin{proof}[Proof of Lemma \ref{lemma:s_x2}]
	For $1\le k \le K$, 
	let  $X_k$ denote the $k$th coordinate of the covariate $\bs{X}$, 
	$X_{ki}$ be the $k$th coordinate of the covariate $\bs{X}_i$ for unit $i$, 
	and $\bar{X}_k$ be the finite population average of the covariate $X_k$. 
	Under the CRSE, by the property of simple random sample,
	for any $1\le k,l\le K$, the sample covariance $s_{X_k, X_l}$ between $X_k$ and $X_l$  for  units in $\mathcal{S}$ is unbiased for their finite population covariance $S_{X_k, X_l}$, i.e., 
	$
	\E(s_{X_k, X_l}) = S_{X_k, X_l}.
	$ 
	Moreover, 
	from Lemma \ref{lemma:estimate_lemma}, 
	the variance of $s_{X_k, X_l}$ can be bounded by 
	\begin{align*}
	\Var(s_{X_k, X_l}) \le \frac{4n}{(n-1)^2}\cdot\max_{1 \le j\le N}(X_{lj}-\bar{X}_l)^2 \cdot  \frac{1}{N-1} \sum_{i=1}^N (X_{ki} - \bar{X}_k)^2,
	\end{align*}
	which converges to zero as  $N\rightarrow \infty$ under Condition \ref{cond:fp}. By Chebyshev's inequality, this implies that
	$
	s_{X_k, X_l} - S_{X_k, X_l} = 
	O_{\Pr}(
	\sqrt{\Var(s_{X_k, X_l})}
	) = o_{\Pr}(1),
	$
	for any $1\le k,l\le K$. 
	Therefore, $\bs{s}_{\bs{X}}^2 - \bs{S}_{\bs{X}}^2 = o_{\Pr}(1)$, i.e., 
	Lemma \ref{lemma:s_x2} holds. 
\end{proof}

\begin{proof}[Proof of Lemma \ref{lemma:joint_tilde_CRSE}]
Lemma \ref{lemma:joint_tilde_CRSE} follows from Lemmas \ref{lemma:clt} and \ref{lemma:s_x2} and Slutsky's theorem. 
\end{proof}

\begin{proof}[Proof of Lemma \ref{lemma:M_ST_CRSE}]
By definition, 
$M_T = \sqrt{n}\tilde{\bs{\tau}}_{\bs{X}}^\top \bs{V}_{xx}^{-1} \sqrt{n}\tilde{\bs{\tau}}_{\bs{X}}$ 
and 
$M_S = \sqrt{n}\hat{\bs{\delta}}_{\bs{W}}^\top \bs{V}_{ww}^{-1} \sqrt{n}\hat{\bs{\delta}}_{\bs{W}}$. 
Therefore, Lemma \ref{lemma:M_ST_CRSE} follows immediately from Lemma \ref{lemma:joint_tilde_CRSE} and continuous mapping theorem. 
\end{proof}

\subsection{Comments on Condition \ref{cond:fp} and Proposition \ref{prop:R2}}\label{sec:prop_R2}

\begin{proof}[{\bf Comments on Condition \ref{cond:fp}}]
    Below we consider Condition \ref{cond:fp}(iii) and (iv) when the potential outcomes and covariates are i.i.d.\ from some distribution. 
    
    First, we consider the case where $f$ has a positive limit, and the distributions for potential outcomes and covariates have more than 2 moments. 
    By the law of large numbers, Condition \ref{cond:fp}(iii) holds with probability one. 
    Let $A_i$ be the variable that can take the value of $Y_i(1)$, $Y_i(0)$,  $X_{ki}$ and $W_{ji}$ for $1\le k \le K$ and $1\le j \le J$, 
    and $\bar{A} = N^{-1} \sum_{i=1}^N A_i$ be the finite population average of $A$. 
    From \citet[][Appendix A1]{fpclt2017}, 
    as $N \rightarrow \infty$, 
    \begin{align*}
        \frac{1}{N}\max_{1\le i \le N}(A_i -\bar{A})^2 \rightarrow 0 \  \text{ almost surely}.
    \end{align*}
    Note that $f = n/N$ has a positive limit as $N\rightarrow \infty$. We must have that, as $N\rightarrow \infty$, \begin{align*}
        \frac{1}{n}\max_{1\le i \le N}(A_i -\bar{A})^2 = \frac{1}{f} \cdot \frac{1}{N}\max_{1\le i \le N}(A_i -\bar{A})^2 \rightarrow 0 \  \text{ almost surely}.
    \end{align*}
    Therefore, Condition \ref{cond:fp}(iv) holds almost surely. 
    
    Second, we consider the case where $f$ has a zero limit, and the distributions for potential outcomes and covariates are all sub-Gaussian. 
    By the property of sub-Gaussian random variables, both potential outcomes and covariates have second moments, and thus, by the law of large numbers, Condition \ref{cond:fp}(iii) holds with probability one. 
    Let $A_i$ be the variable that can take the value of $Y_i(1)$, $Y_i(0)$,  $X_{ki}$ and $W_{ji}$ for $1\le k \le K$ and $1\le j \le J$, $\bar{A} = N^{-1} \sum_{i=1}^N A_i$ be the finite population average of $A$, 
    and 
    $\mu=\E(A_i)$ be the superpopulation mean of $A_i$. 
    Then the maximum distance of $A_i$ from its finite population average can be bounded by 
    \begin{align*}
        \max_{1\le i \le N} \left| A_i - \bar{A} \right| 
        & = \max_{1\le i \le N} \left| \left( A_i - \mu \right)  - \left( \bar{A} - \mu \right) \right| 
        \le \max_{1\le i \le N} \left| A_i - \mu \right| + \left| \bar{A} - \mu \right|\\
        & \le 2\max_{1\le i \le N} \left| A_i - \mu \right|. 
    \end{align*}
    Because $A_i$ is sub-Gaussian, by the property of sub-Gaussian random variables, 
    for any constant $t > 0$,  
    \begin{align*}
        \Pr\left\{  \frac{1}{n}\max_{1\le i \le N}(A_i -\bar{A})^2 \ge t \right\}
        & = 
        \Pr\left(  \max_{1\le i \le N}\left| A_i -\bar{A}\right| \ge \sqrt{n t} \right)
        \le \Pr\left(  \max_{1\le i \le N}\left| A_i -\mu\right| \ge \frac{\sqrt{n t}}{2} \right)\\
        & \le 
        \sum_{i=1}^N \Pr\left(  A_i -\mu \ge \frac{\sqrt{n t}}{2} \right)
        + 
        \sum_{i=1}^N \Pr\left(  -( A_i -\mu) \ge \frac{\sqrt{n t}}{2} \right)
        \\
        & \le 2N \exp\left( - C nt \right),
    \end{align*}
    where $C>0$ is a fixed constant that does not depend on $N$. 
    Because $\log N = o(n)$, 
    there must exist $\underline{N}$ such that when $N \ge \underline{N}$, 
    $\log N/n \le Ct/3$, and thus 
    \begin{align*}
        \Pr\left\{  \frac{1}{n}\max_{1\le i \le N}(A_i -\bar{A})^2 \ge t \right\}
        & \le 
        2\exp\left( \log N- Ctn \right)
        = 2\exp\left\{ - \left( \frac{Ct n}{\log N} - 1 \right) \log N \right\}\\
        & \le 
         2\exp( -2\log N) = \frac{2}{N^2}. 
    \end{align*}
    This implies that 
    \begin{align*}
        & \quad \ \sum_{N=1}^\infty \Pr\left\{  \frac{1}{n}\max_{1\le i \le N}(A_i -\bar{A})^2 \ge t \right\}
        \\
        & \le \sum_{N=1}^{\underline{N}} \Pr\left\{  \frac{1}{n}\max_{1\le i \le N}(A_i -\bar{A})^2 \ge t \right\}
        + \sum_{N=\underline{N}+1}^{\infty} \Pr\left\{  \frac{1}{n}\max_{1\le i \le N}(A_i -\bar{A})^2 \ge t \right\}
        \\
        & \le \underline{N} + \sum_{N=\underline{N}+1}^{\infty} \frac{2}{N^2} < \infty. 
    \end{align*}
    Note that $t$ here can be any positive constant. By the Borel--Cantelli lemma, $n^{-1}\max_{1\le i \le N}(A_i -\bar{A})^2$ converges to zero almost surely. 
    Therefore, Condition \ref{cond:fp}(iv) holds almost surely.
\end{proof}

\begin{proof}[{\bf Proof of Proposition \ref{prop:R2}}]
	First, from Lemma \ref{lemma:cov_crse}, the squared multiple correlation between $\hat{\tau}$ and $\hat{\bs{\delta}}_{\bs{W}}$ is 
	\begin{align*}
	R_S^2 =  \frac{\bs{V}_{\tau w} \bs{V}_{ww}^{-1}\bs{V}_{w\tau}}{ {V}_{\tau\tau} }
	= 
	\frac{
		(1 - f)\bs{\bs{S}}_{\tau,\bs{W}}
		(\bs{\bs{S}}^2_{\bs{W}})^{-1} 
		\bs{\bs{S}}_{\bs{W},\tau} 
	}{
		r_1^{-1}S^2_{1}+r_0^{-1}S^2_{0} -fS^2_\tau
	}
	= 
	\frac{
		(1 - f)\bs{S}^2_{\tau\mid \bs{W}}
	}{
		r_1^{-1}S^2_{1}+r_0^{-1}S^2_{0} -fS^2_\tau
	}.
	\end{align*}
	
	Second, the squared multiple correlation between $\hat{\tau}$ and $\hat{\bs{\tau}}_{\bs{X}}$ is 
	\begin{align}\label{eq:R_t_proof}
	R_T^2 
	& = \frac{\bs{V}_{\tau x} \bs{V}_{x x}^{-1}\bs{V}_{x \tau}}{V_{\tau\tau}}
	= \frac{(r_1^{-1}\bs{S}_{1,\bs{X}} +r_0^{-1}\bs{S}_{0,\bs{X}}) 
		\cdot
		r_0 r_1
		(\bs{S}_{\bs{X}}^2 )^{-1} 
		\cdot 
		(r_1^{-1}\bs{S}_{\bs{X},1} +r_0^{-1}\bs{S}_{\bs{X},0})}{
		r_1^{-1}S^2_{1}+r_0^{-1}S^2_{0} -fS^2_\tau
	}
	\end{align}
	Note that the numerator in \eqref{eq:R_t_proof} has the following equivalent forms: 
	\begin{align*}
	& \quad \ 
	(r_1^{-1}\bs{S}_{1,\bs{X}} +r_0^{-1}\bs{S}_{0,\bs{X}}) 
	\cdot
	r_0 r_1
	(\bs{S}_{\bs{X}}^2 )^{-1} 
	\cdot 
	(r_1^{-1}\bs{S}_{\bs{X},1} +r_0^{-1}\bs{S}_{\bs{X},0})\\
	& = 
	r_0 r_1^{-1} S^2_{1\mid \bs{X}} + r_1 r_0^{-1} S^2_{0\mid \bs{X}} + 
	2\bs{S}_{0,\bs{X}} (\bs{S}_{\bs{X}}^2)^{-1}\bs{S}_{\bs{X},1}\\
	& = 
	(r_0 r_1^{-1}+1) S^2_{1\mid \bs{X}} + (r_1 r_0^{-1}+1) S^2_{0\mid \bs{X}} - 
	\big\{
	S^2_{1\mid \bs{X}} + S^2_{0\mid \bs{X}}
	- 2\bs{S}_{0,\bs{X}} (\bs{S}_{\bs{X}}^2)^{-1}\bs{S}_{\bs{X},1}
	\big\}\\
	& = r_1^{-1} S^2_{1\mid \bs{X}} + r_0^{-1}S^2_{0\mid \bs{X}} -
	(\bs{S}_{1,\bs{X}} - \bs{S}_{0,\bs{X}} )^\top 
	(\bs{S}_{\bs{X}}^2)^{-1}
	(\bs{S}_{1,\bs{X}} - \bs{S}_{0,\bs{X}} )\\
	& = 
	r_1^{-1} S^2_{1\mid \bs{X}} + r_0^{-1}S^2_{0\mid \bs{X}} - 
	S^2_{\tau\mid \bs{X}}. 
	\end{align*}
	Thus, the squared multiple correlation $R_T^2$ has the following equivalent form: 
	\begin{align*}
	R_T^2 & = 
	\frac{
		r_1^{-1} S^2_{1\mid \bs{X}} + r_0^{-1}S^2_{0\mid \bs{X}} - 
		S^2_{\tau\mid \bs{X}}
	}{
		r_1^{-1}S^2_{1}+r_0^{-1}S^2_{0} -fS^2_\tau
	}. 
	\end{align*}
	
	From the above, Proposition \ref{prop:R2} holds. 
\end{proof}

\section{Asymptotic equivalence between single-stage and two-stage rerandomized survey experiments}\label{sec:equiv_resem_single_two}
For descriptive convenience, 
we use $\sresem$ to denote the single-stage rerandomized survey experiment, under which sampling and treatment assignment vectors $(\bs{Z}, \bs{T}_{\mathcal{S}})$ are acceptable if and only if the corresponding Mahalanobis distances satisfy that $M_S\le a_S$ and $M_T\le a_T$, as illustrated in Figure \ref{figure:flowchart_single_stage}. 
For clarification, in the following discussion, we use $\Pr(\cdot)$ exclusively for the probability distribution under the CRSE, 
and use $\Pr(\cdot \mid \text{ReSEM})$ and $\Pr(\cdot \mid \sresem)$ for that under ReSEM and $\sresem$, respectively. 
Here we discuss some subtle issues. 
First, there may be no acceptable sampling and assignment under $\sresem$ or ReSEM. 
Second, under our two-stage ReSEM, it is possible that in the second stage there is no acceptable treatment assignments. For convenience, if these happen, we can define the distribution of the sampling or treatment assignment vectors arbitrarily. As shown in Lemma \ref{lemma:M_ST_CRSE} and demonstrated shortly in Lemma \ref{lemma:resem_prob_2stage}, these events will happen with negligible probability as the sample size goes to infinity under certain regularity conditions.

\usetikzlibrary{positioning}
\begin{figure}[btp] 
\begin{center}
\begin{tikzpicture}[node distance=8mm]
\node (S1) [io, text width=2.5cm] {Collect $\bs{W_i}$ for all $i$};
\node (S2) [process, right=  of S1, text width=4.5cm] { Randomly sample $n$ units, and compute $M_S$};
\node (S3) [io, right=  of S2, text width=2.5cm] { Collect $\bs{X_i}$ for $i \in \mathcal{S}$ };
\node (S5) [decision, aspect=2.6, below= of S2
] {$M_S \le a_S$ and $M_T \le a_T$?};
\node (S6) [process, left=of S5, text width=2.5cm] {Conduct the  experiment };
\node (S4) [process, right= of S5, text width=5.5cm] {  Randomly assign the selected $n$ units into treatment and control, and compute $M_T$};
\draw [arrow] (S1) -- (S2);
\draw [arrow] (S2) -- (S3);
\draw [arrow] (S3) -| (S4);
\draw [arrow] (S4) -- (S5);
\draw [arrow] (S5) -- node [above] {yes}  (S6);
\draw [arrow](S5.north) -- node [left] {no} (S2);
\end{tikzpicture}
\end{center}
\caption{Procedure for conducting the single-stage $\sresem$. %
}\label{figure:flowchart_single_stage} 
\end{figure}

In this section, we will prove the following key theorem for the asymptotic equivalence between the two designs, ReSEM and $\sresem$. 
Let 
\begin{align}\label{eq:total_variation}
    & \quad \text{d}_{\TV}(\sresem, \text{ReSEM}) 
    \nonumber
    \\
    & \equiv
    \sup_{\mathcal{A}\subset \{0,1\}^N \times \{0,1\}^n}
    \left|
    \Pr\big\{ (\bs{Z}, \bs{T}_{\mathcal{S}} )\in \mathcal{A} \mid \sresem \big\}
    - 
    \Pr\big\{ (\bs{Z}, \bs{T}_{\mathcal{S}} )\in \mathcal{A} \mid \text{ReSEM} \big\}
    \right|
\end{align}
denote the total variance distance between the probability distributions of $(\bs{Z}, \bs{T}_{\mathcal{S}})$ under $\sresem$ and ReSEM.
\begin{theorem}\label{thm:equiv_resem_sresem}
Under Condition \ref{cond:fp}, 
$\text{d}_{\TV}(\sresem, \text{ReSEM}) \converge 0$ as $N\rightarrow \infty$.
\end{theorem}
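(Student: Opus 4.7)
The first step is a reduction. On the acceptance event $\{\bs{z}: M_S(\bs{z}) \le a_S\}$, direct computation shows that the conditional distribution of $\bs{T}_{\mathcal{S}}$ given $\bs{Z} = \bs{z}$ is identical under both designs: in each case it is the uniform distribution over $\{\bs{t}_{\mathcal{S}}: M_T(\bs{z}, \bs{t}_{\mathcal{S}}) \le a_T\}$. Since the total variation between two joint distributions sharing a common conditional equals the total variation between their marginals, $\text{d}_{\TV}(\sresem, \text{ReSEM})$ reduces to the TV distance between the marginal distributions of $\bs{Z}$. Writing these marginals out and simplifying yields
\begin{align*}
\text{d}_{\TV}(\sresem, \text{ReSEM})
= \frac{1}{2\, p^*\, p_S^*}\, \E\!\left[ \I(M_S \le a_S)\, \bigl| p_T^*(\bs{Z})\, p_S^* - p^* \bigr| \right],
\end{align*}
where $p_S^* = \Pr(M_S \le a_S)$, $p^* = \Pr(M_S \le a_S, M_T \le a_T)$, $p_T^*(\bs{z}) = \Pr(M_T \le a_T \mid \bs{Z}=\bs{z})$, and $\Pr$ is the CRSE probability. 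By Lemma \ref{lemma:M_ST_CRSE}, $p_S^* \to \Pr(\chi^2_J \le a_S) > 0$ and $p^* \to \Pr(\chi^2_J \le a_S)\cdot q$ with $q \equiv \Pr(\chi^2_K \le a_T) > 0$, so the prefactor $1/(2 p^* p_S^*)$ stays bounded as $N \to \infty$.

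The second step reduces the remaining estimate to a single convergence-in-probability statement. By the triangle inequality $|p_T^*(\bs{Z})\, p_S^* - p^*| \le p_S^*\, |p_T^*(\bs{Z}) - q| + |p_S^*\, q - p^*|$, and since $|p_S^*\, q - p^*| \to 0$ again by Lemma \ref{lemma:M_ST_CRSE}, it suffices to show $\E|p_T^*(\bs{Z}) - q| \to 0$. Because $p_T^*(\bs{Z}) \in [0,1]$, bounded convergence reduces this to proving $p_T^*(\bs{Z}) \converge q$ in probability under the CRSE. My plan here is a second-moment argument: $\E[p_T^*(\bs{Z})] = \Pr(M_T \le a_T) \to q$ is already supplied by Lemma \ref{lemma:M_ST_CRSE}, so it is enough to show $\E[\{p_T^*(\bs{Z})\}^2] \to q^2$, from which Chebyshev delivers the desired in-probability convergence. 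To attack this second moment, I introduce an auxiliary copy $\bs{T}_{\mathcal{S}}'$ of the treatment assignment drawn independently of $\bs{T}_{\mathcal{S}}$ conditional on $\bs{Z}$ and let $M_T'$ be its Mahalanobis distance; conditional independence gives $\E[\{p_T^*(\bs{Z})\}^2] = \Pr(M_T \le a_T,\ M_T' \le a_T)$.

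The third and main technical step is a joint central limit theorem for $(\sqrt{n}\,\hat{\bs{\tau}}_{\bs{X}},\, \sqrt{n}\,\hat{\bs{\tau}}_{\bs{X}}')$ under the enlarged randomization (CRSE augmented by the replica assignment), in the spirit of Lemma \ref{lemma:clt}. Because $\bs{T}_{\mathcal{S}}$ and $\bs{T}_{\mathcal{S}}'$ are independent given $\bs{Z}$ with conditional mean zero, the cross-covariance between $\hat{\bs{\tau}}_{\bs{X}}$ and $\hat{\bs{\tau}}_{\bs{X}}'$ is exactly zero, so the asymptotic covariance matrix is block diagonal and the two Gaussian limits are asymptotically independent; combined with Lemma \ref{lemma:s_x2} for $\bs{s}_{\bs{X}}^2$ and the continuous mapping theorem, $(M_T, M_T')$ converges jointly in distribution to two independent $\chi^2_K$ random variables, giving $\Pr(M_T \le a_T,\ M_T' \le a_T) \to q^2$. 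I expect this joint CLT to be the main obstacle, and plan to handle it either by a conditional characteristic-function argument that factors through the conditional independence of the two assignments given $\bs{Z}$, or by augmenting the pseudo potential outcomes of \eqref{eq:pseudo_po} to track both $T$ and $T'$ simultaneously and then applying a finite population CLT in the style of \citet{fpclt2017} to the enlarged group structure. Once this is established, $\Var(p_T^*(\bs{Z})) \to 0$, hence $p_T^*(\bs{Z}) \converge q$ in probability, and the earlier steps assemble into $\text{d}_{\TV}(\sresem, \text{ReSEM}) \to 0$.
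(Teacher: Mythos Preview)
Your proposal is correct and reaches the same crucial reduction as the paper: everything hinges on $p_T^*(\bs{Z}) = \Pr(M_T \le a_T \mid \bs{Z}) \convergep q \equiv \Pr(\chi^2_K \le a_T)$ under the CRSE, after which bounded convergence finishes. Your opening reduction (common conditionals $\Rightarrow$ joint TV equals marginal TV) and the resulting exact formula are slightly cleaner than the paper's Lemma~\ref{lemma:total_variation}, which computes pointwise differences and obtains only an upper bound; modulo the edge case where some $\bs{z}$ with $M_S(\bs{z})\le a_S$ admits no acceptable assignment (which you should acknowledge explicitly), the two routes coincide. The genuine difference is how you establish $p_T^*(\bs{Z}) \convergep q$. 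The paper (Lemma~\ref{lemma:resem_prob_2stage}) does it directly: verify that the sampled covariates $\{\bs{X}_i: i\in\mathcal{S}\}$ satisfy the finite-population CLT regularity conditions with probability tending to one ($\bs{s}^2_{\bs{X}} - \bs{S}^2_{\bs{X}} = o_{\Pr}(1)$ and $n^{-1}\max_{i\in\mathcal{S}}\|\bs{X}_i-\bar{\bs{X}}_{\mathcal{S}}\|_2^2\to 0$), so that conditionally on $\bs{Z}$, $M_T \converged \chi^2_K$ and hence $\Pr(M_T\le a_T\mid\bs{Z})\to q$. Your second-moment/replica route is more circuitous. Of your two plans for the joint CLT of $(M_T,M_T')$: option~(a), factoring the conditional characteristic function, works but requires the conditional CLT for each factor---precisely Lemma~\ref{lemma:resem_prob_2stage}---at which point the replica is superfluous, since that lemma already delivers $p_T^*(\bs{Z}) \convergep q$ outright. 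Option~(b) does not fit \citet{fpclt2017} as stated: with two independent CREs given $\bs{Z}$, the five cells indexed by $(Z,T,T')$ have random sizes (e.g.\ $\sum_i Z_iT_iT_i'$ is hypergeometric), so the design is not a complete randomization into groups of deterministic sizes and the multi-group CLT does not apply off the shelf. In short, the paper's direct conditional-CLT argument is the efficient path; your replica framing is valid but either re-derives that argument or runs into a structural snag.
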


Theorem \ref{thm:equiv_resem_sresem} has important implications. 
Specifically, the asymptotic properties of any estimator, e.g., its asymptotic distribution or consistency, and the asymptotic validity of confidence sets and $p$-values are shared under these two designs.
Therefore, to prove the asymptotic properties for the two-stage ReSEM, it suffices to prove that for the single-stage $\sresem$. 
We summarize some useful results in the following corollary. 

\begin{corollary}\label{cor:equiv_resem_sresem}
    Let $\bs{\theta}(\bs{Z}, \bs{T}_{\mathcal{S}}, \Pi_N)$ and $\Theta(\bs{Z}, \bs{T}_{\mathcal{S}}, \Pi_N)$ be any vector- and set-valued quantities that are uniquely determined by the sampling and treatment assignment indicators and any finite population $\Pi_N$, 
    and $\bs{\theta}_N$ be any vector-valued quantity determined by $\Pi_N$. 
    Let $\sresem$ and ReSEM denote the single- and two-stage rerandomized survey experiments using covariates $\bs{W}$ at the sampling stage and $\bs{X}$ at the treatment assignment stage. 
    Assume that Condition \ref{cond:fp} holds. 
    \begin{enumerate}[label=(\roman*), topsep=1ex,itemsep=-0.3ex,partopsep=1ex,parsep=1ex]
        \item If $\sqrt{n}\{ \bs{\theta}(\bs{Z}, \bs{T}_{\mathcal{S}}, \Pi_N) - \bs{\theta}_N \} \mid \sresem \converged \bs{\xi}$ for some random vector $\bs{\xi}$, 
        then $\sqrt{n}\{ \bs{\theta}(\bs{Z}, \bs{T}_{\mathcal{S}}, \Pi_N) - \bs{\theta}_N \} \mid \text{ReSEM} \converged \bs{\xi}$.
        
        \item If $\bs{\theta}(\bs{Z}, \bs{T}_{\mathcal{S}}, \Pi_N) - \bs{\theta}_N \mid  \sresem \convergep 0$, 
        then $\bs{\theta}(\bs{Z}, \bs{T}_{\mathcal{S}}, \Pi_N) - \bs{\theta}_N \mid  \text{ReSEM} \convergep 0$. 
        
        \item If $\limsup_{N\rightarrow \infty}\Pr\{ \bs{\theta}(\bs{Z}, \bs{T}_{\mathcal{S}}, \Pi_N) \le \alpha \mid  \sresem \} \le \alpha$ for $\alpha \in (0,1)$, then 
        $\limsup_{N\rightarrow \infty}\Pr\{ \bs{\theta}(\bs{Z}, \bs{T}_{\mathcal{S}}, \Pi_N) \le \alpha \mid \text{ReSEM} \} \le \alpha$.
        
        \item If $\liminf_{N\rightarrow \infty} \Pr\{\bs{\theta}_N \in \Theta(\bs{Z}, \bs{T}_{\mathcal{S}}, \Pi_N) \mid \sresem \} \ge 1-\alpha$ for some $\alpha\in(0,1)$, then 
        $\liminf_{N\rightarrow \infty} \Pr\{\bs{\theta}_N \in \Theta(\bs{Z}, \bs{T}_{\mathcal{S}}, \Pi_N) \mid \text{ReSEM} \} \ge 1-\alpha$.
    \end{enumerate}
\end{corollary}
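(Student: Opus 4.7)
\textbf{Step 1: Reduction to the marginal law of $\bs{Z}$.}
The first observation is that, conditional on any $\bs{z}$ with $M_S(\bs{z}) \le a_S$, both designs induce the \emph{same} conditional distribution on $\bs{T}_{\mathcal{S}}$, namely the CRE on $\mathcal{S}$ restricted to $\{M_T \le a_T\}$: under ReSEM this holds by construction, while under $\sresem$ the sampling constraint is already satisfied once $\bs{Z} = \bs{z}$ is fixed, so only the assignment constraint matters. A standard identity then collapses the joint TV distance to the marginal TV distance of $\bs{Z}$. Setting $p_T(\bs{z}) := \Pr(M_T \le a_T \mid \bs{Z}=\bs{z})$, $p_S := \Pr(M_S \le a_S)$ and $p_{ST} := \Pr(M_S \le a_S, M_T \le a_T)$ under the CRSE, direct calculation with $\Pr(\bs{Z}=\bs{z}) = \binom{N}{n}^{-1}$ gives
$$\Pr_{\text{ReSEM}}(\bs{Z}=\bs{z}) = \tfrac{1}{p_S}\binom{N}{n}^{-1}\I\{M_S(\bs{z})\le a_S\}, \quad \Pr_{\sresem}(\bs{Z}=\bs{z}) = \tfrac{1}{p_{ST}}\binom{N}{n}^{-1}\I\{M_S(\bs{z})\le a_S\}\,p_T(\bs{z}).$$
Since $p_{ST}/p_S = \E_{\text{ReSEM}}[p_T(\bs{Z})]$, the Radon-Nikodym derivative of $\sresem$'s $\bs{Z}$-marginal with respect to ReSEM's is $p_T(\bs{z})/\E_{\text{ReSEM}}[p_T(\bs{Z})]$, yielding
$$\text{d}_{\TV}(\sresem,\text{ReSEM}) = \frac{\E_{\text{ReSEM}}\big|p_T(\bs{Z}) - \E_{\text{ReSEM}}[p_T(\bs{Z})]\big|}{2\,\E_{\text{ReSEM}}[p_T(\bs{Z})]}.$$
It therefore suffices to prove that $p_T(\bs{Z})$ concentrates around a strictly positive constant.

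\textbf{Step 2: Concentration of $p_T(\bs{Z})$ via a conditional CLT.}
I claim that under the CRSE, $p_T(\bs{Z}) \convergep c := \Pr(\chi^2_K \le a_T)$; since Lemma \ref{lemma:M_ST_CRSE} gives $p_S \to \Pr(\chi^2_J \le a_S) > 0$, convergence in probability under the CRSE transfers to convergence under ReSEM (which is the CRSE conditioned on an event of positive limiting probability). Conditional on $\bs{Z}$, the treatment assignment is an ordinary CRE on the $n$ sampled units; applying the finite-population CLT conditionally on $\bs{Z}$ gives $\sqrt{n}\hat{\bs{\tau}}_{\bs{X}} \mid \bs{Z}\ \dot\sim\ \mathcal{N}\bigl(\bs{0},(r_1 r_0)^{-1}\bs{s}^2_{\bs{X}}\bigr)$ for $\bs{Z}$ in a set of asymptotic probability one. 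Combining with Lemma \ref{lemma:s_x2} ($\bs{s}^2_{\bs{X}} - \bs{S}^2_{\bs{X}} = o_{\Pr}(1)$) and the continuous mapping theorem shows $M_T \mid \bs{Z} \converged \chi^2_K$ in probability over $\bs{Z}$; and since $a_T$ is a continuity point of the $\chi^2_K$ CDF, $p_T(\bs{Z}) \convergep c$.

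\textbf{Step 3: Conclude, and the main obstacle.}
Because $p_T \in [0,1]$, the in-probability convergence of Step 2 combined with dominated convergence yields $\E_{\text{ReSEM}}[p_T(\bs{Z})] \to c$ and $\E_{\text{ReSEM}}|p_T(\bs{Z}) - \E_{\text{ReSEM}}[p_T(\bs{Z})]| \to 0$, so $\text{d}_{\TV}(\sresem,\text{ReSEM}) \to 0$. The main obstacle is the conditional CLT in Step 2: one must verify a Lindeberg-type condition on the sampled covariates $\{\bs{X}_i - \bar{\bs{X}}_{\mathcal{S}}: i \in \mathcal{S}\}$ for realizations of $\bs{Z}$ in a set of asymptotic probability one, uniformly enough that the conditional conclusion aggregates into the desired in-probability statement. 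This is handled by the bound $\max_{i\in\mathcal{S}}\|\bs{X}_i - \bar{\bs{X}}_{\mathcal{S}}\|_2^2 \le 2\max_{1\le i\le N}\|\bs{X}_i - \bar{\bs{X}}\|_2^2 + 2\|\bar{\bs{X}}_{\mathcal{S}} - \bar{\bs{X}}\|_2^2$: the first summand divided by $n$ vanishes deterministically by Condition \ref{cond:fp}(iv), while $\|\bar{\bs{X}}_{\mathcal{S}} - \bar{\bs{X}}\|_2 = O_{\Pr}(n^{-1/2})$ by the CRSE CLT (Lemma \ref{lemma:clt}). Together with Lemma \ref{lemma:s_x2} for the nondegeneracy of $\bs{s}^2_{\bs{X}}$, the required conditional CLT is secured on events of probability tending to one, completing the plan.
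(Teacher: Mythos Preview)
Your proposal is correct and follows essentially the same route as the paper: you reduce to showing $\text{d}_{\TV}(\sresem,\text{ReSEM})\to 0$ and establish this via concentration of $p_T(\bs{Z})=\Pr(M_T\le a_T\mid\bs{Z})$ through a conditional CLT on the sampled units, which is exactly the content of the paper's Theorem~\ref{thm:equiv_resem_sresem} together with Lemmas~\ref{lemma:total_variation} and~\ref{lemma:resem_prob_2stage}. The only cosmetic differences are that you derive an \emph{exact} TV identity (exploiting that the two designs share the same conditional law of $\bs{T}_{\mathcal{S}}$ given $\bs{Z}$) where the paper proves only an upper bound, and you phrase the relevant expectation under ReSEM rather than the CRSE; neither changes the argument in substance.
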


\subsection{Technical lemmas}

\begin{lemma}\label{lemma:total_variation}
Let $\mathcal{M} \subset \{0,1\}^N \times \{0,1\}^n$ denote the set of all possible values of $(\bs{Z}, \bs{T}_{\mathcal{S}})$ under $\sresem$, i.e., $\mathcal{M}$ consists of all acceptable sampling and treatment assignment vectors under $\sresem$. 
The total variation distance between the probability distributions of $(\bs{Z}, \bs{T}_{\mathcal{S}})$ under $\sresem$ and ReSEM, as defined in \eqref{eq:total_variation}, can be bounded by: 
\begin{align*}%
    & \quad \text{d}_{\TV}(\sresem, \text{ReSEM}) 
    \\
    & \equiv
    \sup_{\mathcal{A}\subset \{0,1\}^N \times \{0,1\}^n}
    \left|
    \Pr\big\{ (\bs{Z}, \bs{T}_{\mathcal{S}} )\in \mathcal{A} \mid \sresem \big\}
    - 
    \Pr\big\{ (\bs{Z}, \bs{T}_{\mathcal{S}} )\in \mathcal{A} \mid \text{ReSEM} \big\}
    \right|
    \nonumber
    \\
    & \le 
    \I(\mathcal{M} = \emptyset) 
    + 
    \I(\mathcal{M} \neq \emptyset) \cdot
    \frac{\E \left| \Pr(M_T \le a_T \mid \bs{Z} ) - 
    \Pr(M_T\le a_T \mid M_S\le a_S) \right|}{ \Pr(M_T\le a_T, M_S\le a_S)}.
    \nonumber
\end{align*}
\end{lemma}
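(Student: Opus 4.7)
The plan is to split into two cases according to the indicator in the bound. If $\mathcal{M}=\emptyset$, there are no acceptable realizations for $\sresem$, the single-stage distribution is defined arbitrarily, and the total variation distance is trivially at most $1=\I(\mathcal{M}=\emptyset)$, so the inequality holds. Henceforth I assume $\mathcal{M}\ne\emptyset$, in which case $\Pr(M_S\le a_S, M_T\le a_T)>0$ and hence also $\Pr(M_S\le a_S)>0$, so both designs are genuine probability distributions on $(\bs{Z},\bs{T}_{\mathcal{S}})$.

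The key step is to reduce the joint TV distance to the marginal TV distance of $\bs{Z}$. Writing $q(\bs{z})\equiv \Pr(M_T\le a_T\mid \bs{Z}=\bs{z})$ under the CRSE, I would compute from \eqref{eq:srs_cre} and Bayes' rule that, for every $\bs{z}$ with $M_S(\bs{z})\le a_S$ and $q(\bs{z})>0$, the conditional distribution $\Pr(\bs{T}_{\mathcal{S}}=\bs{t}_{\mathcal{S}}\mid \bs{Z}=\bs{z},\text{design})$ coincides under both designs and equals $\I(M_T\le a_T)\Pr(\bs{T}_{\mathcal{S}}=\bs{t}_{\mathcal{S}}\mid \bs{Z}=\bs{z})/q(\bs{z})$. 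On the ``bad'' set where $M_S(\bs{z})\le a_S$ but $q(\bs{z})=0$---which carries probability zero under $\sresem$---the inner sum $\sum_{\bs{t}_{\mathcal{S}}}|\Pr(\bs{z},\bs{t}_{\mathcal{S}}\mid \sresem)-\Pr(\bs{z},\bs{t}_{\mathcal{S}}\mid \text{ReSEM})|$ collapses to $\Pr(\bs{Z}=\bs{z}\mid \text{ReSEM})$ irrespective of the arbitrary conditional assignment. Together these yield the coupling identity
\[
\text{d}_{\TV}(\sresem,\text{ReSEM}) \;=\; \text{d}_{\TV}\!\left(\bs{Z}\mid \sresem,\ \bs{Z}\mid \text{ReSEM}\right).
\]

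The remaining work is a direct calculation on the $\bs{Z}$-marginals. Under ReSEM, $\bs{Z}$ has mass proportional to $\I(M_S\le a_S)\Pr(\bs{Z}=\bs{z})$ with normalizer $\Pr(M_S\le a_S)$; under $\sresem$, it has mass proportional to $\I(M_S\le a_S)\,q(\bs{z})\,\Pr(\bs{Z}=\bs{z})$ with normalizer $\Pr(M_S\le a_S, M_T\le a_T)=\Pr(M_S\le a_S)\Pr(M_T\le a_T\mid M_S\le a_S)$. Pulling $[\Pr(M_S\le a_S, M_T\le a_T)]^{-1}$ out of $\tfrac{1}{2}\sum_{\bs{z}}|\cdot|$, the remaining sum equals $\tfrac{1}{2}\E\!\left[\I(M_S\le a_S)\,\bigl|q(\bs{Z})-\Pr(M_T\le a_T\mid M_S\le a_S)\bigr|\right]$, and dropping the indicator together with the factor of $1/2$ yields the stated upper bound. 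The main obstacle is not any deep estimate but rather careful bookkeeping: verifying that the coupling identity handles the zero-$q(\bs{z})$ set correctly so that the ad hoc choice of the two-stage conditional on that event plays no role in the final bound.
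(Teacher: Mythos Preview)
Your proposal is correct but takes a genuinely different route from the paper. The paper computes the pointwise joint differences $\Pr(\bs{z},\bs{t}_{\mathcal{S}}\mid\sresem)-\Pr(\bs{z},\bs{t}_{\mathcal{S}}\mid\text{ReSEM})$ explicitly for $(\bs{z},\bs{t}_{\mathcal{S}})\in\mathcal{M}$, sums the absolute values over $\mathcal{M}$ (the $\bs{t}_{\mathcal{S}}$-sum cancels the $q(\bs{z})$ in the denominator), and then bounds the contribution from $\mathcal{M}^{\c}$ by the same sum via the identity $\sum_{\mathcal{M}^{\c}}\Pr(\cdot\mid\text{ReSEM})=1-\sum_{\mathcal{M}}\Pr(\cdot\mid\text{ReSEM})$; combining gives $\text{d}_{\TV}\le\sum_{\mathcal{M}}|\cdot|$, which is then loosened to the full expectation. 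You instead observe once and for all that the conditionals $\bs{T}_{\mathcal{S}}\mid\bs{Z}$ agree under both designs wherever both put positive mass, so the joint TV equals the marginal TV of $\bs{Z}$; the rest is a one-line computation on the two $\bs{Z}$-marginals. Your argument is structurally cleaner and explains why only the $\bs{Z}$-marginals matter, while the paper's direct calculation avoids having to check that the coupling identity survives the arbitrary conditional on the exceptional set $\{M_S\le a_S,\, q(\bs{Z})=0\}$ (a point you do handle carefully). Both proofs discard a factor of $1/2$ and relax to the unrestricted CRSE expectation at the end, so the resulting bound is identical.
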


\begin{lemma}\label{lemma:resem_prob_2stage}
Consider a survey experiment with covariates $\bs{W}$ at the sampling stage and $\bs{X}$ at the assignment stage. 
Assume that  Condition \ref{cond:fp} holds, $\bs{Z}$ is from rejective sampling with size $n$ and criterion $M_S\le a_S$ for some positive $a_S$, 
and $\bs{T}_{\mathcal{S}}$ is from a CRE with $n_1$ and $n_0$ units assigned to treatment and control, respectively, where $\mathcal{S}$ is the set of sampled units. 
Then 
\begin{enumerate}[label=(\roman*), topsep=1ex,itemsep=-0.3ex,partopsep=1ex,parsep=1ex]
    \item $\bs{s}_{\bs{X}}^2 - \bs{S}_{\bs{X}}^2 = o_{\Pr}(1)$, recalling that $\bs{s}^2_{\bs{X}}$ and $\bs{S}^2_{\bs{X}}$ are the sample and finite population covariance matrices of $\bs{X}$, respectively; 
    
    \item for any positive $a_T$, 
    $
    \Pr(M_T \le a_T \mid \bs{Z}) \convergep \Pr(\chi^2_K \le a_T)
    $
    as $N\rightarrow \infty$. 
\end{enumerate}
\end{lemma}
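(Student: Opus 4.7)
For Lemma \ref{lemma:resem_prob_2stage}(i), the plan is to reduce to the CRSE version in Lemma \ref{lemma:s_x2} by exploiting the fact that rejective sampling is simply SRS restricted to the event $\{M_S\le a_S\}$. Under CRSE, Lemma \ref{lemma:M_ST_CRSE} gives $\Pr(M_S\le a_S) \to \Pr(\chi^2_J\le a_S)$, which is a positive constant bounded away from zero since $a_S>0$ and the limit of $f$ is strictly less than one by Condition \ref{cond:fp}(i). For any $\epsilon>0$, I can then bound $\Pr(\|\bs{s}_{\bs{X}}^2 - \bs{S}_{\bs{X}}^2\| > \epsilon \mid M_S\le a_S) \le \Pr(\|\bs{s}_{\bs{X}}^2 - \bs{S}_{\bs{X}}^2\| > \epsilon)/\Pr(M_S\le a_S)$, whose numerator vanishes by Lemma \ref{lemma:s_x2}, yielding part (i).

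For part (ii), observe first that $\phi_N(\bs{Z}) \equiv \Pr(M_T\le a_T\mid \bs{Z})$ is a deterministic function of $\mathcal{S}$ and the sampled covariates, not depending on how $\bs{Z}$ itself was drawn. I therefore adopt a standard subsequence argument: to show $\phi_N(\bs{Z}) \convergep \Pr(\chi^2_K\le a_T)$ under rejective sampling, it suffices to show that along any subsequence $\{N_k\}$ there is a further subsequence $\{N_{k'}\}$ on which $\phi_{N_{k'}}(\bs{Z})\to \Pr(\chi^2_K\le a_T)$ almost surely. By part (i) and the bound $\max_{i\in\mathcal{S}}\|\bs{X}_i - \bar{\bs{X}}_{\mathcal{S}}\|^2 \le 2\max_{1\le i\le N}\|\bs{X}_i-\bar{\bs{X}}\|_2^2 + 2\|\bar{\bs{X}}_{\mathcal{S}} - \bar{\bs{X}}\|_2^2$, combined with Condition \ref{cond:fp}(iv) and $\|\bar{\bs{X}}_{\mathcal{S}} - \bar{\bs{X}}\|_2^2 = O_{\Pr}(n^{-1})$ (transferred to rejective sampling as in step 1), the pair $(\bs{s}_{\bs{X}}^2, n^{-1}\max_{i\in\mathcal{S}}\|\bs{X}_i - \bar{\bs{X}}_{\mathcal{S}}\|_2^2)$ converges in probability to $(\bs{S}_{\bs{X}}^2, 0)$, and a further subsequence extraction makes this convergence almost sure.

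On the almost-sure set I fix $\omega$ and view the sampled covariates $\{\bs{X}_i : i\in \mathcal{S}(\omega)\}$ as a deterministic finite population on which $\bs{T}_{\mathcal{S}}$ induces a CRE. The quantities just shown to converge almost surely are exactly the regularity conditions required by the finite population CLT of \citet{fpclt2017}, playing the same role for this CRE that Condition \ref{cond:fp} plays in the proof of Lemma \ref{lemma:clt}. Applying that CLT conditionally gives $\sqrt{n}\,\hat{\bs{\tau}}_{\bs{X}}\mid \bs{Z}(\omega) \converged \mathcal{N}\bigl(\bs{0},\,(r_1 r_0)^{-1}\bs{S}_{\bs{X}}^2\bigr)$, and combining this with $\bs{s}_{\bs{X}}^2 \to \bs{S}_{\bs{X}}^2$ via Slutsky and continuous mapping, the identity $M_T = n r_1 r_0\, \hat{\bs{\tau}}_{\bs{X}}^\top (\bs{s}_{\bs{X}}^2)^{-1}\hat{\bs{\tau}}_{\bs{X}}$ yields $M_T\mid \bs{Z}(\omega) \converged \chi^2_K$. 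The Portmanteau theorem together with continuity of the chi-square CDF then gives $\phi_N(\bs{Z}(\omega)) \to \Pr(\chi^2_K\le a_T)$ pointwise on the good set, and the subsequence argument converts this into the desired convergence in probability.

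The main obstacle is exactly this bridge from conditional distributional convergence of $M_T$ given a single realization of $\mathcal{S}$ to convergence in probability of the conditional CDF $\phi_N(\bs{Z})$ viewed as a random function of the sample. The subsequence device resolves it cleanly because almost-sure convergence of the two regularity quantities along a subsequence permits a pointwise application of the finite population CLT to each realized sample, and this ``$\omega$-by-$\omega$'' CLT is what ultimately delivers convergence in probability of $\phi_N(\bs{Z})$.
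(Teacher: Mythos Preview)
Your proposal is correct and follows essentially the same approach as the paper: part (i) transfers the SRS result via conditioning on $\{M_S\le a_S\}$ with $\Pr(M_S\le a_S)$ bounded away from zero, and part (ii) uses the subsequence device (the paper cites \citet[][Theorem 2.3.2]{durrett2019probability}) to apply the finite population CLT $\omega$-by-$\omega$ once the sample-level regularity conditions are secured. The only cosmetic difference is that the paper bounds $\|\bar{\bs{X}}_{\mathcal{S}}-\bar{\bs{X}}\|_2\le \max_{1\le j\le N}\|\bs{X}_j-\bar{\bs{X}}\|_2$ deterministically, making $n^{-1}\max_{i\in\mathcal{S}}\|\bs{X}_i-\bar{\bs{X}}_{\mathcal{S}}\|_2^2\to 0$ a non-random statement rather than your $O_{\Pr}(n^{-1})$ route, but both versions work.
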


\subsection{Proofs of the lemmas}

\begin{proof}[Proof of Lemma \ref{lemma:total_variation}]
Let $\mathcal{M}_1$ denote the set of sampling vectors such that the corresponding $M_S\le a_S$, 
and $\mathcal{M}_2(\bs{z}) =\{\bs{t}: (\bs{z}, \bs{t})\in \mathcal{M}\}$ denote the acceptable treatment assignment vector when the sampling vector takes value $\bs{z}$.
Let $\mathcal{M}_1' = \{\bs{z}: (\bs{z}, \bs{t})\in \mathcal{M} \text{ for some } \bs{t} \in \{0,1\}^n \text{ and } \sum_{i=1}^n t_i = n_1 \}$ denote the set of acceptable sampling under $\sresem$. 
Note that $\mathcal{M}_1'$ may be different from $\mathcal{M}_1$. 
We will use $|\mathcal{M}|, |\mathcal{M}_1|$ and $|\mathcal{M}_2(\bs{z})|$ to denote the cardinalities of these sets. 
Note that Lemma \ref{lemma:total_variation} holds obviously when $\mathcal{M} = \emptyset$. 
Below we consider only the case where $\mathcal{M}\ne \emptyset$. 

We first consider the difference between $\sresem$ and ReSEM for sampling and assignment vectors in $\mathcal{M}$. 
For any $(\bs{z}, \bs{t})\in \mathcal{M}$, we have 
\begin{align*}
    \Pr(\bs{Z} = \bs{z}, \bs{T}_{\mathcal{S}} = \bs{t} \mid \sresem)
    & = 
    \frac{1}{|\mathcal{M}|}
    = 
    \frac{1}{\binom{N}{n_1, n_0} \cdot \Pr(M_T\le a_T, M_S \le a_S)}, 
\end{align*}
and 
\begin{align*}
    & \quad \ \Pr(\bs{Z} = \bs{z}, \bs{T}_{\mathcal{S}} = \bs{t} \mid \text{ReSEM})
    \\
    & = \Pr(\bs{Z} = \bs{z} \mid \text{ReSEM}) 
    \cdot
    \Pr(\bs{T}_{\mathcal{S}} = \bs{t} \mid \bs{Z} = \bs{z}, \text{ReSEM})
    = 
    \frac{1}{|\mathcal{M}_1|} \cdot \frac{1}{|\mathcal{M}_2(\bs{z})|}
    \\
    & = 
    \frac{1}{\binom{N}{n} \cdot\Pr(M_S\le a_S)} \cdot
    \frac{1}{\binom{n}{n_1} \cdot \Pr(M_T \le a_T \mid \bs{Z} = \bs{z})}
    \\
    & = 
    \frac{1}{\binom{N}{n_1, n_0} \cdot \Pr(M_S\le a_S) \cdot \Pr(M_T \le a_T \mid \bs{Z} = \bs{z})}. 
\end{align*}
These imply that, for any $(\bs{z}, \bs{t})\in \mathcal{M}$, 
\begin{align*}
    & \quad \ 
    \Pr(\bs{Z} = \bs{z}, \bs{T}_{\mathcal{S}} = \bs{t} \mid \sresem)
    - 
    \Pr(\bs{Z} = \bs{z}, \bs{T}_{\mathcal{S}} = \bs{t} \mid \text{ReSEM})
    \\
    & = 
    \frac{\Pr(M_T \le a_T \mid \bs{Z} = \bs{z}) - 
    \Pr(M_T\le a_T \mid M_S\le a_S)
    }{\binom{N}{n_1, n_0} \cdot \Pr(M_T\le a_T, M_S \le a_S) \cdot \Pr(M_T \le a_T \mid \bs{Z} = \bs{z})}.
\end{align*}

Consequently, we have
\begin{align*}
    & \quad \ 
    \sum_{(\bs{z}, \bs{t})\in \mathcal{M}}
    \left|
    \Pr(\bs{Z} = \bs{z}, \bs{T}_{\mathcal{S}} = \bs{t} \mid \sresem)
    - 
    \Pr(\bs{Z} = \bs{z}, \bs{T}_{\mathcal{S}} = \bs{t} \mid \text{ReSEM})
    \right|
    \\
    & = 
    \sum_{\bs{z}\in \mathcal{M}_1'} |\mathcal{M}_2(\bs{z})| \cdot
    \frac{|\Pr(M_T \le a_T \mid \bs{Z} = \bs{z}) - 
    \Pr(M_T\le a_T \mid M_S\le a_S)|}{\binom{N}{n_1, n_0} \cdot \Pr(M_T\le a_T, M_S \le a_S) \cdot \Pr(M_T \le a_T \mid \bs{Z} = \bs{z})}
    \\
    & = 
    \sum_{\bs{z}\in \mathcal{M}_1'} \binom{n}{n_1} \cdot \Pr(M_T \le a_T \mid \bs{Z} = \bs{z}) \cdot
    \frac{|\Pr(M_T \le a_T \mid \bs{Z} = \bs{z}) - 
    \Pr(M_T\le a_T \mid M_S\le a_S)|}{\binom{N}{n_1, n_0} \cdot \Pr(M_T\le a_T, M_S \le a_S) \cdot \Pr(M_T \le a_T \mid \bs{Z} = \bs{z})}
    \\
    & = 
    \frac{1}{ \Pr(M_T\le a_T, M_S\le a_S)} \cdot
    \frac{1}{\binom{N}{n}}
    \sum_{\bs{z}\in \mathcal{M}_1'} 
    \left|\Pr(M_T \le a_T \mid \bs{Z} = \bs{z}) - 
    \Pr(M_T\le a_T \mid M_S\le a_S)\right|
    \\
    & \le 
    \frac{1}{ \Pr(M_T\le a_T, M_S\le a_S)} \cdot
    \frac{1}{\binom{N}{n}}
    \sum_{\bs{z}\in \{0,1\}^N: \sum z_i = n} 
    \left|\Pr(M_T \le a_T \mid \bs{Z} = \bs{z}) - 
    \Pr(M_T\le a_T \mid M_S\le a_S)\right|
    \\
    & = \frac{1}{ \Pr(M_T\le a_T, M_S\le a_S)} \cdot
    \E \left| \Pr(M_T \le a_T \mid \bs{Z} ) - 
    \Pr(M_T\le a_T \mid M_S\le a_S) \right|. 
\end{align*}

We then consider the difference between $\sresem$ and ReSEM for sampling and assignment vectors not in $\mathcal{M}$. 
The difference can be bounded in the following way:
\begin{align*}
    & \quad \ 
    \sum_{(\bs{z}, \bs{t})\notin \mathcal{M}}
    \left|
    \Pr(\bs{Z} = \bs{z}, \bs{T}_{\mathcal{S}} = \bs{t} \mid \sresem)
    - 
    \Pr(\bs{Z} = \bs{z}, \bs{T}_{\mathcal{S}} = \bs{t} \mid \text{ReSEM})
    \right|
    \\
    & = 
    \sum_{(\bs{z}, \bs{t})\notin \mathcal{M}}
    \Pr(\bs{Z} = \bs{z}, \bs{T}_{\mathcal{S}} = \bs{t} \mid \text{ReSEM})
    = 
    1 - \sum_{(\bs{z}, \bs{t})\in \mathcal{M}}
    \Pr(\bs{Z} = \bs{z}, \bs{T}_{\mathcal{S}} = \bs{t} \mid \text{ReSEM})
    \\
    & = 
    \sum_{(\bs{z}, \bs{t})\in \mathcal{M}}
    \Pr(\bs{Z} = \bs{z}, \bs{T}_{\mathcal{S}} = \bs{t} \mid \sresem) - \sum_{(\bs{z}, \bs{t})\in \mathcal{M}}
    \Pr(\bs{Z} = \bs{z}, \bs{T}_{\mathcal{S}} = \bs{t} \mid \text{ReSEM})
    \\
    & \le 
    \sum_{(\bs{z}, \bs{t})\in \mathcal{M}}
    \left|
    \Pr(\bs{Z} = \bs{z}, \bs{T}_{\mathcal{S}} = \bs{t} \mid \sresem)
    - 
    \Pr(\bs{Z} = \bs{z}, \bs{T}_{\mathcal{S}} = \bs{t} \mid \text{ReSEM})
    \right|. 
\end{align*}

From the above and by the property of total variation distance for discrete measures, we have
\begin{align*}
    & \quad \text{d}_{\TV}(\sresem, \text{ReSEM}) 
    \\
    & = \frac{1}{2} \sum_{(\bs{z},\bs{t})\in \mathcal{M} } 
    \left|
    \Pr(\bs{Z} = \bs{z}, \bs{T}_{\mathcal{S}} = \bs{t} \mid \sresem)
    - 
    \Pr(\bs{Z} = \bs{z}, \bs{T}_{\mathcal{S}} = \bs{t} \mid \text{ReSEM})
    \right|
    \\
    & \quad \ + 
    \frac{1}{2} \sum_{(\bs{z},\bs{t})\notin \mathcal{M} } 
    \left|
    \Pr(\bs{Z} = \bs{z}, \bs{T}_{\mathcal{S}} = \bs{t} \mid \sresem)
    - 
    \Pr(\bs{Z} = \bs{z}, \bs{T}_{\mathcal{S}} = \bs{t} \mid \text{ReSEM})
    \right|
    \\
    & 
    \le \sum_{(\bs{z},\bs{t})\in \mathcal{M} } 
    \left|
    \Pr(\bs{Z} = \bs{z}, \bs{T}_{\mathcal{S}} = \bs{t} \mid \sresem)
    - 
    \Pr(\bs{Z} = \bs{z}, \bs{T}_{\mathcal{S}} = \bs{t} \mid \text{ReSEM})
    \right|
    \\
    & \le 
    \frac{\E | \Pr(M_T \le a_T \mid \bs{Z} ) - 
    \Pr(M_T\le a_T \mid M_S\le a_S) |}{ \Pr(M_T\le a_T, M_S\le a_S)}.
\end{align*}
Therefore, Lemma \ref{lemma:total_variation} holds. 
\end{proof}

\begin{proof}[Proof of Lemma \ref{lemma:resem_prob_2stage}]
First, we prove (i), i.e.,  $\bs{s}^2_{\bs{X}} - \bs{S}^2_{\bs{X}} = o_{\Pr}(1)$ under rejective sampling. 
From Lemma \ref{lemma:M_ST_CRSE} and the proof of Lemma \ref{lemma:s_x2}, we can know that, for any $1\le k,l\le K$, 
\begin{align*}
    \E\{ (s_{X_k, X_l} - S_{X_k, X_l})^2 \mid M_S\le a_S \}
    & 
    \le 
    \Pr(M_S\le a_S)^{-1} \cdot \E\{ (s_{X_k, X_l} - S_{X_k, X_l})^2  \}
    \\
    & = \Pr(M_S\le a_S)^{-1} \cdot \Var(s_{X_k, X_l} ) 
    = o(1). 
\end{align*}
By the Markov inequality, we then have $s_{X_k, X_l} - S_{X_k, X_l}=o_{\Pr}(1)$ under rejective sampling. 
Thus, $\bs{s}^2_{\bs{X}} - \bs{S}^2_{\bs{X}} = o_{\Pr}(1)$ under rejective sampling.

Second, we prove that $n^{-1} \max_{i\in \mathcal{S}} \|\bs{X}_{i} - \bar{\bs{X}}_{\mathcal{S}}\|_2^2 \rightarrow 0$ as $N\rightarrow \infty$. 
For each sampled unit $i\in \mathcal{S}$, its distance from the sample mean can be bounded by 
\begin{align*}
    \|\bs{X}_{i} - \bar{\bs{X}}_{\mathcal{S}}\|_2 
    & \le 
    \|\bs{X}_{i} - \bar{\bs{X}}\|_2 + \|\bar{\bs{X}}_{\mathcal{S}} - \bar{\bs{X}} \|_2 
    \le 2 \max_{1\le j\le N} \|\bs{X}_{j} - \bar{\bs{X}}\|_2, 
\end{align*}
which implies that $n^{-1} \max_{i:Z_i=1} \|\bs{X}_{i} - \bar{\bs{X}}_{\mathcal{S}}\|_2^2 \le 4 \cdot n^{-1} \max_{1\le j\le N} \|\bs{X}_{j} - \bar{\bs{X}}\|_2^2$. 
From Condition \ref{cond:fp}(iv), we must have $n^{-1} \max_{i\in \mathcal{S}} \|\bs{X}_{i} - \bar{\bs{X}}_{\mathcal{S}}\|_2^2 = o(1)$.

Third, we prove (ii), i.e., for any given positive $a_T$, 
$
\Pr(M_T \le a_T \mid \bs{Z}) \convergep \Pr(\chi^2_K \le a_T)
$
as $N\rightarrow \infty$. 
Note that given $\bs{Z}$, $\bs{T}_{\mathcal{S}}$ is from a CRE on the sampled units. 
By the finite population central limit theorem for the CRE (e.g., Lemma \ref{lemma:clt} with $f=1$), 
if,  as $N\rightarrow \infty$, 
the following regularity condition holds almost surely for the sampled units in $\mathcal{S}$: 
\begin{enumerate}[label=(\roman*), topsep=1ex,itemsep=-0.3ex,partopsep=1ex,parsep=1ex]
    \item $r_1$ and $r_0$ have positive limits,
    \item $\bs{s}^2_{\bs{X}}$ has a nonsingular limit, 
    \item $n^{-1} \max_{i\in \mathcal{S}} \|\bs{X}_{i} - \bar{\bs{X}}_{\mathcal{S}}\|_2^2$ converges to zero, 
\end{enumerate}
then we must have $\Pr(M_T\le a_T\mid \bs{Z}) \convergeas \Pr(\chi^2_K \le a_T)$. 
From the discussion before and by the property of convergence in probability \citep[e.g.,][Theorem 2.3.2]{durrett2019probability}, 
we can derive that $\Pr(M_T\le a_T\mid \bs{Z}) \convergep \Pr(\chi^2_K \le a_T)$ as $N\rightarrow \infty$. 

From the above, Lemma \ref{lemma:resem_prob_2stage} holds. 
\end{proof}

\subsection{Proof of the theorem and corollary}

\begin{proof}[\bf Proof of Theorem \ref{thm:equiv_resem_sresem}]
First, from Lemma \ref{lemma:M_ST_CRSE}, 
as $N\rightarrow \infty$, we have 
$\Pr(M_S\le a_S, M_T\le a_T) \converge \Pr(\chi^2_J\le a_S) \Pr(\chi^2_K \le a_T) > 0$ 
and 
$\Pr(M_S\le a_S) \converge \Pr(\chi^2_J\le a_S)$. 
These imply that 
$\Pr(M_T\le a_T \mid M_S\le a_S) \converge \Pr(\chi^2_K \le a_T)$, 
and $\mathcal{M}$ is not an empty set when $N$ is sufficient large. 
Consequently, $\I(\mathcal{M} = \emptyset) \converge 0$ as $N\rightarrow \infty$.

Second, from Lemma \ref{lemma:resem_prob_2stage} with $a_S = \infty$, 
when $\bs{Z}$ is from simple random sampling, 
$\Pr(M_T\le a_T\mid \bs{Z}) \convergep \Pr(\chi^2_K \le a_T)$ as $N\rightarrow \infty$. 
From the discussion before, we then have 
$| \Pr(M_T \le a_T \mid \bs{Z} ) - \Pr(M_T\le a_T \mid M_S\le a_S) | \convergep 0$.
Because $| \Pr(M_T \le a_T \mid \bs{Z} ) - \Pr(M_T\le a_T \mid M_S\le a_S) |$ is upper bounded by 2, 
from \citet[][Theorem 5.5.2]{durrett2019probability}, 
we can know that, 
as $N\rightarrow \infty$, 
\begin{align*}
    \E| \Pr(M_T \le a_T \mid \bs{Z} ) - \Pr(M_T\le a_T \mid M_S\le a_S) | \converge 0
\end{align*}

From the above, using Lemma \ref{lemma:total_variation}, we have, as $N\rightarrow \infty$,  
\begin{align*}
    \text{d}_{\TV}(\sresem, \text{ReSEM}) 
    & \le 
    \I(\mathcal{M} = \emptyset) 
    + 
    \frac{\E \left| \Pr(M_T \le a_T \mid \bs{Z} ) - 
    \Pr(M_T\le a_T \mid M_S\le a_S) \right|}{ \Pr(M_T\le a_T, M_S\le a_S)}
    \\
    & \converge 0. 
\end{align*}
Therefore, Theorem \ref{thm:equiv_resem_sresem} holds. 
\end{proof}

\begin{proof}[\bf Proof of Corollary \ref{cor:equiv_resem_sresem}]
By the property of total variance distance, for any set $\mathcal{A}$,  $c>0$ and $\alpha\in (0,1)$, 
the differences between 
\begin{align}\label{eq:some_prob_sresem}
    & 
   \Pr\big(\sqrt{n}\{ \bs{\theta}(\bs{Z}, \bs{T}_{\mathcal{S}}, \Pi_N) - \bs{\theta}_N \} \in \mathcal{A} \mid \sresem \big),
    & 
   \Pr\big( \| \bs{\theta}(\bs{Z}, \bs{T}_{\mathcal{S}}, \Pi_N) - \bs{\theta}_N \|_2 \ge c \mid \sresem \big), \nonumber
    \\
    & 
    \Pr\big( \bs{\theta}(\bs{Z}, \bs{T}_{\mathcal{S}}, \Pi_N) \le \alpha \mid \sresem \big),
    & 
   \Pr\big( \bs{\theta}_N \in \Theta(\bs{Z}, \bs{T}_{\mathcal{S}}, \Pi_N)  \mid \sresem \big)
\end{align}
and the corresponding 
\begin{align}\label{eq:some_prob_resem}
    & 
   \Pr\big(\sqrt{n}\{ \bs{\theta}(\bs{Z}, \bs{T}_{\mathcal{S}}, \Pi_N) - \bs{\theta}_N \} \in \mathcal{A} \mid \text{ReSEM} \big),
    & 
   \Pr\big( \| \bs{\theta}(\bs{Z}, \bs{T}_{\mathcal{S}}, \Pi_N) - \bs{\theta}_N \|_2 \ge c \mid \text{ReSEM} \big), 
   \nonumber
    \\
    & 
    \Pr\big( \bs{\theta}(\bs{Z}, \bs{T}_{\mathcal{S}}, \Pi_N) \le \alpha \mid \text{ReSEM} \big),
    & 
   \Pr\big( \bs{\theta}_N \in \Theta(\bs{Z}, \bs{T}_{\mathcal{S}}, \Pi_N)  \mid \text{ReSEM} \big)
\end{align}
can be bounded by $\text{d}_{\TV}(\sresem, \text{ReSEM})$, which, by Theorem \ref{thm:equiv_resem_sresem}, converges to zero as $N\rightarrow\infty$. 
Thus, if the limit (or limit inferior or limit superior) of some quantity in \eqref{eq:some_prob_sresem} exist, then the limit (or limit inferior or limit superior) of the corresponding quantity in \eqref{eq:some_prob_resem} must also exist and have the same value. 
We can then derive Corollary \ref{cor:equiv_resem_sresem}. 
\end{proof}

\section{Asymptotic properties of the difference-in-means estimator under ReSEM}\label{app:resem}

This section contains proofs for the asymptotic properties of the difference-in-means estimators under ReSEM. 
For descriptive convenience, 
we introduce $\Var_{\text{a}}(\cdot)$  to denote the variance of the asymptotic distribution of a certain estimator as $N\rightarrow \infty$. 

To prove Theorem \ref{thm:dist}, we first introduce the following lemma. 

\begin{lemma} \label{lemma:LKa}
	Let $L_{K,a} \sim D_1\mid \bs{D}^\top \bs{D}\le a$, where $\bs{D}=(D_1, \ldots, D_K)^\top \sim \mathcal{N}(\bs{0},\bs{I}_K)$. 
	\begin{itemize}
		\item[(i)]  For any
		$K$ dimensional unit vector $\bs{h}$, we have $L_{K,a} \sim \bs{h}^\top  \bs{D}\mid \bs{D}^\top  \bs{D}\le a$.
		\item[(ii)] $\Var(L_{K,a}) = \Pr(\chi^2_{K+2} \le a)/\Pr(\chi^2_{K} \le a) \equiv v_{K,a}$. 
		\item[(iii)] $L_{K,a}$ is symmetric and unimodal around zero. 
	\end{itemize}
\end{lemma}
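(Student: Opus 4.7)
The plan is to exploit the rotational invariance of the standard Gaussian and the quadratic constraint, which reduces everything to one-dimensional calculations.

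For part (i), I would use the fact that the distribution of $\bs{D}\sim\mathcal{N}(\bs{0},\bs{I}_K)$ is invariant under orthogonal transformations, and so is the region $\{\bs{D}^\top\bs{D}\le a\}$ since it depends only on the squared Euclidean norm. Given any unit vector $\bs{h}$, pick an orthogonal matrix $\bs{Q}$ whose first row is $\bs{h}^\top$; then $\bs{Q}\bs{D}\sim\bs{D}$ and $(\bs{Q}\bs{D})^\top(\bs{Q}\bs{D})=\bs{D}^\top\bs{D}$, so the conditional distribution of $\bs{h}^\top\bs{D}=(\bs{Q}\bs{D})_1$ given $\bs{D}^\top\bs{D}\le a$ equals that of $D_1$ given the same event, which by definition is $L_{K,a}$.

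For part (ii), by symmetry the mean of $L_{K,a}$ is zero, so $\Var(L_{K,a})=\E(D_1^2\mid\bs{D}^\top\bs{D}\le a)$. Exchangeability of the coordinates of $\bs{D}$ gives $\E(D_1^2\,\mathbb{I}\{\bs{D}^\top\bs{D}\le a\})=K^{-1}\E(\bs{D}^\top\bs{D}\cdot\mathbb{I}\{\bs{D}^\top\bs{D}\le a\})$. The last expectation is $K^{-1}\E(\chi_K^2\cdot\mathbb{I}\{\chi_K^2\le a\})$, and a direct density manipulation (using $xf_K(x)=Kf_{K+2}(x)$, where $f_K$ is the $\chi_K^2$ density) shows this equals $\Pr(\chi_{K+2}^2\le a)$. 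Dividing by $\Pr(\chi_K^2\le a)$ yields the claim.

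For part (iii), note that by part (i) we may characterize the density of $L_{K,a}$ via that of $D_1$ conditional on $\bs{D}^\top\bs{D}\le a$. When $K=1$ the density is simply the standard normal truncated to $[-\sqrt{a},\sqrt{a}]$, which is symmetric and unimodal at $0$. When $K\ge 2$, integrating out $D_2,\ldots,D_K$ gives the conditional density proportional to $\phi(d_1)\,F_{K-1}(a-d_1^2)\,\mathbb{I}\{d_1^2\le a\}$, where $F_{K-1}$ is the $\chi_{K-1}^2$ cdf. Symmetry in $d_1\mapsto-d_1$ is immediate, and unimodality at $0$ follows because as $|d_1|$ grows from $0$ to $\sqrt{a}$ both factors $\phi(d_1)$ and $F_{K-1}(a-d_1^2)$ are nonincreasing.

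The only step that requires any real work is the identity $xf_K(x)=Kf_{K+2}(x)$ used in part (ii); everything else is just bookkeeping around the rotational-invariance reduction to one coordinate, so I do not anticipate any substantial obstacle.
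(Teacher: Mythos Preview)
Your proof is correct in all three parts; the rotational-invariance argument for (i), the exchangeability plus the density identity $xf_K(x)=Kf_{K+2}(x)$ for (ii), and the explicit factorization $\phi(d_1)F_{K-1}(a-d_1^2)$ for (iii) all go through without issue.

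The paper itself does not give a proof: it simply cites \citet[Theorem 3.1]{Morgan2012} and \citet[Lemma A1 and Proposition 2]{rerand2018}. Your argument is a self-contained derivation of what those references establish, using the same underlying idea (spherical symmetry of the standard Gaussian and the constraint). The benefit of your write-up is that it makes the lemma independent of external sources and shows exactly where each claim comes from; the paper's approach keeps the supplement shorter at the cost of requiring the reader to chase two citations.
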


\begin{proof}[Proof of Lemma \ref{lemma:LKa}]
	Lemma \ref{lemma:LKa}  follows from \citet[][Theorem 3.1]{Morgan2012} and \citet[][Lemma A1 and Proposition 2]{rerand2018}.
\end{proof}

\begin{proof}[{\bf Proof of Theorem \ref{thm:dist}}]
    From Theorem \ref{thm:equiv_resem_sresem}, to prove Theorem \ref{thm:dist}, 
    it suffices to prove that Theorem \ref{thm:dist} holds under $\sresem$. 
    Note that for any estimator, its distribution under the single-stage $\sresem$ is the same as its conditional distribution under the CRSE given that the covariate balance criteria at both the sampling and assignment stages are satisfied (i.e., $M_S\le a_S$ and $M_T\le a_T$). 

	First, from Lemma \ref{lemma:joint_tilde_CRSE} and 
	\citet[][Corollary A]{rerand2018}, we can know that
	\begin{align}\label{eq:cond_dist_1}
		& \quad \ \sqrt{n}(\hat{\tau}-\tau) \mid 
		\sqrt{n}\tilde{\bs{\tau}}_{\bs{X}}^\top \bs{V}_{xx}^{-1} \sqrt{n}\tilde{\bs{\tau}}_{\bs{X}} \le a_T, 
		\sqrt{n}\hat{\bs{\delta}}_{\bs{W}}^\top \bs{V}_{ww}^{-1} \sqrt{n}\hat{\bs{\delta}}_{\bs{W}} \le a_S
		\nonumber
		\\
		& 
		\ \dot\sim \ 
		A \mid 
		\bs{B}_T^\top \bs{V}_{xx}^{-1}\bs{B}_T \le a_T, 
		\bs{B}_S^\top \bs{V}_{ww}^{-1}\bs{B}_S \le a_S,
	\end{align}
	where $(A, \bs{B}_T^\top, \bs{B}_S^\top)^\top \sim \mathcal{N}(\bs{0}, \bs{V})$ with $\bs{V}$ defined as in \eqref{eq:VV}. 
	Note that, by the definitions in \eqref{eq:M_S}, \eqref{eq:M_T} and Lemma \ref{lemma:joint_tilde_CRSE}, 
	$
	\sqrt{n}\hat{\bs{\delta}}_{\bs{W}}^\top \bs{V}_{ww}^{-1} \sqrt{n}\hat{\bs{\delta}}_{\bs{W}} = M_S, 
	$
	and 
	\begin{align*}
		\sqrt{n}\tilde{\bs{\tau}}_{\bs{X}}^\top \bs{V}_{xx}^{-1} \sqrt{n}\tilde{\bs{\tau}}_{\bs{X}}
		& = 
		\left\{ (\bs{S}_{\bs{X}}^2)^{1/2}(\bs{s}_{\bs{X}}^2)^{-1/2} \hat{\bs{\tau}}_{\bs{X}}
		\right\}^\top 
		\left(
		\frac{n}{n_1 n_0} 
		\bs{S}^2_{\bs{X}}
		\right)^{-1} 
		(\bs{S}_{\bs{X}}^2)^{1/2}(\bs{s}_{\bs{X}}^2)^{-1/2} \hat{\bs{\tau}}_{\bs{X}}
		\\
		& = 
		\hat{\bs{\tau}}_{\bs{X}}^\top 
		\left(
		\frac{n}{n_1 n_0} 
		\bs{s}^2_{\bs{X}}
		\right)^{-1}
		\hat{\bs{\tau}}_{\bs{X}}
		= M_T. 
	\end{align*}
	Thus, \eqref{eq:cond_dist_1} is equivalent to 
	\begin{align}\label{eq:cond_dist_3}
		\sqrt{n}(\hat{\tau} - \tau) \mid 
		M_T \le a_T, 
		M_S \le a_S
		\ \dot\sim \ 
		A \mid 
		\bs{B}_T^\top \bs{V}_{xx}^{-1}\bs{B}_T \le a_T, 
		\bs{B}_S^\top \bs{V}_{ww}^{-1}\bs{B}_S \le a_S. 
	\end{align}
	
	Second, define 
	$
	\eta= A - \bs{V}_{\tau x}\bs{V}_{x x}^{-1}\bs{B}_T - \bs{V}_{\tau w}\bs{V}_{ww}^{-1}\bs{B}_S.
	$
	By the property of multivariate Gaussian distribution and from Proposition \ref{prop:R2}, we can derive that $(\eta, \bs{B}_T, \bs{B}_S)$ are mutually independent Gaussian random vectors, and the variance of $\eta$ has the following equivalent forms: 
	\begin{align*}
		\Var(\eta) & = \Var(A) - \Var(\bs{V}_{\tau x}\bs{V}_{x x}^{-1}\bs{B}_T) - \Var(\bs{V}_{\tau w}\bs{V}_{ww}^{-1}\bs{B}_S)
		= V_{\tau\tau} - \bs{V}_{\tau x}\bs{V}_{x x}^{-1}\bs{V}_{x\tau} - \bs{V}_{\tau w}\bs{V}_{ww}^{-1}\bs{V}_{w\tau}\\
		& = 
		V_{\tau\tau}\left( 1 - 
		\frac{\bs{V}_{\tau x}\bs{V}_{x x}^{-1}\bs{V}_{x\tau}}{V_{\tau\tau}} 
		- 
		\frac{\bs{V}_{\tau w}\bs{V}_{ww}^{-1}\bs{V}_{w\tau}}{V_{\tau\tau}}
		\right)\\
		& = V_{\tau\tau} (1-R_T^2 - R_S^2). 
	\end{align*}
	Consequently, $\varepsilon = V_{\tau\tau}^{-1/2} (1-R_T^2 - R_S^2)^{-1/2} \eta$ follows standard Gaussian distribution and is independent of $(\bs{B}_T, \bs{B}_S)$.  
	Define 
	$\bs{h}_T^\top = (V_{\tau\tau}R_T^2)^{-1/2}\bs{V}_{\tau x}\bs{V}_{x x}^{-1/2}$ and 
	$\bs{h}_S^\top = (V_{\tau\tau}R_S^2)^{-1/2}\bs{V}_{\tau w}\bs{V}_{ww}^{-1/2}$. 
	Then 
    by the definitions of $R_S^2$ and $R_T^2$, 
    both $\bs{h}_T$ and $\bs{h}_S$ are unit vectors of length one. 
	Define $\tilde{\bs{B}}_T = \bs{V}_{x x}^{-1/2}\bs{B}_T \sim \mathcal{N}(\bs{0},\bs{I}_K)$ 
	and $\tilde{\bs{B}}_S = \bs{V}_{ww}^{-1/2}\bs{B}_S \sim \mathcal{N}(\bs{0},\bs{I}_J)$. 
	From Lemma \ref{lemma:LKa} and the mutual independence of $(\eta,\bs{B}_T, \bs{B}_S)$, we have 
	\begin{align*}
		& \quad \ 
		A \mid 
		\bs{B}_T^\top \bs{V}_{xx}^{-1}\bs{B}_T \le a_T, 
		\bs{B}_S^\top \bs{V}_{ww}^{-1}\bs{B}_S \le a_S
		\\
		& \sim 
		\eta + \bs{V}_{\tau x}\bs{V}_{x x}^{-1}\bs{B}_T + \bs{V}_{\tau w}\bs{V}_{ww}^{-1}\bs{B}_S
		\mid 
		\bs{B}_T^\top \bs{V}_{xx}^{-1}\bs{B}_T \le a_T, 
		\bs{B}_S^\top \bs{V}_{ww}^{-1}\bs{B}_S \le a_S\\
		& \sim 
		\eta + \bs{V}_{\tau x}\bs{V}_{x x}^{-1/2}\tilde{\bs{B}}_T + \bs{V}_{\tau w}\bs{V}_{ww}^{-1/2}\tilde{\bs{B}}_S
		\mid 
		\tilde{\bs{B}}_T^\top \tilde{\bs{B}}_T \le a_T, 
		\tilde{\bs{B}}_S^\top \tilde{\bs{B}}_S \le a_S\\
		& \sim 
		V_{\tau\tau}^{1/2}\sqrt{1-R_T^2 - R_S^2} \cdot \varepsilon + 
		V_{\tau\tau}^{1/2} \sqrt{R_T^2} \cdot \bs{h}_T^\top \tilde{\bs{B}}_T + 
		V_{\tau\tau}^{1/2} \sqrt{R_S^2} \cdot \bs{h}_S^\top \tilde{\bs{B}}_S
		\mid 
		\tilde{\bs{B}}_T^\top \tilde{\bs{B}}_T \le a_T, 
		\tilde{\bs{B}}_S^\top \tilde{\bs{B}}_S \le a_S\\
		& \sim 
		V_{\tau\tau}^{1/2} \Big( \sqrt{1-R_T^2-R_S^2} \cdot \varepsilon + 
		\sqrt{R_T^2} \cdot L_{K,a_T} + \sqrt{R_S^2} \cdot L_{J,a_S} 
		\Big),
	\end{align*}
	where 
	$L_{K,a_T} \sim  \bs{h}_T^\top \tilde{\bs{B}}_T \mid \tilde{\bs{B}}_T^\top \tilde{\bs{B}}_T \le a_T,$ 
	$L_{J,a_S} \sim \bs{h}_S^\top \tilde{\bs{B}}_S \mid \tilde{\bs{B}}_S^\top \tilde{\bs{B}}_S \le a_S,$
	and $(\varepsilon, L_{K,a_T}, L_{J,a_S})$ are mutually independent. 
	From \eqref{eq:cond_dist_3}, we then have 
	\begin{align*}
		\sqrt{n}(\hat{\tau} - \tau) \mid 
	M_T \le a_T, 
	M_S \le a_S
	\ \dot\sim \ 
	V_{\tau\tau}^{1/2} \Big( \sqrt{1-R_T^2-R_S^2} \cdot \varepsilon + 
	\sqrt{R_T^2} \cdot L_{K,a_T} + \sqrt{R_S^2} \cdot L_{J,a_S} 
	\Big).
	\end{align*}
	
	From the above, Theorem \ref{thm:dist} holds. 
\end{proof}

\begin{proof}[{\bf Comments on the acceptance probabilities under ReSEM}]
    First, from Lemma \ref{lemma:M_ST_CRSE}, under ReSEM, the acceptance probability at the sampling stage is $\Pr(M_S\le a_S) = \Pr(\chi^2_J\le a_S) + o(1)$. 
    Second, under ReSEM, the acceptance probability at the treatment assignment stage is
    $
        \E\{ \Pr(M_T\le a_T\mid \bs{Z}) \mid M_S\le a_S \}. 
    $
    From Lemma \ref{lemma:resem_prob_2stage}(ii),  
    \begin{align*}
        & \quad \ \big| \E\{ \Pr(M_T\le a_T\mid \bs{Z})  \mid M_S\le a_S \} - \Pr(\chi^2_K\le a_T) \big|
        \\
        & \le  \E\big\{ | \Pr(M_T\le a_T\mid \bs{Z}) - \Pr(\chi^2_K\le a_T) | \mid M_S\le a_S \big\}
        \le 
        \frac{\E\big\{ | \Pr(M_T\le a_T\mid \bs{Z}) - \Pr(\chi^2_K\le a_T) | \big\}}{\Pr(M_S\le a_S)}
        \\
        & = o(1). 
    \end{align*}
    Thus, the asymptotic acceptance probability at the assignment stage is $\Pr(\chi^2_K\le a_T)$. 
\end{proof}

\begin{proof}[{\bf Comments on covariate balance under ReSEM}]
	First, note that any covariates, no matter observed or unobserved, can be viewed as pseudo potential outcomes that are unaffected by the treatment. Therefore, 
	by the same logic as the consistency of the difference-in-means estimator from Theorem \ref{thm:dist},   
	all covariates are asymptotically balanced between two treatment groups under ReSEM.
	
	Second, by the same logic as Corollary \ref{cor:rej_sam}, under ReSEM, for any covariate, the sample average is consistent for the population average. 
	Therefore, all covariates are asymptotically balanced between sampled units and the overall population. 
\end{proof}

\begin{proof}[{\bf Proof of Corollary \ref{corollary:PRIASV}}]
	From Corollary \ref{cor:crse}, the asymptotic variance of $\hat{\tau}$ under the CRSE is 
	$
	\Var_{\a}\{\sqrt{n}(\hat{\tau} - \tau)\} = V_{\tau\tau}.
	$
	From Theorem \ref{thm:dist} and Lemma \ref{lemma:LKa}, the asymptotic variance of $\hat{\tau}$ under ReSEM is 
	\begin{align*}
		\Var_{\a}\{\sqrt{n} (\hat\tau - \tau) \mid  \text{ReSEM}  \}
		&= V_{\tau\tau} \left\{ (1-R_S^2 - R_T^2) + R_S^2 v_{J,a_S} + R_T^2 v_{K,a_T} \right\}
		\\& = V_{\tau \tau}\{ 1 -(1-v_{J,a_S})R_S^2-(1- v_{K,a_T})R_T^2 \}.
	\end{align*}
	Therefore, 
	compared to the CRSE, 
	the percentage reduction in asymptotic variance of $\hat{\tau}$ under ReSEM is 
	\begin{align*}
		1 - \frac{\Var_{\a}\{\sqrt{n} (\hat\tau - \tau) \mid  \text{ReSEM}  \}}{\Var_{\a}\{\sqrt{n}(\hat{\tau} - \tau)\}}
		& = 1 - \left\{ 1 -(1-v_{J,a_S})R_S^2-(1- v_{K,a_T})R_T^2 \right\}\\
		& = (1-v_{J,a_S})R_S^2 + (1- v_{K,a_T})R_T^2 \ge 0. 
	\end{align*}
	Therefore, Corollary \ref{corollary:PRIASV} holds. 
\end{proof}

\begin{proof}[{\bf Comments on Remark \ref{rmk:loss_comp_optimal}}]
	From \citet[][Lemma A5]{rerand2018} and  \citet[][Proposition 4]{LDR2020}, we can immediately know that $v_{J,a_S}$ and $v_{K,a_T}$ are nondecreasing in $a_S$ and $a_T$, respectively. 
	Thus, the PRIAV in Corollary \ref{corollary:PRIASV} is nonincreasing in $a_S$ and $a_T$. 
\end{proof}

\begin{proof}[{\bf Proof of Corollary \ref{corollary:QR}}]
	Corollary \ref{corollary:QR} follows immediately from \citet[][Theorem 4]{rerand2018}. 
\end{proof}

\section{Asymptotic properties of the regression-adjusted estimators under ReSEM}\label{app:reg}

\subsection{Lemmas}

\begin{lemma}\label{lemma:proj_coef}
	$\tilde{\bs{\beta}}$ and $\tilde{\bs{\gamma}}$ defined in 
	Section \ref{sec:R2_proj_coef}
	are the linear projection coefficients of $\hat{\tau}$ on $\hat{\bs{\tau}}_{\bs{C}}$ and $\hat{\bs{\delta}}_{\bs{E}}$, respectively, under the CRSE.
\end{lemma}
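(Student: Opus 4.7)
My plan is to reduce the claim to a direct covariance computation based on Lemma \ref{lemma:cov_crse}, which already gives the covariance matrix of $(\hat{\tau}, \hat{\bs{\tau}}_{\bs{X}}^\top, \hat{\bs{\delta}}_{\bs{W}}^\top)^\top$ under the CRSE. Since the covariates $\bs{C}$ and $\bs{E}$ in analysis play exactly the same structural roles as $\bs{X}$ and $\bs{W}$ in design (they are just covariate vectors attached to each unit), that lemma applies verbatim with $\bs{X}$ replaced by $\bs{C}$ and $\bs{W}$ replaced by $\bs{E}$. This gives me the three covariance blocks I need: $\Cov(\hat{\tau}, \hat{\bs{\tau}}_{\bs{C}}) = n^{-1}(r_1^{-1} \bs{S}_{1,\bs{C}} + r_0^{-1} \bs{S}_{0,\bs{C}})$, $\Cov(\hat{\tau}, \hat{\bs{\delta}}_{\bs{E}}) = n^{-1}(1-f)\bs{S}_{\tau,\bs{E}}$, and the variance blocks $\Var(\hat{\bs{\tau}}_{\bs{C}}) = n^{-1}(r_1 r_0)^{-1} \bs{S}^2_{\bs{C}}$, $\Var(\hat{\bs{\delta}}_{\bs{E}}) = n^{-1}(1-f)\bs{S}^2_{\bs{E}}$.

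The key simplification that I would exploit next is that the off-diagonal cross-block $\Cov(\hat{\bs{\tau}}_{\bs{C}}, \hat{\bs{\delta}}_{\bs{E}}) = \bs{0}$, again directly from Lemma \ref{lemma:cov_crse}. Because the two regressor blocks are uncorrelated, the joint linear projection of $\hat{\tau}$ on $(\hat{\bs{\tau}}_{\bs{C}}^\top, \hat{\bs{\delta}}_{\bs{E}}^\top)^\top$ decouples into two separate univariate-block projections, so I may compute $\bs{\beta}$ and $\bs{\gamma}$ in isolation by the standard formula $\text{coefficient} = \text{Var}(\text{regressor})^{-1} \text{Cov}(\text{regressor}, \text{response})$.

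Plugging in, the projection coefficient on $\hat{\bs{\tau}}_{\bs{C}}$ is
\begin{align*}
r_1 r_0 (\bs{S}^2_{\bs{C}})^{-1}\bigl(r_1^{-1} \bs{S}_{\bs{C},1} + r_0^{-1} \bs{S}_{\bs{C},0}\bigr)
= r_0 (\bs{S}^2_{\bs{C}})^{-1}\bs{S}_{\bs{C},1} + r_1 (\bs{S}^2_{\bs{C}})^{-1}\bs{S}_{\bs{C},0}
= r_0 \tilde{\bs{\beta}}_1 + r_1 \tilde{\bs{\beta}}_0 = \tilde{\bs{\beta}},
\end{align*}
and the projection coefficient on $\hat{\bs{\delta}}_{\bs{E}}$ is
\begin{align*}
(1-f)^{-1}(\bs{S}^2_{\bs{E}})^{-1}\cdot (1-f) \bs{S}_{\bs{E},\tau}
= (\bs{S}^2_{\bs{E}})^{-1}(\bs{S}_{\bs{E},1} - \bs{S}_{\bs{E},0})
= \tilde{\bs{\gamma}}_1 - \tilde{\bs{\gamma}}_0 = \tilde{\bs{\gamma}},
\end{align*}
using $\bs{S}_{\bs{E},\tau} = \bs{S}_{\bs{E},1} - \bs{S}_{\bs{E},0}$ from the definition of individual effects. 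This matches the definitions in Section \ref{sec:R2_proj_coef}.

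There is essentially no obstacle: the proof is a short algebraic derivation from Lemma \ref{lemma:cov_crse}. The only small point worth noting carefully is the vanishing of the cross-covariance $\Cov(\hat{\bs{\tau}}_{\bs{C}}, \hat{\bs{\delta}}_{\bs{E}})$, which is what makes the two coefficients take the clean forms above rather than involve a Schur-complement correction; this vanishing is inherited from the zero blocks in the matrix $\bs{V}$ in \eqref{eq:VV} and reflects the fact that treatment-assignment imbalance is mean-zero conditional on the sample.
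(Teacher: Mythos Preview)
Your proposal is correct and matches the paper's own proof essentially line for line: apply Lemma \ref{lemma:cov_crse} with $(\bs{X},\bs{W})$ replaced by $(\bs{C},\bs{E})$ to obtain the covariance blocks, note the zero cross-block so the two projections decouple, and then compute $\Var(\hat{\bs{\tau}}_{\bs{C}})^{-1}\Cov(\hat{\bs{\tau}}_{\bs{C}},\hat{\tau})=\tilde{\bs{\beta}}$ and $\Var(\hat{\bs{\delta}}_{\bs{E}})^{-1}\Cov(\hat{\bs{\delta}}_{\bs{E}},\hat{\tau})=\tilde{\bs{\gamma}}$ by direct algebra.
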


\begin{lemma}\label{lemma:simp_W_in_E}
$V_{\tau\tau}(\bs{0}, \bs{\gamma})$ has the following decomposition: 
\begin{align*}
    V_{\tau\tau}(\bs{0}, \bs{\gamma}) & = 
    V_{\tau\tau}(1-R^2_E) + (1-f) (\bs{\gamma} - \tilde{\bs{\gamma}})^\top \bs{S}^2_{\bs{E}} (\bs{\gamma} - \tilde{\bs{\gamma}}).
\end{align*}
If $\bs{W} \subset \bs{E}$, then $V_{\tau\tau}(\bs{0}, \bs{\gamma})\{1 - R_S^2(\bs{0}, \bs{\gamma})\}$ and $V_{\tau\tau}(\bs{0}, \bs{\gamma})R_S^2(\bs{0}, \bs{\gamma})$ have the following equivalent forms:
\begin{align*}
    V_{\tau\tau}(\bs{0}, \bs{\gamma})\{1 - R_S^2(\bs{0}, \bs{\gamma})\}
    & = 
    V_{\tau\tau} (1-R_E^2) + 
    (1-f) (\bs{\gamma} - \tilde{\bs{\gamma}})^\top \bs{S}^2_{\bs{E} \setminus \bs{W}} (\bs{\gamma} - \tilde{\bs{\gamma}}), \\
    V_{\tau\tau}(\bs{0}, \bs{\gamma})R_S^2(\bs{0}, \bs{\gamma})
    & = (1-f) (\bs{\gamma} - \tilde{\bs{\gamma}})^\top \bs{S}^2_{\bs{E} \mid \bs{W}} (\bs{\gamma} - \tilde{\bs{\gamma}}). 
\end{align*}
\end{lemma}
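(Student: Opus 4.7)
The plan is to prove both parts by direct algebraic expansion, completing the square in $\bs\gamma$ against the finite population least squares coefficient $\tilde{\bs\gamma} = (\bs S^2_{\bs E})^{-1}\bs S_{\bs E,\tau}$. I would first observe that the adjusted individual effect simplifies dramatically: from \eqref{eq:adj_potential_outcome} with $\bs\beta=\bs 0$, the treatment and control ``correction'' terms combine (since $r_1+r_0=1$) to give
\[
\tau_i(\bs 0,\bs\gamma) \;=\; Y_i(1;\bs 0,\bs\gamma) - Y_i(0;\bs 0,\bs\gamma) \;=\; \tau_i - \bs\gamma^\top(\bs E_i - \bar{\bs E}).
\]
A parallel computation shows $r_1^{-1}S_1^2(\bs 0,\bs\gamma) + r_0^{-1}S_0^2(\bs 0,\bs\gamma) = r_1^{-1}S_1^2 + r_0^{-1}S_0^2 - 2\bs\gamma^\top\bs S_{\bs E,\tau} + \bs\gamma^\top\bs S^2_{\bs E}\bs\gamma$ (the cross terms $r_1\bs S_{\bs E,1}$ and $-r_0\bs S_{\bs E,0}$ combine into $\bs S_{\bs E,\tau}$, and the quadratic terms collapse via $r_1+r_0=1$). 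Subtracting $f$ times the analogous expansion for $S_\tau^2(\bs 0,\bs\gamma)$ yields
\[
V_{\tau\tau}(\bs 0,\bs\gamma) \;=\; V_{\tau\tau} + (1-f)\bigl[-2\bs\gamma^\top\bs S_{\bs E,\tau} + \bs\gamma^\top\bs S^2_{\bs E}\bs\gamma\bigr].
\]
Completing the square in the bracket against $\tilde{\bs\gamma}$ produces $(\bs\gamma-\tilde{\bs\gamma})^\top\bs S^2_{\bs E}(\bs\gamma-\tilde{\bs\gamma}) - \tilde{\bs\gamma}^\top\bs S^2_{\bs E}\tilde{\bs\gamma}$, and recognizing $\tilde{\bs\gamma}^\top\bs S^2_{\bs E}\tilde{\bs\gamma} = S^2_{\tau\mid\bs E}$ along with the definition of $R_E^2$ gives the first decomposition.

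For the second part, I would work directly on $V_{\tau\tau}(\bs 0,\bs\gamma) R_S^2(\bs 0,\bs\gamma) = (1-f)S^2_{\tau\mid\bs W}(\bs 0,\bs\gamma)$ from \eqref{eq:R2_reg}. Using $\tau_i(\bs 0,\bs\gamma) = \tau_i - \bs\gamma^\top(\bs E_i - \bar{\bs E})$,
\[
\bs S_{\tau(\bs 0,\bs\gamma),\bs W} \;=\; \bs S_{\tau,\bs W} - \bs\gamma^\top\bs S_{\bs E,\bs W} \;=\; (\tilde{\bs\gamma}-\bs\gamma)^\top\bs S_{\bs E,\bs W},
\]
where the last equality uses $\bs S_{\tau,\bs W} = \tilde{\bs\gamma}^\top\bs S_{\bs E,\bs W}$, which holds precisely because $\bs W\subset\bs E$: writing $\bs W = \bs P\bs E$ for a selection matrix $\bs P$, one has $\bs S_{\bs W,\tau} = \bs P\bs S_{\bs E,\tau} = \bs P\bs S^2_{\bs E}\tilde{\bs\gamma} = \bs S_{\bs W,\bs E}\tilde{\bs\gamma}$. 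Then
\[
S^2_{\tau\mid\bs W}(\bs 0,\bs\gamma) \;=\; (\tilde{\bs\gamma}-\bs\gamma)^\top\bs S_{\bs E,\bs W}(\bs S^2_{\bs W})^{-1}\bs S_{\bs W,\bs E}(\tilde{\bs\gamma}-\bs\gamma) \;=\; (\bs\gamma-\tilde{\bs\gamma})^\top\bs S^2_{\bs E\mid\bs W}(\bs\gamma-\tilde{\bs\gamma}),
\]
by the definition of $\bs S^2_{\bs E\mid\bs W} = \bs S_{\bs E,\bs W}(\bs S^2_{\bs W})^{-1}\bs S_{\bs W,\bs E}$. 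The identity for $V_{\tau\tau}(\bs 0,\bs\gamma)\{1-R_S^2(\bs 0,\bs\gamma)\}$ then follows by subtracting the $R_S^2$ identity from the first decomposition and using $\bs S^2_{\bs E}-\bs S^2_{\bs E\mid\bs W} = \bs S^2_{\bs E\setminus\bs W}$.

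The main obstacle, and the only step that is not purely routine expansion, is the containment identity $\bs S_{\tau,\bs W} = \tilde{\bs\gamma}^\top\bs S_{\bs E,\bs W}$: this is where the hypothesis $\bs W\subset\bs E$ is genuinely used, encoding the fact that the linear projection of $\tau$ on $\bs E$, evaluated and then ``restricted'' to $\bs W$, agrees with the direct projection on $\bs W$. Once that identity is in hand, the remainder is bookkeeping.
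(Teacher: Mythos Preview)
Your proof is correct but takes a different route from the paper. The paper argues at the level of the random difference-in-means under the CRSE: it interprets $V_{\tau\tau}(\bs 0,\bs\gamma)=n\Var\{\hat\tau(\bs 0,\bs\gamma)\}$, invokes Lemma~\ref{lemma:proj_coef} to say that $\tilde{\bs\gamma}$ is the projection coefficient of $\hat\tau$ on $\hat{\bs\delta}_{\bs E}$, and then uses the resulting orthogonality $\hat\tau-\tilde{\bs\gamma}^\top\hat{\bs\delta}_{\bs E}\perp\hat{\bs\delta}_{\bs E}$ to decompose the variance; for the $R_S^2$ part it similarly writes $V_{\tau\tau}(\bs 0,\bs\gamma)R_S^2(\bs 0,\bs\gamma)$ as a quadratic form in $\Cov\{\hat\tau(\bs 0,\bs\gamma),\hat{\bs\delta}_{\bs W}\}$ and uses that $\hat{\bs\delta}_{\bs W}$ is a linear function of $\hat{\bs\delta}_{\bs E}$ when $\bs W\subset\bs E$. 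You instead stay entirely at the finite-population-moment level: expand $S_t^2(\bs 0,\bs\gamma)$ and $S_\tau^2(\bs 0,\bs\gamma)$ directly, complete the square in $\bs\gamma$, and for the second part compute $\bs S_{\tau(\bs 0,\bs\gamma),\bs W}$ explicitly, with the containment hypothesis entering via the matrix identity $\bs S_{\tau,\bs W}=\tilde{\bs\gamma}^\top\bs S_{\bs E,\bs W}$. Your argument is more self-contained (it does not need Lemma~\ref{lemma:proj_coef} or the sampling interpretation) and makes the role of $\bs W\subset\bs E$ very explicit; the paper's version is a bit more conceptual and reuses existing machinery. Both arrive at exactly the same identities.
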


\begin{lemma}\label{lemma:simp_X_in_C}
$V_{\tau\tau}(\bs{\beta}, \bs{0})$ has the following decomposition:
\begin{align*}
    V_{\tau\tau}(\bs{\beta}, \bs{0}) & = V_{\tau\tau} (1-R_C^2) +  (r_1 r_0)^{-1} (\bs{\beta} - \tilde{\bs{\beta}})^\top \bs{S}^2_{\bs{C}} (\bs{\beta} - \tilde{\bs{\beta}}).
\end{align*}
If $\bs{X} \subset \bs{C}$, then $V_{\tau\tau}(\bs{\beta}, \bs{0})\{1 - R_T^2(\bs{\beta}, \bs{0}) \}$ and $V_{\tau\tau} R_T^2(\bs{\beta}, \bs{0})$ have the following equivalent forms: 
\begin{align*}
    V_{\tau\tau}(\bs{\beta}, \bs{0})\{1 - R_T^2(\bs{\beta}, \bs{0}) \}
    & = 
    V_{\tau\tau} (1-R_C)^2 +  (r_1 r_0)^{-1} (\bs{\beta} - \tilde{\bs{\beta}})^\top \bs{S}^2_{\bs{C} \setminus \bs{X}} (\bs{\beta} - \tilde{\bs{\beta}}), 
    \\
    V_{\tau\tau} R_T^2(\bs{\beta}, \bs{0})
    & = 
    (r_1 r_0)^{-1} (\bs{\beta} - \tilde{\bs{\beta}})^\top \bs{S}^2_{\bs{C} \mid \bs{X}} (\bs{\beta} - \tilde{\bs{\beta}}),
\end{align*}
\end{lemma}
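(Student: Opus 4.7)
}

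The plan is to expand $V_{\tau\tau}(\bs{\beta}, \bs{0})$ directly from its definition in \eqref{eq:V_tau_reg}, complete the square in $\bs{\beta}$ around $\tilde{\bs{\beta}}$, and then identify the resulting constant term with $V_{\tau\tau}(1-R_C^2)$ via Proposition \ref{prop:R2}. The second assertion (when $\bs{X} \subset \bs{C}$) will follow by applying Proposition \ref{prop:R2} to the adjusted potential outcomes and exploiting a key identity linking $\bs{S}_{\bs{X},\bs{C}}\tilde{\bs{\beta}}$ to $r_0\bs{S}_{\bs{X},1} + r_1\bs{S}_{\bs{X},0}$ that holds precisely because $\bs{X}$ is contained in $\bs{C}$.

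First I would observe from \eqref{eq:adj_potential_outcome} that with $\bs{\gamma}=\bs{0}$ the adjusted individual effect is $\tau_i(\bs{\beta},\bs{0}) = \tau_i$, so $S_\tau^2(\bs{\beta},\bs{0}) = S_\tau^2$; meanwhile $S_t^2(\bs{\beta},\bs{0}) = S_t^2 - 2\bs{\beta}^\top\bs{S}_{\bs{C},t} + \bs{\beta}^\top\bs{S}^2_{\bs{C}}\bs{\beta}$ for $t=0,1$. Plugging into \eqref{eq:V_tau_reg} and using $r_1^{-1}+r_0^{-1} = (r_1 r_0)^{-1}$ gives
\[
V_{\tau\tau}(\bs{\beta},\bs{0}) = V_{\tau\tau} - 2\bs{\beta}^\top\!\bigl(r_1^{-1}\bs{S}_{\bs{C},1}+r_0^{-1}\bs{S}_{\bs{C},0}\bigr) + (r_1r_0)^{-1}\bs{\beta}^\top\bs{S}^2_{\bs{C}}\bs{\beta}.
\]
Since $\bs{S}^2_{\bs{C}}\tilde{\bs{\beta}} = r_0\bs{S}_{\bs{C},1}+r_1\bs{S}_{\bs{C},0}$ by definition of $\tilde{\bs{\beta}}$, the cross-product term equals $2(r_1r_0)^{-1}\bs{\beta}^\top\bs{S}^2_{\bs{C}}\tilde{\bs{\beta}}$, and completing the square yields
\[
V_{\tau\tau}(\bs{\beta},\bs{0}) = \bigl\{V_{\tau\tau} - (r_1r_0)^{-1}\tilde{\bs{\beta}}^\top\bs{S}^2_{\bs{C}}\tilde{\bs{\beta}}\bigr\} + (r_1r_0)^{-1}(\bs{\beta}-\tilde{\bs{\beta}})^\top\bs{S}^2_{\bs{C}}(\bs{\beta}-\tilde{\bs{\beta}}).
\]
Applying the manipulation in the proof of Proposition \ref{prop:R2} (see equation \eqref{eq:R_t_proof} with $\bs{X}$ replaced by $\bs{C}$) identifies $(r_1r_0)^{-1}\tilde{\bs{\beta}}^\top\bs{S}^2_{\bs{C}}\tilde{\bs{\beta}}$ with $V_{\tau\tau}R_C^2$, giving the first claimed decomposition.

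For the second part, I would apply Proposition \ref{prop:R2} to the adjusted potential outcomes to obtain $V_{\tau\tau}(\bs{\beta},\bs{0})R_T^2(\bs{\beta},\bs{0}) = r_1^{-1}S^2_{1\mid\bs{X}}(\bs{\beta},\bs{0}) + r_0^{-1}S^2_{0\mid\bs{X}}(\bs{\beta},\bs{0}) - S^2_{\tau\mid\bs{X}}$. The identity from \eqref{eq:R_t_proof} rewrites this as a quadratic form in the ``aggregated'' covariance vector $r_1^{-1}\bs{S}_{\bs{X},\tilde 1} + r_0^{-1}\bs{S}_{\bs{X},\tilde 0}$, where $\bs{S}_{\bs{X},\tilde t} = \bs{S}_{\bs{X},t} - \bs{S}_{\bs{X},\bs{C}}\bs{\beta}$. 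The crucial observation, and the step I expect to be the main obstacle, is that when $\bs{X}\subset\bs{C}$ the row-indices of $\bs{S}_{\bs{X},\bs{C}}$ form a subset of those of $\bs{S}^2_{\bs{C}}$, so $\bs{S}_{\bs{X},\bs{C}}\tilde{\bs{\beta}}$ is exactly the corresponding subvector of $\bs{S}^2_{\bs{C}}\tilde{\bs{\beta}} = r_0\bs{S}_{\bs{C},1}+r_1\bs{S}_{\bs{C},0}$, i.e. $\bs{S}_{\bs{X},\bs{C}}\tilde{\bs{\beta}} = r_0\bs{S}_{\bs{X},1}+r_1\bs{S}_{\bs{X},0}$. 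Care is needed to articulate this subsetting argument cleanly, since one must verify that the matrix $\bs{S}_{\bs{X},\bs{C}}$ really is the relevant submatrix of $\bs{S}^2_{\bs{C}}$ (e.g.\ by writing $\bs{C} = (\bs{X}^\top,\bs{C}_{-\bs{X}}^\top)^\top$ without loss of generality).

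Combining this with $r_1^{-1}\bs{S}_{\bs{X},t}+r_0^{-1}\bs{S}_{\bs{X},t'} = (r_1r_0)^{-1}(r_0\bs{S}_{\bs{X},1}+r_1\bs{S}_{\bs{X},0})$, the aggregated covariance vector collapses to $-(r_1r_0)^{-1}\bs{S}_{\bs{X},\bs{C}}(\bs{\beta}-\tilde{\bs{\beta}})$, so that
\[
V_{\tau\tau}(\bs{\beta},\bs{0})R_T^2(\bs{\beta},\bs{0}) = (r_1r_0)^{-1}(\bs{\beta}-\tilde{\bs{\beta}})^\top\bs{S}_{\bs{C},\bs{X}}(\bs{S}^2_{\bs{X}})^{-1}\bs{S}_{\bs{X},\bs{C}}(\bs{\beta}-\tilde{\bs{\beta}}) = (r_1r_0)^{-1}(\bs{\beta}-\tilde{\bs{\beta}})^\top\bs{S}^2_{\bs{C}\mid\bs{X}}(\bs{\beta}-\tilde{\bs{\beta}}),
\]
which is the second identity. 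The third identity follows by subtracting this from the first decomposition and using the defining relation $\bs{S}^2_{\bs{C}\setminus\bs{X}} = \bs{S}^2_{\bs{C}} - \bs{S}^2_{\bs{C}\mid\bs{X}}$, which concludes the argument.
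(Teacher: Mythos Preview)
Your proposal is correct but proceeds along a genuinely different route from the paper. You work purely at the level of finite-population covariance algebra: expand $S_t^2(\bs{\beta},\bs{0})$, complete the square in $\bs{\beta}$, and for the second part exploit the block-submatrix identity $\bs{S}_{\bs{X},\bs{C}}\tilde{\bs{\beta}} = r_0\bs{S}_{\bs{X},1}+r_1\bs{S}_{\bs{X},0}$ that comes from $\bs{X}\subset\bs{C}$. The paper instead argues probabilistically under the CRSE: it writes $V_{\tau\tau}(\bs{\beta},\bs{0}) = n\Var\{\hat{\tau}(\bs{\beta},\bs{0})\}$, invokes Lemma~\ref{lemma:proj_coef} to say that $\tilde{\bs{\beta}}$ is the projection coefficient of $\hat{\tau}$ on $\hat{\bs{\tau}}_{\bs{C}}$, and uses the resulting orthogonality $\hat{\tau}-\tilde{\bs{\beta}}^\top\hat{\bs{\tau}}_{\bs{C}} \perp \hat{\bs{\tau}}_{\bs{C}}$ to get the decomposition without ever completing a square. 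For the second part the paper observes that when $\bs{X}\subset\bs{C}$, $\hat{\bs{\tau}}_{\bs{X}}$ is a linear function of $\hat{\bs{\tau}}_{\bs{C}}$ and therefore uncorrelated with the residual $\hat{\tau}-\tilde{\bs{\beta}}^\top\hat{\bs{\tau}}_{\bs{C}}$, so $\Cov\{\hat{\tau}(\bs{\beta},\bs{0}),\hat{\bs{\tau}}_{\bs{X}}\} = -(\bs{\beta}-\tilde{\bs{\beta}})^\top\Cov(\hat{\bs{\tau}}_{\bs{C}},\hat{\bs{\tau}}_{\bs{X}})$ immediately. Your approach is more self-contained (no appeal to the sampling interpretation or to Lemma~\ref{lemma:proj_coef}) and makes the role of $\bs{X}\subset\bs{C}$ very explicit through the submatrix argument; the paper's approach is slicker in that orthogonality of projection residuals replaces the square-completion bookkeeping, and the same covariance template handles Lemma~\ref{lemma:simp_W_in_E} verbatim.
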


\begin{lemma}\label{lemma:sum_linear}
	If two independent random variables $\zeta_1$ and $\zeta_2$ are both symmetric and unimodal around zero, 
	then for any constants $c_1$  and $c_2$, 
	$c_1 \zeta_1 + c_2 \zeta_2$ is also symmetric and unimodal around zero. 
\end{lemma}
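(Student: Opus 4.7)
The plan is to split the statement into two parts. The symmetry claim is essentially immediate, and the real content is in the unimodality of the convolution.

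First, I would handle symmetry. Since $\zeta_i$ is symmetric around zero, we have $-\zeta_i \stackrel{d}{=} \zeta_i$ for $i=1,2$, and hence $c_i \zeta_i \stackrel{d}{=} -c_i\zeta_i$ for any real constant $c_i$. By the independence of $\zeta_1$ and $\zeta_2$, the pair $(c_1\zeta_1, c_2\zeta_2)$ has the same joint distribution as $(-c_1\zeta_1, -c_2\zeta_2)$, so $c_1\zeta_1 + c_2\zeta_2 \stackrel{d}{=} -(c_1\zeta_1 + c_2\zeta_2)$, proving symmetry.

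Second, I would reduce the unimodality claim to a convolution statement. Scaling a symmetric unimodal distribution by a real constant $c_i$ produces another symmetric unimodal distribution: if $c_i=0$ the result is the point mass at zero (which is unimodal), and if $c_i\neq 0$ the density (or distribution function) of $c_i\zeta_i$ is obtained from that of $\zeta_i$ by an even rescaling, which preserves both the symmetry around zero and the monotonicity of the density on each half-line. So it remains to show that the sum of two independent symmetric unimodal random variables $\xi_1 \equiv c_1\zeta_1$ and $\xi_2 \equiv c_2\zeta_2$ is symmetric unimodal.

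The main obstacle is this convolution statement, which is not elementary: a symmetric unimodal distribution need not have a density, and even when it does, monotonicity is not preserved by convolution in general (one needs the symmetry). My plan is to invoke Wintner's theorem (\citet{wintner1938} and subsequent treatments in \citet[][Sec.~1.3]{dharmadhikari1988unimodality}), which states that the convolution of two symmetric unimodal distributions on $\mathbb{R}$ is itself symmetric unimodal. A self-contained route, if desired, is via Khintchine's representation: any symmetric unimodal $\xi_i$ admits the decomposition $\xi_i \stackrel{d}{=} U_i V_i$ where $U_i \sim \Unif[-1,1]$ is independent of a nonnegative random variable $V_i$. Conditioning on $(V_1, V_2)$, the sum $\xi_1+\xi_2$ becomes a sum of two independent uniform random variables on symmetric intervals, which is easily seen to be symmetric unimodal (its density is triangular or trapezoidal). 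Unimodality is preserved by mixing over the scale parameters because mixing a family of symmetric unimodal densities with respect to any probability measure yields a symmetric unimodal density (each density in the family is non-increasing on $[0,\infty)$, and this property is preserved under pointwise averaging). Combining the two parts completes the proof.
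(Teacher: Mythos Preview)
Your proposal is correct and takes the same approach as the paper, which simply cites Wintner's theorem \citep{wintner1936class} without further elaboration. You go further by spelling out the symmetry and scaling reductions and by sketching the Khintchine-representation proof of Wintner's result; this extra detail is sound and more informative than the paper's one-line citation.
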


\begin{lemma}\label{lemma:sum_unimodal}
	Let $\zeta_0, \zeta_1$ and $\zeta_2$ be three mutually independent random variables. If (i) $\zeta_0$ is symmetric and unimodal around zero, 
	(ii) $\zeta_1$  and $\zeta_2$ are symmetric around 0, 
	and 
	(iii) $\Pr( |\zeta_1| \le c) \ge \Pr(|\zeta_2| \le c)$ for any $c\ge0$, 
	then 
	$\Pr( |\zeta_0+\zeta_1| \le c) \ge \Pr(|\zeta_0+\zeta_2| \le c)$ for any $c\ge0$. 
\end{lemma}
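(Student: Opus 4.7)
The plan is to reduce the problem to a one-dimensional stochastic-dominance comparison by integrating out $\zeta_0$. Condition on $\zeta_i$ (using the independence assumption) to write
\[
\Pr(|\zeta_0+\zeta_i|\le c)=\E[g(\zeta_i)],\qquad \text{where } g(x)\equiv \Pr(|\zeta_0+x|\le c).
\]
Using the symmetry of $\zeta_0$ (so that $F_0(-t)=1-F_0(t)$ for the CDF $F_0$), one gets
\[
g(x)=F_0(c-x)-F_0(-c-x)=F_0(c-x)+F_0(c+x)-1,
\]
which is manifestly even in $x$. The key property to extract is that $g$ is non-increasing on $[0,\infty)$; this is the place where the unimodality of $\zeta_0$ will be used.

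To establish the monotonicity of $g$, I would first treat the case where $\zeta_0$ has a density $f_0$ that is symmetric and non-increasing on $[0,\infty)$. Then $g'(x)=f_0(c+x)-f_0(c-x)$, and a short case split (whether $c-x\ge 0$ or $c-x<0$) together with the symmetry and monotonicity of $f_0$ on $[0,\infty)$ gives $g'(x)\le 0$ for $x\ge 0$. For the general symmetric unimodal $\zeta_0$ (no density assumed), I would either appeal to Khintchine's representation $\zeta_0\stackrel{d}{=}UV$ with $U\sim\Unif(-1,1)$ and $V\ge 0$ independent (so that $\Pr(|\zeta_0+x|\le c\mid V)$ is directly computable from a uniform distribution, and the monotonicity in $|x|$ is a one-line check) and then integrate over $V$; alternatively, work with the CDF characterization of unimodality (convex on $(-\infty,0]$, concave on $[0,\infty)$) and verify $g(x_1)\ge g(x_2)$ for $0\le x_1\le x_2$ directly from the expression $F_0(c-x)+F_0(c+x)$. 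I expect this monotonicity step to be the main (only nontrivial) obstacle, because it is where unimodality rather than just symmetry enters.

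Once $g$ is shown to be even and non-increasing on $[0,\infty)$, the conclusion follows quickly. Since $\zeta_i$ is symmetric around $0$ and $g$ is even, $g(\zeta_i)=g(|\zeta_i|)$, and the map $t\mapsto g(t)$ on $[0,\infty)$ is non-increasing. Assumption (iii), $\Pr(|\zeta_1|\le c')\ge \Pr(|\zeta_2|\le c')$ for all $c'\ge 0$, is exactly the statement that $|\zeta_1|$ is stochastically dominated by $|\zeta_2|$. Applying the standard fact that $X\preceq_{\mathrm{st}} Y$ implies $\E h(X)\ge \E h(Y)$ for non-increasing $h$ (with $X=|\zeta_1|$, $Y=|\zeta_2|$, $h=g$), one obtains
\[
\Pr(|\zeta_0+\zeta_1|\le c)=\E[g(|\zeta_1|)]\ge \E[g(|\zeta_2|)]=\Pr(|\zeta_0+\zeta_2|\le c),
\]
which is the desired conclusion for every $c\ge 0$.
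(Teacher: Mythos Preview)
Your proposal is correct. The paper's own proof consists of a single sentence citing Theorem~7.5 of Dharmadhikari and Joag-Dev (1988), so it treats the lemma as a known result and gives no argument. Your route is genuinely different in that it is self-contained: you condition on $\zeta_i$, show that $g(x)=\Pr(|\zeta_0+x|\le c)$ is even (from symmetry of $\zeta_0$) and non-increasing on $[0,\infty)$ (from unimodality, via Khintchine's representation or the convex/concave CDF characterization), and then invoke the stochastic-dominance hypothesis on $|\zeta_1|,|\zeta_2|$ together with the standard fact that stochastic order is preserved under non-increasing test functions. This is essentially a direct proof of the relevant special case of the cited theorem. The advantage of your approach is transparency---each hypothesis is visibly used at a specific step---while the paper's citation is shorter but opaque. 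One minor remark: your CDF formula $g(x)=F_0(c-x)+F_0(c+x)-1$ tacitly assumes continuity of $F_0$; since a symmetric unimodal law can have an atom only at $0$, this is harmless, and in any case your Khintchine argument handles the general case cleanly.
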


\begin{lemma}\label{lemma:linear_com_eps_L}
	For any positive integer $K_1,K_2$ and constants $a_1,a_2$,  
	let $\varepsilon_0 \sim \mathcal{N}(0,1)$, $L_{K_1, a_1} \sim D_1 \mid \bs{D}^\top\bs{D}\le a_1$ and 
	$L_{K_2, a_2} \sim \tilde{D}_1 \mid \tilde{\bs{D}}^\top\tilde{\bs{D}}\le a_2$, 
	where $\bs{D} = (D_1, \ldots, D_{K_1}) \sim \mathcal{N}(\bs{0}, \bs{I}_{K_1})$, $\tilde{\bs{D}} = (\tilde{D}_1, \ldots, \tilde{D}_{K_2}) \sim \mathcal{N}(\bs{0}, \bs{I}_{K_2})$, 
	and $(\varepsilon, L_{K_1, a_1}, L_{K_2, a_2})$ are mutually independent. 
	Then for any nonnegative constants $b_0 \le \overline{b}_0$, $b_1 \le \overline{b}_1$, $b_2 \le \overline{b}_2$, and any $c \ge 0$, 
	\begin{align*}
		\Pr\left(
		\left| b_0 \varepsilon_0 + b_1 L_{K_1,a_1}+b_2 L_{K_2,a_2} \right| \le c
		\right)
		& \ge 
		\Pr\left(
		\left| \overline{b}_0 \varepsilon_0 + \overline{b}_1 L_{K_1,a_1}+\overline{b}_2 L_{K_2,a_2} \right| \le c
		\right).
	\end{align*}
\end{lemma}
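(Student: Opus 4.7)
The plan is to reduce the three-coefficient inequality to a chain of three one-coefficient replacements, each handled by Lemma \ref{lemma:sum_unimodal}. The observation that drives everything is that each of $\varepsilon_0$, $L_{K_1,a_1}$, $L_{K_2,a_2}$ is symmetric and unimodal around zero (standard for the Gaussian, and by Lemma \ref{lemma:LKa}(iii) for the constrained Gaussians), and by Lemma \ref{lemma:sum_linear} any independent sum of symmetric unimodal variables with arbitrary scalar coefficients is again symmetric and unimodal.

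First I would verify two elementary monotonicity facts for a single scaled variable. For the Gaussian piece, if $0 \le b_0 \le \overline{b}_0$ then $b_0 \varepsilon_0 \sim \mathcal{N}(0, b_0^2)$ is more concentrated than $\overline{b}_0 \varepsilon_0 \sim \mathcal{N}(0, \overline{b}_0^2)$, so $\Pr(|b_0 \varepsilon_0| \le c) \ge \Pr(|\overline{b}_0 \varepsilon_0| \le c)$ for every $c \ge 0$. For the constrained Gaussian piece, if $0 \le b_1 \le \overline{b}_1$ and $b_1 > 0$, then $\Pr(|b_1 L_{K_1,a_1}| \le c) = \Pr(|L_{K_1,a_1}| \le c/b_1) \ge \Pr(|L_{K_1,a_1}| \le c/\overline{b}_1) = \Pr(|\overline{b}_1 L_{K_1,a_1}| \le c)$, and the case $b_1 = 0$ is trivial. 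The same argument handles $b_2$ versus $\overline{b}_2$.

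Next I would chain three applications of Lemma \ref{lemma:sum_unimodal}, replacing one coefficient at a time.

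\emph{Step 1.} Let $W_1 = b_1 L_{K_1,a_1} + b_2 L_{K_2,a_2}$, which is symmetric and unimodal around zero by Lemma \ref{lemma:sum_linear}, and is independent of $\varepsilon_0$. Take $\zeta_0 = W_1$, $\zeta_1 = b_0 \varepsilon_0$, $\zeta_2 = \overline{b}_0 \varepsilon_0$ in Lemma \ref{lemma:sum_unimodal} to conclude
\[
\Pr\bigl(|b_0 \varepsilon_0 + W_1| \le c\bigr) \ge \Pr\bigl(|\overline{b}_0 \varepsilon_0 + W_1| \le c\bigr) \quad \text{for all } c \ge 0.
\]

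\emph{Step 2.} Let $W_2 = \overline{b}_0 \varepsilon_0 + b_2 L_{K_2,a_2}$, which is symmetric and unimodal around zero by Lemma \ref{lemma:sum_linear} and independent of $L_{K_1,a_1}$. Take $\zeta_0 = W_2$, $\zeta_1 = b_1 L_{K_1,a_1}$, $\zeta_2 = \overline{b}_1 L_{K_1,a_1}$ in Lemma \ref{lemma:sum_unimodal} to get the corresponding inequality after upgrading $b_1$ to $\overline{b}_1$.

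\emph{Step 3.} Let $W_3 = \overline{b}_0 \varepsilon_0 + \overline{b}_1 L_{K_1,a_1}$, symmetric unimodal and independent of $L_{K_2,a_2}$, and apply Lemma \ref{lemma:sum_unimodal} one last time to upgrade $b_2$ to $\overline{b}_2$. Composing the three inequalities yields the stated conclusion.

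There is no real obstacle here; the only mild point to be careful about is the unimodality preservation used in Steps 1--3 for the ``reference'' variable $\zeta_0$, which must genuinely be symmetric unimodal (not merely symmetric) in order to invoke Lemma \ref{lemma:sum_unimodal}. This is why I keep the Gaussian or constrained-Gaussian factor that has already been upgraded as part of $W_k$ together with the other original factors, so that Lemma \ref{lemma:sum_linear} applies directly at each stage.
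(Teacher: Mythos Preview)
Your proof is correct and follows essentially the same approach as the paper: both arguments chain three applications of Lemma~\ref{lemma:sum_unimodal}, upgrading one coefficient at a time, using Lemma~\ref{lemma:sum_linear} at each step to ensure the reference variable $\zeta_0$ is symmetric and unimodal. Your exposition is slightly more explicit about the single-variable monotonicity and the naming of the intermediate $W_k$'s, but the structure and the steps match the paper exactly.
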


\subsection{Proofs of the lemmas}

\begin{proof}[Proof of Lemma \ref{lemma:proj_coef}]
	By the same logic as Lemma \ref{lemma:cov_crse}, $(\hat{\tau}, \hat{\bs{\tau}}_{\bs{C}}^\top, \hat{\bs{\delta}}_{\bs{E}}^\top)^\top$ has the following covariance matrix under the CRSE:
	\begin{align*}
		\bs{\Sigma} 
		=
		\begin{pmatrix}
			\Sigma_{\tau\tau} & \bs{\Sigma}_{\tau c} & \bs{\Sigma}_{\tau e}\\
			\bs{\Sigma}_{c \tau} & \bs{\Sigma}_{cc} & \bs{\Sigma}_{ce}\\
			\bs{\Sigma}_{e \tau} & \bs{\Sigma}_{ec} & \bs{\Sigma}_{ee}
		\end{pmatrix}
		= 
		\begin{pmatrix}
			r_1^{-1}S^2_{1}+r_0^{-1}S^2_{0} -fS^2_\tau 
			& r_1^{-1}\bs{S}_{1,\bs{C}}+r_0^{-1}\bs{S}_{0,\bs{C}} 
			& \left(1 - f\right)\bs{S}_{\tau,\bs{E}} \\
			r_1^{-1}\bs{S}_{\bs{C},1} +r_0^{-1}\bs{S}_{\bs{C},0}& (r_1r_0)^{-1} \bs{S}^2_{\bs{C}}  & \bs{0} \\
			\left(1 - f\right)\bs{S}_{\bs{E},\tau}& \bs{0} & \left(1  -f \right) \bs{S}^2_{\bs{E}}
		\end{pmatrix}. 
	\end{align*}
	Thus, under the CRSE, $\hat{\bs{\tau}}_{\bs{C}}$ and $\hat{\bs{\delta}}_{\bs{E}}$ are uncorrelated, and the linear projection coefficients of $\hat{\tau}$ on $\hat{\bs{\tau}}_{\bs{C}}$ and $\hat{\bs{\delta}}_{\bs{E}}$ are, respectively, 
	\begin{align*}
		\bs{\Sigma}_{cc}^{-1} \bs{\Sigma}_{c \tau} 
		& = 
		\big\{ (r_1r_0)^{-1} \bs{S}^2_{\bs{C}} \big\}^{-1}
		\big(r_1^{-1}\bs{S}_{\bs{C},1}+r_0^{-1}\bs{S}_{\bs{C},0} \big)
		= r_0 \big(\bs{S}_{\bs{C}}^2\big)^{-1} \bs{S}_{\bs{C},1}
		+ 
		r_1 \big(\bs{S}_{\bs{C}}^2\big)^{-1} \bs{S}_{\bs{C},0}
		= r_0 \tilde{\bs{\beta}}_1 + r_1 \tilde{\bs{\beta}}_0 = \tilde{\bs{\beta}}, 
		\nonumber
		\\
		\bs{\Sigma}_{ee}^{-1} \bs{\Sigma}_{e \tau}
		& =  
		\big\{
		(1  - f) \bs{S}^2_{\bs{E}}
		\big\}^{-1} 
		(1 - f)\bs{S}_{\bs{E},\tau}
		= \big( \bs{S}^2_{\bs{E}} \big)^{-1} \bs{S}_{\bs{E}, 1} - \big( \bs{S}^2_{\bs{E}} \big)^{-1} \bs{S}_{\bs{E}, 0}
		= \tilde{\bs{\gamma}}_1 - \tilde{\bs{\gamma}}_0 = \tilde{\bs{\gamma}}. 
	\end{align*}
	Therefore, Lemma \ref{lemma:proj_coef} holds.
\end{proof}

\begin{proof}[Proof of Lemma \ref{lemma:simp_W_in_E}]
First, we simplify $V_{\tau\tau}(\bs{0}, \bs{\gamma})$, which by definition is equivalently the variance of $\sqrt{n}\hat{\tau}(\bs{0}, \bs{\gamma})$ under the CRSE. 
From Lemma \ref{lemma:proj_coef}, 
$\tilde{\bs{\gamma}}$ is the linear projection coefficient of $\hat{\tau}$ on $\hat{\bs{\delta}}_{\bs{E}}$ under the CRSE. 
Thus, 
under the CRSE, 
$\hat{\tau} - \tilde{\bs{\gamma}}^\top \hat{\bs{\delta}}_{\bs{E}}$ is uncorrelated with $\hat{\bs{\delta}}_{\bs{E}}$, 
and, by the same logic as Proposition \ref{prop:R2}, 
its variance is $\Var(\hat{\tau}) (1-R_E^2) = n^{-1} V_{\tau\tau}  (1-R_E^2)$. 
These imply that the variance of $\hat{\tau}(\bs{0}, \bs{\gamma})$ has the following decomposition:
\begin{align*}
    \Var\big\{ \hat{\tau}(\bs{0}, \bs{\gamma}) \big\}
    & = 
    \Var\big\{ \hat{\tau} - \tilde{\bs{\gamma}}^\top \hat{\bs{\delta}}_{\bs{E}} -  (\bs{\gamma} - \tilde{\bs{\gamma}})^\top \hat{\bs{\delta}}_{\bs{E}}  \big\}
    = 
    \Var\big( \hat{\tau} - \tilde{\bs{\gamma}}^\top \hat{\bs{\delta}}_{\bs{E}} \big) 
    +\Var\big\{ (\bs{\gamma} - \tilde{\bs{\gamma}})^\top \hat{\bs{\delta}}_{\bs{E}}  \big\}
    \\
    & = n^{-1} V_{\tau\tau}(1-R_E^2) + (\bs{\gamma} - \tilde{\bs{\gamma}})^\top \Cov(\hat{\bs{\delta}}_{\bs{E}}) (\bs{\gamma} - \tilde{\bs{\gamma}}). 
\end{align*}
By the same logic as Lemma \ref{lemma:cov_crse}, 
$n\Cov(\hat{\bs{\delta}}_{\bs{E}}) = \Cov(\sqrt{n}\hat{\bs{\delta}}_{\bs{E}}) = (1-f) \bs{S}^2_{\bs{E}}$. 
Consequently, $V_{\tau\tau}(\bs{0}, \bs{\gamma})$ has the following equivalent decomposition:
\begin{align*}
    V_{\tau\tau}(\bs{0}, \bs{\gamma}) & = 
    n \Var\big\{ \hat{\tau}(\bs{0}, \bs{\gamma}) \big\}
    = 
    V_{\tau\tau}(1-R^2_E) + (1-f) (\bs{\gamma} - \tilde{\bs{\gamma}})^\top \bs{S}^2_{\bs{E}} (\bs{\gamma} - \tilde{\bs{\gamma}}).
\end{align*}

Second, we simplify $V_{\tau\tau}(\bs{0}, \bs{\gamma})R_S^2(\bs{0}, \bs{\gamma})$ and $V_{\tau\tau}(\bs{0}, \bs{\gamma})\{1 - R_S^2(\bs{0}, \bs{\gamma})\}$ when $\bs{W} \subset \bs{E}$. 
By the same logic as Proposition \ref{prop:R2}, $R_S^2(\bs{0}, \bs{\gamma})$ is equivalently the squared multiple correlation between $\hat{\tau}(\bs{0}, \bs{\gamma})$ and $\hat{\bs{\delta}}_{\bs{W}}$ under the CRSE. This implies that 
\begin{align*}
    n^{-1} V_{\tau\tau}(\bs{0}, \bs{\gamma})R_S^2(\bs{0}, \bs{\gamma})
    & = 
    \Cov\big\{\hat{\tau}(\bs{0}, \bs{\gamma}), \hat{\bs{\delta}}_{\bs{W}}\big\}
    \Cov^{-1}( \hat{\bs{\delta}}_{\bs{W}} )
    \Cov\big\{\hat{\bs{\delta}}_{\bs{W}}, \hat{\tau}(\bs{0}, \bs{\gamma})\big\}. 
\end{align*}
Because $\bs{W} \subset \bs{E}$, $\hat{\bs{\delta}}_{\bs{W}}$ must be a linear function of $\hat{\bs{\delta}}_{\bs{E}}$ and is thus uncorrelated with $\hat{\tau} - \tilde{\bs{\gamma}}^\top \hat{\bs{\delta}}_{\bs{E}}$. 
Consequently, 
\begin{align*}
    \Cov\big\{\hat{\tau}(\bs{0}, \bs{\gamma}), \hat{\bs{\delta}}_{\bs{W}}\big\}
    & 
    = 
    \Cov\big\{ \hat{\tau} - \tilde{\bs{\gamma}}^\top \hat{\bs{\delta}}_{\bs{E}} - (\bs{\gamma} - \tilde{\bs{\gamma}})^\top \hat{\bs{\delta}}_{\bs{E}}, \hat{\bs{\delta}}_{\bs{W}}\big\}
    = 
    -  (\bs{\gamma} - \tilde{\bs{\gamma}})^\top \Cov\big( \hat{\bs{\delta}}_{\bs{E}}, \hat{\bs{\delta}}_{\bs{W}}\big), 
\end{align*}
and $V_{\tau\tau}(\bs{0}, \bs{\gamma})R_S^2(\bs{0}, \bs{\gamma})$ has the following equivalent form: 
\begin{align*}
    V_{\tau\tau}(\bs{0}, \bs{\gamma})R_S^2(\bs{0}, \bs{\gamma})
    & = 
    n (\bs{\gamma} - \tilde{\bs{\gamma}})^\top \Cov\big( \hat{\bs{\delta}}_{\bs{E}}, \hat{\bs{\delta}}_{\bs{W}}\big)
    \Cov^{-1}( \hat{\bs{\delta}}_{\bs{W}} )
    \Cov\big( \hat{\bs{\delta}}_{\bs{W}}, \hat{\bs{\delta}}_{\bs{E}} \big) (\bs{\gamma} - \tilde{\bs{\gamma}})\\
    & = 
    (\bs{\gamma} - \tilde{\bs{\gamma}})^\top \Cov\big( \sqrt{n}\hat{\bs{\delta}}_{\bs{E}}, \sqrt{n}\hat{\bs{\delta}}_{\bs{W}}\big)
    \Cov^{-1}( \sqrt{n}\hat{\bs{\delta}}_{\bs{W}} )
    \Cov\big( \sqrt{n}\hat{\bs{\delta}}_{\bs{W}}, \sqrt{n}\hat{\bs{\delta}}_{\bs{E}} \big) (\bs{\gamma} - \tilde{\bs{\gamma}}).
\end{align*}
By the same logic as Lemma \ref{lemma:cov_crse}, 
$(\sqrt{n}\hat{\bs{\delta}}_{\bs{W}}^\top, \sqrt{n}\hat{\bs{\delta}}_{\bs{E}}^\top)^\top$ has the following covariance under the CRSE:
\begin{align*}
    \begin{pmatrix}
    \Cov(\sqrt{n}\hat{\bs{\delta}}_{\bs{W}}) & \Cov(\sqrt{n}\hat{\bs{\delta}}_{\bs{W}}, \sqrt{n}\hat{\bs{\delta}}_{\bs{E}}) \\ 
    \Cov(\sqrt{n}\hat{\bs{\delta}}_{\bs{E}}, \sqrt{n}\hat{\bs{\delta}}_{\bs{W}}) & \Cov(\sqrt{n}\hat{\bs{\delta}}_{\bs{E}})
    \end{pmatrix}
    = 
    (1-f) 
    \begin{pmatrix}
    \bs{S}^2_{\bs{W}} & \bs{S}_{\bs{W}, \bs{E}}\\
    \bs{S}_{\bs{E}, \bs{W}} & \bs{S}^2_{\bs{E}}
    \end{pmatrix}.
\end{align*}
We can then simplify $V_{\tau\tau}(\bs{0}, \bs{\gamma})R_S^2(\bs{0}, \bs{\gamma})$ as
\begin{align*}
    V_{\tau\tau}(\bs{0}, \bs{\gamma})R_S^2(\bs{0}, \bs{\gamma})
    & = 
    (1-f)(\bs{\gamma} - \tilde{\bs{\gamma}})^\top \bs{S}_{\bs{E}, \bs{W}}
    \big( \bs{S}^2_{\bs{W}} \big)^{-1}
    \bs{S}_{\bs{W}, \bs{E}} (\bs{\gamma} - \tilde{\bs{\gamma}})
    = (1-f)(\bs{\gamma} - \tilde{\bs{\gamma}})^\top \bs{S}_{\bs{E}\mid \bs{W}}^2 (\bs{\gamma} - \tilde{\bs{\gamma}}).
\end{align*}
Consequently, we have 
\begin{align*}
    & \quad \ V_{\tau\tau}(\bs{0}, \bs{\gamma}) \big\{ 1 - R_S^2(\bs{0}, \bs{\gamma}) \big\}
    \\
    & = 
    V_{\tau\tau}(\bs{0}, \bs{\gamma}) - V_{\tau\tau}(\bs{0}, \bs{\gamma})R_S^2(\bs{0}, \bs{\gamma})
    = 
    V_{\tau\tau}(1-R^2_E) + (1-f) (\bs{\gamma} - \tilde{\bs{\gamma}})^\top 
    \left( \bs{S}^2_{\bs{E}} - \bs{S}_{\bs{E}\mid \bs{W}}^2 \right) (\bs{\gamma} - \tilde{\bs{\gamma}})\\
    & = V_{\tau\tau}(1-R^2_E) + (1-f) (\bs{\gamma} - \tilde{\bs{\gamma}})^\top 
    \bs{S}_{\bs{E}\setminus \bs{W}}^2 (\bs{\gamma} - \tilde{\bs{\gamma}}).
\end{align*}

From the above, Lemma \ref{lemma:simp_W_in_E} holds. 
\end{proof}

\begin{proof}[Proof of Lemma \ref{lemma:simp_X_in_C}]
First, we simplify $V_{\tau\tau}(\bs{\beta}, \bs{0})$, which by definition is the variance of $\sqrt{n} \hat{\tau}(\bs{\beta}, \bs{0})$ under the CRSE. 
From Lemma \ref{lemma:proj_coef}, 
$\tilde{\bs{\beta}}$ is the linear projection coefficient of $\hat{\tau}$ on $\hat{\bs{\tau}}_{\bs{C}}$ under the CRSE. 
Thus, 
under the CRSE, 
$\hat{\tau} - \tilde{\bs{\beta}}^\top \hat{\bs{\tau}}_{\bs{C}}$ is uncorrelated with $\hat{\bs{\tau}}_{\bs{C}}$, and, by the same logic as Proposition \ref{prop:R2}, 
its variance is $\Var(\hat{\tau}) (1-R_C^2) = n^{-1} V_{\tau\tau} (1-R_C^2)$. 
These imply that the variance of $\hat{\tau}(\bs{\beta}, \bs{0})$ has the following decomposition:
\begin{align*}
    \Var\big\{ \hat{\tau}(\bs{\beta}, \bs{0}) \big\}
    & = 
    \Var\big\{ \hat{\tau} - \tilde{\bs{\beta}}^\top \hat{\bs{\tau}}_{\bs{C}} 
    - (\bs{\beta} - \tilde{\bs{\beta}})^\top \hat{\bs{\tau}}_{\bs{C}} 
    \big\}
    =
    \Var\big( \hat{\tau} - \tilde{\bs{\beta}}^\top \hat{\bs{\tau}}_{\bs{C}} \big)
    +
    \Var\big\{ (\bs{\beta} - \tilde{\bs{\beta}})^\top \hat{\bs{\tau}}_{\bs{C}} 
    \big\}
    \\
    & 
    = n^{-1} V_{\tau\tau}(1-R_C^2) + 
    (\bs{\beta} - \tilde{\bs{\beta}})^\top
    \Cov(\hat{\bs{\tau}}_{\bs{C}})
    (\bs{\beta} - \tilde{\bs{\beta}}). 
\end{align*}
    By the same logic as Lemma \ref{lemma:cov_crse}, $n\Cov(\hat{\bs{\tau}}_{\bs{C}}) = \Cov(\sqrt{n}\hat{\bs{\tau}}_{\bs{C}}) = (r_1r_0)^{-1} \bs{S}^2_{\bs{C}}$. Consequently, 
    \begin{align*}
        V_{\tau\tau}(\bs{\beta}, \bs{0}) = n \Var\big\{ \hat{\tau}(\bs{\beta}, \bs{0}) \big\}
        = V_{\tau\tau}(1-R_C^2) + 
        (r_1r_0)^{-1} (\bs{\beta} - \tilde{\bs{\beta}})^\top
        \bs{S}^2_{\bs{C}}
        (\bs{\beta} - \tilde{\bs{\beta}}). 
    \end{align*}
    
    Second, we simplify $V_{\tau\tau}(\bs{\beta}, \bs{0}) R_T^2(\bs{\beta}, \bs{0})$ and $V_{\tau\tau}(\bs{\beta}, \bs{0})\{1 - R_T^2(\bs{\beta}, \bs{0}) \}$ when $\bs{X} \subset \bs{C}$. 
    By the same logic as Proposition \ref{prop:R2}, $R_T^2(\bs{\beta}, \bs{0})$ is equivalently the squared multiple correlation between $\hat{\tau}(\bs{\beta}, \bs{0})$ and $\hat{\bs{\tau}}_{\bs{X}}$ under the CRSE. This implies that 
\begin{align*}
    n^{-1} V_{\tau\tau}(\bs{\beta}, \bs{0}) R_T^2(\bs{\beta}, \bs{0})
    & = 
    \Cov\big\{\hat{\tau}(\bs{\beta}, \bs{0}), \hat{\bs{\tau}}_{\bs{X}}\big\}
    \Cov^{-1}( \hat{\bs{\tau}}_{\bs{X}} )
    \Cov\big\{\hat{\bs{\tau}}_{\bs{X}}, \hat{\tau}(\bs{\beta}, \bs{0})\big\}. 
\end{align*}
Because $\bs{X} \subset \bs{C}$, $\hat{\bs{\tau}}_{\bs{X}}$ must be a linear function of $\hat{\bs{\tau}}_{\bs{C}}$ and is thus uncorrelated with $\hat{\tau} - \tilde{\bs{\beta}}^\top \hat{\bs{\tau}}_{\bs{C}}$. 
Consequently, 
\begin{align*}
    \Cov\big\{\hat{\tau}(\bs{\beta}, \bs{0}), \hat{\bs{\tau}}_{\bs{X}}\big\}
    & 
    = 
    \Cov\big\{ \hat{\tau} - \tilde{\bs{\beta}}^\top \hat{\bs{\tau}}_{\bs{C}} - (\bs{\beta} - \tilde{\bs{\beta}})^\top \hat{\bs{\tau}}_{\bs{C}}, \hat{\bs{\tau}}_{\bs{X}}\big\}
    = 
    -  (\bs{\beta} - \tilde{\bs{\beta}})^\top \Cov\big( \hat{\bs{\tau}}_{\bs{C}}, \hat{\bs{\tau}}_{\bs{X}}\big), 
\end{align*}
and $V_{\tau\tau}(\bs{\beta}, \bs{0}) R_T^2(\bs{\beta}, \bs{0})$ has the following equivalent form: 
\begin{align*}
    V_{\tau\tau}(\bs{\beta}, \bs{0}) R_T^2(\bs{\beta}, \bs{0})
    & = 
    n (\bs{\beta} - \tilde{\bs{\beta}})^\top \Cov\big( \hat{\bs{\tau}}_{\bs{C}}, \hat{\bs{\tau}}_{\bs{X}}\big)
    \Cov^{-1}( \hat{\bs{\tau}}_{\bs{X}} )
    \Cov\big( \hat{\bs{\tau}}_{\bs{X}}, \hat{\bs{\tau}}_{\bs{C}} \big) (\bs{\beta} - \tilde{\bs{\beta}})\\
    & = 
    (\bs{\beta} - \tilde{\bs{\beta}})^\top \Cov\big( \sqrt{n}\hat{\bs{\tau}}_{\bs{C}}, \sqrt{n}\hat{\bs{\tau}}_{\bs{X}}\big)
    \Cov^{-1}( \sqrt{n}\hat{\bs{\tau}}_{\bs{X}} )
    \Cov\big( \sqrt{n}\hat{\bs{\tau}}_{\bs{X}}, \sqrt{n}\hat{\bs{\tau}}_{\bs{C}} \big) (\bs{\beta} - \tilde{\bs{\beta}})
\end{align*}
By the same logic as Lemma \ref{lemma:cov_crse}, 
$(\sqrt{n}\hat{\bs{\tau}}_{\bs{X}}^\top, \sqrt{n}\hat{\bs{\tau}}_{\bs{C}}^\top)^\top$ has the following covariance under the CRSE:
\begin{align*}
    \begin{pmatrix}
    \Cov(\sqrt{n}\hat{\bs{\tau}}_{\bs{X}}) & \Cov(\sqrt{n}\hat{\bs{\tau}}_{\bs{X}}, \sqrt{n}\hat{\bs{\tau}}_{\bs{C}}) \\ 
    \Cov(\sqrt{n}\hat{\bs{\tau}}_{\bs{C}}, \sqrt{n}\hat{\bs{\tau}}_{\bs{X}}) & \Cov(\sqrt{n}\hat{\bs{\tau}}_{\bs{C}})
    \end{pmatrix}
    = 
    (r_1r_0)^{-1}
    \begin{pmatrix}
    \bs{S}^2_{\bs{X}} & \bs{S}_{\bs{X}, \bs{C}}\\
    \bs{S}_{\bs{C}, \bs{X}} & \bs{S}^2_{\bs{C}}
    \end{pmatrix}.
\end{align*}
We can then simplify $V_{\tau\tau}(\bs{\beta}, \bs{0}) R_T^2(\bs{\beta}, \bs{0})$ as
\begin{align*}
    V_{\tau\tau}(\bs{\beta}, \bs{0}) R_T^2(\bs{\beta}, \bs{0})
    & = 
    (r_1r_0)^{-1}
    (\bs{\beta} - \tilde{\bs{\beta}})^\top \bs{S}_{\bs{C}, \bs{X}}
    \big( \bs{S}^2_{\bs{X}} \big)^{-1}
    \bs{S}_{\bs{X}, \bs{C}} (\bs{\beta} - \tilde{\bs{\beta}})
    = 
    (r_1r_0)^{-1}
    (\bs{\beta} - \tilde{\bs{\beta}})^\top \bs{S}_{\bs{C}\mid \bs{X}}^2 (\bs{\beta} - \tilde{\bs{\beta}}).
\end{align*}
Consequently, we have
\begin{align*}
    & \quad \ V_{\tau\tau}(\bs{\beta}, \bs{0}) \big\{ 1 - R_S^2(\bs{\beta}, \bs{0}) \big\}
    \\
    & = 
    V_{\tau\tau}(\bs{\beta}, \bs{0}) - V_{\tau\tau}(\bs{\beta}, \bs{0}) R_S^2(\bs{\beta}, \bs{0})
    = 
    V_{\tau\tau}(1-R^2_C) + (r_1r_0)^{-1} (\bs{\beta} - \tilde{\bs{\beta}})^\top 
    \left( \bs{S}^2_{\bs{C}} - \bs{S}_{\bs{C}\mid \bs{X}}^2 \right) (\bs{\beta} - \tilde{\bs{\beta}})\\
    & = V_{\tau\tau}(1-R^2_C) + (r_1r_0)^{-1} (\bs{\beta} - \tilde{\bs{\beta}})^\top 
    \bs{S}_{\bs{C}\setminus \bs{X}}^2 (\bs{\beta} - \tilde{\bs{\beta}}).
\end{align*}

From the above, Lemma \ref{lemma:simp_X_in_C} holds. 
\end{proof}
\begin{proof}[Proof of Lemma \ref{lemma:sum_linear}]
Lemma \ref{lemma:sum_linear} follows immediately from \citet{wintner1936class}. 
\end{proof}
\begin{proof}[Proof of Lemma \ref{lemma:sum_unimodal}]
	Lemma \ref{lemma:sum_unimodal} follows immediately from \citet[][Theorem 7.5]{dharmadhikari1988}. 
\end{proof}
\begin{proof}[Proof of Lemma \ref{lemma:linear_com_eps_L}]
	From Lemma \ref{lemma:LKa},
	$\varepsilon$, $L_{K_1, a_1}$ and $L_{K_2, a_2}$ are all symmetric and unimodal around zero. 
	First, for any $c\ge 0$, 
	because 
	$\Pr( |b_0 \varepsilon_0| \le c) \ge \Pr( |\overline{b}_0 \varepsilon_0| \le c) $ 
	and 
	$b_1 L_{K_1,a_1}+b_2 L_{K_2,a_2}$ is symmetric and unimodal around zero by Lemma \ref{lemma:sum_linear}, 
	from Lemma \ref{lemma:sum_unimodal}, 
	we have 
	\begin{align}\label{eq:prob_bound1}
		\Pr\left(
		\left| b_0 \varepsilon_0 + b_1 L_{K_1,a_1}+b_2 L_{K_2,a_2} \right| \le c
		\right)
		\ge 
		\Pr\left(
		\left| \overline{b}_0 \varepsilon_0 + b_1 L_{K_1,a_1}+b_2 L_{K_2,a_2} \right| \le c
		\right). 
	\end{align}
	Second, for any $c\ge 0$, because 
	$\Pr( |b_1 L_{K_1,a_1}| \le c) \ge \Pr( |\overline{b}_1 L_{K_1,a_1}| \le c)$
	and 
	$
	\overline{b}_0 \varepsilon_0 + b_2 L_{K_2,a_2}
	$ 
	is symmetric and unimodal around zero by Lemma \ref{lemma:sum_linear}, 
	from Lemma \ref{lemma:sum_unimodal}, 
	we have 
	\begin{align}\label{eq:prob_bound2}
	\Pr\left(
	\left| \overline{b}_0 \varepsilon_0 + b_1 L_{K_1,a_1}+b_2 L_{K_2,a_2} \right| \le c
	\right)
	\ge 
	\Pr\left(
	\left| \overline{b}_0 \varepsilon_0 + \overline{b}_1 L_{K_1,a_1}+b_2 L_{K_2,a_2} \right| \le c
	\right). 
	\end{align}
	Third, 
	for any $c\ge 0$, because 
	$\Pr( |b_2 L_{K_2,a_2}| \le c) \ge \Pr( |\overline{b}_2 L_{K_2,a_2}| \le c)$
	and 
	$
	\overline{b}_0 \varepsilon_0 + \overline{b}_1 L_{K_1,a_1}
	$ 
	is symmetric and unimodal around zero by Lemma \ref{lemma:sum_linear}, 
	from Lemma \ref{lemma:sum_unimodal}, 
	we have 
	\begin{align}\label{eq:prob_bound3}
	\Pr\left(
	\left| \overline{b}_0 \varepsilon_0 + \overline{b}_1 L_{K_1,a_1}+b_2 L_{K_2,a_2} \right| \le c
	\right)
	\ge 
	\Pr\left(
	\left| \overline{b}_0 \varepsilon_0 + \overline{b}_1 L_{K_1,a_1}+\overline{b}_2 L_{K_2,a_2} \right| \le c
	\right). 
	\end{align}
	From \eqref{eq:prob_bound1}--\eqref{eq:prob_bound3}, we can then derive that for any $c\ge 0$, 
	\begin{align*}
		\Pr\left(
		\left| b_0 \varepsilon_0 + b_1 L_{K_1,a_1}+b_2 L_{K_2,a_2} \right| \le c
		\right)
		& \ge 
		\Pr\left(
		\left| \overline{b}_0 \varepsilon_0 + \overline{b}_1 L_{K_1,a_1}+\overline{b}_2 L_{K_2,a_2} \right| \le c
		\right).
	\end{align*}
	Therefore, Lemma \ref{lemma:linear_com_eps_L} holds. 
\end{proof}

\subsection{Proofs of the theorems and comments on technical details}

\begin{proof}[\bf Proof of Theorem \ref{thm:dist_reg_general}]
    By definition, the difference-in-means estimator based on observed adjusted outcomes has the following equivalent forms:
    \begin{align*}
        & \quad \ \frac{1}{n_1}\sum_{i\in \mathcal{S}}T_i Y_i(\bs{\beta}, \bs{\gamma})
        - 
        \frac{1}{n_0}\sum_{i\in \mathcal{S}} (1-T_i) Y_i(\bs{\beta}, \bs{\gamma})
        \\
        & = 
        \frac{1}{n_1}\sum_{i\in \mathcal{S}}T_i
        \big\{
        Y_i - \bs{\beta}^\top \bs{C}_i - r_1 \bs{\gamma}^\top  (\bs{E}_i - \bar{\bs{E}})
        \big\}
        - 
        \frac{1}{n_0}\sum_{i\in \mathcal{S}} (1-T_i) 
        \big\{Y_i - \bs{\beta}^\top \bs{C}_i + r_0 \bs{\gamma}^\top (\bs{E}_i - \bar{\bs{E}})\big\}\\
        & = 
        \left\{ \frac{1}{n_1}\sum_{i\in \mathcal{S}}T_i Y_i
        - \frac{1}{n_0}\sum_{i\in \mathcal{S}} (1-T_i) Y_i \right\}
        - \bs{\beta}^\top \left\{ \frac{1}{n_1}\sum_{i\in \mathcal{S}}T_i \bs{C}_i
        - \frac{1}{n_0}\sum_{i\in \mathcal{S}} (1-T_i) \bs{C}_i \right\}\\
        & \quad \ 
        - \bs{\gamma}^\top 
        \left\{
        \frac{1}{n}\sum_{i\in \mathcal{S}}T_i (\bs{E}_i - \bar{\bs{E}})
        + \frac{1}{n}\sum_{i\in \mathcal{S}} (1-T_i) (\bs{E}_i - \bar{\bs{E}})
        \right\}\\
        & = \hat{\tau} 
        - \bs{\beta}^\top \hat{\bs{\tau}}_{\bs{C}} - \bs{\gamma}^\top \hat{\bs{\delta}}_{\bs{E}}, 
    \end{align*}
    which is the same as the regression-adjusted estimator $\hat{\tau}(\bs{\beta}, \bs{\gamma})$ in \eqref{eq:reg}. 
    Moreover, we can verify that, under Condition \ref{cond:fp_analysis}, the adjusted potential outcomes $Y(1; \bs{\beta}, \bs{\gamma})$ and $Y(0; \bs{\beta}, \bs{\gamma})$ and covariates $\bs{X}$ and $\bs{W}$ must also satisfy Condition \ref{cond:fp}. 
    Thus, 
    from Theorem \ref{thm:dist}, 
    the difference-in-means for the adjusted outcomes has the following asymptotic distribution under ReSEM: 
    \begin{align*}
    & \quad \ \sqrt{n} \left\{ \hat{\tau}(\bs{\beta}, \bs{\gamma}) - \tau \right\} \mid  \text{ReSEM} 
    \nonumber
    \\
    & \ \dot\sim\  
    V_{\tau\tau}^{1/2}(\bs{\beta}, \bs{\gamma}) \left(  \sqrt{1-R_S^2(\bs{\beta}, \bs{\gamma}) - R_T^2(\bs{\beta}, \bs{\gamma})} \cdot \varepsilon
    + \sqrt{R_S^2(\bs{\beta}, \bs{\gamma})} \cdot L_{J,a_S}
    +
    \sqrt{R_T^2(\bs{\beta}, \bs{\gamma})} \cdot L_{K,a_T} 
    \right),
    \end{align*}
    where $(\varepsilon, L_{J,a_S}, L_{K,a_T} )$ are defined the same as in Theorem \ref{thm:dist}, and 
    $V_{\tau\tau}(\bs{\beta}, \bs{\gamma})$, $R_S^2(\bs{\beta}, \bs{\gamma})$ and $R_T^2(\bs{\beta}, \bs{\gamma})$ are defined in \eqref{eq:V_tau_reg} and \eqref{eq:R2_reg}. 
    Therefore, Theorem \ref{thm:dist_reg_general} holds. 
\end{proof}

\begin{proof}[\bf Proof of Corollary \ref{cor:dist_reg_general_equ}]
    We first simplify $V_{\tau\tau}(\bs{\beta}, \bs{\gamma})R_T^2(\bs{\beta}, \bs{\gamma})$ and $V_{\tau\tau}(\bs{\beta}, \bs{\gamma})R_S^2(\bs{\beta}, \bs{\gamma})$. 
    From Proposition \ref{prop:R2}, $R_T^2(\bs{\beta}, \bs{\gamma})$ and $R_S^2(\bs{\beta}, \bs{\gamma})$ are equivalently the squared multiple correlations between the difference-in-means estimator based on the adjusted outcome (i.e., $\hat{\tau}(\bs{\beta}, \bs{\gamma})$) and the difference-in-means of covariates $\hat{\bs{\tau}}_{\bs{X}}$ and $\hat{\bs{\delta}}_{\bs{W}}$ under the CRSE. 
    Thus, by definition, we have
    \begin{align}\label{eq:VR2_proof1}
        n^{-1} V_{\tau\tau}(\bs{\beta}, \bs{\gamma})R_T^2(\bs{\beta}, \bs{\gamma})
        & = 
        \Cov\big\{\hat{\tau}(\bs{\beta}, \bs{\gamma}), \hat{\bs{\tau}}_{\bs{X}}\big\}
        \Cov^{-1}( \hat{\bs{\tau}}_{\bs{X}} )
        \Cov\big\{\hat{\bs{\tau}}_{\bs{X}}, \hat{\tau}(\bs{\beta}, \bs{\gamma})\big\}, 
        \nonumber
        \\
        n^{-1} V_{\tau\tau}(\bs{\beta}, \bs{\gamma})R_S^2(\bs{\beta}, \bs{\gamma})
        & = 
        \Cov\big\{\hat{\tau}(\bs{\beta}, \bs{\gamma}), \hat{\bs{\delta}}_{\bs{W}}\big\}
        \Cov^{-1}( \hat{\bs{\delta}}_{\bs{W}} )
        \Cov\big\{\hat{\bs{\delta}}_{\bs{W}}, \hat{\tau}(\bs{\beta}, \bs{\gamma})\big\}.
    \end{align}
    By the same logic as Lemma \ref{lemma:cov_crse}, 
    under the CRSE, 
    $\hat{\bs{\delta}}_{\bs{E}}$ is uncorrelated with $\hat{\bs{\tau}}_{\bs{X}}$, 
    and 
    $\hat{\bs{\tau}}_{\bs{C}}$ is uncorrelated with $\hat{\bs{\delta}}_{\bs{W}}$.  
    From \eqref{eq:reg}, we then have 
    \begin{align}\label{eq:VR2_proof2}
        \Cov\big\{\hat{\tau}(\bs{\beta}, \bs{\gamma}), \hat{\bs{\tau}}_{\bs{X}}\big\}
        & = 
        \Cov\big(\hat{\tau} - \bs{\beta}^\top \hat{\bs{\tau}}_{\bs{C}} 
        -
        \bs{\gamma}^\top \hat{\bs{\delta}}_{\bs{E}}, \hat{\bs{\tau}}_{\bs{X}}\big)
        =
        \Cov\big(\hat{\tau} - \bs{\beta}^\top \hat{\bs{\tau}}_{\bs{C}}, \hat{\bs{\tau}}_{\bs{X}}\big)
        = \Cov\big\{\hat{\tau}(\bs{\beta}, \bs{0}), \hat{\bs{\tau}}_{\bs{X}}\big\}, 
        \nonumber
        \\
        \Cov\big\{\hat{\tau}(\bs{\beta}, \bs{\gamma}), \hat{\bs{\delta}}_{\bs{W}}\big\}
        & = 
        \Cov\big( \hat{\tau} - \bs{\beta}^\top \hat{\bs{\tau}}_{\bs{C}} 
        -
        \bs{\gamma}^\top \hat{\bs{\delta}}_{\bs{E}}, \hat{\bs{\delta}}_{\bs{W}}\big)
        = 
        \Cov\big( \hat{\tau} 
        -
        \bs{\gamma}^\top \hat{\bs{\delta}}_{\bs{E}}, \hat{\bs{\delta}}_{\bs{W}}\big)
        = \Cov\big\{\hat{\tau}(\bs{0}, \bs{\gamma}), \hat{\bs{\delta}}_{\bs{W}}\big\}. 
    \end{align}
    \eqref{eq:VR2_proof1} and \eqref{eq:VR2_proof2} then imply that 
    \begin{align*}
        n^{-1} V_{\tau\tau}(\bs{\beta}, \bs{\gamma})R_T^2(\bs{\beta}, \bs{\gamma})
        & = 
        \Cov\big\{\hat{\tau}(\bs{\beta}, \bs{0}), \hat{\bs{\tau}}_{\bs{X}}\big\}
        \Cov^{-1}( \hat{\bs{\tau}}_{\bs{X}} )
        \Cov\big\{\hat{\bs{\tau}}_{\bs{X}}, \hat{\tau}(\bs{\beta}, \bs{0})\big\} = 
        n^{-1} V_{\tau\tau}(\bs{\beta}, \bs{0})R_T^2(\bs{\beta}, \bs{0}),
        \nonumber
        \\
        n^{-1} V_{\tau\tau}(\bs{\beta}, \bs{\gamma})R_S^2(\bs{\beta}, \bs{\gamma})
        & = 
        \Cov\big\{\hat{\tau}(\bs{0}, \bs{\gamma}), \hat{\bs{\delta}}_{\bs{W}}\big\}
        \Cov^{-1}( \hat{\bs{\delta}}_{\bs{W}} )
        \Cov\big\{\hat{\bs{\delta}}_{\bs{W}}, \hat{\tau}(\bs{0}, \bs{\gamma})\big\} = 
        n^{-1} V_{\tau\tau}(\bs{0}, \bs{\gamma})R_S^2(\bs{0}, \bs{\gamma}), 
    \end{align*}
	i.e., 
	$
	V_{\tau\tau}(\bs{\beta}, \bs{\gamma})R_T^2(\bs{\beta}, \bs{\gamma}) = V_{\tau\tau}(\bs{\beta}, \bs{0})R_T^2(\bs{\beta}, \bs{0})$ 
	and  $V_{\tau\tau}(\bs{\beta}, \bs{\gamma})R_S^2(\bs{\beta}, \bs{\gamma}) = V_{\tau\tau}(\bs{0}, \bs{\gamma})R_S^2(\bs{0}, \bs{\gamma}).$ 
    
    We then give an equivalent decomposition of $V_{\tau\tau}(\bs{\beta}, \bs{\gamma})$. 
    By definition,   
    \begin{align*}
        V_{\tau\tau}(\bs{\beta}, \bs{\gamma})
        & = 
        n\Var\big( \hat{\tau} - \bs{\beta}^\top \hat{\bs{\tau}}_{\bs{C}} - \bs{\gamma}^\top \hat{\bs{\delta}}_{\bs{E}} \big)
        = 
        n\Var\left\{ \big( \hat{\tau} - \bs{\beta}^\top  \hat{\bs{\tau}}_{\bs{C}} \big) + \big(\hat{\tau} - \bs{\gamma}^\top \hat{\bs{\delta}}_{\bs{E}}\big) - \hat{\tau} \right\}\\
        & = V_{\tau\tau}(\bs{\beta}, \bs{0}) + V_{\tau\tau}(\bs{0}, \bs{\gamma}) + V_{\tau\tau} 
        + 2n\Cov\big( \hat{\tau} - \bs{\beta}^\top  \hat{\bs{\tau}}_{\bs{C}}, \hat{\tau} - \bs{\gamma}^\top \hat{\bs{\delta}}_{\bs{E}}\big)
        - 2n\Cov\big( \hat{\tau} - \bs{\beta}^\top  \hat{\bs{\tau}}_{\bs{C}}, \hat{\tau} \big)
        \\
        & \quad \
        - 2n\Cov\big( \hat{\tau} - \bs{\gamma}^\top \hat{\bs{\delta}}_{\bs{E}}, \hat{\tau} \big)
    \end{align*}
    Note that $\Cov( \hat{\bs{\tau}}_{\bs{C}}, \hat{\bs{\delta}}_{\bs{E}}) = \bs{0}$ under the CRSE. We then have 
    \begin{align*}
        & \quad \ \Cov\big( \hat{\tau} - \bs{\beta}^\top  \hat{\bs{\tau}}_{\bs{C}}, \hat{\tau} - \bs{\gamma}^\top \hat{\bs{\delta}}_{\bs{E}}\big)
        - \Cov\big( \hat{\tau} - \bs{\beta}^\top  \hat{\bs{\tau}}_{\bs{C}}, \hat{\tau} \big)
        - \Cov\big( \hat{\tau} - \bs{\gamma}^\top \hat{\bs{\delta}}_{\bs{E}}, \hat{\tau} \big)
        \\
        & = 
        \Cov\big( \hat{\tau} - \bs{\beta}^\top  \hat{\bs{\tau}}_{\bs{C}}, - \bs{\gamma}^\top \hat{\bs{\delta}}_{\bs{E}}\big)
        - \Cov\big( \hat{\tau} - \bs{\gamma}^\top \hat{\bs{\delta}}_{\bs{E}}, \hat{\tau} \big)
        = 
        \Cov\big( \hat{\tau}, - \bs{\gamma}^\top \hat{\bs{\delta}}_{\bs{E}}\big)
        - \Cov\big( \hat{\tau} - \bs{\gamma}^\top \hat{\bs{\delta}}_{\bs{E}}, \hat{\tau} \big)
        \\
        & = \Cov(\hat{\tau}, -\hat{\tau}) = -n^{-1}V_{\tau\tau}. 
    \end{align*}
    This further implies that
    \begin{align}\label{eq:simp_form_V_reg}
         V_{\tau\tau}(\bs{\beta}, \bs{\gamma})
         = V_{\tau\tau}(\bs{\beta}, \bs{0}) + V_{\tau\tau}(\bs{0}, \bs{\gamma}) + V_{\tau\tau} - 2V_{\tau\tau} 
         = V_{\tau\tau}(\bs{\beta}, \bs{0}) + V_{\tau\tau}(\bs{0}, \bs{\gamma}) - V_{\tau\tau}. 
    \end{align}
    
    From the above, the asymptotic distribution in \eqref{eq:dist_reg_general} has the following equivalent forms:
    \begin{align*}
        & \quad \ V_{\tau\tau}^{1/2}(\bs{\beta}, \bs{\gamma}) \left(  \sqrt{1-R_T^2(\bs{\beta}, \bs{\gamma})-R_S^2(\bs{\beta}, \bs{\gamma}) } \cdot \varepsilon
        + \sqrt{R_S^2(\bs{\beta}, \bs{\gamma})} \cdot L_{J,a_S}
        +
        \sqrt{R_T^2(\bs{\beta}, \bs{\gamma})} \cdot L_{K,a_T} 
        \right)
        \\
        & \sim 
        \sqrt{V_{\tau\tau}(\bs{\beta}, \bs{\gamma}) - V_{\tau\tau}(\bs{\beta}, \bs{\gamma}) R_S^2(\bs{\beta}, \bs{\gamma})  - V_{\tau\tau}(\bs{\beta}, \bs{\gamma}) R_T^2(\bs{\beta}, \bs{\gamma}) } \cdot \varepsilon
        \\
        & \quad \ 
        + \sqrt{V_{\tau\tau}(\bs{\beta}, \bs{\gamma})R_S^2(\bs{\beta}, \bs{\gamma})} \cdot L_{J,a_S}
        +
        \sqrt{V_{\tau\tau}(\bs{\beta}, \bs{\gamma})R_T^2(\bs{\beta}, \bs{\gamma})} \cdot L_{K,a_T} 
        \\
        & \sim 
        \sqrt{V_{\tau\tau}(\bs{0}, \bs{\gamma}) + V_{\tau\tau}(\bs{\beta}, \bs{0}) - V_{\tau\tau} - V_{\tau\tau}(\bs{0}, \bs{\gamma}) R_S^2(\bs{0}, \bs{\gamma})  - V_{\tau\tau}(\bs{\beta}, \bs{0}) R_T^2(\bs{\beta}, \bs{0}) } \cdot \varepsilon
        \\
        & \quad \ 
        + \sqrt{V_{\tau\tau}(\bs{0}, \bs{\gamma})R_S^2(\bs{0}, \bs{\gamma})} \cdot L_{J,a_S}
        +
        \sqrt{V_{\tau\tau}(\bs{\beta}, \bs{0})R_T^2(\bs{\beta}, \bs{0})} \cdot L_{K,a_T} 
        \\
        & \sim 
        \sqrt{
        V_{\tau\tau}(\bs{0}, \bs{\gamma})\{1 - R_S^2(\bs{0}, \bs{\gamma})\} + 
        V_{\tau\tau}(\bs{\beta}, \bs{0})\{1 - R_T^2(\bs{\beta}, \bs{0}) \}
        -V_{\tau\tau}
        } \cdot \varepsilon
        \\
        & \quad \ 
        + \sqrt{ V_{\tau\tau}(\bs{0}, \bs{\gamma}) R_S^2(\bs{0}, \bs{\gamma})} \cdot L_{J,a_S}
        +
        \sqrt{V_{\tau\tau}(\bs{\beta}, \bs{0})R_T^2(\bs{\beta}, \bs{0})} \cdot L_{K,a_T}. 
    \end{align*}
    Therefore, Corollary \ref{cor:dist_reg_general_equ} holds. 
\end{proof}

\begin{proof}[\bf Comment on the coefficients $\tilde{\bs{\beta}}$ and $\tilde{\bs{\gamma}}$]
From Lemma \ref{lemma:proj_coef}, $\tilde{\bs{\beta}}$ and $\tilde{\bs{\gamma}}$ 
defined in Section \ref{sec:R2_proj_coef}
are the linear projection coefficients of $\hat{\tau}$ on $\hat{\bs{\tau}}_{\bs{C}}$ and $\hat{\bs{\delta}}_{\bs{E}}$, respectively, under the CRSE.
\end{proof}

\begin{proof}[\bf Comment on the asymptotic distribution of $\hat{\tau}(\bs{\beta}, \bs{\gamma})$ when $\bs{W} \subset \bs{E}$ or $\bs{X} \subset \bs{C}$]
First, we consider the case in which $\bs{W} \subset \bs{E}$. 
From 
Theorem \ref{thm:dist_reg_general}, Corollary \ref{cor:dist_reg_general_equ} and Lemma \ref{lemma:simp_W_in_E}, 
under ReSEM, 
the asymptotic distribution of $\sqrt{n}\{\hat{\tau}(\bs{\beta}, \bs{\gamma})-\tau\}$ has the following equivalent forms:
\begin{align}\label{eq:dist_W_in_E_proof}
    & \quad \ \sqrt{n} \left\{ \hat{\tau}(\bs{\beta}, \bs{\gamma}) - \tau \right\} \mid  \text{ReSEM}
    \nonumber
    \\
    & \ \dot\sim\  
    \sqrt{
    (1-f) (\bs{\gamma} - \tilde{\bs{\gamma}})^\top \bs{S}^2_{\bs{E} \setminus \bs{W}} (\bs{\gamma} - \tilde{\bs{\gamma}})
    + 
    V_{\tau\tau}(\bs{\beta}, \bs{0})\{1 - R_T^2(\bs{\beta}, \bs{0}) \}
    -V_{\tau\tau}R_E^2
    } \cdot \varepsilon
    \nonumber
    \\
    & \quad \ + \sqrt{(1-f) (\bs{\gamma} - \tilde{\bs{\gamma}})^\top \bs{S}^2_{\bs{E} \mid \bs{W}} (\bs{\gamma} - \tilde{\bs{\gamma}})} \cdot L_{J,a_S}
    +
    \sqrt{V_{\tau\tau}(\bs{\beta}, \bs{0})R_T^2(\bs{\beta}, \bs{0})} \cdot L_{K,a_T}.
\end{align}

Second, we consider the case in which $\bs{X} \subset \bs{C}$. 
From 
Theorem \ref{thm:dist_reg_general}, Corollary \ref{cor:dist_reg_general_equ} and Lemma \ref{lemma:simp_X_in_C}, 
under ReSEM, 
the asymptotic distribution of $\sqrt{n}\{\hat{\tau}(\bs{\beta}, \bs{\gamma})-\tau\}$ has the following equivalent forms:
\begin{align}\label{eq:dist_X_in_C_proof}
    & \quad \ \sqrt{n} \left\{ \hat{\tau}(\bs{\beta}, \bs{\gamma}) - \tau \right\} \mid  \text{ReSEM}  
    \nonumber
    \\
    & \ \dot\sim\  
    \sqrt{
    V_{\tau\tau}(\bs{0}, \bs{\gamma})\{1 - R_S^2(\bs{0}, \bs{\gamma})\}
    + (r_1r_0)^{-1} (\bs{\beta} - \tilde{\bs{\beta}})^\top 
    \bs{S}_{\bs{C}\setminus \bs{X}}^2 (\bs{\beta} - \tilde{\bs{\beta}})
    -V_{\tau\tau}R^2_C
    } \cdot \varepsilon
    \nonumber
    \\
    & \quad \ + \sqrt{ V_{\tau\tau}(\bs{0}, \bs{\gamma}) R_S^2(\bs{0}, \bs{\gamma})} \cdot L_{J,a_S}
    +
    \sqrt{(r_1r_0)^{-1}
    (\bs{\beta} - \tilde{\bs{\beta}})^\top \bs{S}_{\bs{C}\mid \bs{X}}^2 (\bs{\beta} - \tilde{\bs{\beta}})} \cdot L_{K,a_T}. 
\end{align}

Third, we consider the case in which $\bs{W} \subset \bs{E}$ and $\bs{X} \subset \bs{C}$. 
From Theorem \ref{thm:dist_reg_general}, 
Corollary \ref{cor:dist_reg_general_equ}
and Lemmas \ref{lemma:simp_W_in_E} and \ref{lemma:simp_X_in_C},
\begin{align}\label{eq:dist_WX_in_EC_proof}
    & \quad \ \sqrt{n} \left\{ \hat{\tau}(\bs{\beta}, \bs{\gamma}) - \tau \right\} \mid  \text{ReSEM}  
    \nonumber
    \\
    & \ \dot\sim\  
    \sqrt{
    V_{\tau\tau} (1-R_E^2-R_C^2) + 
    (1-f) (\bs{\gamma} - \tilde{\bs{\gamma}})^\top \bs{S}^2_{\bs{E} \setminus \bs{W}} (\bs{\gamma} - \tilde{\bs{\gamma}})
    + 
    (r_1 r_0)^{-1} (\bs{\beta} - \tilde{\bs{\beta}})^\top \bs{S}^2_{\bs{C} \setminus \bs{X}} (\bs{\beta} - \tilde{\bs{\beta}})
    } \cdot \varepsilon
    \nonumber
    \\
    & \quad \ + \sqrt{ (1-f) (\bs{\gamma} - \tilde{\bs{\gamma}})^\top \bs{S}^2_{\bs{E} \mid \bs{W}} (\bs{\gamma} - \tilde{\bs{\gamma}}) } \cdot L_{J,a_S}
    +
    \sqrt{(r_1 r_0)^{-1} (\bs{\beta} - \tilde{\bs{\beta}})^\top \bs{S}^2_{\bs{C} \mid \bs{X}} (\bs{\beta} - \tilde{\bs{\beta}})} \cdot L_{K,a_T}. 
\end{align}
\end{proof}

\begin{proof}[\bf Proof of Theorem \ref{thm:optimal}]
    First, we consider the case in which $\bs{W} \subset \bs{E}$. 
    From the equivalent forms of the asymptotic distribution of $\sqrt{n}\{\hat{\tau}(\bs{\beta}, \bs{\gamma})-\tau\}$ in \eqref{eq:dist_W_in_E_proof}, 
    only the coefficients of $\varepsilon$ and $L_{J, a_S}$ depend on $\bs{\gamma}$, and both of them are minimized at $\bs{\gamma} = \tilde{\bs{\gamma}}$. 
    Therefore, from Lemma \ref{lemma:linear_com_eps_L}, 
    for any given $\bs{\beta}$, 
    $\hat{\tau}(\bs{\beta}, \tilde{\bs{\gamma}})$ is $\mathcal{S}$-optimal among all regression-adjusted estimators of form $\hat{\tau}(\bs{\beta}, \bs{\gamma})$. 
    
    Second, we consider the case in which $\bs{X} \subset \bs{C}$. 
    From the equivalent forms of the asymptotic distribution of $\sqrt{n}\{\hat{\tau}(\bs{\beta}, \bs{\gamma})-\tau\}$ in \eqref{eq:dist_X_in_C_proof}, 
    only the coefficients of $\varepsilon$ and $L_{K, a_T}$ depend on $\bs{\beta}$, and both of them are minimized at $\bs{\beta} = \tilde{\bs{\beta}}$. 
    Therefore, from Lemma \ref{lemma:linear_com_eps_L}, 
    for any given $\bs{\gamma}$, 
    $\hat{\tau}(\tilde{\bs{\beta}}, \bs{\gamma})$ is $\mathcal{S}$-optimal among all regression-adjusted estimators of form $\hat{\tau}(\bs{\beta}, \bs{\gamma})$. 
    
    Third, we consider the case in which both $\bs{W} \subset \bs{E}$ and $\bs{X} \subset \bs{C}$. 
    From the equivalent forms of the asymptotic distribution of $\sqrt{n}\{\hat{\tau}(\bs{\beta}, \bs{\gamma})-\tau\}$ in \eqref{eq:dist_WX_in_EC_proof}, 
    the coefficients of $\varepsilon$, $L_{J, a_S}$ and $L_{K, a_T}$ are minimized at $(\bs{\beta}, \bs{\gamma}) = (\tilde{\bs{\beta}}, \tilde{\bs{\gamma}})$. 
    Therefore, from Lemma \ref{lemma:linear_com_eps_L}, 
    $\hat{\tau}(\tilde{\bs{\beta}}, \tilde{\bs{\gamma}})$ is $\mathcal{S}$-optimal among all regression-adjusted estimators of form $\hat{\tau}(\bs{\beta}, \bs{\gamma})$, with the following asymptotic distribution:
    \begin{align*}
    	\sqrt{n} \left\{ \hat{\tau}(\bs{\beta}, \bs{\gamma}) - \tau \right\} \mid  \text{ReSEM} 
    	& \ \dot\sim\  
    	\sqrt{
    		V_{\tau\tau} (1-R_E^2-R_C^2) 
    	} \cdot \varepsilon. 
    \end{align*}

	From the above, Theorem \ref{thm:optimal} holds. 
\end{proof}

\begin{proof}[{\bf Proof of Corollary \ref{cor:gain_analysis}}]
    First, we study the percentage reduction in asymptotic variance. 
	From Lemma \ref{lemma:LKa} and Theorems \ref{thm:dist} and \ref{thm:optimal}, the asymptotic variances of $\hat{\tau}$ and $\hat{\tau}(\bs{\tilde\beta}, \bs{\tilde\gamma})$ 
	under ReSEM are, respetively, 
	\begin{align*}
	    \Var_{\text{a}}\{\sqrt{n} (\hat\tau - \tau) \mid  \text{ReSEM} \}
	    & = V_{\tau \tau}\{ 1 -(1-v_{J,a_S})R_S^2-(1- v_{K,a_T})R_T^2 \}, 
	\end{align*}
	and 
	$$
	\Var_{\text{a}}\Big[ \sqrt{n} \big\{ \hat{\tau}(\bs{\tilde\beta}, \bs{\tilde\gamma} - \tau \big\}
	\mid \text{ReSEM}
	\Big]=
	V_{\tau\tau} ( 1 - R_E^2 - R_C^2 ).
	$$
	These imply that the percentage reduction in asymptotic variance is
	\begin{align*}
	1 - 
	\frac{\Var_{\text{a}}\Big[ \sqrt{n} \big\{ \hat{\tau}(\bs{\tilde\beta}, \bs{\tilde\gamma} - \tau \big\}
	\mid \text{ReSEM}
	\Big]}{\Var_{\text{a}}\{\sqrt{n} (\hat\tau - \tau) \mid  \text{ReSEM} \}}
	& = 1 - 
	\frac{
		V_{\tau\tau} ( 1 - R_E^2 - R_C^2 )
	}{
		V_{\tau \tau}\{ 1 -(1-v_{J,a_S})R_S^2-(1- v_{K,a_T})R_T^2 \}
	}
	\\
	& = 
	\frac{
		R_E^2 -  R_S^2 + R_C^2 - R_T^2  + v_{J, a_S} R_S^2 + v_{K, a_T} R_T^2 
	}{
		1 -(1-v_{J,a_S})R_S^2-(1- v_{K,a_T})R_T^2
	}. 
	\end{align*}
	Because $\bs{W} \subset \bs{E}$ and $\bs{X} \subset \bs{C}$, by definition, we must have 
	$R_E^2 \ge R^2_S$ and $R_C^2 \ge R^2_T$. We can verify that this percentage reduction is nonnegative and nondecreasing in $R_E^2$ and $R_C^2$. 
	
	Second, we study the percentage reduction in length of asymptotic $1-\alpha$ symmetric quantile range.  
	From Theorem \ref{thm:optimal}, the length of asymptotic $1-\alpha$ symmetric quantile range for the optimal adjusted estimator $\sqrt{n}\{\hat{\tau}(\tilde{\bs{\beta}}, \tilde{\bs{\gamma}}) - \tau \}$ under ReSEM is 
	$
	2 V_{\tau\tau}^{1/2}\sqrt{1-R_E^2-R_C^2} \cdot z_{1-\alpha/2}.
	$
	From Theorem \ref{thm:dist}, 
	the length of asymptotic $1-\alpha$ symmetric quantile range for the unadjusted estimator $\sqrt{n}(\hat{\tau}-\tau)$ is $2V_{\tau\tau}^{1/2} \nu_{1-\alpha/2}(R_S^2,R_T^2)$.
	These imply that  the percentage reduction in asymptotic $1-\alpha$ symmetric quantile range is 
	$$	
	1 - \frac{V_{\tau\tau}^{1/2}\sqrt{1-R_E^2-R_C^2} \cdot z_{1-\alpha/2}}{
		V_{\tau\tau}^{1/2} \nu_{1-\alpha/2}(R_S^2,R_T^2)
	}
	= 
	1 - \sqrt{1-R_E^2-R_C^2} \cdot \frac{z_{1-\alpha/2}}	{\nu_{1-\alpha/2}(R_S^2, R_T^2)}
	$$
	By the optimality of the regression-adjusted estimator $\hat{\tau}(\tilde{\bs{\beta}}, \tilde{\bs{\gamma}})$ in Theorem \ref{thm:optimal}, this percentage reduction is nonnegative. Furthermore, we can verify that this percentage reduction is nondecreasing in $R_E^2$ and $R_C^2$. 
	
	From the above, Corollary \ref{cor:gain_analysis} holds. 
\end{proof}

\section{Asymptotic properties for special cases}\label{sec:proof_special}

\subsection{Asymptotic properties for the CRSE}

\begin{proof}[{\bf Proof of Corollary \ref{cor:crse}}]
    Note that the CRSE is essentially ReSEM with $a_S=a_T=\infty$. 
	In Theorem \ref{thm:dist}, let $a_S=a_T=\infty$. 
	Then $L_{J,a_S}$ and $L_{K,a_T}$ become standard Gaussian random variables. Because $(\varepsilon, L_{J,a_S}, L_{K,a_T})$ are mutually independent and all standard Gaussian distributed, 
	we can derive that 
	$ \sqrt{1-R_S^2 - R_T^2} \cdot \varepsilon
	+ \sqrt{R_S^2} \cdot L_{J,a_S}
	+ \sqrt{R_T^2} \cdot L_{K,a_T} 
	\sim \mathcal{N}(0,1).$
	Consequently, 
	from Theorem \ref{thm:dist}, we have 
	$
	\sqrt{n} (\hat\tau - \tau)  \  \dot\sim\  V_{\tau\tau}^{1/2} \cdot \varepsilon.
	$
	Therefore, Corollary \ref{cor:crse} holds. 
\end{proof}

\begin{proof}[{\bf Proof of Corollary \ref{cor:reg_crse}}]
	Note that the CRSE is essentially ReSEM with $\bs{W}=\bs{X} = \emptyset$ and $a_S=a_T=\infty$. 
	In \eqref{eq:dist_WX_in_EC_proof}, let $a_S=a_T=\infty$. 
	Then both $L_{J,a_S}$ and $L_{K,a_T}$ become standard Gaussian random variable. 
	Moreover, because $(\varepsilon, L_{J,\infty}, L_{K,\infty})$ are mutually independent, we have
	\begin{align*}
		& \quad \ \sqrt{n} \left\{ \hat{\tau}(\bs{\beta}, \bs{\gamma}) - \tau \right\} \\
		&\dot\sim
		\sqrt{
			V_{\tau\tau} (1-R_E^2-R_C^2) + 
			(1-f) (\bs{\gamma} - \tilde{\bs{\gamma}})^\top \bs{S}^2_{\bs{E} \setminus \bs{W}} (\bs{\gamma} - \tilde{\bs{\gamma}})
			+ 
			(r_1 r_0)^{-1} (\bs{\beta} - \tilde{\bs{\beta}})^\top \bs{S}^2_{\bs{C} \setminus \bs{X}} (\bs{\beta} - \tilde{\bs{\beta}})
		} \cdot \varepsilon
		\nonumber
		\\
		& \quad \ + \sqrt{ (1-f) (\bs{\gamma} - \tilde{\bs{\gamma}})^\top \bs{S}^2_{\bs{E} \mid \bs{W}} (\bs{\gamma} - \tilde{\bs{\gamma}}) } \cdot L_{J,a_S}
		+
		\sqrt{(r_1 r_0)^{-1} (\bs{\beta} - \tilde{\bs{\beta}})^\top \bs{S}^2_{\bs{C} \mid \bs{X}} (\bs{\beta} - \tilde{\bs{\beta}})} \cdot L_{K,a_T}\\
		& \sim 
		\sqrt{ V_{\tau\tau} (1 - R_E^2 - R_C^2 ) 
			+ (1-f) (\bs{\gamma} - \tilde{\bs{\gamma}})^\top \bs{S}^2_{\bs{E}} (\bs{\gamma} - \tilde{\bs{\gamma}})	
			+  (r_1 r_0)^{-1} (\bs{\beta} - \tilde{\bs{\beta}})^\top \bs{S}^2_{\bs{C}} (\bs{\beta} - \tilde{\bs{\beta}}) 
		}
		\cdot \varepsilon. 
	\end{align*}
	The optimal regression-adjusted estimator is then attainable at 
	$(\bs{\beta}, \bs{\gamma}) = (\tilde{\bs{\beta}}, \tilde{\bs{\gamma}})$, with the following asymptotic distribution: 
	\begin{align}
		\sqrt{n} \big\{ \hat{\tau}( \tilde{\bs{\beta}}, \tilde{\bs{\gamma}}) - \tau \big\} 
		& \ \dot\sim \ 
		V_{\tau\tau}^{1/2}\sqrt{1 - R_E^2 - R_C^2}
		\cdot \varepsilon. 
	\end{align}
	Therefore, Corollary \ref{cor:reg_crse} holds. 
\end{proof}

\subsection{Asymptotic properties for the ReM}

\begin{proof}[{\bf Proof of Corollary \ref{cor:rem}}]
	When $f=1$,  
	all units are sampled to enroll the experiment, and  thus there is essentially no random sampling stage. 
	Recall the discussion at the beginning of Section \ref{sec:lemma_crse}. 
	We can then relax Condition \ref{cond:fp}(i) when studying the asymptotic properties of ReM. 
	By the same logic as Theorem \ref{thm:dist}, we can derive that 
	$
	\sqrt{n} (\hat\tau - \tau) \mid   M_T \le a_T  \ \dot\sim \ V_{\tau\tau}^{1/2} (  \sqrt{1- R_T^2} \cdot \varepsilon
	+
	\sqrt{R_T^2} \cdot L_{K,a_T} 
	),
	$
	i.e., Corollary \ref{cor:rem} holds. 
	Note that when $f=1$, $R_S^2 =0$ from Proposition \ref{prop:R2}. Therefore, the asymptotic distribution for ReM has the same form as that for ReSEM with $f=1$.
\end{proof}

\begin{proof}[{\bf Proof of Corollary \ref{cor:rem_reg}}]
	When $f=1$,
	all units are sampled to enroll the experiment, and  thus there is essentially no random sampling stage. 
	Note that $R_E^2 =0$ from Proposition \ref{prop:R2}. 
    By the same logic as \eqref{eq:dist_WX_in_EC_proof}, we can derive that
	\begin{align*}
		& \quad \ \sqrt{n} \left\{ \hat{\tau}(\bs{\beta}, \bs{\gamma}) - \tau \right\} \mid M_T \le a_T 
		\\
		& \dot\sim \ 
		\sqrt{ V_{\tau\tau} (1 - R_C^2) +  (r_1 r_0)^{-1} (\bs{\beta} - \tilde{\bs{\beta}})^\top \bs{S}^2_{\bs{C}\setminus \bs{X}} (\bs{\beta} - \tilde{\bs{\beta}}) }
		\cdot \varepsilon 
		+ 
		\sqrt{ (r_1 r_0)^{-1} (\bs{\beta} - \tilde{\bs{\beta}})^\top \bs{S}^2_{\bs{C} \mid \bs{X}} (\bs{\beta} - \tilde{\bs{\beta}})} \cdot L_{K, a_T},
	\end{align*}
	From Lemma \ref{lemma:linear_com_eps_L}, 
	the optimal regression-adjusted estimator is attainable at $\bs{\beta}  = \tilde{\bs{\beta}}$, with the following asymptotic distribution: 
	\begin{align*}
		\sqrt{n} \big\{ \hat{\tau}( \tilde{\bs{\beta}}, \bs{\gamma}) - \tau \big\} \mid M_T \le a_T 
		\  \dot\sim \ 
		\sqrt{ V_{\tau\tau} (1 - R_C^2) }
		\cdot \varepsilon. 
	\end{align*}
	Therefore,  Corollary \ref{cor:rem_reg} holds. 
\end{proof}

\subsection{Asymptotic properties for rejective sampling}

\begin{proof}[{\bf Proof of Corollary \ref{cor:rej_sam}}]
	We prove Corollary \ref{cor:rej_sam} using Theorem \ref{thm:dist}, by coupling the rejective sampling with ReSEM under a careful construction of the potential outcomes. 
	Specifically, 
	we construct the potential outcomes as  $Y_i(1) = r_1 y_i$ and $Y_i(0) = -r_0 y_i$, 
	where we choose $r_1 = [n/2]/n$, $r_0 = 1-r_1$, and $[n/2]$ denotes the smallest integer no less than $n/2$.  
	We also choose the sampling indicator vector $\bs{Z}$ to be the same under both rejective sampling and ReSEM, and choose $a_T = \infty$ for ReSEM. 
	We then have $\tau_i = Y_i(1) - Y_i(0) = y_i$, $\tau = N^{-1} \sum_{i=1}^N y_i$, and 
	\begin{align*}
		\hat{\tau} & =  
		\frac{1}{n_1} \sum_{i=1}^N Z_i T_i Y_i  - \frac{1}{n_0} \sum_{i=1}^N Z_i(1-T_i) Y_i
		=
		\frac{1}{nr_1} \sum_{i=1}^N Z_i T_i Y_i(1)  - \frac{1}{nr_0} \sum_{i=1}^N Z_i(1-T_i) Y_i(0)
		\\
		& = \frac{1}{n}\sum_{i=1}^N Z_i T_i y_i + \frac{1}{n} \sum_{i=1}^N Z_i(1-T_i) y_i\\
		& = \frac{1}{n}\sum_{i=1}^N Z_i y_i \equiv \bar{y}_{\mathcal{S}}.  
	\end{align*}
	This implies that 
	\begin{align*}
		\sqrt{n}\left(\hat{\tau} - \tau \right) \mid M_S \le a_S
		& \ \sim \ 
		\sqrt{n}\left(\bar{y}_S - \bar{y} \right) \mid M_S \le a_S. 
	\end{align*}
	From Theorem \ref{thm:dist}, this further implies that, 
	when Condition \ref{cond:fp} with $\bs{X} = \bs{W}$ holds 
	and $a_T = \infty$, 
	\begin{align}\label{eq:diff_rej_proof}
		\sqrt{n}\left(\bar{y}_S - \bar{y} \right) \mid M_S \le a_S  \ & \dot\sim\  V_{\tau\tau}^{1/2} \left(  \sqrt{1-R_S^2 - R_T^2} \cdot \varepsilon
		+ \sqrt{R_S^2} \cdot L_{J,a_S}
		+
		\sqrt{R_T^2} \cdot L_{K,a_T} 
		\right). 
	\end{align}
	
	First, we simplify \eqref{eq:diff_rej_proof}. For the constructed potential outcomes $(Y_i(1), Y_i(0)) = (r_1y_i, -r_0 y_i)$, 
	$V_{\tau\tau}$ and the squared multiple correlations reduce to 
	\begin{align}\label{eq:survey_simplify}
		V_{\tau\tau} & = r_1^{-1}S^2_1 +r_0^{-1}S^2_0 -fS^2_\tau = r_1^{-1} r_1^2 S^2_y +r_0^{-1} r_0^2 S^2_y -fS^2_y = (1-f)S^2_y, 
		\nonumber
		\\
		R_S^2 & = 
		\frac{(1-f)S^2_{\tau \mid  \bs{W}}}{r_1^{-1}S^2_1 +r_0^{-1}S^2_0 -fS^2_\tau}
		= 
		\frac{(1-f)S^2_{y\mid  \bs{W}}}{(1-f)S^2_y}
		= \rho^2_{y,\bs{W}}, 
		\nonumber
		\\
		R_T^2 & = 
		\frac{r_1^{-1}S^2_{1 \mid  \bs{X}}+r_0^{-1}S^2_{0 \mid  \bs{X}} -S^2_{\tau \mid  \bs{X}}}{r_1^{-1}S^2_1 +r_0^{-1}S^2_0 -fS^2_\tau}
		=
		\frac{r_1^{-1} r_1^2 S^2_{y \mid  \bs{X}}+r_0^{-1} r_0^2 S^2_{y \mid  \bs{X}} -S^2_{y \mid  \bs{X}}}{(1-f) S_y^2}
		= 0.
	\end{align}
	Therefore, \eqref{eq:diff_rej_proof} simplifies to 
	\begin{align*}
		\sqrt{n}\left(\bar{y}_S - \bar{y} \right) \mid M_S \le a_S  \ & \dot\sim\  \sqrt{(1-f)S_y^2} \left(  \sqrt{1-\rho^2_{y,\bs{W}}} \cdot \varepsilon
		+ \sqrt{\rho_{y,\bs{W}}^2} \cdot L_{J,a_S}
		\right). 
	\end{align*}
	
	Second, we simplify Condition \ref{cond:fp} with $\bs{X} = \bs{W}$. 
	Note that by construction, $r_1$ and $r_0$ converge to $1/2$ as $N\rightarrow \infty$, i.e., Condition \ref{cond:fp}(ii) holds automatically. This also implies that Condition \ref{cond:fp}(iii) and (iv) are equivalent to that (a) the finite population variances $S^2_y$, $\bs{S}^2_{\bs{W}}$ and covariances $\bs{S}_{y, \bs{W}}$ have limits, and the limit of $\bs{S}^2_{\bs{W}}$ is nonsingular, 
	and 
	(b) 
	$\max_{1\le i \le N}(y_i - \bar{y})^2/n \rightarrow 0$ and 	$\max_{1 \le i\le N}\| \bs{W}_i - \bar{\bs{W}} \|_2^2/n \rightarrow 0.$ 
	Therefore, Condition \ref{cond:fp} with $\bs{X} = \bs{W}$ is equivalent to Condition \ref{cond:survey}.
	
	From the above, Corollary \ref{cor:rej_sam} holds under Condition \ref{cond:survey}. 
\end{proof}

\begin{proof}[{\bf Proof of Corollary \ref{cor:rej_reg}}]
	Similar to the proof of Corollary \ref{cor:rej_sam}, we prove Corollary \ref{cor:rej_reg} using \eqref{eq:dist_WX_in_EC_proof}, by coupling rejective sampling with ReSEM. 
	We construct potential outcomes as $Y_i(1) = r_1 y_i$ and $Y_i(0) =  - r_0 y_i$ with $r_1 = [n/2]/n$ and $r_0 = 1 - r_0$. 
	Let $\bs{C} = \bs{E}$, $\bs{X} = \bs{W}$ and $a_T = \infty$. 
	By the same logic as the proof of Corollary \ref{cor:rej_sam} and from \eqref{eq:dist_W_in_E_proof}, 
	when Condition \ref{cond:fp_analysis} holds, 
	\begin{align}\label{eq:rej_reg_proof}
		& \quad \ 
		\sqrt{n} \left( \hat{\tau} - \bs{\gamma}^\top \hat{\bs{\delta}}_{\bs{E}} - \bar{y} \right) \mid M_S \le a_S 
		\ \sim \ 
		\sqrt{n} \left\{ \hat{\tau}(\bs{0}, \bs{\gamma}) - \tau \right\} \mid M_S \le a_S
		\nonumber
		\\
		& \dot\sim \ 
		\sqrt{ V_{\tau\tau} (1 - R_E^2)
			+ (1-f) (\bs{\gamma} - \tilde{\bs{\gamma}})^\top \bs{S}^2_{\bs{E} \setminus \bs{W}} (\bs{\gamma} - \tilde{\bs{\gamma}})
		}
		\cdot \varepsilon 
		+ 
		\sqrt{ (1-f) (\bs{\gamma} - \tilde{\bs{\gamma}})^\top \bs{S}^2_{\bs{E} \mid \bs{W}} (\bs{\gamma} - \tilde{\bs{\gamma}}) }
		\cdot L_{J, a_S}. 
	\end{align}
	
	First, we simplify \eqref{eq:rej_reg_proof}. 
	By the construction of potential outcomes and by the same logic as  \eqref{eq:survey_simplify}, 
	$V_{\tau\tau} = (1-f)S^2_y$ and $R^2_E = \rho^2_{y,\bs{E}}$, 
	and
	$
		\tilde{\bs{\gamma}}
		= 
		r_1 (  \bs{S}^2_{\bs{E}} )^{-1} \bs{S}_{\bs{E}, y} - (-r_0) \cdot  (  \bs{S}^2_{\bs{E}} )^{-1} \bs{S}_{\bs{E}, y}
		= 
		(  \bs{S}^2_{\bs{E}} )^{-1} \bs{S}_{\bs{E}, y}
		= 
		\overline{\bs{\gamma}}, 
	$
	where $\bs{S}_{\bs{E}, y}$ is the finite population covariance between $\bs{E}$ and $y$. 
	Therefore, \eqref{eq:rej_reg_proof} simplifies to 
	\begin{align}\label{eq:rej_reg_proof_2}
		\sqrt{n} \left( \hat{\tau} - \bs{\gamma}^\top \hat{\bs{\delta}}_{\bs{W}} - \bar{y} \right) \mid M_S \le a_S 
		& \ \dot\sim \ 
		\sqrt{ (1-f) S^2_y (1 - \rho_{y,\bs{E}}^2) + (1-f) (\bs{\gamma} - \overline{\bs{\gamma}})^\top \bs{S}^2_{\bs{E} \setminus \bs{W}} (\bs{\gamma} - \overline{\bs{\gamma}}) }
		\cdot \varepsilon 
		\nonumber
		\\
		& \quad \ + 
		\sqrt{ (1-f) (\bs{\gamma} - \overline{\bs{\gamma}})^\top \bs{S}^2_{\bs{E} \mid \bs{W}} (\bs{\gamma} - \overline{\bs{\gamma}}) }
		\cdot L_{J, a_S}.
	\end{align}

	Second, we simplify Condition \ref{cond:fp_analysis} with $\bs{C} = \bs{E}$ and $\bs{X} = \bs{W}$. By the same logic as the proof of Corollary \ref{cor:rej_sam}, 
	Condition \ref{cond:fp_analysis} with $\bs{C} = \bs{E}$ and $\bs{X} = \bs{W}$ is equivalent to Condition \ref{cond:survey_ana}. 
	
	Third, we derive the optimal regression adjustment under rejective sampling. 
	From \eqref{eq:rej_reg_proof_2} and Lemma \ref{lemma:linear_com_eps_L}, we can immediately know that the optimal regression-adjusted estimator is attainable at $\bs{\gamma} = \overline{\bs{\gamma}}$, with asymptotic distribution 
	\begin{align}%
		\sqrt{n} \big( \hat{\tau} - \overline{\bs{\gamma}}^\top \hat{\bs{\delta}}_{\bs{W}} - \bar{y} \big) \mid M_S \le a_S 
		\ \dot\sim \ 
		\sqrt{ (1-f) S^2_y (1 - \rho_{y,\bs{E}}^2) }
		\cdot \varepsilon. 
	\end{align}
	
	From the above, Corollary \ref{cor:rej_reg} holds. 
\end{proof}

\section{Variance estimation and confidence intervals under ReSEM}\label{app:ci}

\subsection{Lemmas}
\begin{lemma}\label{lemma:estimate}
    Let $A$ and $B$ be any finite population quantities that can be the treatment or control potential outcome, or any coordinate of covariates $\bs{W}, \bs{X}, \bs{E}$ or $\bs{C}$. 
	Under Condition \ref{cond:fp_analysis} and 
	$\sresem$, 
	for $t=0,1$ and units under treatment arm $t$,  
	the sample covariance between $A$ and $B$ is consistent for the finite population covariance between $A$ and $B$. 
\end{lemma}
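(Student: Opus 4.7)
The plan is to prove the lemma by a two-step reduction: first transfer the claim from $\sresem$ to the unconditional CRSE, exploiting that the acceptance probability is asymptotically bounded away from zero; then, under the CRSE, recognize each treatment arm as a simple random sample from the whole $N$-unit population and invoke the existing variance bound in Lemma \ref{lemma:estimate_lemma}.

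For the first step, Lemma \ref{lemma:M_ST_CRSE} yields $\Pr(M_S \le a_S,\, M_T \le a_T) \to \Pr(\chi^2_J \le a_S)\cdot \Pr(\chi^2_K \le a_T) > 0$ under the CRSE. Since the $\sresem$ law is by construction the CRSE law conditioned on the acceptance event $\{M_S\le a_S,\,M_T\le a_T\}$, any sequence of events with CRSE probability $o(1)$ also has $\sresem$ probability $o(1)$. It therefore suffices to prove $s_{AB}(t) - S_{AB} = o_{\Pr}(1)$ under the CRSE, where $s_{AB}(t)$ denotes the sample covariance of $A$ and $B$ taken over the $n_t$ units in treatment arm $t$ and $S_{AB}$ denotes the finite population covariance.

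For the second step, the joint distribution in \eqref{eq:srs_cre} shows that the CRSE is equivalent to a uniform partition of the $N$ units into three groups of sizes $n_1$, $n_0$ and $N-n$; in particular, the $n_t$ units in arm $t$ form, marginally, a simple random sample of size $n_t$ from the $N$-unit population. Applying Lemma \ref{lemma:estimate_lemma} with $n$ replaced by $n_t$, the sample covariance $s_{AB}(t)$ is exactly unbiased for $S_{AB}$ with
\[
\Var\bigl(s_{AB}(t)\bigr) \;\le\; \frac{4\, n_t}{(n_t-1)^2}\,\max_{1\le j\le N}(A_j - \bar A)^2 \cdot \frac{1}{N-1}\sum_{i=1}^{N}(B_i - \bar B)^2.
\]
Under Condition \ref{cond:fp_analysis}, the ratio $r_t = n_t/n$ has a positive limit, the finite-population second moment of $B$ has a finite limit, and $n^{-1}\max_{1\le j\le N}(A_j-\bar A)^2 \to 0$ for every admissible choice of $A$ drawn from the potential outcomes or any coordinate of $\bs{W}$, $\bs{X}$, $\bs{E}$, $\bs{C}$. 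The right-hand side is therefore $o(1)$, and Chebyshev's inequality combined with unbiasedness gives $s_{AB}(t) - S_{AB} = o_{\Pr}(1)$ under the CRSE; the first step then lifts this to $\sresem$.

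The only subtlety will be verifying that Condition \ref{cond:fp_analysis} delivers the second-moment and maximum-deviation controls uniformly across every admissible pair $(A,B)$ that can appear in the statement. This is immediate because Condition \ref{cond:fp_analysis} bundles Condition \ref{cond:fp} for $(Y(1), Y(0), \bs{W}, \bs{X})$ with its analog for $(\bs{E}, \bs{C})$, while the variance bound above is separable: it only involves a single variable's maximum squared deviation together with the other variable's finite-population second moment, both of which are directly controlled by the pooled regularity conditions.
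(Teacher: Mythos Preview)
Your proof is correct and follows essentially the same approach as the paper: recognize arm $t$ under the CRSE as a simple random sample of size $n_t$, apply Lemma \ref{lemma:estimate_lemma} to bound $\Var\{s_{AB}(t)\}$, use Condition \ref{cond:fp_analysis} to drive the bound to zero, and then lift to $\sresem$ via the fact that the acceptance probability is bounded away from zero (Lemma \ref{lemma:M_ST_CRSE}). The only cosmetic difference is the order of the transfer: the paper bounds the conditional second moment directly by $\Pr(M_S\le a_S,\,M_T\le a_T)^{-1}\Var\{s_{AB}(t)\}$ and then applies Markov, whereas you first establish $o_{\Pr}(1)$ under the CRSE and then invoke $\Pr_{\sresem}(E)\le \Pr_{\text{CRSE}}(E)/\Pr_{\text{CRSE}}(M_S\le a_S,\,M_T\le a_T)$ --- the two are equivalent.
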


\begin{lemma}\label{lemma:diff_average}
	Let $A$ be any finite population quantity that can be the treatment or control potential outcome, or any coordinate of the covariates $\bs{W}, \bs{X}, \bs{E}$ or $\bs{C}$, 
	$\bar{A}$ be its finite population average, 
	$\bar{A}_{\mathcal{S}}$ be its sample average, 
	and $\bar{A}_1$ and $\bar{A}_0$ be its averages in treatment and control groups. 
	Under 
    $\sresem$ 
    and Condition \ref{cond:fp_analysis}, 
	\begin{align*}
		\bar{A}_1 - \bar{A} = O_{\Pr}(n^{-1/2}), \ \ 
		\bar{A}_0 - \bar{A} = O_{\Pr}(n^{-1/2}), \ \ 
		\bar{A}_{\mathcal{S}} - \bar{A} = O_{\Pr}(n^{-1/2}), \ \ 
		\bar{A}_1 - \bar{A}_0 = O_{\Pr}(n^{-1/2}). 
	\end{align*}
\end{lemma}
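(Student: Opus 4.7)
The plan is to reduce everything to unconditional variance bounds under the CRSE and then transfer to $\sresem$ via the acceptance probability.

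First, I would observe that each of the four quantities is a centered linear statistic with a clean variance expression under the CRSE. Treating $A$ as if it were a common (treatment-invariant) potential outcome, the same algebra that produced Lemma \ref{lemma:cov_crse} gives $\E_{\text{CRSE}}(\bar A_1 - \bar A) = \E_{\text{CRSE}}(\bar A_0 - \bar A) = \E_{\text{CRSE}}(\bar A_{\mathcal{S}} - \bar A) = \E_{\text{CRSE}}(\bar A_1 - \bar A_0) = 0$, with variances $(1/n_1 - 1/N)S_A^2$, $(1/n_0 - 1/N)S_A^2$, $(1/n - 1/N)S_A^2$, and $(r_1 r_0)^{-1} n^{-1} S_A^2$, respectively, where $S_A^2$ is the finite population variance of $A$. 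Under Condition \ref{cond:fp_analysis}, $r_1$ and $r_0$ have positive limits and $S_A^2$ has a finite limit for each admissible choice of $A$, so all four variances are $O(n^{-1})$. Chebyshev's inequality then yields, for any $\epsilon > 0$, a constant $c_\epsilon$ such that the probability under the CRSE that any one of these quantities exceeds $c_\epsilon n^{-1/2}$ in absolute value is at most $\epsilon$.

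Next I would transfer from the CRSE to $\sresem$ using the acceptance probability. For any event $\mathcal{E}$,
\begin{align*}
    \Pr(\mathcal{E} \mid \sresem) = \frac{\Pr(\mathcal{E} \cap \{M_S \le a_S, M_T \le a_T\})}{\Pr(M_S \le a_S, M_T \le a_T)} \le \frac{\Pr(\mathcal{E})}{\Pr(M_S \le a_S, M_T \le a_T)}.
\end{align*}
By Lemma \ref{lemma:M_ST_CRSE}, the denominator converges to $\Pr(\chi^2_J \le a_S)\Pr(\chi^2_K \le a_T) > 0$ under Condition \ref{cond:fp} (which is implied by Condition \ref{cond:fp_analysis}), so there exists $N_0$ and $c_0 > 0$ with $\Pr(M_S \le a_S, M_T \le a_T) \ge c_0$ for all $N \ge N_0$. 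Applying this with $\mathcal{E} = \{|\bar A_1 - \bar A| > c\, n^{-1/2}\}$ and the other three analogous events, and using the CRSE variance bounds above, gives $\Pr(\mathcal{E} \mid \sresem) \le c_0^{-1}\cdot O(c^{-2})$, which can be made arbitrarily small by choosing $c$ large. This establishes all four $O_{\Pr}(n^{-1/2})$ statements under $\sresem$.

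There is no genuine obstacle here; the only point requiring mild care is verifying that $S_A^2$ is controlled in the limit for each of the candidates $Y(1), Y(0)$, and coordinates of $\bs{W}, \bs{X}, \bs{E}, \bs{C}$. This is exactly what Condition \ref{cond:fp_analysis} (part (iii) of Condition \ref{cond:fp} applied to both $(\bs{W}, \bs{X})$ and $(\bs{E}, \bs{C})$) delivers, so the argument closes cleanly. If a version under the two-stage ReSEM were needed, it would follow immediately via Theorem \ref{thm:equiv_resem_sresem} and Corollary \ref{cor:equiv_resem_sresem}(ii), but the statement as given is for $\sresem$ and the above suffices.
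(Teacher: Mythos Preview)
Your proof is correct, but it takes a genuinely different route from the paper's. The paper reduces the problem to $\hat{\tau}_A = \bar{A}_1 - \bar{A}_0$ and $\hat{\delta}_A = \bar{A}_{\mathcal{S}} - \bar{A}$ via the identities $\bar{A}_1 - \bar{A}_{\mathcal{S}} = r_0 \hat{\tau}_A$ and $\bar{A}_0 - \bar{A}_{\mathcal{S}} = -r_1 \hat{\tau}_A$, then invokes the full joint finite-population CLT (Lemma~\ref{lemma:clt}) for the augmented vector $\sqrt{n}(\hat\tau-\tau,\hat{\bs{\tau}}_{\bs{X}}^\top,\hat{\tau}_A,\hat{\bs{\delta}}_{\bs{W}}^\top,\hat{\delta}_A)^\top$ under the CRSE, and passes to $\sresem$ by conditioning the joint weak limit on $\{\bs{B}_T^\top\bs{V}_{xx}^{-1}\bs{B}_T \le a_T,\ \bs{B}_S^\top\bs{V}_{ww}^{-1}\bs{B}_S \le a_S\}$ via \citet[][Corollary A1]{rerand2018}. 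Tightness of $\sqrt{n}\hat{\tau}_A$ and $\sqrt{n}\hat{\delta}_A$ under $\sresem$ then follows from the existence of a conditional weak limit. Your argument bypasses the CLT entirely: you compute exact CRSE second moments, apply Chebyshev, and transfer to $\sresem$ through the acceptance-probability inequality $\Pr(\mathcal{E}\mid\sresem) \le \Pr(\mathcal{E})/\Pr(M_S\le a_S, M_T\le a_T)$, exactly as the paper does in the proof of Lemma~\ref{lemma:estimate}. Your route is more elementary and self-contained; the paper's route reuses the heavier asymptotic machinery already in place and, as a byproduct, actually identifies the joint limiting distribution under $\sresem$, though that extra information is not needed for the $O_{\Pr}(n^{-1/2})$ conclusion.
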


\begin{lemma}\label{lemma:sotch_R2}
	For any positive integer $K_1,K_2$ and constants $a_1,a_2$,  
	let $\varepsilon_0 \sim \mathcal{N}(0,1)$, $L_{K_1, a_1} \sim D_1 \mid \bs{D}^\top\bs{D}\le a$ and 
	$L_{K_2, a_2} \sim \tilde{D}_1 \mid \tilde{\bs{D}}^\top\tilde{\bs{D}}\le a$, 
	where $\bs{D} = (D_1, \ldots, D_{K_1}) \sim \mathcal{N}(\bs{0}, \bs{I}_{K_1})$, $\tilde{\bs{D}} = (\tilde{D}_1, \ldots, \tilde{D}_{K_2}) \sim \mathcal{N}(\bs{0}, \bs{I}_{K_2})$, 
	and $(\varepsilon, L_{K_1, a_1}, L_{K_2, a_2})$ are mutually independent. 
	If $\tilde{R}_1^2 \le R_1^2$ and $\tilde{R}_2^2 \le R_2^2,$
	then for any $V\ge 0$ and $c\ge0$, 
	\begin{align*}
		& \quad \ \Pr\left(
		\left| 
		V^{1/2}\left(\sqrt{1 - \tilde{R}_1^2 - \tilde{R}_2^2} \ \varepsilon
		+\sqrt{\tilde{R}_1^2}\ L_{K_1, a_1} + \sqrt{\tilde{R}_2^2}\ L_{K_2, a_2}
		\right)
		\right| \le c
		\right)
		\\
		& \le 
		\Pr\left(
		\left| 
		V^{1/2}\left(\sqrt{1 - R_1^2 - R_2^2} \ \varepsilon
		+\sqrt{R_1^2}\ L_{K_1, a_1} + \sqrt{R_2^2}\ L_{K_2, a_2}
		\right)
		\right| \le c
		\right).
	\end{align*}
\end{lemma}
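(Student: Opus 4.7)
Monotonicity is claimed in two variables $(\tilde R_1^2,\tilde R_2^2)$, so my plan is to change them one at a time via an intermediate pair $(R_1^2,\tilde R_2^2)$. By the symmetric roles of the two $L$-variables, it suffices to prove the following univariate version twice: for any $\tilde R^2\le R^2$ and any random variable $W$ independent of $(\varepsilon,L_{K,a})$ that is symmetric and unimodal about zero,
\begin{align*}
\Pr\!\left(\left|W+V^{1/2}\bigl(\sqrt{1-\tilde R^2}\,\varepsilon+\sqrt{\tilde R^2}\,L_{K,a}\bigr)\right|\le c\right)
\le \Pr\!\left(\left|W+V^{1/2}\bigl(\sqrt{1-R^2}\,\varepsilon+\sqrt{R^2}\,L_{K,a}\bigr)\right|\le c\right).
\end{align*}
In the first reduction I would take $W=V^{1/2}\sqrt{\tilde R_2^2}\,L_{K_2,a_2}$, which is symmetric unimodal about zero by Lemma \ref{lemma:LKa}(iii); in the second I would take $W=V^{1/2}\sqrt{R_2^2}\,L_{K_2,a_2}$. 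Collecting the other $L$-term into $W$ keeps us genuinely univariate.

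\textbf{Gaussian decomposition and reduction to a base inequality.} The key trick is to represent the Gaussian on the left as a sum: $\sqrt{1-\tilde R^2}\,\varepsilon\stackrel{d}{=}\sqrt{1-R^2}\,\varepsilon_1+\sqrt{R^2-\tilde R^2}\,\varepsilon_2$ with $\varepsilon_1,\varepsilon_2$ i.i.d.\ standard Gaussian, independent of $L_{K,a}$. Substituting, the LHS above equals in distribution $\Pr(|W'+V^{1/2}\sqrt{R^2-\tilde R^2}\,\varepsilon_2+V^{1/2}\sqrt{\tilde R^2}\,L_{K,a}|\le c)$, where $W'=W+V^{1/2}\sqrt{1-R^2}\,\varepsilon_1$ is still symmetric unimodal about zero by Lemma \ref{lemma:sum_linear}. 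The RHS is $\Pr(|W'+V^{1/2}\sqrt{R^2}\,L_{K,a}|\le c)$. Applying Lemma \ref{lemma:sum_unimodal} with $\zeta_0=W'$, it suffices to establish the following base inequality, with the $V^{1/2}$ cancelled:
\begin{align*}
\Pr\!\left(\left|\sqrt{R^2-\tilde R^2}\,\varepsilon_2+\sqrt{\tilde R^2}\,L_{K,a}\right|\le t\right)\;\le\;\Pr\!\left(\left|\sqrt{R^2}\,L_{K,a}\right|\le t\right)\quad\text{for all }t\ge 0. \quad (*)
\end{align*}

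\textbf{The base inequality via a crossing/MLR argument --- the main obstacle.} Step (*) is where the real work lies, since the coefficient on $\varepsilon$ decreases while the coefficient on $L_{K,a}$ increases, so Lemma \ref{lemma:linear_com_eps_L} does not apply. I would use the representation $L_{K,a}=D_1\mid \bs D^\top\bs D\le a$ with $\bs D=(D_1,\ldots,D_K)\sim\mathcal{N}(\bs 0,\bs I_K)$ and $\varepsilon_2\perp \bs D$. Conditional on $\bs D$, the quantity $\sqrt{R^2-\tilde R^2}\,\varepsilon_2+\sqrt{\tilde R^2}\,D_1$ is $\mathcal{N}(\sqrt{\tilde R^2}\,D_1,\,R^2-\tilde R^2)$, so its probability of lying in $[-t,t]$ equals $g(D_1)$, where $g(x)=\Phi\!\left((t-\sqrt{\tilde R^2}x)/\sqrt{R^2-\tilde R^2}\right)-\Phi\!\left((-t-\sqrt{\tilde R^2}x)/\sqrt{R^2-\tilde R^2}\right)$. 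Hence the LHS of (*) is $\E[g(D_1)\mid \bs D^\top\bs D\le a]$ and the RHS of (*) is $\E[\mathbb{1}\{|D_1|\le t/R\}\mid \bs D^\top\bs D\le a]$. Two observations then close the argument. First, without conditioning, the two expectations coincide, because $\sqrt{R^2-\tilde R^2}\,\varepsilon_2+\sqrt{\tilde R^2}\,D_1\sim\mathcal{N}(0,R^2)\stackrel{d}{=}\sqrt{R^2}\,D_1$; equivalently $\int [g(x)-\mathbb{1}\{|x|\le t/R\}]\phi(x)\,dx=0$. Second, the conditional density of $D_1$ given $\bs D^\top\bs D\le a$ is proportional to $\phi(x)\,h(x^2)$ with $h(u)=\Pr(\chi^2_{K-1}\le a-u)\mathbb{1}\{u\le a\}$ nonincreasing; this is a monotone likelihood ratio. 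Since both $g$ and $\mathbb{1}\{|x|\le t/R\}$ are symmetric in $x$ and the difference $u(x)=g(x)-\mathbb{1}\{|x|\le t/R\}$ is strictly negative on $|x|\le t/R$ and nonnegative on $|x|>t/R$ (a single sign change in $|x|$), multiplying by the nonincreasing weight $h(x^2)$ places relatively \emph{more} mass where $u<0$ and \emph{less} where $u\ge 0$; bounding above and below by $h((t/R)^2)$ on the two regions yields $\int u(x)\,h(x^2)\phi(x)\,dx\le 0$, which is precisely $(*)$. The only delicate point is verifying the monotone-likelihood-ratio structure of the conditional density, which in turn is specific to the rejective/rerandomization construction of $L_{K,a}$.
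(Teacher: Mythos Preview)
Your argument is correct and self-contained, whereas the paper simply cites Lemma~A10 of \citet{rerand2018} without reproducing any argument. Two cosmetic points: (i) in the second reduction (moving from $(R_1^2,\tilde R_2^2)$ to $(R_1^2,R_2^2)$) the term you absorb into $W$ should be $V^{1/2}\sqrt{R_1^2}\,L_{K_1,a_1}$, not $V^{1/2}\sqrt{R_2^2}\,L_{K_2,a_2}$, since it is the first coordinate that stays fixed; (ii) your stated ``univariate version'' has coefficient $\sqrt{1-\tilde R^2}$ on $\varepsilon$, but in the actual first reduction the coefficient is $\sqrt{1-\tilde R_1^2-\tilde R_2^2}$, so you are implicitly rescaling $V\to V(1-\tilde R_2^2)$ and $\tilde R^2\to \tilde R_1^2/(1-\tilde R_2^2)$, $R^2\to R_1^2/(1-\tilde R_2^2)$ (which lie in $[0,1]$ since $R_1^2\le 1-R_2^2\le 1-\tilde R_2^2$). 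Neither affects the substance.

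The core of your proof --- the Gaussian split $\sqrt{1-\tilde R^2}\,\varepsilon\stackrel{d}{=}\sqrt{1-R^2}\,\varepsilon_1+\sqrt{R^2-\tilde R^2}\,\varepsilon_2$ followed by the single-crossing argument for the base inequality~$(*)$ --- is clean and genuinely identifies the mechanism: the function $u(x)=g(x)-\mathbb{1}\{|x|\le t/R\}$ has exactly one sign change in $|x|$, and conditioning on $\bs D^\top\bs D\le a$ tilts the density of $D_1$ by a factor nonincreasing in $|D_1|$, which can only push the (zero-mean) integral of $u$ downward. Since the paper's own proof is just a pointer to an external lemma, there is no internal argument to compare against; your route is a complete substitute and, in particular, uses only Lemmas~\ref{lemma:LKa}, \ref{lemma:sum_linear}, and \ref{lemma:sum_unimodal} from the present paper together with the explicit structure of $L_{K,a}$.
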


\begin{lemma}\label{lemma:sotch_est}
	For any positive integer $K_1,K_2$ and constants $a_1,a_2$,  
	let $\varepsilon \sim \mathcal{N}(0,1)$, $L_{K_1, a_1} \sim D_1 \mid \bs{D}^\top\bs{D}\le a_1$ and 
	$L_{K_2, a_2} \sim \tilde{D}_1 \mid \tilde{\bs{D}}^\top\tilde{\bs{D}}\le a_2$, 
	where $\bs{D} = (D_1, \ldots, D_{K_1}) \sim \mathcal{N}(\bs{0}, \bs{I}_{K_1})$, $\tilde{\bs{D}} = (\tilde{D}_1, \ldots, \tilde{D}_{K_2}) \sim \mathcal{N}(\bs{0}, \bs{I}_{K_2})$, 
	and $(\varepsilon, L_{K_1, a_1}, L_{K_2, a_2})$ are mutually independent. 
	If
	\begin{align*}
		\tilde{V} \ge V,
		\quad
		\tilde{V}\tilde{R}_1^2 \le VR_1^2, 
		\quad 
		\tilde{V}\tilde{R}_2^2 \le VR_2^2, 
	\end{align*}
	then for any $c\ge 0$, 
	\begin{align*}
		& \quad \ \Pr\left(
		\left| 
		\tilde{V}^{1/2}\left(\sqrt{1 - \tilde{R}_1^2 - \tilde{R}_2^2} \ \varepsilon
		+\sqrt{\tilde{R}_1^2}\ L_{K_1, a_1} + \sqrt{\tilde{R}_2^2}\ L_{K_2, a_2}
		\right)
		\right| \le c
		\right)
		\\
		& \le 
		\Pr\left(
		\left| 
		V^{1/2}\left(\sqrt{1 - R_1^2 - R_2^2} \ \varepsilon
		+\sqrt{R_1^2}\ L_{K_1, a_1} + \sqrt{R_2^2}\ L_{K_2, a_2}
		\right)
		\right| \le c
		\right).
	\end{align*}
\end{lemma}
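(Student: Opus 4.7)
The plan is to bridge Lemma \ref{lemma:sotch_est} to Lemma \ref{lemma:sotch_R2} by introducing an intermediate distribution in which the scale parameter is $V$ but the squared multiple correlations are rescaled from $\tilde R_i^2$ to $\tilde R_i'^2 \equiv \tilde V \tilde R_i^2 / V$ (treating the degenerate case $V=0$ separately, where the right-hand side equals one and the inequality is trivial). First I would verify that this rescaling is legitimate: the hypothesis $\tilde V \tilde R_i^2 \le V R_i^2$ directly yields $\tilde R_i'^2 \le R_i^2$, and summing the same inequality over $i=1,2$ and combining with $R_1^2 + R_2^2 \le 1$ gives $\tilde R_1'^2 + \tilde R_2'^2 \le 1$, so the intermediate distribution is well defined.

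Next I would apply Lemma \ref{lemma:sotch_R2} with common scale $V$ and correlations $(\tilde R_1'^2, \tilde R_2'^2)$ against $(R_1^2, R_2^2)$, which yields that the intermediate distribution is stochastically more concentrated on symmetric intervals than the right-hand side of Lemma \ref{lemma:sotch_est}. Simplifying the coefficients using $V^{1/2}\sqrt{\tilde R_i'^2} = \tilde V^{1/2}\sqrt{\tilde R_i^2}$ and $V^{1/2}\sqrt{1 - \tilde R_1'^2 - \tilde R_2'^2} = \sqrt{V - \tilde V(\tilde R_1^2 + \tilde R_2^2)}$, the intermediate random variable becomes
\begin{align*}
\sqrt{V - \tilde V(\tilde R_1^2 + \tilde R_2^2)}\,\varepsilon + \tilde V^{1/2}\sqrt{\tilde R_1^2}\,L_{K_1, a_1} + \tilde V^{1/2}\sqrt{\tilde R_2^2}\,L_{K_2, a_2},
\end{align*}
which differs from the left-hand side of Lemma \ref{lemma:sotch_est} only in the coefficient of $\varepsilon$: the intermediate's coefficient is $\sqrt{V - \tilde V(\tilde R_1^2 + \tilde R_2^2)}$, whereas the original's is $\sqrt{\tilde V - \tilde V(\tilde R_1^2 + \tilde R_2^2)}$.

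Finally, since $\tilde V \ge V$, the original left-hand side has a larger Gaussian coefficient than the intermediate; hence $\sqrt{\tilde V - \tilde V(\tilde R_1^2 + \tilde R_2^2)}\,\varepsilon$ is less concentrated on symmetric intervals than $\sqrt{V - \tilde V(\tilde R_1^2 + \tilde R_2^2)}\,\varepsilon$. I would then apply Lemma \ref{lemma:sum_unimodal} with $\zeta_0 = \tilde V^{1/2}\sqrt{\tilde R_1^2}\,L_{K_1, a_1} + \tilde V^{1/2}\sqrt{\tilde R_2^2}\,L_{K_2, a_2}$ (which is symmetric and unimodal around zero by Lemmas \ref{lemma:LKa} and \ref{lemma:sum_linear}) and with the two Gaussian candidates above playing the roles of $\zeta_2$ and $\zeta_1$. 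This gives the bound comparing the original left-hand side of Lemma \ref{lemma:sotch_est} with the intermediate, and chaining with the bound from Lemma \ref{lemma:sotch_R2} completes the proof.

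The main obstacle is identifying the right bridging distribution: the hypotheses mix a monotonicity in scale ($\tilde V \ge V$) with an opposite monotonicity in the products ($\tilde V \tilde R_i^2 \le V R_i^2$), so neither Lemma \ref{lemma:sotch_R2} nor Lemma \ref{lemma:linear_com_eps_L} applies directly and the coefficients of $\varepsilon$ and of the $L_{K_i, a_i}$ move in opposite directions. The trick of keeping $V$ fixed in the intermediate and absorbing all the remaining discrepancy into the Gaussian coefficient makes the two reductions compatible: Lemma \ref{lemma:sotch_R2} handles the comparison of the $L$-terms while the symmetric-unimodal convolution argument of Lemma \ref{lemma:sum_unimodal} handles the residual Gaussian inflation without interfering with the $L$-terms.
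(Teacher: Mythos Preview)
Your proof is correct and follows essentially the same strategy as the paper: introduce an intermediate distribution by rescaling the squared correlations so that Lemma \ref{lemma:sotch_R2} applies at a common scale, then handle the residual discrepancy in the Gaussian coefficient via the symmetric--unimodal convolution comparison. The only cosmetic difference is that the paper fixes the scale at $\tilde V$ (rescaling $R_i^2$ to $VR_i^2/\tilde V$) and finishes with Lemma \ref{lemma:linear_com_eps_L}, whereas you fix the scale at $V$ (rescaling $\tilde R_i^2$ to $\tilde V\tilde R_i^2/V$) and finish with Lemma \ref{lemma:sum_unimodal} directly; these are mirror-symmetric choices using the same underlying lemmas.
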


\begin{lemma}\label{lemma:continuous_cdf}
	Let $a_1$ and $a_2$ be two positive constants, $K_1$ and $K_2$ be two positive integers, 
	$\varepsilon\sim \mathcal{N}(0,1)$ be a standard Gaussian random variable, 
	$L_{K_1, a_1} \sim D_1 \mid \bs{D}^\top\bs{D} \le a_1$ 
	and 
	$L_{K_2, a_2} \sim \tilde{D}_1 \mid \tilde{\bs{D}}^\top\tilde{\bs{D}} \le a_2$ be two constrained Gaussian random variables with 
	$\bs{D} = (D_1, \ldots, D_{K_1}) \sim \mathcal{N}(\bs{0}, \bs{I}_{K_1})$ 
	and 
	$\tilde{\bs{D}} = (\tilde{D}_1, \ldots, \tilde{D}_{K_2}) \sim \mathcal{N}(\bs{0}, \bs{I}_{K_2})$, 
	and $(\varepsilon, L_{K_1, a_1}, L_{K_2, a_2})$ be mutually independent. 
	Let $\Psi_{V, R_1^2, R_2^2}(\cdot)$ be the distribution function of 
	$$
	V^{1/2} \left( \sqrt{1-R_1^2-R_2^2} \cdot \varepsilon + \sqrt{R_1^2} \cdot L_{K_1, a_1} + \sqrt{R_2^2} \cdot L_{K_2, a_2} \right), 
	$$
	and $\Psi^{-1}_{V, R_1^2, R_2^2}(\cdot)$ be the corresponding quantile function. 
	Then $\Psi_{V, R_1^2, R_2^2}(y)$ is continuous in $(V, R_1^2, R_2^2, y) \in 
	(0, \infty) \times [0, 1] \times [0, 1] \times \mathbb{R}$, 
	and $\Psi^{-1}_{V, R_1^2, R_2^2}(p)$ is continuous in $(V, R_1^2, R_2^2, p) \in 
	(0, \infty) \times [0, 1] \times [0, 1] \times (0,1)$.  
\end{lemma}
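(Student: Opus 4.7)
\medskip

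\noindent\textbf{Proof proposal for Lemma \ref{lemma:continuous_cdf}.}

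The plan is to establish the two continuity claims by combining (i) the fact that, for each fixed parameter triple, the random variable in question has a continuous density and full support on an interval, with (ii) the almost sure continuity of the random variable as a function of the parameters, which upgrades to uniform convergence of distribution functions via Polya's theorem.

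First I would pin down the marginal distribution of $L_{K,a}$. Writing $\bs{D}=(D_1,\ldots,D_K)^\top\sim\mathcal{N}(\bs{0},\bs{I}_K)$, the density of $L_{K,a}$ at $u$ is proportional to $\phi(u)\Pr(\sum_{j=2}^K D_j^2\le a-u^2)\cdot\I(|u|\le\sqrt{a})$, which is continuous in $u$ and strictly positive on $(-\sqrt{a},\sqrt{a})$. Hence $L_{K_1,a_1}$ and $L_{K_2,a_2}$ each admit a bounded continuous density supported on an open interval. Because $\varepsilon,L_{K_1,a_1},L_{K_2,a_2}$ are mutually independent, the random variable
\[
X(V,R_1^2,R_2^2)=V^{1/2}\bigl(\sqrt{1-R_1^2-R_2^2}\,\varepsilon+\sqrt{R_1^2}\,L_{K_1,a_1}+\sqrt{R_2^2}\,L_{K_2,a_2}\bigr)
\]
is a convolution of at least one distribution with a continuous density (when $V(1-R_1^2-R_2^2)>0$ this is the Gaussian, and otherwise at least one of the two constrained Gaussians still has a nontrivial continuous density because $V>0$ and $R_1^2+R_2^2=1$). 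Therefore $\Psi_{V,R_1^2,R_2^2}$ is continuous in $y$ for every $(V,R_1^2,R_2^2)$ in the domain, and moreover strictly increasing on the interior of its support.

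Next I would prove joint continuity of $\Psi$. Fix any $(V_0,R_{1,0}^2,R_{2,0}^2)$ and take a sequence $(V_n,R_{1,n}^2,R_{2,n}^2)\to(V_0,R_{1,0}^2,R_{2,0}^2)$. On the common probability space carrying $(\varepsilon,L_{K_1,a_1},L_{K_2,a_2})$, the map $(V,R_1^2,R_2^2)\mapsto X(V,R_1^2,R_2^2)$ is continuous almost surely, so $X(V_n,R_{1,n}^2,R_{2,n}^2)\to X(V_0,R_{1,0}^2,R_{2,0}^2)$ almost surely and hence in distribution. Since the limit distribution function $\Psi_{V_0,R_{1,0}^2,R_{2,0}^2}$ is continuous on $\mathbb{R}$, Polya's theorem upgrades this to uniform convergence $\sup_{y\in\mathbb{R}}|\Psi_{V_n,R_{1,n}^2,R_{2,n}^2}(y)-\Psi_{V_0,R_{1,0}^2,R_{2,0}^2}(y)|\to 0$. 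Combining this uniform convergence with the continuity of $\Psi_{V_0,R_{1,0}^2,R_{2,0}^2}(y)$ in $y$ yields joint continuity of $\Psi_{V,R_1^2,R_2^2}(y)$ in $(V,R_1^2,R_2^2,y)$.

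Finally I would transfer these properties to the quantile function. For each $p\in(0,1)$, the value $y_0=\Psi^{-1}_{V_0,R_{1,0}^2,R_{2,0}^2}(p)$ lies in the interior of the support where $\Psi_{V_0,R_{1,0}^2,R_{2,0}^2}$ is strictly increasing; hence for every $\eta>0$ one has $\Psi_{V_0,R_{1,0}^2,R_{2,0}^2}(y_0-\eta)<p<\Psi_{V_0,R_{1,0}^2,R_{2,0}^2}(y_0+\eta)$. By the uniform convergence established above, the same strict inequalities hold at $(V_n,R_{1,n}^2,R_{2,n}^2,p_n)$ for all $n$ sufficiently large whenever $(V_n,R_{1,n}^2,R_{2,n}^2,p_n)\to(V_0,R_{1,0}^2,R_{2,0}^2,p)$, which forces $\Psi^{-1}_{V_n,R_{1,n}^2,R_{2,n}^2}(p_n)\in(y_0-\eta,y_0+\eta)$ eventually, giving joint continuity of $\Psi^{-1}$ on $(0,\infty)\times[0,1]\times[0,1]\times(0,1)$.

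The main obstacle I expect is the degenerate boundary case $R_1^2+R_2^2=1$, where the Gaussian summand disappears and $X$ becomes bounded with support a compact interval; care is needed to verify continuity and strict monotonicity of $\Psi$ on its support from the densities of $L_{K_1,a_1}$ and $L_{K_2,a_2}$ alone, and to confirm that the uniform convergence argument still delivers continuity at these boundary parameter values. The other mild technical point is ensuring that, although $\Psi_{V,R_1^2,R_2^2}$ need not be strictly increasing outside the support when $R_1^2+R_2^2=1$, the quantiles corresponding to $p\in(0,1)$ always sit in the interior of the support, so quantile continuity is unaffected.
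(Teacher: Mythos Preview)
Your proposal is correct and follows essentially the same strategy as the paper: use almost sure continuity of $X(V,R_1^2,R_2^2)$ in the parameters to obtain weak convergence and hence continuity of $\Psi$, then use strict monotonicity of the limiting CDF at its $p$th quantile to deduce continuity of $\Psi^{-1}$. The paper's version is slightly leaner---it skips Polya's theorem and simply uses pointwise convergence of the CDF at the relevant point, and it handles the quantile step by a short contradiction argument rather than your sandwich with $y_0\pm\eta$---but your added care about the density of $L_{K,a}$ and the boundary case $R_1^2+R_2^2=1$ is not misplaced, since the paper relies implicitly on the same facts (continuity of the limit distribution and strict increase at $\xi$) without spelling them out.
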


\begin{lemma}\label{lemma:conserv_inf}
	Suppose $\hat{V}_N, \hat{R}_{1,N}^2$ and $\hat{R}_{2,N}^2$ are consistent estimators for $\tilde{V}_N, \tilde{R}_{1,N}^2\in [0,1]$ and $\tilde{R}_{2,N}^2\in [0,1]$, in the sense that as $N\rightarrow \infty$, 
	$$
	\hat{V}_N-\tilde{V}_N = o_{\Pr}(1), \quad \hat{R}_{1,N}^2-\tilde{R}_{1,N}^2 = o_{\Pr}(1), \quad \hat{R}_{2,N}^2-\tilde{R}_{2,N}^2 = o_{\Pr}(1), 
	$$ 
	and the quantities 
	$\tilde{V}_N, \tilde{R}_{1,N}^2, \tilde{R}_{2,N}^2$, 
	$V_N, R_{1,N}^2$ and $R_{2,N}^2$ have limits as $N\rightarrow \infty$, denoted by $\tilde{V}_\infty, \tilde{R}_{1,\infty}^2, \tilde{R}_{2,\infty}^2$, 
	$V_\infty, R_{1,\infty}^2$ and $R_{2,\infty}^2$. 
	Let $\psi_N$ be a random variable converging weakly to the following distribution: 
	\begin{align*}
		\psi_N \ \dot\sim \  
		V_N^{1/2}\left(\sqrt{1 - R_{1,N}^2 - R_{2,N}^2} \ \varepsilon
		+\sqrt{R_{1,N}^2}\ L_{K_1, a_1} + \sqrt{R_{2,N}^2}\ L_{K_2, a_2}
		\right). 
	\end{align*}
	and $\hat{\psi}_N$ be a random variable having the following distribution:
	\begin{align*}
		\hat{\psi}_N \sim  
		\hat{V}_N^{1/2}\left(\sqrt{1 - \hat{R}_{1,N}^2 - \hat{R}_{2,N}^2} \ \varepsilon
		+\sqrt{\hat{R}_{1,N}^2}\ L_{K_1, a_1} + \sqrt{\hat{R}_{2,N}^2}\ L_{K_2, a_2}
		\right). 
	\end{align*}
	With a slight abuse of notation, here we view $\hat{\psi}_N$ as a random variable whose distribution is determined by  $\hat{V}_N, \hat{R}_{1,N}^2$ and $\hat{R}_{2,N}^2$.
	If $\tilde{V}_{\infty} > 0$, 
	and 
	\begin{align*}
		\tilde{V}_\infty \ge V_\infty,
		\quad
		\tilde{V}_\infty \tilde{R}_{1,\infty}^2 \le V_\infty R_{1, \infty}^2, 
		\quad 
		\tilde{V}_\infty \tilde{R}_{2,\infty}^2 \le V_\infty R_{2, \infty}^2, 
	\end{align*}
	then, as $N\rightarrow \infty$, 
	the probability limit of the variance of $\hat{\psi}_N$ is greater than or equal to the asymptotic variance of $\psi_N$, 
	and, for any $\alpha\in(0,1)$, the probability that the $1-\alpha$ symmetric quantile range of the distribution of $\hat{\psi}_N$ covers $\psi_N$ converges to a limit greater than or equal to $1-\alpha$. 
	Moreover, both of them have equality hold 
    when 
	$(\tilde{V}_\infty, \tilde{R}_{1,\infty}^2, \tilde{R}_{2,\infty}^2) = (V_\infty, R_{1,\infty}^2, R_{2,\infty}^2)$.
\end{lemma}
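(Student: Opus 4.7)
\textbf{Proof proposal for Lemma \ref{lemma:conserv_inf}.} The plan is to combine (i) the consistency of the variance and $R^2$-estimators with the continuous mapping theorem to identify the probability limits of the variance of $\hat{\psi}_N$ and of the symmetric quantile range of its distribution, and (ii) the stochastic-dominance statements in Lemmas \ref{lemma:sotch_R2} and \ref{lemma:sotch_est} together with the continuity result in Lemma \ref{lemma:continuous_cdf} to pass from the pointwise inequalities among the limiting parameters to the desired conservativeness.

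First I would handle the variance assertion. By Lemma \ref{lemma:LKa}(ii), $\Var(L_{K_j,a_j}) = v_{K_j,a_j}$ for $j=1,2$, so by construction
\[
\Var(\hat{\psi}_N \mid \hat{V}_N,\hat{R}_{1,N}^2,\hat{R}_{2,N}^2) \;=\; \hat{V}_N - (1-v_{K_1,a_1})\hat{V}_N\hat{R}_{1,N}^2 - (1-v_{K_2,a_2})\hat{V}_N\hat{R}_{2,N}^2,
\]
and the asymptotic variance of $\psi_N$ is the analogous expression in $V_\infty, R_{1,\infty}^2, R_{2,\infty}^2$. By the assumed consistency and continuous mapping, the left-hand side converges in probability to the analogous expression in the tilded limits. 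Since $1-v_{K_j,a_j}\ge 0$, the three inequalities on the limits give the required variance inequality, with equality when the two limit triples coincide.

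Next I would establish the coverage claim. Write $\Psi_{V,R_1^2,R_2^2}$ for the distribution function introduced in Lemma \ref{lemma:continuous_cdf} and $q_N \equiv \Psi^{-1}_{\hat{V}_N,\hat{R}_{1,N}^2,\hat{R}_{2,N}^2}(1-\alpha/2)$; by the symmetry of the limit law the $1-\alpha$ symmetric quantile range is $[-q_N,q_N]$. Because $\tilde{V}_\infty>0$, the continuity of $\Psi^{-1}$ in Lemma \ref{lemma:continuous_cdf} combined with the consistency of $(\hat{V}_N,\hat{R}_{1,N}^2,\hat{R}_{2,N}^2)$ and the continuous mapping theorem gives
\[
q_N \;\convergep\; q_\infty \;\equiv\; \Psi^{-1}_{\tilde V_\infty,\tilde R_{1,\infty}^2,\tilde R_{2,\infty}^2}(1-\alpha/2).
\]
By Slutsky's theorem applied to $\psi_N/q_N$, and since $\Psi_{V_\infty,R_{1,\infty}^2,R_{2,\infty}^2}$ is continuous at $\pm q_\infty$ (again by Lemma \ref{lemma:continuous_cdf}),
\[
\Pr(|\psi_N|\le q_N) \;\longrightarrow\; \Psi_{V_\infty,R_{1,\infty}^2,R_{2,\infty}^2}(q_\infty)-\Psi_{V_\infty,R_{1,\infty}^2,R_{2,\infty}^2}(-q_\infty).
\]
Now I invoke Lemma \ref{lemma:sotch_est} with $c=q_\infty$, $(\tilde V,\tilde R_1^2,\tilde R_2^2)=(\tilde V_\infty,\tilde R_{1,\infty}^2,\tilde R_{2,\infty}^2)$ and $(V,R_1^2,R_2^2)=(V_\infty,R_{1,\infty}^2,R_{2,\infty}^2)$; the three hypothesized inequalities on the limits are exactly the hypotheses of that lemma, so the left-hand side above is at least $\Psi_{\tilde V_\infty,\tilde R_{1,\infty}^2,\tilde R_{2,\infty}^2}(q_\infty)-\Psi_{\tilde V_\infty,\tilde R_{1,\infty}^2,\tilde R_{2,\infty}^2}(-q_\infty) = 1-\alpha$, as required. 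Equality in the coincident case is immediate because both sides reduce to the same CDF difference.

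The main obstacle I anticipate is the quantile step: Lemma \ref{lemma:sotch_est} yields an inequality at a fixed threshold, but the relevant threshold $q_N$ is itself random and determined by $(\hat V_N,\hat R_{1,N}^2,\hat R_{2,N}^2)$, so one needs continuity of $\Psi^{-1}$ in all three parameters (supplied by Lemma \ref{lemma:continuous_cdf}) together with $\tilde V_\infty>0$ to ensure that $q_N$ stabilizes to a deterministic limit $q_\infty$ and that weak convergence of $\psi_N$ can be evaluated at $\pm q_\infty$. Once these continuity and positivity issues are dispatched, the rest is a clean bookkeeping exercise combining Slutsky's theorem with Lemma \ref{lemma:sotch_est}.
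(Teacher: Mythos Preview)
Your proposal is correct and follows essentially the same route as the paper: compute the variance of $\hat{\psi}_N$ in closed form and pass to limits via consistency and the $1-v_{K_j,a_j}\ge 0$ inequalities, then for coverage use Lemma \ref{lemma:continuous_cdf} and the continuous mapping theorem to show the random quantile $q_N$ converges in probability to a deterministic $q_\infty$, apply Slutsky to $\psi_N/q_N$, and finish with the stochastic dominance in Lemma \ref{lemma:sotch_est}. The paper's proof is structurally identical, differing only in notation (it writes $\xi_{\alpha,N}$ for your $q_N$ and phrases the final limit as $\Pr(|\psi_\infty|\le\xi_{\alpha,\infty})$ rather than as a CDF difference).
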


\subsection{Proofs of the lemmas}

\begin{proof}[Proof of Lemma \ref{lemma:estimate}]
	Let $A$ and $B$ be any finite population quantities that can be the treatment or control potential outcome, or any coordinate of covariates $\bs{W}, \bs{X}, \bs{E}$ or $\bs{C}$. 
	Let $\bar{A} =N^{-1} \sum_{i=1}^{N}A_i$ and $\bar{B} =  N^{-1} \sum_{i=1}^{N}B_i$ be the finite population averages of $A$ and $B$, 
	and 
	$S_{AB} = (N-1)^{-1} \sum_{i=1}^N (A_i-\bar{A}) (B_i - \bar{B})$ be the finite population covariance between $A$ and $B$.
	For $t=0,1$, let $s_{AB}(t)$ be the sample covariance between $A$ and $B$ in the treatment group $t$. 
	
	First, we study the sampling property of $s_{AB}(t)$ under the CRSE. 
	Note that under the CRSE, the units in treatment group $t$ is essentially a simple random sample of size $n_t$ from the  finite population of $N$ units. 
	By the property of simple random sampling, we can know that $s_{AB}(t)$ is unbiased for $S_{AB}$. 
	Moreover, from Lemma \ref{lemma:estimate_lemma}, the variance of $s_{AB}(t)$ under the CRSE is bounded by
	\begin{align}\label{eq:var_SAB_t}
		\Var\{s_{AB}(t)\} & \le \frac{4n_t}{(n_t-1)^2}\cdot \max_{1 \le j\le N}(A_j-\bar A)^2 \cdot  \frac{1}{N-1} \sum_{i=1}^N (B_i - \bar B)^2
		\nonumber
		\\
		& = \frac{4n_t^2}{(n_t-1)^2}\cdot \frac{n}{n_t}\cdot \frac{1}{n}\max_{1 \le j\le N}(A_j-\bar A)^2 \cdot  \frac{1}{N-1} \sum_{i=1}^N (B_i - \bar B)^2
	\end{align}
	which must converge to zero under Condition \ref{cond:fp_analysis}.

	Second, we study the sampling property of $s_{AB}(t)$ under $\sresem$. 
	By the law of total expectation, we can know that under $\sresem$, 
	\begin{align}\label{eq:s_AB_t_resem}
		\E \big[  \big\{s_{AB}(t)-S_{AB}\big\}^2 \mid
		M_T \le a_T, M_S \le a_S  \big.  \big]
		&\le P  \big( M_T \le a_T, M_S \le a_S \big) ^{-1}
		\E\big[ \{s_{AB}(t)-S_{AB}\}^2   \big]   \nonumber
		\\&=  P  \big(M_T \le a_T, M_S \le a_S \big) ^{-1}  \Var\{s_{AB}(t)\} . 
	\end{align}
	From Lemma \ref{lemma:M_ST_CRSE} and \eqref{eq:var_SAB_t}, \eqref{eq:s_AB_t_resem} must converge to zero as $N\rightarrow \infty.$ 
	By the Markov inequality, this implies that $s_{AB}(t) - S_{AB} = o_{\Pr}(1)$ under $\sresem$.
	
	From the above, Lemma \ref{lemma:estimate} holds. 
\end{proof}

\begin{proof}[Proof of Lemma \ref{lemma:diff_average}]
	Define $\hat{\tau}_A = \bar{A}_1 - \bar{A}_0$ and $\hat{\delta}_A = \bar{A}_{\mathcal{S}} - \bar{A}$. 
	Then by definition, we can verify that $\bar{A}_1 - \bar{A}_{\mathcal{S}} = r_0 \hat{\tau}_A$ and $\bar{A}_0 - \bar{A}_{\mathcal{S}} = -r_1 \hat{\tau}_A$. 
	Thus, to prove Lemma \ref{lemma:diff_average}, 
	it suffices to show that both $\hat{\tau}_A$ and $\hat{\delta}_A$ are of order $O_{\Pr}(n^{-1/2})$. 
	By the same logic as Lemma \ref{lemma:clt}, under Condition \ref{cond:fp_analysis} and the CRSE, 
	$
	\sqrt{n} 
	(\hat\tau -\tau , \hat{\bs{\tau}}_{\bs{X}}^\top, \hat{\tau}_{A}, \hat{\bs{\delta}}_{\bs{W}}^\top, \hat{\delta}_{A})^\top 
	\dot\sim
	(H, \bs{B}_T^\top, B_1, \bs{B}_S^\top, B_{2} )^\top, 
	$
	where $(H, \bs{B}_T^\top, B_1, \bs{B}_S^\top, B_{2} )^\top$ follows a multivariate Gaussian distribution with mean zero and covariance matrix
	the sampling covariance of $
	\sqrt{n} 
	(\hat\tau -\tau , \hat{\bs{\tau}}_{\bs{X}}^\top, \hat{\tau}_{A}, \hat{\bs{\delta}}_{\bs{W}}^\top, \hat{\delta}_{A})^\top $ under the CRSE. 
	By Lemma \ref{lemma:s_x2} and Slutsky's theorem, 
	\begin{align*}
		\sqrt{n}
		\begin{pmatrix}
			\hat\tau -\tau \\
			(\bs{S}_{\bs{X}}^2)^{1/2}(\bs{s}^2_{\bs{X}})^{-1/2} \hat{\bs{\tau}}_{\bs{X}}\\
			\hat{\tau}_{A}\\
			\hat{\bs{\delta}}_{\bs{W}}\\ 
			\hat{\delta}_{A}
		\end{pmatrix}
		\ \dot\sim \  
		\begin{pmatrix}
			H\\
			\bs{B}_T\\
			\bs{B}_1\\
			\bs{B}_S\\
			\bs{B}_2
		\end{pmatrix}
	\end{align*}
	From \citet[][Corollary A1]{rerand2018}, we then have 
	\begin{align*}
		\sqrt{n}
		\begin{pmatrix}
			\hat\tau -\tau \\
			(\bs{S}_{\bs{X}}^2)^{1/2}(\bs{s}^2_{\bs{X}})^{-1/2} \hat{\bs{\tau}}_{\bs{X}}\\
			\hat{\tau}_{A}\\
			\hat{\bs{\delta}}_{\bs{W}}\\ 
			\hat{\delta}_{A}^\top
		\end{pmatrix}
		\mid M_T \le a_T, M_S \le a_S, 
		\ \dot\sim \ 
		\begin{pmatrix}
			H\\
			\bs{B}_T\\
			B_1\\
			\bs{B}_S\\
			B_2
		\end{pmatrix}
		\mid \bs{B}_T^\top \bs{V}_{\bs{xx}}^{-1} \bs{B}_T \le a_T, 
		\bs{B}_S^\top \bs{V}_{\bs{ww}}^{-1} \bs{B}_S \le a_S.  
	\end{align*}
	This immediately implies that 
	$\hat{\tau}_{A} = O_{\Pr}(n^{-1/2})$ and $\hat{\delta}_{A} = O_{\Pr}(n^{-1/2})$ under $\sresem$. 
	Therefore, Lemma \ref{lemma:diff_average} holds. 
\end{proof}

\begin{proof}[Proof of Lemma \ref{lemma:sotch_R2}]
	Lemma \ref{lemma:sotch_R2} follows immediately from \citet[][Lemma A10]{rerand2018}. 
\end{proof}

\begin{proof}[Proof of Lemma \ref{lemma:sotch_est}]
	If $V \le \tilde{V} = 0$, Lemma \ref{lemma:sotch_est} holds obviously. Below we consider only the case where $\tilde{V}>0$. 
	Because $\tilde{R}_1^2 \le VR_1^2/\tilde{V}$ and $\tilde{R}_2^2 \le VR_2^2/\tilde{V}$, 
	Lemma \ref{lemma:sotch_R2} implies that, for any $c\ge 0$, 
	\begin{align*}
		& \quad \ \Pr\left(
		\left| 
		\tilde{V}^{1/2}\left(\sqrt{1 - \tilde{R}_1^2 - \tilde{R}_2^2} \ \varepsilon
		+\sqrt{\tilde{R}_1^2}\ L_{K_1, a_1} + \sqrt{\tilde{R}_2^2}\ L_{K_2, a_2}
		\right)
		\right| \le c
		\right)
		\\
		& \le 
		\Pr\left(
		\left| 
		\tilde{V}^{1/2} \left(\sqrt{1 - VR_1^2/\tilde{V} - VR_2^2/\tilde{V}} \ \varepsilon
		+\sqrt{VR_1^2/\tilde{V}}\ L_{K_1, a_1} + \sqrt{VR_2^2/\tilde{V}}\ L_{K_2, a_2}
		\right)
		\right| \le c
		\right)\\
		& = 
		\Pr\left(
		\left| 
		\sqrt{\tilde{V} - VR_1^2 - VR_2^2} \ \varepsilon
		+\sqrt{VR_1^2}\ L_{K_1, a_1} + \sqrt{VR_2^2}\ L_{K_2, a_2}
		\right| \le c
		\right). 
	\end{align*}
	Because $\tilde{V} \ge V$, from Lemma \ref{lemma:linear_com_eps_L}, we then have
	\begin{align*}
		& \quad \ \Pr\left(
		\left| 
		\tilde{V}^{1/2}\left(\sqrt{1 - \tilde{R}_1^2 - \tilde{R}_2^2} \ \varepsilon
		+\sqrt{\tilde{R}_1^2}\ L_{K_1, a_1} + \sqrt{\tilde{R}_2^2}\ L_{K_2, a_2}
		\right)
		\right| \le c
		\right)
		\\
		& \le 
		\Pr\left(
		\left| 
		\sqrt{\tilde{V} - VR_1^2 - VR_2^2} \ \varepsilon
		+\sqrt{VR_1^2}\ L_{K_1, a_1} + \sqrt{VR_2^2}\ L_{K_2, a_2}
		\right| \le c
		\right)
		\\
		& \le 
		\Pr\left(
		\left| 
		\sqrt{V - VR_1^2 - VR_2^2} \ \varepsilon
		+\sqrt{VR_1^2}\ L_{K_1, a_1} + \sqrt{VR_2^2}\ L_{K_2, a_2}
		\right| \le c
		\right)\\
		& = \Pr\left(
		\left| 
		V^{1/2}\left(\sqrt{1 - R_1^2 - R_2^2} \ \varepsilon
		+\sqrt{R_1^2}\ L_{K_1, a_1} + \sqrt{R_2^2}\ L_{K_2, a_2}
		\right)
		\right| \le c
		\right).
	\end{align*}
	Therefore, Lemma \ref{lemma:sotch_est} holds. 
\end{proof}

\begin{proof}[Proof of Lemma \ref{lemma:continuous_cdf}]
	First, we prove that $\Psi_{V, R_1^2, R_2^2}(y)$ is continuous in $(V, R_1^2, R_2^2, y) \in (0, \infty) \times [0, 1] \times [0, 1] \times \mathbb{R}$. 
	Consider any $(V, R_1^2, R_2^2, y) \in (0, \infty) \times [0, 1] \times [0, 1] \times \mathbb{R}$
	and any sequence $\{(V_j, R_{1j}^2, R_{2j}^2, y_j)\}_{j=1}^\infty$
	in $(0, \infty) \times [0, 1] \times [0, 1] \times \mathbb{R}$ that converges to $(V, R_1^2, R_2^2, y)$. 
	We can derive that, as $j \rightarrow \infty$, 
	\begin{align*}
		& \quad \ \ \ \  V_j^{1/2} \left( \sqrt{1-R_{1j}^2-R_{2j}^2} \cdot \varepsilon + \sqrt{R_{1j}^2} \cdot L_{K_1, a_1} + \sqrt{R_{2j}^2} \cdot L_{K_2, a_2} \right) - y_j\\
		& 
		\convergeas 
		V^{1/2} \left( \sqrt{1-R_1^2-R_2^2} \cdot \varepsilon + \sqrt{R_1^2} \cdot L_{K_1, a_1} + \sqrt{R_2^2} \cdot L_{K_2, a_2} \right) - y. 
	\end{align*}
	This immediately implies that as $j\rightarrow\infty$, 
	\begin{align*}
		& \quad \ \ \ \  \Pr\left\{ V_j^{1/2} \left( \sqrt{1-R_{1j}^2-R_{2j}^2} \cdot \varepsilon + \sqrt{R_{1j}^2} \cdot L_{K_1, a_1} + \sqrt{R_{2j}^2} \cdot L_{K_2, a_2} \right) - y_j \le 0 \right\}
		\\
		& \rightarrow 
		\Pr\left( V^{1/2} \left( \sqrt{1-R_1^2-R_2^2} \cdot \varepsilon + \sqrt{R_1^2} \cdot L_{K_1, a_1} + \sqrt{R_2^2} \cdot L_{K_2, a_2} \right) - y \le 0 \right).
	\end{align*}
	Equivalently, as $j\rightarrow\infty$, 
	$
	\Psi_{V_j, R_{1j}^2, R_{2j}^2}(y_j) \rightarrow \Psi_{V, R_1^2, R_2^2}(y). 
	$
	Therefore, $\Psi_{V, R_1^2, R_2^2}(y)$ must be continuous in $(V, R_1^2, R_2^2, y) \in (0, \infty) \times [0, 1] \times [0, 1] \times \mathbb{R}$. 
	
	Second, we prove that $\Psi^{-1}_{V, R_1^2, R_2^2}(y)$ is continuous in $(V, R_1^2, R_2^2, p) \in (0, \infty) \times [0, 1] \times [0, 1] \times (0,1)$. 
	Consider any $(V, R_1^2, R_2^2, p) \in (0, \infty) \times [0, 1] \times [0, 1] \times (0,1)$
	and any sequence $\{(V_j, R_{1j}^2, R_{2j}^2, p_j)\}_{j=1}^\infty$
	in $(0, \infty) \times [0, 1] \times [0, 1] \times (0,1)$ that converges to $(V, R_1^2, R_2^2, p)$. 
	Let 
	$\xi = \Psi^{-1}_{V, R_1^2, R_2^2}(p)$ and $\xi_j = \Psi^{-1}_{V_j, R_{1j}^2, R_{2j}^2}(p_j)$ for all $j\ge 1$. 
	Below we prove that $\xi_j$ converges to $\xi$ by contradiction. 
	When $j$ is sufficiently large, $V_j$ must be positive, and  $\Psi^{-1}_{V_j, R_{1j}^2, R_{2j}^2}(\cdot)$ is the quantile funtion for a continuous distribution, which implies that 
	$\Psi_{V_j, R_{1j}^2, R_{2j}^2}(\xi_j) = p_j$. Similarly, $\Psi_{V, R_1^2, R_2^2}(\xi) = p$.  
	If $\xi_j$ does not converge to $\xi$,  then there must exist a subsequence of $\xi_j$ that converge to a point $\tilde{\xi} \ne \xi$. 
	From the first part of the proof, we can know that along this subsequence, $\Psi_{V_j, R_{1j}^2, R_{2j}^2}(\xi_j)$ must converge to $\Psi_{V, R_{1}^2, R_{2}^2}(\tilde{\xi})$,
	which is different from 
	$\Psi_{V, R_{1}^2, R_{2}^2}(\xi) = p$ since $\Psi_{V, R_{1}^2, R_{2}^2}(\cdot)$ is strictly increasing at $\xi$.
	However, this contradicts with the fact that $\Psi_{V_j, R_{1j}^2, R_{2j}^2}(\xi_j)=p_j$ converges to $p$ as $j\rightarrow \infty$. 
	Therefore, we must have $\xi_j = \Psi^{-1}_{V_j, R_{1j}^2, R_{2j}^2}(p_j)$ converges to $\xi = \Psi^{-1}_{V, R_1^2, R_2^2}(p)$ as $j \rightarrow \infty$. 
	
	From the above, Lemma \ref{lemma:continuous_cdf} holds. 
\end{proof}

\begin{proof}[Proof of Lemma \ref{lemma:conserv_inf}]
	We first consider the variance of $\hat{\psi}_N$. 
	Let $v_1$ and $v_2$ denote the variances of $L_{K_1, a_1}$ and $L_{K_2, a_2}$. 
	Then the variance of $\hat{\psi}_N$ satisfies that 
	\begin{align*}
		\Var(\hat{\psi}_N)
		& = 
		\hat{V}_N\big\{ 
		1 - (1-v_1)\hat{R}_{1,N}^2 - (1-v_2)\hat{R}_{2,N}^2
		\big\}
		= \tilde{V}_N - (1-v_1)\tilde{V}_N\tilde{R}_{1,N}^2 - (1-v_2)\tilde{V}_N\tilde{R}_{2, N}^2 + o_{\Pr}(1)\\
		& 
		\ge 
		V_N - (1-v_1)V_N R_{1,N}^2 - (1-v_2)V_N R_{2, N}^2 + o_{\Pr}(1)
		= 
		\Var_{\text{a}}(\psi_N) + o_{\Pr}(1). 
	\end{align*}
	Thus, as $N\rightarrow \infty$, 
	the probability limit of $\Var(\hat{\psi}_N)$ is greater than or equal to the asymptotic variance of $\psi_N$.

	We then consider the symmetric quantile ranges of the distribution of $\hat{\psi}_N$. 
	For any $\alpha\in (0,1)$, let $\xi_{\alpha,N}$ be the $(1-\alpha/2)$th quantile of the distribution of  $\hat{\psi}_N$. 
	From Lemma \ref{lemma:continuous_cdf} and the conditions in Lemma \ref{lemma:conserv_inf}, 
	by the continuous mapping theorem, $\xi_{\alpha,N}$ must converge in probability to 	
	the $(1-\alpha/2)$th quantile (denoted by $\xi_{\alpha,\infty}$) of the following distribution
		\begin{align*}%
			\tilde{\psi}_{\infty} \sim
			\tilde{V}_{\infty}^{1/2}\left(\sqrt{1 - \tilde{R}_{1,\infty}^2 - \tilde{R}_{2,\infty}^2} \ \varepsilon
			+\sqrt{\tilde{R}_{1,\infty}^2}\ L_{K_1, a_1} + \sqrt{\tilde{R}_{2,\infty}^2}\ L_{K_2, a_2}
			\right). 
		\end{align*}
	Let $\psi_{\infty}$ be a random variable following the asymptotic distribution of $\psi_N$, i.e., 
	\begin{align*}
		\psi_{\infty} \sim 
		V_{\infty}^{1/2}\left(\sqrt{1 - R_{1,\infty}^2 - R_{2,\infty}^2} \ \varepsilon
		+\sqrt{R_{1,\infty}^2}\ L_{K_1, a_1} + \sqrt{R_{2,\infty}^2}\ L_{K_2, a_2}
		\right). 
	\end{align*}
	Because $\tilde{V}_{\infty}>0$, $\xi_{\alpha,\infty}$ must be positive. 
	By Slutsky's theorem, we have 
	$
	\psi_{N}/\xi_{\alpha,N} \converged \psi_{\infty}/\xi_{\alpha,\infty}. 
	$
	This further implies that 
	\begin{align}\label{eq:cov_prob_proof_lemma}
		\Pr(|\psi_N| \le \xi_{\alpha,N})
		\rightarrow 
		\Pr(|\psi_\infty| \le \xi_{\alpha,\infty})
		\ge 
		\Pr(|\tilde{\psi}_\infty| \le \xi_{\alpha,\infty}) = 1-\alpha, 
	\end{align}
	where the last inequality holds due to Lemma \ref{lemma:sotch_est}, 
	and the last equality holds because $\tilde{\psi}_\infty$ is a continuous random variable. 
	
	From the above derivation, 
	we can verify that both the variance and quantile ranges of $\hat{\psi}_N$ become asymptotically exact for $\psi_N$ when $(\tilde{V}_\infty, \tilde{R}_{1,\infty}^2, \tilde{R}_{2,\infty}^2) = (V_\infty, R_{1,\infty}^2, R_{2,\infty}^2)$. 
	Therefore, Lemma \ref{lemma:conserv_inf} holds. 
\end{proof}

\subsection{Estimation of the regression adjustment coefficients}

\begin{proof}[\bf Proof of Theorem \ref{thm:plug_in}]
    From Corollary \ref{cor:equiv_resem_sresem}, 
    it suffices to prove that Theorem \ref{thm:plug_in} holds under $\sresem$.

	Recall that $\bs{s}^2_{\bs{C}}(t)$ and $\bs{s}^2_{\bs{E}}(t)$ are the sample covariance matrices of the $\bs{C}_i$'s and $\bs{E}_i$'s in treatment group $t$, respectively, for $t=0,1$. 
	Then by definition, we have 
	$\hat{\bs{\beta}}_t=\{s_{\bs{C}}^{2}(t)\}^{-1} s_{\bs{C},t}$ and $
	\hat{\bs{\gamma}}_t  =\{s_{\bs{E}}^{2}(t)\}^{-1}s_{\bs{E},t}.
	$
	From Lemmas \ref{lemma:estimate}
	and by the definition of $\tilde{\bs{\beta}}_t$ and $\tilde{\bs{\gamma}}_t$ in 
	Section \ref{sec:R2_proj_coef}, 
	we can derive that under $\sresem$,
	$\hat{\bs{\beta}}_t - \tilde{\bs{\beta}}_t = o_{\Pr}(1)$ 
	and 
	$\hat{\bs{\gamma}}_t - \tilde{\bs{\gamma}}_t = o_{\Pr}(1)$ for $t=0,1$. 
	By definition, 
	these further imply that under $\sresem$,
	$\hat{\bs{\beta}} - \tilde{\bs{\beta}} = o_{\Pr}(1)$ and $\hat{\bs{\gamma}} - \tilde{\bs{\gamma}} = o_{\Pr}(1)$.
	
	We then prove that $\sqrt{n}\{\hat{\tau}(\hat{\bs{\beta}}, \hat{\bs{\gamma}}) - \tau\}$ has the same asymptotic distribution as $\sqrt{n}\{
	\hat{\tau}(\tilde{\bs{\beta}}, \tilde{\bs{\gamma}}) - \tau
	\}$ under $\sresem$. 
	By definition and from Lemma \ref{lemma:diff_average}, 
	\begin{align*}
		\hat{\tau}(\hat{\bs{\beta}}, \hat{\bs{\gamma}})
		-
		\hat{\tau}(\tilde{\bs{\beta}}, \tilde{\bs{\gamma}}) & 
		= 
		(\tilde{\bs{\beta}}-\hat{\bs{\beta}})^\top \hat{\bs{\tau}}_{\bs{C}} 
		+
		(\tilde{\bs{\gamma}}-\hat{\bs{\gamma}})^\top \hat{\bs{\delta}}_{\bs{W}}
		= o_{\Pr}(n^{-1/2}).
	\end{align*}
	Consequently, under $\sresem$,
	$$
	\sqrt{n}\{\hat{\tau}(\hat{\bs{\beta}}, \hat{\bs{\gamma}}) - \tau\} = 
	\sqrt{n}\{
	\hat{\tau}(\tilde{\bs{\beta}}, \tilde{\bs{\gamma}}) - \tau
	\}+\sqrt{n}\{	\hat{\tau}(\hat{\bs{\beta}}, \hat{\bs{\gamma}})
	-
	\hat{\tau}(\tilde{\bs{\beta}}, \tilde{\bs{\gamma}})\} 
	= 
	\sqrt{n}\{
	\hat{\tau}(\tilde{\bs{\beta}}, \tilde{\bs{\gamma}}) - \tau
	\} + o_{\Pr}(1). 
	$$
	By Slutsky's theorem, 
	$\sqrt{n}\{\hat{\tau}(\hat{\bs{\beta}}, \hat{\bs{\gamma}}) - \tau\}$ and  $\sqrt{n}\{
	\hat{\tau}(\tilde{\bs{\beta}}, \tilde{\bs{\gamma}}) - \tau
	\}$ have the same asymptotic distribution under $\sresem$. 
	
	From the above, Theorem \ref{thm:plug_in} holds. 
\end{proof}

\subsection{Inference based on regression-adjusted estimator with fixed $(\bs{\beta}, \bs{\gamma})$}

\begin{proposition}\label{prop:est_reg_resem_fix_coef}
	Under ReSEM and Condition \ref{cond:fp_analysis}, 
	as $N\rightarrow \infty$, 
	if the analyzer knows all the design information, then 
	the estimators in \eqref{eq:VR2_estimate} satisfy 
	\begin{align*}
		\hat V_{\tau \tau}(\bs{\beta}, \bs{\gamma})
		& = 
		r_1^{-1}S^2_{1}(\bs{\beta}, \bs{\gamma}) + r_0^{-1}S^2_{0}(\bs{\beta}, \bs{\gamma})
		- 
		f \bs{S}^2_{\tau \mid \bs{C}}(\bs{\beta}, \bs{\gamma})
		+ o_{\Pr}(1), 
		\\
		\hat V_{\tau \tau}(\bs{\beta}, \bs{\gamma}) \hat{R}^2_S(\bs{\beta}, \bs{\gamma})
		& = 
		(1-f) S^2_{\tau\mid \bs{W}} (\bs{\beta}, \bs{\gamma}) + o_{\Pr}(1), 
		\\
		\hat V_{\tau \tau}(\bs{\beta}, \bs{\gamma}) \hat{R}^2_T(\bs{\beta}, \bs{\gamma})
		& = r_1^{-1} S^2_{1\mid  \bs{X}}(\bs{\beta}, \bs{\gamma}) +r_0^{-1} S^2_{0\mid  \bs{X}}(\bs{\beta}, \bs{\gamma}) - S^2_{\tau\mid \bs{X}}(\bs{\beta}, \bs{\gamma}) + o_{\Pr}(1). 
	\end{align*}
\end{proposition}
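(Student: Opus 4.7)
The plan is to reduce the claim to consistency results for within-arm sample moments under $\sresem$---which suffices by Corollary \ref{cor:equiv_resem_sresem}---and then assemble the pieces via the continuous mapping and Slutsky's theorems. The key observation is that every quantity appearing in $\hat V_{\tau\tau}(\bs{\beta}, \bs{\gamma})$, $\hat V_{\tau\tau}(\bs{\beta},\bs{\gamma})\hat R_S^2(\bs{\beta},\bs{\gamma})$, and $\hat V_{\tau\tau}(\bs{\beta},\bs{\gamma})\hat R_T^2(\bs{\beta},\bs{\gamma})$ is a smooth function of within-arm (or within-sample) sample covariances of fixed linear combinations of the observed outcome and the covariates $\bs{C}, \bs{E}, \bs{W}, \bs{X}$.

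First, I would observe that for any fixed $(\bs{\beta}, \bs{\gamma})$ the adjusted potential outcome $Y_i(t;\bs{\beta}, \bs{\gamma})$ defined in \eqref{eq:adj_potential_outcome} is a fixed linear combination of $Y_i(t)$, $\bs{C}_i$, and $\bs{E}_i$, so that under Condition \ref{cond:fp_analysis} the regularity hypotheses of Lemma \ref{lemma:estimate} are inherited by $Y(t;\bs{\beta},\bs{\gamma})$. Applying Lemma \ref{lemma:estimate} to each pair drawn from $\{Y(t;\bs{\beta},\bs{\gamma}), \bs{X}, \bs{W}, \bs{C}\}$ then yields, under $\sresem$,
\begin{align*}
s_t^2(\bs{\beta},\bs{\gamma}) - S_t^2(\bs{\beta},\bs{\gamma}) = o_{\Pr}(1), \quad \bs{s}_{\bs{X}}^2(t) - \bs{S}_{\bs{X}}^2 = o_{\Pr}(1), \quad \bs{s}_{t,\bs{X}}(\bs{\beta},\bs{\gamma}) - \bs{S}_{t,\bs{X}}(\bs{\beta},\bs{\gamma}) = o_{\Pr}(1),
\end{align*}
together with the analogous statements with $\bs{W}$ or $\bs{C}$ in place of $\bs{X}$.

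Second, I would dispose of the projection-type quantities. Since the limit of $\bs{S}_{\bs{X}}^2$ is nonsingular by Condition \ref{cond:fp}(iii), the maps $\bs{M} \mapsto \bs{M}^{-1/2}$ and $\bs{a} \mapsto \|\bs{a}\|_2^2$ are continuous at the relevant limits, so the continuous mapping theorem gives $\bs{s}_{\bs{X}}^{-1}(t) - (\bs{S}_{\bs{X}}^2)^{-1/2} = o_{\Pr}(1)$ for $t = 0, 1$. Expanding the squared norm in the definition of $s_{\tau\mid\bs{X}}^2(\bs{\beta},\bs{\gamma})$ and using the identity $S_{\tau\mid\bs{X}}^2(\bs{\beta},\bs{\gamma}) = S_{1\mid\bs{X}}^2(\bs{\beta},\bs{\gamma}) + S_{0\mid\bs{X}}^2(\bs{\beta},\bs{\gamma}) - 2\bs{S}_{1,\bs{X}}(\bs{\beta},\bs{\gamma})(\bs{S}_{\bs{X}}^2)^{-1}\bs{S}_{\bs{X},0}(\bs{\beta},\bs{\gamma})$ then yields $s_{\tau\mid\bs{X}}^2(\bs{\beta},\bs{\gamma}) = S_{\tau\mid\bs{X}}^2(\bs{\beta},\bs{\gamma}) + o_{\Pr}(1)$, and the same template handles $s_{t\mid\bs{X}}^2$, $s_{\tau\mid\bs{C}}^2$, and $s_{\tau\mid\bs{W}}^2$. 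Summing the consistent pieces according to the definitions in \eqref{eq:VR2_estimate} and invoking Slutsky's theorem then delivers the three displayed limits of the proposition.

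The main obstacle is not conceptual but bookkeeping: the sample projection variances are built from \emph{within-arm} sample covariance matrices $\bs{s}_{\bs{X}}^2(t)$ rather than from a single pooled matrix, so separate consistency statements are needed for each arm before the continuous mapping step can be applied. Since under $\sresem$ each treatment arm is still essentially a simple random sample from the finite population, Lemma \ref{lemma:estimate} supplies all the required ingredients; the remaining work is simply to expand the squared norms and regroup terms to recover the population analogues $S_{\tau\mid\bs{X}}^2$, $S_{\tau\mid\bs{C}}^2$, and $S_{\tau\mid\bs{W}}^2$.
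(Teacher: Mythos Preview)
Your proposal is correct and follows essentially the same route as the paper: reduce from ReSEM to $\sresem$, invoke Lemma \ref{lemma:estimate} for the within-arm sample covariances, and use continuous mapping on the inverse square roots to handle the projection variances. The only cosmetic difference is that the paper first expands $s_t^2(\bs{\beta},\bs{\gamma})$ and $\bs{s}_{t,\bs{X}}(\bs{\beta},\bs{\gamma})$ bilinearly in terms of unadjusted sample moments of $Y(t),\bs{C},\bs{E}$ and then applies Lemma \ref{lemma:estimate} to those raw pieces, whereas you apply the lemma directly to the adjusted outcome $Y(t;\bs{\beta},\bs{\gamma})$ after noting that the needed regularity conditions are inherited by fixed linear combinations; these are two equivalent bookkeeping choices.
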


\begin{proof}[\bf Proof of Proposition \ref{prop:est_reg_resem_fix_coef}]
    From Theorem \ref{thm:equiv_resem_sresem}, it suffices to prove that Proposition \ref{prop:est_reg_resem_fix_coef} holds under $\sresem$. 
	By definition and from Lemma \ref{lemma:estimate},
	under $\sresem$, 
	for $t=0, 1$, 
	\begin{align*}
		s_t^2(\bs{\beta}, \bs{\gamma})
		& = 
		s_t^2 + \bs{\beta}^\top s_{\bs{C}}^2(t) \bs{\beta} 
		+ 
		r_t^2 \bs{\gamma}^\top s_{\bs{E}}^2(t) \bs{\gamma}
		- 2 \bs{\beta}^\top \bs{s}_{\bs{C},t} - 2 (-1)^{t-1} r_t \bs{\gamma}^\top \bs{s}_{\bs{E},t} 
		+ 2 (-1)^{t-1} r_t \bs{\beta}^\top \bs{s}_{\bs{C}, \bs{E}}(t) \bs{\gamma}
		\\
		& = 
		S_t^2 + \bs{\beta}^\top S_{\bs{C}}^2 \bs{\beta} 
		+ 
		r_t^2 \bs{\gamma}^\top S_{\bs{E}}^2 \bs{\gamma}
		- 2 \bs{\beta}^\top \bs{S}_{\bs{C},t} - 2 (-1)^{t-1} r_t \bs{\gamma}^\top \bs{S}_{\bs{E},t} 
		+ 2 (-1)^{t-1} r_t \bs{\beta}^\top \bs{S}_{\bs{C}, \bs{E}} \bs{\gamma} + o_{\Pr}(1)
		\\
		& = 
		S_t^2(\bs{\beta}, \bs{\gamma}) + o_{\Pr}(1), 
	\end{align*}
	and 
	\begin{align*}
		\bs{s}_{t, \bs{C}}(\bs{\beta}, \bs{\gamma}) \bs{s}_{\bs{C}}^{-1}(t)
		& = 
		\big\{ \bs{s}_{t, \bs{C}} - \bs{\beta}^\top \bs{s}^2_{\bs{C}}(t) - (-1)^{t-1} r_t \bs{\gamma}^\top \bs{s}_{\bs{E}, \bs{C}}(t) \big\} \cdot 
		\big\{ \bs{s}_{\bs{C}}^{2}(t) \big\}^{-1/2}
		\\
		& = 
		\big\{ \bs{S}_{t, \bs{C}} - \bs{\beta}^\top \bs{S}^2_{\bs{C}} - (-1)^{t-1} r_t \bs{\gamma}^\top \bs{S}_{\bs{E}, \bs{C}} \big\} \cdot 
		\big( \bs{S}_{\bs{C}}^{2} \big)^{-1/2} + o_{\Pr}(1)\\
		& = \bs{S}_{t, \bs{C}}(\bs{\beta}, \bs{\gamma}) \big( \bs{S}_{\bs{C}}^{2} \big)^{-1/2} + o_{\Pr}(1),  
	\end{align*}
	which further implies that 
	\begin{align*}
		s_{\tau \mid \bs{C}}^2 (\bs{\beta}, \bs{\gamma})  & = \big\| \bs{s}_{1,\bs{C}}(\bs{\beta}, \bs{\gamma}) \cdot \bs{s}^{-1}_{\bs{C}}(1) - \bs{s}_{0,\bs{C}}(\bs{\beta}, \bs{\gamma}) \cdot \bs{s}^{-1}_{\bs{C}}(0) \big\|_2^2
		\\
		& = 
		\big\| \bs{S}_{1, \bs{C}}(\bs{\beta}, \bs{\gamma}) \big( \bs{S}_{\bs{C}}^{2} \big)^{-1/2} - \bs{S}_{0, \bs{C}}(\bs{\beta}, \bs{\gamma}) \big( \bs{S}_{\bs{C}}^{2} \big)^{-1/2} \big\|_2^2 + o_{\Pr}(1)\\
		& 
		= \bs{S}_{\tau\mid \bs{C}}^2 (\bs{\beta}, \bs{\gamma}) + o_{\Pr}(1).
	\end{align*}
	By the same logic, 
	$
	\bs{s}_{\tau\mid \bs{W}}^2 (\bs{\beta}, \bs{\gamma}) = \bs{S}_{\tau\mid \bs{W}}^2 (\bs{\beta}, \bs{\gamma}) + o_{\Pr}(1)$ 
	and 
	$\bs{s}_{\tau\mid \bs{X}}^2 (\bs{\beta}, \bs{\gamma}) = \bs{S}_{\tau\mid \bs{X}}^2 (\bs{\beta}, \bs{\gamma}) + o_{\Pr}(1), 
	$
	and for $t=0,1$, 
	\begin{align*}
		s^2_{t\mid \bs{X}}(\bs{\beta}, \bs{\gamma})
		& = 
		\bs{s}_{t,\bs{X}}(\bs{\beta}, \bs{\gamma}) \cdot \big\{ \bs{s}^2_{\bs{X}}(t) \big\}^{-1} \bs{s}_{\bs{X}, t}(\bs{\beta}, \bs{\gamma})
		= \bs{S}_{t,\bs{X}}(\bs{\beta}, \bs{\gamma}) \cdot \big( \bs{S}^2_{\bs{X}} \big)^{-1} \bs{S}_{\bs{X}, t}(\bs{\beta}, \bs{\gamma}) + o_{\Pr}(1)
		\\
		& = \bs{S}_{t\mid \bs{X}} (\bs{\beta}, \bs{\gamma}) + o_{\Pr}(1). 
	\end{align*}
	From the above and by definition, we can immediately derive Proposition \ref{prop:est_reg_resem_fix_coef}. 
\end{proof}

\begin{proposition}\label{prop:est_fix_coef_unknown_design}
	For any $(\bs{\beta}, \bs{\gamma})$,  define 
	\begin{align*}
		\tilde{V}_{\tau \tau}(\bs{\beta}, \bs{\gamma})
		& = 
		r_1^{-1}S^2_{1}(\bs{\beta}, \bs{\gamma}) + r_0^{-1}S^2_{0}(\bs{\beta}, \bs{\gamma})
		- 
		f \bs{S}^2_{\tau \mid \bs{C}}(\bs{\beta}, \bs{\gamma})
		= 
		V_{\tau \tau}(\bs{\beta}, \bs{\gamma}) + f S^2_{\tau \setminus \bs{C}}(\bs{\beta}, \bs{\gamma}),
	\\
		\tilde{R}^2_S(\bs{\beta}, \bs{\gamma})
		& = 
		\begin{cases}
			\frac{(1-f) S^2_{\tau\mid \bs{W}} (\bs{\beta}, \bs{\gamma})}{\tilde{V}_{\tau \tau}(\bs{\beta}, \bs{\gamma})}
			= 
			\frac{V_{\tau \tau}(\bs{\beta}, \bs{\gamma}) R^2_S(\bs{\beta}, \bs{\gamma})}{\tilde{V}_{\tau \tau}(\bs{\beta}, \bs{\gamma})}, & \text{if both $\bs{W}$ and $a_S$ are known}, 
			\\
			0, & \text{otherwise}, 
		\end{cases}
	\\
		\tilde{R}^2_T(\bs{\beta}, \bs{\gamma})
		& = 
		\begin{cases}
			\frac{r_1^{-1} S^2_{1\mid  \bs{X}}(\bs{\beta}, \bs{\gamma}) +r_0^{-1} S^2_{0\mid  \bs{X}}(\bs{\beta}, \bs{\gamma}) - S^2_{\tau\mid \bs{X}}(\bs{\beta}, \bs{\gamma})}{\tilde{V}_{\tau \tau}(\bs{\beta}, \bs{\gamma})} 
			= \frac{V_{\tau \tau}(\bs{\beta}, \bs{\gamma}) R^2_T(\bs{\beta}, \bs{\gamma})}{\tilde{V}_{\tau \tau}(\bs{\beta}, \bs{\gamma})}, 
			& \text{if both $\bs{X}$ and $a_T$ are known}, \\
			0, & \text{otherwise}. 
		\end{cases}
	\end{align*}
	\begin{itemize}
		\item[(i)] $\tilde{V}_{\tau \tau}(\bs{\beta}, \bs{\gamma})$ has the equivalent form: 
		$\tilde{V}_{\tau \tau}(\bs{\beta}, \bs{\gamma}) = V_{\tau \tau}(\bs{\beta}, \bs{\gamma}) + f S^2_{\tau \setminus \bs{C}}$; 
		\item[(ii)] Under ReSEM and Condition \ref{cond:fp_analysis}, 
		\begin{equation*}%
			\hat V_{\tau \tau}(\bs{\beta}, \bs{\gamma}) = \tilde{V}_{\tau \tau}(\bs{\beta}, \bs{\gamma}) + o_{\Pr}(1), 
			\ \ 
			\hat{R}^2_S(\bs{\beta}, \bs{\gamma}) = \tilde{R}^2_S(\bs{\beta}, \bs{\gamma}) + o_{\Pr}(1), 
			\ \  
			\hat{R}^2_T(\bs{\beta}, \bs{\gamma}) = \tilde{R}^2_T(\bs{\beta}, \bs{\gamma}) + o_{\Pr}(1),
		\end{equation*}
		\item[(iii)] If further 
		Condition \ref{cond:fp_est} holds, 
        then $\tilde{V}_{\tau \tau}(\bs{\beta}, \bs{\gamma})$ must have a positive limit. 
	\end{itemize}
\end{proposition}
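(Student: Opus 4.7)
The plan is to treat the three parts in order, with part (i) providing the key algebraic identity that then feeds into parts (ii) and (iii). The recurring principle is that any linear function of $\bs{C}$ added to the potential outcomes does not change their $\bs{C}$-residuals, and that, under $\bs{E}\subset\bs{C}$ (which is built into the setup of Section \ref{sec:adj_resem}), a linear function of $\bs{E}$ is also such a function.

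For part (i), I would compute directly from the definitions that
\[
\tilde{V}_{\tau\tau}(\bs{\beta},\bs{\gamma}) - V_{\tau\tau}(\bs{\beta},\bs{\gamma}) = f\bigl\{ S^2_\tau(\bs{\beta},\bs{\gamma}) - S^2_{\tau\mid\bs{C}}(\bs{\beta},\bs{\gamma}) \bigr\} = f\, S^2_{\tau\setminus\bs{C}}(\bs{\beta},\bs{\gamma}),
\]
and then argue that $S^2_{\tau\setminus\bs{C}}(\bs{\beta},\bs{\gamma}) = S^2_{\tau\setminus\bs{C}}$. Subtracting the two lines of \eqref{eq:adj_potential_outcome} yields the clean identity $\tau_i(\bs{\beta},\bs{\gamma}) \equiv Y_i(1;\bs{\beta},\bs{\gamma}) - Y_i(0;\bs{\beta},\bs{\gamma}) = \tau_i - \bs{\gamma}^\top(\bs{E}_i - \bar{\bs{E}})$, so $\bs{\beta}$ drops out entirely and only a linear function of $\bs{E}$ remains. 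Since $\bs{E} \subset \bs{C}$, the finite-population projection of $\bs{\gamma}^\top \bs{E}$ onto $\bs{C}$ equals $\bs{\gamma}^\top \bs{E}$ itself, so the residual of $\tau(\bs{\beta},\bs{\gamma})$ on $\bs{C}$ coincides with the residual of $\tau$ on $\bs{C}$, giving the desired invariance.

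For part (ii), I would invoke Proposition \ref{prop:est_reg_resem_fix_coef} directly. It provides $\hat V_{\tau\tau}(\bs{\beta},\bs{\gamma}) = \tilde V_{\tau\tau}(\bs{\beta},\bs{\gamma}) + o_{\Pr}(1)$ as an immediate consequence, and the analogous convergence of the products $\hat V_{\tau\tau}\hat R_S^2$ and $\hat V_{\tau\tau}\hat R_T^2$ whenever the corresponding design information is accessible. Dividing the products by $\hat V_{\tau\tau}$ then yields $\hat R_S^2(\bs{\beta},\bs{\gamma}) = \tilde R_S^2(\bs{\beta},\bs{\gamma}) + o_{\Pr}(1)$ and similarly for $\hat R_T^2$, provided $\tilde V_{\tau\tau}(\bs{\beta},\bs{\gamma})$ stays bounded away from zero in the limit; this is guaranteed by part (iii) in the nontrivial regime. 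When $\bs{W}$ or $a_S$ is unknown (or analogously $\bs{X}$ or $a_T$), both $\hat R_S^2$ and $\tilde R_S^2$ are defined to equal zero, and the convergence is trivial.

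For part (iii), I would decompose $S_t^2(\bs{\beta},\bs{\gamma}) = S^2_{t\mid\bs{C}}(\bs{\beta},\bs{\gamma}) + S^2_{t\setminus\bs{C}}(\bs{\beta},\bs{\gamma})$ and rewrite
\[
\tilde V_{\tau\tau}(\bs{\beta},\bs{\gamma}) = \Bigl[ r_1^{-1}S^2_{1\mid\bs{C}}(\bs{\beta},\bs{\gamma}) + r_0^{-1}S^2_{0\mid\bs{C}}(\bs{\beta},\bs{\gamma}) - f S^2_{\tau\mid\bs{C}}(\bs{\beta},\bs{\gamma}) \Bigr] + r_1^{-1}S^2_{1\setminus\bs{C}}(\bs{\beta},\bs{\gamma}) + r_0^{-1}S^2_{0\setminus\bs{C}}(\bs{\beta},\bs{\gamma}).
\]
The bracketed term is nonnegative by the same algebra used in the proof of Proposition \ref{prop:R2} to show $r_1^{-1}S^2_{1\mid\bs{X}} + r_0^{-1}S^2_{0\mid\bs{X}} - S^2_{\tau\mid\bs{X}}$ is a nonnegative quadratic form in $(\bs{S}^2_{\bs{C}})^{-1}$, combined with $f \le 1$. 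Repeating the argument of part (i) for each adjusted potential outcome individually (rather than for their difference) shows $S^2_{t\setminus\bs{C}}(\bs{\beta},\bs{\gamma}) = S^2_{t\setminus\bs{C}}$ for $t=0,1$, since $\bs{\beta}^\top\bs{C}$ and $r_t\bs{\gamma}^\top\bs{E}$ both lie in the span of $\bs{C}$. Consequently $\tilde V_{\tau\tau}(\bs{\beta},\bs{\gamma}) \ge r_1^{-1}S^2_{1\setminus\bs{C}} + r_0^{-1}S^2_{0\setminus\bs{C}}$, whose limit is positive by Condition \ref{cond:fp_est} together with Condition \ref{cond:fp}(ii). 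The main obstacle is really the invariance argument in part (i); once one sees that adjustment coefficients lying in the column span of $\bs{C}$ never affect $\bs{C}$-residual variances, the remaining steps are routine plug-ins from Proposition \ref{prop:est_reg_resem_fix_coef} and the projection algebra already developed in Appendix \ref{app:crse}.
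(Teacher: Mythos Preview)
Your proof is correct. Parts (i) and (ii) follow the paper's argument essentially verbatim: the paper also computes $\tau_i(\bs{\beta},\bs{\gamma}) = \tau_i - \bs{\gamma}^\top(\bs{E}_i-\bar{\bs{E}})$ and uses $\bs{E}\subset\bs{C}$ to get residual invariance, and for (ii) it likewise points to Proposition \ref{prop:est_reg_resem_fix_coef} together with the conventions for unknown design information.

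For part (iii), you take a genuinely different and more direct route. You split $\tilde V_{\tau\tau}(\bs{\beta},\bs{\gamma})$ into its $\bs{C}$-projection and $\bs{C}$-residual components, bound the projection piece below by zero via $f\le 1$ and the quadratic-form identity from the proof of Proposition \ref{prop:R2}, and then apply the residual-invariance observation (now to each potential outcome separately) to replace $S^2_{t\setminus\bs{C}}(\bs{\beta},\bs{\gamma})$ by $S^2_{t\setminus\bs{C}}$. The paper instead routes through the separability decomposition \eqref{eq:simp_form_V_reg} and Lemmas \ref{lemma:simp_W_in_E}--\ref{lemma:simp_X_in_C} to first obtain $V_{\tau\tau}(\bs{\beta},\bs{\gamma}) \ge V_{\tau\tau}(1-R_E^2-R_C^2)$, then expands this optimal-adjustment lower bound explicitly, and finally adds $fS^2_{\tau\setminus\bs{C}}$. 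Both arguments land on the same inequality $\tilde V_{\tau\tau}(\bs{\beta},\bs{\gamma}) \ge r_1^{-1}S^2_{1\setminus\bs{C}} + r_0^{-1}S^2_{0\setminus\bs{C}}$, but yours is shorter and self-contained, needing only the projection--residual split rather than the optimality machinery of Section \ref{sec:optimal_ana_more}; the paper's route, on the other hand, makes the connection to the $\mathcal{S}$-optimal estimator explicit.
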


\begin{proof}[\bf Proof of Proposition \ref{prop:est_fix_coef_unknown_design}]
	We first prove the equivalent form of $\tilde{V}_{\tau \tau}(\bs{\beta}, \bs{\gamma})$. 
	By definition, the adjusted individual effect for each unit $i$ has the following equivalent forms: 
	\begin{align*}
		\tau_i(\bs{\beta}, \bs{\gamma}) & = Y_i(1; \bs{\beta}, \bs{\gamma}) - Y_i(0; \bs{\beta}, \bs{\gamma}) = 
		\{ Y_i(1) - \bs{\beta}^\top \bs{C}_i - r_1 \bs{\gamma}^\top (\bs{E}_i - \bar{\bs{E}}) \} - 
		\{ Y_i(0) - \bs{\beta}^\top \bs{C}_i + r_0 \bs{\gamma}^\top (\bs{E}_i - \bar{\bs{E}}) \}\\
		& = \tau_i - \bs{\gamma}^\top (\bs{E}_i - \bar{\bs{E}}). 
	\end{align*}
	Because $\bs{E} \subset \bs{C}$, the residual from the linear projection of the adjusted individual effect $\tau_i(\bs{\beta}, \bs{\gamma})$ on $\bs{C}_i$ must be the same as that from the linear projection of the original individual effect $\tau_i$ on $\bs{C}_i$. 
	Consequently, $S^2_{\tau \setminus \bs{C}}(\bs{\beta}, \bs{\gamma})  = S^2_{\tau \setminus \bs{C}}$, and thus $\tilde{V}_{\tau \tau}(\bs{\beta}, \bs{\gamma})$ has the following equivalent forms: 
	\begin{align*}
		\tilde{V}_{\tau \tau}(\bs{\beta}, \bs{\gamma})
		& 
		= 
		V_{\tau \tau}(\bs{\beta}, \bs{\gamma}) + f S^2_{\tau \setminus \bs{C}}(\bs{\beta}, \bs{\gamma})
		= 
		V_{\tau \tau}(\bs{\beta}, \bs{\gamma}) + f S^2_{\tau \setminus \bs{C}}. 
	\end{align*}
	
	We then prove (ii) in Proposition \ref{prop:est_fix_coef_unknown_design}. Indeed, it follows immediately from the proof of Proposition \ref{prop:est_reg_resem_fix_coef}  and the construction of our estimators when there lacks some design information. 
	
	Finally, we consider the limit of $\tilde{V}_{\tau \tau}(\bs{\beta}, \bs{\gamma})$. 
	From \eqref{eq:simp_form_V_reg} in the proof of Corollary \ref{cor:dist_reg_general_equ} and Lemmas \ref{lemma:simp_W_in_E} and \ref{lemma:simp_X_in_C}, 
	$V_{\tau \tau}(\bs{\beta}, \bs{\gamma})$ has the following equivalent forms:
	\begin{align*}
		V_{\tau \tau}(\bs{\beta}, \bs{\gamma})
		& = V_{\tau\tau}(\bs{\beta}, \bs{0}) + V_{\tau\tau}(\bs{0}, \bs{\gamma}) - V_{\tau\tau}\\
		& = V_{\tau\tau} (1-R_E^2 - R_C^2 )+  (r_1 r_0)^{-1} (\bs{\beta} - \tilde{\bs{\beta}})^\top \bs{S}^2_{\bs{C}} (\bs{\beta} - \tilde{\bs{\beta}}) +  (1-f) (\bs{\gamma} - \tilde{\bs{\gamma}})^\top \bs{S}^2_{\bs{E}} (\bs{\gamma} - \tilde{\bs{\gamma}})\\
		& \ge V_{\tau\tau} (1-R_E^2 - R_C^2 ). 
	\end{align*}
	By the definitons of $R_E^2$ and $R_C^2$ in \eqref{eq:R_A2}, the lower bound of $V_{\tau \tau}(\bs{\beta}, \bs{\gamma})$ has the following equivalent forms:
	\begin{align*}
		V_{\tau\tau} (1 - R_E^2 -R_C^2)
		& = 
		r_1^{-1}S^2_1 +r_0^{-1}S^2_0 -fS^2_\tau - (1-f)S^2_{\tau \mid  \bs{E}} - r_1^{-1}S^2_{1 \mid  \bs{C}}-r_0^{-1}S^2_{0 \mid  \bs{C}} +S^2_{\tau \mid  \bs{C}}
		\\
		& = 
		r_1^{-1}S^2_{1\setminus \bs{C}} +r_0^{-1}S^2_{0\setminus \bs{C}} -f(S^2_{\tau\mid \bs{C}} + S^2_{\tau\setminus \bs{C}}) - (1-f)S^2_{\tau \mid  \bs{E}} +S^2_{\tau \mid  \bs{C}}
		\\
		& = r_1^{-1}S^2_{1\setminus \bs{C}} +r_0^{-1}S^2_{0\setminus \bs{C}} -fS^2_{\tau\setminus\bs{C}} + (1-f) 
		\big( S^2_{\tau \mid  \bs{C}} - S^2_{\tau \mid  \bs{E}}\big)
		\\
		& \ge r_1^{-1}S^2_{1\setminus \bs{C}} +r_0^{-1}S^2_{0\setminus \bs{C}} -fS^2_{\tau\setminus\bs{C}}, 
	\end{align*}
	where the last inequality holds because $\bs{E} \subset \bs{C}$. 
	These imply that $\tilde{V}_{\tau \tau}(\bs{\beta}, \bs{\gamma})$ can be bounded by 
	\begin{align*}
		\tilde{V}_{\tau \tau}(\bs{\beta}, \bs{\gamma})
		& = V_{\tau \tau}(\bs{\beta}, \bs{\gamma}) + f S^2_{\tau \setminus \bs{C}}
		\ge r_1^{-1}S^2_{1\setminus \bs{C}} +r_0^{-1}S^2_{0\setminus \bs{C}}, 
	\end{align*}
	 which must have a positive limit as $N\rightarrow \infty$ under Condition \ref{cond:fp_est}. 
	 Therefore, $\tilde{V}_{\tau \tau}(\bs{\beta}, \bs{\gamma})$ must have a positive limit. 
	 
	 From the above, Proposition \ref{prop:est_fix_coef_unknown_design} holds. 
\end{proof}

\begin{proof}[\bf Proof for conservative inference based on adjusted estimator $\hat{\tau}(\bs{\beta}, \bs{\gamma})$]
	By definition and Proposition \ref{prop:est_fix_coef_unknown_design}(i), 
	\begin{align*}
		\tilde{V}_{\tau \tau}(\bs{\beta}, \bs{\gamma})
		& = 
		V_{\tau \tau}(\bs{\beta}, \bs{\gamma}) + f S^2_{\tau \setminus \bs{C}} \ge V_{\tau \tau}(\bs{\beta}, \bs{\gamma}), 
		\\
		\tilde{V}_{\tau \tau}(\bs{\beta}, \bs{\gamma}) \tilde{R}^2_S(\bs{\beta}, \bs{\gamma})
		& = 
		\begin{cases}
			V_{\tau \tau}(\bs{\beta}, \bs{\gamma}) R^2_S(\bs{\beta}, \bs{\gamma}), & \text{if both $\bs{W}$ and $a_S$ are known}
			\\
			0, & \text{otherwise}
		\end{cases}
		\le V_{\tau \tau}(\bs{\beta}, \bs{\gamma}) R^2_S(\bs{\beta}, \bs{\gamma}), 
		\\
		\tilde{V}_{\tau \tau}(\bs{\beta}, \bs{\gamma}) \tilde{R}^2_T(\bs{\beta}, \bs{\gamma}) 
		& = 
		\begin{cases}
			V_{\tau \tau}(\bs{\beta}, \bs{\gamma}) R^2_T(\bs{\beta}, \bs{\gamma}), 
			& \text{if both $\bs{X}$ and $a_T$ are known} \\
			0, & \text{otherwise}
		\end{cases}
		\le V_{\tau \tau}(\bs{\beta}, \bs{\gamma}) R^2_T(\bs{\beta}, \bs{\gamma}). 
	\end{align*}
	From Proposition \ref{prop:est_fix_coef_unknown_design} and Lemma \ref{lemma:conserv_inf}, we can immediately know that both the variance estimator and confidence intervals based on the estimated distribution are asymptotically conservative. 
	Furthermore, when the design information is known, 
	both the variance estimator and confidence intervals beocme asymptotically exact when $\tilde{V}_{\tau \tau}(\bs{\beta}, \bs{\gamma}) - V_{\tau \tau}(\bs{\beta}, \bs{\gamma}) = o(1)$ or equivalently $f S^2_{\tau \setminus \bs{C}} =o(1)$, which holds when the proportion of sampled units $f = o(1)$ or the adjusted treatment effects are asymptotically additive in the sense that $S^2_{\tau \setminus \bs{C}} = o(1)$. 
\end{proof}

\begin{rmk*}
		Here we give an additional remark on our estimators in Section \ref{sec:estimate_and_CI}. 
		Specifically, we can replace the terms  $\bs{s}^2_{\bs{X}}(1)$ and $\bs{s}^2_{\bs{X}}(0)$ by $\bs{s}^2_{\bs{X}}$ (and analogously for those terms involving covariates $\bs{W}$ and $\bs{C}$), 
		and the resulting variance estimator and confidence intervals will still be asymptotically conservative. The reason is that all the three terms, $\bs{s}^2_{\bs{X}}(1)$,  $\bs{s}^2_{\bs{X}}(0)$ and $\bs{s}^2_{\bs{X}}$ are equal to $\bs{S}^2_{\bs{X}}+o_{\Pr}(1)$, as implied by Lemma \ref{lemma:estimate} and later Lemma  \ref{lemma:cov_sampled_units_resem}. 
		However, 
		for $t=0,1$, 
		$\bs{s}^2_{\bs{X}}(t)$ may be preferred since it is correlated with $\bs{s}_{t, \bs{X}}(\bs{\beta}, \bs{\gamma})$, and 
		the estimator $\bs{s}_{t, \bs{X}}(\bs{\beta}, \bs{\gamma})\bs{s}^{-1}_{\bs{X}}(t)$, compared to $\bs{s}_{t, \bs{X}}(\bs{\beta}, \bs{\gamma})\bs{s}^{-1}_{\bs{X}}$, 
		can lead to gains in precision. 
		This is related to ratio estimators in survey sampling \citep{cochran1977}. See also \citet{decompose2019} for 
        related discussion. 
\end{rmk*}

\subsection{Inference based on regression-adjusted estimator with estimated $(\hat{\bs{\beta}}, \hat{\bs{\gamma}})$}

\begin{proposition}\label{prop:est_reg_resem_est_coef}
	Under ReSEM and Condition \ref{cond:fp_analysis}, 
	as $N\rightarrow \infty$, 
	if the analyzer knows all the design information, then 
	the estimators in \eqref{eq:VR2_estimate} with estimated coefficients $(\hat{\bs{\beta}}, \hat{\bs{\gamma}})$ satisfy 
	\begin{align*}
		\hat V_{\tau \tau}(\hat{\bs{\beta}}, \hat{\bs{\gamma}})
		& = 
		r_1^{-1}S^2_{1}(\tilde{\bs{\beta}}, \tilde{\bs{\gamma}}) + r_0^{-1}S^2_{0}(\tilde{\bs{\beta}}, \tilde{\bs{\gamma}})
		- 
		f \bs{S}^2_{\tau \mid \bs{C}}(\tilde{\bs{\beta}}, \tilde{\bs{\gamma}})
		+ o_{\Pr}(1), 
		\\
		\hat V_{\tau \tau}(\hat{\bs{\beta}}, \hat{\bs{\gamma}}) \hat{R}^2_S(\hat{\bs{\beta}}, \hat{\bs{\gamma}})
		& = 
		(1-f) S^2_{\tau\mid \bs{W}} (\tilde{\bs{\beta}}, \tilde{\bs{\gamma}}) + o_{\Pr}(1), 
		\\
		\hat V_{\tau \tau}(\hat{\bs{\beta}}, \hat{\bs{\gamma}}) \hat{R}^2_T(\hat{\bs{\beta}}, \hat{\bs{\gamma}})
		& = r_1^{-1} S^2_{1\mid  \bs{X}}(\tilde{\bs{\beta}}, \tilde{\bs{\gamma}}) +r_0^{-1} S^2_{0\mid  \bs{X}}(\tilde{\bs{\beta}}, \tilde{\bs{\gamma}}) - S^2_{\tau\mid \bs{X}}(\tilde{\bs{\beta}}, \tilde{\bs{\gamma}}) + o_{\Pr}(1). 
	\end{align*}
\end{proposition}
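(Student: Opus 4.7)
\medskip

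\noindent\textbf{Proof proposal for Proposition \ref{prop:est_reg_resem_est_coef}.}

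The plan is to reduce Proposition \ref{prop:est_reg_resem_est_coef} to Proposition \ref{prop:est_reg_resem_fix_coef} evaluated at the fixed population projection coefficients $(\tilde{\bs{\beta}}, \tilde{\bs{\gamma}})$, by showing that plugging in $(\hat{\bs{\beta}}, \hat{\bs{\gamma}})$ in place of $(\tilde{\bs{\beta}}, \tilde{\bs{\gamma}})$ introduces only an $o_{\Pr}(1)$ perturbation in each of the relevant sample quantities. By Theorem \ref{thm:equiv_resem_sresem} and Corollary \ref{cor:equiv_resem_sresem}, it suffices to work under $\sresem$, where I can freely invoke Theorem \ref{thm:plug_in} to get $\hat{\bs{\beta}} - \tilde{\bs{\beta}} = o_{\Pr}(1)$ and $\hat{\bs{\gamma}} - \tilde{\bs{\gamma}} = o_{\Pr}(1)$, as well as Lemma \ref{lemma:estimate} to get that all sample covariance matrices $\bs{s}^2_{\bs{C}}(t),\bs{s}^2_{\bs{E}}(t),\bs{s}^2_{\bs{X}}(t)$ and all sample cross-covariances $\bs{s}_{t,\bs{C}},\bs{s}_{t,\bs{E}},\bs{s}_{t,\bs{X}}$ converge in probability to their (nonsingular) population analogues, so in particular they are $O_{\Pr}(1)$ and bounded away from singularity.

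First I would examine $s_t^2(\bs{\beta}, \bs{\gamma})$. From the identity
\begin{align*}
s_t^2(\bs{\beta}, \bs{\gamma})
&= s_t^2 + \bs{\beta}^\top \bs{s}_{\bs{C}}^2(t) \bs{\beta} + r_t^2 \bs{\gamma}^\top \bs{s}_{\bs{E}}^2(t)\bs{\gamma}
- 2\bs{\beta}^\top \bs{s}_{\bs{C},t} - 2(-1)^{t-1} r_t \bs{\gamma}^\top \bs{s}_{\bs{E},t}
\\
&\quad + 2(-1)^{t-1} r_t \bs{\beta}^\top \bs{s}_{\bs{C},\bs{E}}(t)\bs{\gamma}
\end{align*}
derived in the proof of Proposition \ref{prop:est_reg_resem_fix_coef}, the quantity $s_t^2(\bs{\beta}, \bs{\gamma})$ is a fixed quadratic polynomial in $(\bs{\beta}, \bs{\gamma})$ with $O_{\Pr}(1)$ coefficients. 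Hence, by continuous mapping applied jointly to $(\hat{\bs{\beta}} - \tilde{\bs{\beta}}, \hat{\bs{\gamma}} - \tilde{\bs{\gamma}}) = o_{\Pr}(1)$ and the coefficient estimators, $s_t^2(\hat{\bs{\beta}}, \hat{\bs{\gamma}}) = s_t^2(\tilde{\bs{\beta}}, \tilde{\bs{\gamma}}) + o_{\Pr}(1)$. Exactly the same argument handles $\bs{s}_{t,\bs{C}}(\bs{\beta}, \bs{\gamma})$, $\bs{s}_{t,\bs{W}}(\bs{\beta}, \bs{\gamma})$, and $\bs{s}_{t,\bs{X}}(\bs{\beta}, \bs{\gamma})$, each of which is an affine function of $(\bs{\beta}, \bs{\gamma})$ with $O_{\Pr}(1)$ coefficients.

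Next I would lift this to the projection-based quantities $s^2_{t\mid\bs{X}}$, $s^2_{\tau\mid\bs{X}}$, $s^2_{\tau\mid\bs{C}}$, $s^2_{\tau\mid\bs{W}}$ that enter the definitions of $\hat{R}_S^2$ and $\hat{R}_T^2$. Each of these is a smooth function (composition of matrix inversion and quadratic forms) of the previous sample quantities together with $\bs{s}^2_{\bs{X}}(t)$, $\bs{s}^2_{\bs{C}}(t)$ and $\bs{s}^2_{\bs{W}}(t)$. Since $\bs{s}^2_{\bs{X}}(t)$ converges to the nonsingular limit of $\bs{S}^2_{\bs{X}}$ (and similarly for $\bs{C}$ and $\bs{W}$), the inverses $\bs{s}^{-1}_{\bs{X}}(t)$, etc., are well-defined with probability tending to one and are $O_{\Pr}(1)$. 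Combining this with the previous step and continuous mapping, each of $s^2_{t\mid\bs{X}}(\hat{\bs{\beta}}, \hat{\bs{\gamma}})$, $s^2_{\tau\mid\bs{X}}(\hat{\bs{\beta}}, \hat{\bs{\gamma}})$, $s^2_{\tau\mid\bs{C}}(\hat{\bs{\beta}}, \hat{\bs{\gamma}})$, and $s^2_{\tau\mid\bs{W}}(\hat{\bs{\beta}}, \hat{\bs{\gamma}})$ equals its value at $(\tilde{\bs{\beta}}, \tilde{\bs{\gamma}})$ up to $o_{\Pr}(1)$.

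Finally, plugging these perturbation bounds into the definitions of $\hat{V}_{\tau\tau}$, $\hat{V}_{\tau\tau}\hat{R}_S^2$, and $\hat{V}_{\tau\tau}\hat{R}_T^2$ in \eqref{eq:VR2_estimate} and applying Proposition \ref{prop:est_reg_resem_fix_coef} with the fixed coefficients $(\bs{\beta}, \bs{\gamma}) = (\tilde{\bs{\beta}}, \tilde{\bs{\gamma}})$ gives the three claimed expansions. The only real obstacle I anticipate is keeping the bookkeeping tidy when the estimated coefficients appear both inside $s_t^2(\cdot,\cdot)$ and inside the projection formulas simultaneously, but since all terms are polynomials (or smooth functions, after matrix inversion) of finitely many $o_{\Pr}(1)$ and $O_{\Pr}(1)$ pieces, a single application of Slutsky's theorem at the end handles this cleanly.
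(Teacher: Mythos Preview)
Your proposal is correct and follows essentially the same approach as the paper: reduce to $\sresem$, use $\hat{\bs{\beta}} - \tilde{\bs{\beta}} = o_{\Pr}(1)$ and $\hat{\bs{\gamma}} - \tilde{\bs{\gamma}} = o_{\Pr}(1)$ from Theorem \ref{thm:plug_in}, and then rerun the polynomial/continuous-mapping argument from the proof of Proposition \ref{prop:est_reg_resem_fix_coef}. The paper's proof is simply the one-line statement ``this follows by the same logic as the proof of Proposition \ref{prop:est_reg_resem_fix_coef}, noting that $\hat{\bs{\beta}} = \tilde{\bs{\beta}} + o_{\Pr}(1)$ and $\hat{\bs{\gamma}} = \tilde{\bs{\gamma}} + o_{\Pr}(1)$,'' which your proposal unpacks in detail.
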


\begin{proof}[\bf Proof of Proposition \ref{prop:est_reg_resem_est_coef}]
    From Theorem \ref{thm:equiv_resem_sresem}, it suffices to prove that Proposition \ref{prop:est_reg_resem_est_coef} holds under $\sresem$. 
    This follows by the same logic as the proof of Proposition \ref{prop:est_reg_resem_fix_coef}, noting that $\hat{\bs{\beta}} = \tilde{\bs{\beta}} + o_{\Pr}(1)$ and $\hat{\bs{\gamma}} = \tilde{\bs{\gamma}} + o_{\Pr}(1)$ under $\sresem$ from the proof of Theorem \ref{thm:plug_in}. 
\end{proof}

\begin{proposition}\label{prop:est_est_coef_unknown_design}
	Define $\tilde{V}_{\tau \tau}(\bs{\beta}, \bs{\gamma})$, $\tilde{R}^2_S(\bs{\beta}, \bs{\gamma})$ and $\tilde{R}^2_T(\bs{\beta}, \bs{\gamma})$ the same as in Proposition \ref{prop:est_fix_coef_unknown_design}. 
	Under Under ReSEM and Conditions \ref{cond:fp_est}, 
	$\tilde{V}_{\tau \tau}(\tilde{\bs{\beta}}, \tilde{\bs{\gamma}})$ has a positive limit, and 
	\begin{align*}
		\hat V_{\tau \tau}(\hat{\bs{\beta}}, \hat{\bs{\gamma}}) = \tilde{V}_{\tau \tau}(\tilde{\bs{\beta}}, \tilde{\bs{\gamma}}) + o_{\Pr}(1), 
		\ \ 
		\hat{R}^2_S(\hat{\bs{\beta}}, \hat{\bs{\gamma}}) = \tilde{R}^2_S(\tilde{\bs{\beta}}, \tilde{\bs{\gamma}}) + o_{\Pr}(1), 
		\ \ 
		\hat{R}^2_T(\hat{\bs{\beta}}, \hat{\bs{\gamma}})  = \tilde{R}^2_T(\tilde{\bs{\beta}}, \tilde{\bs{\gamma}}) + o_{\Pr}(1). 
	\end{align*}
\end{proposition}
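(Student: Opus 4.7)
\medskip

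The plan is to reduce to the single-stage rerandomized survey experiment $\sresem$ and then glue together already-available ingredients. By Theorem \ref{thm:equiv_resem_sresem} together with Corollary \ref{cor:equiv_resem_sresem}(ii), any $o_{\Pr}(1)$ statement under $\sresem$ transfers verbatim to ReSEM, so throughout I will work under $\sresem$. The statement has two parts: a non-degeneracy claim about the limit of $\tilde{V}_{\tau\tau}(\tilde{\bs{\beta}}, \tilde{\bs{\gamma}})$, and three consistency claims for the sample analogues built from the estimated coefficients $(\hat{\bs{\beta}}, \hat{\bs{\gamma}})$.

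For the first part, I would invoke the uniform lower bound established inside the proof of Proposition \ref{prop:est_fix_coef_unknown_design}(iii), namely
\begin{equation*}
    \tilde{V}_{\tau\tau}(\bs{\beta}, \bs{\gamma})
    \;\ge\; r_1^{-1} S^2_{1\setminus \bs{C}} + r_0^{-1} S^2_{0\setminus \bs{C}},
\end{equation*}
which holds for every $(\bs{\beta}, \bs{\gamma})$ and in particular at the finite-population projection coefficients $(\tilde{\bs{\beta}}, \tilde{\bs{\gamma}})$. Condition \ref{cond:fp_est} guarantees that at least one of $S^2_{1\setminus \bs{C}}$ and $S^2_{0\setminus \bs{C}}$ has a positive limit, while Condition \ref{cond:fp}(ii) controls $r_1, r_0$ away from zero, so the lower bound has a positive limit. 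This gives the positive limit of $\tilde{V}_{\tau\tau}(\tilde{\bs{\beta}}, \tilde{\bs{\gamma}})$.

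For the three consistency statements, I would combine Proposition \ref{prop:est_reg_resem_est_coef} with the definitions introduced in Proposition \ref{prop:est_fix_coef_unknown_design}. Proposition \ref{prop:est_reg_resem_est_coef} already shows that $\hat{V}_{\tau\tau}(\hat{\bs{\beta}}, \hat{\bs{\gamma}})$, $\hat{V}_{\tau\tau}(\hat{\bs{\beta}}, \hat{\bs{\gamma}})\hat{R}^2_S(\hat{\bs{\beta}}, \hat{\bs{\gamma}})$ and $\hat{V}_{\tau\tau}(\hat{\bs{\beta}}, \hat{\bs{\gamma}})\hat{R}^2_T(\hat{\bs{\beta}}, \hat{\bs{\gamma}})$ equal the corresponding finite-population quantities evaluated at $(\tilde{\bs{\beta}}, \tilde{\bs{\gamma}})$ up to $o_{\Pr}(1)$, and those finite-population quantities are by definition $\tilde{V}_{\tau\tau}(\tilde{\bs{\beta}}, \tilde{\bs{\gamma}})$, $\tilde{V}_{\tau\tau}(\tilde{\bs{\beta}}, \tilde{\bs{\gamma}})\tilde{R}^2_S(\tilde{\bs{\beta}}, \tilde{\bs{\gamma}})$ and $\tilde{V}_{\tau\tau}(\tilde{\bs{\beta}}, \tilde{\bs{\gamma}})\tilde{R}^2_T(\tilde{\bs{\beta}}, \tilde{\bs{\gamma}})$ whenever the relevant design information is known to the analyzer. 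When $(\bs{W}, a_S)$ or $(\bs{X}, a_T)$ is unknown, both $\hat{R}^2_S(\hat{\bs{\beta}}, \hat{\bs{\gamma}})$ and the corresponding $\tilde{R}^2_S(\tilde{\bs{\beta}}, \tilde{\bs{\gamma}})$ (respectively the $T$-analogues) are set to zero by construction, so the identity holds trivially. In each of the four cases, dividing the product by the probability limit of $\hat{V}_{\tau\tau}(\hat{\bs{\beta}}, \hat{\bs{\gamma}})$, which is the positive number $\tilde{V}_{\tau\tau}(\tilde{\bs{\beta}}, \tilde{\bs{\gamma}})$ from the first step, and applying Slutsky's theorem yields the desired $o_{\Pr}(1)$ statement for the individual $\hat{R}^2$-ratios.

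The main obstacle is essentially a bookkeeping one rather than a substantive analytic one: verifying that the quantities produced by Proposition \ref{prop:est_reg_resem_est_coef} exactly match the definitions introduced in Proposition \ref{prop:est_fix_coef_unknown_design}, and that the case-split on whether the analyzer knows each piece of design information is correctly threaded through both sides of each identity. The Slutsky-type normalization in the last step requires the positive-limit result first, which is why the two parts of the proposition must be proved in the order given rather than independently.
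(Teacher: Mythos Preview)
Your proposal is correct and follows essentially the same route as the paper's own proof, which simply states that the result ``follows immediately from Proposition \ref{prop:est_fix_coef_unknown_design}(i) and (iii) and Proposition \ref{prop:est_reg_resem_est_coef}.'' You have spelled out the details the paper leaves implicit: the positive-limit step via the lower bound in the proof of Proposition \ref{prop:est_fix_coef_unknown_design}(iii), the case-split on whether the analyzer knows $(\bs{W},a_S)$ or $(\bs{X},a_T)$, and the final Slutsky argument to pass from the products $\hat V_{\tau\tau}\hat R^2$ to the ratios $\hat R^2$.
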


\begin{proof}[\bf Proof of Proposition \ref{prop:est_est_coef_unknown_design}]
	Proposition \ref{prop:est_est_coef_unknown_design} follows immediately from 
	Proposition \ref{prop:est_fix_coef_unknown_design}(i) and (iii)
	and 
	Proposition \ref{prop:est_reg_resem_est_coef}. 
\end{proof}

\begin{proof}[\bf Proof for conservative inference based on adjusted estimator $\hat{\tau}(\hat{\bs{\beta}}, \hat{\bs{\gamma}})$]
	The proof is almost the same as that for conservative inference based on adjusted estimator $\hat{\tau}(\bs{\beta}, \bs{\gamma})$ with $(\bs{\beta}, \bs{\gamma}) = (\tilde{\bs{\beta}}, \tilde{\bs{\gamma}})$. 
	Thus, we omit the proof here for conciseness. 
\end{proof}

\section{$\mathcal{C}$-optimal adjustment and connection to regression models}\label{sec:C_optmal_connection_proof}

\subsection{Technical lemmas}

\begin{lemma}\label{lemma:least_square_general}
For any outcome $y$ and covariate vector $\bs{x}$, 
the least squares solution 
\begin{align*}
    (\hat{a}, \hat{\bs{b}}, \hat{\bs{c}}, \hat{\theta}) & = \argmin_{a, \bs{b}, \bs{c}, \bs{\theta}}\sum_{i \in \mathcal{S}}
    \big\{
    y_i - a - \theta T_i - \bs{b}^\top \bs{x}_i
    - 
    \bs{c}^\top T_i \times  \bs{x}_i
    \big\}^2
\end{align*}    
has the following form: 
$\hat{a} = \bar{y}_0 - \hat{\bs{b}}_0^\top \bar{\bs{x}}_0$, 
$\hat{\bs{b}} = \hat{\bs{b}}_0$, 
$\hat{\bs{c}} = \hat{\bs{b}}_1 - \hat{\bs{b}}_0$, 
and 
$
    \hat{\theta} = 
    (\bar{y}_1 - \hat{\bs{b}}_1^\top \bar{\bs{x}}_1)
    -
    (\bar{y}_0 - \hat{\bs{b}}_0^\top \bar{\bs{x}}_0), 
$
where $\bar{y}_t$ and $\bar{\bs{x}}_t$ denote the average outcome and covariate vector for units in $\{i \in \mathcal{S}: T_i=t\}$,
and $\hat{\bs{b}}_t$ denotes the least squares coefficient of the outcome on covariates for units in $\{i \in \mathcal{S}: T_i=t\}$. 
\end{lemma}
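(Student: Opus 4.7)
The plan is to recognize the least squares problem as a pair of independent within-group regressions, exploiting the fact that the treatment indicator $T_i$ takes only two values. The key observation is that the objective function splits additively into contributions from the treated units ($T_i=1$) and control units ($T_i=0$), with the parameters entering only through the combinations $(a+\theta, \bs{b}+\bs{c})$ and $(a, \bs{b})$ respectively.

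First, I would partition the sum of squares according to treatment status, writing it as
\begin{align*}
\sum_{i \in \mathcal{S}, T_i = 1}\bigl\{y_i - (a+\theta) - (\bs{b}+\bs{c})^\top \bs{x}_i\bigr\}^2
+ \sum_{i \in \mathcal{S}, T_i = 0}\bigl\{y_i - a - \bs{b}^\top \bs{x}_i\bigr\}^2.
\end{align*}
Next, I would reparametrize with $a_1 \equiv a+\theta$, $\bs{b}_1 \equiv \bs{b}+\bs{c}$, $a_0 \equiv a$, $\bs{b}_0 \equiv \bs{b}$. This defines a bijection between $(a,\bs{b},\bs{c},\theta)$ and $(a_0,\bs{b}_0,a_1,\bs{b}_1)$, and after reparametrization the two sums above depend on disjoint parameter blocks. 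Hence the joint minimization reduces to two independent OLS problems, one on the treated subsample and one on the control subsample.

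Then I would invoke the standard OLS solution for each subgroup: the slope minimizer is $\hat{\bs{b}}_t$ (the least squares coefficient of $y$ on $\bs{x}$ among units with $T_i=t$), and the intercept minimizer satisfies $\hat{a}_t = \bar{y}_t - \hat{\bs{b}}_t^\top \bar{\bs{x}}_t$ because the fitted regression passes through the group-specific sample means. Inverting the reparametrization then gives
\begin{align*}
\hat{\bs{b}} = \hat{\bs{b}}_0,\quad
\hat{\bs{c}} = \hat{\bs{b}}_1 - \hat{\bs{b}}_0,\quad
\hat{a} = \hat{a}_0 = \bar{y}_0 - \hat{\bs{b}}_0^\top \bar{\bs{x}}_0,\quad
\hat{\theta} = \hat{a}_1 - \hat{a}_0,
\end{align*}
matching the stated formulas. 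There is no real obstacle here: the argument is an elementary algebraic splitting plus the textbook OLS first-order conditions; the only thing worth flagging is the implicit assumption that the within-group Gram matrices $\sum_{i:T_i=t}(\bs{x}_i - \bar{\bs{x}}_t)(\bs{x}_i - \bar{\bs{x}}_t)^\top$ are invertible so that $\hat{\bs{b}}_t$ is well-defined, which is a standard nondegeneracy assumption in this context.
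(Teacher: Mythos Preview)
Your proposal is correct and is precisely the elementary algebra the paper has in mind: the paper's own proof simply reads ``follows immediately from some algebra,'' and your splitting-by-treatment-group plus reparametrization argument is the standard way to carry that algebra out. There is nothing to add.
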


\begin{lemma}\label{lemma:cov_sampled_units_resem}
    Let $A$ and $B$ be any finite population quantities that can be the treatment or control potential outcome, or any coordinate of covariates $\bs{W}, \bs{X}, \bs{E}$ or $\bs{C}$. 
	Under Condition \ref{cond:fp_analysis} and $\sresem$, 
    the sample covariance between $A$ and $B$ for sampled units is consistent for the finite population covariance between $A$ and $B$. 
\end{lemma}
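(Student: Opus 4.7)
The plan is to reduce the problem to a variance bound under the CRSE and then handle the rerandomization conditioning by a standard ``acceptance probability'' inflation argument, paralleling the proofs of Lemma \ref{lemma:s_x2}, Lemma \ref{lemma:estimate}, and Lemma \ref{lemma:resem_prob_2stage}(i). First, by Theorem \ref{thm:equiv_resem_sresem} and Corollary \ref{cor:equiv_resem_sresem}(ii), it suffices to prove the consistency under the single-stage $\sresem$ rather than the two-stage ReSEM, since an $o_{\Pr}(1)$ statement is preserved under designs whose total variation distance tends to zero.

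Second, let $s_{AB}$ denote the sample covariance of $(A,B)$ computed over all sampled units in $\mathcal{S}$, and $S_{AB}$ the corresponding finite population covariance. I would observe that $s_{AB}$ is a function of $\bs{Z}$ alone (it does not depend on $\bs{T}_{\mathcal{S}}$), and that under the CRSE, $\bs{Z}$ is a simple random sample of size $n$, so Lemma \ref{lemma:estimate_lemma} applies and yields
\begin{align*}
\E(s_{AB}) = S_{AB},
\qquad
\Var(s_{AB}) \le \frac{4n}{(n-1)^2}\cdot \max_{1\le j\le N}(A_j - \bar A)^2 \cdot \frac{1}{N-1}\sum_{i=1}^N (B_i - \bar B)^2.
\end{align*}
Under Condition \ref{cond:fp_analysis}, the max term is $o(n)$ by the Lindeberg-type condition and the sum term is $O(1)$ by the bounded-variance condition, so $\Var(s_{AB}) = o(1)$ and $\E\{(s_{AB} - S_{AB})^2\} = o(1)$ under the CRSE.

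Third, I would transfer this from the CRSE to $\sresem$ via the standard inequality
\begin{align*}
\E\big[(s_{AB} - S_{AB})^2 \mid M_S\le a_S, M_T\le a_T\big] \le \frac{\E\{(s_{AB}-S_{AB})^2\}}{\Pr(M_S\le a_S,\, M_T\le a_T)},
\end{align*}
where the right-hand denominator tends to $\Pr(\chi^2_J\le a_S)\Pr(\chi^2_K\le a_T) > 0$ by Lemma \ref{lemma:M_ST_CRSE}. Hence the conditional second moment is still $o(1)$, and Markov's inequality gives $s_{AB} - S_{AB} = o_{\Pr}(1)$ under $\sresem$. Lifting back to ReSEM via Corollary \ref{cor:equiv_resem_sresem}(ii) completes the argument. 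The only mild obstacle is a bookkeeping one: $A$ and $B$ can be potential outcomes that are not jointly observed on any unit, but $s_{AB}$ is defined as a finite-population functional of $\bs{Z}$ only, so the bound above goes through without any need for an observability assumption.
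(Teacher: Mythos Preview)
Your proof is correct and follows essentially the same approach as the paper's: bound $\Var(s_{AB})$ under the CRSE via Lemma \ref{lemma:estimate_lemma}, then inflate by the acceptance probability $\Pr(M_S\le a_S, M_T\le a_T)^{-1}$ (bounded via Lemma \ref{lemma:M_ST_CRSE}) and apply Markov's inequality. The only superfluous part is your framing in terms of ReSEM and the invocation of Theorem \ref{thm:equiv_resem_sresem} and Corollary \ref{cor:equiv_resem_sresem}(ii): the lemma is already stated under $\sresem$, so no reduction to or lifting from ReSEM is needed.
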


\begin{lemma}\label{lemma:samp_cov_C_res}
Under $\sresem$, 
let $\bs{C}^{\res}_i$ be the fitted residual from the linear regression of $\bs{C}_i$ on $\bs{E}_i$ for sampled units $i\in \mathcal{S}$. 
If Condition \ref{cond:fp_analysis} holds, then among units under treatment arm $t$, 
the sample covariance between covariates $\bs{C}^{\res}$ and $\bs{E}$ and observed outcome $Y$ has the following probability limit: 
\begin{align*}
    \begin{pmatrix}
    \bs{s}_{\bs{C}^\res, t}\\
    \bs{s}_{\bs{E}, t}
    \end{pmatrix}
    & = 
    \begin{pmatrix}
    \bs{S}_{\bs{C}, t} 
    - \bs{S}_{\bs{C}, \bs{E}} ( \bs{S}^{2}_{\bs{E}} )^{-1} \bs{S}_{\bs{E}, t}
    \\
    \bs{S}_{\bs{E}, t}
    \end{pmatrix}
    + o_{\Pr}(1), 
\end{align*}
and the sample covariance matrix of the covariates $\bs{C}^{\res}_i$ and $\bs{E}$ has the following probability limit:
\begin{align*}
    \begin{pmatrix}
    \bs{s}_{\bs{C}^\res}^2(t) & \bs{s}_{\bs{C}^\res, \bs{E}}(t)\\
    \bs{s}_{\bs{E}, \bs{C}^\res}(t) & \bs{s}_{\bs{E}}^2(t)
    \end{pmatrix}
    & = 
    \begin{pmatrix}
    \bs{S}_{\bs{C} \setminus \bs{E}}^2 & \bs{0}\\
    \bs{0}  & \bs{S}_{\bs{E}}^2
    \end{pmatrix}
    + o_{\Pr}(1). 
\end{align*}
\end{lemma}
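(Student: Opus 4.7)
The plan is to reduce everything to quantities we already know how to control by means of an ``oracle residual'' that mimics $\bs{C}^\res_i$ but does not depend on the random sample. Specifically, let $\bs{B}^\ast \equiv (\bs{S}^2_{\bs{E}})^{-1}\bs{S}_{\bs{E},\bs{C}}$ denote the finite-population least-squares coefficient and define
\begin{align*}
    \tilde{\bs{C}}_i \;=\; \bs{C}_i - (\bs{E}_i - \bar{\bs{E}})^\top \bs{B}^\ast, \qquad 1\le i \le N.
\end{align*}
A direct calculation in the finite population gives $\bs{S}_{\tilde{\bs{C}},\bs{E}} = \bs{0}$, $\bs{S}^2_{\tilde{\bs{C}}} = \bs{S}^2_{\bs{C}\setminus \bs{E}}$, and $\bs{S}_{\tilde{\bs{C}}, t} = \bs{S}_{\bs{C},t} - \bs{S}_{\bs{C},\bs{E}}(\bs{S}^2_{\bs{E}})^{-1}\bs{S}_{\bs{E},t}$. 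These are exactly the limits claimed in the statement.

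My first step is to show that the sample regression coefficient $\hat{\bs{B}}$ from the linear regression of $\bs{C}$ on $\bs{E}$ based on $\mathcal{S}$ satisfies $\hat{\bs{B}} - \bs{B}^\ast = o_{\Pr}(1)$. This is immediate from Lemma \ref{lemma:cov_sampled_units_resem} applied to the blocks $\bs{s}^2_{\bs{E}}$ and $\bs{s}_{\bs{E},\bs{C}}$ for sampled units, combined with the continuous mapping theorem and the nonsingularity of the limit of $\bs{S}^2_{\bs{E}}$ imposed in Condition \ref{cond:fp_analysis}. At the same time, $\bar{\bs{E}}_{\mathcal{S}} - \bar{\bs{E}} = O_{\Pr}(n^{-1/2}) = o_{\Pr}(1)$ by Lemma \ref{lemma:diff_average}.

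The second step is to bound the damage from replacing $\tilde{\bs{C}}$ by $\bs{C}^\res$. Algebra gives the clean decomposition
\begin{align*}
    \bs{C}^\res_i - \tilde{\bs{C}}_i \;=\; (\bs{E}_i - \bar{\bs{E}})^\top (\bs{B}^\ast - \hat{\bs{B}}) \;+\; (\bar{\bs{E}}_{\mathcal{S}} - \bar{\bs{E}})^\top \hat{\bs{B}}.
\end{align*}
The second summand is a constant (not depending on $i$), so it drops out of every sample covariance. Expanding $\bs{s}_{\bs{C}^\res, t}$ bilinearly, the ``cross term'' becomes $\bs{s}_{\bs{E},t}(t)^\top(\bs{B}^\ast - \hat{\bs{B}})$, and similarly $\bs{s}_{\bs{C}^\res,\bs{E}}(t) = \bs{s}_{\tilde{\bs{C}},\bs{E}}(t) + (\hat{\bs{B}} - \bs{B}^\ast)^\top \bs{s}^2_{\bs{E}}(t)$ and $\bs{s}^2_{\bs{C}^\res}(t) = \bs{s}^2_{\tilde{\bs{C}}}(t) + \text{cross terms}$. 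In each case, the extra terms are products of a factor that is $O_{\Pr}(1)$ by Lemma \ref{lemma:estimate} and a factor that is $o_{\Pr}(1)$ by Step 1, hence $o_{\Pr}(1)$ overall.

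The third step just collects the limits. Applying Lemma \ref{lemma:estimate} to $\tilde{\bs{C}}$ (which is a bona fide finite population quantity satisfying Condition \ref{cond:fp_analysis}, since it is a linear combination of $\bs{C}$ and $\bs{E}$) yields $\bs{s}_{\tilde{\bs{C}},t} = \bs{S}_{\tilde{\bs{C}},t} + o_{\Pr}(1)$, $\bs{s}^2_{\tilde{\bs{C}}}(t) = \bs{S}^2_{\bs{C}\setminus \bs{E}} + o_{\Pr}(1)$, $\bs{s}_{\tilde{\bs{C}},\bs{E}}(t) = o_{\Pr}(1)$, and $\bs{s}^2_{\bs{E}}(t) = \bs{S}^2_{\bs{E}} + o_{\Pr}(1)$; combined with Step 2, these give both displayed limits. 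Finally, since $\sresem$ is asymptotically equivalent to ReSEM only in the two-stage discussion but here everything is stated directly under $\sresem$, no additional transfer lemma is needed. The main obstacle I anticipate is purely bookkeeping: making sure the cross terms in the bilinear expansions of $\bs{s}^2_{\bs{C}^\res}(t)$ and $\bs{s}_{\bs{C}^\res,\bs{E}}(t)$ are handled uniformly in the coordinates of $\bs{C}$, which follows from Condition \ref{cond:fp_analysis}(iv) giving $n^{-1}\max_{i}\|\bs{E}_i-\bar{\bs{E}}\|_2^2 \to 0$ together with the $o_{\Pr}(1)$ rate on $\hat{\bs{B}} - \bs{B}^\ast$.
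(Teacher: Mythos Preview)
Your proposal is correct and follows essentially the same route as the paper: both arguments rest on Lemmas~\ref{lemma:estimate} and~\ref{lemma:cov_sampled_units_resem} to replace each sample moment by its finite-population counterpart, and the algebra is the same once one expands $\bs{C}^\res_i = \bs{C}_i - \bar{\bs{C}}_{\mathcal{S}} - \bs{s}_{\bs{C},\bs{E}}(\bs{s}^2_{\bs{E}})^{-1}(\bs{E}_i - \bar{\bs{E}}_{\mathcal{S}})$. The only differences are cosmetic: the paper substitutes limits directly into this expansion rather than routing through your oracle $\tilde{\bs{C}}_i$, and your displayed decomposition of $\bs{C}^\res_i - \tilde{\bs{C}}_i$ omits the constant $-\bar{\bs{C}}_{\mathcal{S}}$ (harmless, since as you note constants drop out of sample covariances).
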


\subsection{Proofs of the lemmas}
\begin{proof}[Proof of Lemma \ref{lemma:least_square_general}]
    Lemma \ref{lemma:least_square_general} follows immediately from some algebra. 
\end{proof}

\begin{proof}[Proof of Lemma \ref{lemma:cov_sampled_units_resem}]
    Let $A$ and $B$ be any finite population quantities that can be the treatment or control potential outcome, or any coordinate of covariates $\bs{W}, \bs{X}, \bs{E}$ or $\bs{C}$. 
    Let $\bar{A} =N^{-1} \sum_{i=1}^{N}A_i$ and $\bar{B} =  N^{-1} \sum_{i=1}^{N}B_i$ be the finite population averages of $A$ and $B$, 
    and 
    $S_{AB} = (N-1)^{-1} \sum_{i=1}^N (A_i-\bar{A}) (B_i - \bar{B})$ be the finite population covariance between $A$ and $B$.
    Let $s_{AB}$ be the sample covariance between $A$ and $B$ for sampled units.

    We first consider the property of $s_{AB}$ under the CRSE. 
    By the property of simple random sampling and from Lemma  \ref{lemma:estimate_lemma}, $s_{AB}$ is unbiased for $S_{AB}$ and its variance can be bounded by 
    \begin{align*}
    \Var (s_{AB}) \le \frac{4n^2}{(n-1)^2}\cdot \frac{1}{n}\max_{1 \le j\le N}(A_j-\bar A)^2 \cdot  \frac{1}{N-1} \sum_{i=1}^N (B_i - \bar B)^2, 
	\end{align*}
	which must converge to zero as $N\rightarrow \infty$ under Condition \ref{cond:fp_analysis}. 
	
	We then consider the property of $s_{AB}$ under $\sresem$. 
	By the law of total expectation, 
	\begin{align*}
		\E \left\{  \left( s_{AB}-S_{AB} \right)^2 \mid
		M_T \le a_T, M_S \le a_S \right\}
		&\le 
		\frac{\E\left\{  \left( s_{AB}-S_{AB} \right)^2   \right\} }{\Pr( M_T \le a_T, M_S \le a_S )}
		=  
		\frac{\Var(s_{AB})}{\Pr\big( M_T \le a_T, M_S \le a_S \big)}.
	\end{align*}
	From the discussion before and Lemma \ref{lemma:M_ST_CRSE}, we must have 
	$\E\{  \left( s_{AB}-S_{AB} \right)^2 \mid
	M_T \le a_T, M_S \le a_S \}= o(1)$. 
	By the Markov inequality, we then have $s_{AB}-S_{AB}  = o_{\Pr}(1)$ under $\sresem$. 
	
	From the above, Lemma \ref{lemma:cov_sampled_units_resem} holds. 
\end{proof}

\begin{proof}[Proof of Lemma \ref{lemma:samp_cov_C_res}]
By definition, 
$
\bs{C}_i^\res 
= 
\bs{C}_i - \bar{\bs{C}}_{\mathcal{S}}
- \bs{s}_{\bs{C}, \bs{E}} ( \bs{s}^{2}_{\bs{E}} )^{-1} ( \bs{E}_i - \bar{\bs{E}}_{\mathcal{S}} ). 
$
From Lemmas \ref{lemma:estimate} and \ref{lemma:cov_sampled_units_resem}, 
under $\sresem$, 
among units under treatment arm $t$, 
the sample covariances between covariates and outcome satisfy that 
$\bs{s}_{\bs{E}, t} = \bs{S}_{\bs{E}, t} + o_{\Pr}(1)$ and 
\begin{equation*}
    \bs{s}_{\bs{C}^\res, t}
    = \bs{s}_{\bs{C}, t} 
    - \bs{s}_{\bs{C}, \bs{E}} ( \bs{s}^{2}_{\bs{E}} )^{-1} \bs{s}_{\bs{E}, t}
    = 
    \bs{S}_{\bs{C}, t} 
    - \bs{S}_{\bs{C}, \bs{E}} ( \bs{S}^{2}_{\bs{E}} )^{-1} \bs{S}_{\bs{E}, t} + o_{\Pr}(1), 
\end{equation*}
and the sample covariances for covariates satisfy that 
$\bs{s}_{\bs{E}}^2(t) = \bs{S}_{\bs{E}}^2 + o_{\Pr}(1)$, 
\begin{align*}
    \bs{s}_{\bs{C}^\res}^2(t)
    & = 
    \bs{s}_{\bs{C}}^2(t) + 
    \bs{s}_{\bs{C}, \bs{E}} ( \bs{s}^{2}_{\bs{E}} )^{-1} \bs{s}^2_{\bs{E}}(t)
    ( \bs{s}^{2}_{\bs{E}} )^{-1}\bs{s}_{\bs{E}, \bs{C}}
    - \bs{s}_{\bs{C}, \bs{E}}(t) ( \bs{s}^{2}_{\bs{E}} )^{-1}\bs{s}_{\bs{E}, \bs{C}}
    - \bs{s}_{\bs{C}, \bs{E}} ( \bs{s}^{2}_{\bs{E}} )^{-1} \bs{s}_{\bs{E}, \bs{C}}(t)\\
    & = 
    \bs{S}_{\bs{C}}^2 + 
    \bs{S}_{\bs{C}, \bs{E}} ( \bs{S}^{2}_{\bs{E}} )^{-1} \bs{S}^2_{\bs{E}}
    ( \bs{S}^{2}_{\bs{E}} )^{-1}\bs{S}_{\bs{E}, \bs{C}}
    - \bs{S}_{\bs{C}, \bs{E}} ( \bs{S}^{2}_{\bs{E}} )^{-1}\bs{S}_{\bs{E}, \bs{C}}
    - \bs{S}_{\bs{C}, \bs{E}} ( \bs{S}^{2}_{\bs{E}} )^{-1} \bs{S}_{\bs{E}, \bs{C}} + o_{\Pr}(1)
    \\
    & = \bs{S}_{\bs{C}}^2 - \bs{S}_{\bs{C}, \bs{E}} ( \bs{S}^{2}_{\bs{E}} )^{-1} \bs{S}_{\bs{E}, \bs{C}} + o_{\Pr}(1)
    = \bs{S}_{\bs{C} \setminus \bs{E}}^2 + o_{\Pr}(1), 
\end{align*}
and 
\begin{equation*}
    \bs{s}_{\bs{C}^\res, \bs{E}}(t)
    = \bs{s}_{\bs{C}, \bs{E}}(t) - \bs{s}_{\bs{C}, \bs{E}} ( \bs{s}^{2}_{\bs{E}} )^{-1} \bs{s}_{\bs{E}}^2(t)
    = 
    \bs{S}_{\bs{C}, \bs{E}} - \bs{S}_{\bs{C}, \bs{E}} ( \bs{S}^{2}_{\bs{E}} )^{-1} \bs{S}_{\bs{E}}^2 + o_{\Pr}(1)
    = o_{\Pr}(1). 
\end{equation*}
From the above, Lemma \ref{lemma:samp_cov_C_res} holds. 
\end{proof}

\subsection{Proof for regression adjustment with optimal estimated precision}

\begin{proof}[\bf Proof of Theorem \ref{thm:est_optimal}]
	From Propositions \ref{prop:est_fix_coef_unknown_design} and \ref{prop:est_est_coef_unknown_design}, it is obvious that $\hat{\tau}(\tilde{\bs{\beta}}, \tilde{\bs{\gamma}})$ and $\hat{\tau}(\hat{\bs{\beta}}, \hat{\bs{\gamma}})$ have the same estimated distribution asymptotically. 
	Below we prove the $\mathcal{C}$-optimality of $\hat{\tau}(\tilde{\bs{\beta}}, \tilde{\bs{\gamma}})$. 
	From the discussion in Section \ref{app:ci}, 
	the length of the $1-\alpha$ confidence interval multiplied by $\sqrt{n}$ will converge in probability to the $1-\alpha$ symmetric quantile range of the asymptotic estimated distribution. Therefore, to prove the $\mathcal{C}$-optimality of the regression-adjusted estimator $\hat{\tau}(\tilde{\bs{\beta}}, \tilde{\bs{\gamma}})$, it suffices to prove that the asymptotic estimated distribution of $\hat{\tau}(\tilde{\bs{\beta}}, \tilde{\bs{\gamma}})$ has the shortest symmetric quantile ranges.

	From Proposition \ref{prop:est_fix_coef_unknown_design}, 
	the asymptotic estimated distribution of the regression-adjusted estimator $\hat{\tau}(\bs{\beta}, \bs{\gamma})$ has the same weak limit as 
	\begin{align}\label{eq:est_dist_gen}
		& \quad \ \tilde{V}_{\tau\tau}^{1/2}(\bs{\beta}, \bs{\gamma}) \Big(  \sqrt{1-\tilde{R}_S^2(\bs{\beta}, \bs{\gamma}) - \tilde{R}_T^2(\bs{\beta}, \bs{\gamma})} \cdot \varepsilon
		+ \sqrt{\tilde{R}_S^2(\bs{\beta}, \bs{\gamma})} \cdot L_{J,a_S}
		+
		\sqrt{\tilde{R}_T^2(\bs{\beta}, \bs{\gamma})} \cdot L_{K,a_T} 
		\Big)
		\nonumber
		\\
		& \sim 
		\sqrt{fS^2_{\tau\setminus \bs{C}} + V_{\tau\tau}(\bs{\beta}, \bs{\gamma}) - \I_{S} V_{\tau\tau}^{1/2}(\bs{\beta}, \bs{\gamma}) R_S^2(\bs{\beta}, \bs{\gamma}) - \I_{T} V_{\tau\tau}^{1/2}(\bs{\beta}, \bs{\gamma}) R_T^2(\bs{\beta}, \bs{\gamma}) } \cdot \varepsilon 
		\nonumber
		\\
		& \quad \ 
		+ \sqrt{\I_{S} V_{\tau\tau}^{1/2}(\bs{\beta}, \bs{\gamma}) R_S^2(\bs{\beta}, \bs{\gamma})} \cdot L_{J,a_S}
		+ \sqrt{\I_{T} V_{\tau\tau}^{1/2}(\bs{\beta}, \bs{\gamma}) R_T^2(\bs{\beta}, \bs{\gamma})} \cdot L_{K,a_T}, 
	\end{align}
	where $\I_{S}$ and $\I_{T}$ are indicators for whether the analyzer knows the design information in sampling and treatment assignment stages, respectively. 
	The asymptotic estimated distribution \eqref{eq:est_dist_gen} is essentially the convolution of a Gaussian distribution $\mathcal{N}(0, fS^2_{\tau\setminus \bs{C}})$ and the asymptotic sampling distribution of the regression-adjusted estimator $\hat{\tau}(\bs{\beta}, \bs{\gamma})$ under ReSEM with the following rerandomization criteria: 
	(i) covariate $\bs{W}$ and threshold $a_S$ for sampling stage if $\I_{S} = 1$, and covariate $\emptyset$ and threshold $\infty$ for sampling stage if $\I_{S} = 0$; 
	(ii) covariate $\bs{X}$ and threshold $a_T$ for treatment assignment stage if $\I_{T} = 1$, and covariate $\emptyset$ and threshold $\infty$ for treatment assignment stage if $\I_{T} = 0$. 
	In either of these cases, 
    the covariates in design will be subsets of that in analysis.
	By the optimality of regression-adjusted estimator $\hat{\tau}(\tilde{\bs{\beta}}, \tilde{\bs{\gamma}})$ in Theorem \ref{thm:optimal}(iii), the unimodality of Gaussian distribution and Lemma \ref{lemma:sum_unimodal}, we can know that the asymptotic estimated distribution of $\hat{\tau}(\bs{\beta}, \bs{\gamma})$ achieves the shortest symmetric quantile ranges at $(\bs{\beta}, \bs{\gamma}) = (\tilde{\bs{\beta}}, \tilde{\bs{\gamma}})$. 

	From the above, Theorem \ref{thm:est_optimal} holds. 
\end{proof}

\subsection{Proof for regression with treatment-covariate interaction}

\begin{proof}[\bf Proof for the special case in which $\bs{E} = \emptyset$]
When $\bs{E} = \emptyset$, by definition, 
$\hat{\tau}(\hat{\bs{\beta}}, \hat{\bs{\gamma}})$ reduces to 
$\hat{\tau} - \hat{\bs{\beta}}^\top \hat{\bs{\tau}}_{\bs{C}}$. 
Besides, by definition, 
we can derive that 
$\bar{\bs{C}}_1 - \bar{\bs{C}}_{\mathcal{S}} = r_0 \hat{\bs{\tau}}_{\bs{C}}$ 
and 
$\bar{\bs{C}}_0 - \bar{\bs{C}}_{\mathcal{S}} = -r_1 \hat{\bs{\tau}}_{\bs{C}}$. 
Consequently, we have 
\begin{align}\label{eq:tau_beta_tau_X}
    \hat{\tau} - \hat{\bs{\beta}}^\top \hat{\bs{\tau}}_{\bs{C}}
    & = 
    \bar{Y}_1 
    - \bar{Y_0} - (r_0\hat{\bs{\beta}}_1 + r_1 \hat{\bs{\beta}}_0)^\top\hat{\bs{\tau}}_{\bs{C}}
    = 
    \bar{Y}_1 - \hat{\bs{\beta}}_1^\top r_0 \hat{\bs{\tau}}_{\bs{C}}
    - 
    \bar{Y}_0 - \hat{\bs{\beta}}_0^\top r_1 \hat{\bs{\tau}}_{\bs{C}}
    \nonumber
    \\
    & = 
    \bar{Y}_1 - \hat{\bs{\beta}}_1^\top (\bar{\bs{C}}_1 - \bar{\bs{C}}_{\mathcal{S}})
    - 
    \bar{Y}_0 + \hat{\bs{\beta}}_0^\top (\bar{\bs{C}}_0 - \bar{\bs{C}}_{\mathcal{S}}). 
\end{align}
From Lemma \ref{lemma:least_square_general}, $\hat{\tau}(\hat{\bs{\beta}}, \hat{\bs{\gamma}})$ is the same as $\hat{\theta}_{\bs{C}_{\mathcal{S}}}$ defined in \eqref{eq:theta_C_S} in this special case. 
\end{proof}

\begin{proof}[\bf Proof for the special case in which $\bs{E} = \bs{C}$]
By definition, $\hat{\bs{\delta}}_{\bs{E}}$ has the following equivalent forms: 
\begin{align*}
    \hat{\bs{\delta}}_{\bs{E}}
    & = 
    \bar{\bs{E}}_{\mathcal{S}} - \bar{\bs{E}}
    = 
    -(\bar{\bs{E}}_1 - \bar{\bs{E}}_{\mathcal{S}}) + (\bar{\bs{E}}_1 - \bar{\bs{E}})
    = 
    -(\bar{\bs{E}}_0 - \bar{\bs{E}}_{\mathcal{S}}) + (\bar{\bs{E}}_0 - \bar{\bs{E}})
\end{align*}
This implies that 
\begin{align*}%
    \hat{\bs{\gamma}} \hat{\bs{\delta}}_{\bs{E}}
    & = 
    \hat{\bs{\gamma}}_1^\top \hat{\bs{\delta}}_{\bs{E}}
    - 
    \hat{\bs{\gamma}}_0^\top \hat{\bs{\delta}}_{\bs{E}}
    = 
    - \hat{\bs{\gamma}}_1^\top (\bar{\bs{E}}_1 - \bar{\bs{E}}_{\mathcal{S}}) 
    + \hat{\bs{\gamma}}_1^\top(\bar{\bs{E}}_1 - \bar{\bs{E}})
    + \hat{\bs{\gamma}}_0^\top(\bar{\bs{E}}_0 - \bar{\bs{E}}_{\mathcal{S}})
    - \hat{\bs{\gamma}}_0^\top(\bar{\bs{E}}_0 - \bar{\bs{E}})
    \nonumber
    \\
    & = 
    \big\{ \bar{Y}_1 - \hat{\bs{\gamma}}_1^\top (\bar{\bs{E}}_1-\bar{\bs{E}}_{\mathcal{S}})  - \bar{Y}_0 + \hat{\bs{\gamma}}_0^\top(\bar{\bs{E}}_0 - \bar{\bs{E}}_{\mathcal{S}}) \big\}
    - 
    \big\{ \bar{Y}_1 - \hat{\bs{\gamma}}_1^\top (\bar{\bs{E}}_1-\bar{\bs{E}})  - \bar{Y}_0 + \hat{\bs{\gamma}}_0^\top(\bar{\bs{E}}_0 - \bar{\bs{E}}) \big\}
    \nonumber
    \\
    & = 
    \hat{\theta}_{\bs{E}_{\mathcal{S}}} - \hat{\theta}_{\bs{E}}, 
\end{align*}
where $\hat{\theta}_{\bs{E}_{\mathcal{S}}}$ is defined analogously as in \eqref{eq:theta_C_S} and  
$\hat{\theta}_{\bs{E}}$ is defined in \eqref{eq:theta_E}. 
From \eqref{eq:tau_beta_tau_X}, 
the regression-adjusted estimator then has the following equivalent forms: 
\begin{align*}
    \hat{\tau}(\hat{\bs{\beta}}, \hat{\bs{\gamma}})
    & = 
    \left( \hat{\tau} - \hat{\bs{\beta}}^\top \hat{\bs{\tau}}_{\bs{C}} \right)
    - \hat{\bs{\gamma}} \hat{\bs{\delta}}_{\bs{E}}
    = \hat{\theta}_{\bs{C}_{\mathcal{S}}} - \big( \hat{\theta}_{\bs{E}_{\mathcal{S}}} - \hat{\theta}_{\bs{E}} \big). 
\end{align*}
Because $\bs{E} = \bs{C}$ here, the regression-adjusted estimator further reduces to 
$\hat{\tau}(\hat{\bs{\beta}}, \hat{\bs{\gamma}}) = \hat{\theta}_{\bs{E}}$. 
\end{proof}

\begin{proof}[\bf Proof for the general case in which $\bs{E} \subset \bs{C}$]

From Theorem \ref{thm:equiv_resem_sresem}, it suffices to prove that $\hat{\theta}_{\tilde{\bs{C}}_\mathcal{S}^\res, \bs{E}} -  \hat{\tau}(\hat{\bs{\beta}}, \hat{\bs{\gamma}}) = o_{\Pr}(n^{-1/2})$ under $\sresem$. 
Without loss of generality, 
we assume $\bs{C} = (\bs{E}^\top, \tilde{\bs{C}}^\top)^\top$
with 
$\tilde{\bs{C}}$ 
being the subvector of $\bs{C}$ that 
cannot be linearly represented by $\bs{E}$. 
Specifically, 
the finite population covariance of the residual from the linear projection of $\tilde{\bs{C}}$ on $\bs{E}$, 
$\bs{S}^2_{\tilde{\bs{C}} \setminus \bs{E}}$, has a nonsingular limit as $N\rightarrow\infty$. 
Let $\tilde{\bs{C}}_i^\res$ be the residual from the linear projection of $\tilde{\bs{C}}_i$ on $\bs{E}_i$ among sampled units $i\in \mathcal{S}$. 
By the property of linear projection, the average of $\tilde{\bs{C}}^\res$ over sampled units in $\mathcal{S}$ must be zero. 
From Lemma \ref{lemma:samp_cov_C_res}, the least squares coefficient of the outcome $Y_i$ on covariates $\tilde{\bs{C}}_i^\res$ and $\bs{E}_i$ for units under treatment arm $t$ is 
\begin{align}\label{eq:ols_coef_c_res_E}
    \begin{pmatrix}
        \hat{\bs{\zeta}}_{\tilde{\bs{C}}^\res,t}\\
        \hat{\bs{\zeta}}_{\bs{E},t}
    \end{pmatrix}
    & = 
    \begin{pmatrix}
    \bs{s}_{\tilde{\bs{C}}^\res}^2(t) & \bs{s}_{\tilde{\bs{C}}^\res, \bs{E}}(t)\\
    \bs{s}_{\bs{E}, \tilde{\bs{C}}^\res}(t) & \bs{s}_{\bs{E}}^2(t)
    \end{pmatrix}^{-1} 
    \begin{pmatrix}
    \bs{s}_{\tilde{\bs{C}}^\res, t}\\
    \bs{s}_{\bs{E}, t}
    \end{pmatrix}
    = 
    \begin{pmatrix}
    \bs{S}_{\tilde{\bs{C}} \setminus \bs{E}}^2 & \bs{0}\\
    \bs{0}  & \bs{S}_{\bs{E}}^2
    \end{pmatrix}^{-1} 
    \begin{pmatrix}
    \bs{S}_{\tilde{\bs{C}}, t} 
    - \bs{S}_{\tilde{\bs{C}}, \bs{E}} ( \bs{S}^{2}_{\bs{E}} )^{-1} \bs{S}_{\bs{E}, t}
    \\
    \bs{S}_{\bs{E}, t}
    \end{pmatrix}
    + o_{\Pr}(1)
    \nonumber
    \\
    & = 
    \begin{pmatrix}
        \tilde{\bs{\zeta}}_{\tilde{\bs{C}}^\res,t}\\
        \tilde{\bs{\gamma}}_{t}
    \end{pmatrix}
    + o_{\Pr}(1), 
\end{align}
where $\tilde{\bs{\zeta}}_{\tilde{\bs{C}}^\res,t}$ is essentially the linear projection coefficient of $Y(t)$ on the residual from the linear projection of $\tilde{\bs{C}}$ on $\bs{E}$, 
and $\tilde{\bs{\gamma}}_{t}$ is the linear projection coefficient of $Y(t)$ on $\bs{E}$ as defined in Section \ref{sec:R2_proj_coef}.

We first consider equivalent forms of $\hat{\tau}(\hat{\bs{\beta}}, \hat{\bs{\gamma}})$. 
Note that by construction,  $\bs{C}_i - \bar{\bs{C}}_{\mathcal{S}}$ and $(\bs{E}_i - \bar{\bs{E}}_{\mathcal{S}}^\top, (\tilde{\bs{C}}_i^\res)^\top)^\top$ are linear transformations of each other for sampled units in $\mathcal{S}$. 
By the same logic as \eqref{eq:tau_beta_tau_X}, we have
\begin{align*}
    \hat{\tau} - \hat{\bs{\beta}}^\top \hat{\bs{\tau}}_{\bs{C}}
    & = 
    \bar{Y}_1 - \hat{\bs{\beta}}_1^\top (\bar{\bs{C}}_1 - \bar{\bs{C}}_{\mathcal{S}})
    - 
    \bar{Y}_0 + \hat{\bs{\beta}}_0^\top r_1 (\bar{\bs{C}}_0 - \bar{\bs{C}}_{\mathcal{S}})\\
    & = 
    \bar{Y}_1 - \hat{\bs{\zeta}}_{\tilde{\bs{C}}^\res,1}^\top \bar{\tilde{\bs{C}}}^\res_1 - \hat{\bs{\zeta}}_{\bs{E},1}^\top (\bar{\bs{E}}_1 - \bar{\bs{E}}_{\mathcal{S}})
    - \bar{Y}_0 + 
    \hat{\bs{\zeta}}_{\tilde{\bs{C}}^\res,0}^\top \bar{\tilde{\bs{C}}}^\res_0
    + 
    \hat{\bs{\zeta}}_{\bs{E},0}^\top (\bar{\bs{E}}_0 - \bar{\bs{E}}_{\mathcal{S}}). 
\end{align*}
This immediately implies that $\hat{\tau}(\hat{\bs{\beta}}, \hat{\bs{\gamma}}) = \hat{\tau} - \hat{\bs{\beta}}^\top \hat{\bs{\tau}}_{\bs{C}}
    -\hat{\bs{\gamma}} \hat{\bs{\delta}}_{\bs{E}}$ has the following equivalent form:
\begin{align*}
    \hat{\tau}(\hat{\bs{\beta}}, \hat{\bs{\gamma}})
    & 
    = \bar{Y}_1 - \hat{\bs{\zeta}}_{\tilde{\bs{C}}^\res,1}^\top \bar{\tilde{\bs{C}}}^\res_1 - \hat{\bs{\zeta}}_{\bs{E},1}^\top (\bar{\bs{E}}_1 - \bar{\bs{E}}_{\mathcal{S}})
    - \bar{Y}_0 + 
    \hat{\bs{\zeta}}_{\tilde{\bs{C}}^\res,0}^\top \bar{\tilde{\bs{C}}}^\res_0
    + 
    \hat{\bs{\zeta}}_{\bs{E},0}^\top (\bar{\bs{E}}_0 - \bar{\bs{E}}_{\mathcal{S}}) - \hat{\bs{\gamma}}_1^\top \hat{\bs{\delta}}_{\bs{E}} + \hat{\bs{\gamma}}_0^\top \hat{\bs{\delta}}_{\bs{E}}. 
\end{align*}
We then consider equivalent forms of $\hat{\theta}_{\tilde{\bs{C}}_\mathcal{S}^\res, \bs{E}}$. 
From Lemma \ref{lemma:least_square_general}, 
\begin{align*}
    \hat{\theta}_{\tilde{\bs{C}}_\mathcal{S}^\res, \bs{E}}
    & = 
    \bar{Y}_1 - \hat{\bs{\zeta}}_{\tilde{\bs{C}}^\res,1}^\top \bar{\tilde{\bs{C}}}^\res_1 - \hat{\bs{\zeta}}_{\bs{E},1}^\top (\bar{\bs{E}}_1 - \bar{\bs{E}})
    - \bar{Y}_0 + 
    \hat{\bs{\zeta}}_{\tilde{\bs{C}}^\res,0}^\top \bar{\tilde{\bs{C}}}^\res_0
    + 
    \hat{\bs{\zeta}}_{\bs{E},0}^\top (\bar{\bs{E}}_0 - \bar{\bs{E}}). 
\end{align*}
From the above, we can know that 
\begin{align*}
    \hat{\tau}(\hat{\bs{\beta}}, \hat{\bs{\gamma}}) - \hat{\theta}_{\tilde{\bs{C}}_\mathcal{S}^\res, \bs{E}}
    & = \hat{\bs{\zeta}}_{\bs{E},1}^\top (\bar{\bs{E}}_{\mathcal{S}} - \bar{\bs{E}}) - 
    \hat{\bs{\zeta}}_{\bs{E},0}^\top (\bar{\bs{E}}_{\mathcal{S}} - \bar{\bs{E}}) - \hat{\bs{\gamma}}_1^\top \hat{\bs{\delta}}_{\bs{E}} + \hat{\bs{\gamma}}_0^\top \hat{\bs{\delta}}_{\bs{E}}
    \\
    & = 
    \big( \hat{\bs{\zeta}}_{\bs{E},1} - \hat{\bs{\gamma}}_1 \big)^\top \hat{\bs{\delta}}_{\bs{E}}
    - 
    \big( \hat{\bs{\zeta}}_{\bs{E},0} - \hat{\bs{\gamma}}_0 \big)^\top \hat{\bs{\delta}}_{\bs{E}}. 
\end{align*}
From Lemma \ref{lemma:diff_average}, \eqref{eq:ols_coef_c_res_E} and the proof of Theorem \ref{thm:plug_in}, we can know that
$\hat{\bs{\delta}}_{\bs{E}} = O_{\Pr}(n^{-1/2})$,  
$\hat{\bs{\zeta}}_{\bs{E},1} - \hat{\bs{\gamma}}_1 = o_{\Pr}(1)$ and $\hat{\bs{\zeta}}_{\bs{E},0} - \hat{\bs{\gamma}}_0 = o_{\Pr}(1)$. 
Therefore, 
$\hat{\tau}(\hat{\bs{\beta}}, \hat{\bs{\gamma}}) - \hat{\theta}_{\tilde{\bs{C}}_\mathcal{S}^\res, \bs{E}} = o_{\Pr}(n^{-1/2})$. 
\end{proof}

\section{Randomization test for ReSEM}\label{sec:proof_rand_test}

Because conditional on $\mathcal{S}$, 
ReSEM reduces to a usual randomized experiment for the sampled units. 
Theorem \ref{thm:cond_rand_test} then follows immediately from the usual justification of Fisher randomization test. 
Besides, it is not hard to see that Theorem \ref{thm:cond_rand_test} also holds under $\sresem$.
Below we focus on the proof of Theorem \ref{thm:frt_weak_null}. 

\subsection{Technical lemmas}

\begin{lemma}\label{lemma:imp_outcome_fp_cond_resem}
    Consider $\sresem$ with pre-determined positive thresholds $a_S$ and $a_T$ for a finite population of size $N$. 
    For the sampled units in $\mathcal{S} = \{i: Z_i=1, 1\le i \le N\}$, 
    define $\tilde{Y}_i(1) = Y_i + (1-T_i) \tau$ and $\tilde{Y}_i(0) = Y_i - T_i \tau$, 
    where $\tau = \bar{Y}(1) - \bar{Y}(0)$ is 
    the true average treatment effect. 
    For the sampled units in $\mathcal{S}$, 
    let 
    \begin{align*}
        \bar{\tilde{Y}}_{\mathcal{S}}(t)
        = 
        n^{-1} \sum_{i \in \mathcal{S}} \tilde{Y}_i(t)
        \ \ \text{and} \ \ 
        \tilde{s}_t^2 = (n-1)^{-1} \sum_{i\in \mathcal{S}} \left\{ \tilde{Y}_i(t) - \bar{\tilde{Y}}_{\mathcal{S}}(t) \right\}^2, 
        \qquad 
        (t=0, 1)
    \end{align*}
    be the sample mean and sample variance of the potential outcome $\tilde{Y}_i(t)$, 
    $\bar{\bs{C}}_{\mathcal{S}}$ and $\bs{s}^2_{\bs{C}}$ be the sample mean and sample covariance matrix of covariate $\bs{C}$, 
    $\tilde{\bs{s}}_{\bs{C}, t}$ be the sample covariance matrix between $\bs{C}$ and $\tilde{Y}(t)$, 
    and 
    $\tilde{s}^2_{t\setminus \bs{C}} = \tilde{s}^2_t- \tilde{\bs{s}}_{t,\bs{C}}(\bs{s}^2_{\bs{C}})^{-1} \tilde{\bs{s}}_{\bs{C},t}$ be the sample variance of the residual from the linear projection of $\tilde{Y}(t)$ on $\bs{C}$. 
    If Condition \ref{cond:fp_est} holds, then under $\sresem$, 
    as $N\rightarrow \infty$, we have 
    \begin{itemize}
        \item[(i)] $\tilde{s}_1^2 = \tilde{s}_0^2$ converges in probability to a finite limit, 
        $\bs{s}^2_{\bs{C}}$ converges in probability to a nonsingular finite limit, and 
        $\tilde{\bs{s}}_{\bs{C}, 1} = \tilde{\bs{s}}_{\bs{C}, 0}$ converges in probability to a finite limit; 
        \item[(ii)] 
        $\max_{i \in \mathcal{S}} |\tilde{Y}_i(t) - \bar{\tilde{Y}}_{\mathcal{S}}(t)|^2/n \rightarrow 0$ for $t=0,1$ 
        and 
         $\max_{i \in \mathcal{S}} \|\bs{C}_i - \bar{\bs{C}}_{\mathcal{S}} \|_2^2/n \rightarrow 0$; 
         \item[(iii)] $\tilde{s}^2_{1\setminus \bs{C}} = \tilde{s}^2_{0\setminus \bs{C}} $ converges in probability to a positive finite limit. 
    \end{itemize}
\end{lemma}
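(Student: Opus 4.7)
\textbf{Proof proposal for Lemma \ref{lemma:imp_outcome_fp_cond_resem}.}
The plan is to reduce everything to sample covariances within treatment groups, to which Lemmas \ref{lemma:estimate}, \ref{lemma:diff_average}, and \ref{lemma:cov_sampled_units_resem} already apply under $\sresem$. The key observation is that $\tilde{Y}_i(1) - \tilde{Y}_i(0) = \tau$ for every sampled unit, so $\tilde{Y}_i(1)$ and $\tilde{Y}_i(0)$ differ by a deterministic constant; this immediately gives $\tilde{s}_1^2 = \tilde{s}_0^2$ and $\tilde{\bs{s}}_{\bs{C},1} = \tilde{\bs{s}}_{\bs{C},0}$, and it also reduces (ii) and (iii) to statements about $\tilde{Y}_i(0)$ alone. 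Next I would write
\[
\tilde{Y}_i(0) = T_i A_i + (1-T_i) B_i, \qquad A_i \equiv Y_i(1) - \tau, \qquad B_i \equiv Y_i(0),
\]
so that $\tilde{Y}_i(0)$ restricted to treated (resp.\ control) sampled units coincides with the auxiliary population quantity $A$ (resp.\ $B$), both of which have the same finite population mean $\bar{Y}(0)$. Thus every sample moment of $\tilde{Y}_i(0)$ over $\mathcal{S}$ can be split as a sum over the treated and control subsamples, each of which is a simple random sample from the $A$- and $B$-populations by \eqref{eq:srs_cre}.

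For (i) and (ii), the convergence of $\bs{s}_{\bs{C}}^2$ to $\bs{S}_{\bs{C}}^2$ is Lemma \ref{lemma:cov_sampled_units_resem}, whose limit is nonsingular by Condition \ref{cond:fp}(iii). For $\tilde{s}_0^2$ and $\tilde{\bs{s}}_{\bs{C},0}$, I would use the within-group decomposition
\[
(n-1)\tilde{s}_0^2 = (n_1-1)\bs{s}_{A}^2(1) + (n_0-1)\bs{s}_{B}^2(0) + n r_1 r_0 (\bar{A}_1 - \bar{B}_0)^2,
\]
and an analogous identity for $\tilde{\bs{s}}_{\bs{C},0}$, where $\bs{s}_{A}^2(1), \bs{s}_{\bs{C},A}(1), \bs{s}_{B}^2(0), \bs{s}_{\bs{C},B}(0)$ are sample moments within groups. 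Lemma \ref{lemma:estimate} gives their convergence to the corresponding finite population covariances (which exist by Condition \ref{cond:fp}(iii)), while Lemma \ref{lemma:diff_average} gives $\bar{A}_1 - \bar{B}_0 = O_{\Pr}(n^{-1/2})$, killing the last term. Combining, $\tilde{s}_0^2 \convergep r_1 S_1^2 + r_0 S_0^2$ and $\tilde{\bs{s}}_{\bs{C},0} \convergep r_1 \bs{S}_{\bs{C},1} + r_0 \bs{S}_{\bs{C},0}$, both finite. Part (ii) follows from Condition \ref{cond:fp}(iv) via $\|\bs{C}_i - \bar{\bs{C}}_{\mathcal{S}}\|_2 \le 2 \max_{1\le j\le N}\|\bs{C}_j - \bar{\bs{C}}\|_2 + \|\bar{\bs{C}}_{\mathcal{S}} - \bar{\bs{C}}\|_2 = o(n^{1/2}) + O_{\Pr}(1)$, and from the bound $|\tilde{Y}_i(t) - \bar{\tilde{Y}}_{\mathcal{S}}(t)| \le \max_{1\le j\le N}|Y_j(1) - \bar{Y}(1)| + \max_{1\le j\le N}|Y_j(0) - \bar{Y}(0)| + O_{\Pr}(1)$ together with the corresponding clause of Condition \ref{cond:fp}(iv).

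For (iii), plugging the limits of (i) into the definition of $\tilde{s}_{0\setminus \bs{C}}^2 = \tilde{s}_0^2 - \tilde{\bs{s}}_{0,\bs{C}}(\bs{s}_{\bs{C}}^2)^{-1}\tilde{\bs{s}}_{\bs{C},0}$ and letting $\bs{\alpha}_t \equiv (\bs{S}_{\bs{C}}^2)^{-1}\bs{S}_{\bs{C},t}$, the continuous mapping theorem yields
\[
\tilde{s}_{0\setminus \bs{C}}^2 \convergep r_1 S_{1\setminus \bs{C}}^2 + r_0 S_{0\setminus \bs{C}}^2 + r_1 r_0 (\bs{\alpha}_1 - \bs{\alpha}_0)^\top \bs{S}_{\bs{C}}^2 (\bs{\alpha}_1 - \bs{\alpha}_0),
\]
after collecting the cross-term
$(r_1\bs{\alpha}_1 + r_0\bs{\alpha}_0)^\top \bs{S}_{\bs{C}}^2 (r_1\bs{\alpha}_1 + r_0\bs{\alpha}_0)$ against $r_1 S_{1\mid\bs{C}}^2 + r_0 S_{0\mid\bs{C}}^2$. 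The last displayed term is a quadratic form in a positive semidefinite matrix, hence nonnegative. Since Condition \ref{cond:fp_est} requires at least one of $S_{1\setminus\bs{C}}^2$ or $S_{0\setminus\bs{C}}^2$ to have a positive limit and $r_0, r_1$ have positive limits by Condition \ref{cond:fp}(ii), the limit is strictly positive.

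The main obstacle will be executing the identity in (iii) cleanly: the cancellation that collapses $r_1 S_1^2 + r_0 S_0^2 - (r_1\bs{\alpha}_1 + r_0\bs{\alpha}_0)^\top \bs{S}_{\bs{C}}^2(r_1\bs{\alpha}_1 + r_0\bs{\alpha}_0)$ into the within/between decomposition above is the bookkeeping step that makes the positivity transparent, and one must be careful to keep track of $n_t/n \to r_t$ (rather than $n_t - 1$) in the limits from Lemma \ref{lemma:estimate} when combining the within-group sample covariances weighted by group proportions.
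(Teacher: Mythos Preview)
Your proposal is correct and follows essentially the same route as the paper. Your auxiliary variables $A_i = Y_i(1)-\tau$ and $B_i = Y_i(0)$ simply re-parametrize the paper's decomposition: the paper writes $(n-1)\tilde{s}_1^2 = (n_1-1)s_1^2 + (n_0-1)s_0^2 + n r_1 r_0(\bar{Y}_1 - \bar{Y}_0 - \tau)^2$, which is exactly your identity since $s_A^2(1)=s_1^2$, $s_B^2(0)=s_0^2$, and $\bar{A}_1 - \bar{B}_0 = \bar{Y}_1 - \bar{Y}_0 - \tau$; both then invoke Lemmas \ref{lemma:estimate} and \ref{lemma:diff_average}, and your limit in (iii) equals the paper's $r_1 S_{1\setminus\bs{C}}^2 + r_0 S_{0\setminus\bs{C}}^2 + r_1 r_0 S_{\tau\mid\bs{C}}^2$ once you recognize $(\bs{\alpha}_1-\bs{\alpha}_0)^\top \bs{S}_{\bs{C}}^2(\bs{\alpha}_1-\bs{\alpha}_0) = S_{\tau\mid\bs{C}}^2$. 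One small note: for (ii) the paper bounds $|\tilde{Y}_i(t) - \bar{\tilde{Y}}_{\mathcal{S}}(t)|$ by a purely \emph{deterministic} multiple of $\max_j|Y_j(1)-\bar{Y}(1)| + \max_j|Y_j(0)-\bar{Y}(0)|$ (no $O_{\Pr}(1)$ slack), which gives sure rather than in-probability convergence; this is cleaner for the subsequence argument used downstream, so you may want to tighten your bound accordingly.
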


\begin{lemma}\label{lemma:est_cdf_trans_converge}
    Suppose $\hat{V}_N, \hat{R}_{1N}^2$ and $\hat{R}_{2N}^2$ are consistent estimators for $V_N>0, R_{1N}^2\in [0,1]$ and $R_{2N}^2\in [0,1]$, in the sense that as $N\rightarrow \infty$, 
    $$
    \hat{V}_N-V_N = o_{\Pr}(1), \quad \hat{R}_{1N}^2-R_{1N}^2 = o_{\Pr}(1), \quad \hat{R}_{2N}^2-R_{2N}^2 = o_{\Pr}(1), 
    $$ 
    $V_N, R_{1N}^2$ and $R_{2N}^2$ converge to $V_{\infty}>0$, $R_{1\infty}^2\in [0,1]$ and $R_{2\infty}^2\in [0,1]$ as $N\rightarrow \infty$, 
    and 
    $\xi_N$ is a random variable converging weakly to $\xi_{\infty}$ as $N\rightarrow \infty$. 
    Define $\Psi_{V, R_1^2, R_2^2}(\cdot)$ the same as in Lemma \ref{lemma:continuous_cdf}, with predetermined fixed positive thresholds $a_1$ and $a_2$ and positive integers $K_1$ and $K_2$. 
    Then, as $N\rightarrow \infty$, 
    $
        \Psi_{\hat{V}_N, \hat{R}_{1N}^2, \hat{R}_{2N}^2}(\xi_N) 
        \converged 
        \Psi_{V_{\infty}, R_{1\infty}^2, R_{2\infty}^2}(\xi_{\infty}). 
    $
\end{lemma}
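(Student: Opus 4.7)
The plan is to reduce the claim to a direct application of the continuous mapping theorem, using Lemma \ref{lemma:continuous_cdf} to supply the continuity of $\Psi_{V,R_1^2,R_2^2}(y)$ in all four arguments. Since the target limit is the composition $\Psi_{V_\infty, R_{1\infty}^2, R_{2\infty}^2}(\xi_\infty)$ of a continuous function with the joint weak limit of $(\hat V_N, \hat R_{1N}^2, \hat R_{2N}^2, \xi_N)$, the proof amounts to establishing that joint weak limit and then invoking continuous mapping.

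First I would show that each of $\hat V_N$, $\hat R_{1N}^2$, and $\hat R_{2N}^2$ converges in probability to a deterministic constant. Writing $\hat V_N - V_\infty = (\hat V_N - V_N) + (V_N - V_\infty)$, the first term is $o_{\Pr}(1)$ by assumption and the second is $o(1)$ because $V_N \to V_\infty$ deterministically; so $\hat V_N \convergep V_\infty$. The same argument yields $\hat R_{1N}^2 \convergep R_{1\infty}^2$ and $\hat R_{2N}^2 \convergep R_{2\infty}^2$.

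Next I would assemble joint convergence of the four-tuple. Since the probability limits of the first three coordinates are constants, Slutsky's theorem (in its vector form) combines with $\xi_N \converged \xi_\infty$ to give
\begin{equation*}
(\hat V_N,\ \hat R_{1N}^2,\ \hat R_{2N}^2,\ \xi_N) \converged (V_\infty,\ R_{1\infty}^2,\ R_{2\infty}^2,\ \xi_\infty).
\end{equation*}
By Lemma \ref{lemma:continuous_cdf}, the map $(V, R_1^2, R_2^2, y) \mapsto \Psi_{V, R_1^2, R_2^2}(y)$ is continuous on $(0,\infty) \times [0,1] \times [0,1] \times \mathbb{R}$, and the limiting random vector $(V_\infty, R_{1\infty}^2, R_{2\infty}^2, \xi_\infty)$ lies in this domain almost surely because $V_\infty > 0$ by hypothesis. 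The continuous mapping theorem then yields the desired conclusion.

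The only point that requires any care is the use of Slutsky in vector form, which needs the three probability limits to be constants rather than merely random variables; this is ensured precisely by the hypothesis that $V_N, R_{1N}^2, R_{2N}^2$ have deterministic limits. Verifying that $V_\infty > 0$ keeps the limit inside the open half-space on which $\Psi$ is continuous (the only place where continuity could fail in Lemma \ref{lemma:continuous_cdf}) is the other small subtlety, but it is handed to us directly in the hypotheses. No routine calculation is needed beyond these steps.
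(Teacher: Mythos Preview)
Your proposal is correct and matches the paper's own proof essentially line for line: the paper also first asserts the joint convergence $(\hat V_N,\hat R_{1N}^2,\hat R_{2N}^2,\xi_N)\converged(V_\infty,R_{1\infty}^2,R_{2\infty}^2,\xi_\infty)$ and then invokes Lemma \ref{lemma:continuous_cdf} together with the continuous mapping theorem. Your write-up is simply a bit more explicit about the Slutsky step and the role of $V_\infty>0$, which the paper leaves implicit.
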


\begin{lemma}\label{lemma:stoch_uniform}
    Define $\Psi_{V, R_1^2, R_2^2}(\cdot)$ the same as in Lemma \ref{lemma:continuous_cdf}, with predetermined fixed positive thresholds $a_1$ and $a_2$ and positive integers $K_1$ and $K_2$, 
    and $\psi_{V, R_1^2, R_2^2}$ as a random variable with distribution function $\Psi_{V, R_1^2, R_2^2}(\cdot)$. 
    If two triples $(\tilde{V}, \tilde{R}_1^2, \tilde{R}_2^2)$ and $(V, R_1^2, R_2^2)$ in $[0, \infty) \times [0, 1] \times [0, 1]$ satisfy that $\tilde{V} > 0$, and 
    \begin{align}\label{eq:stoch_uniform_cond}
        \tilde{V} \ge V,
        \quad
        \tilde{V}\tilde{R}_1^2 \le VR_1^2, 
        \quad 
        \tilde{V}\tilde{R}_2^2 \le VR_2^2, 
    \end{align}
    then 
    the random variable $2\Psi_{\tilde{V}, \tilde{R}_1^2, \tilde{R}_2^2}(|\psi_{V, R_1^2, R_2^2}|)-1$ must be stochastically smaller than or equal to 
    $\Unif(0,1)$. 
    Moreover, $2\Psi_{\tilde{V}, \tilde{R}_1^2, \tilde{R}_2^2}(|\psi_{V, R_1^2, R_2^2}|)-1$ becomes uniformly distributed on interval $(0,1)$ when the inequalities in \eqref{eq:stoch_uniform_cond} hold with equality. 
\end{lemma}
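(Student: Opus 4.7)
The claim should follow by combining Lemma~\ref{lemma:sotch_est} with the probability integral transform, after exploiting the symmetry of $\Psi_{V,R_1^2,R_2^2}$ around zero.

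First I would record the symmetry of $\psi_{V,R_1^2,R_2^2}$: since $\varepsilon$ and the constrained-Gaussian variables $L_{K_1,a_1},L_{K_2,a_2}$ are each symmetric about $0$ (the latter two by Lemma~\ref{lemma:LKa}(iii)) and mutually independent, so is any weighted sum. Letting $G$ denote the CDF of $|\psi_{V,R_1^2,R_2^2}|$ and $\tilde G$ that of $|\psi_{\tilde V,\tilde R_1^2,\tilde R_2^2}|$, this symmetry gives, for every $x\ge 0$,
\begin{align*}
G(x) \;=\; 2\,\Psi_{V,R_1^2,R_2^2}(x) - 1, \qquad \tilde G(x) \;=\; 2\,\Psi_{\tilde V,\tilde R_1^2,\tilde R_2^2}(x) - 1,
\end{align*}
so that the statistic in question is precisely $\tilde G\bigl(|\psi_{V,R_1^2,R_2^2}|\bigr)$.

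Next I would observe that \eqref{eq:stoch_uniform_cond} is exactly the hypothesis of Lemma~\ref{lemma:sotch_est}; invoking that lemma delivers $\tilde G(c)\le G(c)$ for every $c\ge 0$. For any $\alpha\in(0,1)$, this pointwise domination gives $\{G(|\psi_{V,R_1^2,R_2^2}|)\le \alpha\}\subseteq \{\tilde G(|\psi_{V,R_1^2,R_2^2}|)\le \alpha\}$, hence
\begin{align*}
\Pr\!\bigl(\tilde G(|\psi_{V,R_1^2,R_2^2}|)\le \alpha\bigr) \;\ge\; \Pr\!\bigl(G(|\psi_{V,R_1^2,R_2^2}|)\le \alpha\bigr) \;=\; \alpha,
\end{align*}
where the final equality is the probability integral transform applied to the continuous random variable $\psi_{V,R_1^2,R_2^2}$; this is the desired stochastic dominance. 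In the equality case, $\tilde V>0$ together with equality in \eqref{eq:stoch_uniform_cond} forces $(\tilde V,\tilde R_1^2,\tilde R_2^2)=(V,R_1^2,R_2^2)$, so $\tilde G\equiv G$ and the statistic is exactly $\Unif(0,1)$.

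The only item requiring real care will be the continuity of $\psi_{V,R_1^2,R_2^2}$ invoked in the probability integral transform. When $V>0$ this is immediate since at least one of $\sqrt{1-R_1^2-R_2^2}\,\varepsilon$, $\sqrt{R_1^2}\,L_{K_1,a_1}$, $\sqrt{R_2^2}\,L_{K_2,a_2}$ has a density (Gaussian, or constrained-Gaussian via Lemma~\ref{lemma:LKa}), and convolution with the independent remaining components preserves absolute continuity. The boundary $V=0$ makes $\psi_{V,R_1^2,R_2^2}=0$ almost surely; then $\tilde V>0$ guarantees $\tilde G$ is continuous with $\tilde G(0)=0$, so the statistic equals $0$ deterministically and the dominance is trivial. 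I do not anticipate any deeper obstacle beyond these routine continuity checks, since the substantive content is already contained in Lemma~\ref{lemma:sotch_est}.
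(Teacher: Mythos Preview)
Your proposal is correct and follows essentially the same route as the paper: both arguments use symmetry to rewrite $2\Psi_{\tilde V,\tilde R_1^2,\tilde R_2^2}(|\psi_{V,R_1^2,R_2^2}|)-1$ as $\tilde G(|\psi_{V,R_1^2,R_2^2}|)$, invoke Lemma~\ref{lemma:sotch_est} for the ordering $\tilde G\le G$, and finish with a continuity/PIT step. The only cosmetic difference is that the paper pivots on the quantiles of the \emph{tilde} distribution (so continuity comes for free from $\tilde V>0$ and no $V=0$ case split is needed), whereas you apply the PIT to the \emph{untilded} distribution and handle $V=0$ separately; both executions are valid.
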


\subsection{Proofs of the lemmas}

\begin{proof}[Proof of Lemma \ref{lemma:imp_outcome_fp_cond_resem}]
    First, we prove that  $\tilde{s}_1^2 = \tilde{s}_0^2$ converges in probability to a finite limit as $N\rightarrow \infty$. 
    Note that $\tilde{Y}_i(1) = \tilde{Y}_i(0)+\tau$ for all $i\in \mathcal{S}$.  
    $\tilde{s}_1^2$ and $\tilde{s}_0^2$ must be equal, and thus it suffices to prove that $\tilde{s}_1^2$ converges in probability to a finite limit. 
    By definition, $\bar{\tilde{Y}}_{\mathcal{S}}(1)$ has the following equivalent forms:
    \begin{align*}
        \bar{\tilde{Y}}_{\mathcal{S}}(1) & = 
        \frac{1}{n} \sum_{i\in \mathcal{S}} \tilde{Y}_i(1)
        = \frac{1}{n} \sum_{i\in \mathcal{S}} T_i Y_i + \frac{1}{n} \sum_{i\in \mathcal{S}} (1-T_i)(Y_i+\tau)
        = r_1 \bar{Y}_1 + r_0 \bar{Y}_0 + r_0 \tau. 
    \end{align*}
    Thus, the difference between $\tilde{Y}_i(1)$ and $\bar{\tilde{Y}}_{\mathcal{S}}(1)$ for $i\in \mathcal{S}$ has the following equivalent forms:
    \begin{align}\label{eq:centered_Y_tilde}
        \tilde{Y}_i(1) - \bar{\tilde{Y}}_{\mathcal{S}}(1)
        & = 
        T_i \left\{ Y_i - \left( r_1 \bar{Y}_1 + r_0 \bar{Y}_0 + r_0 \tau \right) \right\}
        + 
        (1-T_i) \left\{ Y_i + \tau - \left( r_1 \bar{Y}_1 + r_0 \bar{Y}_0 + r_0 \tau \right) \right\}
        \nonumber
        \\
        & = 
        T_i \left\{ Y_i - \bar{Y}_1 + r_0 \left( \bar{Y}_1 - \bar{Y}_0 - \tau \right) \right\}
        + 
        (1-T_i) \left\{ Y_i - \bar{Y}_0 - r_1 \left(  \bar{Y}_1 - \bar{Y}_0 - \tau  \right) \right\}. 
    \end{align}
    This implies that the sample variance of $\tilde{Y}_i(1)$ has the following equivalent forms: 
    \begin{align*}
        (n-1)\tilde{s}^2_1 
        & 
        = 
        \sum_{i\in \mathcal{S}} T_i \left\{ Y_i - \bar{Y}_1 + r_0 \left( \bar{Y}_1 - \bar{Y}_0 - \tau \right) \right\}^2 
        + 
        \sum_{i\in \mathcal{S}} (1-T_i) \left\{ Y_i - \bar{Y}_0 - r_1 \left(  \bar{Y}_1 - \bar{Y}_0 - \tau  \right) \right\}^2 
        \\
        & = 
        \sum_{i\in \mathcal{S}} T_i \left( Y_i - \bar{Y}_1\right)^2 
        + 
        n_1 r_0^2 \left( \bar{Y}_1 - \bar{Y}_0 - \tau \right)^2
        + 
        \sum_{i\in \mathcal{S}} (1-T_i) \left( Y_i - \bar{Y}_0 \right)^2
        + 
        n_0 r_1^2 \left(  \bar{Y}_1 - \bar{Y}_0 - \tau  \right)^2\\
        & = 
        (n_1-1) s^2_1 + (n_0-1) s^2_0 + 
        n r_1 r_0 \left(\bar{Y}_1 - \bar{Y}_0 - \tau \right)^2. 
    \end{align*}
    From Lemmas \ref{lemma:estimate} and \ref{lemma:diff_average}, 
    under $\sresem$, 
    $s_t^2 - S_t^2 \convergep 0$ and $\bar{Y}_t - \bar{Y}(t) \convergep 0$ for $t=0,1$ as $N\rightarrow \infty$. 
    These then imply that 
    \begin{align*}
        \tilde{s}_1^2 - (r_1 S_1^2 + r_0 S_0^2)
        & = 
        \frac{n_1-1}{n-1} s^2_1 - r_1S_1^2 + \frac{n_0-1}{n-1} s^2_0 - r_0 S_0^2 + 
        \frac{n}{n-1} r_1 r_0 \left(\bar{Y}_1 - \bar{Y}_0 - \tau \right)^2\\
        & = 
        \frac{n_1-1}{n-1} \left( s^2_1 - S^2_1 \right) - \frac{1-r_1}{n-1} S_1^2 + \frac{n_0-1}{n-1} \left( s^2_0 - S^2_0 \right) - \frac{1-r_0}{n-1} S_0^2 \\
        & \quad \ + 
        \frac{n}{n-1} r_1 r_0 \left\{ \bar{Y}_1 - \bar{Y}(1) - \bar{Y}_0 + \bar{Y}(0) \right\}^2\\
        & \convergep 0. 
    \end{align*}
    Therefore, $\tilde{s}_1^2$ must converge in probability to the limit of $(r_1 S_1^2 + r_0 S_0^2)$ as $N\rightarrow \infty$.  
    
    Second, from Lemma \ref{lemma:cov_sampled_units_resem}, 
    $\bs{s}^2_{\bs{C}} - \bs{S}^2_{\bs{C}} \convergep 0$ as $N \rightarrow \infty$ under $\sresem$. 
    Consequently, from Condition \ref{cond:fp_est}, 
    $\bs{s}^2_{\bs{C}}$ converges in probability to the limit of $\bs{S}^2_{\bs{C}}$ as $N\rightarrow \infty$, which is nonsingular. 
    
    Third, we prove that $\tilde{\bs{s}}_{\bs{C}, 1} = \tilde{\bs{s}}_{\bs{C}, 0}$ converges in probability to a finite limit. 
    Because 
    $\tilde{Y}_i(1) = \tilde{Y}_i(0)+\tau$ for all $i\in \mathcal{S}$, 
    $\tilde{\bs{s}}_{\bs{C}, 1}$ and $\tilde{\bs{s}}_{\bs{C}, 0}$ must be equal. Thus, it suffices to prove that $\tilde{\bs{s}}_{\bs{C}, 1}$ converges in probability to a finite limit. 
    By definition and from \eqref{eq:centered_Y_tilde}, 
    \begin{align*}
        \tilde{s}_{\bs{C}, 1} & = \frac{1}{n-1}\sum_{i\in \mathcal{S}} ( \bs{C}_i - \bar{\bs{C}}_{\mathcal{S}} ) \{ \tilde{Y}_i(1) - \bar{\tilde{Y}}_{\mathcal{S}}(1) \}
        \\
        & = 
        \frac{1}{n-1}\sum_{i\in \mathcal{S}} T_i ( \bs{C}_i - \bar{\bs{C}}_1 + \bar{\bs{C}}_1 -  \bar{\bs{C}}_{\mathcal{S}} ) \left\{ Y_i - \bar{Y}_1 + r_0 \left( \bar{Y}_1 - \bar{Y}_0 - \tau \right) \right\}
        \\
        & \quad \ + 
        \frac{1}{n-1} \sum_{i\in \mathcal{S}} (1-T_i) ( \bs{C}_i - \bar{\bs{C}}_0 + \bar{\bs{C}}_0 -  \bar{\bs{C}}_{\mathcal{S}} ) \left\{ Y_i - \bar{Y}_0 - r_1 \left(  \bar{Y}_1 - \bar{Y}_0 - \tau  \right) \right\}
        \\
        & = 
        \frac{n_1-1}{n-1} \bs{s}_{\bs{C}, 1}
        + 
        \frac{n_0-1}{n-1} \bs{s}_{\bs{C}, 0}
        + 
        \frac{n_1 r_0}{n-1} (\bar{\bs{C}}_1 -  \bar{\bs{C}}_{\mathcal{S}})  \left( \bar{Y}_1 - \bar{Y}_0 - \tau \right)
        - 
        \frac{n_0r_1}{n-1} (\bar{\bs{C}}_0 -  \bar{\bs{C}}_{\mathcal{S}}) \left(  \bar{Y}_1 - \bar{Y}_0 - \tau  \right) \\
        & = 
        \frac{n_1-1}{n-1} \bs{s}_{\bs{C}, 1}
        + 
        \frac{n_0-1}{n-1} \bs{s}_{\bs{C}, 0}
        + 
        \frac{n}{n-1} r_1r_0 (\bar{\bs{C}}_1 -  \bar{\bs{C}}_0)  \left( \bar{Y}_1 - \bar{Y}_0 - \tau \right).
    \end{align*}
    From Lemmas \ref{lemma:estimate} and \ref{lemma:diff_average}, 
    under $\sresem$, 
    $\bs{s}_{\bs{C}, t} - \bs{S}_{\bs{C}, t} \convergep 0$, 
    $\bar{\bs{C}}_t - \bar{\bs{C}} \convergep 0$ 
    and 
    $\bar{Y}_t - Y(t) \convergep 0$ for $t=0,1$ as $N\rightarrow \infty$. 
    These then imply that 
    \begin{align*}%
        \tilde{s}_{\bs{C}, 1} - (r_1 \bs{S}_{\bs{C}, 1} + r_0 \bs{S}_{\bs{C}, 0})
        &\  = 
        \frac{n_1-1}{n-1} \left( \bs{s}_{\bs{C}, 1} - \bs{S}_{\bs{C}, 1} \right)
        - 
        \frac{1-r_1}{n-1} \bs{S}_{\bs{C}, 1}
        + 
        \frac{n_0-1}{n-1} \left( \bs{s}_{\bs{C}, 0} - \bs{S}_{\bs{C}, 0} \right)
        - 
        \frac{1-r_0}{n-1} \bs{S}_{\bs{C}, 0}
        \nonumber
        \\
        & \ \quad \ + 
        \frac{n}{n-1} r_1r_0 (\bar{\bs{C}}_1 - \bar{\bs{C}} -  \bar{\bs{C}}_0 + \bar{\bs{C}})  \left( \bar{Y}_1 - \bar{Y}(1) - \bar{Y}_0 + \bar{Y}(0) \right)
        \nonumber
        \\
        & \convergep 0. 
    \end{align*}
    Therefore, $\tilde{s}_{\bs{C}, 1}$ must converge in probability to the limit of $r_1 \bs{S}_{\bs{C}, 1} + r_0 \bs{S}_{\bs{C}, 0}$. 
    
    Fourth, we prove that $\max_{i \in \mathcal{S}} |\tilde{Y}_i(t) - \bar{\tilde{Y}}_{\mathcal{S}}(t)|^2/n \rightarrow 0$ for $t=0,1$ 
    and 
    $\max_{i \in \mathcal{S}} \|\bs{C}_i - \bar{\bs{C}}_{\mathcal{S}} \|_2^2/n \rightarrow 0$
    as $N\rightarrow \infty$. 
    By definition and from \eqref{eq:centered_Y_tilde}, 
    \begin{align*}
        \left| \tilde{Y}_i(1) - \bar{\tilde{Y}}_{\mathcal{S}}(1) \right|
        & = 
        T_i \left| Y_i(1) - \bar{Y}_1 + r_0 \left( \bar{Y}_1 - \bar{Y}_0 - \tau \right) \right|
        + 
        (1-T_i) \left| Y_i(0) - \bar{Y}_0 - r_1 \left(  \bar{Y}_1 - \bar{Y}_0 - \tau \right) \right|
        \\
        & \le 
        \left| Y_i(1) - \bar{Y}_1 \right|
        + 
        \left| Y_i(0) - \bar{Y}_0 \right| + \left|  \bar{Y}_1 - \bar{Y}_0 - \tau  \right|\\
        & \le \left| Y_i(1) - \bar{Y}_1 \right|
        + 
        \left| Y_i(0) - \bar{Y}_0 \right| + \left|  \bar{Y}_1 - \bar{Y}(1) \right| + \left| \bar{Y}_0 - \bar{Y}(0) \right|. 
    \end{align*}
    Note that for $t=0,1$, 
    \begin{align*}
        \left| \bar{Y}_t - \bar{Y}(t) \right| & = 
        \left| \frac{1}{n_t}\sum_{j:Z_j = 1, T_j=t} Y_j(t) - \bar{Y}(t) \right|
        \le 
        \frac{1}{n_t} \sum_{j:Z_j = 1, T_j=t} \left| Y_j(t) - \bar{Y}(t) \right|
        \le 
        \max_{1\le j \le N} \left| Y_j(t) - \bar{Y}(t) \right|, 
    \end{align*}
    and 
    \begin{align*}
        \left| Y_i(t) - \bar{Y}_t \right|
        & \le 
        \left| Y_i(t) - \bar{Y}(t) \right| + \left| \bar{Y}_t - \bar{Y}(t)  \right|
        \le 2\max_{1\le j \le N} \left| Y_j(t) - \bar{Y}(t) \right|. 
    \end{align*}
    We can then bound the maximum absolute difference between $\tilde{Y}_i(1)$ and  $\bar{\tilde{Y}}_{\mathcal{S}}(1)$ by 
    \begin{align*}
        \max_{i\in \mathcal{S}}\left| \tilde{Y}_i(1) - \bar{\tilde{Y}}_{\mathcal{S}}(1) \right|
        & \le 
        3 \max_{1\le j \le N} \left| Y_j(1) - \bar{Y}(1) \right|
        + 
        3 \max_{1\le j \le N} \left| Y_j(0) - \bar{Y}(0) \right|. 
    \end{align*}
    From 
    Condition \ref{cond:fp_est}, as $N\rightarrow \infty$,  
    \begin{align*}
        \frac{1}{n^{1/2}} \max_{i\in \mathcal{S}}\left| \tilde{Y}_i(1) - \bar{\tilde{Y}}_{\mathcal{S}}(1) \right|
        & \  \le 
        \frac{3}{n^{1/2}} \max_{1\le i \le N} \left| Y_i(1) - \bar{Y}(1) \right|
        + 
        \frac{3}{n^{1/2}}  \max_{1\le i \le N} \left| Y_i(0) - \bar{Y}(0) \right| 
        \\
        & \converge 0.
    \end{align*}
    This immediately implies that 
    $n^{-1} \max_{i\in \mathcal{S}} \{ \tilde{Y}_i(1) - \bar{\tilde{Y}}_{\mathcal{S}}(1) \}^2 \converge 0$. 
    Because $\tilde{Y}_i(1) - \bar{\tilde{Y}}_{\mathcal{S}}(1) = \tilde{Y}_i(0) - \bar{\tilde{Y}}_{\mathcal{S}}(0)$ for all $i\in \mathcal{S}$, we also have 
    $n^{-1} \max_{i\in \mathcal{S}}\{ \tilde{Y}_i(0) - \bar{\tilde{Y}}_{\mathcal{S}}(0) \}^2 \converge 0$ as $N\rightarrow \infty$. 
    Note that for any $i \in \mathcal{S}$, 
    \begin{align*}
        \|\bs{C}_i - \bar{\bs{C}}_{\mathcal{S}} \|_2
        & = 
        \left\| \bs{C}_i - \bar{\bs{C}} - n^{-1}\sum_{j\in \mathcal{S}} \left( \bs{C}_j - \bar{\bs{C}} \right) \right\|_2
        \le 
        \left\| \bs{C}_i - \bar{\bs{C}} \right\|_2
        + 
        n^{-1} \sum_{j\in \mathcal{S}} \left\| \bs{C}_j - \bar{\bs{C}}\right\|_2
        \le 
        2 \max_{1\le j \le N} \left\| \bs{C}_j - \bar{\bs{C}} \right\|_2. 
    \end{align*}
    Thus, as $N\rightarrow \infty$, we must have 
    \begin{align*}
        \frac{1}{n} \max_{i \in \mathcal{S}}\|\bs{C}_i - \bar{\bs{C}}_{\mathcal{S}} \|_2^2
        & \le 
        \frac{4}{n} \max_{1\le j \le N} \left\| \bs{C}_j - \bar{\bs{C}} \right\|_2^2
        \rightarrow 0. 
    \end{align*}
    
    Fifth, we prove that $\tilde{s}^2_{1\setminus \bs{C}} = \tilde{s}^2_{0\setminus \bs{C}}$
    converges in probability to a positive finite limit. 
    Because 
    $\tilde{Y}_i(1) = \tilde{Y}_i(0)+\tau$ for all $i\in \mathcal{S}$, 
    $\tilde{s}^2_{1\setminus \bs{C}}$ and $\tilde{s}^2_{0\setminus \bs{C}}$ must be equal. Thus, it suffices to prove that $\tilde{s}^2_{1\setminus \bs{C}}$ converges in probability to a positive finite limit. 
    From the discussion before and Lemma \ref{lemma:cov_sampled_units_resem}, as $N\rightarrow \infty$, $\tilde{s}^2_{1\setminus \bs{C}} =\tilde{s}^2_1 -  \tilde{\bs{s}}_{1,\bs{C}}(\bs{s}^2_{\bs{C}})^{-1} \tilde{\bs{s}}_{\bs{C},1}$  converges in probability to the limit of 
    \begin{align*}
    	& \quad \ (r_1 S_1^2 + r_0 S_0^2) - (r_1 \bs{S}_{1, \bs{C}} + r_0 \bs{S}_{0, \bs{C}}) \left(\bs{S}^2_{\bs{C}}\right)^{-1} (r_1 \bs{S}_{\bs{C}, 1} + r_0 \bs{S}_{\bs{C}, 0})
    	\\
    	& = r_1 S_{1\setminus \bs{C}}^2 + r_0 S_{0\setminus \bs{C}}^2 + r_1 S_{1\mid \bs{C}}^2 + r_0 S_{0\mid \bs{C}}^2
    	-r_1^2 S_{1\mid \bs{C}}^2 - r_0^2 S_{0\mid \bs{C}}^2 - 2 r_1r_0 \bs{S}_{1, \bs{C}} \left(\bs{S}^2_{\bs{C}}\right)^{-1}\bs{S}_{\bs{C}, 0}
    	\\
    	& = r_1 S_{1\setminus \bs{C}}^2 + r_0 S_{0\setminus \bs{C}}^2 + r_1r_0 \left\{ S_{1\mid \bs{C}}^2  + S_{0\mid \bs{C}}^2 - 2 \bs{S}_{1, \bs{C}} \left(\bs{S}^2_{\bs{C}}\right)^{-1}\bs{S}_{\bs{C}, 0}  \right\}\\
    	& = r_1 S_{1\setminus \bs{C}}^2 + r_0 S_{0\setminus \bs{C}}^2 + r_1r_0 (\bs{S}_{1, \bs{C}} - \bs{S}_{0, \bs{C}}) \left(\bs{S}^2_{\bs{C}}\right)^{-1} (\bs{S}_{\bs{C}, 1} - \bs{S}_{\bs{C}, 0})\\
    	& = r_1 S_{1\setminus \bs{C}}^2 + r_0 S_{0\setminus \bs{C}}^2 + r_1r_0S_{\tau\mid \bs{C}}^2, 
    \end{align*}
	which must be positive under Condition \ref{cond:fp_est}. 
    
    From the above, Lemma \ref{lemma:imp_outcome_fp_cond_resem} holds. 
\end{proof}

\begin{proof}[Proof of Lemma \ref{lemma:est_cdf_trans_converge}]
    From the conditions in Lemma \ref{lemma:est_cdf_trans_converge}, we can know that 
    $
        (\hat{V}_N, \hat{R}_{1N}^2,  \hat{R}_{2N}^2, \xi_N)
        \converged
        (V_{\infty}, R_{1\infty}^2, R_{2\infty}^2, \xi_{\infty}). 
    $
    From Lemma \ref{lemma:continuous_cdf} and the continuous mapping theorem, 
    $
        \Psi_{\hat{V}_N, \hat{R}_{1N}^2, \hat{R}_{2N}^2}(\xi_N) 
        \converged 
        \Psi_{V_{\infty}, R_{1\infty}^2, R_{2\infty}^2}(\xi_{\infty}), 
    $
    i.e., Lemma \ref{lemma:est_cdf_trans_converge} holds. 
\end{proof}

\begin{proof}[Proof of Lemma \ref{lemma:stoch_uniform}]
	For any $\alpha\in (0,1)$, 
	let $\xi_{\alpha} = \Psi_{\tilde{V}, \tilde{R}_1^2, \tilde{R}_2^2}^{-1}((\alpha+1)/2)$ denote the $(\alpha+1)/2$th quantile of the random variable $\psi_{\tilde{V}, \tilde{R}_1^2, \tilde{R}_2^2}$, 
	where $\Psi_{\tilde{V}, \tilde{R}_1^2, \tilde{R}_2^2}^{-1}(\cdot)$ denotes the quantile function. 
	Because $\tilde{V}>0$, 
	$\psi_{\tilde{V}, \tilde{R}_1^2, \tilde{R}_2^2}$ must be a continuous random variable, and $\xi_{\alpha} > 0$. 
	For any $\alpha \in (0, 1)$, by the property of quantile function, we have 
	\begin{align*}
		& \quad \ \Pr\big\{
		2\Psi_{\tilde{V}, \tilde{R}_1^2, \tilde{R}_2^2}(|\psi_{V, R_1^2, R_2^2}|)-1 \ge \alpha
		\big\}
		\\
		& = 
		\Pr\big\{
		\Psi_{\tilde{V}, \tilde{R}_1^2, \tilde{R}_2^2}(|\psi_{V, R_1^2, R_2^2}|) \ge (\alpha+1)/2
		\big\}
		= 
		\Pr\big\{
		|\psi_{V, R_1^2, R_2^2}| \ge \xi_{\alpha} 
		\big\}
		= 
		1 - 
		\lim_{t \rightarrow \xi_{\alpha}-}
		\Pr\big\{
		|\psi_{V, R_1^2, R_2^2}| \le t
		\big\}
		\\
		& 
		\le 
		1 - 
		\lim_{t \rightarrow \xi_{\alpha}-}
		\Pr\big\{
		|\psi_{\tilde{V}, \tilde{R}_1^2, \tilde{R}_2^2}| \le t
		\big\}
		= 
		\Pr\big\{
		|\psi_{\tilde{V}, \tilde{R}_1^2, \tilde{R}_2^2}| \ge \xi_{\alpha} 
		\big\}
		= 2 
		\Pr\big(
		\psi_{\tilde{V}, \tilde{R}_1^2, \tilde{R}_2^2} \ge \xi_{\alpha} 
		\big)
		=2 \big\{ 1 - (\alpha+1)/2 \big\} 
		\\
		& = 1-\alpha, 
	\end{align*}
	where 
	the last inequality holds due to Lemma \ref{lemma:sotch_est} and 
	the second last equality holds because $\psi_{\tilde{V}, \tilde{R}_1^2, \tilde{R}_2^2}$ is a continuous random variable. 
	This immediately implies that $2\Psi_{\tilde{V}, \tilde{R}_1^2, \tilde{R}_2^2}(|\psi_{V, R_1^2, R_2^2}|)-1$ is stochastically smaller than or equal to $\Unif(0,1)$.
	If the inequalities in \eqref{eq:stoch_uniform_cond} hold with equality, from the above derivation, it is immediate to see that $2\Psi_{\tilde{V}, \tilde{R}_1^2, \tilde{R}_2^2}(|\psi_{V, R_1^2, R_2^2}|)-1 \sim \Unif(0,1)$.  
	Therefore, Lemma \ref{lemma:stoch_uniform} holds. 
\end{proof}

\subsection{Proof of Theorem \ref{thm:frt_weak_null}}
From Theorem \ref{thm:equiv_resem_sresem}, it suffices to prove that Theorem \ref{thm:frt_weak_null} holds under $\sresem$. 
For descriptive convenience, 
we introduce $\mathcal{I} = \{\tilde{\bs{Y}}_{\mathcal{S}}(0), \tilde{\bs{Y}}_{\mathcal{S}}(1), \bs{C}_{\mathcal{S}}, \bs{E}_{1:N}\}$ to denote the set consisting of imputed potential outcomes and available covariates, 
and write the test statistic $g(\bs{T}_{\mathcal{S}}, \tilde{\bs{Y}}_{\mathcal{S}}(0), \tilde{\bs{Y}}_{\mathcal{S}}(1), \bs{C}_{\mathcal{S}}, \bs{E}_{1:N})$ simply as 
$g(\bs{T}_{\mathcal{S}}, \mathcal{I})$.

\begin{proof}[\bf True sampling distribution of the test statistic]
	For any fixed coefficient $\bs{\beta}$, from Lemma \ref{lemma:est_cdf_trans_converge} and the proofs of Theorem \ref{thm:dist_reg_general} and Proposition  \ref{prop:est_fix_coef_unknown_design},  
	under $\sresem$ and Neyman's null $\bar{H}_c$ in \eqref{eq:weak_null}, 
	\begin{align*}
		g(\bs{T}_{\mathcal{S}}, \mathcal{I}) 
		& = 2\hat{F}_{\bs{\beta}, \bs{0}}\left( \sqrt{n}\left| \hat{\tau}(\bs{\beta}, \bs{0}) - c \right| \right) - 1
		= 
		2\hat{F}_{\bs{\beta}, \bs{0}}\left( \sqrt{n}\left| \hat{\tau}(\bs{\beta}, \bs{0}) - \tau \right| \right) - 1
		\\
		& = 
		2 \Psi_{\hat V_{\tau \tau}(\bs{\beta}, \bs{0}), \hat{R}^2_T(\bs{\beta}, \bs{0}), \hat{R}^2_S(\bs{\beta}, \bs{0})}
		\big( \sqrt{n}\left| \hat{\tau}(\bs{\beta}, \bs{0}) - \tau \right| \big) - 1
		\\
		& 
		\ \dot\sim \ 
		2 \Psi_{\tilde{V}_{\tau \tau}(\bs{\beta}, \bs{0}), \tilde{R}^2_T(\bs{\beta}, \bs{0}), \tilde{R}^2_S(\bs{\beta}, \bs{0})}
		\Big( \big|\psi_{V_{\tau \tau}(\bs{\beta}, \bs{0}), R^2_T(\bs{\beta}, \bs{0}), R^2_S(\bs{\beta}, \bs{0})}\big| \Big) - 1, 
	\end{align*}
	where $\tilde{V}_{\tau \tau}(\bs{\beta}, \bs{0}), \tilde{R}^2_T(\bs{\beta}, \bs{0})$ and $\tilde{R}^2_S(\bs{\beta}, \bs{0})$ are defined as in Proposition  \ref{prop:est_fix_coef_unknown_design}, and $\psi_{V_{\tau \tau}(\bs{\beta}, \bs{0}), R^2_T(\bs{\beta}, \bs{0}), R^2_S(\bs{\beta}, \bs{0})}$ is a random variable following the asymptotic distribution of $\hat{\tau}(\bs{\beta}, \bs{0})$ as in \eqref{eq:dist_reg_general}. 

	From Lemma \ref{lemma:stoch_uniform}, the true asymptotic sampling distribution of the test statistic $g(\bs{T}_{\mathcal{S}}, \mathcal{I})$ is stochastically smaller than or equal to $\Unif(0,1)$. 
	Equivalently, 
	the true asymptotic sampling distribution of $1-g(\bs{T}_{\mathcal{S}}, \mathcal{I})$ is stochastically larger than or equal to $\Unif(0,1)$. 
\end{proof}

\begin{proof}[\bf Imputed randomization distribution of the test statistic]
	Let $\check{\bs{T}}_S$ be a random vector uniformly distributed in $\mathcal{A}(\mathcal{S}, \bs{X}, a_T)$ given $\bs{Z}$ and $\bs{T}$. 
	Recall that $\hat{\tau}(\bs{\beta}, \bs{0})$ and $\hat{F}_{\bs{\beta}, \bs{0}}(\cdot)$ are the regression-adjusted estimator and its estimated distribution calculated using the observed data, which can be viewed as functions of the observed assignment vector $\bs{T}_{\mathcal{S}}$ and the information set $\mathcal{I}$. 
	We introduce $\check{\tau}(\bs{\beta}, \bs{0})$ and $\check{F}_{\bs{\beta}, \bs{0}}(\cdot)$ to denote the corresponding values calculated using $\check{\bs{T}}_{\mathcal{S}}$ and $\mathcal{I}$. 
	Let 
	$
	F_{\mathcal{S}, \bs{T}}(b)
	= \Pr\{ g(\check{\bs{T}}_{\mathcal{S}}, \mathcal{I})
	\le b \mid \mathcal{S}, \bs{T} \}
	$
	with 
	$
	g(\check{\bs{T}}_{\mathcal{S}}, \mathcal{I}) = 
	2 \check{F}_{\bs{\beta}, \bs{0}}
	( 
	\sqrt{n} | \check{\tau}(\bs{\beta}, \bs{0}) - c |
	) - 1. 
	$
	By definition and construction, 
	to obtain the imputed randomization distribution 
	$F_{\mathcal{S}, \bs{T}}(\cdot)$,
	we are essentially conducting ReM with covariate $\bs{X}$ and threshold $a_T$ for the $n$ sampled units, 
	where we use the imputed potential outcomes $\tilde{Y}_i(1)$ and $\tilde{Y}_i(0)$ as the true ones. 
	Below we introduce several finite population quantities. 
	For any $\bs{\beta} > 0$, 
	let $\tilde{Y}_i(t;\bs{\beta}, \bs{0}) = \tilde{Y_i}(t) - \bs{\beta}^\top \bs{C}_i$ be the adjusted imputed potential outcomes, 
	and 
	$\tilde{\tau}_i(\bs{\beta}, \bs{0}) = \tilde{Y}_i(1;\bs{\beta}, \bs{0}) - \tilde{Y}_i(0;\bs{\beta}, \bs{0}) = \tilde{Y}_i(1) - \tilde{Y}_i(0)$ be the adjusted imputed treatment effects, 
	which must be constant $c$ for all $i \in \mathcal{S}$ since the potential outcomes are imputed under the null $H_{c\bs{1}}$, i.e., 
	\begin{align}\label{eq:imp_adj_effect_constant}
		\tilde{\tau}_i(\bs{\beta}, \bs{0}) = c, \text{ for all } i \in \mathcal{S}. 
	\end{align}
	Define $\tilde{s}_1^2(\bs{\beta}, \bs{0})$, $\tilde{s}_0^2(\bs{\beta}, \bs{0})$ and $\tilde{s}^2_{\tau} (\bs{\beta}, \bs{0})$ as the sample variances of adjusted imputed potential outcomes and individual effect for units in $\mathcal{S}$, 
	and $\tilde{s}_{1\mid \bs{X}}^2(\bs{\beta}, \bs{0})$, $\tilde{s}_{0\mid \bs{X}}^2(\bs{\beta}, \bs{0})$, $\tilde{s}^2_{\tau\mid \bs{X}} (\bs{\beta}, \bs{0})$
	and $\tilde{s}^2_{\tau\mid \bs{W}} (\bs{\beta}, \bs{0})$ 
	as the sample variances of their linear projections on covariates. 
	Due to \eqref{eq:imp_adj_effect_constant}, we have 
	$\tilde{s}^2_{\tau} (\bs{\beta}, \bs{0}) = \tilde{s}^2_{\tau\mid \bs{X}} (\bs{\beta}, \bs{0}) = \tilde{s}^2_{\tau\mid \bs{W}} (\bs{\beta}, \bs{0}) =0$.

	Below we consider a fixed sequence of $(\bs{Z}, \bs{T})$ such that regularity conditions in Lemma \ref{lemma:imp_outcome_fp_cond_resem}(i)--(iii) hold exactly (i.e., the convergence in probability there is replaced by usual convergence). 
	Note that we can always view ReM as $\sresem$ with all units sampled to enroll the experiment. 
	By the same logic as the proof of Theorem \ref{thm:dist_reg_general}, we can immediately know that 
	\begin{align}\label{eq:asymp_perm}
		\sqrt{n} \left( \check{\tau}(\bs{\beta}, \bs{0}) - c \right) \mid \bs{Z}, \bs{T} 
		\ \ \dot\sim \ \ 
		\sqrt{\tilde{v}(\bs{\beta}, \bs{0})}
		\left(
		\sqrt{1 - \tilde{r}^2(\bs{\beta}, \bs{0})} \cdot \varepsilon + \sqrt{\tilde{r}^2 (\bs{\beta}, \bs{0})} \cdot L_{K,a_T}
		\right), 
	\end{align}
	where $\tilde{v}(\bs{\beta}, \bs{0})$ and $\tilde{r}^2_{T}(\bs{\beta}, \bs{0})$ are defined similarly as that in \eqref{eq:V_tau_reg} and \eqref{eq:R2_reg} but using the adjusted imputed potential outcomes for the units in $\mathcal{S}$ and have the following simplified forms due to \eqref{eq:imp_adj_effect_constant}: 
	\begin{align*}
		\tilde{v}(\bs{\beta}, \bs{0}) & = r_1^{-1} \tilde{s}_1^2(\bs{\beta}, \bs{0}) + 
		r_0^{-1} \tilde{s}_0^2(\bs{\beta}, \bs{0}) 
		- \tilde{s}^2_{\tau} (\bs{\beta}, \bs{0}) = r_1^{-1} \tilde{s}_1^2(\bs{\beta}, \bs{0}) + 
		r_0^{-1} \tilde{s}_0^2(\bs{\beta}, \bs{0}) 
		= (r_1r_0)^{-1} \tilde{s}_1^2(\bs{\beta}, \bs{0}), \\
		\tilde{r}^2(\bs{\beta}, \bs{0}) & = 
		\frac{r_1^{-1} \tilde{s}^2_{1\mid \bs{X}}(\bs{\beta}, \bs{0}) + r_0^{-1} \tilde{s}^2_{0\mid \bs{X}}(\bs{\beta}, \bs{0}) - \tilde{s}^2_{\tau\mid \bs{X}}(\bs{\beta}, \bs{0})}{
			r_1^{-1} \tilde{s}_1^2(\bs{\beta}, \bs{0}) + 
			r_1^{-1} \tilde{s}_0^2(\bs{\beta}, \bs{0}) 
			- \tilde{s}^2_{\tau} (\bs{\beta}, \bs{0})
		}
		= 
		\frac{r_1^{-1} \tilde{s}^2_{1\mid \bs{X}}(\bs{\beta}, \bs{0}) + r_0^{-1} \tilde{s}^2_{0\mid \bs{X}}(\bs{\beta}, \bs{0})}{
			r_1^{-1} \tilde{s}_1^2(\bs{\beta}, \bs{0}) + 
			r_1^{-1} \tilde{s}_0^2(\bs{\beta}, \bs{0}) 
		}
		= \tilde{s}^2_{1\mid \bs{X}}(\bs{\beta}, \bs{0})/\tilde{s}^2_{1}(\bs{\beta}, \bs{0}). 
	\end{align*}
	By the same logic as the proof Proposition \ref{prop:est_fix_coef_unknown_design} and due to \eqref{eq:imp_adj_effect_constant}, we can then know that the estimation of variance formula and $R^2$ measures based on $\check{T}_{\mathcal{S}}$ and $\mathcal{I}$ satisfies 
	\begin{align*}
		\check V_{\tau \tau}(\bs{\beta}, \bs{\gamma}) &=
		r_1^{-1}\tilde{s}^2_{1}(\bs{\beta}, \bs{0}) + r_0^{-1} \tilde{s}^2_{0}(\bs{\beta}, \bs{0}) - 
		f \tilde{s}^2_{\tau\mid \bs{C}}(\bs{\beta}, \bs{0}) + o_{\Pr}(1)
		= \tilde{v}(\bs{\beta}, \bs{0}) + o_{\Pr}(1), 
		\\
		\check{R}^2_S(\bs{\beta}, \bs{\gamma}) & =
		\begin{cases}
			(1-f) \tilde{v}^{-1} (\bs{\beta}, \bs{0})
			\tilde{s}^2_{\tau\mid \bs{W}}(\bs{\beta}, \bs{0}) + o_{\Pr}(1)
			= o_{\Pr}(1),  & \text{if both $\bs{W}$ and $a_S$ are known}, \\
			0, & \text{otherwise}, 
		\end{cases}
		\\
		\check{R}^2_T(\bs{\beta}, \bs{\gamma}) &=
		\tilde{v}^{-1} (\bs{\beta}, \bs{0}) \left\{ r_1^{-1}\tilde{s}^2_{1\mid  \bs{X}}(\bs{\beta}, \bs{0}) +r_0^{-1}\tilde{s}^2_{0\mid  \bs{X}}(\bs{\beta}, \bs{0}) -
		\tilde{s}^2_{\tau\mid \bs{X}}(\bs{\beta}, \bs{0})
		\right\} + o_{\Pr}(1)
		= \tilde{r}^2(\bs{\beta}, \bs{0}) + o_{\Pr}(1). 
	\end{align*}
	Therefore, 
	from Lemma \ref{lemma:est_cdf_trans_converge},  
	the imputed distribution of the test statistic has the following asymptotic distribution: 
	\begin{align*}
		\check{F}_{\bs{\beta}, \bs{0}}\left(\sqrt{n}\left| \check{\tau}(\bs{\beta}, \bs{0}) - c \right|\right) 
		\mid \bs{Z}, \bs{T}
		\ \dot\sim \  
		\Psi_{\tilde{v}(\bs{\beta}, \bs{0}), \tilde{r}^2(\bs{\beta}, \bs{0})}\left( |
		\psi_{\tilde{v}(\bs{\beta}, \bs{0}), \tilde{r}^2(\bs{\beta}, \bs{0})}| \right)
		\sim \Unif(0,1)
	\end{align*}
	where $\Psi_{\tilde{v}(\bs{\beta}, \bs{0}), \tilde{r}^2(\bs{\beta}, \bs{0})}(\cdot)$ is the distribution function of the asymptotic distribution on the right hand side of \eqref{eq:asymp_perm}, 
	$\psi_{\tilde{v}(\bs{\beta}, \bs{0}), \tilde{r}^2(\bs{\beta}, \bs{0})}$ is a random variable with distribution function $\Psi_{\tilde{v}(\bs{\beta}, \bs{0}), \tilde{r}^2(\bs{\beta}, \bs{0})}(\cdot)$, 
	and 
	the last $\sim$ sign follows from Lemma \ref{lemma:stoch_uniform}.  
	Consequently, for any $\alpha\in (0,1)$, $F_{\mathcal{S}, \bs{T}}^{-1}(1-\alpha) \converge 1-\alpha$, 
	where $F_{\mathcal{S}, \bs{T}}^{-1}(\cdot)$
    is the quantile function for the imputed randomization distribution of the test statistic. 
	
	From the above and Lemma \ref{lemma:imp_outcome_fp_cond_resem}, as well as the property of convergence in probability \citep[e.g.,][Theorem 2.3.2]{durrett2019probability}, 
	we can derive that under $\sresem$, for any $\alpha\in (0,1)$, $F_{\mathcal{S}, \bs{T}}^{-1}(1-\alpha) \convergep 1-\alpha$.
\end{proof}

\begin{proof}[\bf The asymptotic validity of the randomization $p$-value]
    To prove the asymptotic validity under ReSEM, it suffices to prove that under $\sresem$. 
	We prove only the case with fixed $\bs{\beta}$. The proof for the case with estimated $\hat{\bs{\beta}}$ is very similar and is thus omitted. 
	
    By definition,
	$
	G_{\mathcal{S}, \bs{T}}(c) \ge 1 - F_{\mathcal{S}, \bs{T}}(c)
	$
	for any $c \in \mathbb{R}$.
	Thus, for any $\alpha\in (0,1)$, 
	\begin{align*}
		\Pr\big( p_{\mathcal{S}, \bs{T}, g} \le \alpha \mid \sresem \big)
		& = 
		\Pr\big\{ G_{\mathcal{S}, \bs{T}}\big( g(\bs{T}_{\mathcal{S}}, \mathcal{I}) \big) \le \alpha \mid \sresem \big\}
		\le 
		\Pr\big\{ 1 - F_{\mathcal{S}, \bs{T}}\big( g(\bs{T}_{\mathcal{S}}, \mathcal{I}) \big) \le \alpha \mid \sresem \big\} 
		\\
		& = \Pr\big\{ F_{\mathcal{S}, \bs{T}}\big( g(\bs{T}_{\mathcal{S}}, \mathcal{I}) \big) \ge 1-\alpha \mid \sresem \big\} 
		= \Pr\big\{ g(\bs{T}_{\mathcal{S}}, \mathcal{I}) \ge F_{\mathcal{S}, \bs{T}}^{-1}(1-\alpha) \mid \sresem \big\}.
	\end{align*}
	From the above discussion on the imputed randomization distribution, we can know that 
    $
		F_{\mathcal{S}, \bs{T}}^{-1}(1-\alpha) \convergep 1-\alpha 
	$
	under $\sresem$. 
	Consequently, for any constant $\delta > 0$,
	\begin{align}\label{eq:bound_rand_pval_proof1}
		& \quad \ \Pr\big( p_{\mathcal{S}, \bs{T}, g} \le \alpha \mid \sresem \big)
		\nonumber
		\\
		& \le \Pr\big\{ g(\bs{T}_{\mathcal{S}}, \mathcal{I}) \ge F_{\mathcal{S}, \bs{T}}^{-1}(1-\alpha), \ 
		\big| F_{\mathcal{S}, \bs{T}}^{-1}(1-\alpha) - (1-\alpha)\big| \le \delta
		\mid \sresem \big\}
		\nonumber
		\\
		& \quad \ 
		+ 
		\Pr\big\{ g(\bs{T}_{\mathcal{S}}, \mathcal{I}) \ge F_{\mathcal{S}, \bs{T}}^{-1}(1-\alpha),  \ \big| F_{\mathcal{S}, \bs{T}}^{-1}(1-\alpha) - (1-\alpha)\big| > \delta \mid \sresem \big\}
		\nonumber
		\\
		& \le \Pr\big\{ g(\bs{T}_{\mathcal{S}}, \mathcal{I}) \ge 1-\alpha -\delta \mid \sresem \big\}
		+ 
		\Pr\big\{ \big| F_{\mathcal{S}, \bs{T}}^{-1}(1-\alpha) - (1-\alpha)\big| > \delta \mid \sresem \big\}. 
	\end{align}
	From the previous discussion on the true sampling distribution, 
	for any $\alpha\in (0,1)$, 
	\begin{align*}
		\limsup_{N\rightarrow \infty} \Pr(g(\bs{T}_{\mathcal{S}}, \mathcal{I})  \ge 1-\alpha \mid \sresem)
		= 
		\limsup_{N\rightarrow \infty} \Pr(1-g(\bs{T}_{\mathcal{S}}, \mathcal{I})  \le \alpha \mid \sresem)
		& \le \alpha. 
	\end{align*}
	Thus, letting $N\rightarrow \infty$ in \eqref{eq:bound_rand_pval_proof1}, we have 
	\begin{align*}
		& \quad \ \limsup_{N\rightarrow \infty}\Pr\big( p_{\mathcal{S}, \bs{T}, g} \le \alpha \mid \sresem \big)
		\\
		& \le \limsup_{N\rightarrow \infty}\Pr\big\{ g(\bs{T}_{\mathcal{S}}, \mathcal{I}) \ge 1-\alpha -\delta \mid \sresem \big\}
		+ 
		\lim_{N\rightarrow \infty}\Pr\big\{ \big| F_{\mathcal{S}, \bs{T}}^{-1}(1-\alpha) - 1-\alpha\big| > \delta \mid \sresem \big\}\\
		& \le \alpha + \delta.
	\end{align*}
	Because the above inequality holds for any $\delta>0,$ we must have 
	$
		\limsup_{N\rightarrow \infty}\Pr\big( p_{\mathcal{S}, \bs{T}, g} \le \alpha \mid \sresem \big)
		\le \alpha.
	$
	Therefore, $p_{\mathcal{S}, \bs{T}, g}$ is an asymptotically valid $p$-value for testing Neyman's null $\bar{H}_c$ under $\sresem$, i.e., Theorem \ref{thm:frt_weak_null} holds under $\sresem$.  
\end{proof}

\section{Stratified sampling and blocking}\label{sec:proof_strata}

In this section, we focus on survey experiments based on stratified sampling and blocked treatment assignment. 
Below we first introduce several notations that will be used throughout this section. 
Similar to the discussion for CRSE and ReSEM, for the finite population of $N$ units, 
we introduce finite population averages $\bar{Y}(1), \bar{Y}(0)$ and $\bar{\bs{W}}$, 
and finite population variances and covariances $S_1^2, S_0^2, S^2_{\tau}, \bs{S}^2_{\bs{W}}, S^2_{z\setminus \bs{W}}$ and $S^2_{\tau \setminus \bs{W}}$. 
Recall that $n = N\sum_{j=1}^{J+1} \pi_j f_j$ denotes the total number of sampled units from all strata, and $f = n/N$ denotes the overall proportion of sampled units. 
For $z=0,1$, 
we introduce $n_z = N \sum_{j=1}^{J+1} N \pi_j f_j r_{zj}$ to denote the total number of units under treatment arm $z$ from all strata, 
and $r_z = n_z/n$ to denote the overall proportion of units under treatment arm $z$ among all sampled units. 
Moreover, we define $R_S^2$ and $R_T^2$ by the same formulas as in \eqref{eq:R2}. 

\subsection{Regularity conditions for finite population asymptotics}

To derive the asymptotic distribution of the difference-in-means estimator under stratified sampling and blocking, we invoke the following regularity conditions. 
For units in each stratum $j$ ($1\le j \le J+1$), 
let $S_{1j}^{2}, S_{0j}^{2}$ and $S_{\tau j}^{2}$ be the finite population variances of treatment potential outcome, control potential outcome and individual treatment effect, 
and $\bar{Y}_j(1)$ and $\bar{Y}_j(0)$ be the average treatment and control potential outcomes.  

\begin{condition}\label{cond:categorical}
As $N \rightarrow \infty$, the sequence of finite populations satisfies that, for $1\le j \le J+1$, 
\begin{enumerate}[label=(\roman*)]
\item $f_j$ has a limit, and $r_{1z}$ and $r_{0j}$ have positive limits;
\item the finite population variances $S_{1j}^{2}, S_{0j}^{2}, S_{\tau j}^{2}$ have limiting values;
\item $\max _{i: \tilde{W}_i = j}\{Y_{i}(z)-\bar{Y}_j(z)\}^{2} / (N\pi_j f_j) \rightarrow 0$ for $z = 0, 1$; 
\item $\pi_j$ has a positive limit for all $j$;
\item $\bar{Y}_j(z) = O(1)$ for $z = 0, 1$. 
\end{enumerate}
\end{condition}

Condition \ref{cond:categorical}(i)-(iii) are analogous to Condition \ref{cond:fp}(i)--(iv). 
Condition \ref{cond:categorical}(iv) requires that there is non-negligible proportion of units within each strata, which is equivalent to that the finite population covariance matrix $\bs{S}^2_{\bs{W}}$ has a nonsingular limit, as proved below, 
and 
(v) imposes mild conditions on the average potential outcomes within each stratum. 

\begin{proof}[\bf Comment on Condition \ref{cond:categorical}]
Below we prove that Condition \ref{cond:categorical}(iv) is equivalent to that the finite population covariance matrix $\bs{S}^2_{\bs{W}}$ has a nonsingular limit. 
It suffices to prove that for the finite population of size $N$, $\bs{S}^2_{\bs{W}}$ is nonsingular if and only if $\pi_j >0$ for all $1\le j \le J+1$. 
By definition, $\bs{S}^2_{\bs{W}}$ is positive semi-definite and has the following equivalent forms:
\begin{align*}
   \bs{S}^2_{\bs{W}} 
   & = \frac{1}{N-1} \sum_{i=1}^N (\bs{W}_i - \bar{\bs{W}}) (\bs{W}_i - \bar{\bs{W}})^\top
   = 
   \frac{1}{N-1} \sum_{i=1}^N \bs{W}_i \bs{W}_i^\top 
   - \frac{N}{N-1} \sum_{i=1}^N \bar{\bs{W}} \bar{\bs{W}}^\top\\
   & = 
   \frac{N}{N-1} 
   \left\{
   \begin{pmatrix}
       \pi_1 & 0 & \cdots & 0 \\
       0 & \pi_2 & \cdots & 0 \\
       \vdots & \vdots & \ddots & \vdots\\
       0 & 0 & \cdots & \pi_{J}
   \end{pmatrix}
   - 
   \begin{pmatrix}
   \pi_1\\
   \pi_2\\
   \vdots
   \\
   \pi_J
   \end{pmatrix}
   \begin{pmatrix}
   \pi_1 & \pi_2 & \cdots & \pi_J
   \end{pmatrix}
   \right\}. 
\end{align*}

First, suppose that $\pi_{j_0} = 0$ for some $1\le j_0 \le J+1$. If $1\le j_0 \le J$, then $\bs{S}^2_{\bs{W}}$ has a zero diagonal element and thus must be singular. 
Otherwise, $\pi_{J+1} = 0$, and $\sum_{j=1}^J \pi_J = 1$, under which $\bs{1}_J^\top \bs{S}^2_{\bs{W}} \bs{1}_J = 0$, where $\bs{1}_J$ is a $J$ dimensional vector with all of its elements being 1. 
This implies that $\bs{S}^2_{\bs{W}}$ is singular.  

Second, suppose that $\pi_j > 0$ for all $1\le j \le J+1$. 
Then for any $J$ dimensional vector $\bs{c}\ne 0$, we have 
\begin{align*}
    \bs{c}^\top \frac{N-1}{N} \bs{S}^2_{\bs{W}} \cdot \bs{c}
    & = 
    \sum_{j=1}^J \pi_j c_j^2 - \left( \sum_{j=1}^J \pi_j c_j \right)^2 
    = \pi_{J+1} \cdot \sum_{j=1}^J \pi_j c_j^2 + 
    \left( \sum_{j=1}^J \pi_j \right)  \cdot \left( \sum_{j=1}^J \pi_j c_j^2 \right) - \left( \sum_{j=1}^J \pi_j c_j \right)^2 \\
    & \ge \pi_{J+1} \cdot \sum_{j=1}^J \pi_j c_j^2 > 0, 
\end{align*}
where the second last inequality holds due to Cauchy–Schwarz inequality, and the last inequality holds because $\pi_j > 0$ for all $1\le j \le J+1$ and $\bs{c} \ne 0$. 

From the above, we can know that $\bs{S}^2_{\bs{W}}$ is nonsingular if and only if $\pi_j >0$ for all $1\le j \le J+1$. 
Therefore, Condition \ref{cond:categorical}(iv) is equivalent to that $\bs{S}^2_{\bs{W}}$ has a nonsingular limit. 
\end{proof}

\subsection{Technical lemmas}

The following lemmas will be utilized for deriving the asymptotic properties of survey experiments with stratified sampling and blocking. 
For descriptive convenience, 
we introduce $f_{\max} = \max_{1\le j \le J+1} f_j$ and $f_{\min} = \min_{1\le j \le J+1} f_j$, 
and define analogously 
$\pi_{\max}, \pi_{\min}$, $r_{1, \max}, r_{1, \min}$ and $r_{0,\max}, r_{0, \min}$. 

\begin{lemma}\label{lemma:prop_weight_diff}
    Let $(a_1, a_2, \ldots, a_{J+1})$ and $(b_1, b_2, \ldots, b_{J+1})$ be two sets of nonnegative constants. 
    Define 
    \begin{align*}
        \Delta_j = \frac{a_j b_j}{ \sum_{k=1}^{J+1} a_k b_k} - 
        \frac{a_j}{ \sum_{k=1}^{J+1} a_k}, \qquad (j= 1, 2, \ldots, J+1), 
    \end{align*}
    Then 
    \begin{align*}
        \left( \frac{a_{\min}}{\sum_{k=1}^{J+1}a_k} \right)^2 
        \left( 1 -  \frac{b_{\min}}{b_{\max}} \right)
        \le 
        \max_{1\le j \le J} |\Delta_j| \le \max_{1\le j \le J+1} |\Delta_j| \le  \frac{b_{\max}}{b_{\min}} - 1, 
    \end{align*}
    where $a_{\min} = \min_{1\le j \le J+1} a_j$, 
    $b_{\min} = \min_{1\le j \le J+1} b_j$
    and 
    $b_{\max} = \max_{1\le j \le J+1} b_j$. 
\end{lemma}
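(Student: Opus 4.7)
The plan is to start by rewriting $\Delta_j$ in a more tractable form. A direct calculation gives
\begin{align*}
\Delta_j
= \frac{a_j}{\sum_{k=1}^{J+1} a_k} \cdot \frac{b_j \sum_{k=1}^{J+1} a_k - \sum_{k=1}^{J+1} a_k b_k}{\sum_{k=1}^{J+1} a_k b_k}
= \frac{a_j}{\sum_{k=1}^{J+1} a_k b_k} \cdot \frac{\sum_{k=1}^{J+1} a_k (b_j - b_k)}{\sum_{k=1}^{J+1} a_k}.
\end{align*}
This representation makes $\Delta_j$ proportional to the weighted discrepancy between $b_j$ and the $a$-weighted average of the $b_k$'s, which is the key quantity governing both bounds.

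For the upper bound, I would apply three elementary estimates in sequence: (i) $|b_j - b_k| \le b_{\max} - b_{\min}$ for all $k$, so the inner numerator is at most $(b_{\max}-b_{\min})\sum_k a_k$; (ii) $\sum_k a_k b_k \ge b_{\min}\sum_k a_k$; and (iii) $a_j/\sum_k a_k \le 1$. Combining these immediately yields $|\Delta_j| \le (b_{\max}-b_{\min})/b_{\min} = b_{\max}/b_{\min} - 1$ for every $j$, hence the same bound on the maximum.

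For the lower bound, if $b_{\max} = b_{\min}$ both sides vanish and nothing is to prove, so assume $b_{\max} > b_{\min}$. Pick indices $j^{\star}$ and $j^{\star\star}$ with $b_{j^{\star}} = b_{\max}$ and $b_{j^{\star\star}} = b_{\min}$; these are distinct, so at most one of them equals $J+1$, and consequently at least one lies in $\{1,\ldots,J\}$. For $\Delta_{j^\star}$, drop all terms in $\sum_k a_k(b_{\max}-b_k)$ except $k = j^{\star\star}$ to lower-bound the numerator by $a_{j^{\star\star}}(b_{\max}-b_{\min}) \ge a_{\min}(b_{\max}-b_{\min})$, and use $\sum_k a_k b_k \le b_{\max}\sum_k a_k$ to upper-bound the denominator; this gives
\begin{align*}
\Delta_{j^\star} \ \ge\ \frac{a_{j^\star}}{\sum_k a_k}\cdot \frac{a_{\min}}{\sum_k a_k}\cdot\frac{b_{\max}-b_{\min}}{b_{\max}} \ \ge\ \left(\frac{a_{\min}}{\sum_k a_k}\right)^{2}\left(1-\frac{b_{\min}}{b_{\max}}\right).
\end{align*}
An entirely symmetric argument on $|\Delta_{j^{\star\star}}|$ gives the same lower bound, so whichever of $j^\star, j^{\star\star}$ lies in $\{1,\ldots,J\}$ furnishes the desired index.

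There is no real obstacle here — the lemma is essentially a pointwise inequality about weighted averages — but the one nuisance is the asymmetric index range ($\max_{1\le j\le J}$ on the lower bound versus $\max_{1\le j\le J+1}$ on the upper bound). Handling this is exactly what forces the ``pick the extremum index that avoids $J+1$'' step above; writing that case distinction cleanly is the only point that requires care.
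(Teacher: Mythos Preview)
Your proof is correct and follows essentially the same approach as the paper's: the same algebraic identity for $\Delta_j$, the same three-step upper bound, and the same lower-bound strategy of keeping a single term from $\sum_k a_k(b_j-b_k)$ at an extremal index. Your handling of the asymmetric index range (picking both $j^\star$ and $j^{\star\star}$ and noting at least one lies in $\{1,\ldots,J\}$) is a slightly cleaner packaging of the paper's case split on whether $b_{J+1}=b_{\max}$, but the content is identical.
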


\begin{lemma}\label{lemma:M_S_cat}
Under stratified sampling and blocking, 
\begin{align*}
     \pi_{\min}^2\left( 1 - \frac{f_{\min}}{f_{\max}} \right) 
    \le 
    \big\| \hat{\bs{\delta}}_{\bs{W}} \big\|_{\infty} 
    \le 
    \frac{f_{\max}}{f_{\min}} - 1
\end{align*}
and 
\begin{align*}
    \sqrt{ \frac{\lambda_{\min}(\bs{S}^2_{\bs{W}})}{J}  \left( \frac{1}{n} - \frac{1}{N} \right) M_S}
    \le 
    \big\| \hat{\bs{\delta}}_{\bs{W}} \big\|_{\infty}
    \le 
    \sqrt{ \lambda_{\max} (\bs{S}^2_{\bs{W}}) \left( \frac{1}{n} - \frac{1}{N} \right) M_S} \ , 
\end{align*}
where $\lambda_{\min}(\bs{S}^2_{\bs{W}})$ and $\lambda_{\max} (\bs{S}^2_{\bs{W}})$ denote the smallest and largest eigenvalues of $\bs{S}^2_{\bs{W}}$. 
\end{lemma}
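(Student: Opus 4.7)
\textbf{Proof proposal for Lemma \ref{lemma:M_S_cat}.} The plan is to exploit the deterministic nature of stratified sampling: because exactly $N\pi_j f_j$ units are drawn from stratum $j$, the sample proportion of units in stratum $j$ equals $N\pi_j f_j / n = \pi_j f_j / \sum_{k=1}^{J+1}\pi_k f_k$. Since $\bs{W}_i$ is the vector of dummies for strata $1,\ldots,J$ (stratum $J+1$ is the reference), the $j$th coordinate of $\hat{\bs{\delta}}_{\bs{W}} = \bar{\bs{W}}_{\mathcal{S}} - \bar{\bs{W}}$ is precisely
\[
\hat{\delta}_{W,j} \;=\; \frac{\pi_j f_j}{\sum_{k=1}^{J+1}\pi_k f_k} - \frac{\pi_j}{\sum_{k=1}^{J+1}\pi_k}, \qquad 1\le j\le J,
\]
which is the quantity $\Delta_j$ from Lemma \ref{lemma:prop_weight_diff} with $a_j=\pi_j$ and $b_j=f_j$. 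The first display in the lemma then falls out by applying the two-sided bound of Lemma \ref{lemma:prop_weight_diff}: the lower bound $\pi_{\min}^2(1-f_{\min}/f_{\max}) \le \max_{1\le j\le J}|\Delta_j|$ matches $\|\hat{\bs{\delta}}_{\bs{W}}\|_\infty$ exactly because $\bs{W}$ has $J$ coordinates, and the upper bound $\max_{1\le j\le J+1}|\Delta_j| \le f_{\max}/f_{\min}-1$ dominates $\|\hat{\bs{\delta}}_{\bs{W}}\|_\infty$.

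For the second display, I would combine the Rayleigh quotient with the standard equivalence of $\ell_2$ and $\ell_\infty$ norms on $\mathbb{R}^J$. Starting from the definition
\[
M_S \;=\; \left(\frac{1}{n}-\frac{1}{N}\right)^{-1} \hat{\bs{\delta}}_{\bs{W}}^{\top} (\bs{S}^2_{\bs{W}})^{-1} \hat{\bs{\delta}}_{\bs{W}},
\]
the eigenvalue bounds on $(\bs{S}^2_{\bs{W}})^{-1}$ give
\[
\frac{\|\hat{\bs{\delta}}_{\bs{W}}\|_2^2}{\lambda_{\max}(\bs{S}^2_{\bs{W}})} \;\le\; \left(\frac{1}{n}-\frac{1}{N}\right) M_S \;\le\; \frac{\|\hat{\bs{\delta}}_{\bs{W}}\|_2^2}{\lambda_{\min}(\bs{S}^2_{\bs{W}})},
\]
and sandwiching $\|\hat{\bs{\delta}}_{\bs{W}}\|_\infty^2 \le \|\hat{\bs{\delta}}_{\bs{W}}\|_2^2 \le J\|\hat{\bs{\delta}}_{\bs{W}}\|_\infty^2$ followed by taking square roots yields the claimed two-sided bound.

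There is no real obstacle here: once one recognizes the identification of $\hat{\delta}_{W,j}$ with the $\Delta_j$ of Lemma \ref{lemma:prop_weight_diff}, the first part is immediate, and the second part is a textbook application of Rayleigh quotient bounds together with norm equivalence. The only point requiring a little care is to ensure that the dimension of $\bs{W}$ is $J$ rather than $J+1$, so that the lower bound from Lemma \ref{lemma:prop_weight_diff} (which only ranges over $1\le j\le J$) precisely matches $\|\hat{\bs{\delta}}_{\bs{W}}\|_\infty$, and that $\bs{S}^2_{\bs{W}}$ is invertible (guaranteed by Condition \ref{cond:categorical}(iv) as noted earlier in the section).
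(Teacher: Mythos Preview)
Your proposal is correct and follows essentially the same approach as the paper: identify $[\hat{\bs{\delta}}_{\bs{W}}]_{(j)}$ with the $\Delta_j$ of Lemma \ref{lemma:prop_weight_diff} (taking $a_j=\pi_j$, $b_j=f_j$) for the first display, and combine the Rayleigh quotient bounds on $(\bs{S}^2_{\bs{W}})^{-1}$ with the $\ell_2$--$\ell_\infty$ norm equivalence for the second. Your remark about the $J$ versus $J+1$ indexing is exactly the point the paper handles implicitly.
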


\begin{lemma}\label{lemma:M_T_cat}
Under stratified sampling and blocking, for $z=0,1$, 
\begin{align*}
    \left( \frac{\pi_{\min} f_{\min}}{\sum_{k=1}^{J+1}\pi_k f_k} \right)^2 
    \left( 1 -  \frac{r_{z,\min}}{r_{z,\max}} \right)
    \le \big\| \hat{\bs{\tau}}_{\bs{W}}\big\|_{\infty} 
    \le 
    \frac{1}{r_{1-z, \min}} \left( \frac{r_{z, \max}}{r_{z, \min}} - 1 \right), 
\end{align*}
and 
\begin{align*}
    \sqrt{ \frac{\lambda_{\min}(\bs{s}_{\bs{W}}^2)}{J} \frac{M_T}{n} }
    \le \big\| \hat{\bs{\tau}}_{\bs{W}}\big\|_{\infty} 
    \le 
    \sqrt{\frac{\lambda_{\max}(\bs{s}_{\bs{W}}^2)}{r_{1, \min}r_{0, \min}} \frac{M_T}{n}}, 
\end{align*}
where 
$\bs{s}^2_{\bs{W}}$ is the sample covariance matrix of the covariate $\bs{W}$ for sampled units, 
and 
$\lambda_{\min}(\bs{s}^2_{\bs{W}})$ and $\lambda_{\max} (\bs{s}^2_{\bs{W}})$ denote the smallest and largest eigenvalues of $\bs{s}^2_{\bs{W}}$. 
\end{lemma}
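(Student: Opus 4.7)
\textbf{Proof proposal for Lemma \ref{lemma:M_T_cat}.}
My plan is to first obtain an explicit coordinate-wise formula for $\hat{\bs{\tau}}_{\bs{W}}$, then use it to extract both the $\|\hat{\bs{\tau}}_{\bs{W}}\|_\infty$ bounds in terms of the $r_{zj}$'s, and finally convert these into the $M_T$ bounds via standard eigenvalue inequalities.

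\emph{Step 1 (explicit formula).} Since $\bs{W}_i$ is a vector of stratum dummies, for $1\le j\le J$ the $j$th coordinate of $\hat{\bs{\tau}}_{\bs{W}}$ is just the difference between the proportion of stratum-$j$ units in treatment and in control. Writing $\pi_{Sj}=\pi_j f_j/\sum_k\pi_k f_k$ for the within-sample stratum proportion and $r_1=\sum_k\pi_{Sk}r_{1k}$, $r_0=1-r_1$ for the overall treatment/control proportions, a short algebraic manipulation (expanding $r_{1j}/r_1 - r_{0j}/r_0$) gives
\begin{align*}
\bigl[\hat{\bs{\tau}}_{\bs{W}}\bigr]_j \;=\; \pi_{Sj}\Bigl(\frac{r_{1j}}{r_1}-\frac{r_{0j}}{r_0}\Bigr) \;=\; \pi_{Sj}\cdot\frac{r_{1j}-r_1}{r_1 r_0} \;=\; -\,\pi_{Sj}\cdot\frac{r_{0j}-r_0}{r_1 r_0}.
\end{align*}
This representation is the crux of the proof: both upper and lower bounds will be read off from it.

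\emph{Step 2 (bounds on $\|\hat{\bs{\tau}}_{\bs{W}}\|_\infty$).} For the upper bound, I would use $|r_{1j}-r_1|\le r_{1,\max}-r_{1,\min}$, $\pi_{Sj}\le 1$, and $r_1 r_0\ge r_{1,\min}r_{0,\min}$ to obtain $\|\hat{\bs{\tau}}_{\bs{W}}\|_\infty\le (r_{1,\max}-r_{1,\min})/(r_{1,\min}r_{0,\min})=r_{0,\min}^{-1}(r_{1,\max}/r_{1,\min}-1)$, which is the $z=1$ case; the $z=0$ case follows from the second expression via the $r_{0j}$ weights. For the lower bound, pick $j^\star=\arg\max_j r_{1j}$, so
\begin{align*}
r_{1,\max}-r_1 \;=\; \sum_k \pi_{Sk}(r_{1,\max}-r_{1k}) \;\ge\; \pi_{S,\min}(r_{1,\max}-r_{1,\min}),
\end{align*}
giving $|[\hat{\bs{\tau}}_{\bs{W}}]_{j^\star}|\ge \pi_{S,\min}^2(r_{1,\max}-r_{1,\min})/(r_1 r_0)$. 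Since $r_1 r_0\le r_1\le r_{1,\max}$, this simplifies to $\pi_{S,\min}^2(1-r_{1,\min}/r_{1,\max})$, and $\pi_{S,\min}\ge \pi_{\min}f_{\min}/\sum_k\pi_k f_k$ yields the claimed lower bound; the $z=0$ case is handled symmetrically using $j^{\star\star}=\arg\max_j r_{0j}$ and $r_1 r_0\le r_{0,\max}$.

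\emph{Step 3 (from $\|\hat{\bs{\tau}}_{\bs{W}}\|_\infty$ to $M_T$).} Rewriting the Mahalanobis distance as $M_T=n\,r_1 r_0\,\hat{\bs{\tau}}_{\bs{W}}^\top(\bs{s}_{\bs{W}}^2)^{-1}\hat{\bs{\tau}}_{\bs{W}}$ and sandwiching the quadratic form by the extreme eigenvalues of $\bs{s}_{\bs{W}}^2$, combined with $\|\hat{\bs{\tau}}_{\bs{W}}\|_\infty\le\|\hat{\bs{\tau}}_{\bs{W}}\|_2\le\sqrt{J}\|\hat{\bs{\tau}}_{\bs{W}}\|_\infty$, immediately gives the two $M_T$-bounds: the upper bound on $\|\hat{\bs{\tau}}_{\bs{W}}\|_\infty$ uses $r_1 r_0\ge r_{1,\min}r_{0,\min}$, and the lower bound uses the trivial $r_1 r_0\le 1$ together with the $\sqrt{J}$ inflation. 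I do not expect any real obstacle here; the only delicate point is the telescoping identity in Step 1 and the observation $r_1 r_0\le r_{z,\max}$ used to absorb the $1/(r_1 r_0)$ factor in the lower bound of Step 2, which is what lets the final bound be stated purely in terms of the ratio $r_{z,\min}/r_{z,\max}$.
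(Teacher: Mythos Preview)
Your approach is essentially the same as the paper's: both derive an explicit coordinate formula for $\hat{\bs{\tau}}_{\bs{W}}$, extract $\|\cdot\|_\infty$ bounds from it, and convert to $M_T$ bounds via eigenvalue sandwiching of the quadratic form. Your telescoping identity $r_{1j}/r_1 - r_{0j}/r_0 = (r_{1j}-r_1)/(r_1 r_0)$ is a slightly more direct route than the paper's, which instead works with $\bar{\bs{W}}_z - \bar{\bs{W}}_{\mathcal{S}}$ (using $\bar{\bs{W}}_z - \bar{\bs{W}}_{\mathcal{S}} = \pm r_{1-z}\hat{\bs{\tau}}_{\bs{W}}$) and applies its generic weighted-average Lemma~\ref{lemma:prop_weight_diff} with $a_j=\pi_j f_j$, $b_j=r_{zj}$; Step~3 is identical to the paper.

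There is one small gap in your Step~2 lower bound. The vector $\hat{\bs{\tau}}_{\bs{W}}$ is $J$-dimensional (dummies for strata $1,\dots,J$ only), but $j^\star=\arg\max_{1\le j\le J+1} r_{1j}$ may equal $J+1$, in which case $[\hat{\bs{\tau}}_{\bs{W}}]_{j^\star}$ does not exist and your bound cannot be read off at that coordinate. The fix is a short case split: if $r_{1,\max}$ is attained at some $j\le J$, your argument applies verbatim; otherwise $r_{1,\min}$ must be attained at some $j\le J$ (else all $r_{1j}$ coincide and the bound is trivial), and you bound $|[\hat{\bs{\tau}}_{\bs{W}}]_j|$ from below using $r_1 - r_{1,\min}\ge \pi_{S,\min}(r_{1,\max}-r_{1,\min})$ instead. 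The paper's Lemma~\ref{lemma:prop_weight_diff} carries out exactly this dichotomy in its proof, which is the only reason the paper packages the estimate into a separate lemma rather than doing it inline as you do.
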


\begin{lemma}\label{lemma:samp_cov_cat}
If Condition \ref{cond:categorical} holds and $\lim_{N\rightarrow \infty}f_{\max}/f_{\min} \rightarrow 1$, then 
$$
\lim_{N\rightarrow \infty} \lambda_{\min}(\bs{S}^2_{\bs{W}}) \le 
\liminf_{N\rightarrow \infty} \lambda_{\min}(\bs{s}^2_{\bs{W}}) \le \limsup_{N\rightarrow \infty} \lambda_{\max}(\bs{s}^2_{\bs{W}} ) \le 2 \lim_{N\rightarrow \infty} \lambda_{\max}(\bs{S}^2_{\bs{W}}).
$$
\end{lemma}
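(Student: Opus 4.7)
The plan is to reduce everything to the explicit diagonal-plus-rank-one structure of the covariance matrix of dummy covariates, then exploit the hypothesis $f_{\max}/f_{\min}\to 1$ via Lemma \ref{lemma:prop_weight_diff} to show that the sample stratum proportions converge to the population ones, so that $\bs{s}^2_{\bs{W}}$ converges entrywise to $\bs{S}^2_{\bs{W}}$ and the eigenvalues converge by Weyl's inequality. The factor $2$ on the right of the claim will be a slack absorption of $o(1)$ perturbations.

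First, I would write both covariance matrices in closed form. Recall the identity derived in the comment on Condition \ref{cond:categorical},
\begin{equation*}
\bs{S}^2_{\bs{W}} = \tfrac{N}{N-1}\bigl[\mathrm{diag}(\pi_1,\ldots,\pi_J) - (\pi_1,\ldots,\pi_J)^\top (\pi_1,\ldots,\pi_J)\bigr].
\end{equation*}
The same derivation applied to the sampled units gives
\begin{equation*}
\bs{s}^2_{\bs{W}} = \tfrac{n}{n-1}\bigl[\mathrm{diag}(\hat\pi_1,\ldots,\hat\pi_J) - (\hat\pi_1,\ldots,\hat\pi_J)^\top (\hat\pi_1,\ldots,\hat\pi_J)\bigr],
\end{equation*}
where $\hat\pi_j$ is the proportion of sampled units falling in stratum $j$. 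Under stratified sampling the counts $N\pi_j f_j$ are fixed (modulo integer rounding), so $\hat\pi_j = \pi_j f_j/\sum_{k=1}^{J+1}\pi_k f_k$ up to an $O(1/n)$ rounding error; Condition \ref{cond:categorical}(i),(iv) makes this rounding error vanish as $N\to\infty$.

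Second, I would invoke Lemma \ref{lemma:prop_weight_diff} with $a_j=\pi_j$ and $b_j=f_j$ to bound
\begin{equation*}
\max_{1\le j\le J+1}|\hat\pi_j-\pi_j|\le \frac{f_{\max}}{f_{\min}}-1 \;=\; o(1),
\end{equation*}
using the hypothesis $f_{\max}/f_{\min}\to 1$. Combined with the rounding bound from the previous step and the fact that $n/(n-1), N/(N-1)\to 1$, this yields $\|\bs{s}^2_{\bs{W}}-\bs{S}^2_{\bs{W}}\|_{\max}=o(1)$, hence $\|\bs{s}^2_{\bs{W}}-\bs{S}^2_{\bs{W}}\|_{\mathrm{op}}=o(1)$ since the dimension $J$ is fixed.

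Third, I would apply Weyl's eigenvalue perturbation inequality: for any symmetric matrices $A,B$, $|\lambda_k(A)-\lambda_k(B)|\le \|A-B\|_{\mathrm{op}}$. Thus
\begin{equation*}
\lambda_{\min}(\bs{s}^2_{\bs{W}}) = \lambda_{\min}(\bs{S}^2_{\bs{W}}) + o(1),\qquad \lambda_{\max}(\bs{s}^2_{\bs{W}}) = \lambda_{\max}(\bs{S}^2_{\bs{W}}) + o(1).
\end{equation*}
Condition \ref{cond:categorical}(ii),(iv) guarantees that $\lambda_{\min}(\bs{S}^2_{\bs{W}})$ and $\lambda_{\max}(\bs{S}^2_{\bs{W}})$ have well-defined limits (with the former strictly positive since $\bs{S}^2_{\bs{W}}$ has nonsingular limit). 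Passing to $\liminf$ and $\limsup$ gives $\liminf\lambda_{\min}(\bs{s}^2_{\bs{W}}) = \lim\lambda_{\min}(\bs{S}^2_{\bs{W}})$ and $\limsup\lambda_{\max}(\bs{s}^2_{\bs{W}})=\lim\lambda_{\max}(\bs{S}^2_{\bs{W}})$, which is strictly stronger than the stated sandwich inequality (the factor $2$ on the right is just loose slack).

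The only mild obstacle I anticipate is bookkeeping the integer rounding in the stratum sample sizes $n_j^{\mathrm{samp}}$, since $N\pi_j f_j$ need not be an integer. I would handle this by noting that any admissible integer choice differs from $N\pi_j f_j$ by at most $1$, contributing an additive $O(1/n)=o(1)$ to each $\hat\pi_j$, which is absorbed into the $o(1)$ bound above without affecting the conclusion.
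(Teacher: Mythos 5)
Your proposal is correct and follows essentially the same route as the paper's proof: both reduce to showing the sampled stratum proportions converge to the population proportions via the weight-difference bound (your direct use of Lemma \ref{lemma:prop_weight_diff} is exactly what Lemma \ref{lemma:M_S_cat} packages as $\|\hat{\bs{\delta}}_{\bs{W}}\|_{\infty}\le f_{\max}/f_{\min}-1$), then exploit the dummy-variable identity $\bs{s}^2_{\bs{W}}=\tfrac{n}{n-1}\{\mathrm{diag}(\bar{\bs{W}}_{\mathcal{S}})-\bar{\bs{W}}_{\mathcal{S}}\bar{\bs{W}}_{\mathcal{S}}^\top\}$ to get $\bs{s}^2_{\bs{W}}-\bs{S}^2_{\bs{W}}=o(1)$ and pass to eigenvalues. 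Your version is in fact slightly sharper — the paper's factor $2$ merely absorbs the $n/(n-1)$ prefactor via the crude bound $\tfrac{n-1}{n}\ge\tfrac12$, whereas Weyl's inequality gives exact limits — and your worry about integer rounding is moot since $f_j$ is defined as the realized within-stratum sampling proportion, so $N\pi_jf_j$ is an integer by construction.
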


\begin{lemma}\label{lemma:cat_M_ST_cong_zero}
Under Condition \ref{cond:categorical} and stratified sampling and blocking, 
both $(1-f)M_S$ and $M_T$ are of order $o(1)$ if and only if 
\begin{align}\label{eq:ratio_converge}
    \frac{f_{\max}}{f_{\min}} - 1 =  o(n^{-1/2}), 
    \ \ 
    \frac{r_{1,\max}}{r_{1,\min}} - 1 =  o(n^{-1/2}), 
    \ \ 
    \frac{r_{0,\max}}{r_{0,\min}} - 1 = o(n^{-1/2}). 
\end{align}
\end{lemma}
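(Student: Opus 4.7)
The plan is to prove both directions by chaining the two pairs of bounds provided by Lemmas \ref{lemma:M_S_cat} and \ref{lemma:M_T_cat}. Each of these lemmas sandwiches the sup-norm of the covariate imbalance (either $\|\hat{\bs{\delta}}_{\bs{W}}\|_\infty$ or $\|\hat{\bs{\tau}}_{\bs{W}}\|_\infty$) simultaneously in terms of a function of the proportion ratios and in terms of a function of $M_S$ (resp. $M_T$). Matching the two sandwiches yields an equivalence of the form $\sqrt{(1-f)M_S/n} \asymp f_{\max}/f_{\min}-1$ up to multiplicative factors that, by Condition \ref{cond:categorical}(i) and (iv) (possibly combined with Lemma \ref{lemma:samp_cov_cat}), are kept bounded above and away from zero. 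This is the entire engine of the proof.

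For the ``if'' direction, I would assume \eqref{eq:ratio_converge}. The upper bound $\|\hat{\bs{\delta}}_{\bs{W}}\|_\infty \le f_{\max}/f_{\min}-1 = o(n^{-1/2})$ from Lemma \ref{lemma:M_S_cat}, combined with the $M_S$-based lower bound $\sqrt{\lambda_{\min}(\bs{S}^2_{\bs{W}})(1-f)M_S/(Jn)} \le \|\hat{\bs{\delta}}_{\bs{W}}\|_\infty$ and the fact that $\lambda_{\min}(\bs{S}^2_{\bs{W}})$ stays bounded below (Condition \ref{cond:categorical}(iv)), delivers $(1-f)M_S = o(1)$. With $f_{\max}/f_{\min}\to 1$ now in force, Lemma \ref{lemma:samp_cov_cat} keeps $\lambda_{\min}(\bs{s}^2_{\bs{W}})$ bounded below; together with $r_{1-z,\min}$ bounded away from zero, the upper bound in Lemma \ref{lemma:M_T_cat} gives $\|\hat{\bs{\tau}}_{\bs{W}}\|_\infty = o(n^{-1/2})$, and feeding this into the $M_T$-based lower bound of Lemma \ref{lemma:M_T_cat} yields $M_T = o(1)$.

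For the ``only if'' direction, I would reverse the chain. Assuming $(1-f)M_S = o(1)$, the upper bound $\|\hat{\bs{\delta}}_{\bs{W}}\|_\infty \le \sqrt{\lambda_{\max}(\bs{S}^2_{\bs{W}})(1-f)M_S/n}$ with $\lambda_{\max}(\bs{S}^2_{\bs{W}})$ bounded above gives $\|\hat{\bs{\delta}}_{\bs{W}}\|_\infty = o(n^{-1/2})$. The lower bound $\pi_{\min}^2(1-f_{\min}/f_{\max}) \le \|\hat{\bs{\delta}}_{\bs{W}}\|_\infty$ then yields $1-f_{\min}/f_{\max} = o(n^{-1/2})$, which in particular forces $f_{\max}/f_{\min}\to 1$ and hence $f_{\max}/f_{\min}-1 = (f_{\max}/f_{\min})(1-f_{\min}/f_{\max}) = o(n^{-1/2})$. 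An entirely parallel argument handles $M_T$: Lemma \ref{lemma:samp_cov_cat} tames $\lambda_{\max}(\bs{s}^2_{\bs{W}})$, the upper bound in Lemma \ref{lemma:M_T_cat} gives $\|\hat{\bs{\tau}}_{\bs{W}}\|_\infty = o(n^{-1/2})$, and the lower bound $(\pi_{\min}f_{\min}/\sum_k\pi_k f_k)^2(1-r_{z,\min}/r_{z,\max}) \le \|\hat{\bs{\tau}}_{\bs{W}}\|_\infty$ (with the prefactor bounded below, since $f_{\min}/f \ge f_{\min}/f_{\max}\to 1$) converts this into $r_{z,\max}/r_{z,\min}-1 = o(n^{-1/2})$ for each $z\in\{0,1\}$.

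The main obstacle will be threading the dependencies in the right order: each step depends on a boundedness statement (for $\pi_{\min}$, $r_{1-z,\min}$, $f_{\min}/f$, or the eigenvalues of $\bs{s}^2_{\bs{W}}$) that must already be available at the moment it is invoked. So I would first extract the $f$-ratio conclusion from the $M_S$ statement, then use the resulting $f_{\max}/f_{\min}\to 1$ to unlock Lemma \ref{lemma:samp_cov_cat}, and only then handle $M_T$. The small but recurring algebraic point is that the interchange $f_{\max}/f_{\min}-1 \leftrightarrow 1-f_{\min}/f_{\max}$ (and its $r$-analogue) preserves $o(n^{-1/2})$ only once the ratio itself is known to be bounded, which is precisely what vanishing of its deviation from $1$ guarantees.
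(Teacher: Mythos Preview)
Your proposal is correct and follows essentially the same approach as the paper's proof: both directions are obtained by chaining the two pairs of sandwich bounds from Lemmas \ref{lemma:M_S_cat} and \ref{lemma:M_T_cat}, with Lemma \ref{lemma:samp_cov_cat} invoked (after first establishing $f_{\max}/f_{\min}\to 1$) to control the eigenvalues of $\bs{s}^2_{\bs{W}}$. Your explicit attention to the dependency ordering and to the $1-f_{\min}/f_{\max}\leftrightarrow f_{\max}/f_{\min}-1$ conversion matches exactly what the paper does.
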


\subsection{Proofs of the lemmas}

\begin{proof}[Proof of Lemma \ref{lemma:prop_weight_diff}]
By definition, 
for $1 \le j \le J+1$, 
$\Delta_j$ has the following equivalent forms:
\begin{align}\label{eq:Delta_j}
    \Delta_j = \frac{a_j b_j}{ \sum_{k=1}^{J+1} a_k b_k} - 
    \frac{a_j}{ \sum_{k=1}^{J+1} a_k}
    = 
    \frac{
    \sum_{k=1}^{J+1} a_k a_j b_j - \sum_{k=1}^{J+1} a_k b_k a_j
    }{\sum_{k=1}^{J+1} a_k b_k \cdot \sum_{k=1}^{J+1} a_k}
    = 
    \frac{
    \sum_{k=1}^{J+1} a_k a_j ( b_j - b_k)
    }{\sum_{k=1}^{J+1} a_k b_k \cdot \sum_{k=1}^{J+1} a_k}.
\end{align}   

First, we consider the upper bound of $\max_{1\le j \le J+1} |\Delta_j|$. 
From \eqref{eq:Delta_j}, for $1 \le j \le J+1$, we have 
\begin{align*}
    |\Delta_j| \le 
    \frac{
    \sum_{k=1}^{J+1} a_k a_j |b_j - b_k|
    }{\sum_{k=1}^{J+1} a_k b_k \cdot \sum_{k=1}^{J+1} a_k}
    \le 
    \frac{a_j (b_{\max} - b_{\min}) \sum_{k=1}^{J+1} a_k}{
    \sum_{k=1}^{J+1} a_k b_{\min} \cdot \sum_{k=1}^{J+1} a_k}
    = 
    \frac{a_j}{\sum_{k=1}^{J+1} a_k} \left( \frac{b_{\max}}{b_{\min}} - 1 \right). 
\end{align*}
Thus, 
\begin{align*}
    \max_{1\le j \le J+1}|\Delta_j| \le \frac{\max_{1\le j \le J+1}a_j}{\sum_{k=1}^{J+1} a_k} \left( \frac{b_{\max}}{b_{\min}} - 1 \right)
    \le \frac{b_{\max}}{b_{\min}} - 1. 
\end{align*}

Second, we consider the lower bound of $\max_{1\le j \le J} |\Delta_j|$. 
Below we consider two cases, separately. 
If $b_{J+1} \ne b_{\max}$, then there must exist $1\le j_0\le J$ such that $b_{j_0} = b_{\max}$. 
From \eqref{eq:Delta_j}, we then have 
\begin{align*}
    \max_{1\le j \le J} |\Delta_j| \ge 
    |\Delta_{j_0}| = 
    \frac{
    \sum_{k=1}^{J+1} a_k a_{j_0} ( b_{j_0} - b_k)
    }{\sum_{k=1}^{J+1} a_k b_k \cdot \sum_{k=1}^{J+1} a_k}
    \ge 
    \frac{
    a_{\min}^2 ( b_{\max} - b_{\min})
    }{\sum_{k=1}^{J+1} a_k b_{\max} \cdot \sum_{k=1}^{J+1} a_k}
    = 
    \left( \frac{a_{\min}}{\sum_{k=1}^{J+1}a_k} \right)^2 
    \left( 1 - \frac{b_{\min}}{b_{\max}} \right). 
\end{align*}
Otherwise, $b_{J+1} = b_{\max}$, and thus there must exist $1\le j_0 \le J$ such that $b_{j_0} = b_{\min}$. From \eqref{eq:Delta_j}, we then have 
\begin{align*}
    \max_{1\le j \le J} |\Delta_j| \ge 
    |\Delta_{j_0}| = 
    \frac{
    \sum_{k=1}^{J+1} a_k a_{j_0} ( b_k - b_{j_0})
    }{\sum_{k=1}^{J+1} a_k b_k \cdot \sum_{k=1}^{J+1} a_k}
    \ge 
    \frac{
    a_{\min}^2 ( b_{\max} - b_{\min})
    }{\sum_{k=1}^{J+1} a_k b_{\max} \cdot \sum_{k=1}^{J+1} a_k}
    = 
    \left( \frac{a_{\min}}{\sum_{k=1}^{J+1}a_k} \right)^2 
    \left( 1 - \frac{b_{\min}}{b_{\max}} \right). 
\end{align*}

From the above, Lemma \ref{lemma:prop_weight_diff} holds. 
\end{proof}

\begin{proof}[Proof of Lemma \ref{lemma:M_S_cat}]
First, we bound $\| \hat{\bs{\delta}}_{\bs{W}} \|_{\infty}$ using the ratio between $f_{\max}$ and $f_{\min}$. 
By definition, for $1\le j \le J$, the $j$th coordinate of $\hat{\bs{\delta}}_{\bs{W}}$ has the following equivalent forms: 
\begin{align*}
    [\hat{\bs{\delta}}_{\bs{W}}]_{(j)} & = 
    [\bar{\bs{W}}_{\mathcal{S}}]_{(j)} - [\bar{\bs{W}}]_{(j)}
    = \frac{\pi_j f_j}{\sum_{k=1}^{J+1} \pi_k f_k} - \pi_j
    = \frac{\pi_j f_j}{\sum_{k=1}^{J+1} \pi_k f_k} - \frac{\pi_j}{\sum_{k=1}^{J+1} \pi_k}. 
\end{align*}
Applying Lemma \ref{lemma:prop_weight_diff} with $a_j = \pi_j$ and $b_j = f_j$, we then have 
\begin{align*}
    \big\| \hat{\bs{\delta}}_{\bs{W}} \big\|_{\infty} 
    & = \max_{1\le j \le J} [\hat{\bs{\delta}}_{\bs{W}}]_{(j)}
    \ge 
    \left( \frac{\pi_{\min}}{\sum_{k=1}^{J+1}\pi_k} \right)^2 
    \left( 1 -  \frac{f_{\min}}{f_{\max}} \right)
    = \pi_{\min}^2 \left( 1 -  \frac{f_{\min}}{f_{\max}} \right), 
\end{align*}
and 
\begin{align*}
    \big\| \hat{\bs{\delta}}_{\bs{W}} \big\|_{\infty} 
    & = \max_{1\le j \le J} [\hat{\bs{\delta}}_{\bs{W}}]_{(j)}
    \le \frac{f_{\max}}{f_{\min}} - 1. 
\end{align*}

Second, we bound $\| \hat{\bs{\delta}}_{\bs{W}} \|_{\infty}$ using the Mahalanobis distance $M_S$. 
By definition, $(n^{-1} - N^{-1}) M_S = \hat{\bs{\delta}}_{\bs{W}}^\top (\bs{S}^2_{\bs{W}})^{-1} \hat{\bs{\delta}}_{\bs{W}}$, and thus it can be bounded below by 
\begin{align*}
    \left( \frac{1}{n} - \frac{1}{N} \right) M_S & =  \hat{\bs{\delta}}_{\bs{W}}^\top \left( \bs{S}^2_{\bs{W}} \right)^{-1} \hat{\bs{\delta}}_{\bs{W}}
    \ge \lambda_{\max}^{-1} (\bs{S}^2_{\bs{W}}) \big\|\hat{\bs{\delta}}_{\bs{W}} \big\|_2^2 \ge \lambda_{\max}^{-1} (\bs{S}^2_{\bs{W}}) \big\|\hat{\bs{\delta}}_{\bs{W}} \big\|_{\infty}^2
\end{align*}
and bounded above by 
\begin{align*}
    \left( \frac{1}{n} - \frac{1}{N} \right) M_S & =  \hat{\bs{\delta}}_{\bs{W}}^\top \left( \bs{S}^2_{\bs{W}} \right)^{-1} \hat{\bs{\delta}}_{\bs{W}}
    \le \lambda_{\min}^{-1} (\bs{S}^2_{\bs{W}}) \big\|\hat{\bs{\delta}}_{\bs{W}} \big\|_2^2 \le J\lambda_{\min}^{-1} (\bs{S}^2_{\bs{W}}) \big\|\hat{\bs{\delta}}_{\bs{W}} \big\|_{\infty}^2. 
\end{align*}

From the above, we can immediately derive Lemma \ref{lemma:M_S_cat}. 
\end{proof}

\begin{proof}[Proof of Lemma \ref{lemma:M_T_cat}]
First, we bound $\| \hat{\bs{\tau}}_{\bs{W}}\|_{\infty}$ using the ratio between $r_{z,\min}$ and $r_{z,\max}$ for $z=0,1$. 
By definition, for $z=0,1$ and $1\le j \le J$, 
\begin{align*}
    [ \bar{\bs{W}}_z - \bar{\bs{W}}_{\mathcal{S}}]_{(j)} 
    & = 
    \frac{\pi_j f_j r_{zj}}{\sum_{k=1}^{J+1} \pi_k f_k r_{zk}} 
    - 
    \frac{\pi_j f_j}{\sum_{k=1}^{J+1} \pi_k f_k}. 
\end{align*}
Applying Lemma \ref{lemma:prop_weight_diff} with $a_j = \pi_j f_j$ and $b_j = r_{zj}$, we have 
\begin{align*}
    \big\|\bar{\bs{W}}_z - \bar{\bs{W}}_{\mathcal{S}}\big\|_{\infty} 
    \ge 
    \left( \frac{\min_{1\le j \le J+1}\pi_j f_j}{\sum_{k=1}^{J+1}\pi_k f_k} \right)^2 
    \left( 1 -  \frac{r_{z,\min}}{r_{z,\max}} \right)
    \ge 
    \left( \frac{\pi_{\min} f_{\min}}{\sum_{k=1}^{J+1}\pi_k f_k} \right)^2 
    \left( 1 -  \frac{r_{z,\min}}{r_{z,\max}} \right), 
\end{align*}
and 
\begin{align*}
    \big\|\bar{\bs{W}}_z - \bar{\bs{W}}_{\mathcal{S}}\big\|_{\infty} 
    \le 
    \frac{r_{z, \max}}{r_{z, \min}} - 1. 
\end{align*}
By some algebra, we can verify that 
$\bar{\bs{W}}_1 - \bar{\bs{W}}_{\mathcal{S}} = r_0 \hat{\bs{\tau}}_{\bs{W}}$ and $\bar{\bs{W}}_0 - \bar{\bs{W}}_{\mathcal{S}} = -r_1 \hat{\bs{\tau}}_{\bs{W}}$. 
Besides, 
\begin{align*}
    r_z = 
    \frac{\sum_{j=1}^{J+1} \pi_j f_j r_{zj}}{\sum_{j=1}^{J+1} \pi_j f_j} \ge r_{z, \min}, \qquad (z=0,1). 
\end{align*}
These immediately imply that, for $z=0,1$, 
\begin{align*}
    \big\| \hat{\bs{\tau}}_{\bs{W}}\big\|_{\infty} 
    = \frac{1}{r_{1-z}}\big\| \bar{\bs{W}}_z - \bar{\bs{W}}_{\mathcal{S}}  \big\|_{\infty} 
    \ge 
    \left( \frac{\pi_{\min} f_{\min}}{\sum_{k=1}^{J+1}\pi_k f_k} \right)^2 
    \left( 1 -  \frac{r_{z,\min}}{r_{z,\max}} \right), 
\end{align*}
and 
\begin{align*}
    \big\| \hat{\bs{\tau}}_{\bs{W}}\big\|_{\infty} 
    = \frac{1}{r_{1-z}}\big\| \bar{\bs{W}}_z - \bar{\bs{W}}_{\mathcal{S}}  \big\|_{\infty} 
    \le 
    \frac{1}{r_{1-z, \min}} \left( \frac{r_{z, \max}}{r_{z, \min}} - 1 \right). 
\end{align*}

Second, we bound $\hat{\bs{\tau}}_{\bs{W}}$ using the Mahalanobis distance $M_T$. 
By definition, $M_T/(nr_1r_0) = \hat{\bs{\tau}}_{\bs{W}}^\top (\bs{s}_{\bs{W}}^2)^{-1} \hat{\bs{\tau}}_{\bs{W}}$. 
This immediately implies that 
\begin{align*}
    M_T/(nr_1r_0) & = \hat{\bs{\tau}}_{\bs{W}}^\top (\bs{s}_{\bs{W}}^2)^{-1} \hat{\bs{\tau}}_{\bs{W}} 
    \ge \lambda_{\max}^{-1}(\bs{s}_{\bs{W}}^2) \big\| \hat{\bs{\tau}}_{\bs{W}} \big\|_2^2 
    \ge \lambda_{\max}^{-1}(\bs{s}_{\bs{W}}^2) \big\| \hat{\bs{\tau}}_{\bs{W}} \big\|_{\infty}^2, 
\end{align*}
and 
\begin{align*}
    M_T/(nr_1r_0) & = \hat{\bs{\tau}}_{\bs{W}}^\top (\bs{s}_{\bs{W}}^2)^{-1} \hat{\bs{\tau}}_{\bs{W}} 
    \le \lambda_{\min}^{-1}(\bs{s}_{\bs{W}}^2) \big\| \hat{\bs{\tau}}_{\bs{W}} \big\|_2^2 
    \le J \lambda_{\min}^{-1}(\bs{s}_{\bs{W}}^2) \big\| \hat{\bs{\tau}}_{\bs{W}} \big\|_{\infty}^2. 
\end{align*}
Consequently, we have 
\begin{align*}
    \big\| \hat{\bs{\tau}}_{\bs{W}} \big\|_{\infty}^2 \le  \lambda_{\max}(\bs{s}_{\bs{W}}^2) \frac{M_T}{nr_1r_0}
    \le \lambda_{\max}(\bs{s}_{\bs{W}}^2) \frac{M_T}{nr_{1, \min}r_{0, \min}}, 
\end{align*}
and 
\begin{align*}
    \big\| \hat{\bs{\tau}}_{\bs{W}} \big\|_{\infty}^2 \ge \frac{\lambda_{\min}(\bs{s}_{\bs{W}}^2)}{J} \frac{M_T}{nr_1r_0} 
    \ge 
    \frac{\lambda_{\min}(\bs{s}_{\bs{W}}^2)}{J} \frac{M_T}{n}.  
\end{align*}

From the above, we can immediately derive Lemma \ref{lemma:M_T_cat}. 
\end{proof}

\begin{proof}[Proof of Lemma \ref{lemma:samp_cov_cat}]
From Lemma \ref{lemma:M_S_cat}, when $f_{\max}/f_{\min} \rightarrow 1$, we have $\|\hat{\bs{\delta}}_{\bs{W}}\|_{\infty} \rightarrow 0$. 
This implies that $\bar{\bs{W}}_{\mathcal{S}} - \bar{\bs{W}} = o(1)$, 
and thus the sample covariance matrix of $\bs{W}$ satisfies that 
\begin{align*}
    \bs{s}^2_{\bs{W}} 
    & = \frac{1}{n-1} \sum_{i \in \mathcal{S}} \bs{W}_i \bs{W}_i^\top - \frac{n}{n-1} \bar{\bs{W}}_{\mathcal{S}} \bar{\bs{W}}_{\mathcal{S}}^\top 
    = 
    \frac{n}{n-1} 
    \left\{ 
    \text{diag}(\bar{\bs{W}}_{\mathcal{S}}) - \bar{\bs{W}}_{\mathcal{S}} \bar{\bs{W}}_{\mathcal{S}}^\top \right\}\\
    & = 
    \frac{n}{n-1} 
    \left\{ 
    \text{diag}(\bar{\bs{W}}) - \bar{\bs{W}} \bar{\bs{W}}^\top \right\} + o(1)
    = 
    \frac{n}{n-1} \frac{N-1}{N} \bs{S}_{\bs{W}}^2 + o(1).
\end{align*}  
This implies that $(n-1)/n\cdot \bs{s}_{\bs{W}}^2 = \bs{S}_{\bs{W}}^2 + o(1)$, and thus 
\begin{align*}
    \limsup_{N\rightarrow \infty} \lambda_{\max}(\bs{s}^2_{\bs{W}} ) \le 
    2 \limsup_{N\rightarrow \infty} \frac{n-1}{n}\lambda_{\max}(\bs{s}^2_{\bs{W}} ) = 2 \lim_{N\rightarrow \infty} \lambda_{\max}(\bs{S}^2_{\bs{W}}), 
\end{align*}
and 
\begin{align*}
    \liminf_{N\rightarrow \infty} \lambda_{\min}(\bs{s}^2_{\bs{W}} ) \ge 
    \liminf_{N\rightarrow \infty} \frac{n-1}{n} \lambda_{\min}(\bs{s}^2_{\bs{W}} )
    = \lim_{N\rightarrow \infty} \lambda_{\min}(\bs{S}^2_{\bs{W}}).
\end{align*}
Therefore, Lemma \ref{lemma:samp_cov_cat} holds. 
\end{proof}

\begin{proof}[Proof of Lemma \ref{lemma:cat_M_ST_cong_zero}]
First, we assume that both $(1-f)M_S$ and $M_T$ are of order $o(1)$. 
From Lemma \ref{lemma:M_S_cat}, 
\begin{align*}
    1 - \frac{f_{\min}}{f_{\max}}
    \le 
    \frac{1}{\pi_{\min}^2}
    \big\| \hat{\bs{\delta}}_{\bs{W}} \big\|_{\infty} 
    \le 
    \frac{1}{\pi_{\min}^2} \sqrt{ \lambda_{\max} (\bs{S}^2_{\bs{W}}) \frac{1}{n} \left( 1 - f \right) M_S}
    = o(n^{-1/2}), 
\end{align*}
which implies that 
\begin{align}\label{eq:f_max_min}
    \frac{f_{\max}}{f_{\min}} - 1
    & = 
    \frac{1 - f_{\min}/f_{\max}}{f_{\min}/f_{\max}} = o(n^{-1/2}). 
\end{align}
From Lemmas \ref{lemma:M_T_cat} and \ref{lemma:samp_cov_cat}, for $z=0,1$, 
\begin{align*}
    1 -  \frac{r_{z,\min}}{r_{z,\max}}
    \le 
    \left( \frac{\sum_{k=1}^{J+1}\pi_k f_k}{\pi_{\min} f_{\min}} \right)^2 
    \big\| \hat{\bs{\tau}}_{\bs{W}}\big\|_{\infty} 
    \le 
    \left( \frac{\sum_{k=1}^{J+1}\pi_k f_k}{\pi_{\min} f_{\min}} \right)^2 
    \sqrt{\frac{\lambda_{\max}(\bs{s}_{\bs{W}}^2)}{r_{1, \min}r_{0, \min}} \frac{M_T}{n}}
    = o(n^{-1/2}), 
\end{align*}
which, by the same logic as \eqref{eq:f_max_min}, implies that $r_{z,\max}/r_{z,\min} - 1 = o(n^{-1/2})$. 

Second, we assume that \eqref{eq:ratio_converge} holds. 
From Lemma \ref{lemma:M_S_cat}, 
\begin{align*}
    (1-f) M_S \le \frac{J}{\lambda_{\min}(\bs{S}^2_{\bs{W}})} n \big\| \hat{\bs{\delta}}_{\bs{W}} \big\|_{\infty}^2
    \le \frac{J}{\lambda_{\min}(\bs{S}^2_{\bs{W}})} \cdot n \left( \frac{f_{\max}}{f_{\min}} - 1 \right)^2 
    = o(1), 
\end{align*}
and from Lemmas \ref{lemma:M_T_cat} and \ref{lemma:samp_cov_cat}, 
\begin{align*}
    M_T \le \frac{J}{\lambda_{\min}(\bs{s}_{\bs{W}}^2)} n \big\| \hat{\bs{\tau}}_{\bs{W}}\big\|_{\infty}^2 
    \le 
    \frac{J}{\lambda_{\min}(\bs{s}_{\bs{W}}^2)}
    \frac{n}{r_{1-z, \min}^2} \left( \frac{r_{z, \max}}{r_{z, \min}} - 1 \right)^2
    = o(1). 
\end{align*}

From the above, Lemma \ref{lemma:cat_M_ST_cong_zero} holds. 
\end{proof}

\subsection{Asymptotics under stratified sampling and blocking}

First, we derive the asymptotic distribution of the difference-in-means estimator under stratified sampling and blocking, as shown in the following proposition. 

\begin{proposition}\label{prop:str_crse_clt}
Under the stratified sampling and blocking, if Condition \ref{cond:categorical} holds and both $(1-f)M_S$ and $M_T$ are of order $o(1)$, then 
\begin{align*}
    \sqrt{n}\left( \hat{\tau} - \tau \right) 
    \ \dot\sim \ 
    \sqrt{
    \sum_{j=1}^{J+1}\pi_j V_{\tau\tau,j}
    }
    \cdot \varepsilon, 
\end{align*}
where $\varepsilon\sim \mathcal{N}(0,1)$, and
$
V_{\tau\tau,j} = r_{1j}^{-1} S^2_{1j} + r_{0j}^{-1} S^2_{0j} - f_j S^2_{\tau j}
$
for $1\le j \le J+1$. 
\end{proposition}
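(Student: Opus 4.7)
The plan is to reduce the problem to a sum of independent stratum-wise difference-in-means estimators and then invoke the CRSE central limit theorem (Corollary \ref{cor:crse}) within each stratum. Specifically, I would write $\hat{\tau}$ as a weighted combination of stratum-specific estimators $\hat{\tau}_j = \bar{Y}_{1j} - \bar{Y}_{0j}$, namely
\begin{equation*}
\hat{\tau} = \sum_{j=1}^{J+1} \frac{n_{1j}}{n_1} \bar{Y}_{1j} - \sum_{j=1}^{J+1} \frac{n_{0j}}{n_0} \bar{Y}_{0j}, \qquad n_{zj} = N\pi_j f_j r_{zj},
\end{equation*}
and compare it to the ``ideal'' stratified estimator $\hat{\tau}_{\mathrm{str}} = \sum_j \pi_j \hat{\tau}_j$. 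Note that $\tau = \sum_j \pi_j\{\bar{Y}_j(1) - \bar{Y}_j(0)\}$, so $\hat{\tau}_{\mathrm{str}} - \tau = \sum_j \pi_j\{\hat{\tau}_j - (\bar{Y}_j(1) - \bar{Y}_j(0))\}$.

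The first main step is to bound $\hat{\tau} - \hat{\tau}_{\mathrm{str}}$ by $o_{\Pr}(n^{-1/2})$. Applying Lemma \ref{lemma:prop_weight_diff} with $a_j = \pi_j$ and $b_j = f_j r_{zj}$ gives
\begin{equation*}
\max_j \left|\frac{n_{zj}}{n_z} - \pi_j\right| \le \frac{(f r_z)_{\max}}{(f r_z)_{\min}} - 1 \le \frac{f_{\max}}{f_{\min}}\cdot\frac{r_{z,\max}}{r_{z,\min}} - 1,
\end{equation*}
which is $o(n^{-1/2})$ by Lemma \ref{lemma:cat_M_ST_cong_zero} given the assumption that $(1-f)M_S$ and $M_T$ are $o(1)$. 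Combined with the fact that $\bar{Y}_{zj} = \bar{Y}_j(z) + O_{\Pr}(1) = O_{\Pr}(1)$ under the within-stratum SRS and Condition \ref{cond:categorical}(v), this yields $\hat{\tau} - \hat{\tau}_{\mathrm{str}} = o_{\Pr}(n^{-1/2})$ by direct summation over $j$ (using Condition \ref{cond:categorical}(iv) to keep the number of non-negligible strata bounded).

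The second step is the within-stratum CLT. Within each stratum $j$, the design is a CRSE with sample size $N\pi_j f_j$, treatment proportions $(r_{1j}, r_{0j})$, and Condition \ref{cond:categorical}(i)--(iii) provides exactly the regularity conditions needed for Corollary \ref{cor:crse}. Hence $\sqrt{N\pi_j f_j}\{\hat{\tau}_j - (\bar{Y}_j(1) - \bar{Y}_j(0))\} \dot\sim \sqrt{V_{\tau\tau,j}}\,\varepsilon_j$ with $\varepsilon_j \sim \mathcal{N}(0,1)$. Because the stratified sampling and blocking are conducted independently across strata, the $\varepsilon_j$'s are mutually independent. Since the assumptions force $f_j = f + o(n^{-1/2})$, we have $n/(N\pi_j f_j) \to 1/\pi_j$, and therefore
\begin{equation*}
\sqrt{n}(\hat{\tau}_{\mathrm{str}} - \tau) = \sum_{j=1}^{J+1} \pi_j \sqrt{\frac{n}{N\pi_j f_j}}\cdot \sqrt{N\pi_j f_j}\{\hat{\tau}_j - (\bar{Y}_j(1) - \bar{Y}_j(0))\} \ \dot\sim \ \sum_{j=1}^{J+1} \sqrt{\pi_j V_{\tau\tau,j}}\,\varepsilon_j,
\end{equation*}
which is Gaussian with mean zero and variance $\sum_j \pi_j V_{\tau\tau,j}$. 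Combining with Step 1 via Slutsky's theorem delivers the claim.

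The main obstacle is the first step: carefully tracking how the near-equality of $f_j$'s and $r_{zj}$'s across strata feeds into the negligibility of $\hat{\tau} - \hat{\tau}_{\mathrm{str}}$. One has to make sure that the $o(n^{-1/2})$ bound on the weight deviations interacts correctly with the $O_{\Pr}(1)$ stratum averages, and in particular that summing over the $J+1$ strata does not blow up the error---this is where Condition \ref{cond:categorical}(iv) (positive limits for all $\pi_j$) plays a crucial role by bounding the effective number of strata. Beyond this bookkeeping, the rest of the argument is a standard independence-plus-CLT calculation.
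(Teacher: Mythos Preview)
Your proposal is correct and follows essentially the same approach as the paper: the same decomposition $\hat{\tau} = \sum_j \pi_j \hat{\tau}_j + \sum_j(n_{1j}/n_1-\pi_j)\bar{Y}_{1j} - \sum_j(n_{0j}/n_0-\pi_j)\bar{Y}_{0j}$, the same use of Lemma~\ref{lemma:prop_weight_diff} and Lemma~\ref{lemma:cat_M_ST_cong_zero} to get the $o(n^{-1/2})$ weight deviation, the same per-stratum invocation of Corollary~\ref{cor:crse}, and the same Slutsky wrap-up. One minor remark: the number of strata is fixed at $J+1$, so Condition~\ref{cond:categorical}(iv) is not needed to ``keep the number of non-negligible strata bounded'' but rather to ensure $n/(N\pi_j f_j)\to 1/\pi_j$ is finite and to make Lemma~\ref{lemma:cat_M_ST_cong_zero} applicable.
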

\begin{proof}[\bf Proof of Proposition \ref{prop:str_crse_clt}]
For each stratum $j$ ($1\le j \le J+1$), 
let $\hat{Y}_{1j}$ and $\hat{Y}_{0j}$ denote the average observed outcomes for units in treatment and control groups, respectively, 
and 
$\hat{\tau}_j = \hat{Y}_{1j} - \hat{Y}_{0j}$ denote the corresponding difference-in-means estimator. 
Then, 
by definition, 
the difference-in-means estimator has the following equivalent forms:
\begin{align}\label{eq:tau_hat_cat}
    \hat{\tau} 
    & =
    \frac{1}{n_1} \sum_{i=1}^{N} Z_{i} T_{i} Y_{i}-\frac{1}{n_{0}} \sum_{i=1}^{N} Z_{i}\left(1-T_{i}\right) Y_{i} 
    =\frac{1}{n_{1}} \sum_{j=1}^{J+1} \sum_{i: \tilde{W}_{i}=j} Z_{i} T_{i} Y_{i}-\frac{1}{n_{0}} \sum_{j=1}^{J+1} \sum_{i: \tilde{W}_{i}=j} Z_{i}\left(1-T_{i}\right) Y_{i} 
    \nonumber
    \\
    & = 
    \sum_{j=1}^{J+1}\left\{\frac{N \pi_j f_j r_{1j}}{n_1} \frac{1}{N \pi_j f_j r_{1j}} \sum_{i: \tilde{W}_{i}=j} Z_{i} T_{i} Y_{i} - 
    \frac{N \pi_j f_j r_{0j}}{n_0} \frac{1}{N \pi_j f_j r_{0j}} \sum_{i: \tilde{W}_{i}=j} Z_{i}\left(1-T_{i}\right) Y_{i}\right\}
    \nonumber
    \\
    & = 
    \sum_{j=1}^{J+1}\left( \frac{N \pi_j f_j r_{1j}}{n_1} \hat{Y}_{1j} - 
    \frac{N \pi_j f_j r_{0j}}{n_0} \hat{Y}_{0j}
    \right)
    \nonumber
    \\
    & = 
    \sum_{j=1}^{J+1}\pi_j  \left( \hat{Y}_{1j} - 
    \hat{Y}_{0j}
    \right) 
    + \sum_{j=1}^{J+1} \left( \frac{N \pi_j f_j r_{1j}}{n_1} - \pi_j \right) \hat{Y}_{1j} 
    - \sum_{j=1}^{J+1} \left( \frac{N \pi_j f_j r_{0j}}{n_0} - \pi_j \right) \hat{Y}_{0j}
    \nonumber
    \\
    & = 
    \sum_{j=1}^{J+1}\pi_j  \hat{\tau}_{j}
    + \sum_{j=1}^{J+1} \left( \frac{N \pi_j f_j r_{1j}}{n_1} - \pi_j \right) \hat{Y}_{1j} 
    - \sum_{j=1}^{J+1} \left( \frac{N \pi_j f_j r_{0j}}{n_0} - \pi_j \right) \hat{Y}_{0j}. 
\end{align}
Below we consider the three terms in \eqref{eq:tau_hat_cat}, separately.

For the first term, note that under the stratified sampling and blocking, we are essentially conducting stratified CRSE within each stratum. 
Thus, from Corollary \ref{cor:crse}, under Condition \ref{cond:categorical}, we have 
\begin{align*}
\sqrt{N \pi_j f_j} (\hat\tau_j - \tau_j)  \  \dot\sim\  V_{\tau\tau,j}^{1/2} \cdot \varepsilon_j, \qquad (1\le j \le J+1). 
\end{align*}
where $(\varepsilon_1, \varepsilon_2, \ldots, \varepsilon_{J+1})$ are mutually independent standard Gaussian random variables. 
From Lemmas \ref{lemma:prop_weight_diff} and \ref{lemma:cat_M_ST_cong_zero}, when
both $(1-f)M_S$ and $M_T$ are of order $o(1)$, we have 
\begin{align*}%
    \max_{1\le j \le J+1}\left| \frac{N \pi_j f_j}{n} - \pi_j \right|
    = 
    \max_{1\le j \le J+1}\left| \frac{\pi_jf_j}{\sum_{k=1}^{J+1} \pi_k f_k} - \pi_j \right|
    \le 
    \frac{f_{\max}}{f_{\min}} - 1 = o(1). 
\end{align*}
By Slutsky's theorem and the mutual independence of sampling and treatment assignment across all strata, these imply that 
\begin{align*}
    \sqrt{n}\left( \sum_{j=1}^{J+1}\pi_j  \hat{\tau}_{j} - \tau  \right)
    & = 
    \sqrt{n}
    \sum_{j=1}^{J+1}\pi_j ( \hat{\tau}_{j} - \tau_j ) 
    = 
    \sum_{j=1}^{J+1}\pi_j \sqrt{ \frac{n}{N \pi_j f_j} } \sqrt{N \pi_j f_j} ( \hat{\tau}_{j} - \tau_j ) 
    \\
    & \dot\sim 
    \sum_{j=1}^{J+1}\pi_j \sqrt{ \frac{1}{\pi_j} } V_{\tau\tau,j}^{1/2} \cdot \varepsilon_j
    \sim 
    \sqrt{
    \sum_{j=1}^{J+1}\pi_j V_{\tau\tau,j}
    }
    \cdot \varepsilon. 
\end{align*}

For the second and third terms, 
from Lemmas \ref{lemma:prop_weight_diff} and \ref{lemma:cat_M_ST_cong_zero}, when
both $(1-f)M_S$ and $M_T$ are of order $o(1)$, we have, for $z=0,1$,
\begin{align}\label{eq:f_j_pi_j_r_zj}
    \max_{1\le j\le J+1}\left| \frac{N \pi_j f_j r_{zj}}{n_z} - \pi_j \right|
    & = 
    \max_{1\le j \le J+1}
    \left| 
    \frac{\pi_j f_j r_{zj}}{\sum_{k=1}^{J+1} \pi_k f_k r_{zk}} - \pi_j
    \right| 
    \le 
    \frac{\max_{1\le j \le J+1} f_j r_{zj}}{\min_{1\le j \le J+1} f_j r_{zj}} - 1
    \le 
    \frac{f_{\max} r_{z, \max}}{f_{\min} r_{z,\min}} - 1
    \nonumber
    \\
    & = 
    \{1 + o(n^{-1/2})\} \cdot \{1 + o(n^{-1/2})\} - 1
    =
    o(n^{-1/2}). 
\end{align}
Note that the set of units under treatment arm $z$ for each stratum is essentially a simple random sample of size $N\pi_j f_j r_{zj}$ from all the units in that stratum.  
By the property of simple random sampling and Chebyshev's inequality, 
under Condition \ref{cond:categorical}, 
for $z=0,1$
\begin{align*}
    \hat{Y}_{zj} - \bar{Y}_{j}(z)
    & = 
    O_{\Pr}\left( 
    \sqrt{
    \frac{ 1 - f_jr_{zj} }{N\pi_jf_j r_{zj}}
    S^2_{1j}
    }
    \right)
    = 
    O_{\Pr}\left( 
    \sqrt{
    \frac{N}{N-1}\frac{\max _{i: \tilde{W}_i = j}\{Y_{i}(z)-\bar{Y}_j(z)\}^{2}}{N\pi_jf_j}
    }
    \right)
    = o_{\Pr}(1). 
\end{align*}
Consequently, from Condition \ref{cond:categorical} and \eqref{eq:f_j_pi_j_r_zj}, for $z=0,1$, 
\begin{align*}
    \sqrt{n} \sum_{j=1}^{J+1} \left( \frac{N \pi_j f_j r_{zj}}{n_z} - \pi_j \right) \hat{Y}_{zj} 
    & = 
    \sum_{j=1}^{J+1} \sqrt{n} \left( \frac{N \pi_j f_j r_{zj}}{n_z} - \pi_j \right) 
    \left( \bar{Y}_{j}(z) + o_{\Pr}(1) \right)
    \\
    & = \sum_{j=1}^{J+1} o(1)\cdot\left( O(1) + o_{\Pr}(1) \right)
    = o_{\Pr}(1). 
\end{align*}

From the above and by Slutsky's theorem, we have 
\begin{align*}
    \sqrt{n}\left( \hat{\tau} - \tau \right) & = 
    \sqrt{n}\left( \sum_{j=1}^{J+1}\pi_j  \hat{\tau}_{j} - \tau \right)
    + \sqrt{n}\sum_{j=1}^{J+1} \left( \frac{N \pi_j f_j r_{1j}}{n_1} - \pi_j \right) \hat{Y}_{1j} 
    - \sqrt{n} \sum_{j=1}^{J+1} \left( \frac{N \pi_j f_j r_{0j}}{n_0} - \pi_j \right) \hat{Y}_{0j}
    \\
    & \dot\sim 
    \sqrt{
    \sum_{j=1}^{J+1}\pi_j V_{\tau\tau,j}
    }
    \cdot \varepsilon.
\end{align*}
Therefore, Proposition \ref{prop:str_crse_clt} holds.     
\end{proof}

Second, we prove Theorem \ref{thm:resem_cat} using Proposition \ref{prop:str_crse_clt}. 

\begin{proof}[\bf Proof of Theorem \ref{thm:resem_cat}]
By some algebra, we can verify that the residual from the linear projection of $Y(z)$ on $\bs{W}$ is $Y(z) - \sum_{j=1}^{J+1} \mathbb{1}\{\tilde{W} = j\} \bar{Y}_{j}(z)$, for $z=0,1$. 
This implies that, under Condition \ref{cond:categorical}, 
the corresponding finite population variance has the following equivalent forms: 
\begin{align*}
    S^2_{z\setminus \bs{W} } & = \frac{1}{N-1} \sum_{i=1}^{N} \left\{
    Y_i(z) - \sum_{j=1}^{J+1}  \mathbb{1}\{\tilde{W}_i = j\} \bar{Y}_{j}(z) \right\}^2 
    = \frac{1}{N-1} \sum_{j=1}^{J+1}   \sum_{i:\tilde{W}_i = j} \left\{ Y_i(z) - \bar{Y}_{j}(z) \right\}^2
    \\& = \sum_{j=1}^{J+1} \frac{N \pi_j - 1}{N-1} S^2_{zj}
    =\sum_{j=1}^{J+1} \pi_j S^2_{zj} - \frac{1}{N-1} \sum_{j=1}^{J+1}\left( 1- \pi_j \right)S^2_{z j} 
    \\& = \sum_{j=1}^{J+1} \pi_j S^2_{z j} + O(N^{-1}), 
    \qquad (z=0, 1). 
\end{align*}
By the same logic, we can derive that 
$S^2_{\tau\setminus \bs{W}} = \sum_{j=1}^{J+1} \pi_j S^2_{\tau j} + O(N^{-1})$. 
Note that, for $1\le j \le J+1$ and $z=0,1$, 
\begin{align*}
    \frac{f_j}{f} = 
    \frac{f_j}{\sum_{k=1}^{J+1} \pi_k f_k} 
    \in 
    \left[\frac{f_{\min}}{f_{\max}},\  \frac{f_{\max}}{f_{\min}}\right], \quad
    \frac{r_{zj}}{r_z} = 
    \frac{r_{zj} \sum_{k=1}^{J+1} \pi_k f_k}{\sum_{k=1}^{J+1} \pi_k f_k r_{zk}}
    \in 
    \left[\frac{r_{z,\min}}{r_{z, \max}},\  \frac{r_{z,\max}}{r_{z, \min}}\right]. 
\end{align*}
From Lemma \ref{lemma:cat_M_ST_cong_zero}, when both $(1-f)M_S$ and $M_T$ are of order $o(1)$, \eqref{eq:ratio_converge} holds and thus 
$f_j/f = 1 + o(1)$ and $r_{zj}/r_z = 1+o(1)$ for $1\le j \le J+1$ and $z=0,1$. 

From the above, the asymptotic variance of $\sqrt{n}( \hat{\tau} - \tau)$ in Proposition \ref{prop:str_crse_clt} has the following equivalent forms:
\begin{align*}
    \sum_{j=1}^{J+1} \pi_j V_{\tau\tau,j}
    & 
    = \sum_{j=1}^{J+1}\pi_j 
    \left\{ 
    r_{1j}^{-1} S_{1j}^{2}+r_{0j}^{-1} S_{0j}^{2}-f_j S_{\tau j}^{2}
    \right\} 
    \\
    & = 
    r_1^{-1}
    \sum_{j=1}^{J+1}\pi_j \frac{r_1}{r_{1j}} S_{1j}^{2}+ r_0^{-1} \sum_{j=1}^{J+1}\pi_j \frac{r_0}{r_{0j}} S_{0j}^{2}-
    f\sum_{j=1}^{J+1}\pi_j \frac{f_j}{f} S_{\tau j}^{2} \\
    & = 
    r_1^{-1}
    \sum_{j=1}^{J+1}\pi_j S_{1j}^{2}+ r_0^{-1} \sum_{j=1}^{J+1}\pi_j S_{0j}^{2}-
    f\sum_{j=1}^{J+1}\pi_j S_{\tau j}^{2} + o(1)
    \\
    & = r_1^{-1} S^2_{1\setminus \bs{W}} + r_0^{-1} S^2_{0\setminus \bs{W}} - f S^2_{\tau \setminus \bs{W}} + o(1). 
\end{align*}
By the definition of the squared multiple correlation $R_S^2$ and $R_T^2$ and the variance formula $V_{\tau\tau}$, 
\begin{align*}
    V_{\tau\tau}(1- R_S^2 - R_T^2)
    & = 
    \left( r_1^{-1}S_1^2 + r_0^{-1}S_0^2 -f S_{\tau}^2\right) - (1-f) S^2_{\tau \mid \bs{W}} - 
    \left( r_1^{-1} S^2_{1\mid \bs{W}} + r_0^{-1} S^2_{0\mid \bs{W}} - S^2_{\tau\mid \bs{W}} \right)
    \\
    & = 
    r_1^{-1} \left( S_1^2 - S^2_{1\mid \bs{W}} \right) + 
    r_0^{-1} \left( S_0^2 - S^2_{0\mid \bs{W}} \right) - f\left( S_{\tau}^2 -  S^2_{\tau \mid \bs{W}} \right)\\
    & = 
    r_1^{-1} S^2_{1\setminus \bs{W}} + r_0^{-1} S^2_{0\setminus \bs{W}} - f S^2_{\tau \setminus \bs{W}}. 
\end{align*}
Thus, we have 
\begin{align*}
    \sum_{j=1}^{J+1} \pi_j V_{\tau\tau,j} 
    & = V_{\tau\tau}(1- R_S^2 - R_T^2) + o(1). 
\end{align*}
From Proposition \ref{prop:str_crse_clt} and Slutsky's theorem, the asymptotic distribution of $\sqrt{n}(\hat{\tau}-\tau)$ then has the following equivalent forms: 
\begin{align*}
    \sqrt{n}\left( \hat{\tau} - \tau \right) 
    \ \dot\sim \ 
    \sqrt{
    \sum_{j=1}^{J+1}\pi_j V_{\tau\tau,j}
    }
    \cdot \varepsilon
    \ \dot\sim \
    \sqrt{V_{\tau\tau}(1- R_S^2 - R_T^2)} \cdot \varepsilon. 
\end{align*}
Therefore, Theorem \ref{thm:resem_cat} holds. 
\end{proof}

Third, we comment on the equivalent conditions of $(1-f) M_S = o(1)$ and $M_T=o(1)$.

\begin{proof}[\bf Comment on equivalent conditions of $(1-f) M_S = o(1)$ and $M_T=o(1)$]
From Lemma \ref{lemma:cat_M_ST_cong_zero}, 
it is immediate that 
$(1-f) M_S$ and $M_T$ are of order $o(1)$ if and only if 
\begin{align*}
    \frac{\max_{1\le j \le J+1}f_j}{\min_{1\le j \le J+1}f_j} - 1 = o(n^{-1/2}), 
    \ \ 
    \frac{\max_{1\le j \le J+1}r_{1j}}{\min_{1\le j \le J+1}r_{1j}} - 1 = o(n^{-1/2}), 
    \ \ 
     \frac{\max_{1\le j \le J+1}r_{0j}}{\min_{1\le j \le J+1}r_{0j}} - 1 = o(n^{-1/2}).
\end{align*}
\end{proof}

\section{Clustered survey experiments}\label{sec:proof_cluster}

\subsection{Aggregated potential outcomes, sampling and treatment assignment}
In this section, we study rerandomization for clustered survey experiments. 
Recall that there are in total $M$ clusters, and for each cluster $1\le l \le M$, 
$\tilde{Y}_l(1) = (N/M)^{-1}\sum_{i:G_i = l} Y_i(1)$ and  $\tilde{Y}_l(0) = (N/M)^{-1}\sum_{i:G_i = l} Y_i(0)$ denote the aggregated treatment and control potential outcomes, 
and $\tilde{\bs{W}}_l = (N/M)^{-1}\sum_{i:G_i = l} \bs{W}_i$ and $\tilde{\bs{X}}_l = (N/M)^{-1}\sum_{i:G_i = l} \bs{X}_i$ denote the aggregated covariates at the sampling and treatment assignment stages, 
where $\tilde{\bs{W}}_l\in \mathbb{R}^J$ and $\tilde{\bs{X}}_l \in \mathbb{R}^K$. 
The average treatment effect at the cluster level is then \begin{align*}
    \tilde{\tau} & \equiv
    \frac{1}{M} \sum_{l=1}^M \left\{ \tilde{Y}_l(1) -  \tilde{Y}_l(0) \right\} = 
    \frac{1}{M} \cdot \frac{M}{N} \sum_{i=1}^N \left\{ Y_i(1) -  Y_i(0) \right\}
    = \frac{1}{N} \sum_{i=1}^N \left\{ Y_i(1) -  Y_i(0) \right\}
    = \tau, 
\end{align*}
the same as 
the average treatment effect at the individual level. 
We introduce $\tilde{Z}_l$ and $\tilde{T}_l$ to denote the sampling and treatment assignment indicators for cluster $l$. 
Specifically, $\tilde{Z}_l=1$ if and only if units in cluster $l$ are sampled to enroll the experiment, 
and for sampled cluster $l$, $\tilde{T}_l$ equals 1 if the units in the cluster receive treatment and 0 otherwise. 
Then, for a sampled cluster $l$ with $\tilde{Z}_l=1$, its observed aggregated outcome is $\tilde{Y}_l = \tilde{T}_l \tilde{Y}_l(1) + (1 - \tilde{T}_l) \tilde{Y}_l(0)$, one of the two aggregated potential outcomes. 

For descriptive convenience, we introduce some finite population quantities at the cluster level. 
Let $\bar{\tilde{Y}}(1) = M^{-1} \sum_{l=1}^M  \tilde{Y}_l(1)$ and $\bar{\tilde{Y}}(0) = M^{-1} \sum_{l=1}^M  \tilde{Y}_l(0)$ be the average aggregated potential outcomes, 
$\bar{\tilde{\bs{W}}} = M^{-1} \sum_{l=1}^M \tilde{\bs{W}}_l$ and $\bar{\tilde{\bs{X}}} = M^{-1} \sum_{l=1}^M \tilde{\bs{X}}_l$ be the average aggregated covariates, 
and $\tilde{S}^2_1, \tilde{S}^2_0, \tilde{S}^2_\tau, \tilde{\bs{S}}^2_{\bs{W}}, \tilde{\bs{S}}^2_{\bs{X}}$, 
$\tilde{\bs{S}}_{1, \bs{W}}, \tilde{\bs{S}}_{0, \bs{W}}, \tilde{\bs{S}}_{1, \bs{X}}, \tilde{\bs{S}}_{0, \bs{X}}$ be
the finite population variances and covariances for aggregated potential outcomes, treatment effect and covariates. 
Analogously, we define the finite population variances of the linear projections of the aggregated potential outcomes and treatment effect on aggregated covariates $\tilde{S}^2_{1 \mid \bs{X}}$, $\tilde{S}^2_{0 \mid \bs{X}}$,  $\tilde{S}^2_{\tau \mid \bs{X}}$ and $\tilde{S}^2_{\tau \mid \bs{W}}$.

\subsection{Covariate balance criteria and rerandomized survey experiment}
Below we consider rerandomization for clustered survey experiments. 
Analogous to the discussion in Section \ref{sec:balance}, we introduce the difference-in-means of aggregated covariates between sampled clusters and all clusters and that between treated clusters and control clusters: 
\begin{align*}
    \hat{\bs{\delta}}_{\tilde{\bs{W}}}
    & = 
    \frac{1}{m} \sum_{l=1}^M \tilde{Z}_l \tilde{\bs{W}}_l - 
    \frac{1}{M} \sum_{l=1}^M \tilde{\bs{W}}_l, 
    \quad
    \hat{\bs{\tau}}_{\tilde{\bs{X}}} = 
    \frac{1}{m_1} \sum_{l=1}^M \tilde{Z}_l \tilde{T}_l \tilde{\bs{X}}_l - 
    \frac{1}{m_0} \sum_{l=1}^M \tilde{Z}_l ( 1-\tilde{T}_l ) \tilde{\bs{X}}_l, 
\end{align*}
and use the corresponding Mahalanobis distances to measure the covariate balance for clustered sampling and clustered treatment assignment: 
\begin{align*}
    \tilde{M}_S \equiv \hat{\bs{\delta}}_{\tilde{\bs{W}}}^\top
    \left\{ \left( \frac{1}{m} - \frac{1}{M}  \right) \tilde{\bs{S}}^2_{\bs{W}} \right\}^{-1}
    \hat{\bs{\delta}}_{\tilde{\bs{W}}}, 
    \quad 
    \tilde{M}_T \equiv \hat{\bs{\tau}}_{\tilde{\bs{X}}}^\top 
    \left( \frac{m}{m_1m_0} \tilde{\bs{s}}^2_{\bs{X}} \right)^{-1}
    \hat{\bs{\tau}}_{\tilde{\bs{X}}}, 
\end{align*}
where $\tilde{\bs{s}}^2_{\bs{X}}$ is the sample covariance matrix of the aggregated covariate $\tilde{\bs{X}}$ among sampled clusters. 
Rerandomized clustered survey experiment using Mahalanobis distances (ReCSEM) then has the following steps, where $\tilde{a}_S$ and $\tilde{a}_T$ are two predetermined positive thresholds:
\begin{itemize}
    \item[(1)] Draw a simple random sample of $m$ clusters from the in total $M$ clusters.
    \item[(2)] 
    Compute the Mahalanobis distance $\tilde{M}_S$. 
    If $\tilde{M}_S \le \tilde{a}_S$, continue to step (3); otherwise, return to step (2). 
    \item[(3)] Completely randomize the $m$ clusters into treatment and control groups, which contain $m_1$ and $m_0$ clusters respectively. 
    \item[(4)] Compute the Mahalanobis distance $\tilde{M}_T$. 
    If $\tilde{M}_T \le \tilde{a}_T$, continue to step (5); otherwise, return to step (3).
    \item[(5)] Conduct the experiment using the accepted assignment from step (4). 
\end{itemize}

\subsection{Asymptotic properties for rerandomized clustered survey experiment}
Note that under rerandomized clustered survey experiment, we are essentially conducting ReSEM at the cluster level. 
Therefore, all the results we derived for ReSEM can be generalized to ReCSEM once we operate at the cluster level or equivalently view each cluster as an individual. 
For conciseness, below we discuss only the asymptotic distribution of the difference-in-means estimator $\hat{\tilde{\tau}}$ under ReCSEM. 

To conduct finite population asymptotic analysis, analogous to Condition \ref{cond:fp}, we invoke the following regularity condition along the sequence of finite populations for aggregated potential outcomes and covariates at the cluster level. 
Let $\tilde{f} = m/M$ be the proportion of sampled clusters, and $\tilde{r}_1 = m_1/m$ and $\tilde{r}_0 = m_0/m$ be the proportions of sampled clusters assigned to treatment and control. 

\begin{condition}\label{cond:cluster_fp}
	As $N \rightarrow \infty$, the total number of clusters $M$ goes to infinity, and 
	the sequence of finite populations satisfies 
	\begin{enumerate}[label=(\roman*), topsep=1ex,itemsep=-0.3ex,partopsep=1ex,parsep=1ex]
		\item the proportion $\tilde{f}$ of sampled clusters has a limit in $[0, 1)$; 
		
		\item the proportions $\tilde{r}_1$ and $\tilde{r}_0$ of units assigned to treatment and control have positive limits;  
		
		\item the finite population variances $\tilde{S}^2_1, \tilde{S}^2_0, \tilde{S}^2_\tau$ 
		and covariances
		$
		\tilde{\bs{S}}_{\bs{W}}^2, \tilde{\bs{S}}_{1, \bs{W}}, \tilde{\bs{S}}_{0, \bs{W}}
		$
		$
		\tilde{\bs{S}}_{\bs{X}}^2, \tilde{\bs{S}}_{1, \bs{X}}, \tilde{\bs{S}}_{0, \bs{X}}
		$
		at the cluster level 
		have limiting values, 
		and the limits of $\tilde{\bs{S}}_{\bs{W}}^2$ 
		and $\tilde{\bs{S}}_{\bs{X}}^2$ are
		nonsingular;
		
		\item
		for $t\in \{0, 1\}$, 
		\begin{align*}
		    \frac{1}{m} \max_{1 \le l\le M} \{ \tilde{Y}_l(t) -\bar{\tilde{Y}}(t) \}^2 \rightarrow 0, \ \ 
		    \frac{1}{m} \max_{1 \le l\le M}\| \tilde{\bs{W}}_l - \bar{\tilde{\bs{W}}} \|_2^2
		    \rightarrow 0, \ \ 
		    \frac{1}{m} \max_{1 \le l\le M}\| \tilde{\bs{X}}_l - \bar{\tilde{\bs{X}}} \|_2^2
		    \rightarrow 0. 
		\end{align*}
	\end{enumerate}
\end{condition}

By the same logic as Theorem \ref{thm:dist}, we can then derive the asymptotic distribution of the difference-in-means estimator at the cluster level $\hat{\tilde{\tau}}$ under ReCSEM. 
Analogous to \eqref{eq:R2} and \eqref{eq:V_tautau}, define 
\begin{align*}
	\tilde{R}_S^2 & = 
	\frac{(1-\tilde{f})\tilde{S}^2_{\tau \mid  \bs{W}}}{\tilde{r}_1^{-1}\tilde{S}^2_1 +\tilde{r}_0^{-1}\tilde{S}^2_0 -\tilde{f} \tilde{S}^2_\tau},
	\ \  
	\tilde{R}_T^2 = 
	\frac{\tilde{r}_1^{-1}\tilde{S}^2_{1 \mid  \bs{X}}+\tilde{r}_0^{-1}\tilde{S}^2_{0 \mid  \bs{X}} -\tilde{S}^2_{\tau \mid  \bs{X}}}{\tilde{r}_1^{-1}\tilde{S}^2_1 +\tilde{r}_0^{-1}\tilde{S}^2_0 -\tilde{f}\tilde{S}^2_\tau}, 
	\ \ 
	\text{and}
	\ \ 
	\tilde{V}_{\tau \tau} = \tilde{r}_1^{-1}\tilde{S}_1^2 + \tilde{r}_0^{-1}\tilde{S}_0^2 -\tilde{f} \tilde{S}_{\tau}^2. 
\end{align*}
Analogous to $L_{J,a_S}$ and $L_{K,a_T}$, 
we define 
constrained Gaussian random variables $L_{J,\tilde{a}_S}$ and $L_{K,\tilde{a}_T} $ with thresholds $a_S$ and $a_T$ replaced by $\tilde{a}_S$ and $\tilde{a}_T$. 
Let $\varepsilon\sim \mathcal{N}(0,1)$ be a standard Gaussian random variable. 

\begin{theorem}
Under Condition \ref{cond:cluster_fp} and ReCSEM, 
\begin{align*}
\sqrt{m} (\hat{\tilde{\tau}} - \tau) \mid  \text{ReCSEM}
\ & \dot\sim\  
\tilde{V}_{\tau\tau}^{1/2} \left(  \sqrt{1-\tilde{R}_S^2 - \tilde{R}_T^2} \cdot \varepsilon
+ \sqrt{\tilde{R}_S^2} \cdot L_{J,\tilde{a}_S}
+
\sqrt{\tilde{R}_T^2} \cdot L_{K,\tilde{a}_T} 
\right),
\end{align*}
where 
$(\varepsilon, L_{J,\tilde{a}_S}, L_{K,\tilde{a}_T})$ are mutually independent.
\end{theorem}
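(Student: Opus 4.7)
The plan is to reduce the clustered problem to the unit-level setting and then directly invoke Theorem \ref{thm:dist}. I would define a ``pseudo-population'' of size $M$ whose units are the clusters, each carrying the aggregated potential outcomes $\tilde Y_l(1), \tilde Y_l(0)$ and the aggregated covariates $\tilde{\bs W}_l, \tilde{\bs X}_l$. Under this reformulation, ReCSEM is exactly an instance of ReSEM applied to the pseudo-population: the cluster sampling step is an SRS of $m$ out of $M$ ``units,'' the cluster assignment step is a CRE with $m_1$ treated and $m_0$ control, and the two rerandomization checks use $\tilde M_S, \tilde M_T$, which by construction take the same form as $M_S, M_T$ in \eqref{eq:M_S} and \eqref{eq:M_T} after replacing $(\bs W_i, \bs X_i, n, N)$ with $(\tilde{\bs W}_l, \tilde{\bs X}_l, m, M)$.

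Next, I would verify the two correspondences that make the identification clean. First, a direct calculation (already carried out in the excerpt) gives $M^{-1}\sum_{l=1}^M\{\tilde Y_l(1)-\tilde Y_l(0)\} = \tau$, so the cluster-level ATE equals the individual-level ATE, and the cluster-level difference-in-means $\hat{\tilde\tau}$ of aggregated observed outcomes is exactly the object to which Theorem \ref{thm:dist} applies in the pseudo-population. Second, Condition \ref{cond:cluster_fp} is literally Condition \ref{cond:fp} restated for the pseudo-population: $m\to\infty$ since $M\to\infty$ with $\tilde f$ having a limit in $[0,1)$, the proportions $\tilde r_1, \tilde r_0$ are bounded away from $0$, the cluster-level covariance matrices $\tilde{\bs S}_{\bs W}^2, \tilde{\bs S}_{\bs X}^2$ have nonsingular limits, and the $\max$ conditions involve only cluster-level deviations from their means. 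Hence the hypotheses of Theorem \ref{thm:dist} hold with $(n,N,a_S,a_T)$ replaced by $(m,M,\tilde a_S,\tilde a_T)$.

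Applying Theorem \ref{thm:dist} to the pseudo-population with these substitutions yields
\begin{align*}
\sqrt{m}(\hat{\tilde\tau} - \tau) \mid \text{ReCSEM} \ \dot\sim\ \tilde V_{\tau\tau}^{1/2}\Bigl(\sqrt{1-\tilde R_S^2-\tilde R_T^2}\cdot\varepsilon + \sqrt{\tilde R_S^2}\cdot L_{J,\tilde a_S} + \sqrt{\tilde R_T^2}\cdot L_{K,\tilde a_T}\Bigr),
\end{align*}
where $\tilde V_{\tau\tau}, \tilde R_S^2, \tilde R_T^2$ are given by the same algebraic formulas as $V_{\tau\tau}, R_S^2, R_T^2$ but evaluated with the cluster-level finite-population variances and covariances, matching the statement exactly. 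Proposition \ref{prop:R2} transfers to give the stated representations of $\tilde R_S^2$ and $\tilde R_T^2$.

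The only non-routine obstacle is bookkeeping: one must confirm that the conditional Mahalanobis normalization via the sample covariance $\tilde{\bs s}_{\bs X}^2$ of sampled clusters behaves as in Section \ref{sec:balance}, which reduces to showing $\tilde{\bs s}_{\bs X}^2 - \tilde{\bs S}_{\bs X}^2 = o_{\Pr}(1)$ under the cluster-level regularity conditions. This is just Lemma \ref{lemma:s_x2} applied to the pseudo-population, so no new probabilistic work is required. Once this is in place, the result follows immediately from Theorem \ref{thm:dist}, and analogous ``cluster-level'' versions of Corollaries \ref{corollary:PRIASV}--\ref{corollary:QR}, Theorem \ref{thm:dist_reg_general}, and Theorem \ref{thm:optimal} follow by the same reduction.
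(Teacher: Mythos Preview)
Your proposal is correct and matches the paper's approach exactly: the paper simply states that ``under rerandomized clustered survey experiment, we are essentially conducting ReSEM at the cluster level'' and that the result follows ``by the same logic as Theorem~\ref{thm:dist},'' which is precisely the pseudo-population reduction you spell out. Your write-up is in fact more detailed than the paper's, which does not bother to verify the correspondence between Condition~\ref{cond:cluster_fp} and Condition~\ref{cond:fp} or the $\tilde{\bs s}_{\bs X}^2$ consistency explicitly.
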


\end{document}